\DeclareFontShape{T1}{cmr}{m}{scit}{<->ssub*cmr/m/sc}{}
\definecolor[named]{urlblue}{cmyk}{1,0.58,0,0.21}
\DeclareRobustCommand{\crefnosort}[1]{%
	\begingroup\@cref@sortfalse\cref{#1}\endgroup
}
\newcommand{\CC}{{\mathcal C}}
\newcommand{\CG}{{\mathcal G}}
\newcommand{\CJ}{{\mathcal J}}
\newcommand{\CP}{{\mathcal P}}
\newcommand{\CR}{{\mathcal R}}
\newcommand{\CS}{{\mathcal S}}
\newcommand{\CY}{{\mathcal Y}}
\newcommand{\CZ}{{\mathcal Z}}
\newcommand{\ZZ}{{\mathbb Z}}
\newcommand{\NN}{{\ZZ_{\geq 0}}}
\newcommand{\SetQ}{{\mathbb Q}}
\newcommand{\tx}{\widetilde{x}}
\newcommand{\ty}{\widetilde{y}}
\newcommand{\tA}{\widetilde{A}}
\newcommand{\tB}{\widetilde{B}}
\newcommand{\tJ}{\widetilde{J}}
\newcommand{\tw}{{\sf tw}}
\newcommand{\pw}{\textup{\textsf{pw}}} 
\newcommand{\ar}{{\sf ar}}
\newcommand{\pwar}{\textup{\textsf{pa}}}
\DeclareMathOperator{\ext}{ext}
\DeclareMathOperator{\wt}{wt}
\newcommand{\HWeq}[2][]{\ensuremath{\mathtt{HW}_{= #2}%
		\ifthenelse{\equal{#1}{}}{}{^{(#1)}}}}
	\newcommand{\HWge}[2][]{\ensuremath{\mathtt{HW}_{\ge #2}%
			\ifthenelse{\equal{#1}{}}{}{^{(#1)}}}}
	\newcommand{\HWle}[2][]{\ensuremath{\mathtt{HW}_{\le #2}%
			\ifthenelse{\equal{#1}{}}{}{^{(#1)}}}}
\newcommand{\HWin}[2][]{\ensuremath{\mathtt{HW}_{\in #2}%
		\ifthenelse{\equal{#1}{}}{}{^{(#1)}}}}
\newcommand{\EQ}[1]{\ensuremath{\mathtt{EQ}^{(#1)}}}
\newcommand{\EQset}{\ensuremath{\mathtt{EQ}}\xspace}
\newcommand{\HWsetGen}[1]{\ensuremath{\mathtt{HW}_{#1}}}
\newcommand{\HWset}[1]{\HWsetGen{= #1}}
\newcommand{\alt}[1]{#1^\ddagger}
\newcommand{\Rel}{\text{Rel}}
\newcommand{\altRel}{\alt{\Rel}}
\mathchardef\mhyph="2D
\newcommand{\DomSetGeneral}[4]{\ensuremath{(#1,#2)\mhyph}\textsc{{#3}Dom\-Set\ensuremath{^{#4}}}\xspace}
\newcommand{\DomSet}[2]{\DomSetGeneral{#1}{#2}{}{}}
\newcommand{\srDomSet}{\DomSetGeneral{\sigma}{\rho}{}{}}
\newcommand{\DomSetRel}[3][\Rel]{\DomSetGeneral{#2}{#3}{}{#1}}
\newcommand{\srDomSetRel}[1][\Rel]{\DomSetGeneral{\sigma}{\rho}{}{#1}}
\newcommand{\CountDomSet}[2]{\DomSetGeneral{#1}{#2}{\#}{}}
\newcommand{\srCountDomSet}{\DomSetGeneral{\sigma}{\rho}{\#}{}}
\newcommand{\CountDomSetRel}[2]{\DomSetGeneral{#1}{#2}{\#}{\Rel}}
\newcommand{\srCountDomSetRel}[1][\Rel]{\DomSetGeneral{\sigma}{\rho}{\#}{#1}}
\DeclarePairedDelimiter{\ceil}{\lceil}{\rceil}
\DeclarePairedDelimiter{\floor}{\lfloor}{\rfloor}
\DeclarePairedDelimiter{\abs}{\lvert}{\rvert}
\newcommand{\deff}{\coloneqq}
\renewcommand{\O}{O}
\newcommand{\Oh}{O} 
\newcommand*\from{\colon}
\newcommand{\allSets}{\Omega_0}
\newcommand{\allSetsne}{\Omega}
\newcommand{\pwred}{\ensuremath{\le_{\pw}}}
\newcommand{\arred}{\ensuremath{\le^{\ar}}}
\newcommand{\pwarred}{\ensuremath{\le_{\pw}^{\ar}}}
\newcommand{\port}{u} 
\newcommand{\Port}{U} 
\newcommand{\inverse}[2][]{{\operatorname{\mathsf{inv}}^{#1}({#2})}}
\newcommand{\inv}{\overline}
\newcommand{\ttop}{\textup{top}}
\newcommand{\rhoMax}{r_{\ttop}}
\newcommand{\sigMax}{s_{\ttop}}
\newcommand{\allMax}{t_{\ttop}}
\newcommand{\rhoMin}{r_{\min}}
\newcommand{\sigMin}{s_{\min}}
\newcommand{\rhoStates}{{\mathbb R}}
\newcommand{\sigStates}{{\mathbb S}}
\newcommand{\allStates}{{\mathbb A}}
\newcommand{\B}{\mathbb B}
\newcommand{\rhoStatesExt}{\mathbb R_{{\sf full}}}
\newcommand{\sigStatesExt}{\mathbb S_{{\sf full}}}
\newcommand{\allStatesExt}{\mathbb A_{{\sf full}}}
\newcommand{\encoder}[1]{\ensuremath{#1}\text{-manager}}
\newcommand{\scope}{\operatorname{\mathsf{scp}}}
\newcommand{\rel}{R}
\newcommand{\rank}{rank\xspace}
\newcommand\Bl{B} 
\newcommand\Br{\inv B} 
\def\fragmentco#1#2{\bm{[}\,#1\,\bm{.\,.}\,#2\,\bm{)}}
\def\fragmentoc#1#2{\bm{(}\,#1\,\bm{.\,.}\,#2\,\bm{]}}
\def\fragmentoo#1#2{\bm{(}\,#1\,\bm{.\,.}\,#2\,\bm{)}}
\def\fragment#1#2{\bm{[}\,#1\,\bm{.\,.}\,#2\,\bm{]}}
\def\nset#1{\numb{#1}}
\def\position#1{\bm{[}\,#1\,\bm{]}}
\def\vposition#1{\bm{[}\,#1\,\bm{]}}
\def\occ#1#2{\ensuremath \#_{#2}(#1)}
\def\mname{{\mathrm{m}}}
\def\degvec#1{\ensuremath \vec{w}(#1)}
\newcommand{\numb}[1]{\fragment{1}{#1}}
\renewenvironment{cases}{%
  \matrix@check\cases\env@cases
}{%
  \endarray\right.%
}
\def\env@cases{%
  \let\@ifnextchar\new@ifnextchar
  \left\lbrace
  \def\arraystretch{1.1}%
  \array{@{\;}c@{\quad}l@{}}%
}
\def\emptyset{\varnothing}
\def\eps{\varepsilon}
\def\epsilon{\varepsilon}
\numberwithin{equation}{section}
\numberwithin{figure}{section}
\newtheorem{theorem}{Theorem}[section]
\newtheorem{lemma}[theorem]{Lemma}
\newtheorem{fact}[theorem]{Fact}
\newtheorem{corollary}[theorem]{Corollary}
\newtheorem{observation}[theorem]{Observation}
\newtheorem{definition}[theorem]{Definition}
\newtheorem{claim}[theorem]{Claim}
\crefname{mtheorem}{Main Theorem}{Main Theorems}
\Crefname{mtheorem}{Main Theorem}{Main Theorems}
\crefname{obs}{Observation}{Observations}
\Crefname{obs}{Observation}{Observations}
\crefname{fact}{Fact}{Facts}
\Crefname{fact}{Fact}{Facts}
\crefname{problem}{Problem}{Problems}
\Crefname{problem}{Problem}{Problems}
\crefname{conjecture}{Conjecture}{Conjectures}
\Crefname{conjecture}{Conjecture}{Conjectures}
\crefname{claim}{Claim}{Claims}
\Crefname{claim}{Claim}{Claims}
\newenvironment{claimproof}[1][\unskip]{ \begin{proof}[Proof of Claim #1.]
   }{ \end{proof} }
\title{	Tight Complexity Bounds for\\
		Counting Generalized Dominating Sets in\\
		Bounded-Treewidth Graphs\\[1ex]
		\large Part II: Hardness Results
		}
\author[1]{Jacob Focke}
\author[1]{D\'{a}niel Marx}
\author[2]{Fionn {Mc Inerney}}
\author[3]{Daniel Neuen}
\author[4]{\\Govind S. {Sankar}}
\author[1]{Philipp Schepper}
\author[5]{Philip Wellnitz}
\affil[1]{CISPA Helmholtz Center for Information Security}
\affil[2]{Algorithms and Complexity Group, TU Wien}
\affil[3]{School of Computing Science, Simon Fraser University}
\affil[4]{Duke University}
\affil[5]{Max Planck Institute for Informatics, SIC}
\date{}
\begin{document}

\maketitle

\begin{abstract}
  For a well-studied family of domination-type problems, in bounded-treewidth graphs, we investigate whether it is possible to find faster algorithms.
  For sets $\sigma,\rho$ of non-negative integers, a $(\sigma,\rho)$-set of a graph $G$ is a set $S$ of vertices such that $|N(u)\cap S|\in \sigma$ for every $u\in S$, and $|N(v)\cap S|\in \rho$ for every $v\not\in S$.
  The problem of finding a $(\sigma,\rho)$-set (of a certain size) unifies common problems like \textsc{Independent Set}, \textsc{Dominating Set}, \textsc{Independent Dominating Set}, and many others.

  In an accompanying paper, it is proven that, for all pairs of finite or cofinite sets $(\sigma,\rho)$, there is an algorithm that counts $(\sigma,\rho)$-sets in time $(c_{\sigma,\rho})^\tw\cdot n^{\O(1)}$ (if a tree decomposition of width $\tw$ is given in the input).
  Here, $c_{\sigma,\rho}$ is a constant with an intricate dependency on $\sigma$ and $\rho$. 
  Despite this intricacy, we show that the algorithms in the accompanying paper are most likely optimal, i.e.,
  for any pair $(\sigma, \rho)$ of finite or cofinite sets where the problem is non-trivial, and any $\varepsilon>0$, a $(c_{\sigma,\rho}-\varepsilon)^\tw\cdot
  n^{\O(1)}$-algorithm counting the number of $(\sigma,\rho)$-sets would violate the Counting Strong Exponential-Time Hypothesis (\#SETH).
  For finite sets $\sigma$ and $\rho$, our lower bounds also extend to the decision version, showing that those algorithms are optimal in this setting as well.
\end{abstract}

\makeatletter{\renewcommand*{\@makefnmark}{}
	\footnotetext{This work follows an accompanying paper presenting corresponding algorithmic results~\cite{FockeMMNSSW23i}. A preliminary version of this paper appeared in the Proceedings of the 2023 Annual ACM-SIAM Symposium on Discrete Algorithms (SODA). Research supported by the European Research Council (ERC) consolidator grant No.~725978 SYSTEMATICGRAPH and the Austrian Science Foundation (FWF, project P31336).}\makeatother}

\thispagestyle{empty}

\newpage
\tableofcontents

\thispagestyle{empty}

\newpage
\setcounter{page}{1}

\section{Introduction}
\label{sec:intro}
Parameterized complexity has emerged as a vital paradigm used to find tractable instances of NP-hard problems. Specifically, treewidth is an essential parameter in this context since the structure of bounded-treewidth graphs can be exploited to obtain efficient algorithms for many NP-hard problems using a dynamic programming approach~\cite{DBLP:conf/icalp/Bodlaender88,DBLP:journals/dam/ArnborgP89,BERN1987216}. Due to the existence of many such algorithms, there has been a concerted effort to optimize their efficiencies. In particular, for problems solvable in time $c^\tw\cdot n^{\Oh(1)}$, a technique introduced by \citet{LokshtanovMS18} gives, assuming the Strong Exponential Time Hypothesis (SETH), tight lower bounds on the best possible $c$ appearing in the running time~\cite{DBLP:journals/siamcomp/OkrasaR21,DBLP:conf/esa/OkrasaPR20,DBLP:conf/stacs/EgriMR18,DBLP:conf/soda/CurticapeanLN18,DBLP:conf/iwpec/BorradaileL16,DBLP:journals/dam/KatsikarelisLP19,MarxSS21,CurticapeanM16,DBLP:conf/soda/FockeMR22}. For example, \citet{LokshtanovMS18} showed that, assuming SETH, $3^\tw\cdot n^{\Oh(1)}$ is optimal for \textsc{Dominating Set}, that is, there is no algorithm solving \textsc{Dominating Set} in time $(3-\epsilon)^\tw \cdot n^{\Oh(1)}$ for some $\epsilon>0$ when given a graph with a tree decomposition of width $\tw$. Along with an accompanying paper~\cite{FockeMMNSSW23i}, we prove similar tight bounds for generalized domination problems. In this paper, we prove lower bounds that are tightly matching the algorithms presented in~\cite{FockeMMNSSW23i}.

The notion of $(\sigma,\rho)$-sets was introduced by \citet{Telle94} as a generalization of problems related to independent sets and dominating sets.
For sets $\sigma,\rho$ of non-negative integers, a $(\sigma,\rho)$-set of a graph $G$ is a set $S$ of vertices such that $|N(u)\cap S|\in \sigma$ for every $u\in S$, and $|N(v)\cap S|\in \rho$ for every $v\not\in S$.
The problem of finding a $(\sigma, \rho)$-set (of a certain size) generalizes many well-studied algorithmic problems such as \textsc{Independent Set}, \textsc{Strong Independent Set}, \textsc{Dominating Set}, \textsc{Independent Dominating Set}, \textsc{Perfect Code}, \textsc{Total Dominating Set}, \textsc{Perfect Dominating Set}, \textsc{Induced Bounded-Degree Subgraph}, and \textsc{Induced $d$-Regular Subgraph}.

Prior to~\cite{FockeMMNSSW23i},
the best algorithms known for any pair
of finite or cofinite sets $(\sigma,\rho)$ for any of the following variants
were due to van Rooij~\cite{Rooij20}:
decision, minimization/maximization, and counting
(ignoring problems for which polynomial-time algorithms are known).
To give the exact result, we first need some notation.
For a set $\sigma$ of finite or cofinite integers,
let $\sigMax$ denote the maximum element of $\sigma$ if $\sigma$ is finite,
and the maximum missing integer plus one if $\sigma$ is cofinite;
$\rhoMax$ is defined analogously based on $\rho$.
The following was proved by van Rooij:

\begin{theorem}[\citet{Rooij20}]
\label{thm:vanrooij-intro}
Let $\sigma$ and $\rho$ be two finite or cofinite sets. Given a graph $G$ with a tree decomposition of width $\tw$ and an integer $k$, the number of $(\sigma,\rho)$-sets of size exactly $k$ can be counted in time $(\sigMax+\rhoMax+2)^\tw\cdot n^{\Oh(1)}$.
\end{theorem}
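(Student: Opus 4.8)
The plan is to run a bottom-up dynamic program over the tree decomposition, following van Rooij~\cite{Rooij20} and refining the classical $3^{\tw}$ dynamic program for \textsc{Dominating Set}. First I would convert the given width-$\tw$ tree decomposition into a \emph{nice} one with leaf, introduce-vertex, introduce-edge, forget, and join nodes, empty root and leaf bags, and unchanged width; this is standard and costs $\tw^{\Oh(1)}\cdot n$ time, and from now on each edge of $G$ is introduced exactly once, at a dedicated introduce-edge node.

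The crux is the choice of states. For a node $t$ with bag $X_t$ and set $V_t$ of vertices introduced at or below $t$, a state assigns to each $v\in X_t$ one label out of the following $\sigMax+\rhoMax+2$ options: either ``$v\notin S$, and $v$ so far has exactly $j$ neighbours in $S$'' for some $j\in\{0,\dots,\rhoMax\}$, or ``$v\in S$, and $v$ so far has exactly $j$ neighbours in $S$'' for some $j\in\{0,\dots,\sigMax\}$. If $\rho$ is cofinite the label $j=\rhoMax$ is read as ``$\ge\rhoMax$ neighbours'', while if $\rho$ is finite any partial solution whose count would exceed $\rhoMax$ is simply dropped (and symmetrically for $\sigma$). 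Hence there are at most $(\sigMax+\rhoMax+2)^{\tw+1}$ states; I additionally index the table by the number $m\le n$ of vertices placed in $S$ that have already been forgotten. The entry $T_t[\phi,m]$ is then the number of sets $S'\subseteq V_t$ with $|S'\cap(V_t\setminus X_t)|=m$ such that $S'\cap X_t$ and the partial counts match $\phi$ and every vertex of $V_t\setminus X_t$ already satisfies its $(\sigma,\rho)$-constraint with respect to $S'$. Since the root bag is empty, $T_{\mathrm{root}}[\emptyset,k]$ will be the desired count.

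The recurrences are routine and I would only sketch them: a leaf has the empty state with $m=0$; an introduce-vertex node gives the new vertex the label ``$\notin S$, $0$'' or ``$\in S$, $0$''; an introduce-edge node $\{u,v\}$ pushes every entry forward, incrementing $u$'s count whenever $v\in S$ and vice versa (saturating or dropping as above); and a forget node for $v$ adds, over all \emph{accepting} labels of $v$ — those whose count lies in $\rho$ (resp.\ $\sigma$), with a saturated label always accepting — the corresponding child entries, increasing $m$ by one when $v\in S$. Soundness of the forget step uses that, in a nice tree decomposition, all edges incident to $v$ have been introduced before $v$ is forgotten, so $v$'s count is final. Correctness of the table follows by induction on the tree; at a join node one uses that the two subtrees share exactly $X_t$, partition the edges below $t$, and hence neighbour counts and size-contributions add.

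The one step that really needs an idea — and the place I expect the difficulty to be — is the join node, where a naive implementation is too slow. Writing the combination as
\[
T_t[\phi,m]=\sum_{\phi_1\oplus\phi_2=\phi}\ \sum_{m_1+m_2=m} T_{t_1}[\phi_1,m_1]\cdot T_{t_2}[\phi_2,m_2],
\]
with $\phi_1,\phi_2$ ranging over states with the same membership pattern as $\phi$ and $\oplus$ the coordinatewise saturating addition of counts, a direct sum over pairs $(\phi_1,\phi_2)$ costs about $\bigl((\sigMax+1)^2+(\rhoMax+1)^2\bigr)^{\tw+1}\cdot n^{\Oh(1)}$, which equals the target only in the trivial case $\sigMax=\rhoMax=0$ and is exponentially larger in $\tw$ otherwise. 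The fix is that $\oplus$ is a product of per-coordinate commutative monoid operations, so the convolution can be carried out one bag-coordinate at a time (and the $m$-index by a single one-dimensional convolution): processing coordinate $i$ costs $q_i^2\cdot\prod_{j\ne i}q_j\cdot n^{\Oh(1)}$ with $q_i\in\{\sigMax+1,\rhoMax+1\}$, and summing over the $\le\tw+1$ coordinates and then over all membership patterns telescopes (using $\sum_{\mathrm{patterns}}\prod_j q_j\le(\sigMax+\rhoMax+2)^{\tw+1}$) to $(\sigMax+\rhoMax+2)^{\tw+\Oh(1)}\cdot n^{\Oh(1)}$. Since $\sigMax$ and $\rhoMax$ are constants determined by $\sigma$ and $\rho$, each of the $\Oh(\tw\cdot n)$ nodes is processed in $(\sigMax+\rhoMax+2)^{\tw}\cdot n^{\Oh(1)}$ time, which yields the claimed overall running time.
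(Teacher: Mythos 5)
Your table definition, state space (exact neighbour counts capped at $\sigMax$ resp.\ $\rhoMax$, with saturation for cofinite sets and dropping for finite ones), the size index, and the leaf/introduce/forget steps are all fine, and you correctly identify the join node as the only non-routine step — which is exactly the point, since that step is van Rooij's actual contribution and the reason the theorem carries his name. The gap is in your fix for the join. The claim that the coordinatewise saturating-addition convolution can be computed ``one bag-coordinate at a time'' at cost $q_i^2\cdot\prod_{j\ne i}q_j$ per coordinate is not justified and is, as stated, false: when you convolve coordinate $i$ of $T_{t_1}$ and $T_{t_2}$, the remaining coordinates of the two children cannot yet be merged, because their own convolutions pair a value from $T_{t_1}$ with a value from $T_{t_2}$ later on. Any intermediate table must therefore keep the children's remaining coordinates separately, so the first pass already costs on the order of $\prod_j q_j^2$, and summing over membership patterns you are back at the $\bigl((\sigMax+1)^2+(\rhoMax+1)^2\bigr)^{\tw}$-type bound you set out to avoid. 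In linear-algebra terms, the join is the tensor product of per-coordinate bilinear convolution maps, and a bilinear map does not factor into a sequence of cheap per-coordinate passes without first changing the basis.

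The missing idea is precisely such a change of basis. Van Rooij transforms each coordinate from ``exactly $j$ selected neighbours'' to cumulative states (of the form ``at most $j$''), and adds one extra index of polynomial range recording the total number of counted selected neighbours summed over the bag (alongside your index $m$ and $k$). In the transformed representation the join becomes a pointwise product of the two children's tables for each split of these scalar indices, costing $(\sigMax+\rhoMax+2)^{\tw+1}\cdot n^{\Oh(1)}$, and the exact-count table is recovered afterwards by per-coordinate Möbius inversion; the extra index is what makes the inversion sound despite the transform not being injective coordinatewise. Without this transform-and-invert step (or an equivalent fast-convolution argument), your proof does not achieve the stated base $\sigMax+\rhoMax+2$, so as written the argument has a genuine hole at its central point.
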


In the accompanying paper~\cite{FockeMMNSSW23i}, we prove that for some $(\sigma,\rho)$-sets these algorithms can be improved significantly. Specifically, $(\sigma, \rho)$ is {\em $\mname$-structured} if there is a pair $(\alpha,\beta)$ such that every integer in $\sigma$ is exactly $\alpha$ mod $\mname$, and every integer in $\rho$ is exactly $\beta$ mod $\mname$. For example, the pairs $(\{0,3\},\{3\})$ and $(\{0,3\},\{1,4\})$ are both $3$-structured, but the pair
$(\{0,3\},\{3,4\})$ is not $\mname$-structured for any $\mname\ge 2$. Note that if a set is cofinite, then it cannot be $\mname$-structured for any $\mname\ge 2$.
\begin{definition}
  \label{def:intro:baseOfRunningTime}
Let $\sigma$ and $\rho$ be two finite or cofinite sets of non-negative integers. We define
$c_{\sigma,\rho}$ by setting
  \begin{itemize}
  	\item $c_{\sigma,\rho} \coloneqq \sigMax+\rhoMax+2$ if $(\sigma,\rho)$ is not $\mname$-structured for any $\mname\ge 2$,
  	\item
		$c_{\sigma,\rho} \deff \max\{\sigMax,\rhoMax\}+2$
		if $(\sigma,\rho)$ is $2$-structured, but not $\mname$-structured
		for any $\mname\ge 3$, and $\sigMax=\rhoMax$ is even, and
  	\item $c_{\sigma,\rho} \coloneqq \max\{\sigMax,\rhoMax\}+1$, otherwise.
  \end{itemize}
\end{definition}

With this definition we can formally state
the running time of the improved algorithm.

\begin{theorem}[\cite{FockeMMNSSW23i}]\label{thm:alg-main-intro}
Let $\sigma$ and $\rho$ denote two finite or cofinite sets. Given a graph $G$ with a tree decomposition of width $\tw$ and an integer $k$, the number of $(\sigma,\rho)$-sets of size exactly $k$ can be counted in time $(c_{\sigma,\rho})^\tw\cdot n^{\Oh(1)}$.
\end{theorem}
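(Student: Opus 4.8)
The plan is to establish the bound via dynamic programming over a nice tree decomposition, following the standard template for $(\sigma,\rho)$-sets but using a more economical state space in the $\mname$-structured regimes. First I would convert the given width-$\tw$ tree decomposition into a nice one (introduce, forget, and join nodes) in polynomial time and without increasing the width. For a bag $X_t$ the table is indexed by a function assigning to each $u\in X_t$ one of the states $\mathsf{in}_0,\dots,\mathsf{in}_{\sigMax}$ or $\mathsf{out}_0,\dots,\mathsf{out}_{\rhoMax}$, together with an integer $0\le j\le k$; the entry stores the number of partial solutions $S$ of the processed subgraph $G_t$ with $|S|=j$ such that each $u\in S\cap X_t$ currently has as many selected neighbours among the forgotten part and $X_t$ as the state $\mathsf{in}_{(\cdot)}$ indicates (capped, and read as ``$\ge\sigMax$'' when $\sigma$ is cofinite), analogously for $u\in X_t\setminus S$, and such that every already-forgotten vertex is ``finalized'', i.e., its count lies in $\sigma$, resp.\ $\rho$. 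The transitions at introduce, forget, and join nodes are classical; in particular, handling the join---a per-vertex count convolution---via the transform of \citet{Rooij20} yields exactly van Rooij's base $\sigMax+\rhoMax+2$ of \cref{thm:vanrooij-intro}.

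The heart of the proof is to shrink the per-vertex state space whenever $(\sigma,\rho)$ is $\mname$-structured for some $\mname\ge2$. The key observation is that under $\mname$-structure the acceptance test at a forget node depends only on the residue of the neighbour count modulo $\mname$ (plus a coarse bound for the finite/cofinite cutoff), and the count-advancing transitions respect this residue grading. This lets me move to a transformed table---in the spirit of the transforms of \citet{LokshtanovMS18} and \citet{Rooij20}---in which the join becomes entry-wise, and in the transformed coordinates the $\mathsf{in}$- and $\mathsf{out}$-states coalesce, leaving only $\max\{\sigMax,\rhoMax\}+1$ coordinates per vertex. The two exceptional regimes in \cref{def:intro:baseOfRunningTime}---$(\sigma,\rho)$ not $\mname$-structured for any $\mname\ge2$, and $(\sigma,\rho)$ exactly $2$-structured with $\sigMax=\rhoMax$ even---are precisely where this coalescing fails, entirely (base $\sigMax+\rhoMax+2$) or partially (base $\max\{\sigMax,\rhoMax\}+2$), so the proof is a case distinction mirroring the three cases defining $c_{\sigma,\rho}$.

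Correctness then follows by checking that the transform is an invertible linear map over $\ZZ$ (or over a sufficiently large ring, which is harmless since all intermediate counts are bounded by $2^{n}$), that the base case and all three transition types commute with it, and that the final count is obtained at the root by summing the finalized entries with $j=k$. For the running time: there are $\Oh(n)$ nodes, each carrying at most $(c_{\sigma,\rho})^{|X_t|}\cdot(k+1)\le (c_{\sigma,\rho})^{\tw+1}\cdot(n+1)$ entries, and the join---the costliest step---is entry-wise on the transformed tables, so no additional factor appears in the exponent; this gives the claimed $(c_{\sigma,\rho})^{\tw}\cdot n^{\Oh(1)}$. Cofinite $\sigma$ or $\rho$ are handled uniformly by reading the capped state as ``$\ge$ the cutoff'', and since a cofinite set is never $\mname$-structured for $\mname\ge2$, only the first case of \cref{def:intro:baseOfRunningTime} arises for such pairs.

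The step I expect to be the main obstacle is making the state-coalescing precise and proving it is tight: one must determine exactly which pairs of per-vertex states can be consistently summed at a join under the modular acceptance condition---equivalently, pin down the rank of the associated compatibility matrix over the relevant ring---and show it equals $\max\{\sigMax,\rhoMax\}+1$ (respectively $+2$ in the even $2$-structured subcase) and not something larger. A secondary technical point is carrying the size constraint $|S|=k$ through the transform without inflating the polynomial factor and ensuring the transform stays invertible over a ring suitable for counting rather than mere decision.
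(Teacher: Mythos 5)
First, a point of scope: \cref{thm:alg-main-intro} is not proven in this paper at all --- it is quoted from the accompanying algorithmic paper~\cite{FockeMMNSSW23i}, and the present paper only uses it as a black box to state that its lower bounds are matching. So there is no in-paper proof to compare your attempt against; any verdict on whether your route coincides with or diverges from the authors' proof would have to be checked against Part~I, not against this text.

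Judged on its own terms, your proposal is a plan rather than a proof, and the step you yourself flag as the ``main obstacle'' is precisely the entire content of the theorem: reducing the per-vertex state space from the $\sigMax+\rhoMax+2$ states of \cref{thm:vanrooij-intro} to $c_{\sigma,\rho}$ in the $\mname$-structured regimes. Two concrete concerns with the mechanism you sketch. (1) Your justification for coalescing rests on the claim that, for $\mname$-structured sets, the acceptance test at a forget node ``depends only on the residue of the neighbour count modulo $\mname$'' within the capped range; this is false in general --- e.g.\ $\sigma=\{0,6\}$ is $3$-structured, yet $0,3,6$ all have residue $0$ while $3\notin\sigma$ --- so the proposed per-vertex modular transform does not obviously exist, let alone yield exactly $\max\{\sigMax,\rhoMax\}+1$ (or $+2$) coordinates. (2) The hints given in this paper's technical overview indicate that the companion algorithm does not work by an invertible per-vertex transform making $\mathsf{in}$- and $\mathsf{out}$-states coalesce at joins, but by bounding the number of state \emph{strings} that can actually be realized across a separator (the ``compatible language''), i.e.\ the savings is a global count of roughly $(c_{\sigma,\rho})^{\tw}$ realizable subproblems per node rather than a local rank argument on a compatibility matrix. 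Your outer scaffolding (nice decompositions, tables indexed by capped states and solution size, van Rooij-style handling of joins for the unstructured case) is standard and fine, but without a proof of the tight state-space bound the argument does not establish the claimed running time.
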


Due to the peculiar definition of $c_{\sigma,\rho}$, one could believe that further improvements are obtainable. However, in this paper, we prove that, for the counting version,
$c_{\sigma,\rho}$ \emph{precisely} characterizes the best possible base of the exponent.
In particular, our lower bounds hold even when considering pathwidth, a more restrictive parameter.
For the following hardness result, we need to exclude pairs $(\sigma,\rho)$
for which the problem is trivially solvable:
we say that $(\sigma,\rho)$ is \emph{non-trivial}
if $\rho\neq \{0\}$, and $(\sigma,\rho)\neq (\{0,1,\ldots\},\{0,1,\ldots\})$.

\begin{restatable}{theorem}{thmLBmaincounting}\label{thm:lower-main-intro}
	Let $(\sigma,\rho)$ denote a non-trivial pair of finite or cofinite sets.
	If there is an $\epsilon>0$ and an algorithm that
	counts in time $(c_{\sigma,\rho}-\epsilon)^\pw\cdot n^{\Oh(1)}$
	the number of $(\sigma,\rho)$-sets in a given graph
	with a given path decomposition of width $\pw$,
	then the Counting Strong Exponential Time Hypothesis (\#SETH) fails.
\end{restatable}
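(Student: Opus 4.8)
The plan is a fine-grained reduction from counting satisfying assignments of $q$-CNF formulas. Given $\epsilon>0$ we fix $q=q(\epsilon)$ and a large constant $t=t(\epsilon)$, and show that a $(c_{\sigma,\rho}-\epsilon)^{\pw}\cdot n^{\Oh(1)}$ counting algorithm would count the satisfying assignments of every $n$-variable $q$-CNF formula $\phi$ in time $2^{(1-\Omega(1))n}\cdot n^{\Oh(1)}$, contradicting \#SETH. The reduction factors through a relational version \srCountDomSetRel of the problem, in which the input graph additionally carries constraints from a fixed family --- Hamming-weight relations \HWeq{k}, \HWge{k}, \HWle{k} and equality relations \EQ{k} --- imposed on labelled tuples of vertices, and where a solution is a $(\sigma,\rho)$-set of the graph part that in addition satisfies every relational constraint. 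Step~(i): build from $\phi$ a \srCountDomSetRel-instance whose number of solutions equals the number of satisfying assignments of $\phi$ and whose underlying graph has pathwidth at most $(\log 2/\log c_{\sigma,\rho})\cdot n+n/t+\Oh_{t}(1)$. Step~(ii): replace each relational constraint by an ordinary $(\sigma,\rho)$-gadget, obtaining a plain $(\sigma,\rho)$-instance of the same pathwidth up to an additive constant whose number of $(\sigma,\rho)$-sets is a fixed, efficiently computable, nonzero multiple of the solution count of the relational instance. Composing the two steps, using $(c_{\sigma,\rho}-\epsilon)^{(\log 2/\log c_{\sigma,\rho})n}=2^{\log_{c_{\sigma,\rho}}(c_{\sigma,\rho}-\epsilon)\cdot n}$ with $\log_{c_{\sigma,\rho}}(c_{\sigma,\rho}-\epsilon)<1$, and choosing $t$ large enough (depending on $\epsilon$) that the contribution of the $n/t$ term to the exponent stays below the resulting gap, yields the claimed running time. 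Since the instances are produced together with an explicit path decomposition and the construction is path-like throughout, the bound holds for the parameter $\pw$ as stated.

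Step~(i) is the standard ``highway'' construction. Partition the variables of $\phi$ into $g=\lceil n/t\rceil$ groups of at most $t$ variables each. For every group we lay down a path of ``checkpoints'', one per clause of $\phi$; the local state carried along this path encodes an assignment of the group, the relations \EQ{2} between consecutive checkpoints force one and the same assignment to be used along the whole path, and at the checkpoint of clause $C_{j}$ we attach a constraint --- built from \HWge{1}-type relations together with a few auxiliary selector vertices that read the relevant bits off the group states --- accepting exactly the local states that certify $C_{j}$ to be satisfied by one of its literals. Each group's path is realized by a chain of $k=\lceil t\log 2/\log c_{\sigma,\rho}\rceil$ \emph{manager} gadgets (see below), which jointly carry $c_{\sigma,\rho}^{k}\ge 2^{t}$ states, enough to index all $2^{t}$ assignments of the group. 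Running the $g$ paths in parallel gives a path decomposition of width at most $g\cdot k+\Oh(1)\le(\log 2/\log c_{\sigma,\rho})\cdot n+n/t+\Oh_{t}(1)$, and by construction the satisfying assignments of $\phi$ are in bijection with the globally consistent choices of checkpoint states that clear every clause; hence the solution count of the relational instance equals the number of satisfying assignments of $\phi$.

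The crux, and the step I expect to be the main obstacle, is the $(\sigma,\rho)$-gadget constructions underlying both steps, together with the control of their solution counts that the \emph{counting} reduction demands. The basic object is a \emph{manager}: a $(\sigma,\rho)$-gadget with one port vertex $\port$ such that, over all extensions of a partial $(\sigma,\rho)$-assignment at $\port$ to a valid assignment of the gadget, $\port$ realizes exactly $c_{\sigma,\rho}$ distinct \emph{states} (a state records whether $\port\in S$ and how many gadget-neighbours of $\port$ lie in $S$) and, crucially, every realized state admits the same number of internal completions. Generically a port is either in $S$ with $0,\dots,\sigMax$ gadget-neighbours selected or out of $S$ with $0,\dots,\rhoMax$ gadget-neighbours selected, giving $\sigMax+\rhoMax+2$ states; but when $(\sigma,\rho)$ is $\mname$-structured, states whose neighbour-counts differ by a non-multiple of $\mname$ cannot be distinguished by any gadget attached at $\port$, collapsing the usable count to $\max\{\sigMax,\rhoMax\}+1$, with the exceptional $+2$ for even $2$-structured pairs --- exactly the case split of \cref{def:intro:baseOfRunningTime}. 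The \EQ{k}, \HWeq{k}, \HWge{k}, \HWle{k} relations are then realized by further $(\sigma,\rho)$-gadgets attached to the ports. All of this must be carried out uniformly over \emph{every} non-trivial pair $(\sigma,\rho)$ of finite or cofinite sets; and whereas in the decision setting it would suffice that each gadget has \emph{some} completion on each accepted tuple, for counting the number of completions must be pinned down so that $\#((\sigma,\rho)\text{-sets})$ is a fixed multiple of the number of satisfying assignments --- which typically forces rigid sub-gadgets with a unique completion, balancing constructions, or, when exact parsimony is unattainable, running the reduction at several sizes and recovering the count by polynomial interpolation. The cofinite cases need separate treatment, since a cofinite $\sigma$ or $\rho$ both creates would-be extra port states and inflates completion counts, and one must check that neither effect raises the effective base above $c_{\sigma,\rho}$ nor corrupts the count; the non-triviality hypothesis ($\rho\neq\{0\}$ and $(\sigma,\rho)\neq(\NN,\NN)$) removes exactly the degenerate pairs, for which the problem is polynomial-time solvable and no such gadgets --- nor any lower bound --- can exist.
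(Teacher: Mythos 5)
Your overall architecture coincides with the paper's (a SAT reduction through a relational intermediate problem driven by manager gadgets carrying $c_{\sigma,\rho}$ states per port, followed by a relation-removal step using gadgets and interpolation), but as written the plan has a genuine gap at its core: the mechanism that propagates a group's state along a row. A relational constraint can only observe \emph{selection statuses} of the vertices in its scope, while the state of a port is its selection status \emph{together with its number of selected neighbours}; hence your ``relations $\EQ{2}$ between consecutive checkpoints force one and the same assignment'' cannot work --- two states with the same selection status but different neighbour counts are invisible to any relation placed on the checkpoints themselves, and with only $\mathtt{HW}$/$\EQset$ relations you are back to two effective states per vertex, i.e.\ a $(2-\eps)^{\pw}$ bound. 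The paper instead puts the left and right neighbourhood blocks of each information vertex into the scope of the column relations, enforces only \emph{complementarity} (right count of $w^{j-1}_{i,\ell}$ plus left count of $w^{j}_{i,\ell}$ equals $\sigMax$ resp.\ $\rhoMax$), derives monotonicity of the counts along a row from the $\sigma$-/$\rho$-constraints (which is exactly where ``finite or \emph{simple} cofinite'' is needed), and turns monotonicity into equality only by restricting the encodings to a fixed-weight subset $\widehat{\mathfrak A}\subseteq A^{g}$ obtained by pigeonhole, with $A$ closed under the inverse. None of this is recoverable from equality constraints on checkpoints, and it is also why general cofinite sets cannot be handled inside the construction at all: the paper needs a separate interpolation-based chain (relation-weighted $\to$ vertex-weighted $\to$ unweighted, \cref{lem:count:makeSigmaCofinite,lem:count:makeRhoCofinite,lem:count:relationWeightedToVertexWeighted,lem:count:vertexWeightedToUnweighted}) to reduce arbitrary cofinite to simple cofinite before the relational lower bound applies, which your brief remark on cofinite sets does not supply. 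Finally, the existence of managers with exactly $|A|=c_{\sigma,\rho}$ states in each case of \cref{def:intro:baseOfRunningTime} (with the inverse-closure property) is the main constructive content of the paper (\cref{lem:lb:existenceOfManager}, Sections 5--6); your plan assumes it.

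The second step is also overclaimed. You ask for a plain $(\sigma,\rho)$-instance whose number of solutions is a \emph{fixed nonzero multiple} of the relational count; this is provably out of reach for many non-trivial pairs (e.g.\ whenever $0\in\rho$ or $\rho=\NN$ the decision problem is trivial or easy, so no attached gadget can force a relation outright, as that would yield decision lower bounds). The paper's \cref{lem:lower:count:removingRelations} is instead a multi-call Turing reduction built from a long case analysis ($\rho\neq\NN$; $\rho=\NN$ with $\sigma$ finite; $\rho=\NN$ with $\sigma$ cofinite), using forced-selection gadgets (``winners''/``strong candidates'') and repeated polynomial interpolation, and in the last case it even has to change the semantics of the relational problem (vertices in scopes become $(\NN,\NN)$-vertices) because a $(\ZZ_{\ge1},\NN)$-vertex with one selected neighbour is indistinguishable from an unconstrained vertex thereafter. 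Your hedge about ``running the reduction at several sizes and interpolating'' points in the right direction, but it does not identify these obstructions, and without them the claimed parsimony-up-to-a-factor of Step~(ii) is not attainable for the full range of non-trivial finite/cofinite pairs that the theorem covers.
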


The running time of the algorithm of \cref{thm:alg-main-intro} is achieved by considering
roughly $(c_{\sigma,\rho})^\tw$ subproblems at each node of the tree decomposition. One can
interpret the lower bound of \cref{thm:lower-main-intro} as showing that at least
that many subproblems need to be considered by any algorithm solving the counting problem.
For finite $\sigma$ and $\rho$, we obtain a matching lower bound for the decision version as well.

\begin{restatable}{theorem}{thmLBmaindecision}\label{thm:lower-main-intro-decision}
	Let $(\sigma,\rho)$ be a pair of finite sets such that $0 \notin \rho$.
	If there is an $\epsilon>0$ and an algorithm that decides in time
	$(c_{\sigma,\rho}-\epsilon)^\pw\cdot n^{\Oh(1)}$
	whether there is a $(\sigma,\rho)$-set in a given graph
	with a given path decomposition of width $\pw$,
	then the Strong Exponential Time Hypothesis (SETH) fails.
\end{restatable}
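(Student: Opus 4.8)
The plan is a reduction from CNF-\textsc{Sat} built from pathwidth-preserving gadget constructions. Write $c\deff c_{\sigma,\rho}$. Suppose we had an algorithm deciding the existence of a $(\sigma,\rho)$-set in time $(c-\epsilon)^{\pw}\cdot n^{\Oh(1)}$. From a CNF formula $\phi$ on $n$ variables and polynomially many clauses I would construct, in polynomial time, a graph $G_\phi$ together with a path decomposition of width $\pw = n\log_c 2 + \Oh(n/p) + \Oh(1)$, for a constant $p=p(\epsilon)$ of our choosing, such that $G_\phi$ has a $(\sigma,\rho)$-set if and only if $\phi$ is satisfiable. Running the assumed algorithm on $G_\phi$ then decides satisfiability in time $(c-\epsilon)^{\pw}\cdot\mathrm{poly}(|\phi|) = 2^{n(\log_c(c-\epsilon)+\Oh(1/p))}\cdot\mathrm{poly}(|\phi|)$; since $\log_c(c-\epsilon)$ is a constant strictly smaller than $1$ depending only on $\epsilon$ and $c$, choosing $p$ large enough makes this $(2-\delta)^n\cdot\mathrm{poly}(|\phi|)$ for some $\delta>0$, contradicting SETH. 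Instead of targeting the plain problem directly, I would go through the relational variant \srDomSetRel and split the construction into two stages connected by pathwidth-preserving reductions ($\pwred$-reductions): (a) an information-theoretic stage producing the desired \emph{relational} instance using only a small fixed stock of constraints — essentially Hamming-weight constraints $\HWeq{1}$, $\HWge{1}$ and equality constraints $\EQset$ on short tuples of port vertices — and (b) a realization stage that replaces each such constraint by a genuine $(\sigma,\rho)$-gadget while increasing pathwidth by only an additive constant depending on $\sigma$ and $\rho$.

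Stage (a) follows the \citet{LokshtanovMS18} scheme. Partition the variables into blocks of $\lfloor p\log_2 c\rfloor$ variables, and encode the assignment of one block by the joint ``boundary state'' of $p$ port vertices, each arranged so that it can sit in exactly $c$ mutually distinguishable states. Stack the blocks into $p$-wide bundles of ``transmission paths'' that are traversed once per clause, let the path structure itself enforce that the encoded block-assignment is copied unchanged from one clause-cell to the next (this is what keeps the overhead additive rather than multiplicative), and attach at the $j$-th cell a constant-size gadget that inspects the literals of clause $C_j$ and uses an $\HWge{1}$-constraint to certify that $C_j$ is satisfied. Two points need attention beyond the \citet{LokshtanovMS18} template. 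First, the reduction must be \emph{clean}: $G_\phi$ must contain \emph{no} $(\sigma,\rho)$-set other than those encoding satisfying assignments, since the decision version leaves no room for the inclusion--exclusion or count-preserving arguments that could otherwise absorb spurious solutions; this is exactly where the hypotheses that $\sigma,\rho$ are finite and $0\notin\rho$ enter, as they let one pin down the behaviour of every gadget and every transmission cell in the ``existence'' regime rather than only up to a count. Second, the ``$c$ states at a port'' primitive has to be provided uniformly in all three cases of \cref{def:intro:baseOfRunningTime}.

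The heart of the proof — and the step I expect to dominate the technical work — is stage (b): for each regime of \cref{def:intro:baseOfRunningTime}, build $(\sigma,\rho)$-gadgets that expose at a port vertex $\port$ exactly $c_{\sigma,\rho}$ distinguishable boundary states and realize the copy and select-one constraints on tuples of such ports — no fewer, or the reduction is too weak, and, crucially for \emph{tightness}, no more, or one would contradict the algorithm of \cref{thm:alg-main-intro}. In the generic (not $\mname$-structured) case the natural states are ``$\port\in S$ with $j\in\{0,\dots,\sigMax\}$ of its $S$-neighbours already accounted for'' together with ``$\port\notin S$ with $j\in\{0,\dots,\rhoMax\}$ already accounted for'', a total of $\sigMax+\rhoMax+2=c_{\sigma,\rho}$; one realizes these by attaching pendant gadgets that feed a controlled number of forced $S$- or non-$S$-neighbours into $\port$, and then argues that these states are independent enough for $\HWge{1}$ and $\EQset$ to be expressible — an argument I expect to be linear-algebraic in flavour and to be tracked by the arity parameter underlying the paper's $\arred$ and $\pwarred$ reductions. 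In the $2$-structured case with $\sigMax=\rhoMax$ even, the mod-$2$ constraint merges the two ``middle'' states, leaving $\max\{\sigMax,\rhoMax\}+2$ accessible states; in the remaining $\mname$-structured cases the modular constraint is still more restrictive and only $\max\{\sigMax,\rhoMax\}+1$ states survive, and the gadgets must be designed accordingly. Finally, since the counting lower bound \cref{thm:lower-main-intro} already covers these pairs $(\sigma,\rho)$ with the same base $c_{\sigma,\rho}$, stage (b) should be able to reuse most of the gadgetry developed there, the extra hypotheses of the decision statement serving only to upgrade the ambient reduction from count-preserving to solution-preserving.
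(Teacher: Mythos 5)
Your overall architecture matches the paper's: reduce from SAT to the relational problem \srDomSetRel with bounded-arity constraints, with groups of $\Theta(\log c_{\sigma,\rho})$ variables encoded by the states of port/information vertices so that the pathwidth is essentially (number of groups)$\cdot$(ports per group), and then remove the relations by a pathwidth-preserving realization of \HWeq{1} (via \EQset) using constant-size gadgets, which is exactly where finiteness of $\sigma,\rho$ and $0\notin\rho$ are consumed. The budget calculation contradicting SETH is also the same.

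The genuine gap is in the step you describe as ``pendant gadgets that feed a controlled number of forced $S$- or non-$S$-neighbours into $\port$.'' This is precisely the point the paper identifies as the crucial difficulty, and it is not solved by forcing gadgets: the vertices supplying those neighbours must themselves satisfy their $\sigma$- and $\rho$-constraints, and whether a single consistent gadget can supply, for \emph{every} state in a set $A$, the required number of selected neighbours on the left \emph{and} the complementary number $\sigMax-\ell$ (resp.\ $\rhoMax-\ell$) on the right of an information vertex is exactly the question of existence of an \encoder{A} (\cref{def:encoder}, \cref{lem:lb:existenceOfManager}). Its answer depends on the $\mname$-structure of $(\sigma,\rho)$ and is what produces the three values of $c_{\sigma,\rho}$; proving it occupies the provider constructions and the manager blueprint of \cref{sec:provider,sec:manager} (parity issues, inverse-closedness of $A$, the $2$-structured case, etc.), none of which your plan supplies or replaces. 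Relatedly, your claim that ``the path structure itself enforces that the encoded block-assignment is copied unchanged'' does not hold with states that count selected neighbours: copying requires the relations to enforce the inverse/complement condition $\inv a^{j-1}_{i,\ell}=\inverse[\sigma,\rho]{a^j_{i,\ell}}$ together with the restriction to encodings of a single fixed weight, from which one derives the monotone chain $T^0_{i,\ell}\le\dots\le T^m_{i,\ell}$ and then equality; without this fixed-weight argument the neighbour counts can drift across columns, producing exactly the spurious solutions whose absence you (correctly) flag as essential for the decision version. Finally, a smaller confusion: the \HWeq{1}-realization gadgets of the last stage need not, and do not, expose $c_{\sigma,\rho}$ states — tightness comes solely from the managers supporting $|A|=c_{\sigma,\rho}$ states inside the relational stage, while the realization gadgets only need to force ``exactly one portal selected'' without adding selected neighbours to the portals.
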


Note that, for the decision version, we must omit all pairs $(\sigma,\rho)$
with $0 \in \rho$ since the empty set is a $(\sigma,\rho)$-set of any graph in this case.
Interestingly, the lower bounds from \cref{thm:lower-main-intro-decision} do not apply if $\sigma$ or $\rho$ is cofinite. Indeed, in~\cite{FockeMMNSSW23i}, we use the technique
of representative
sets~\cite{FominLPS17,FominLPS16,Monien85,DBLP:conf/wg/PlehnV90,Marx09,DBLP:journals/iandc/BodlaenderCKN15}
to obtain better algorithms in this case.
However, to obtain a tight optimal running time in this case,
there are two significant hurdles:
proving tight upper bounds on the size of representative sets,
and understanding whether they can be handled without matrix-multiplication based methods.

We refer the reader to the accompanying paper~\cite{FockeMMNSSW23i} for a more detailed introduction including further motivation for $(\sigma,\rho)$-sets and our approach, as well as directions for further work.


\section{Technical Overview}
\label{sec:technical-overview}
At a very high-level, our approach for deriving our lower bound results
follows previous works on problems parameterized by treewidth
and pathwidth~\cite{%
CurticapeanM16,LokshtanovMS18,MarxSS21,MarxSS22}.
However, the high-level structure is then filled
with a plethora of technical details that are designed specifically for the problem at hand.

We start by presenting a reduction from SAT to \srDomSet (given a graph $G$, decide whether $G$ has a $(\sigma,\rho)$-set)
by constructing a graph which has ``small'' pathwidth.
Instead of giving a direct reduction, we split it into two parts.
The first part shows a lower bound for the intermediate problem
\srDomSetRel
which extends the \srDomSet problem
by the possibility to impose relational constraints on the selection status of vertices.
Such a constraint consists of a scope $D$ of vertices together with a $\abs{D}$-ary relation that specifies which subsets of $D$ are allowed to be selected in a solution.
In a second step, we then show how to model arbitrary relational constraints using the original problem (without relations).

\subsection{Lower Bound for the Problem with Relations}
Let us first consider the decision problem with relations \srDomSetRel.
We explain the lower bound approach based on the example that the sets $\sigma$ and $\rho$ are single-element sets.
Afterward, we handle difficulties in the construction
and explain how to extend the approach to the counting problem and the case of cofinite sets.

\subparagraph*{Naive and Improved Construction.}
We first sketch the naive approach and analyze its downsides.
See \cref{fig:lower:intro-naive} for a visualization of the following construction.
From a SAT instance we define an instance of \srDomSetRel
that uses a graph with a grid-like structure
with one column for each clause of the SAT formula,
and one row for each variable of the formula,
where each edge in each row is subdivided by an \emph{information vertex}. The selection status of the first information vertex in row $i$ encodes whether the corresponding variable $x_i$ is assigned $0$ or $1$.
The crossing points of the grid are \emph{relations}
which check whether the clauses
are satisfied by the corresponding assignments --- the relations in column $j$ enforce the $j$th clause. The relations also propagate the assignment information (i.e., the selection status of the information vertices) from one column to the next, and hence, to the next relation/clause.
As the variable assignment is encoded only by two states of the information vertices (selected or not),
it turns out that this approach yields only a lower bound of $(2-\eps)^{\pw} \cdot n^{\Oh(1)}$.

\begin{figure}[t]
  \centering
  \includegraphics[width=0.8\textwidth]{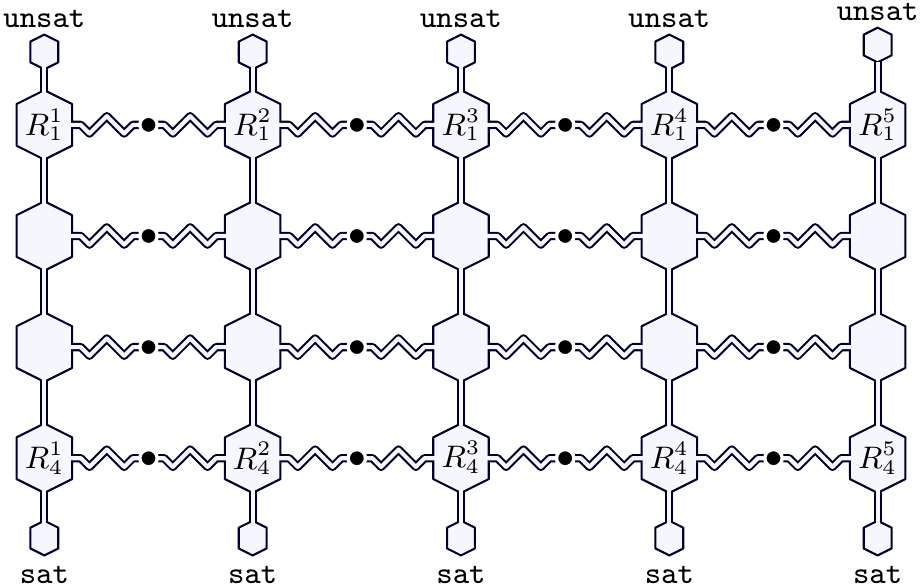}
  \caption{
  The naive construction for the weak $(2-\eps)^{\pw}$ lower bound
  given a SAT formula with five clauses, depicted as columns,
  and four variables, depicted as rows.
  The relations $R_i^j$ check if the assignment for the $i$th variable
  satisfies the $j$th clause.
  }
  \label{fig:lower:intro-naive}
\end{figure}

To obtain stronger lower bounds
of the form $(\abs{\allStates}-\epsilon)^{\pw} \cdot n^{\Oh(1)}$,
the idea is to use states for the information vertices that depend on both the selection status and the number of selected neighbors of the information vertices.
In an improved construction, the information vertices might have selected vertices ``to
the left'', that is, in the previous column, and also ``to the right'', that is, in the next column.
We use the selected vertices to the left to encode the assignment, whereas the selected vertices to the right ensure that the information vertex has a feasible number of selected neighbors overall.
The strongest lower bound can be achieved if the information vertices can take on every state from $\allStates$.
We explain later why this is not always possible.

Before we describe it in more detail,
see \cref{fig:lower:intro} for an illustration of the improved construction.
Assuming that we can model all the states in $\allStates$,
we partition the variables into $n/\log(\abs{\allStates})$ groups
of size $\log(\abs{\allStates})$ each.%
\footnote{For ease of presentation, we ignore rounding and parity issues here, which can be resolved by partitioning into a somewhat smaller number of  somewhat larger groups.}
By this choice, there is a one-to-one correspondence
between assignments to each group and the states in $\allStates$.
Similarly to the naive approach,
the constructed graph has a grid-like structure with $m$ columns,
each corresponding to one clause of the SAT formula,
but only $n/\log(\abs{\allStates})$ rows,
each corresponding to one group of variables.
Moreover, at the crossing points of rows and columns, we have relations
which check whether the assignment satisfies the clause.
The most notable difference is the neighborhood of the information vertices
which we place between two crossing points in the grid.
Let $w_i^j$ be the information vertex
that is in row $i$ between columns $j$ and $j+1$.
Each such information vertex gets $\max\{\sigMax,\rhoMax\}$ neighbors on the left
and the same number of neighbors on the right.
The state of an information vertex is determined by its selection status
and the number of selected neighbors \emph{to its left}.
As mentioned above,
this state encodes the assignment for the corresponding group of variables.
In each column, the edge between two relations
which are assigned to two neighboring crossing points
is used to transfer the information
whether the corresponding clause has already been satisfied.

For the basic idea, it suffices to think of $\sigma$, $\rho$ as single-element sets.
For example, let $\sigma=\{4\}$ and $\rho=\{3\}$.
Whenever the information vertex $w_i^j$ is selected
and has $\ell \in \fragment{0}{\sigMax}$ selected neighbors to the left,
it must have $\sigMax-\ell=4-\ell$ selected neighbors on the right
because of the constraints imposed by $\sigma$,
i.e., in our case a selected vertex needs a total of $4$ selected neighbors.
The relation at crossing point $(i, j+1)$ is defined such that
the next information vertex $w_i^{j+1}$
gets $\ell$ selected neighbors from the left
whenever $w_i^j$ has $4-\ell$ selected neighbors to its right.
By defining all relations which we assign to crossing points in this way,
all information vertices of one row
get the same number of neighbors from the left.
Hence, they can encode the same assignment.
The same arguments work for the case when $w_i^j$ is not selected,
but then $w_i^j$ must have $\rhoMax-\ell=3-\ell$ neighbors to the right
if it has $\ell$ neighbors to the left.
As the behavior of the relation depends on the information vertices,
they are additionally connected to the information vertices
and not only their neighbors toward the relation.
To ensure that clauses that are initially unsatisfied are eventually satisfied,
we add corresponding relations to the first and last row of the graph.

Observe that the information vertices of one column cut the graph in two halves.
Hence, the pathwidth of the graph can be bounded by the number of rows
plus some constant which takes care of the relations and other vertices.
Moreover, when we cut the graph at the information vertices of one column,
we get a direct correspondence to our algorithmic results.
Indeed, the states the information vertices might obtain in a solution
(which are specified by their selection status and the number of selected vertices to the left)
precisely correspond to the strings in the language compatible with the graph.
Consequently, for each such set of information vertices,
the number of states they can have
is trivially upper-bounded by $\abs{\allStates}^{\pw+1}$.
However, as can be seen in the accompanying paper~\cite{FockeMMNSSW23i},
if $(\sigma,\rho)$ is $\mname$-structured, then
the number of states that can actually appear is often much smaller.
In the following paragraph, we explain how this fact comes up in the lower bound construction.

\begin{figure}[t]
  \centering
  \includegraphics{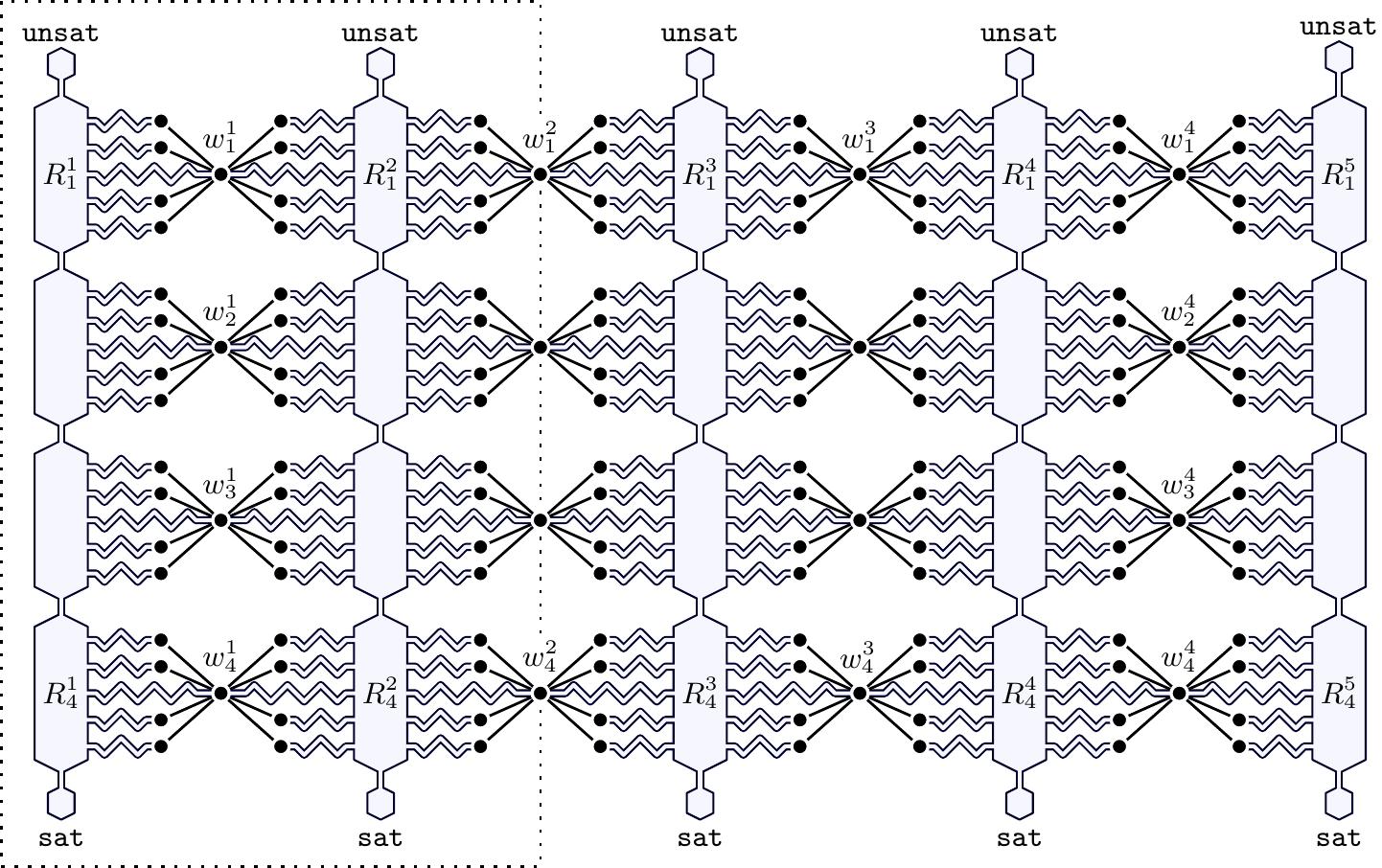}
  \caption{
  The figure illustrates the construction for the lower bound
  given a SAT formula with five clauses,
  depicted as columns,
  and variables which are grouped into four groups,
  depicted as rows.
  The relations $R_i^j$ check if the assignment for the $i$th group
  satisfies the $j$th clause.
  }
  \label{fig:lower:intro}
\end{figure}

\subparagraph*{Difficulties in the Construction.}
Observe that in the previous example, the information vertices always have a feasible number of selected neighbors;
if there are $\ell$ selected vertices on the left,
then there are $\sigMax-\ell$ or $\rhoMax-\ell$ selected vertices on the right,
depending on whether the information vertex is selected or not, respectively.
However, the previous construction ignores a very subtle
but surprisingly crucial issue:
the neighbors of the information vertices
also need a number of selected neighbors that is feasible according to $(\sigma, \rho)$.
More concretely, let $v$ be some information vertex
and $N_L(v)$ be the set of its neighbors on the left.
(The neighbors on the right are treated similarly.)
The idea is to find a graph $J$ that can be attached to $N_L(v)$
such that depending on the selection status of the vertices in $N_L(v)$,
and depending on the selection status of $v$
(which crucially also affects the vertices in $N_L(v)$),
there is some selection of the vertices in $J$
such that all vertices in $J$ (including the vertices in $N_L(v)$)
have a feasible number of selected neighbors overall.
Rephrasing, this means that depending on the state that the information vertex $v$ should obtain,
$J$ must have a feasible selection extending this state.
Note that this property is crucially needed as otherwise no solution exists.
We formalize this concept in \cref{sec:graphsWithRelations} by the definition of a \emph{manager gadget},
which handles the neighbors on the left and right simultaneously.

It turns out that these manager gadgets cannot always be constructed,
and so it might be the case that depending on $(\sigma,\rho)$
such a graph $J$ gives only feasible extensions
for a restricted set of states from $\allStates$.
When proving the lower bound,
for each case mentioned in \cref{def:intro:baseOfRunningTime}, we define a suitable subset $A \subseteq \allStates$, and then show that a so-called \emph{$A$-manager} exists, which
can handle exactly the states in $A$. For technical reasons, we also ensure that the supported set of states $A$ is \emph{closed under the inverse}. This means that whenever $A$ supports a state $\sigma_i$ or $\rho_i$, then it also supports $\sigma_{\sigMax-i}$ or $\rho_{\rhoMax-i}$, respectively. Intuitively, this property is relevant to ensure that if we can handle some set of selected vertices ``to the left'' of an information vertex, then we can handle an ``inverse'' set of selected neighbors ``to the right''.
Here is the formal statement.

\begin{restatable}[Existence of Managers]{lemma}{existenceOfManager}
	\label{lem:lb:existenceOfManager}
	Let $\sigma,\rho \subseteq \NN$ be non-empty, finite or cofinite sets
	with $\rho \neq \{0\}$.
	\begin{enumerate}
		\item
		\label{lem:lb:existenceOfManager:rho}
		There is an \encoder{A} with $\abs{A} = \rhoMax+1$.

		\item
		\label{lem:lb:existenceOfManager:sigma}
		If $\sigMax \ge \rhoMax \ge 1$,
		then there is an \encoder{A} with $\abs{A} = \sigMax+1$.

		\item
		\label{lem:lb:existenceOfManager:even}
		If $(\sigma,\rho)$ is $2$-structured,
		but not $\mname$-structured for any $\mname\ge 3$,
		and $\sigMax$ and $\rhoMax \ge 1$ are even,
		then there is an \encoder{A} with
		$\abs{A}=(\sigMax+\rhoMax)/2 + 2$.

		\item
		\label{lem:lb:existenceOfManager:all}
		If $(\sigma,\rho)$ is not $\mname$-structured for any $\mname\ge 2$,
		then there is an \encoder{A} with
		$\abs{A} = \sigMax + \rhoMax + 2$.
	\end{enumerate}
	Moreover, each \encoder{A} is such
	that $A$ is closed under the inverse with respect to $\sigma,\rho$.
\end{restatable}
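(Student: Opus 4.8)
The plan is to construct, for each of the four cases, an explicit graph gadget $J$ together with a designated set of "port" vertices (the copies of $N_L(v)$ and $N_R(v)$ from the information vertex), and argue that (i) for every state $a\in A$ there is a selection of $V(J)$ whose restriction to the ports realizes $a$ and leaves every vertex of $J$ (ports included) with a feasible number of selected neighbours, and (ii) no state outside $A$ can be realized this way. Concretely, a state records a selection bit for $v$ together with a count $\ell\in\{0,\dots,\max\{\sigMax,\rhoMax\}\}$ of selected port-neighbours; so the manager must be able to absorb every admissible count $\ell$ on one side and simultaneously force the complementary count $\sigMax-\ell$ or $\rhoMax-\ell$ on the other. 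I would build $J$ from a small number of "fixing" gadgets — graphs that, when attached to a single vertex, either force it to be selected with exactly the right number of private selected neighbours, or force it unselected likewise. These are standard building blocks (clique- or star-like attachments whose internal consistency pins down $|N(x)\cap S|$); the paper has presumably developed such gadgets in the section on graphs with relations, so I would invoke those rather than rebuild them.

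First I would handle Case~\ref{lem:lb:existenceOfManager:rho}, the baseline $|A|=\rhoMax+1$: here $A$ consists of the $\rhoMax+1$ "unselected" states with left-counts $0,1,\dots,\rhoMax$, and the manager simply needs, for each such $\ell$, a subgraph supplying $\ell$ selected neighbours on the left and $\rhoMax-\ell$ on the right, while each of those supplied neighbours is itself satisfied by a private fixing gadget; the hypothesis $\rho\neq\{0\}$ guarantees $\rhoMax\ge 1$ so that "unselected with a positive number of selected neighbours" is actually a legal state for some vertex, which is what lets the construction close up. Case~\ref{lem:lb:existenceOfManager:sigma} is symmetric but uses the selected states and the condition $\sigMax\ge\rhoMax\ge 1$ so that the extra selected states (counts beyond $\rhoMax$, up to $\sigMax$) are realizable; the "closed under inverse" requirement is automatic because $\ell\mapsto\sigMax-\ell$ (resp.\ $\rho$) permutes the chosen counts. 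The core of the argument — and where I expect the real work to be — is Case~\ref{lem:lb:existenceOfManager:all}, where we want all $\sigMax+\rhoMax+2$ states: one must show that when $(\sigma,\rho)$ is not $\mname$-structured for any $\mname\ge 2$ there genuinely exist auxiliary vertices whose forced neighbour-count can be tuned freely enough to combine a selected branch (needing sum $\sigMax$) and an unselected branch (needing sum $\rhoMax$) inside one connected gadget without interference; the non-$\mname$-structured hypothesis is exactly what prevents a modular obstruction from collapsing the number of attainable sums, and I would translate "not $\mname$-structured" into a concrete number-theoretic statement (existence of two elements of $\sigma\cup\rho$ with the right difference/residue behaviour) that can be plugged into the gadget.

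For Case~\ref{lem:lb:existenceOfManager:even} — the $2$-structured-but-not-$3$-structured case with $\sigMax,\rhoMax$ even — I would exploit the parity structure: because every element of $\sigma$ (resp.\ $\rho$) has a fixed parity, only left-counts of the correct parity are ever usable, which is precisely why the count drops to $(\sigMax+\rhoMax)/2+2$ rather than $\sigMax+\rhoMax+2$, and conversely the "not $3$-structured" hypothesis supplies the freedom needed to realize all the parity-correct states. In every case the closure-under-inverse clause follows because the complementary map on the right side ($\ell\mapsto\sigMax-\ell$ or $\ell\mapsto\rhoMax-\ell$) is precisely the involution defining the inverse, and our gadgets are built symmetrically in the two sides; I would state this once at the end. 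The main obstacle, to repeat, is Case~\ref{lem:lb:existenceOfManager:all}: ruling out hidden congruence obstructions and producing a single gadget that simultaneously satisfies a selected port-neighbourhood (total demand $\sigMax$) and an unselected one (total demand $\rhoMax$) while every internal vertex stays feasible — this is where the bulk of the case analysis on the structure of $\sigma$ and $\rho$ will live, and it likely splits further into subcases according to whether $\sigma$ and/or $\rho$ are finite or cofinite and which of $\sigMax,\rhoMax$ dominates.
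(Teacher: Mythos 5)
There is a genuine gap. First, you are aiming at a partly wrong specification: \cref{def:encoder} does \emph{not} ask that no state outside $A$ be realizable (the paper explicitly notes this), but it \emph{does} ask that every $x\in A^\ell$ be witnessed by a \emph{unique} solution and that the gadget decompose into constant-size blocks $\Bl_1,\dots,\Bl_\ell,\Br_1,\dots,\Br_\ell$ with edges only between consecutive blocks (and between $\Bl_\ell$ and $\Br_\ell$). Your sketch addresses neither uniqueness (the paper obtains it via the relational constraints that a manager, being a graph with relations, is allowed to carry) nor this chain structure. Second, and more importantly, your per-port plan of ``private fixing gadgets'' that force a vertex to be selected with a prescribed number of private selected neighbours does not work at the level of generality required here. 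Such forcing gadgets (e.g.\ the $\{\sigma_{\sigMax}\}$-realizers of \cref{lem:forcingGadget}) exist only for finite $\sigma,\rho$ with $0\notin\rho$, whereas \cref{lem:lb:existenceOfManager} must cover every non-trivial finite or cofinite pair, including ones for which the decision problem is polynomial-time solvable, so unconditional forcing by a plain graph gadget is not available; and relations only prune solutions, they cannot create feasibility for the $\sigma$-/$\rho$-constraints of the vertices that are supposed to hand selected neighbours to the port.

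This is exactly the obstacle the paper's proof is organised around, and it is absent from your proposal for Cases~\ref{lem:lb:existenceOfManager:sigma} and~\ref{lem:lb:existenceOfManager:all}. The providers that supply selected neighbours to a (possibly selected) port cannot be attached independently per port: the available $\sigma$-state provider (\cref{lem:sigmaGadget}) only realizes strings over $\{\sigma_0,\sigma_1\}$ whose number of $\sigma_1$-portals lies in $\{0,2r\}$, i.e.\ selected neighbours come with a \emph{global} Hamming-weight constraint. Turning such providers into a manager requires the blueprint construction of \cref{lem:expanding:basicToExpanding}: a chain running through all $\ell$ ports that distributes $\sigma_1$-neighbours in groups of $d$, tracks a carry modulo $d$ via an explicit invariant, and appends auxiliary ports $\port_{\ell+1},\dots,\port_{\hat\ell}$ to resolve the final remainder --- none of which is a per-port argument. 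Likewise, the number-theoretic content you gesture at (``two elements with the right residue behaviour'') is indeed how the paper proceeds (the gcd/Diophantine argument in \cref{lem:gadgettechnical}, plus \cref{lem:bipartite-with-degree-sequence}), but by itself it only yields providers such as $\{\rho_0,\rho_\mname,\sigma_0\}$; the bulk of the work is upgrading these to providers including $\sigma_1$-states (\cref{lem:casem1,lem:mainm1,lem:casem2,lem:mainm2}) and then assembling the manager through the blueprint. (A smaller slip: $\rho\neq\{0\}$ does not force $\rhoMax\ge 1$ --- take $\rho=\NN$ --- and Case~\ref{lem:lb:existenceOfManager:rho} must and does hold in that situation as well.) As it stands, the proposal is a plausible outline of \emph{where} the difficulty sits, but it omits the coordination mechanism that makes the $\sigma$-state cases work, so the proof would not go through.
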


The proof of \cref{lem:lb:existenceOfManager} is given in \cref{sec:manager}.
From an $A$-manager we then obtain a lower bound
of the form $(\abs{A}-\epsilon)^{\pw}\cdot n^{\Oh(1)}$.
However, in order to obtain such a lower bound for the decision problem,
we require that $\sigma$ and $\rho$ are finite or \emph{simple cofinite}.
A set is simple cofinite if it contains, for some $k\in \NN$,
precisely the elements greater than or equal to $k$.
In the following paragraph about cofinite sets
we explain how these restrictions come into play.
We state the lower bound for the decision problem with relations.
They hold even if we restrict the problem to instances in which the arity of relations is bounded by some constant.

\begin{restatable}[Lower Bound for \srDomSetRel]{lemma}{lowerBoundForDecDomSetRel}
	\label{lem:lowerBoundWhenHavingSuitableGadget}
	Let $\sigma,\rho \subseteq \NN$ be non-empty, and finite or simple cofinite sets.
	Suppose there is an $A \subseteq \allStates$
	that is closed under the inverse with respect to $\sigma,\rho$
	such that there is an $\encoder{A}$.

	For all $\epsilon >0$,
	there is a constant $d$ such that
	\srDomSetRel on instances of size $n$ and arity at most $d$
	cannot be solved in time
	$(\abs{A} - \epsilon)^{k + \Oh(1)} \cdot n^{\Oh(1)}$,
	even if the input is given with a path decomposition of width $k$,
	unless SETH fails.
\end{restatable}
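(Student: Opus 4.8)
The plan is to realize the classical SETH-hardness template of \citet{LokshtanovMS18,CurticapeanM16} in the setting of $\srDomSetRel$, using the $\encoder{A}$ as the source of the ``high-arity'' branching power. Fix $\epsilon>0$. Suppose for contradiction that $\srDomSetRel$ can be solved in time $(|A|-\epsilon)^{k+\Oh(1)}\cdot n^{\Oh(1)}$ on instances of pathwidth $k$ and arity bounded by some constant $d$ (to be chosen as a function of $\epsilon$). Starting from a SAT instance $\varphi$ with $n$ variables and $m$ clauses, the first step is to group the variables into $\lceil n/\log_2|A|\rceil$ blocks, each of size $\lfloor\log_2|A|\rfloor$ (modulo the rounding footnote in the overview, which forces a slightly smaller number of slightly larger groups so that $|A|^{\text{(\#blocks)}}\ge 2^n$). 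Fix an injection from the $2^{|\text{block}|}$ partial assignments of each block into the state set $A\subseteq\allStates$. Build the grid-like graph of \cref{fig:lower:intro}: one column per clause of $\varphi$, one row per variable block; on the horizontal line of row $i$, between columns $j$ and $j+1$, place the information vertex $w_i^j$, equipped with $\max\{\sigMax,\rhoMax\}$ left-neighbors and the same number of right-neighbors. Attach to $N_L(w_i^j)$ and to $N_R(w_i^j)$ copies of the graph $J$ guaranteed by the $\encoder{A}$, so that precisely the states in $A$ admit a feasible extension at the neighbors, and the inverse-closure of $A$ lets the state read ``on the left'' be converted into the complementary count ``on the right''. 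At each crossing point $(i,j)$ put a relation $R_i^j$: it reads the state of the incoming information vertex (equivalently the partial assignment of block $i$), verifies consistency with the outgoing information vertex $w_i^{j-1}\!\to\!w_i^j$ in the same row, and — using the vertical edge between $R_i^j$ and $R_{i+1}^j$ carrying a single ``clause already satisfied'' bit — checks that block $i$'s assignment together with the bit from above either satisfies clause $j$ or passes the bit on; relations in the first and last row seed and finalize this bit so that clause $j$ must ultimately be satisfied. Since the arity of each $R_i^j$ is $\max\{\sigMax,\rhoMax\}+\Oh(1)$, a constant depending only on $(\sigma,\rho)$, the arity bound $d$ is met.

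The correctness step is to show that $(\sigma,\rho)$-sets of the constructed graph are in bijection (up to the freedom inside the $J$-gadgets, which does not affect existence) with satisfying assignments of $\varphi$: a satisfying assignment determines, block by block, a state in $A$ for every information vertex in that row; inverse-closure and the definition of the $R_i^j$ make those states propagate horizontally unchanged and make the ``satisfied''-bit propagate vertically correctly; the $\encoder{A}$ then supplies the feasible selection inside each $J$ so that every vertex has a $(\sigma,\rho)$-feasible number of selected neighbors. Conversely, any $(\sigma,\rho)$-set induces a well-defined state in $A$ on each information vertex (its selection status plus left-count lie in $A$ because $J$ only extends states in $A$), these states are forced to be constant along each row by the relations, hence decode to a single assignment per block, and the vertical bit-propagation forces every clause to be satisfied. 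The pathwidth step is the routine observation that cutting the graph along the information vertices of a single column (together with the $\Oh(1)$ relation/auxiliary vertices near it) separates the graph; scanning the columns left to right yields a path decomposition of width $(\#\text{rows})+\Oh(1)=n/\log_2|A|+\Oh(1)$. Finally, plugging this graph into the hypothetical algorithm solves SAT in time $(|A|-\epsilon)^{n/\log_2|A|+\Oh(1)}\cdot n^{\Oh(1)} = \bigl((|A|-\epsilon)^{1/\log_2|A|}\bigr)^{n}\cdot n^{\Oh(1)} = (2-\epsilon')^{n}\cdot n^{\Oh(1)}$ for some $\epsilon'>0$ depending on $\epsilon$ and $|A|$, contradicting SETH; the rounding loss in the block sizes is absorbed by taking $d$, and hence the block size, large enough.

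I expect the main obstacle to be the precise design and verification of the relations $R_i^j$ and the interface between them and the $J$-gadgets: one must simultaneously (i) encode the assignment only via the left-count of the information vertex, (ii) guarantee the information vertex's own $(\sigma,\rho)$-constraint is met by the complementary right-count — which is exactly where inverse-closure of $A$ is used — (iii) keep the relation arity a fixed constant so that the ``bounded arity $d$'' clause of the lemma holds, and (iv) make sure the $J$-gadget attached on each side genuinely restricts the realizable states to $A$ and no more, since otherwise spurious states could smuggle in non-assignments and break the lower-bound count $|A|$. The bookkeeping for the rounding/parity footnote — choosing the number of groups so that $|A|$ raised to that number still exceeds $2^n$ while the width stays $n/\log_2|A|+\Oh(1)$ — is standard but must be done carefully to get the clean $(2-\epsilon')^n$ bound. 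Everything else (building the grid, the pathwidth bound, the final SETH arithmetic) is bookkeeping that follows the established pattern.
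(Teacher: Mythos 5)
Your overall route is the same as the paper's: group variables, build the column-per-clause/row-per-group grid, use the \encoder{A} to supply left/right neighbors of the information vertices, put complementation-plus-clause-checking relations at the crossings, bound the pathwidth by the number of rows plus $\Oh(1)$, and do the standard SETH arithmetic. However, there is a genuine gap in your soundness (converse) direction, at the step ``these states are forced to be constant along each row by the relations.'' The relations only enforce that the right-count of $w_i^{j-1}$ complements the left-count of $w_i^j$ (i.e., $\inv T_{i,\ell}^{j-1}+T_{i,\ell}^{j}=\sigMax$ or $\rhoMax$), and the vertex's own $\sigma$- or $\rho$-constraint then yields only an inequality $T_{i,\ell}^{j-1}+\inv T_{i,\ell}^{j-1}\le \sigMax$ (finite case) or $\ge \sigMax$ (simple cofinite case). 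Combining these gives monotonicity $T_{i,\ell}^0\le T_{i,\ell}^1\le\cdots$ (or the reverse), not equality: if $\sigma$ or $\rho$ contains more than one element, the encoded state can drift from column to column, so a solution need not decode to a single consistent assignment and your bijection argument breaks.

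The paper closes exactly this hole with a mechanism your proposal lacks: each row carries $g$ information vertices per column gap (not one), the admissible encodings are restricted by pigeonhole to a subset of $A^g$ in which every codeword has the same total weight $w$ (and this subset is replaced by its inverse when $\sigma$ or $\rho$ is simple cofinite, which is the real place where inverse-closure of $A$ is needed), and the group size is set to $q=\floor{g\log|A|-\log(g\cdot\max A+1)}$ so the polynomial loss from fixing the weight is absorbed. Then per-coordinate monotonicity of the (suitably inverted) counts together with the fixed row-sum at every column forces $T_{i,\ell}^0=\cdots=T_{i,\ell}^m$, which is what actually yields a well-defined assignment. Note also that with a single information vertex per row, as in your sketch, the fixed-weight trick is unavailable without discarding a constant fraction of $A$, which would ruin the tight base $|A|$; so this is not merely the ``rounding footnote'' you defer to, but a necessary structural ingredient of the construction and its correctness proof.
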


\cref{lem:lowerBoundWhenHavingSuitableGadget} is proved in \cref{sec:high-level:decision}.
With an eye to \cref{def:intro:baseOfRunningTime},
note that \cref{lem:lb:existenceOfManager}
together with \cref{lem:lowerBoundWhenHavingSuitableGadget}
ensure that there is no
$(c_{\sigma,\rho}-\epsilon)^\pw\cdot n^{\Oh(1)}$ time algorithm
for \srDomSetRel on graphs of size $n$ with pathwidth $\pw$,
as long as $\sigma$ and $\rho$ are finite or simple cofinite sets
with $\rho\neq \{0\}$ (and the SETH holds).
In more detail,
we obtain lower bounds of the form $(c-\epsilon)^\pw\cdot n^{\Oh(1)}$,
from weakest to strongest, as follows:
\begin{itemize}
	\item Using \cref{lem:lb:existenceOfManager}, \cref{lem:lb:existenceOfManager:rho,lem:lb:existenceOfManager:sigma}, the lower bound follows for $c=\rhoMax+1$ and $c=\sigMax+1$ (if $\sigMax\ge \rhoMax$), and consequently for $c=\max\{\rhoMax, \sigMax\}+1$.
	\item Using \cref{lem:lb:existenceOfManager}, \cref{lem:lb:existenceOfManager:even}, if $(\sigma,\rho)$ is $2$-structured,
	but not $\mname$-structured for any $\mname\ge 3$,
	and $\sigMax=\rhoMax$ is even, the lower bound follows for $c=(\sigMax+\rhoMax)/2 + 2$. Under these conditions this means that $c=\max\{\rhoMax, \sigMax\}+2$.
	\item From \cref{lem:lb:existenceOfManager}, \cref{lem:lb:existenceOfManager:all}, if $(\sigma,\rho)$ is not $\mname$-structured for any $\mname\ge 2$,
	the lower bound follows for $c = \sigMax + \rhoMax + 2$.
\end{itemize}

\subparagraph*{Handling Cofinite Sets.}
As mentioned before, if $\sigma$ and $\rho$ are single-element sets,
an information vertex really transfers information:
its selection status and the number of selected neighbors on the left
uniquely determine the number of selected neighbors on the right.
We can interpret a selected vertex as a requirement ``$=\sigMax$'' on the number of selected neighbors.
With some changes, the argument can be made to work with arbitrary finite $\sigma$:
such a set enforces a requirement that is \emph{more restrictive} than ``$\le \sigMax$''.
A similar argument works if $\sigma$ is \emph{simple cofinite}, by which we mean that $\sigma$ contains all elements greater than or equal to some $k\in \NN$, but no elements that are smaller than $k$.
This means that $\sigma=\{\sigMax,\sigMax +1,\ldots\}$,
which enforces \emph{exactly} ``$\ge \sigMax$'' when applied to an information vertex.
However, a cofinite set that is not simple enforces a requirement
that is \emph{less restrictive} than ``$\ge \sigMax$'',
which is not useful for our purposes.

We handle general cofinite sets by presenting a reduction from simple cofinite sets. In
particular, suppose that $\sigma$ is cofinite; then we reduce from the case where
$\sigma'=\{\sigMax, \sigMax+1,\ldots\}$. Crucially, this reduction uses interpolation
techniques, and therefore, works only for the problem of \emph{counting} $(\sigma, \rho)$-sets. Given a graph $G$, the reduction creates a graph $G'$ by attaching to each vertex $v$ a certain gadget with the following properties:
\begin{itemize}
  \item If $v$ is unselected, then the gadget has a unique extension,
  which does not give any additional selected neighbors to $v$.
  \item If $v$ is selected, then every extension of the gadget
  provides at most $\sigMax$ new selected neighbors to $v$.
  \item If $v$ is selected, then the gadget has $d_i$ extensions
  that provide $i$ new selected neighbors to $v$,
  where $i \in \fragment{0}{\sigMax}$.
  \end{itemize}
Given the numbers $d_0$, $\dots$, $d_{\sigMax}$,
it is not very difficult to use the relations to construct a gadget
with exactly these properties and number of extensions.

Let us consider $\sigma=\{1,3,4,6,7,\dots\}$ as an illustrative example,
and suppose for simplicity that we attach such a gadget only to a single vertex $v$.
Note that we have $\sigMax=6$.
Consider now partial solutions of the original graph $G$
where every vertex satisfies the requirements, except potentially $v$.
Suppose that $v$ is selected and has $0$ selected neighbors in such a partial solution;
it needs $1$, $3$, $4$, or at least $6$ further selected neighbors
to satisfy the $\sigma$-constraint.
Therefore, the gadget has exactly $d_1+d_3+d_4+d_6$ extensions
where the degree of $v$ becomes a member of $\sigma$.
Similarly, if $v$ already has one selected neighbor in the partial solution of $G$,
then it needs $0$, $2$, $3$, or \emph{at least} $5$ further selected neighbors,
and thus, the gadget has exactly $d_0+d_2+d_3+d_5+d_6$ extensions
where $v$ satisfies the degree condition, and so on.

We would like to build a gadget
where the integers $d_i$ are chosen in such a way
that the gadget has exactly 1 (or some other constant) extension
if $v$ already has at least 6 selected neighbors,
and 0 extensions if $v$ has less than 6 selected neighbors.
If we can build such a gadget,
then we can effectively force that $v$ has at least 6 selected neighbors.
Based on the discussion above, the $d_i$'s have to be chosen such that
they satisfy the following system of equations:

  \[
    \begin{pmatrix}
      0  & 1 & 0 & 1 & 1 & 0 & 1\\
       1 & 0 & 1 & 1 & 0 & 1 & 1\\
       0 & 1 & 1 & 0 & 1 & 1 & 1\\
       1 & 1 & 0 & 1 & 1 & 1 & 1\\
       1 & 0 & 1 & 1 & 1 & 1 & 1\\
       0 & 1 & 1 & 1 & 1 & 1 & 1\\
       1 & 1 & 1 & 1 & 1 & 1 & 1\\
    \end{pmatrix}
\cdot
    \begin{pmatrix}
      d_0 \\ d_1 \\d_2 \\d_3\\d_4\\d_5\\d_6
    \end{pmatrix}
    =
    \begin{pmatrix}
      0 \\ 0 \\0 \\0\\0\\0\\1
    \end{pmatrix}
  \]
  One can observe that the coefficient matrix has a form
  that guarantees that it is non-singular:
  every element on or below the codiagonal is 1,
  while every element one row above the codiagonal is 0.
  Therefore, the system of equations has a solution,
  so we could construct the appropriate gadget.
  The catch is, however, that some of the $d_i$'s could be negative,
  making it impossible to have a gadget with exactly that many extensions.
  We can solve this issue in the following way:
  if $d_i$ is negative,
  then we design the gadget to have $2^x\cdot |d_i|$ extensions,
  where $x$ is some parameter of the construction.
  If we attach such gadgets to every vertex,
  then it can be observed that the number of solutions
  is a polynomial $P(y)$ where we set $y=2^x$.
  Varying the values for $x$,
  we can use interpolation techniques to recover the coefficients
  of this polynomial $P$, and thus, the entire polynomial $P(y)$.
  Then, we evaluate $P$ at $y=-1$,
  which then simulates exactly $-1 \cdot \abs{d_i} = d_i$ extensions.

	So, we have sketched how to use interpolation to extend the result from \cref{lem:lowerBoundWhenHavingSuitableGadget} to arbitrary cofinite sets for the counting problem. We obtain the following result.

	\begin{restatable}[Lower Bound for \srCountDomSetRel]{lemma}{lowerBoundForCountDomSetRel}
		\label{lem:count:lowerBoundWhenHavingSuitableGadget}
		Let $\sigma,\rho \subseteq \NN$ be non-empty,
		and finite or cofinite sets.
		Suppose there is an $A \subseteq \allStates$
		that is closed under the inverse with respect to $\sigma,\rho$
		such that there is an $\encoder{A}$.

		For all $\epsilon > 0$,
		there is a constant $d$ such that
		\srCountDomSetRel
		on instances of size $n$ and arity at most $d$
		cannot be solved in time
		$(\abs{A} - \epsilon)^{k + \Oh(1)} \cdot n^{\Oh(1)}$,
		even if the input is given with a path decomposition of width $k$,
		unless \#SETH fails.
	\end{restatable}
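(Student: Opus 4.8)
The plan is to reduce from the decision-style lower bound for \srDomSetRel of \cref{lem:lowerBoundWhenHavingSuitableGadget}, but applied with $\sigma$ and $\rho$ replaced by their ``simplifications'': if $\sigma$ is cofinite, let $\sigma' \deff \{\sigMax, \sigMax+1, \ldots\}$, which is simple cofinite with the same value of $\sigMax$ (and similarly for $\rho'$); if $\sigma$ is finite we set $\sigma'=\sigma$. One first checks that the hypothesis on $A$ transfers: since the quantities $\sigMax,\rhoMax$, the $\mname$-structured classification, and hence the definition of an \encoder{A} and closure under the inverse are all inherited by $(\sigma',\rho')$ (indeed $\sigma',\rho'$ are finite or simple cofinite, and the manager construction only refers to $\sigMax,\rhoMax$ and a set of states $A$), the $A$-manager we are given is also an $A$-manager for $(\sigma',\rho')$. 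Thus \cref{lem:lowerBoundWhenHavingSuitableGadget} gives, for every $\eps>0$, a constant $d'$ and a family of \srDomSetRel$'$ instances of arity at most $d'$, pathwidth $k$, size $n$, for which no $(\abs{A}-\eps)^{k+\Oh(1)}\cdot n^{\Oh(1)}$-time algorithm exists under SETH. It suffices to turn each such instance $G'$ (for $(\sigma',\rho')$) into an instance $G$ of \srCountDomSetRel for $(\sigma,\rho)$, of pathwidth $k+\Oh(1)$ and arity $\Oh(1)$, together with a way to read off from the \emph{count} of $(\sigma,\rho)$-sets of $G$ whether $G'$ has a $(\sigma',\rho')$-set.

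The reduction itself is the interpolation gadget sketched in the overview. For each vertex $v$ of $G'$ we attach a ``degree-relaxation'' gadget built from relations (using the primitives from \cref{sec:graphsWithRelations}) with the three stated properties: if $v$ is unselected the gadget contributes a unique extension giving $v$ no extra selected neighbors; if $v$ is selected every extension gives $v$ at most $\sigMax$ new selected neighbors, and there are exactly $d_i$ extensions giving exactly $i$ new ones, for $i\in\fragment{0}{\sigMax}$ (and symmetrically a $\rho$-gadget handling the $\rho$-constraint for unselected vertices when $\rho$ is cofinite; the two are attached via fresh private vertices so their pathwidth and arity cost is a constant per vertex, and the $\sigMax+1$ many $d_i$'s are themselves constants depending only on $\sigma$). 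Solving the Vandermonde-like system displayed in the overview — whose coefficient matrix is non-singular precisely because every entry on or below the codiagonal is $1$ and every entry one row above the codiagonal is $0$ — yields integer target multiplicities $d_i$ that make the \emph{local} effect of the gadget at $v$ exactly the $\sigma$-constraint whenever the $d_i$ are non-negative. Since some $d_i$ may be negative, we instead build the gadget with $2^x\cdot\abs{d_i}$ extensions for the negative coordinates (and $d_i$ for the non-negative ones), where $x$ is a parameter. Then the number of $(\sigma,\rho)$-sets of the resulting graph $G_x$ is a fixed polynomial $P$ in $y\deff 2^x$, of degree at most $|V(G')|$ (one factor of $y$ per vertex whose gadget is in the ``scaled'' regime), so evaluating the count at $|V(G')|+1$ values of $x$ and Lagrange-interpolating recovers $P$; finally $P(-1)$ equals the number of $(\sigma',\rho')$-sets of $G'$, which is nonzero iff $G'$ is a yes-instance. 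All $|V(G')|+1$ auxiliary graphs have size polynomial in $n$ and pathwidth $k+\Oh(1)$, and a hypothetical $(\abs{A}-\eps)^{k+\Oh(1)}\cdot n^{\Oh(1)}$-time counting algorithm for \srCountDomSetRel would decide the original \srDomSetRel$'$ instances within the same bound, contradicting SETH and hence \#SETH.

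I would carry the steps out in this order: (1) define $(\sigma',\rho')$ and verify the $A$-manager hypothesis is inherited, so \cref{lem:lowerBoundWhenHavingSuitableGadget} applies to $(\sigma',\rho')$; (2) construct the per-vertex interpolation gadget from relations, proving the three extension-counting properties and bounding its pathwidth and arity contribution by a constant; (3) set up and solve the linear system for the $d_i$, invoking the codiagonal argument for non-singularity; (4) introduce the scaling parameter $x$, argue the total count is a polynomial $P(2^x)$ of controlled degree, and run polynomial interpolation plus the evaluation at $y=-1$; (5) assemble the running-time contradiction. The main obstacle I expect is step (2) together with the bookkeeping in step (4): one must realize arbitrary prescribed extension-counts $2^x\abs{d_i}$ (equivalently, arbitrary nonnegative integers) using only bounded-arity relations and only a constant increase in pathwidth, \emph{and} make sure that attaching the gadget to every vertex simultaneously keeps the overall count a \emph{single} polynomial in $y$ with a degree bound that does not depend on $x$ — so the gadget sizes (and the number of ``scaled'' vertices) must be chosen uniformly, and the interpolation must use enough sample points. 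The finite-$\sigma$ (and finite-$\rho$) case is subsumed by taking $\sigma'=\sigma$, where the gadget is unnecessary and the reduction is the identity, so no interpolation is needed there.
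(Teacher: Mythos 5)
Your reduction idea (simulate the cofinite pair $(\sigma,\rho)$ on top of the simple-cofinite pair $(\sigma',\rho')$ via per-vertex gadgets with prescribed extension counts $d_i$, handle negative $d_i$ by $2^x\abs{d_i}$ extensions, interpolate the polynomial $P(y)$ in $y=2^x$, and evaluate at $y=-1$) is exactly the cancellation/interpolation technique the paper uses; the paper just formalizes it differently, namely by first passing to \emph{relation-weighted} and \emph{vertex-weighted} intermediate problems (\cref{lem:count:makeSigmaCofinite,lem:count:makeRhoCofinite,lem:count:relationWeightedToVertexWeighted,lem:count:vertexWeightedToUnweighted}) so that the possibly negative solutions of the triangular system become relation weights, which are then eliminated by interpolation. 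However, there is a genuine gap in how you assemble the contradiction. \Cref{lem:lowerBoundWhenHavingSuitableGadget} is a \emph{decision} lower bound under SETH, and your chain shows only that a fast algorithm for \srCountDomSetRel would refute SETH; you then write ``contradicting SETH and hence \#SETH,'' but the implication runs the other way: SETH implies \#SETH (counting is at least as hard as deciding), so refuting SETH does \emph{not} refute \#SETH. What you have proved as written is the strictly weaker statement ``\dots unless SETH fails,'' whereas the lemma claims hardness under the weaker hypothesis \#SETH. The missing ingredient is a counting-hard starting point: the paper observes that the SAT-to-\srDomSetRel construction behind \cref{lem:lowerBoundWhenHavingSuitableGadget} is parsimonious (\cref{obs:lower:parsimoniousAndFull}), so reducing from \#SAT yields the counting lower bound for the finite/simple-cofinite pair under \#SETH (\cref{lem:count:intermediateLowerBoundWhenHavingSuitableGadget}), and only then transfers it to $(\sigma,\rho)$ by count-preserving, pathwidth- and arity-preserving reductions. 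Your own construction does recover the exact count $P(-1)$ of $(\sigma',\rho')$-sets, so the repair is to start from that \#SETH-based counting bound instead of the decision bound — but neither parsimony nor \#SAT appears anywhere in your plan, and without it the stated lemma does not follow. Note also that this affects your ``easy'' finite case: even with $\sigma'=\sigma$ and the identity reduction, quoting only the decision lemma gives a SETH-based, not a \#SETH-based, bound.

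A smaller point to tighten: the given \encoder{A} for $(\sigma,\rho)$ is not literally an \encoder{A} for $(\sigma',\rho')$, because the unique managed $(\sigma,\rho)$-set $S_x$ of \cref{def:encoder} may give internal vertices fewer than $\sigMax$ (respectively $\rhoMax$) selected neighbors and hence need not be a $(\sigma',\rho')$-set; existence of managed solutions can fail after shrinking the sets. The clean fix is the one your parenthetical hints at: re-invoke \cref{lem:lb:existenceOfManager} for $(\sigma',\rho')$ (which has the same $\sigMax$, $\rhoMax$ and the same structuredness classification, hence a manager with the same $\abs{A}$), rather than claiming the given manager transfers verbatim.
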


	\Cref{lem:count:lowerBoundWhenHavingSuitableGadget}
  is proved in \cref{sec:high-level:counting}.
	By the same arguments that we gave for the decision problem,
  \cref{lem:lb:existenceOfManager} together with
  \cref{lem:count:lowerBoundWhenHavingSuitableGadget} ensure that
  there is no $(c_{\sigma,\rho}-\epsilon)^\pw\cdot n^{\Oh(1)}$ time algorithm
  for \srCountDomSetRel on graphs of size $n$ with pathwidth $\pw$,
  as long as $\sigma$ and $\rho$ are finite
  or (not necessarily simple) cofinite sets with $\rho\neq \{0\}$
  (and the \#SETH holds).

\subsection{Realizing Relations}
In the previous step we explained how to obtain lower bounds under SETH for the intermediate problem $\srDomSetRel$ and its counting version. Our next goal is to show that these lower bounds transfer to the original problem versions (without the relations).
Following previous approaches \cite{CurticapeanM16,MarxSS21,MarxSS22}, we first show how to model arbitrary relations using only ``Hamming weight one'' (\HWset{1}) relations.
To this end, we show a parsimonious reduction from \srDomSetRel
to \srDomSetRel[{\HWset{1}}], which denotes the restriction of \srDomSetRel
to instances that use only \HWset{1} relations.
It is important that the reduction is parsimonious
since we use it both in the decision setting and in the counting setting.

We then aim to express \HWset{1} relations using graph gadgets. Suppose there is some set of vertices $U$ whose desired selection status is subject to some relation $R$. Intuitively, we aim to attach some graph $J$ to the vertices in $U$ such that the $(\sigma,\rho)$-constraints of vertices in $J$ require a selection of vertices in $U$ that satisfies $R$.
If this works out, we say that $J$ is a \emph{realization of \(R\)}.
Crucially, we wish to be able to add such a realization
without invalidating the $\sigma$- and $\rho$-constraints of the vertices in $U$ with respect to the original graph.
Hence, we require that a realization does not add any \emph{selected} neighbors to the vertices in $U$.

\subparagraph*{Relations for the Decision Problem.}
For the decision problem, the \HWset{1} relations can be realized by graphs whose size is bounded by a function in the arity $d$, and in the values of $\rhoMax$ and $\sigMax$.
However, we obtain such realizations only for finite $\sigma$ and $\rho$. As we have seen, this is justified by the fact that, for cofinite sets, algorithmic improvements are possible (see the accompanying paper~\cite{FockeMMNSSW23i}).
With the realizations at hand, we give a pathwidth-preserving reduction
(that is, a reduction in which the pathwidth
increases only by an additive constant)
from the decision problem with relations to the one without,
as long as there is some constant upper bound on the arity of relations. Note that this
reduction holds only if $0\notin \rho$, that is, if \srDomSet is non-trivial.

\begin{restatable}{theorem}{removingRelationsForDecVersion}
	\label{lem:lower:dec:removingRelations}
	Let $\sigma,\rho$ denote finite non-empty sets with $0 \notin \rho$.
	For all constants $d$, there is a pathwidth-preserving reduction
	from \srDomSetRel on instances with arity at most $d$
	to \srDomSet.
\end{restatable}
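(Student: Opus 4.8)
The goal is a pathwidth-preserving reduction from \srDomSetRel restricted to arity-\(d\) instances to plain \srDomSet, assuming \(\sigma,\rho\) are finite, nonempty, and \(0\notin\rho\). The strategy proceeds in two stages, matching the technical overview: first reduce to instances using only \HWset{1} relations via a parsimonious, pathwidth-preserving reduction, and then replace each \HWset{1} relation by an explicit graph gadget (a \emph{realization}) whose \((\sigma,\rho)\)-constraints force exactly the Hamming-weight-one selection pattern on the scope, without adding any selected neighbor to the scope vertices.

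\textbf{Step 1: From arbitrary bounded-arity relations to \HWset{1}.}
I would invoke (or reprove, following \cite{CurticapeanM16,MarxSS21,MarxSS22}) the standard combinatorial trick: a relation \(R\) on a scope \(D\) with \(|D|\le d\) has at most \(2^d\) satisfying assignments; introduce a fresh set of \(|R|\) ``selector'' vertices, one per satisfying assignment, tie them together with a single \HWset{1} relation, and for each selector vertex connect it (through small auxiliary \HWset{1}-relation gadgets that enforce equality/implication of selection status) to the vertices of \(D\) according to the corresponding assignment. Because \(d\) is constant, this blows up the instance size only polynomially and increases the pathwidth only by an additive constant: the new vertices attached at a single bag position of the path decomposition can be threaded through locally. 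Parsimony is immediate because each original satisfying assignment corresponds to exactly one choice of selector vertex. This yields a parsimonious pathwidth-preserving reduction to \srDomSetRel[{\HWset{1}}].

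\textbf{Step 2: Realizing \HWset{1} relations by graph gadgets.}
This is where finiteness of \(\sigma,\rho\) and \(0\notin\rho\) are used, and I expect it to be the main obstacle. For a scope \(U=\{u_1,\dots,u_a\}\), I want a graph \(J\) attached to \(U\) such that: (i) \(J\) adds no selected neighbor to any \(u_i\) in any feasible selection, and (ii) the \((\sigma,\rho)\)-constraints of the internal vertices of \(J\) are satisfiable precisely when exactly one \(u_i\) is selected. The natural building block is a small ``portal'' gadget that, for each \(u_i\), contributes a controllable number of \emph{unselected} neighbors to \(u_i\) but whose own feasibility depends on whether \(u_i\) is selected; chaining these through a central vertex whose \(\rho\)-constraint (using \(0\notin\rho\), so an unselected central vertex needs at least one selected neighbor, and a suitable \(\sigma\)-value caps it) enforces that exactly one \(u_i\) is selected. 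Since \(\sigma,\rho\) are finite, \(\sigMax\) and \(\rhoMax\) are genuine hard caps, so the counting of selected neighbors is exact rather than merely upper-bounded, and the gadget size is bounded by a function of \(a\le d\), \(\sigMax\), and \(\rhoMax\) — hence constant. I would verify the ``no selected neighbor added to \(U\)'' property by construction, routing all edges incident to \(u_i\) inside \(J\) to vertices that are forced unselected. One must also handle degenerate cases (e.g.\ \(\sigma=\{0\}\), or small \(\rhoMax\)) by a short case analysis, possibly using auxiliary pendant structures to adjust degrees.

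\textbf{Assembling the reduction and checking pathwidth.}
Given the \HWset{1}-relation instance with a path decomposition of width \(k\), I replace each relation by its realization \(J\). Each \(J\) has constant size and interacts with the rest of the graph only through its scope \(U\), all of whose vertices lie together in some bag; I insert the internal vertices of \(J\) into a local stretch of the decomposition around that bag, raising the width by at most \(|V(J)|=O_d(1)\). Correctness (and parsimony, needed so the same reduction serves the decision setting here and is reusable for counting elsewhere) follows from the two gadget properties: a \((\sigma,\rho)\)-set of the new graph restricts to a \((\sigma,\rho)\)-set of the old one satisfying all relations, and conversely each such old solution extends uniquely (or in a fixed number of ways, which can be normalized to uniquely by choosing rigid gadgets) through every \(J\). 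The non-triviality hypothesis \(0\notin\rho\) is exactly what makes the central-vertex enforcement work and is flagged as necessary in the statement. Composing Step 1 and Step 2 gives the claimed pathwidth-preserving reduction from \srDomSetRel on arity-\(\le d\) instances to \srDomSet.
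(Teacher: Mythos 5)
Your two-stage plan is exactly the paper's route: Step 1 is the Curticapean--Marx-style parsimonious, pathwidth-preserving reduction to instances with only \HWset{1} relations (in the paper, arbitrary relations are first realized with \HWset{1} and \EQset relations plus small parsimonious $\{\sigma_s,\rho_r\}$-providers, and the \EQset relations are then themselves simulated by \HWeq[2]{1} relations; see \cref{cor:realizing:parsiArbitraryToHWone}), and Step 2 realizes each \HWset{1} relation by a constant-size graph gadget that adds only unselected neighbors to the scope (\cref{lem:HW1}), after which the gadgets are spliced into duplicated bags of the path decomposition exactly as you describe.

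The one substantive gap is inside Step 2: you correctly identify it as the main obstacle but do not actually construct the ingredient everything hinges on, namely a gadget that forces a vertex to be selected \emph{and saturated} (a $\{\sigma_{\sigMax}\}$-realizer, \cref{lem:forcingGadget}). This is what lets the paper make the central vertices of the \HWeq{1}-realizer provably unselected (they are adjacent to forced-selected vertices that already have $\sigMax$ selected neighbors, so they cannot be selected) and simultaneously pre-load an unselected central vertex with a prescribed number of selected neighbors. Its existence for all finite non-empty $\sigma,\rho$ with $0\notin\rho$ requires a genuine case analysis (the paper distinguishes $0<\sigMin<\rhoMin$, $\sigMin\ge\rhoMin\ge1$, and $\sigMin=0,\rhoMin\ge1$), and your ``routing edges to vertices that are forced unselected'' plus ``pendant structures'' does not yet establish it. A smaller imprecision: the ``at most one selected scope vertex'' cap does not come from ``a suitable $\sigma$-value'' but from finiteness of $\rho$ --- the paper attaches a second unselected central vertex that is pre-loaded with $\rhoMax-1$ selected neighbors, so at most one further neighbor may be selected; the $\ge 1$ side uses $0\notin\rho$ (or, in the general case, an $r\in\rho$ with $r-1\notin\rho$), and the construction again splits into cases ($\rho$ with a gap, $\rho=\numb{r}$, $\rho=\{1\}$). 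With \cref{lem:forcingGadget} in hand, the rest of your argument (no selected neighbors added to the scope, constant gadget size depending only on $d,\sigMax,\rhoMax$, local insertion into the decomposition) goes through as in the paper.
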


\subparagraph*{Relations for the Counting Problem.}
For the counting version, the situation is more complicated.
Indeed, the proof of \cref{lem:lower:count:removingRelations} is very technical and
splits into a number of cases and sequences of reductions.
Given that there are non-trivial pairs $(\sigma, \rho)$ for which the decision problem can be solved in polynomial time, it is not plausible
that for all of these pairs relations can be realized directly by attaching some graph gadget (as this would give lower bounds also for the decision problem).
Instead, for the counting problem, in many places we heavily rely on interpolation
to isolate the number of selections we wish to enforce.
An overview of the corresponding reductions is given in
\cref{fig:count:removingRelationsSimplified}.

\begin{restatable}{theorem}{thmCountRemovingRelations}
	\label{lem:lower:count:removingRelations}
	\label{thm:RelfromDS}
	\ifnum \thesection<8
	Let $\sigma,\rho$ be finite or cofinite, non-empty sets such that $(\sigma, \rho)$ is non-trivial.
	\else
	Let $(\sigma,\rho)\in \allSetsne^2$ be non-trivial.
	\fi
	For all constants $d$, there is a pathwidth-preserving reduction
	from \srCountDomSetRel on instances with arity at most $d$
	to \srCountDomSet.
\end{restatable}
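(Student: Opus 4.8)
The plan is to remove the relations in two stages, following the template of \cite{CurticapeanM16,MarxSS21,MarxSS22}: first reduce to instances whose relations all have Hamming weight one, and then replace each such \HWset{1} relation by a graph gadget. Because the arity bound $d$ and the pair $(\sigma,\rho)$ are fixed, every gadget built along the way will have size bounded by a constant, so at each step the pathwidth grows by only an additive constant and the whole thing runs in polynomial time. For the first stage I invoke the parsimonious, pathwidth-preserving reduction from \srCountDomSetRel to \srCountDomSetRel[{\HWset{1}}] (a now-standard construction, cf.\ \cite{CurticapeanM16,MarxSS21,MarxSS22}): a relation of arity $r\le d$ is simulated by introducing one ``chooser'' vertex per satisfying tuple, tying the choosers together with a single \HWset{1} constraint, and synchronizing the selected chooser with the selection status it prescribes on the scope via additional \HWset{1} (and equality, themselves \HWset{1}-expressible) constraints. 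A solution of the new instance is obtained from a solution of the old one by the unique consistent chooser setting, so the reduction is parsimonious; and since the $2^{O(d)}$ new vertices around one relation need to be live in only a bounded window of the path decomposition, the width increases by $2^{O(d)}=O(1)$.

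The second stage is the substantial one: given a scope $U$ carrying an \HWset{1} relation, I need a \emph{realization}, i.e.\ a graph $J$ attached to $U$ such that (i) $J$ contributes no \emph{selected} neighbor to any vertex of $U$ — arranged by connecting $J$ to $U$ only through vertices that are unselected in every solution — so that the $(\sigma,\rho)$-constraints the vertices of $U$ inherit from the rest of the instance are untouched, and (ii) a given selection of $U$ extends to a $(\sigma,\rho)$-set of $J$ in exactly $1$ way if exactly one vertex of $U$ is selected and in $0$ ways otherwise. When $\sigma$ and $\rho$ are both finite such a realization exists outright; indeed this is precisely \cref{lem:lower:dec:removingRelations} specialized to \HWset{1}, and the theorem follows in that case. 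When $\sigma$ or $\rho$ is cofinite a realization meeting (ii) exactly cannot exist — otherwise we would obtain a lower bound for the \emph{decision} problem, contradicting the polynomial-time algorithms of \cite{FockeMMNSSW23i} — so instead I will use a family of gadgets carrying a multiplicity parameter $y$, designed so that after attaching one gadget per \HWset{1} relation the number of $(\sigma,\rho)$-sets of the resulting graph is a polynomial $P(y)$ whose degree is at most a constant times the number of relations. Running the construction for polynomially many integer values of $y$ and interpolating recovers $P$, and the desired count is $P$ evaluated at a distinguished point (the ``$y=-1$'' trick already used for \cref{lem:count:lowerBoundWhenHavingSuitableGadget}); the obstruction that a genuine gadget can realize only a non-negative number of extensions is side-stepped by replacing a target multiplicity $m<0$ with $y\cdot\abs{m}$. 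All instances in the family share the same underlying graph up to gadget multiplicities, so each has pathwidth $\pw+O(1)$, and the composite is a pathwidth-preserving reduction — necessarily a Turing reduction once interpolation is used.

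The proof is therefore a case analysis on the structure of $(\sigma,\rho)$ — finite versus cofinite $\sigma$, finite versus cofinite $\rho$, and the membership of small integers such as $0$ and $1$ in $\sigma$ and $\rho$ (the case $0\in\rho$ being the most delicate, since then an unselected vertex is automatically satisfied and ``forcing'' a configuration is harder) — organized as the chain of reductions summarized in \cref{fig:count:removingRelationsSimplified}; in each leaf I exhibit the appropriate (possibly $y$-parametrized) realization of \HWset{1}, verify property (i), compute the extension counts, and check that interpolation recovers the count. I expect the construction and analysis of the cofinite-case gadgets to be the main obstacle: one must simultaneously (a) keep the degree of $P(y)$ bounded by a constant so that polynomially many evaluations suffice, (b) guarantee property (i) while the gadget can still ``sense'' the selection status of the scope vertices, and (c) confirm that evaluating at the distinguished point truly isolates the \HWset{1}-constrained count. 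By contrast, the pathwidth estimates, the parsimony of the first stage, and the finite case are comparatively routine.
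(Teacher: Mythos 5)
Your first stage matches the paper: \cref{sec:RelToHW1} gives exactly this parsimonious, pathwidth-preserving reduction to \srCountDomSetRel[{\HWset{1}}]. The gap is in your second stage, and it already appears in the finite case. The decision-version realizations you invoke (\cref{lem:lower:dec:removingRelations}, built on \cref{lem:forcingGadget,lem:realizing:hwExactly1}) require $0\notin\rho$; for finite $\rho$ with $0\in\rho$ and $\rho\neq\{0\}$ (say $\rho=\{0,1\}$) they do not exist and, by your own plausibility argument, \emph{cannot} exist, since with $0\in\rho$ the decision problem is trivial while \srDomSetRel is SETH-hard. So ``the finite case is comparatively routine'' is false: precisely these finite pairs already force the counting-specific machinery. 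In the paper this is the content of \cref{sec:rhoNotAll,sec:rhoiseverything}: one first shows that \HWset{1} can be simulated by $(\emptyset,\{1\})$-, $(\emptyset,\{0,1\})$-, $(\emptyset,\ZZ_{\ge 1})$- or $(\{0\},\NN)$-vertices (\cref{lem:I,lem:IIrelversion,lem:II,lem:IIIrelversion,lem:III,lem:IV}), and then simulates forced-selected and forced-unselected vertices via the winner/strong-candidate gadgets and several layered interpolations (\cref{lem:candidate,lem:forcesingle00,lem:forceboth,lem:sigfin2,lem:sigfin1}); your proposal contains no substitute for this step.

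On the cofinite side, the single-parameter ``$P(y)$, evaluate at $y=-1$'' device you import is the one used in \cref{sec:high-level:counting} to pass from simple cofinite to general cofinite sets \emph{in the problem with relations}, where weighted relations are available to build a gadget with prescribed extension counts; once the goal is to eliminate relations, that tool is not at hand, and it does not address the real obstruction. In the key case $\rho=\NN$ with $\sigma$ cofinite (essentially $\sigma=\ZZ_{\ge1}$), a scope vertex can be ``sensed'' only by giving it a selected neighbor, after which it is satisfied regardless of anything else and behaves like an $(\NN,\NN)$-vertex, so one cannot simultaneously read off its selection status and keep it acting as a $(\sigma,\rho)$-vertex toward the original graph---which is exactly the combination of your properties (i) and (ii) that a $y$-parametrized realization would need. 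The paper escapes this only by changing the semantics of relations (the \srCountDomSetRel[\altRel] variant, in which all scope vertices become $(\NN,\NN)$-vertices) and running the chain \cref{lem:RelfromweightedRelS,lem:vertexWeightedRelS,lem:relWeightedRelS,lem:RelSfromHWS,lem:HWeqS,lem:HWgeS,lem:sigcof2,lem:sigcof1}; your case analysis, organized around finite versus cofinite rather than the paper's $\rho\neq\NN$ versus $\rho=\NN$ split, never confronts this case. The degree and pathwidth bookkeeping you discuss is the easy part; what is missing is the existence of the gadgets themselves.
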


Note that
we aim to establish lower bounds also for cofinite $\sigma$ or $\rho$.
Moreover, recall that there are also fewer finite pairs $(\sigma,\rho)$
for which the counting version is polynomial-time solvable compared to the decision version.
So, also for finite sets we have to cover additional cases when showing lower bounds for the counting version.
In particular, all cases with $0 \in \rho$ are trivial for the decision version,
but are non-trivial for the counting version if $\rho \neq \{0\}$.
It turns out that realizing \HWset{1} relations is much more challenging for this larger class of sets $\sigma$, $\rho$ that we consider for the counting problem, and we use additional techniques to work around this.

We split our approach into the following three cases depending on $\sigma$ and $\rho$:
\begin{enumerate}[label = (\Alph*)]
 \item\label{item:case-1} $\rho\neq \NN$,
 \item\label{item:case-2} $\rho=\NN$ and $\sigma$ is finite, and
 \item\label{item:case-3} $\rho=\NN$ and $\sigma$ is cofinite.
\end{enumerate}
The treatment of the first two cases will be similar, whereas the third case is substantially different.
Let us first talk about the first two cases.

To reduce \srCountDomSetRel[{\HWset{1}}] to \srCountDomSet,
it is convenient to introduce a natural generalization of the problem
for which it is not necessary that every vertex has the same constraint $(\sigma,\rho)$.
Instead, we have an additional set $\CP$ of pairs
and the input contains a mapping specifying for every vertex
if it is constrained by $(\sigma,\rho)$ or by some other pair in $\CP$.
For a fixed set $\CP$ of pairs,
we write \srCountDomSetRel[\CP] for this extension.
As the first step of the proof, we show that if
\begin{itemize}
\item $(\emptyset,\{1\})\in \CP$,
\item $(\emptyset,\{0,1\})\in \CP$,
\item $(\emptyset,\ZZ_{\ge 1})\in \CP$, or
\item $(\{0\}, \NN) \in \CP$,
\end{itemize}
then \srCountDomSetRel[{\HWset{1}}] can be reduced to \srCountDomSetRel[\CP]
in a pathwidth-preserving way because \HWset{1} relations
can be simulated using one of the four pairs listed above.
Then, the main part of the proof is to show that,
for every non-trivial $(\sigma, \rho)$ with $\rho\neq \NN$
(Case \ref{item:case-1})
or $\rho= \NN$ and finite $\sigma$ (Case \ref{item:case-2}),
one of the four pairs can be simulated with the \srCountDomSet problem
(see \cref{fig:count:removingRelationsSimplified}).

\begin{figure}[t]
\centering
\begin{tikzpicture}[
  scale=.615,
  transform shape,
]
  \input{drawings/overview-tikz-style}
  \def\x{4.5cm}
  \def\xGap{1.25cm}
  \def\y{-2 cm}
  \def\xW{4.2cm}
  \def\xWimp{4.2cm-1.12mm}

  \crefname{lemma}{Lem.}{Lem.}

  \tikzset{redTur/.style={red}}
  \tikzset{%
  part/.style = { 
    draw=gray,
    fill=gray!15,
    rounded corners=2mm,
  },
  lowerB/.style = { 
  },
  important/.style = { 
    line width=.66mm,
  },
  case/.style = { 
  },
  res/.style = { 
    anchor=west,
    yshift=-0.5*\y,
  },
  }


  \node[problem, lowerB, break, important, minimum width=4*\x+\xWimp+2*\xGap] (domSetRelFull) at (1*\x+\xGap,0*\y)
    {\CountDomSetRel{\widehat\sigma}{\widehat\rho}};

  \node[problem, lowerB, break, important, minimum width=4*\x+\xWimp+2*\xGap] (domSetRel) at (\x+\xGap,2*\y)
    {\CountDomSetRel{\sigma}{\rho}};
  \draw[redTur] (domSetRelFull) -- (domSetRel);

  \node[case,break] (goodCase) at (.5*\x+.5*\xGap, 3*\y)
    {$\rho \neq \NN \lor \sigma$ finite};
  \draw[switch] (domSetRel.south -| goodCase) -- (goodCase);

  \node[problem, minimum width=3*\x+\xWimp+\xGap,important] (domSetHWone) at (.5*\x+.5*\xGap, 4*\y)
    {\srCountDomSetRel[\HWeq{1}]};
  \draw[red] (goodCase) -- (domSetHWone);

  \node[case,break] (empty1Case) at (0, 5*\y)
    {$\rho$ finite $\land\; \rhoMax-1\notin \rho$};
  \draw[switch] (domSetHWone.south -| empty1Case) -- (empty1Case);

  \node[problem,break] (empty1) at (0, 6*\y)
    {\srCountDomSetRel[(\emptyset, \{1\})]};
  \draw[red] (empty1Case) -- (empty1);

  \node[case,break] (empty01Case) at (-1*\x, 5*\y)
    {$\rho$ finite $\land\; \rhoMax-1\in \rho$};
  \draw[switch] (domSetHWone.south -| empty01Case) -- (empty01Case);


  \node[problem,break] (empty01) at (-1*\x, 6*\y)
    {\srCountDomSetRel[(\emptyset,\{0,1\})]};
  \draw[redTur] (empty01Case) -- (empty01);



  \node[case,break] (emptyFrom1Case) at (\x, 5*\y)
    {$\rho$ cofinite $\land\; \rho\neq \NN$};
  \draw[switch] (domSetHWone.south -| emptyFrom1Case) -- (emptyFrom1Case);

  \node[problem,break] (emptyFrom1) at (\x, 6*\y)
    {\srCountDomSetRel[(\emptyset,\ZZ_{\ge 1})]};
  \draw[red] (emptyFrom1Case) -- (emptyFrom1);

  \node[problem,minimum width=2*\x+\xW] (sigEmptyEmptyRho) at (0, 7*\y)
    {\srCountDomSetRel[(\sigma,\emptyset)+(\emptyset,\rho)]};
  \draw[red] (empty1) -- (sigEmptyEmptyRho);
  \draw[red] (empty01) -- ([xshift=-1*\x] sigEmptyEmptyRho.north);
  \draw[red] (emptyFrom1) -- ([xshift=\x] sigEmptyEmptyRho.north);

  \node[case] (sigFinite) at (2*\x+\xGap, 5*\y)
    {$\rho=\NN \;\land\; \sigma$ finite};
  \draw[switch] (domSetHWone.south -| sigFinite) -- (sigFinite);

  \node[problem,break] (0everything) at (2*\x+\xGap, 6*\y)
    {\srCountDomSetRel[(\{0\},\NN)]};
  \draw[redTur] (sigFinite) -- (0everything);

  \node[problem,break] (sigEmpty) at (2*\x+\xGap, 7*\y)
    {\srCountDomSetRel[(\sigma,\emptyset)]};
  \draw[redTur] (0everything) -- (sigEmpty);

  \node[case] (sigCofinite) at (3*\x+2*\xGap, 3*\y)
    {$\rho=\NN \;\land\; \sigma$ cofinite};
  \draw[switch] (domSetRel.south -| sigCofinite) -- (sigCofinite);

  \node[problem,break] (cofRelWeight) at (3*\x+2*\xGap, 4*\y)
    {rel-wt. \srCountDomSetRel[(\ZZ_{\ge1},\NN),\altRel]};
  \draw[red] (sigCofinite) -- (cofRelWeight);

  %
  %
  %

  \node[problem,break] (cofAllEmpty) at (3*\x+2*\xGap, 6*\y)
    {\srCountDomSetRel[(\ZZ_{\ge1},\NN)+(\NN, \emptyset)]
      \footnotesize{$(\NN, \emptyset)$ 1-bounded}};
  \draw[red] (cofRelWeight) -- (cofAllEmpty);


  \node[problem,break] (cofAllEmptyBounded) at (3*\x+2*\xGap, 7*\y)
    {\srCountDomSetRel[{(\NN,\emptyset)}]
      \footnotesize{$(\NN, \emptyset)$ $\sigMax$-bounded}};
  \draw[red] (cofAllEmpty) -- (cofAllEmptyBounded);

  \node (domSetAnchor) at (0, 8*\y) {};
  \node[problem,lowerB,important,minimum width=4*\x+\xWimp+2*\xGap]
    (domSet) at (1*\x+\xGap,8*\y) {\srCountDomSet};

  \draw[red] (sigEmptyEmptyRho) -- (domSet.north -| sigEmptyEmptyRho);
  \draw[redTur] (sigEmpty) -- (domSet.north -| sigEmpty);
  \draw[red] (cofAllEmptyBounded) -- (domSet.north -| cofAllEmptyBounded);

  \begin{scope}[on background layer]
    \draw[part]
      (-1.6*\x,          0*\y-.25cm) rectangle (3.6*\x+2*\xGap,  2*\y+.25cm);
    \draw[part,fill=red!10]
      (-1.6*\x,          2*\y-.25cm) rectangle (2.6*\x+1*\xGap,  4*\y+.25cm);
    \draw[part,fill=green!50!black!5]
      (-1.6*\x,          4*\y-.25cm) rectangle (1.6*\x,          8*\y+.25cm);
    \draw[part,fill=blue!10]
      ( 1.4*\x+1*\xGap,  4*\y-.25cm) rectangle (2.6*\x+1*\xGap,  8*\y+.25cm);
    \draw[part,fill=orange!10]
      ( 2.4*\x+2*\xGap,  2*\y-.25cm) rectangle (3.6*\x+2*\xGap,  8*\y+.25cm);
  \end{scope}
\end{tikzpicture}
\caption{
A simplified overview of the reductions for the counting version of the problem.
The sets $\widehat\sigma$ and $\widehat\rho$ are simple cofinite sets,
and the sets $\sigma$ and $\rho$ are cofinite sets with
$\max(\NN\setminus \sigma) = \max(\NN\setminus \widehat\sigma)$
and $\max(\NN\setminus \rho) = \max(\NN\setminus \widehat\rho)$.
\\
The reductions from the topmost gray box
are given in \cref{sec:high-level:counting}, and
the reductions in the red box below
are given in \cref{sec:RelToHW1}.
The reductions of the remaining three boxes are covered
in \cref{sec:realizingRelations:counting}.
The green box on the left represents Case~\ref{item:case-1}
(cf.\ \cref{sec:rhoNotAll}), 
the blue box in the middle represents Case~\ref{item:case-2}
(cf.\ \cref{sec:rhoiseverything}),
and the orange box on the right represents Case~\ref{item:case-3}
(cf.\ \cref{sec:rhoiseverything2}).
}
\label{fig:count:removingRelationsSimplified}
\end{figure}

Toward this goal, we simulate some number of other pairs.
A key intermediate goal is to ``force a vertex to be unselected'',
that is, to simulate an $(\emptyset,\rho)$-vertex.
Similarly, another important goal is to ``force a vertex to be selected''
by simulating a $(\sigma,\emptyset)$-vertex.
Now, let us do the following:
\begin{itemize}
  \item introduce $\rhoMax-1$ cliques, each consisting of $\min\sigma+1$
  vertices of type $(\sigma,\emptyset)$;
  \item introduce one vertex $v$ of type $(\emptyset,\rho)$, adjacent to one vertex in each clique.
  \end{itemize}
  The vertex $v$ can be considered as an always unselected vertex
  that already has $\rhoMax-1$ extra selected neighbors,
  that is, $\rho$ is ``shifted down'' by $\rhoMax-1$.
  Crucially, this works only if $\rhoMax\neq 0$, that is, if $\rho\neq \{0\}$ and $\rho\neq \NN$.
  Because we need to consider only non-trivial $(\sigma, \rho)$, we can assume $\rho\neq \{0\}$.
  Assuming that $\rho\neq \NN$ brings us to Case \ref{item:case-1}.
  In this case, depending on whether $\rho$ is cofinite,
  finite with $\rhoMax-1\in \rho$ or finite with $\rhoMax-1\not\in\rho$,
  the vertex $v$ effectively becomes an
  $(\emptyset,\ZZ_{\ge 1})$-, $(\emptyset,\{0,1\})$-,
  or $(\emptyset,\{1\})$-vertex, respectively.

  The treatment of Case \ref{item:case-2} ($\rho=\NN$ with finite $\sigma$) is similar to Case \ref{item:case-1} and is arguably even slightly simpler.
  Here it suffices to force selected vertices, i.e., to model $(\sigma, \emptyset)$-vertices, in order to ``shift down'' to a $(\{0\}, \NN)$-vertex.

  Let us now highlight some of the technical ideas
  behind the simulation of $(\emptyset,\rho)$- and $(\sigma,\emptyset)$-vertices.
  The following notation is convenient for the description.
  Let $\CG$ be a gadget with a single so-called \emph{portal} vertex $v$.
  We write $\ext_\CG(\rho_i)$ for the number of partial solutions $S$ of $\CG$
  where every vertex except $v$ satisfies the constraints,
  $v\not\in S$, and $v$ has exactly $i$ selected neighbors in $S$.
  In this sense  $\ext_\CG(\rho_i)$ is the number of \emph{extensions} of the state $\rho_i$ of $v$ to the gadget $\CG$.
  The definition of $\ext_\CG(\sigma_i)$ is analogous with $v\in S$.
  The proof proceeds by constructing gadgets
  where the numbers of extensions satisfy certain properties.
  See \cref{fig:intro:steps} for an overview of these gadgets.
  Let us assume first that $\sigma$ and $\rho$ are finite.

  \begin{figure}
    \begin{subfigure}[t]{.25\textwidth}
      \centering
      \includegraphics{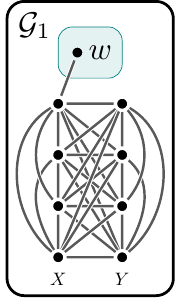}
      \caption{
        Gadget for Step~1, for $\sigMax=3$ and $\rhoMax=4$.
      }
      \label{fig:intro:steps:first}
    \end{subfigure}%
    \hfill
    \begin{subfigure}[t]{.28\textwidth}
      \centering
      \includegraphics{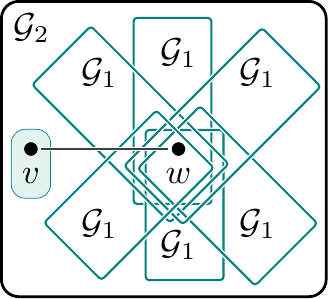}
      \caption{
        The gadget from Step~2 using the gadet from Step~1.
      }
      \label{fig:intro:steps:second}
    \end{subfigure}%
    \hfill
    \begin{subfigure}[t]{.28\textwidth}
      \centering
      \includegraphics{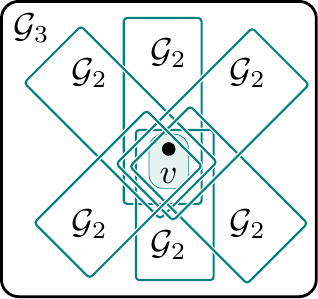}
      \caption{
        The gadget from Step~3 using the gadget from Step~2.
      }
      \label{fig:intro:steps:third}
    \end{subfigure}
    \caption{
      Illustration of the gadgets from Steps~1 to~3.
    }
    \label{fig:intro:steps}
  \end{figure}

  \paragraph{Step 1.}
  First, we construct $\CG_1$ with
  $\ext_{\CG_1}(\rho_0),\ext_{\CG_1}(\rho_1)\ge 1$,
  but $\ext_{\CG_1}(\sigma_i)=0$ for $i\ge 2$.
  The construction is easy: the graph consists of a portal vertex $w$,
  and two sets of vertices $X$, $Y$ with a single edge between $X$ and $v$.
  All we need to ensure is that both $X$ and $Y$ are solutions,
  which can be ensured by having degree $\sigMax$ inside $X$ and $Y$,
  and degree $\rhoMax$ between $X$ and $Y$.

  \paragraph{Step 2.}
  Next, we want to construct a graph $\CG_2$
  where $\ext_{\CG_2}(\rho_0)$ and $\ext_{\CG_2}(\sigma_0)$
  are both at least one, but different.
  In other words, the choice of selecting/not selecting the portal $v$
  influences the number of extensions in the gadget
  where the neighbors of the portal are not selected.
  Intuitively, this should be the case for most graphs,
  but to find a graph where this is provably so, we argue in the following way.
  Take $x$ copies of the previously defined gadget $\CG_1$,
  and identify their portal vertices into a single vertex $w$.
  The portal of $\CG_2$ is a new vertex $v$ adjacent to $w$.
  The number $\ext_{\CG_2}(\rho_0)$ corresponds to partial solutions $S$
  where $v,w\not\in S$.
  Then, each copy of $\CG_1$ has
  $\ext_{\CG_1}(\rho_1)$ extensions that add a new selected neighbor to $w$
  and $\ext_{\CG_1}(\rho_0)$ extensions that do not.
  These extra selected neighbors should satisfy the constraint given by $\rho$ on $w$.
  Therefore, we have
  \[
\ext_{\CG_2}(\rho_0)=\sum_{i\in \rho}\binom{x}{i}\ext_{\CG_1}(\rho_1)^i \ext_{\CG_1}(\rho_0)^{x-i}.
\]
The expression for $\ext_{\CG_2}(\sigma_0)$ is similar, but counts partial solutions containing $v$, giving an extra selected neighbor to $w$. Thus,
\[
\ext_{\CG_2}(\sigma_0)=\sum_{i+1\in \rho}\binom{x}{i}\ext_{\CG_1}(\rho_1)^i \ext_{\CG_1}(\rho_0)^{x-i}.
\]
Observe that dividing both expressions by $\ext_{\CG_1}(\rho_0)^x$
gives two polynomials in $x$, where the first one has degree $\rhoMax$,
and the second one has degree $\rhoMax-1$.
Thus, there has to be an $x$ for which the two polynomials are different.

\paragraph{Step 3.}
The next step is to implement a vertex $(\emptyset,\{0\})$,
that is, a vertex that cannot be selected and should not get any selected neighbors.
This can be used to force any number of other vertices to be unselected.
We construct a gadget $\CG_3$ by creating $x$ copies of $\CG_2$
and identifying  their portal vertices into a single portal vertex $v$.
Suppose that we extend a graph $G$ to a graph $G'$
by identifying the portal of $\CG_3$ with a vertex $v\in V(G)$.
Let $a_d$ be the number of partial solutions in $G$
where $v$ is unselected and has exactly $d$ selected neighbors;
$b_d$ is defined similarly for $v$ being selected.
Now the number of solutions in $G'$ is
\[
\underbrace{\sum_{d=0}^{\rhoMax} a_d\sum_{i+d\in \rho} \binom{x}{i}\ext_{\CG_2}(\rho_1)^i \ext_{\CG_2}(\rho_0)^{x-i}}_{A:=}+
\underbrace{\sum_{d=0}^{\sigMax} b_d\sum_{i+d\in \rho} \binom{x}{i}\ext_{\CG_2}(\sigma_1)^i \ext_{\CG_2}(\sigma_0)^{x-i}}_{B:=}
\]

The important observation here is that
$\ext_{\CG_2}(\rho_0)\neq \ext_{\CG_2}(\sigma_0)$ implies that
the two terms $A$ and $B$ above have very different orders of magnitude.
Assume that $\ext_{\CG_2}(\rho_0)<\ext_{\CG_2}(\sigma_0)$
(the other case is similar).
Then, $2^{|V(G)|}\cdot \ext_{\CG_2}(\rho_0)^{x}$ is a rough upper bound for $A$,
while $B$ is divisible by $\ext_{\CG_2}(\sigma_0)^{x-\rhoMax}$,
which is much larger for sufficiently large values of $x$.
Therefore, if we compute $A+B$ modulo $\ext_{\CG_2}(\sigma_0)^{x-\rhoMax}$,
then $B$ vanishes and we can recover $A$.

Dividing $A$ by $\ext_{\CG_2}(\rho_0)^{x}\neq 0$
gives a polynomial in $x$ of degree $\rhoMax$,
whose coefficients we can recover using interpolation.
(For simplicity, we assume here that $\ext_{\CG_2}(\rho_1)\neq 0$;
otherwise, we treat it as a separate case).
Observe that only $d=0$ contributes a term $x^{\rhoMax}$,
and thus, from the coefficient of $x^{\rhoMax}$ we can recover the value of $a_0$.
This counts the number of partial solutions
where $v$ is not selected and has no selected neighbors,
that is, it simulates an $(\emptyset,\{0\})$-vertex, as we wanted.

\paragraph{Step 4.}
Once we can express a single $(\emptyset,\{0\})$-vertex,
attaching it to any set $S$ of vertices forbids selecting any vertex of $S$
without adding any selected neighbors,
effectively making them $(\emptyset,\rho)$-vertices.
A further interpolation argument of similar flavor
allows us to simulate an arbitrary number of $(\sigma, \emptyset)$-vertices.
We omit the details here.

\paragraph{Step 5.}
There are a number of difficulties with this approach
if $\sigma$ and/or $\rho$ is cofinite.
For example, suppose that $\rho$ is cofinite and consider the term $A$ above.
As $\rho$ is not finite, $A$ cannot be written as the sum
of a constant number of binomial terms.
Instead, we can write $A$ as all possible extensions to $\CG_3$,
minus those extensions that give the wrong number of selected neighbors to $v$.
That is, for $k$ the number of neighbors of $v$ in $G$,
\[
  A \deff \sum_{d=0}^{k}
    a_d
    \left(
      (\ext_{\CG_2}(\rho_0)+\ext_{\CG_2}(\rho_1))^x
      - \sum_{i+d\not\in \rho}
        \binom{x}{i}
        \ext_{\CG_2}(\rho_1)^i
        \ext_{\CG_2}(\rho_0)^{x-i}
    \right).
\]
Observe that, if we divide the expression by $\ext_{\CG_2}(\rho_0)^{x}$, this is a combination of an exponential term and some polynomials. In the cofinite cases, we have to deal with functions of the form $f(x)=c^x-p(x)$, where $p(x)$ is a polynomial. A key technique to handle such expressions is to compute $f(x+1)-c\cdot f(x)=p(x+1)-c\cdot p(x)$, which is a polynomial of the same degree as $p(x)$, and hence, much easier to work with.

\paragraph{\boldmath The Third Case: $\rho=\NN$ with Cofinite $\sigma$.}
Now let us give some intuition why the case $\rho=\NN$ with cofinite $\sigma$ requires a different treatment.
Suppose that $\sigma=\ZZ_{\ge1}$, which turns out to be the key case.
A $(\ZZ_{\ge1},\NN)$-vertex $v$ can be selected or unselected, but if it is selected, then it requires at least one more selected neighbor. Suppose $v$ is subject to a relation $R$. In order to model the relation, we need to distinguish between solutions in which $v$
is selected, and those in which it is not (in the sense that there is a different number of partial solutions extending the two states).
Now note that, since $\rho=\NN$, unselected neighbors of $v$ behave the same independently
of the selection status of $v$. Also, in general, if $v$ has only unselected neighbors, then their number does not affect the selection status of $v$.
Therefore, the fact of whether $v$ is selected or not can be distinguished
only if $v$ has at least one selected neighbor.
However, as soon as the $(\ZZ_{\ge1},\NN)$-vertex $v$ has at least one selected neighbor, then
it has a feasible number of selected neighbors independently of its selection status and independently of how many additional selected neighbors it might obtain.
So, at this point, $v$ is entirely unaffected
by the selection status of its remaining neighbors
and acts essentially as a $(\NN,\NN)$-vertex.
This is why, while we are able to observe the selection status of $v$, and in this sense model the relation $R$ (by attaching some gadgets and interpolating), we did not manage to simultaneously recover the situation where $v$ behaves like a $(\ZZ_{\ge1},\NN)$-vertex with respect to the original graph.
To handle this situation,
we introduce another, different concept of ``$(\sigma,\rho)$-sets with relations''.
Now, all vertices that are subject to some relation
are treated as $(\NN,\NN)$-vertices,
that is, they do not impose any constraints on the solutions
other than that they have to satisfy the relations.
So, we have graphs with two types of vertices:
(1) $(\sigma,\rho)$-vertices that are not subject to any relation,
and (2) $(\NN,\NN)$-vertices that are each subject to at least one relation.
It turns out that the required lower bounds
also hold for this modified problem with relations.

\subsection{Organization of this Article}

We first introduce necessary preliminary definitions in \cref{sec:prelims}.
In \cref{sec:lowerbounds}, we give the proofs of
\cref{thm:lower-main-intro-decision,thm:lower-main-intro}. However, the proofs are
high-level in the sense that they use a number of intermediate results that we prove only
in later sections. This way, the purpose of the different proof ingredients is clear before we present their (in some cases long and technical) proofs.

The proofs of these two main theorems are divided into three steps.
The first step (\cref{sec:provider,sec:manager}) is about the construction of a special type of gadget (a \emph{manager}, formally defined in \cref{def:encoder})
which allows us to give a certain number of (selected) neighbors to vertices.
In this step, we show that, depending on the properties of $\sigma$ and $\rho$,
different versions of the manager gadget exist.
Depending on the ``quality'' of the manager gadget,
this then leads to lower bounds of different ``quality''.

Once we have said managers at hand,
the second step is a lower bound for a generalization of \srDomSet
where we additionally allow relations as constraints on the selection status of those vertices that are in the scope of some relation. The lower bound for this intermediate problem is established in \cref{sec:LBforRelations} for both the decision and the counting versions.

The third and last step is to show how to remove the relations
while preserving their properties in order to show
that the lower bounds for the intermediate problems
extend to the original problems \srDomSet and \srCountDomSet, respectively.
This is done in \cref{sec:realizingRelations,sec:realizingRelations:counting}.

An overview of the steps to obtain the lower bounds
is given in \cref{fig:count:highLevel}.

\section{Preliminaries}
\label{sec:prelims}

\subsection{Basics}

\subsubsection*{Numbers, Sets, Strings, and Vectors}
We use $a \equiv_{\mname} b$ as shorthand for $a \equiv b \pmod \mname$.

We write $\NN = \{0,1,2,3,\dots\}$ to denote the set of non-negative integers and $\ZZ_{>0} = \{1,2,3,\dots\}$ to denote the positive integers.
For integers \(i, j\), we write \(\fragment{i}{j}\) for the set \(\{i,\dots,j\}\), and \(\fragmentco{i}{j}\) for the set \(\{i,\dots,j-1\}\).
The sets \(\fragmentoc{i}{j}\) and \(\fragmentoo{i}{j}\) are defined similarly.
A set $\tau \subseteq \NN$ is \emph{cofinite} if $\NN \setminus \tau$ is finite.
Also, we say that $\tau$ is \emph{simple cofinite} if $\tau = \{n,n+1,n+2,\dots\}$ for some $n \in \NN$.

We write \(s =  s\position{1} s\position{2}\cdots  s\position{n}\) for a \emph{string} of length \(| s| = n\) over an alphabet \(\Sigma\).
We write \(\Sigma^{n}\) for the set (or \emph{language}) of all strings of length \(n\), and
we use $\varepsilon$ to denote the empty string.
For a string \( s \in \Sigma^{n}\) and positions \(i \le j \in \fragment{1}{n}\),
we write \( s\fragment{i}{j} \coloneqq  s\position{i}\cdots  s\position{j}\);
accordingly, we define \( s\fragmentoc{i}{j}\), \( s\fragmentco{i}{j}\), and \( s\fragmentoo{i}{j}\).
Sometimes we are interested in the number of occurrences of an element $a \in \Sigma$ in a string $s$.
To this end, we use the notation $\occ{s}{a}$ for the number of occurrences of $a$ in $s$,
that is, $\occ{s}{a} \coloneqq \abs{\{i \in \abs{s} \mid s\position{i}=a\}}$.
Also, for $A \subseteq\Sigma$, $\occ{s}{A}$ denotes the number of occurrences of elements from $A$ in $s$,
that is, $\occ{s}{A} \coloneqq \sum_{a\in A} \occ{s}{a}$.

For a finite set $X$ (for instance, a set of vertices of a graph), we write \(\Sigma^X \coloneqq \Sigma^{|X|}\) to emphasize that we index the strings in (subsets of) \(\Sigma^X\) with elements from \(X\):
for an \(x_i \in X\), we write \( s\position{x_i} \coloneqq  s\position{i}\).

To improve readability, we sometimes use a ``ranging star'' \(\star\) to range over unnamed objects.
For example, if we wish to define a function $f\colon \NN \times \NN \rightarrow \NN$, then we write $f(\star,4) = 5$ to specify that $f(i,4) = 5$ for all $i \in \NN$.

\subsubsection*{Graphs}

We use standard notation for graphs.
A \emph{graph} is a pair $G = (V(G),E(G))$ with finite vertex set $V(G)$ and edge set $E(G) \subseteq \binom{V(G)}{2}$.
Unless stated otherwise, all graphs considered in this paper are simple (that is, there
are no loops or multiple edges) and undirected.
We use $uv$ as a shorthand for edges $\{u,v\} \in E(G)$.
We write \(N_G(v)\) for the \emph{(open) neighborhood} of a vertex $v \in V(G)$,
that is, $N_G(v) \coloneqq \{w \in V(G) \mid vw \in E(G)\}$.
The \emph{degree} of $v$ is the size of its (open) neighborhood, that is,
$\deg_G(v) \coloneqq |N_G(v)|$.
The \emph{closed neighborhood} is $N_G\position{v} \coloneqq N_G(v) \cup \{v\}$.
We usually omit the index $G$ if it is clear from the context.
For $X \subseteq V(G)$, we write $G\position{X}$ to denote the \emph{induced subgraph} on the vertex set $X$, and $G - X \coloneqq G\position{V(G) \setminus X}$ denotes the induced subgraph on the complement of $X$.

\subsubsection*{Treewidth and Pathwidth}

Next, we define tree decompositions and recall some of their basic properties.
For a more thorough introduction to tree decompositions and their many applications, we refer the reader to~\cite[Chapter 7]{CyganFKLMPPS15}.

Fix a graph $G$.
A \emph{tree decomposition} of $G$ is a pair $(T,\beta)$ that consists
of a rooted tree $T$ and a mapping $\beta\colon V(T) \to 2^{V(G)}$ such that
\begin{enumerate}[label = (T.\arabic*)]
 \item $\bigcup_{t \in V(T)} \beta(t) = V(G)$,
 \item for every edge $vw \in E(G)$, there is some node $t \in V(T)$ such that $\{u,v\} \subseteq \beta(t)$, and
 \item for every $v \in V(G)$, the set $\{t \in V(T) \mid v \in \beta(t)\}$ induces a connected subtree of $T$.
\end{enumerate}
If $T$ is a path, then $(T,\beta)$ is a \emph{path decomposition} of $G$.
The \emph{width} of a tree decomposition (path decomposition, respectively) $(T,\beta)$ is defined as $\max_{t \in V(T)}|\beta(t)|-1$.
The \emph{treewidth} of a graph $G$, denoted by $\tw(G)$, is the minimum width of a tree decomposition of $G$.
The \emph{pathwidth} of a graph $G$, denoted by $\pw(G)$, is the minimum width of a path decomposition of $G$.

\subsubsection*{Treewidth-Preserving and Pathwidth-Preserving Reductions}

To prove hardness results for computational problems on graphs of bounded treewidth
(or variations thereof where treewidth is defined)
we rely on reduction chains that, up to additive constants,
preserve the treewidth of the input graphs.
As we state the lower bounds for pathwidth,
we formally only introduce reductions preserving pathwidth.
By definition of pathwidth and treewidth,
a pathwidth-reduction is automatically treewidth-preserving.

\begin{definition}[pathwidth-preserving reduction]
    \label{def:pwred}
    Let $A$ and $B$ denote two computational problems
    for which (some notion of) pathwidth is defined.

    We write $A\pwred B$
    if there is a \emph{pathwidth-preserving reduction} from $A$ to $B$,
    that is, a polynomial-time Turing reduction from $A$ to $B$ that,
    if executed on an instance $I$ of $A$
    given with a path decomposition of width at most $t$,
    makes oracle calls to $B$ only on instances
    that have size polynomial in $|I|$
    and that are given with a path decomposition of width at most $t + \Oh(1)$.
\end{definition}

Similarly to a treewidth preserving reduction
we define an arity-preserving reduction.
\begin{definition}[arity-preserving reduction]\label{def:arityRed}
  Let $A$ and $B$ denote two computational problems
  for which (some notion of) arity is defined.

  We write $A\arred B$
  if there is an \emph{arity-preserving reduction} from $A$ to $B$,
  that is, a polynomial-time Turing-reduction from $A$ to $B$ that,
  if executed on an instance $I$ of $A$ with arity at most $\ar(I)$,
  makes oracle calls to $B$ only on instances
  that have size polynomial in $\abs{I}$
  and that have arity at most $\ar(I) + \Oh(1)$.

  We say that $A$ is the \emph{source} and $B$ is the \emph{target} of the reduction.
\end{definition}

We combine a pathwidth-preserving and an arity-preserving reduction
into a \pwar-reduction.
\begin{definition}[\pwar-reduction]\label{def:paRed}
  Let $A$ and $B$ denote two computational problems
  for which (some notion of) pathwidth
  and (some notion of) arity is defined.

  We write $A\pwarred B$
  if there is a \emph{\pwar-reduction} from $A$ to $B$,
  that is, a reduction from $A$ to $B$
  which is pathwidth-preserving and arity-preserving.
\end{definition}

\begin{observation}\label{obs:pwredtransitivity}\label{obs:paRedTransitivity}
	A sequence of pathwidth-preserving reductions $A_1\pwred A_2 \pwred \ldots
	\pwred A_k$ gives a pathwidth-preserving reduction from $A_1$ to $A_k$
  if $k\in \Oh(1)$
	(with respect to the input size of $A_1$).
	Similarly, a sequence of \pwar-reductions
	$A_1 \pwarred A_2 \pwarred \dots \pwarred A_k$
	gives a \pwar-reduction from $A_1$ to $A_k$ if $k\in \Oh(1)$.
\end{observation}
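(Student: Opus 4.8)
The plan is to prove the statement by induction on $k$, reducing to the case $k=2$: it suffices to show that composing a pathwidth-preserving reduction $A_1\pwred A_2$ with a pathwidth-preserving reduction $A_2\pwred A_k$ (the latter obtained from the induction hypothesis applied to $A_2\pwred\dots\pwred A_k$, a chain of $k-1=\Oh(1)$ reductions) again yields a pathwidth-preserving reduction from $A_1$ to $A_k$. Let $R_1$ witness $A_1\pwred A_2$ and $R_2$ witness $A_2\pwred A_k$. The composed algorithm does the following on input an instance $I$ of $A_1$ given with a path decomposition of width at most $t$: run $R_1$, and whenever $R_1$ issues an oracle query $I'$ to $A_2$ — which by assumption satisfies $|I'|\le q_1(|I|)$ for some polynomial $q_1$ and comes with a path decomposition of width at most $t+c_1$ — answer that query by running $R_2$ on $I'$ (with its decomposition), forwarding each of $R_2$'s queries to the actual $A_k$-oracle; finally output whatever $R_1$ outputs.

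Next I would verify the three defining properties. \textbf{Correctness:} since $R_2$ correctly decides $A_2$ when given an $A_k$-oracle, the simulated $A_2$-oracle answers all of $R_1$'s queries correctly, so $R_1$ returns the correct answer for $I$. \textbf{Polynomial size and running time:} $R_2$ run on $I'$ produces $A_k$-queries of size at most $q_2(|I'|)\le q_2(q_1(|I|))$, again a polynomial in $|I|$; iterating this over the chain of $\Oh(1)$ reductions composes $\Oh(1)$ polynomials and hence bounds the size of every $A_k$-query by a fixed polynomial in $|I|$, and since each reduction is a polynomial-time Turing reduction it issues only polynomially many queries, so a constant-depth composition still issues polynomially many queries, each answered in polynomial time. \textbf{Pathwidth:} the $A_2$-instances $I'$ carry path decompositions of width at most $t+c_1$; feeding such an $I'$ to $R_2$ produces $A_k$-queries with path decompositions of width at most $(t+c_1)+c_2=t+(c_1+c_2)$, and chaining $k-1=\Oh(1)$ reductions gives width at most $t+\Oh(1)$, because a sum of constantly many constants is a constant.

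For the $\pwarred$-version I would run the identical argument while additionally tracking arity: each reduction in the chain raises the arity of the queried instances by an additive constant, so over $\Oh(1)$ reductions the arity grows by at most $\Oh(1)$, exactly parallel to the pathwidth bookkeeping. The only place the hypothesis $k\in\Oh(1)$ is genuinely needed — and thus the only real pitfall worth flagging — is precisely this accounting: if $k$ were allowed to grow with $|I|$, then neither the nested composition of the polynomial size bounds would remain a single polynomial, nor would the sum of the additive pathwidth (and arity) constants remain bounded, so the resulting reduction would no longer be pathwidth- (or arity-)preserving. Everything else is routine verification against \cref{def:pwred,def:arityRed,def:paRed}.
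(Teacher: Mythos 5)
Your proposal is correct and is exactly the routine composition argument that the paper takes for granted — the statement is left as an observation without an explicit proof, and your write-up (composing the Turing reductions, nesting the polynomial size bounds, and summing the additive pathwidth/arity constants, with $k\in\Oh(1)$ being what keeps both bounded) is the intended reasoning. No gaps; the emphasis on why $k\in\Oh(1)$ is essential is the right point to flag.
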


\subsection{Generalized Dominating Sets}

In the following, let $\sigma,\rho \subseteq \NN$ denote two sets that are finite or cofinite.

\subsubsection*{Basics}
Fix a graph $G$.
A set of vertices $S \subseteq V(G)$ is a \emph{$(\sigma,\rho)$-set} if $|N(u) \cap S| \in \sigma$ for every $u \in S$, and $|N(v) \cap S| \in \rho$ for every $v \in V(G) \setminus S$. We also refer to these two requirements as the $\sigma$-constraint and the $\rho$-constraint, respectively.
The (decision version of the) \srDomSet problem takes as input a graph $G$, and asks whether $G$ has a $(\sigma,\rho)$-set $S \subseteq V(G)$.
We use \srCountDomSet to refer to the counting version, that is, the input to the problem is a graph $G$, and the task is to determine the number of $(\sigma,\rho)$-sets $S \subseteq V(G)$.

We say $(\sigma,\rho)$ is \emph{trivial} if $\rho = \{0\}$ or $(\sigma,\rho) = (\NN,\NN)$.

\begin{fact}
 \label{fact:trivial-pairs-counting}
 Suppose $(\sigma,\rho)$ is trivial.
 Then, \srCountDomSet can be solved in polynomial time.
\end{fact}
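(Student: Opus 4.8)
The plan is to split into the two cases from the definition of triviality and, in each, give an explicit closed-form formula (or a trivially-evaluable count) for the number of $(\sigma,\rho)$-sets.

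First consider the case $(\sigma,\rho) = (\NN,\NN)$. Here the $\sigma$-constraint and the $\rho$-constraint are vacuous: for every $u \in S$ we need $|N(u)\cap S|\in\NN$, which always holds, and likewise for every $v\notin S$. Hence \emph{every} subset $S\subseteq V(G)$ is a $(\sigma,\rho)$-set, and the answer is simply $2^{|V(G)|}$, which can be written down in polynomial time (the output size is polynomial in the input size).

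Next consider the case $\rho = \{0\}$ (with $\sigma$ arbitrary finite or cofinite). Now a set $S$ is a $(\sigma,\rho)$-set iff every vertex outside $S$ has \emph{no} neighbor in $S$, i.e. $S$ is a union of connected components of $G$, together with the requirement that every $u\in S$ satisfies $|N(u)\cap S|\in\sigma$; but since $S$ is a union of components, $N(u)\cap S = N(u)$ for $u\in S$, so the $\sigma$-constraint says $\deg_G(u)\in\sigma$ for every $u\in S$. Therefore a component $C$ of $G$ may be included in $S$ exactly if every vertex of $C$ has degree (in $G$) lying in $\sigma$; whether this holds is checkable in polynomial time for each component. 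The number of valid $(\sigma,\rho)$-sets is then $2^{\#\{\text{components all of whose degrees lie in }\sigma\}}$, again computable in polynomial time. (One edge subtlety worth a line: the empty set is always valid here since all constraints are then on vertices outside $S$ and $\rho=\{0\}$ is satisfied only if those vertices have no selected neighbor, which is automatic when $S=\emptyset$; this is consistent with the component count, as the empty union of components.)

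I do not expect any serious obstacle: the entire content is unwinding the definition of $(\sigma,\rho)$-set in the two degenerate regimes. The only place to be slightly careful is the $\rho=\{0\}$ case, where one must observe that the condition ``$v\notin S \Rightarrow |N(v)\cap S|=0$'' forces $S$ to be a union of components, after which the $\sigma$-constraint collapses to a statement about degrees in $G$ and decouples completely across components, yielding the clean power-of-two count.
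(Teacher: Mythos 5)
Your proposal is correct and follows the same route as the paper: for $(\NN,\NN)$ count all $2^{|V(G)|}$ subsets, and for $\rho=\{0\}$ observe that a solution must be a union of connected components in which every vertex degree lies in $\sigma$, giving $2^c$ for $c$ the number of such components. Your write-up just spells out the component argument that the paper states without elaboration.
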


\begin{proof}
 For $(\sigma,\rho) = (\NN,\NN)$, the number of $(\sigma,\rho)$-sets is $2^{|V(G)|}$.
 For $\rho = \{0\}$, the number of $(\sigma,\rho)$-sets is $2^{c}$, where $c$ denotes the number of those connected components of $G$ where every vertex degree is contained in $\sigma$.
\end{proof}

In order to analyze the complexity of \srDomSet (and \srCountDomSet) for non-trivial pairs $(\sigma,\rho)$, we associate the following parameters with $(\sigma,\rho)$.
We define
\begin{equation}\label{eq:rho-sig-max}
\sigMax \deff
\begin{cases}
	\max(\sigma) & \text{if $\sigma$ is finite,} \\
	\max(\ZZ \setminus \sigma) + 1 & \text{if $\sigma$ is cofinite,}
\end{cases}
\text{and }
\rhoMax \deff
\begin{cases}
	\max(\rho) & \text{if $\rho$ is finite,} \\
	\max(\ZZ \setminus \rho) + 1 & \text{if $\rho$ is cofinite.}
\end{cases}
\end{equation}
Moreover, we set $\allMax \deff \max\{\sigMax,\rhoMax\}$.
We also define $\sigMin \deff \min \sigma$ and $\rhoMin \deff \min \rho$.

Our improved algorithms heavily exploit certain structures of a pair $(\sigma,\rho)$;
formally we are interested in whether a pair is what we call ``\(\mname\)-structured''.

\begin{definition}[$\mname$-structured sets]
    Fix an integer $\mname \geq 1$.
    A set $\tau \subseteq \ZZ_{\ge 0}$ is \emph{$\mname$-structured} if there is some number $c^*$ such that
    \[c \equiv_{\mname} c^* \]
    for all $c \in \tau$.
\end{definition}
We say that $(\sigma,\rho)$ is \emph{$\mname$-structured} if both $\sigma$ and $\rho$ are $\mname$-structured.
Observe that $(\sigma,\rho)$ is always $1$-structured.

\subsubsection*{Partial Solutions and States}

For our hardness results, a key ingredient is the description of \emph{partial solutions}.

A \emph{graph with portals} is a pair $(G,U)$, where $G$ is a graph and $U \subseteq V(G)$.
If $U = \{u_1,\dots,u_k\}$, then we also write $(G,u_1,\dots,u_k)$ instead of $(G,U)$.

Intuitively speaking, the idea of this notion is that $G$ may be part of some larger graph that interacts with $G$ only via vertices from $U$.
In particular, in the context of the \srDomSet problem, vertices in $U$ do not necessarily need to satisfy the definition of a $(\sigma,\rho)$-set since they may receive further selected neighbors from outside of $G$.

\begin{definition}[partial solution]\label{def-partial-sol}\label{def:partialsol}
    Fix a graph with portals $(G,U)$.
    A set \(S \subseteq V(G)\) is a \emph{partial solution} (with respect to \(U\)) if
    \begin{enumerate}[label = (PS\arabic*)]
        \item \label{item:realizable-1s}\label{item:realizable-1}
            for each $v \in S \setminus U$, we have $|N(v) \cap S| \in \sigma$,
            and
        \item \label{item:realizable-3s}\label{item:realizable-3}
            for each $v \in V(G) \setminus (S\cup U)$, we have $|N(v) \cap S| \in \rho$.
    \end{enumerate}
\end{definition}

To describe whether vertices from $U$ are selected into partial solutions and how many selected neighbors they already have inside $G$, we associate a state with every vertex from $U$.

Formally, we write $\sigStatesExt \coloneqq \{\sigma_i \mid i \in \NN\}$ for the set of
potential $\sigma$-states, and we write $\rhoStatesExt \coloneqq \{\rho_i \mid i \in \NN\}$
for the set of potential $\rho$-states.
We also write $\allStatesExt \coloneqq \sigStatesExt \cup \rhoStatesExt$ for the set of \emph{all} potential states.

\begin{figure}[t]
    \centering
    \includegraphics[scale=1.3]{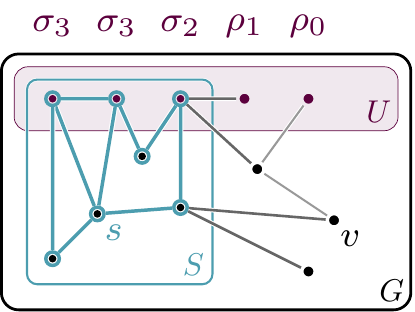}
    \caption{A graph \(G\) and subsets of vertices \(U\) and \(S\).
        For \(\sigma = \{2, 4\}, \rho = \{ 1 \}\), the set \(S\) is a partial solution
        (with respect to \(U\)), as every blue vertex \(s \in S \setminus U\) satisfies
        \(|N(s) \cap S| \in \{2,4\} = \sigma\) and every black vertex \(v \in V(G)
        \setminus (S \cup U)\) satisfies \(|N(v) \cap S| \in \{1\} = \rho\).
        The depicted set \(S\) corresponds to the compatible string
        \(\sigma_3\sigma_3\sigma_2\rho_1\rho_0\) (written above \(G\)).
        Note that \(S\) would not be a partial solution for \(\sigma = \{4\}\), as every
        blue vertex but one has only \(2\) neighbors in~\(S\).
    }\label{fig:partial}
\end{figure}

\begin{definition}[compatible strings]\label{def-compatible-general}\label{def:compatibleString}
    Fix a graph with portals $(G,U)$.
    A string \( x \in \allStatesExt^{U}\) is \emph{compatible with $(G,U)$}
    if there is a partial solution $S_{x} \subseteq V(G)$ such that
    \begin{enumerate}[label = (X\arabic*)]
        \item\label{item:realizable-2s}
            for each $v \in U \cap S_{x}$, we have
            \( x\position{v} = \sigma_{s}\), where $s = |N(v) \cap S_{x}|$, and
        \item\label{item:realizable-4s}
            for each $v \in U \setminus S_{x}$, we have
            \( x\position{v} = \rho_{r}\), where $r = |N(v) \cap S_{x}|$.
    \end{enumerate}
    We also refer to the vertices in $S_{x}$ as being \emph{selected} and say that $S_{x}$ is a \emph{(partial) solution}, \emph{selection}, or \emph{witness} that \emph{witnesses} $x$.
\end{definition}

Consult \cref{fig:partial} for a visualization of an example of a partial solution and its corresponding compatible string.

Observe that, despite $\allStatesExt$ being an infinite alphabet, for every graph with portals $(G,U)$, only finitely many strings $x$ can be realized.
Indeed, if $|V(G)| = n$, then every compatible string can have only characters from $\allStates_n = \sigStates_n \cup \rhoStates_n$, where $\sigStates_n \coloneqq \{\sigma_i \mid i \in \fragment{0}{n}\}$ and $\rhoStates_n \coloneqq \{\rho_i \mid i \in \fragment{0}{n}\}$.

\begin{definition}[realized language, $L$-provider]\label{def:provider}
    For a graph with portals \((G, U)\), we define its realized language as
    \[L(G,U) \coloneqq \{x \in \allStatesExt^U \mid x \text{ is compatible with } (G,U)\}.\]

    For a language \(L \subseteq \allStatesExt^U\), we say that \emph{$(G,U)$}
    is an \(L\)-realizer if \(L = L(G,U)\).

    For a language \(L \subseteq \allStatesExt^U\), we say that \emph{$(G,U)$}
    is an \(L\)-provider if \(L \subseteq L(G,U)\).
\end{definition}

Again, observe that $L(G,U) \subseteq \allStates_n^U$, where $n = |V(G)|$.

In fact, for most of our applications, it makes sense to restrict the alphabet even further.
Recall the definition of $\sigMax$ and $\rhoMax$ from Equation~\eqref{eq:rho-sig-max}.
Suppose that $\sigma$ is finite.
Then, we are usually not interested in partial solutions $S$ where some vertex from $U$ is selected and already has more than $\sigMax$ selected neighbors (as it is impossible to extend this partial solution into a full solution).
Also, if $\sigma$ is infinite, it is usually irrelevant to us whether a selected vertex has exactly $\sigMax$ selected neighbors, or more than $\sigMax$ selected neighbors, since both options lead to the same outcome for all possible extensions of a partial solution.
For this reason, we typically%
\footnote{Sometimes, it turns out to be more convenient to work with the more general
variants; we clearly mark said (rare) occurrences of \(\allStatesExt\) and \(\allStates_n\).}
 restrict ourselves to the alphabets
\[\sigStates \coloneqq \{\sigma_0,\dots,\sigma_{\sigMax}\} \quad\text{and}\quad \rhoStates \coloneqq \{\rho_0,\dots,\rho_{\rhoMax}\}.\]
As before, we define $\allStates \coloneqq \sigStates \cup \rhoStates$.
\begin{definition}[inverse of a state]
    \label{def:inverseOfStates}
    For a state $\sigma_s \in \sigStates$,
    the \emph{inverse of $\sigma_s$ with respect to $\sigma$}
    is the state $\inverse[\sigma]{\sigma_s}=\sigma_{\sigMax-s}$.
    For a state $\rho_r \in \rhoStates$,
    the \emph{inverse of $\rho_r$ with respect to $\rho$}
    is the state $\inverse[\rho]{\rho_r}=\rho_{\rhoMax-r}$.

    We set $\inverse[\sigma]{\rho_r}=\rho_r$
    and $\inverse[\rho]{\sigma_s}=\sigma_s$.
    For all states $a \in \allStates$,
    the inverse of $a$ with respect to $\sigma,\rho$
    is the state $\inverse[\sigma,\rho]{a}=\inverse[\sigma]{\inverse[\rho]{a}}$.

    We extend this in the natural way to strings by applying it coordinate-wise,
    and to sets by applying it to each element of the set.
\end{definition}

\section{High-level Constructions and Proofs of Main Theorems}
\label{sec:lowerbounds}
Before starting with the high-level proofs, we first define some necessary notation. This is done in \cref{sec:graphsWithRelations}. Afterward, we show \cref{thm:lower-main-intro-decision} in \cref{sec:LB-decision}, and we show \cref{thm:lower-main-intro} in \cref{sec:LB-counting}.

\begin{figure}[t]
\centering
\begin{tikzpicture}[
    scale=0.98,
    transform shape
]
  \input{drawings/overview-tikz-style}
  \def\x{3cm}
  \def\y{-2 cm}

  \node[gadget] (provider) at (0,0) {Providers};

  \node[gadget] (manager) at (0, \y) {Managers};
  \draw[const] (provider) -- (manager);

  \node[problem,minimum width=2cm] (sat) at (-1*\x, 2*\y) {SAT};
  \node[problem,minimum width=2cm] (countSat) at (\x, 2*\y) {\#SAT};

  \node[problem, lowerB, break] (domSetRelFull) at (\x, 3*\y)
    {\CountDomSetRel{\widehat\sigma}{\widehat\rho} \\
    \footnotesize
    (\cref{lem:count:intermediateLowerBoundWhenHavingSuitableGadget})};

  \node[problem, lowerB, break] (domSetRel) at (\x, 4*\y)
        {\CountDomSetRel{\sigma}{\rho} \\
        \footnotesize
        (\cref{lem:count:lowerBoundWhenHavingSuitableGadget})};
  \draw[red] (domSetRelFull) -- (domSetRel) node[lem,right]
    {\cref{sec:high-level:counting}};

  \node[problem, lowerB, break, very thick] (domSet) at (\x, 5*\y)
        {\CountDomSet{\sigma}{\rho} \\
        \footnotesize
        (\cref{thm:lower-main-intro})};
  \draw[red] (domSetRel) -- (domSet) node[lem, right]
    {\cref{lem:lower:count:removingRelations}};

  \draw[red] (countSat) -- (domSetRelFull);
  \node[anchor=west] (countObs) at (\x, 2.5*\y)
      {\cref{obs:lower:parsimoniousAndFull}};
  \draw[const] (manager) |- (countObs);

  \node[problem, lowerB, break] (decDomSetRel) at (-1*\x, 4*\y)
    {\DomSetRel{\widehat\sigma}{\widehat\rho} \\
    \footnotesize
    (\cref{lem:lowerBoundWhenHavingSuitableGadget})};

  \node[problem, lowerB, break, very thick] (decDomSet) at (-1*\x, 5*\y)
        {\DomSet{\widehat\sigma}{\widehat\rho} \\
        \footnotesize
        finite $\widehat\sigma, \widehat\rho$
        (\cref{thm:lower-main-intro-decision})};
  \draw[red] (decDomSetRel) -- (decDomSet) node[lem, right]
    {\cref{lem:lower:dec:removingRelations}};

  \draw[red] (sat) -- (decDomSetRel);
  \node[anchor=east] (decProof) at (-1*\x, 2.5*\y)
    {\cref{sec:high-level:decision}};
  \draw[const] (manager |- countObs) -- (decProof);

  \node (zeroTop) at (-2*\x-2cm, .5*\y) {};
  \node (zeroBot) at ( 2*\x+2cm, .5*\y) {};
  \node (frstTop) at (-2*\x-2cm,1.5*\y) {};
  \node (frstBot) at ( 2*\x+2cm,1.5*\y) {};
  \node (scndTop) at (-2*\x-2cm,  4*\y) {};
  \node (scndMid) at (      0  ,  4*\y) {};
  \node (scndBot) at ( 2*\x+2cm,  4*\y) {};
  \node (thrdMid) at (      0  ,5.5*\y) {};

  \begin{scope}[on background layer]
    \draw[thick,dotted] (zeroTop) -- (zeroBot);
    \draw[thick,dotted] (frstTop) -- (frstBot);
    \draw[thick,dotted] (scndTop) -- (scndBot);
    \draw[thick,dotted] ([yshift=-.75mm] scndMid.center) -- (thrdMid);
    \node [anchor=south west] at (zeroTop) {\cref{sec:provider}};
    \node [anchor=south west] at (frstTop) {\cref{sec:manager}};
    \node [anchor=south west] at (scndTop) {\cref{sec:high-level}};
    \node [anchor=north west] at (scndTop) {\cref{sec:realizingRelations}};
    \node [anchor=north east] at (scndBot) {\cref{sec:realizingRelations:counting}};
  \end{scope}

\end{tikzpicture}

\caption{
Outline of the proof of the lower bounds.
The left sequence considers the decision version,
and the right sequence considers the counting version.
\\
For the problems in the filled boxes,
the statement refers to the formal lower bound.
\\
The sets $\sigma, \rho$ are finite or simple cofinite sets.
The sets $\widehat \sigma$ and $\widehat \rho$ are finite or cofinite sets
(not necessarily simple cofinite) unless mentioned otherwise.
}
\label{fig:count:highLevel}
\end{figure}

\subsection{Graphs with Relations and Managers}\label{sec:graphsWithRelations}

For our proofs, it is useful to consider graphs
with additional constraints on the selection of the vertices.
We formalize this by making use of relations which we add to the graph.

\begin{definition}[Graph with Relations] \label{def:lower:graphWithRelations}
	We define a \emph{graph with relations} as a tuple $G = (V,E,\CC)$,
	where $V$ is a set of vertices, $E$ is a set of edges on $V$,
	and $\CC$ is a set of relational constraints,
	that is, each $C\in \CC$ is in itself a tuple $(\scope(C), \rel(C))$.
	Here the \emph{scope} $\scope(C)$ of $C$ is a tuple
	of $\abs{\scope(C)}$ vertices from $V$.
	Then $\rel(C)\subseteq 2^{\scope(C)}$ is a $\abs{\scope(C)}$-ary relation
	specifying possible selections within $\scope(C)$.
	We also say that $\rel(C)$ \emph{observes} $\scope(C)$.

	The size of $G$ is $\abs{G}\coloneqq \abs{V}+\sum_{C\in \CC} \abs{\rel(C)}$.
	Slightly abusing notation, we usually do not distinguish between $G$
	and its underlying graph $(V,E)$. We use $G$ to refer to both objects depending on the context.
\end{definition}

Some relations that we will use frequently are equality and Hamming weight relations.
\begin{definition}
	\label{def:hammingWeightOneAndEquality}
	Let $S$ be a set of vertices with $d=\abs{S}$.
	Then $\EQ{d}=\{\emptyset, S\}$ is the $d$-ary \emph{equality} relation (over $S$).
	For each $\ell\ge 0$, $\HWeq[d]{\ell} \coloneqq \{ Y \subseteq S \mid \abs{Y}=\ell \}$ is the $d$-ary \emph{Hamming weight $\ell$} relation (over $S$).
	Analogously, we define $\HWge[d]{\ell}$ and $\HWle[d]{\ell}$.
\end{definition}

Given \cref{def:lower:graphWithRelations},
we extend the notion of $(\sigma,\rho)$-sets to graphs with relations.

\begin{definition}[$(\sigma,\rho)$-Sets of a Graph with Relations]
	Given a graph with relations $G = (V,E,\CC)$, a \emph{$(\sigma,\rho)$-set} $X$ of $G$ is a $(\sigma,\rho)$-set of the underlying graph $(V,E)$ that
	additionally satisfies $X \cap \scope(C) \in \rel(C)$ for every $C\in \CC$.
\end{definition}

To formally state treewidth-preserving (pathwidth-preserving, respectively) reductions,
we define the treewidth (pathwidth, respectively) of a graph with relations.
The crucial part is that in this definition we treat the scope of a relation as a clique.

\begin{definition}[Width Measures for Graphs with Relations]
	Let $G=(V,E,\CC)$ be a graph with relations.
	Let $G'$ be the graph we obtain from $(V,E)$
	when, for all $C \in \CC$,
	we additionally introduce a complete set of edges on the scope $\scope(C)$.
	The \emph{treewidth of a graph with relations} $G$,
	is the treewidth of the graph $G'$.
	Analogously, we define tree decompositions, path decompositions, and pathwidth of $G$
	as the corresponding concepts in the graph $G'$.
\end{definition}

To make use of the concept of arity-preserving reductions
(see \cref{def:arityRed}),
we define the arity of a graph with relations.
We ensure that this definition agrees with the intuition
that the arity of a graph without relations is 0.

\begin{definition}[Arity of a Graph with Relations]
	Given a graph with relations $G=(V,E,\CC)$,
	the \emph{arity of a graph with relations}
	is $\ar(G)\deff \max (\{0\}\cup \{\abs{\scope(C)} \mid C\in \CC\})$,
	i.e., the maximum arity of some relation of $\CC$ (or 0).
\end{definition}

If, for some constant $d$, a graph with relations has arity at most $d$,
then its size is upper bounded by $\abs{V}+\abs{\CC}\cdot 2^d$.
Hence, for the sake of polynomial time reductions,
this size is essentially $\abs{V}+\abs{\CC}$.
By defining pathwidth and arity of a graph with relations,
we can now use computational problems that take as input a graph with relations
in the context of pathwidth-preserving reductions,
as well as in the context of arity-preserving reductions,
or even \pwar-reductions (\cref{def:paRed}).

Later (in \cref{sec:LBforRelations,sec:realizingRelations}, we additionally need the concept of graphs with relations and portals to generalize the definition of $L$-providers and $L$-realizers from \cref{def:provider} to graphs with relations.

\begin{definition}[Provider/Realizer with Relations]
	\label{def:providersWithRelations}
	Let $G = (V,E,\CC)$ denote a graph with relations and let $U\subseteq V$. Then, $(G,U)$ is a graph with relations and portals.

	A set \(X \subseteq V\) is a \emph{partial solution} of \((G,U)\) if
	\begin{itemize}
		\item
		\(X\) is a partial solution of \((V,E,U)\),
		that is, \(X\) has the properties from \cref{def:partialsol}, and
		\item \(X\) satisfies that $X \cap \scope(C) \in \rel(C)$ for every $C\in \CC$.
	\end{itemize}

	The notions of \emph{compatible strings}, \emph{$L$-providers, and $L$-realizers}
	from \cref{def:compatibleString,def:provider} generalize accordingly.
\end{definition}

As a first example of a graph with relations, we define a special gadget, called a \emph{manager},
which allows us to add selected neighbors to possibly selected vertices.
In fact,
a manager additionally ensures that the vertices
to which it is connected get a valid number of neighbors
such that the $\sigma$-constraints and $\rho$-constraints are always satisfied.
An illustration of such a manager is given in \cref{fig:manager:definition}.

\begin{figure}[t]
  \centering
  \includegraphics[page=3]{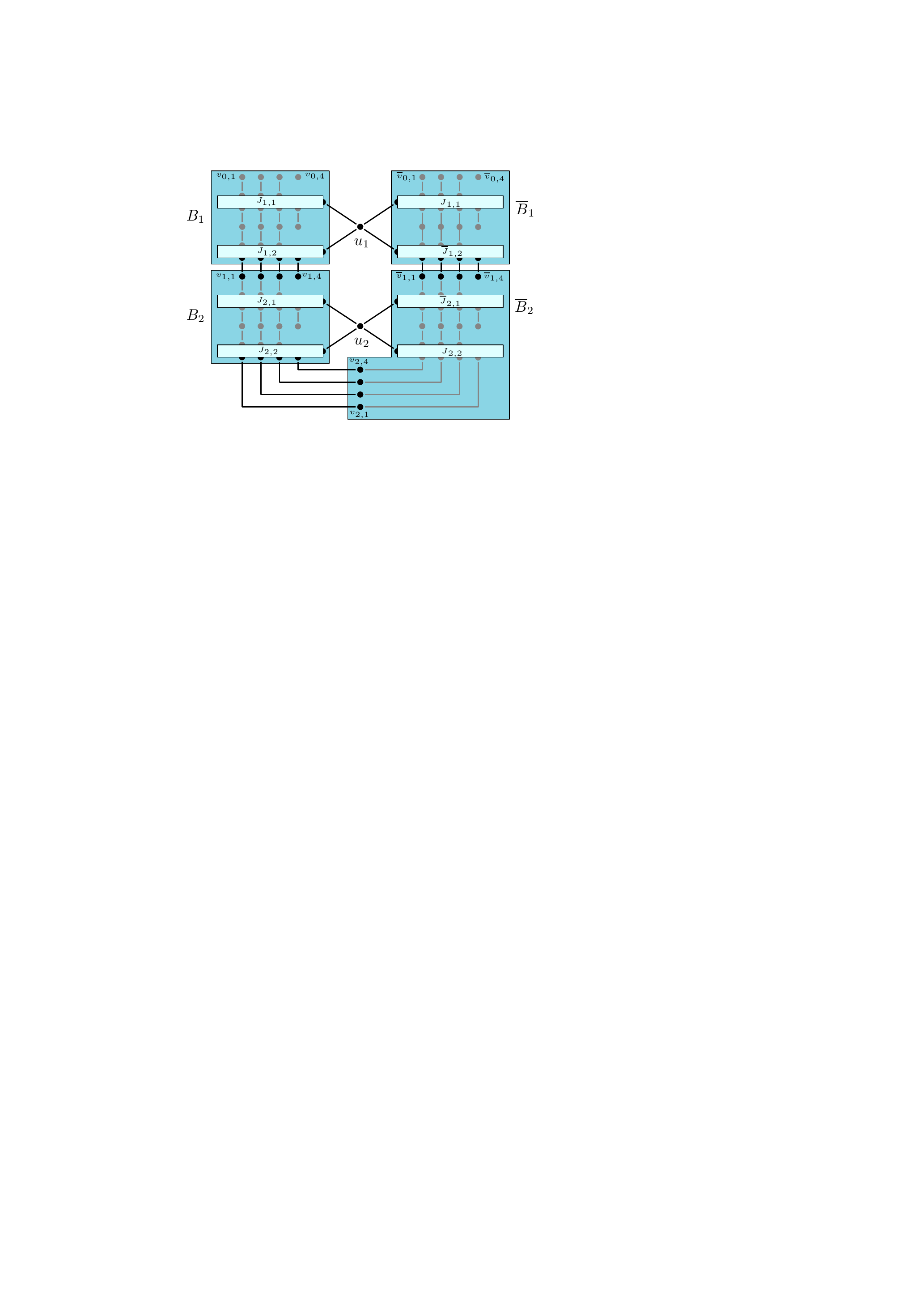}
  \caption{
  An illustration of an \encoder{A} from \cref{def:encoder}
  with $A \subseteq \{\sigma_i, \rho_i \mid i \in \fragment{0}{3} \}$
  for three distinguished vertices ($\ell=3$).
  }
  \label{fig:manager:definition}
\end{figure}

\begin{restatable}[\encoder{A}]{definition}{defEncoder}
  \label{def:encoder}
  For a set $A \subseteq \allStates$, an \emph{\encoder{A}}
  is an infinite family $((G_\ell, \Port_\ell))_{\ell \geq 1}$ of pairs $(G_\ell, \Port_\ell)$ such that
  \begin{itemize}
    \item $G_\ell$ is a graph with relations and
    \item $\Port_\ell=\{\port_1,\dots,\port_{\ell}\} \subseteq V(G_\ell)$
  is a set of $\ell$ distinguished vertices.
  \end{itemize}
  Moreover, there is a non-negative integer $b$ (that depends only on $\sigMax,\rhoMax$)
  such that the following holds for every $\ell \geq 1$:
  \begin{itemize}
    \item
    The vertices from $V(G_\ell) \setminus \Port_\ell$
    can be partitioned into $2\ell$ vertex-disjoint subgraphs
    $\Bl_1,\dots,\Bl_\ell$ and $\Br_1,\dots,\Br_\ell$
    (called \emph{blocks}),
    such that
    \begin{itemize}
      \item $|\Bl_i| \leq b$ and $|\Br_i| \leq b$ for all $i \in \numb{\ell}$,
      \item $N(\port_i) \subseteq \Bl_i \cup \Br_i$ for all $i \in \numb{\ell}$,
      \item there are edges only between the following pairs of blocks:
      $\Bl_i$ and $\Bl_{i+1}$,
      $\Br_i$ and $\Br_{i+1}$, for each $i \in \numb{\ell-1}$,
      and
      $\Bl_\ell$ and $\Br_\ell$.
    \end{itemize}
    \item
    Each $x \in A^\ell \subseteq \allStates^{\ell}$ is \emph{managed} in the sense that
    there is a unique $(\sigma,\rho)$-set $S_x$ of $G_\ell$\footnote{
    Note that all vertices in $G_\ell$
    already have a feasible number of neighbors.
    The graph does not have any potentially ``unsatisfied'' portals.
    In this sense, it is different from the graphs with portals
    that we use in other gadget constructions.}
    such that for all $i\in\numb{\ell}$:
    \begin{itemize}
      \item
      If $x\position{i} = \sigma_s$,
      then $\port_{i} \in S_x$.
      Moreover, $\port_i$ has exactly $s$ neighbors in $\Bl_i\cap S_x$
      and exactly $\sigMax-s$ neighbors in $\Br_i\cap S_x$.
      \item
      If $x\position{i} = \rho_r$,
      then $\port_{i} \notin S_x$.
      Moreover, $\port_i$ has exactly $r$ neighbors in $\Bl_i\cap S_x$
      and exactly $\rhoMax-r$ neighbors in $\Br_i\cap S_x$.
    \end{itemize}
  \end{itemize}
  We refer to $G_\ell$ as the \encoder{A} of \rank $\ell$.
\end{restatable}

Next, we are interested under which conditions manager gadgets exist. The following result gives an
answer to this question; due to its length, we defer the proof to \cref{sec:manager}. Recall the definition of the inverse of a state from \Cref{def:inverseOfStates}.
\existenceOfManager*
Observe that the bounds for the managers
precisely coincide with the bounds for the languages
provided in the accompanying paper~\cite{FockeMMNSSW23i}.
Hence, these managers are the key ingredients to obtain matching lower bounds.

Again, we defer the proof of \cref{lem:lb:existenceOfManager} to \Cref{sec:manager}.
In particular, we prove each case separately; the proof of \cref{lem:lb:existenceOfManager} then follows from
\cref{lem:expanding:onlyRhoStates,lem:expanding:onlyEvenStates,%
lem:expanding:onlySigmaStates,lem:expanding:allStates}.

\subsection{Decision Problem}\label{sec:LB-decision}
As it turns out, we can use the same construction for manager gadgets
for the decision problem and for the counting problem.
However, we need separate constructions for establishing a lower bound for the problem with relations (see \cref{def:lower:domSetRel}),
and we also need separate proof strategies for the reduction from the problem with relations to the original problem without
relations.

\begin{definition}[\srDomSetRel]
  \label{def:lower:domSetRel}
  The problem \srDomSetRel
  takes as input
  a graph with relations $G=(V, E, \CC)$,
  and the task is to decide
  whether there is a $(\sigma,\rho)$-set of $G$.
  We use $\abs{G}$ as the size of the input.

  The counting problem \srCountDomSetRel is defined analogously.
\end{definition}

Our lower bound depends on the ``quality'' of the manager we can achieve.
Thus, we treat the manager as an input, and prove a lower bound based on it.

\lowerBoundForDecDomSetRel*
We defer the proof of \cref{lem:lowerBoundWhenHavingSuitableGadget} to \cref{sec:high-level}.
Lastly, for our lower bound for \srDomSet, we show how to remove the relations from
\srDomSetRel.
\removingRelationsForDecVersion*
The proof of \cref{lem:lower:dec:removingRelations}
is based on the results from \cref{sec:RelToHW1}
and formally given in \cref{sec:realizingRelations:decision}.

Combining the previous results, we obtain \cref{thm:lower-main-intro-decision}, which we
restate for convenience.
\thmLBmaindecision*
\begin{proof}
  By \cref{def:intro:baseOfRunningTime}, we have to consider three different cases.
  Observe that all the managers from \cref{lem:lb:existenceOfManager}
  satisfy the constraints in \cref{lem:lowerBoundWhenHavingSuitableGadget}.

  \begin{itemize}
    \item
    $(\sigma,\rho)$ is not $\mname$-structured for any $\mname\ge 2$.

    Assume there is an $\eps>0$ such that \srDomSet can be solved in time
    $(c_{\sigma,\rho}-\eps)^{k} \cdot n^{\Oh(1)}$
    on graphs of size $n$ given with a path decomposition of width $k$
    where $c_{\sigma,\rho}=\sigMax+\rhoMax+2$.
    Moreover, assume that the SETH holds.

    By \cref{lem:lb:existenceOfManager},
    Case~\ref{lem:lb:existenceOfManager:all},
    we know that there is an \encoder{A} with $\abs{A}=\sigMax+\rhoMax+2$.
    Observe that $c_{\sigma,\rho}=\abs{A}$.

    Hence, by \cref{lem:lowerBoundWhenHavingSuitableGadget},
    there is a constant $d$
    such that there is no algorithm solving \srDomSetRel
    in time $(c_{\sigma,\rho}-\eps)^{k+\Oh(1)} \cdot n^{\Oh(1)}$
    on instances of size $n$ and arity at most $d$
    given with a tree decomposition of width $k$.

    Toward a contradiction,
    let $G$ denote an instance of \srDomSetRel
    where the arity is at most the constant $d$.
    By the pathwidth-preserving reduction from \cref{lem:lower:dec:removingRelations},
    this instance $G$ can be transformed
    into polynomially (in the size of $G$) many instances $H_i$ of \srDomSet
    such that, for each $i$, $\pw(H_i) \le \pw(G) + \Oh(1)$
    and the size of $H_i$ depends only polynomially on the size of $G$.

    For each $i$, we now apply the faster algorithm for \srDomSet on input $H_i$
    and afterward recover the solution for $G$ in time
    \[
      \sum_i (c_{\sigma,\rho} - \epsilon)^{\pw(H_i)} \cdot \abs{H_i}^{\Oh(1)}
      + \abs{G}^{\Oh(1)}
      \le
      (\sigMax + \rhoMax + 2 - \epsilon)^{\pw(G) + \Oh(1)} \cdot \abs{G}^{\Oh(1)}
      .
    \]
    This immediately contradicts our assumption that the SETH holds
    by \cref{lem:lowerBoundWhenHavingSuitableGadget},
    and hence, the result follows.

    \item
    $(\sigma,\rho)$ is $2$-structured,
    but not $\mname$-structured for any $\mname\ge 3$,
    and $\sigMax=\rhoMax$ is even.

    In this case, we use the same arguments as before
    with the difference being that for the algorithm we have
    $c_{\sigma,\rho} = \max\{\sigMax, \rhoMax\}+2$.
    As a consequence, we use the \encoder{A}
    from \cref{lem:lb:existenceOfManager},
    Case~\ref{lem:lb:existenceOfManager:even}
    with $\abs{A}=(\sigMax+\rhoMax)/2+2=\max\{\sigMax,\rhoMax\}+2$.

    Observe that we can use this encoder
    as $\rho$ being $2$-structured implies that $\rho$ is finite
    but $\rho \neq \{0\}$, by assumption.
    Hence, we directly get $\rhoMax \ge 1$.

    \item
    $(\sigma,\rho)$ is $\mname$-structured
    for an $\mname\ge 3$,
    or $2$-structured with $\sigMax\neq \rhoMax$,
    or $2$-structured with $\sigMax=\rhoMax$ being odd.

    Once more we use the same approach as for the first case,
    but now we have an algorithm with
    $c_{\sigma,\rho}=\max\{\sigMax,\rhoMax\}+1$.

    If $\sigMax \ge \rhoMax$,
    then we want to use the \encoder{A} from \cref{lem:lb:existenceOfManager},
    Case~\ref{lem:lb:existenceOfManager:sigma} with $\abs{A}=\sigMax+1$.
    For this case to be applicable we need $\rhoMax \ge 1$.
    Whenever $\rhoMax = 0$, we either have $\rho=\{0\}$ or $\rho=\NN$.
    The first case is not possible by assumption.
    The second case implies that $\rho$ is $1$-structured,
    but not $\mname$-structured for any $\mname \ge 2$.
    This immediately contradicts our assumption.
    Finally, if $\rhoMax \ge \sigMax$,
    then we can use the \encoder{A} from \cref{lem:lb:existenceOfManager},
    Case~\ref{lem:lb:existenceOfManager:rho} with $\abs{A}=\rhoMax+1$.
    \qedhere
  \end{itemize}
\end{proof}
Observe that if $0 \in \rho$, the empty set is a trivial solution and \srDomSet becomes
trivial. Hence, \(0 \notin \rho\) is indeed a reasonable assumption.
\begin{observation}
  \label{lem:decision:polyCases}
  For all $\sigma, \rho \subseteq \NN$ with \(0 \in \rho\),
  \srDomSet can be solved in \(\Oh(1)\) time.
\end{observation}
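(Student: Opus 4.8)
The plan is to show that \emph{every} graph $G$ admits a $(\sigma,\rho)$-set whenever $0\in\rho$, so that the decision algorithm for \srDomSet can simply answer ``yes'' without even inspecting its input; this clearly runs in $\Oh(1)$ time. The witness I would use is the empty set $S=\emptyset$.

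The verification proceeds in two short steps. First, the $\sigma$-constraint: since $S=\emptyset$ contains no vertices, the condition ``$|N(u)\cap S|\in\sigma$ for every $u\in S$'' holds vacuously, irrespective of $\sigma$ (indeed even if $\sigma=\emptyset$). Second, the $\rho$-constraint: for every $v\in V(G)\setminus S=V(G)$ we have $|N(v)\cap S|=|N(v)\cap\emptyset|=0$, and $0\in\rho$ by hypothesis, so the condition ``$|N(v)\cap S|\in\rho$ for every $v\in V(G)\setminus S$'' holds as well. Hence $\emptyset$ is a $(\sigma,\rho)$-set of $G$.

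Since this argument is independent of $G$, the answer to \srDomSet is always ``yes'', and the claimed $\Oh(1)$-time algorithm follows. There is no genuine obstacle here; the only points worth a remark are the degenerate cases $V(G)=\emptyset$ and $\sigma=\emptyset$, both of which are still covered because the $\sigma$-constraint is vacuous in either situation and the $\rho$-constraint still holds at $0$. This matches the remark immediately preceding the statement, and it is precisely why the decision lower bound in \cref{thm:lower-main-intro-decision} is stated only for pairs with $0\notin\rho$.
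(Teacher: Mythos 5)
Your proof is correct and is exactly the paper's argument: the remark immediately preceding the observation notes that the empty set is a trivial $(\sigma,\rho)$-set whenever $0\in\rho$, so the algorithm may answer ``yes'' unconditionally in $\Oh(1)$ time. Nothing further is needed.
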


\subsection{Counting Problem}\label{sec:LB-counting}

We have mentioned earlier that the lower bounds for the counting version
follow the same ideas as the bounds for the decision version.
In particular, we can directly reuse the construction of managers
for the counting version.
Moreover, the lower bound for the intermediate problem (with relations)
is based on the result for the decision version.
However, as we allow cofinite sets for the counting version,
further investigation is needed to obtain the following result
where we again use the managers in a black-box style.
\lowerBoundForCountDomSetRel*
We prove \cref{lem:count:lowerBoundWhenHavingSuitableGadget}
in \cref{sec:high-level:counting}.

While our approaches for the decision version and for the counting version were so far very similar,
this does not hold for the last step in which we remove the relations.
Although the results are still comparable,
the following reduction is much more involved than the result for the decision version.
With \cref{fact:trivial-pairs-counting} in mind,
recall that a pair $\sigma, \rho$ of non-empty sets is \emph{trivial}
if $\rho=\{0\}$ or $\sigma =\rho=\ZZ_{\ge0}$.
Otherwise, we say that $(\sigma, \rho)$ is \emph{non-trivial}.
\thmCountRemovingRelations*

We defer the proof of \cref{thm:RelfromDS} to \Cref{sec:realizingRelations:counting}.
Combining the previous results,
we obtain the lower bound for the counting version in \cref{thm:lower-main-intro}.
\thmLBmaincounting*
\begin{proof}
  The proof follows in the same way as the proof of \cref{thm:lower-main-intro-decision};
  the only difference is that we use the intermediate lower bound
  for \srCountDomSetRel from \cref{lem:count:lowerBoundWhenHavingSuitableGadget}
  and the reduction from \cref{lem:lower:count:removingRelations} to remove the relations.
\end{proof}

\section{Constructing Providers}
\label{sec:provider}
In our quest to construct the managers from \cref{lem:lb:existenceOfManager}, we start
with a set of simple, but useful \(L\)-providers (recall \cref{def:provider}).
Later, in \cref{sec:manager}, we then
use said providers in the final construction for \cref{lem:lb:existenceOfManager}.

We start with simple providers in \cref{sec:provider:easy}, which in particular includes
providers for languages that contain only \(\rho\)-states.
Next, in \cref{sec:provider:sigma} we construct providers for languages that contain only
$\sigma$-states; later, we use these providers to
construct the managers from Cases~\ref{lem:lb:existenceOfManager:rho}
and~\ref{lem:lb:existenceOfManager:sigma} in \cref{lem:lb:existenceOfManager},
that is, \encoder{A}s where we have that either
$A \subseteq \rhoStates$ or $A \subseteq \sigStates$.

Finally, in \cref{sec:provider:complex}, we construct  providers
that contain $\sigma$-states together with $\rho$-states.
These providers then allow us to construct \encoder{A}s
for \(A\) that contain states from both $\sigStates$ and $\rhoStates$.
These managers correspond to Cases~\ref{lem:lb:existenceOfManager:even}
and~\ref{lem:lb:existenceOfManager:all} in \cref{lem:lb:existenceOfManager}.

\subsection{\boldmath Simple General-Purpose Providers}
\label{sec:provider:easy}

We start with basic providers that add a single \(\sigma\) state and a single \(\rho\)-state.
This type of provider turns out to be useful in many constructions and is used repeatedly throughout the rest of this work.
For this reason, we also give additional properties on the partial solution sets that are needed later on.

\begin{restatable}{lemma}{lemHappyGadgetCounting}
	\label{lem:happyGadget}
	Let $(\sigma,\rho)$ be non-empty sets
	with $s\in \sigma$, $r\in \rho$, and $r\ge 1$.
	There is a $\{\sigma_s,\rho_r\}$-provider $(G,\{u\})$ that has disjoint $(\sigma,\rho)$-sets $X$ and $Y$ with
	$X\cup Y=V(G)$, $u\in X$ with $\abs{N(u)\cap X}=s$,
	and $u\notin Y$ with $\abs{N(u)\cap Y}=r$.
\end{restatable}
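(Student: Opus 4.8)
The plan is to build $G$ as two equal-size vertex sets $X$ and $Y$ glued by a regular bipartite graph, arranged so that \emph{both} $X$ and its complement $Y$ are $(\sigma,\rho)$-sets of $G$. Fix an even integer $N$ with $N\ge s+1$ and $N\ge r$; concretely $N\deff 2(s+r+1)$ works. Let $X=\{x_0,\dots,x_{N-1}\}$ and $Y=\{y_0,\dots,y_{N-1}\}$ be disjoint vertex sets. On $X$ place a simple $s$-regular graph --- for instance the circulant graph in which $x_i$ is adjacent to $x_{i\pm 1},\dots,x_{i\pm\lfloor s/2\rfloor}$ (indices mod $N$), together with the perfect matching $x_i\sim x_{i+N/2}$ when $s$ is odd; with the chosen $N$ this is simple and $s$-regular. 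Put an isomorphic copy of this graph on $Y$. Between $X$ and $Y$ place the $r$-regular bipartite graph in which $x_i$ is adjacent to $y_i,y_{i+1},\dots,y_{i+r-1}$ (indices mod $N$); since $r\le N$ this is simple, and every vertex of $X$ has exactly $r$ neighbours in $Y$ and every vertex of $Y$ has exactly $r$ neighbours in $X$. Finally set $u\deff x_0$.

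Next I would verify the two partitions. Take $X$ as the selected set: each $x\in X$ has $|N(x)\cap X|=s\in\sigma$ and each $y\in Y$ has $|N(y)\cap X|=r\in\rho$, so $X$ is a $(\sigma,\rho)$-set of $G$; in particular $u=x_0\in X$ with $|N(u)\cap X|=s$. Symmetrically, taking $Y$ as the selected set, each $y\in Y$ has $|N(y)\cap Y|=s\in\sigma$ and each $x\in X$ has $|N(x)\cap Y|=r\in\rho$, so $Y$ is a $(\sigma,\rho)$-set of $G$; here $u\notin Y$ with $|N(u)\cap Y|=r$. By construction $X\cap Y=\emptyset$ and $X\cup Y=V(G)$, which gives all the asserted properties of $X$, $Y$, and $u$.

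Finally I would note that $(G,\{u\})$ is indeed a $\{\sigma_s,\rho_r\}$-provider, i.e., $\{\sigma_s,\rho_r\}\subseteq L(G,\{u\})$. Since $X$ is a $(\sigma,\rho)$-set of $G$, it is in particular a partial solution of $(G,\{u\})$ in the sense of \cref{def:partialsol}, it contains $u$, and $|N(u)\cap X|=s$; hence $\sigma_s$ is compatible with $(G,\{u\})$. Likewise $Y$ is a partial solution of $(G,\{u\})$ with $u\notin Y$ and $|N(u)\cap Y|=r$, so $\rho_r$ is compatible, and the provider property follows from \cref{def:provider}.

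The only point requiring (entirely routine) care is the existence of a \emph{simple} $s$-regular graph on $N$ vertices: one needs $N$ even so that $sN$ is even, $N\ge s+1$, and that the circulant connection offsets $1,\dots,\lfloor s/2\rfloor$ together with the offset $N/2$ (for odd $s$) are pairwise distinct and non-antipodal among themselves, which is immediate once $\lfloor s/2\rfloor< N/2$ --- guaranteed by the generous choice $N=2(s+r+1)$. There is no genuine obstacle here; the construction is deliberately explicit so that all degree counts are transparent, and the cases $s=0$ (an edgeless $X$) and general $r\ge 1$ are covered uniformly.
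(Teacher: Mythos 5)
Your proof is correct and takes essentially the same approach as the paper: partition $V(G)$ into two halves $X$ and $Y$, each with internal degree $s$ and cross degree $r$, so that both halves are full $(\sigma,\rho)$-sets, witnessing $\sigma_s$ and $\rho_r$ at the portal $u$. The paper realizes this degree pattern with $r$ cliques of order $s+1$ on each side joined index-wise into bicliques, whereas you use circulant $s$-regular graphs plus an $r$-regular bipartite circulant; this is only a difference in the concrete realization of the regular graphs, not in the argument.
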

\begin{proof}
	Let $G$ consist of $2r$ cliques $X_1,\dots, X_r, Y_1, \dots Y_r$,
	each of order $s+1$.
	For $i\in \nset{r}$,
	let $x_0^{(i)},\dots, x_s^{(i)}$ be the vertices of $X_i$,
	and let $y_0^{(i)},\dots, y_s^{(i)}$ be the vertices of $Y_i$.
	These cliques are connected in a way that they form bicliques indexwise,
	that is, for each $i\in \nset{r}$ and each $j\in \fragment{0}{s}$,
	the vertex $x_j^{(i)}$ is adjacent not only to the vertices in its clique $X_i$,
	but also to the vertices $y_j^{(1)}, \dots, y_j^{(r)}$.
	Let $u=x_1^{(1)}$.
	Note that both $X=\bigcup_{i=1}^r X_i$ and $Y=\bigcup_{i=1}^r Y_i$
	are $(\sigma,\rho)$-sets of $G$,
	where $u\in X$ with $\abs{N(u)\cap X}=s$,
	and $u\notin Y$ with $\abs{N(u)\cap Y}=r$.
	Consult \cref{sf-73} for a visualization of the construction and the aforementioned (partial) solutions.
\end{proof}

\begin{figure}[t]
    \begin{subfigure}[b]{.49\textwidth}
        \centering
        \includegraphics[scale=1.3]{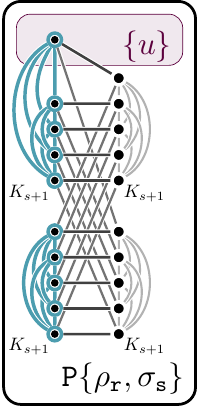}\qquad
        \includegraphics[scale=1.3]{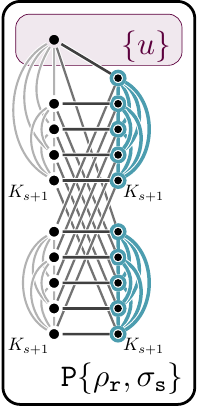}
        \caption{Valid partial solutions for the $\{\sigma_s,\rho_r\}$-provider constructed in \cref{lem:happyGadget} for the case when \(r \geq 1\)
            (\(r=2\) in the figure).
        }\label{sf-73}
    \end{subfigure}
    \begin{subfigure}[b]{.49\textwidth}
        \centering
        \includegraphics[scale=1.7]{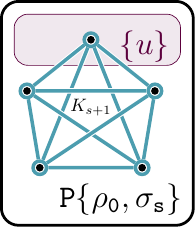}\qquad
        \includegraphics[scale=1.7]{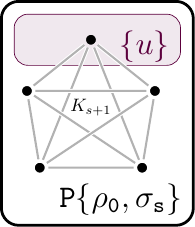}\\
        \vspace{1.5ex}\null
        \caption{Valid partial solutions for the $\{\sigma_s,\rho_r\}$-provider constructed in \cref{lem:fillinggadget} for the case when \(r = 0\).
        }\label{sf-64}
    \end{subfigure}
    \caption{The gadget constructions from \cref{lem:happyGadget,lem:fillinggadget}. We use \(\tt
    P\{\rho_0,\sigma_s\}\) in future figures for appropriate versions of the
    providers of this figure.}
\end{figure}

The last lemma excludes the case $r = 0$, which is why we also give the following lemma.

\begin{lemma}\label{lem:cdgadget}\label{lem:fillinggadget}
    For any $s\in \sigma$ and $r\in \rho$, there is a $\{\sigma_s,\rho_r\}$-provider.
\end{lemma}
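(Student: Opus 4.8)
The plan is to reduce to \cref{lem:happyGadget} whenever we can, and to dispatch the single remaining case by hand with a clique gadget. First I would observe that \cref{lem:happyGadget} already proves a strictly stronger statement whenever $r \ge 1$: it outputs a $\{\sigma_s,\rho_r\}$-provider (in fact one with extra structure that we do not need here). So the only situation left to treat is $r = 0$, and in this case we moreover have $0 = r \in \rho$.

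For $r = 0$ I would take $G$ to be the complete graph $K_{s+1}$ and let the portal $u$ be an arbitrary one of its $s+1$ vertices; in the degenerate case $s = 0$ this is simply a single isolated vertex $u$. Two partial solutions then suffice to show $\{\sigma_s,\rho_0\}\subseteq L(G,\{u\})$. Selecting all of $V(G)$ gives every vertex exactly $s$ selected neighbours, so (as $s\in\sigma$) every non-portal vertex satisfies its $\sigma$-constraint, and $u$ is selected with exactly $s$ selected neighbours, witnessing the state $\sigma_s$. Selecting nothing is always a partial solution because $0\in\rho$: every non-portal vertex is unselected with $0$ selected neighbours, and $u$ is unselected with $0 = r$ selected neighbours, witnessing $\rho_r=\rho_0$. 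Together these give $\{\sigma_s,\rho_r\}\subseteq L(G,\{u\})$, so $(G,\{u\})$ is the desired $\{\sigma_s,\rho_r\}$-provider.

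I do not expect any real obstacle: this is essentially a bookkeeping lemma whose only purpose is to cover the corner case $r=0$ that \cref{lem:happyGadget} explicitly excluded, and the clique gadget makes the argument a two-line verification. The only points requiring a moment of care are the degenerate instance $s=0$ (where $K_{s+1}$ collapses to a lone vertex, and one still checks that both the empty and the full selection are partial solutions) and, pedantically, that here the state $\sigma_s$ is read in the potentially infinite alphabet $\allStatesExt$ rather than in $\allStates$; neither causes any difficulty since the witnesses above satisfy all conditions of \cref{def:provider} verbatim.
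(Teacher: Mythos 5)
Your proposal is correct and follows essentially the same route as the paper: for $r\ge 1$ it defers to \cref{lem:happyGadget}, and for $r=0$ it uses a clique on $s+1$ vertices with an arbitrary portal, verifying the full and empty selections as the two witnesses. The extra remarks about the degenerate case $s=0$ and the alphabet are harmless additions; no gap.
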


\begin{proof}
    We define a graph $G$ with a single portal $u$.
    If $r \neq 0$, then we define $(G,\{u\})$ to be the graph constructed in \cref{lem:happyGadget}.

    So, suppose $r = 0$.
    We choose \(G\) to be a clique on $s+1$ vertices, and we declare any of its vertices to be $u$.
    Observe that both selecting all vertices and selecting no vertices constitute valid
    partial solutions, and hence, the strings \(\sigma_s\) and \(\rho_0 = \rho_r\) are
    compatible with \((G,\{u\})\). Hence, $(G,\{u\})$ is a $\{\sigma_s,\rho_r\}$-provider.
    Consult \cref{sf-64} for a visualization of the construction and the aforementioned (partial) solutions.
\end{proof}

Next, we construct a provider for all possible elements from \(\rhoStates\).

\begin{lemma}\label{lem:trivialrhogadget}
    For non-empty sets $\sigma$ and $\rho$, there is a $\{\rho_0,\rho_1, \ldots, \rho_{\rhoMax}\}$-provider.
\end{lemma}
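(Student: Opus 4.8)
## Proof plan for Lemma~\ref{lem:trivialrhogadget}

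The goal is to construct, for arbitrary non-empty $\sigma,\rho$, a single graph with a portal $(G,\{u\})$ whose realized language contains all of $\{\rho_0,\rho_1,\dots,\rho_{\rhoMax}\}$. The key insight is that we already have building blocks from \cref{lem:fillinggadget}: for any fixed $s\in\sigma$ and $r\in\rho$, there is a $\{\sigma_s,\rho_r\}$-provider. In particular, picking $s=\sigMin$ and $r=\rhoMin$, we get a gadget that, in one of its partial solutions, has its portal \emph{unselected} with exactly $\rhoMin$ selected neighbors inside the gadget. The plan is to take several disjoint copies of such a provider, attach all their portals by an edge to a fresh central vertex $u$, and argue that by choosing how many of the copies are ``activated'' we can make $u$ receive any number of selected neighbors in a controlled range, while $u$ itself remains unselected. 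The subtlety is that $u$ must end up with exactly $j$ selected neighbors for each target $j\in\fragment{0}{\rhoMax}$, and each attached copy contributes either $0$ or $1$ to this count depending on whether its portal is in the partial solution.

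Concretely, I would proceed as follows. First, handle the degenerate edge case $\rhoMax = 0$: then $\rho$ either equals $\{0\}$ (so $\rho_0$ is the only state to provide, and a single isolated vertex $u$ with $\sigMin$ appropriately-attached private structure, or even just the provider from \cref{lem:fillinggadget} with $r=0$, suffices) or $\rho=\{0,1,\dots\}$ is cofinite with $0$ its only missing-free... wait — more carefully, $\rhoMax=0$ forces $\rho=\{0\}$ or $\rho$ cofinite with $\max(\ZZ\setminus\rho)=-1$, i.e. $\rho=\NN$; in both cases only $\rho_0$ is required and \cref{lem:fillinggadget} with $r=0$ already gives a $\{\rho_0\}$-provider. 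So assume $\rhoMax\ge 1$. Now take $\rhoMax$ disjoint copies $(H_1,\{u_1\}),\dots,(H_{\rhoMax},\{u_{\rhoMax}\})$ of the $\{\sigma_{\sigMin},\rho_{\rhoMin}\}$-provider from \cref{lem:fillinggadget}; if $\rhoMin>0$ we instead want copies whose portal can be unselected with $0$ selected neighbors inside, so I would rather use a provider realizing $\rho_0$ — and here I need to be a little careful: \cref{lem:fillinggadget} gives $\{\sigma_s,\rho_r\}$ for some $r\in\rho$, which need not be $0$. To get a gadget contributing a controllable $0$-or-$1$ to $u$'s neighbor count, the cleanest route is: let each $H_i$ be a clique $K_{\sigMin+1}$ (if $\rho\ni 0$) or, in general, attach to $u$ a pendant vertex $p_i$ together with a $\{\sigma_{\sigMin},\rho_{\rhoMin}\}$-provider hung off $p_i$ so that $p_i$ can be either selected (needing $\sigMin$ neighbors, supplied by its private provider plus possibly $u$) or unselected (needing $\rhoMin$ neighbors, again supplied privately). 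The point is to arrange that $p_i$ has a feasible partial solution both when $p_i\in S$ (contributing $1$ to $|N(u)\cap S|$) and when $p_i\notin S$ (contributing $0$), regardless of whether $u\in S$, and without $u$ ever being forced into $S$.

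With $\rhoMax$ such pendant-plus-private-provider attachments at $u$, any target $j\in\fragment{0}{\rhoMax}$ is achieved by selecting exactly $j$ of the $p_i$'s (and making the corresponding private choices); since $u\notin S$ in all these solutions and $|N(u)\cap S| = j \in \rho$ whenever... no — we do not even need $j\in\rho$, because $u$ is a \emph{portal} and hence exempt from the $\rho$-constraint in the definition of partial solution (\cref{def:partialsol}); we only need $u\notin S$ and $|N(u)\cap S|=j$, which is exactly the condition for the compatible string $\rho_j$. Thus $\rho_j\in L(G,\{u\})$ for every $j\in\fragment{0}{\rhoMax}$, so $(G,\{u\})$ is a $\{\rho_0,\dots,\rho_{\rhoMax}\}$-provider. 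The main obstacle, and the place where I would spend the most care, is the construction of the individual ``pendant'' gadget $H_i$: one must certify that its internal vertices (including $p_i$) get feasible degrees in \emph{both} the ``$p_i$ selected'' and ``$p_i$ unselected'' configurations, uniformly in the selection status of $u$ and in particular without the edge $up_i$ forcing anything — this is precisely the kind of degree-balancing bookkeeping done in \cref{lem:happyGadget,lem:fillinggadget}, and the honest write-up reduces to invoking those lemmas on the private sub-providers attached to each $p_i$ and checking the one extra neighbor that $u$ may or may not contribute to $p_i$.
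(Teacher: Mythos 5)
Your construction is correct and is essentially the paper's own proof: take $\rhoMax$ disjoint copies of the $\{\sigma_s,\rho_r\}$-provider from \cref{lem:fillinggadget}, make each copy's portal adjacent to a fresh central portal $u$, and realize $\rho_j$ by selecting exactly $j$ of the copies in their $\sigma_s$-configuration (the detour about needing $\rho_0$-type copies and the separate $\rhoMax=0$ case are unnecessary, as you yourself conclude, since each attached portal contributes $0$ or $1$ to $u$'s count via the edge and keeps a feasible degree because $u$ is unselected). No gap.
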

\begin{figure}[t]
    \centering
    \includegraphics[scale=1.5]{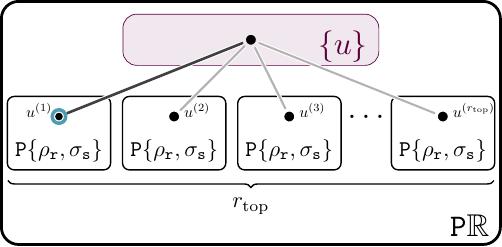}\quad
    \includegraphics[scale=1.5]{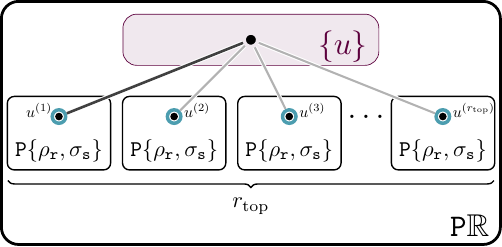}
    \caption{The gadget constructions from \cref{lem:trivialrhogadget} with exemplary
    partial solutions that correspond to \(\rho_1\) and \(\rho_{\rhoMax}\).
    For any $\{\sigma_s,\rho_r\}$-provider \((H^{(i)}, \{\port^{(i)}\})\) used in the
    construction, we depict only its portal vertex and label it with \(u^{(i)}\).
    We use \(\tt P{\rhoStates}\) in future figures for appropriate versions of the providers
    of this figure.}\label{fig:trivialrhogadget}
\end{figure}
\begin{proof}
    Fix $s\in \sigma$ and $r\in\rho$.
    We define a graph $G$ with a single portal $\port$.
    To that end, we take \(\rhoMax\) independent copies
    \((H^{(i)}, \{\port^{(i)}\})\) of the
    $\{\sigma_s,\rho_r\}$-provider from \cref{lem:cdgadget}, and connect each vertex
    \(\port^{(i)}\) to the vertex \(\port\).
    Consult \cref{fig:trivialrhogadget} for a visualization.

    By \cref{lem:cdgadget}, each provider \((H^{(i)},\{\port^{(i)}\})\) has at least a partial solution
    that selects \(\port^{(i)}\) and a partial solution that does not select
    \(\port^{(i)}\). Hence, for each \(j \in \fragment{0}{\rhoMax}\),
    the constructed graph \(G\) has at least one partial solution that selects exactly
    \(j\) neighbors of \(\port\); this completes the proof.
\end{proof}

\subsection{\boldmath A Provider for \texorpdfstring{$\sigma$}{Sigma}-States}
\label{sec:provider:sigma}

\def\modgraph#1#2{M^{(#1)}_{#2}}
For our next construction, we need regular (bipartite) graphs, which are fortunately easy
to construct.
\begin{lemma}
    \label{la:construct-regular-graph}
    For any non-negative integer \(d \le n\), the bipartite graph \(\modgraph{d}{n}\),
    \begin{align*}
        V(\modgraph{d}{n}) &\coloneqq \{v_0,\dots,v_{n-1}\} \cup \{w_0,\dots,w_{n-1}\},\\
        E(\modgraph{d}{n}) &\coloneqq \{\{v_i, w_j\} \mid ((i - j) \bmod n) \in
        \fragmentco{0}{d}\}
    \end{align*}
    is \(d\)-regular and contains an edge \(\{v_i,w_i\}\) for each \(i \in
    \fragmentco{0}{n}\).
\end{lemma}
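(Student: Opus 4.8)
The plan is to read off all three claimed properties directly from the definition of $\modgraph{d}{n}$; each of them reduces to the single fact that translation and negation are bijections of $\ZZ/n\ZZ$, so no real case analysis is needed.

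First I would observe that $\modgraph{d}{n}$ is a well-defined simple bipartite graph: by definition $E(\modgraph{d}{n})$ consists only of pairs $\{v_i, w_j\}$, so there are no loops and no parallel edges, and the sets $\{v_0,\dots,v_{n-1}\}$ and $\{w_0,\dots,w_{n-1}\}$ form a bipartition. The core step is the degree count. Fix $i \in \fragmentco{0}{n}$; the map $j \mapsto (i-j)\bmod n$ is a bijection from $\fragmentco{0}{n}$ to itself, being the composition on $\ZZ/n\ZZ$ of $x \mapsto -x$ with the translation $x \mapsto x+i$. Hence exactly $\abs{\fragmentco{0}{d}} = d$ values of $j$ satisfy $(i-j)\bmod n \in \fragmentco{0}{d}$, i.e.\ $\deg_{\modgraph{d}{n}}(v_i) = d$. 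Swapping the roles of $i$ and $j$, the map $i \mapsto (i-j)\bmod n$ is likewise a bijection of $\fragmentco{0}{n}$, so $\deg_{\modgraph{d}{n}}(w_j) = d$ for every $j$; thus $\modgraph{d}{n}$ is $d$-regular.

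For the diagonal edges I would simply note that $(i-i)\bmod n = 0 \in \fragmentco{0}{d}$ whenever $d \ge 1$, so $\{v_i, w_i\} \in E(\modgraph{d}{n})$ for every $i \in \fragmentco{0}{n}$ (for $d = 0$ the graph is edgeless, so this last assertion is to be read in the nontrivial range $d \ge 1$). There is no genuine obstacle here; the only spots that warrant a moment of care are the boundary cases $d = 0$ (edgeless graph) and $d = n$ (complete bipartite graph), both of which the bijection argument covers without modification.
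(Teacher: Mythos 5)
Your proof is correct and uses essentially the same argument as the paper: the paper counts, for each residue $b \in \fragmentco{0}{d}$, the unique solutions of $i - x \equiv_n b$ and $x - i \equiv_n b$, which is exactly your bijection-of-$\ZZ/n\ZZ$ degree count, and the diagonal edge follows from $0 \in \fragmentco{0}{d}$ in both treatments. Your extra remark that the diagonal-edge claim needs $d \ge 1$ is a fair (and harmless) refinement that the paper glosses over.
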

\begin{proof}
    Observe that, for each \(b \in \fragmentco{0}{d}\) and each \(i \in
    \fragmentco{0}{n}\), there is exactly one solution each to
    the equations \(i - x \equiv_{n} b\) and \(x - i \equiv_n b\); the claim follows.
\end{proof}

Further, recall from \cref{sec:prelims} that, for a string \(x\) and a character
$a\in \allStates$, we write $\occ{x}{a}$ to denote the number of occurrences of $a$ in $x$.

\begin{lemma}
    \label{lem:sigmaGadget}
    For an \(r \ge 1\), consider the language
    \[L_{r} \coloneqq \{
            x \in \{\sigma_0,\sigma_1\}^{4r}
    \mid \occ{x}{\sigma_1\!} \in \{0,2r\}\}.\]

    For any set \(\rho\) that contains \(r\), and any set \(\sigma\) that contains an
    \(s \ge r\), there is an $L_r$-provider.
    Moreover, the closed neighborhoods of portals in \(\Port\) are pairwise disjoint.
\end{lemma}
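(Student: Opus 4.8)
The plan is to realize $L_r$ by a single connected graph $G$ in which the portals $\Port=\{\port_1,\dots,\port_{4r}\}$ form an independent set and each $\port_i$ is adjacent only to dedicated ``port vertices'' of auxiliary gadgets, with no port vertex serving two portals; this last feature immediately gives the ``moreover'' part, that the closed neighborhoods of the portals are pairwise disjoint. The central building block is a \emph{connector gadget}: a graph $\Gamma$ with two distinguished port vertices $p,q$ admitting two configurations. In the \emph{active} configuration $p$ and $q$ are selected and have exactly $s-1$ selected neighbors inside $\Gamma$ each, while every other vertex of $\Gamma$ satisfies its $(\sigma,\rho)$-constraint; in the \emph{inactive} configuration $p,q$ are unselected and have exactly $r-1$ selected neighbors inside $\Gamma$, while again every other vertex is satisfied. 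I would add one connector $\Gamma_{ij}$ for every pair $\{i,j\}$, joining $p$ of $\Gamma_{ij}$ to $\port_i$ and $q$ of $\Gamma_{ij}$ to $\port_j$ by a single edge each. Then in the active configuration $\port_i$ gains exactly one selected neighbor (and $p$ has $s-1+1=s\in\sigma$), whereas in the inactive configuration $\port_i$ gains no selected neighbor from $\Gamma_{ij}$ (and $p$ has $r-1+1=r\in\rho$).

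Granting such connectors, that $(G,\Port)$ is an $L_r$-provider is immediate. For the string $\sigma_0^{4r}$, put every connector in its inactive configuration: no port vertex is selected, so every portal is selected with zero selected neighbors, and every non-portal vertex is satisfied. For a string with $\sigma_1$ on a set $T\subseteq\Port$ of size $2r$, use that $|T|$ is even: fix a perfect matching $N$ on $T$, put the connectors corresponding to the edges of $N$ in their active configuration and all others in their inactive configuration. A portal in $T$ then has exactly one selected neighbor (the port vertex of its matched connector; all its other connectors are inactive and contribute nothing), hence state $\sigma_1$; a portal outside $T$ has all its connectors inactive, hence state $\sigma_0$. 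So every string of $L_r$ is compatible with $(G,\Port)$, which is what is needed.

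What remains — and what I expect to be the crux — is constructing the connector $\Gamma$. The natural attempt is to build it from copies of the gadget of \cref{lem:happyGadget} for the pair $(\sigma_s,\rho_r)$ (available since $s\in\sigma$, $r\in\rho$, $r\ge 1$), exploiting its two \emph{disjoint} $(\sigma,\rho)$-sets $X_0\ni w$ with $|N(w)\cap X_0|=s$ and $Y_0\not\ni w$ with $|N(w)\cap Y_0|=r$ that together cover all of its vertices: the active configuration of $\Gamma$ restricts to an $X_0$-type solution on each copy, the inactive one to a $Y_0$-type solution. The difficulty is that the port vertices $p,q$ must have one fewer selected neighbor inside $\Gamma$ than the portal of an unmodified copy of that gadget in \emph{both} configurations, to make room for the single edge to the portal; deleting an edge at a port to create this room produces a compensating degree deficit at another vertex, which I would absorb by wiring the affected internal vertices of the two port components together through the regular bipartite graphs of \cref{la:construct-regular-graph}, whose degrees can be chosen exactly. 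The delicate step is verifying that after all adjustments every internal vertex still has a $\sigma$-value when selected and a $\rho$-value when unselected, using only that $\sigma$ contains the single value $s\ge r$: a naive clique-based connector fails here because it leaves an unrepairable off-by-one deficit at a second vertex (a parity obstruction already for $\sigma=\{s\}$ with $s$ even), and the regular-graph bookkeeping is precisely what should route these deficits away.
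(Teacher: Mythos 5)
Your top-level reduction is sound and genuinely different from the paper's: instead of one gadget per $2r$-subset of the portals (which is what the paper builds), you use one two-port connector per pair of portals and select a perfect matching inside the set of $\sigma_1$-portals, exploiting that $2r$ is even. Given the connector, the verification you give (portals always selected, each active connector contributing exactly one selected neighbor to each of its two portals, disjoint closed neighborhoods) is correct, and it would even be a more economical construction. However, the proof has a genuine gap exactly where you say you expect the crux to be: you never construct the connector, i.e.\ a two-portal $\{\sigma_{s-1}\sigma_{s-1},\ \rho_{r-1}\rho_{r-1}\}$-provider, and this object is essentially the entire content of the lemma. The paper's whole proof is devoted to building (a $2r$-port version of) precisely this gadget, under nothing more than $r\in\rho$, $s\in\sigma$, $s\ge r\ge 1$, where $\sigma$ and $\rho$ may be singletons; so deferring it reduces the lemma to a statement of the same difficulty.

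Moreover, the repair strategy you sketch does not obviously work. Starting from \cref{lem:happyGadget} and deleting one clique edge and one biclique edge at the port creates two deficits that live in \emph{different} configurations: the clique-mate is short one selected neighbor only in the active ($X$-type) configuration, while the biclique neighbor $y$ is short one only in the active configuration as well but is \emph{selected} in the inactive ($Y$-type) configuration. Any new edge you add to feed the deficient $y$-vertex a selected neighbor in the active configuration ends at a vertex that is unselected in the inactive configuration, and since $y$ is selected there, that vertex then has $r+1$ selected neighbors, which is infeasible when $\rho=\{r\}$; wiring the two deficient $y$-vertices of the two port components to each other fails for the opposite reason (both are unselected in the active configuration, so neither gains the needed neighbor). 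Invoking \cref{la:construct-regular-graph} does not by itself resolve this, because the problem is not realizing a degree sequence but keeping both global configurations simultaneously feasible. The paper sidesteps exactly this by designing the gadget from scratch: the port-adjacent vertices are placed in special blocks that have degree $s-1$ inside the selected side and degree $r-1$ toward the other side from the outset (with block sizes $(2s+2)r$ and $2sr+2(r-1)$ chosen so that the required near-regular graphs exist), so no local edge-deletion-and-repair is ever needed. To complete your proof you would need to carry out an analogous global construction for the two-port connector, including the parity bookkeeping; as written, that step is missing.
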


\begin{proof}
    Let $\Port = \{\port_1,\dots,\port_{4r}\}$ denote the set of portal vertices.
    We define an $L$-provider  \((G, \Port)\) as follows.
    For every $A \subseteq\numb{4r}$ such that $\abs{A} = 2r$,
    $G$ has the following vertices and edges in addition to its portals.
    It has $(2s+2)r$ vertices
    which are partitioned into blocks $V_{A}^1,\dots,V_{A}^{2s+2}$,
    where each block $V_{A}^q$
    contains $r$ vertices $v_{A,1}^q,\dots,v_{A,r}^q$.
    Moreover, $G$ has $2sr + 2(r-1)$ vertices
    which are partitioned into blocks $W_{A}^1,\dots,W_{A}^{2s+2}$,
    where the first two blocks $W_{A}^q$, $q \in \{1,2\}$,
    contain $r-1$ vertices $w_{A,1}^q,\dots,w_{A,r-1}^q$,
    and all other blocks $W_{A}^q$, $q \in\fragment{3}{2s+2}$, contain $r$ vertices $w_{A,1}^q,\dots,w_{A,r}^q$.
    We define $V_A \coloneqq \bigcup_{q \in\numb{2s+2}}V_A^q$
    and $W_A \coloneqq \bigcup_{q \in\numb{2s+2}}W_A^q$.
    Suppose $A = \{k_1,\dots,k_{2r}\}$.
    Then, $G$ has edges to form the following:
    \begin{itemize}
        \item A complete bipartite graph between $V_A^q$ and $W_A^q$ for all $q \in\numb{2s+2}$.
        \item A graph on $V_A$ such that every $v \in V_A^1 \cup V_A^2$ has degree $s-1$,
            and every $v \in V_A \setminus (V_A^1 \cup V_A^2)$ has degree $s$.
            This is possible by \Cref{la:construct-regular-graph}.
            Indeed, $\abs{V_A}$ is even and ${\abs{V_A}}/{2} \geq s$.
            So, we can build an $s$-regular graph on $V_A$.
            Additionally, by numbering vertices appropriately, there is a matching between $V_A^1$ and $V_A^2$, and we may simply omit the corresponding edges to decrease the degree of every vertex from $V_A^1 \cup V_A^2$ by one to achieve the desired result.
        \item An $s$-regular graph on $W_A$ which is possible by \Cref{la:construct-regular-graph}
            since $|W_A|$ is even and ${\abs{W_A}}/{2} \geq s$.
        \item $\port_{k_p}v_{A,p}^1$ for all $p \in\numb{r}$.
        \item $\port_{k_p}v_{A,p-r}^2$ for all $p \in\fragment{r+1}{2r}$.
    \end{itemize}
    This completes the description of the graph $G$.
    From the last two items, it follows
    that no two portals share an edge
    and that their neighborhoods are disjoint.

    We start with some simple observations.
    Fix an $A \subseteq\numb{4r}$ such that $\abs{A} = 2r$.
    Then, $G[W_A]$ is $s$-regular.
    We have $\abs{N_G(v) \cap V_A} = s-1$ for all $v \in V_A^1 \cup V_A^2$,
    and $\abs{N_G(v) \cap V_A} = s$ for all $v \in V_A \setminus V_A^1 \cup V_A^2$.
    Also, $\abs{N_G(v) \cap W_A} = r-1$ for all $v \in V_A^1 \cup V_A^2$,
    and $\abs{N_G(v) \cap W_A} = r$ for all $v \in V_A \setminus V_A^1 \cup V_A^2$.
    Moreover, $\abs{N_G(w) \cap V_A} = r$ for all $w \in W_A$.
    Finally, $\abs{N_G(v) \cap \Port} = 1$ for every $v \in V_A^1 \cup V_A^2$.

	Now, fix $x \in \{\sigma_0,\sigma_1\}^{4r}$
	such that $\occ{x}{\sigma_1} \in \{0,2r\}$,
    and write $x = \sigma_{i_1}\sigma_{i_2}\dots\sigma_{i_{4r}}$.
	We need to show that $x \in L(G,\Port)$.
	Let $A_x \coloneqq \{p \in\numb{4r} \mid i_p = 1\}$.

	First, suppose that $A_x = \emptyset$.
	Then, we define a solution set
	\[S_x \coloneqq \Port \cup \bigcup_{A \subseteq\numb{4r}\colon \abs{A} = 2r} W_A.\]
	Using the basic observations outlined above,
	it is easy to verify that this solution set indeed witnesses that $x \in L(G,\Port)$.

	Otherwise, $\abs{A_x} = 2r$.
	Now, we define a solution set
	\[S_x \coloneqq \Port \cup V_{A_x} \cup \bigcup_{A \subseteq\numb{4r}\colon \abs{A} = 2r, A \neq A_x} W_A.\]
	Again, using the basic observations outlined above,
	it is easy to verify that this solution set indeed witnesses that $x \in L(G,\Port)$.
\end{proof}

\subsection{\boldmath Providers for Combinations of \texorpdfstring{$\sigma$}{Sigma}- and \texorpdfstring{$\rho$}{Rho}-States}
\label{sec:provider:complex}

Complementary to the previous section,
we establish $L$-providers in this section,
where $L$ contains states from both $\sigStates$ and $\rhoStates$.
These providers are relevant if $(\sigma,\rho)$ is $\mname$-structured for $\mname \le 2$,
but not $\mname$-structured for any $\mname\ge 3$.

We start by obtaining two auxiliary providers that are used in later constructions.

\begin{lemma}\label{lem:3stategadgetsimple}
    Fix integers $s\in\sigma$, $r\in \rho$ with $s,r\ge 1$, and set
    \[L_{s,r} \coloneqq \{\sigma_{s-1}\sigma_{s-1}\rho_{r-1},\quad \rho_r\rho_r\rho_r,\quad \rho_{r-1} \rho_r\sigma_s\}.\]
    Then, there is an $L_{s,r}$-provider.
\end{lemma}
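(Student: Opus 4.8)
The plan is to exhibit an explicit graph $G$ with three distinguished portals $u_1,u_2,u_3$ and, for each of the three strings in $L_{s,r}$, a partial solution of $(G,\{u_1,u_2,u_3\})$ witnessing it. Since we only need an $L_{s,r}$-\emph{provider} (the three strings must be compatible, not that they are the \emph{only} compatible strings, cf.\ \cref{def:provider}), we may take $G$ to be the disjoint union of three per-portal gadgets $G_1,G_2,G_3$ with $u_i\in V(G_i)$ and no edges between different $G_i$'s: then a partial solution of $(G,\{u_1,u_2,u_3\})$ is just the union of partial solutions of the $(G_i,\{u_i\})$. It therefore suffices to build a $\{\sigma_{s-1},\rho_{r-1},\rho_r\}$-provider (for $u_1$), a $\{\sigma_{s-1},\rho_r\}$-provider (for $u_2$), and a $\{\sigma_s,\rho_{r-1},\rho_r\}$-provider (for $u_3$), since these are exactly the states each portal takes across the three strings.

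The $\rho$-states are the easy part: to let a portal be unselected with some fixed number $\ell\le\rhoMax$ of selected neighbours, hang $\ell$ copies of the $\{\sigma_s,\rho_r\}$-provider of \cref{lem:fillinggadget} off it by single edges and select each of their portals; as the host portal is unselected, each provider-portal sees exactly $s\in\sigma$ selected neighbours and is happy, and exactly $\ell$ selected neighbours reach the host portal. (Alternatively \cref{lem:trivialrhogadget} already supplies every state $\rho_0,\dots,\rho_{\rhoMax}$.) This covers the states $\rho_{r-1}$ and $\rho_r$ wherever they occur. The real work is realizing the $\sigma$-states $\sigma_{s-1}$ (for $u_1,u_2$) and $\sigma_s$ (for $u_3$) by a ``$\sigma$-block'' $B$ attached to the host portal that, when the portal is selected, can be entirely selected with every vertex $\sigma$-happy and with the portal seeing exactly $s$ (resp.\ $s-1$) of its vertices, and which is \emph{also} resolvable (has some partial solution with all its own vertices satisfied) when the portal is unselected, so that the same gadget still realizes the required $\rho$-states. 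A bare clique does not have this second property, so one builds $B$ from the regular bipartite graphs of \cref{la:construct-regular-graph} together with the providers of \cref{lem:happyGadget,lem:fillinggadget}: when the block is selected, the host portal's incidences are used to ``lift'' the few deficient block vertices to degree exactly $s$ (or $\sigMax$) inside $\sigma$, any remaining unit-sized degree defect being routed to a partner block through one extra edge; and each block vertex is additionally equipped with just enough $\{\sigma_s,\rho_r\}$-pendants so that, when nothing in the block is selected and the host portal is unselected, every block vertex is $\rho$-happy. The state $\sigma_{s-1}$ is then obtained from the $\sigma_s$-construction by deleting one portal-incident edge (so the portal now sees $s-1$ block vertices) and repairing the one block vertex that loses a neighbour, again via the ``route the defect to a partner block'' device.

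With the three per-portal gadgets assembled, the three witnesses are: the common baseline in which nothing is selected except the $\rho$-pendant portals (every $u_i$ then sees its prescribed $\rho_r$/$\rho_{r-1}$ count, giving string $\rho_r\rho_r\rho_r$ after choosing $\ell=r$ for each portal, or the mixed strings after choosing $\ell=r-1$ where needed) together with, on top of it, additionally selecting the $\sigma$-blocks of $u_1$ and $u_2$ (string $\sigma_{s-1}\sigma_{s-1}\rho_{r-1}$) or the $\sigma$-block of $u_3$ (string $\rho_{r-1}\rho_r\sigma_s$). Checking that every non-portal vertex is satisfied and every portal sees exactly the prescribed number of selected neighbours is then a routine, if lengthy, case analysis.

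The main obstacle is the construction of the $\sigma$-block $B$: it must force the ``off-by-one'' selected-neighbour count $s-1$ (and, more generally, the count $s$, where $s$ itself need not be $\sigMax$) on its host portal while keeping every block vertex $\sigma$-happy, and it must remain resolvable when the host portal is unselected. Since $s-1$ and $s$ need not lie in $\sigma$, the block cannot be ``all selected in isolation'', and the whole difficulty is to absorb the resulting unit degree defects without creating new ones — this is exactly what the regular graphs and the auxiliary providers buy over plain cliques. Small-parameter degeneracies (e.g.\ $s=1$, where a ``clique of size $s$'' and the $-1$ shift become trivial or vacuous, or $r=1$) are handled as separate easy cases.
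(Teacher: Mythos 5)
Your reduction to three \emph{independent} single-portal gadgets is legitimate as far as \cref{def:provider} goes (a provider only needs $L\subseteq L(G,U)$, and the compatible language of a disjoint union is the product of the components' languages), so your plan stands or falls with the existence of a single-portal $\{\sigma_{s-1},\rho_{r-1},\rho_r\}$-, $\{\sigma_{s-1},\rho_r\}$-, and $\{\sigma_s,\rho_{r-1},\rho_r\}$-provider. This is where the proposal breaks: such providers do not exist for all admissible $(\sigma,\rho)$. Take $\sigma=\rho=\{1\}$, so $s=r=1$ and your $G_2$ must be a $\{\sigma_0,\rho_1\}$-provider with a single portal $z$. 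For this pair, in any partial solution every non-portal vertex has exactly one selected neighbour, so for two partial solutions $S,S'$ the $\mathrm{GF}(2)$ difference vector $D=\mathbf 1_S+\mathbf 1_{S'}$ satisfies $AD=(k+k')e_z$ over $\mathrm{GF}(2)$, where $A$ is the adjacency matrix and $k,k'$ are the numbers of selected neighbours of $z$. Since $A$ has zero diagonal, $D^{\top}AD=0$, while $D^{\top}AD=(k+k')D_z$; hence if $k\not\equiv k'\pmod 2$ then $D_z=0$, i.e., the portal has the \emph{same} selection status in both solutions. Consequently $\sigma_0$ (selected, $0$ neighbours) and $\rho_1$ (unselected, $1$ neighbour) can never both be compatible with one single-portal graph; the same argument kills $\{\sigma_1,\rho_0\}$ and hence your $G_1$ and $G_3$ as well, and it recurs for every $2$-structured pair with $\sigma,\rho$ of equal parity (e.g.\ $\sigma=\rho=\{2\}$). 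This is exactly the parity issue the paper alludes to before \cref{lem:gadgettechnical}: the ``off-by-one'' states at the three portals are only achievable because the paper's gadget entangles them in one connected graph (the three cross-connected parts $X,Y,Z$ with two deleted edges), so that the unit defects at different portals cancel against each other within each of the three coupled strings of $L_{s,r}$. The correlation built into $L_{s,r}$ is therefore essential and cannot be discarded by taking a product of per-portal providers.

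Even setting this aside, the construction of your ``$\sigma$-block'' is only a sketch, and its interaction with the $\rho$-pendants is not resolved: in the witness where the host portal is selected, an unselected pendant portal of a $\{\sigma_s,\rho_r\}$-provider (kept in its $\rho_r$-state) sees $r+1$ selected neighbours, and $r+1$ need not lie in $\rho$ (e.g.\ $\rho=\{1\}$); symmetrically, a selected pendant portal would see $s+1\notin\sigma$ in general. So even where the per-portal languages are realizable, the proposed assembly of ``baseline $\rho$-pendants plus optionally selected $\sigma$-block'' does not verify the degree constraints of the auxiliary vertices, which is the actual content of the lemma. The paper avoids both problems at once by letting the parts $X,Y,Z$ serve simultaneously as each other's $\sigma$-cliques and $\rho$-neighbourhoods, with the two deleted edges $u_1u_2$ and $u_1u_3$ producing precisely the $\sigma_{s-1}$- and $\rho_{r-1}$-deficits in a mutually consistent way.
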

\begin{proof}
    We wish to construct a graph \(G\) with three portals
    \(\Port= \{\port_1, \port_2,  \port_3\}\), such
    that \(G\) has (at least) three partial solutions: one partial solution with exactly two portals (\(\port_1, \port_2\)) selected, one partial solution with no
    selected portal, and one partial solution with exactly one portal (\(\port_3\)) selected (in each case with additional constraints on
    the number of selected neighbors).
    Intuitively, we construct \(G\) to consist of three parts \(X, Y, Z\) that, when
    (fully) selected (and when nothing else is selected), each correspond to one of the desired partial solutions; that is, we
    have \(\port_1, \port_2 \in X\) and \(\port_3 \in Z\).

    Hence, for each of the parts \(X, Y, Z\), we need to be able to either fully select
    the part or to not select any vertex of the part.
    To ensure the former, each vertex in \(X, Y, Z\) is part of a clique on \((s + 1)\)
    vertices, and to ensure the latter, each vertex in \(X, Y, Z\) has \(r\) neighbors in each
    of the corresponding other parts.\footnote{For the vertices
    \(\port_1,\port_2,\port_3\), slightly adapted requirements hold.}
    We achieve this property by introducing  a total of \(r\) cliques (each of order $s+1$) per part. For each pair of different parts, we add edges to form
    a complete \(r\)-regular bipartite graph between vertices that have the same
    index in different cliques.

    Formally, we define $G$ as follows.
    For the vertices, we set
    \begin{align*}
        X &\coloneqq \{x_i^j \mid i\in \numb{r}, j\in \numb{s+1}\}, \\
        Y &\coloneqq \{y_i^j \mid i\in \numb{r}, j\in \numb{s+1}\},\quad\text{and}\\
        Z &\coloneqq \{z_i^j \mid i\in \numb{r}, j\in \numb{s+1}\}.
    \end{align*}
    Further, we set
    \begin{align*}
        \port_1 \coloneqq x_1^1,\quad
        \port_2 &\coloneqq x_1^2,\quad\text{and}\quad
        \port_3 \coloneqq z_1^1.
    \end{align*}
    Now, for each $i\in \numb{r}$, we connect
    the vertices $\{x_i^j \mid j\in \numb{s+1}\}$
    ($\{y_i^j \mid j\in \numb{s+1}\}$ and $\{z_i^j \mid j\in \numb{s+1}\}$, respectively)
    to form a clique on $s+1$ vertices.
    Then, between each pair of sets from $X,Y,Z$,
    we connect the vertices with the same index $j\in \numb{s+1}$ to form a complete bipartite graph
    (with $r$ vertices in each part).
    Finally, we remove the edge between $\port_1 = x_1^1$ and $\port_2 = x_1^2$, and the
    edge between $\port_1 = x_1^1$ and $\port_3 = z_1^1$.

    It remains to show that $(G, \{x_1^1, x_1^2, z_1^1\})$ is an $L_{s,r}$-provider.
    We give partial solutions corresponding to the three elements of $L_{s,r}$.
    \begin{description}
        \item[\boldmath $\sigma_{s-1} \sigma_{s-1} \rho_{r-1}$.]
            Select the vertices from $X$. Thus, $x_1^1$ and $x_1^2$ are selected, but $z_1^1$ is not.
            To satisfy the $\sigma$-constraints, each vertex in $X$, with the exception of $x_1^1$ and $x_1^2$, obtains $s$ selected neighbors from (their respective clique in) $X$.
            Because of the missing edge, both $x_1^1$ and $x_1^2$ have only $s-1$ selected neighbors each.
            To satisfy the $\rho$-constraints, each vertex from $Y$ and $Z$, with the exception of $z_1^1$, obtains $r$ selected neighbors from $X$.
            Because of the missing edge to $x_1^1$, $z_1^1$ has only $r-1$ selected neighbors.

        \item[\boldmath $\rho_r\rho_r\rho_r$.]
            Select the vertices from $Y$. As before, we can check that this selection witnesses $\rho_r\rho_r\rho_r$.

        \item[\boldmath $\rho_{r-1}\rho_r\sigma_s$.]
            Select the vertices from $Z$. As before, we can check that this selection witnesses $\rho_{r-1}\rho_r\sigma_s$.
            \qedhere
    \end{description}
\end{proof}

For ease of notation, we abbreviate,
for all $\tau_t \in \allStates$ and all $\delta \in \NN$,
the state $\tau_t \dots \tau_t \in \allStates^\delta$
of a graph with $\delta$ portals
by the notation $(\tau_t)^\delta$.

\begin{lemma}\label{lem:3stategadgetbig}
    \newcommand{\dg}{\delta} 
    Consider $s, s'\in\sigma$, $r\in \rho$ with $s,r\ge 1$ and $s' > \sigMin$, where $\sigMin\coloneqq\min(\sigma)$.
    For any $k\in \ZZ_{> 0}$ and $\dg \coloneqq k(s'-\sigMin)$,
    there is a $\{(\rho_r)^{\dg},\quad (\rho_{r-1})^{\dg},
    \quad(\sigma_s)^{\dg}\}$-provider (with $\dg$ portals).
\end{lemma}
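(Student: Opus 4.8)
The plan is to assemble the gadget from $\delta$ pairwise vertex-disjoint copies $H_1,\dots,H_\delta$ of the $3$-state gadget of \cref{lem:3stategadgetsimple} (applied with the given $s,r$), together with $2k$ auxiliary ``hub'' vertices whose feasibility I control using the fact that $\sigMin,s'\in\sigma$. Write $a_i,b_i,c_i$ for the three portals of $H_i$; the vertices $c_1,\dots,c_\delta$ will be the $\delta$ portals of the new gadget. I split $\{1,\dots,\delta\}$ into $k$ consecutive blocks of size $t\coloneqq s'-\sigMin$ (possible since $\delta=k(s'-\sigMin)$ and $t\ge1$), and for each block I add one ``$a$-hub'' and one ``$b$-hub'', giving $2k$ hubs $g_1,\dots,g_{2k}$. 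Each $g_m$ is taken to be the portal of a private copy $B_m$ of a $\{\sigma_{\sigMin},\rho_r\}$-provider (which exists by \cref{lem:fillinggadget}, since $\sigMin\in\sigma$ and $r\in\rho$); in particular $B_m$ has a partial solution with $g_m$ selected and exactly $\sigMin$ neighbours in $B_m$, and one with $g_m$ unselected and exactly $r$ neighbours in $B_m$. In addition, $g_m$ is made adjacent to exactly the $t$ vertices $a_i$ (for an $a$-hub) respectively $b_i$ (for a $b$-hub) whose index lies in the block of $g_m$; no further edges are added. This yields a graph $G$ with $\delta$ portals $c_1,\dots,c_\delta$.

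It then remains to produce, for each of the three target strings, a partial solution of $G$ witnessing it. For $(\sigma_s)^\delta$: in every $H_i$ use the partial solution witnessing $\rho_{r-1}\rho_r\sigma_s$, select every $a$-hub (extended by the ``$g_m$ selected'' solution of $B_m$) and leave every $b$-hub unselected (``$g_m$ unselected'' solution of $B_m$); then each $a_i$ has $(r-1)+1=r\in\rho$ selected neighbours, each $b_i$ has $r+0=r\in\rho$, each $a$-hub has $\sigMin+0=\sigMin\in\sigma$, each $b$-hub has $r+0=r\in\rho$, and $c_i$ is selected with $s$ neighbours, i.e.\ in state $\sigma_s$. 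For $(\rho_r)^\delta$: use the $\rho_r\rho_r\rho_r$ solution in every $H_i$ and leave all $2k$ hubs unselected; every vertex then sees $r\in\rho$ selected neighbours and each $c_i$ is in state $\rho_r$. For $(\rho_{r-1})^\delta$: use the $\sigma_{s-1}\sigma_{s-1}\rho_{r-1}$ solution in every $H_i$ and select all $2k$ hubs; now every $a_i$ and $b_i$ is selected with $(s-1)+1=s\in\sigma$ selected neighbours, every hub is selected with $\sigMin+t=\sigMin+(s'-\sigMin)=s'\in\sigma$ selected neighbours (its $\sigMin$ neighbours inside $B_m$ plus its $t$ now-selected control vertices), and $c_i$ is unselected with $r-1$ neighbours, i.e.\ in state $\rho_{r-1}$. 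In each case the non-portal vertices of every $H_i$ and every $B_m$ are satisfied because we used genuine partial solutions of these providers (the extra edges touch $B_m$ only at $g_m$), and the vertices $a_i,b_i,g_m$ are satisfied by the counts just listed; hence all three strings lie in $L(G,\{c_1,\dots,c_\delta\})$.

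The only real content is the neighbour-count bookkeeping in the second paragraph, and it is tight in a way that matches the hypotheses: a selected hub is legal with either $\sigMin$ or $s'$ selected neighbours, which is exactly what lets a whole block of $t=s'-\sigMin$ control vertices be toggled on or off in unison, and for this the $a_i$'s (and the $b_i$'s) must partition evenly into blocks of size $t$ — precisely the assumption $\delta=k(s'-\sigMin)$. I expect the points to be careful about are: using \emph{separate} $a$- and $b$-hubs (in the $(\sigma_s)^\delta$ solution the $a_i$'s need one extra selected neighbour from their hub while the $b_i$'s need none, so a single hub per block would fail), and the degenerate parameter values $s=1$ and $r=1$ (where $\sigma_{s-1},\rho_{r-1}$ become $\sigma_0,\rho_0$), for which the counts above still go through verbatim.
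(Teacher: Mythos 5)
Your construction is exactly the paper's: your $a$-hubs and $b$-hubs are the sets $D_A$ and $D_B$ of size $k$ (each hub carrying a private $\{\sigma_{\sigMin},\rho_r\}$-provider, each with $s'-\sigMin$ neighbours among the $a_i$'s resp.\ $b_i$'s, each $a_i$/$b_i$ with exactly one hub neighbour — your block partition is just one explicit realization of this bipartite degree pattern), and your three partial solutions coincide with the paper's case analysis, including the key asymmetric treatment of $D_A$ and $D_B$ in the $(\sigma_s)^{\delta}$ case. The proof is correct and essentially identical to the paper's.
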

\begin{proof}
    \newcommand{\dg}{\delta}
    We define a graph $G$ with $\dg=k(s'-\sigMin)$ portals as follows.
    \begin{itemize}
        \item There are three disjoint sets of vertices $A=\{a_i \mid i\in \numb{\dg}\}$,
            $B=\{b_i \mid i\in \numb{\dg}\}$, and $C=\{c_i \mid i\in \numb{\dg}\}$,
            where the vertices of $C$ form the $\dg$ portals of $G$.
        \item For each $i\in \numb{\dg}$,
            $(a_i, b_i, c_i)$ are (in this order) the portals of an attached
            $\{\sigma_{s-1}\sigma_{s-1}\rho_{r-1},\allowbreak \rho_r\rho_r\rho_r, \rho_{r-1}\rho_r\sigma_s\}$-provider $J_i$,
            which exists by \Cref{lem:3stategadgetsimple} and the fact that $s,r\ge 1$.
        \item There is a set $D_A$ of size $k$, where each vertex of $D_A$ is the portal of an attached $\{\sigma_{\sigMin},\rho_r\}$-provider, which exists by \Cref{lem:fillinggadget}.
            We introduce edges between $A$ and $D_A$ in such a way that each vertex in $A$ has precisely $1$ neighbor in $D_A$, and each vertex in $D_A$ has precisely $s'-\sigMin$ neighbors in $A$.
            This is possible as $\dg=k\cdot (s'-\sigMin)$.
        \item Similarly, there is a set $D_B$ of size $k$, where each vertex of $D_B$ is the portal of an attached $\{\sigma_{\sigMin},\rho_r\}$-provider, each vertex in $B$ has precisely $1$ neighbor in $D_B$, and each vertex in $D_B$ has precisely $s'-\sigMin$ neighbors in $B$.
    \end{itemize}

	We show that $(G, C)$ is a
	$\{(\rho_r)^{\dg}, (\rho_{r-1})^{\dg}, (\sigma_s)^{\dg}\}$-provider.
	We give solutions corresponding to the three states.
	\begin{description}
		\item[\boldmath $(\rho_{r})^{\dg}$]
		None of the vertices in $A,B,C,D_A,D_B$ are selected. For each $i\in \numb{\dg}$, select vertices in $J_i$ according to the solution that witnesses the state $\rho_r\rho_r\rho_r$. In the $\{\sigma_{\sigMin},\rho_r\}$-providers attached to $D_A$ and $D_B$, select vertices according to the $\rho_{r}$-state.

		For each $i$, the vertices $a_i$ and $b_i$, as well as the portal $c_i$, are unselected and each obtain $r$ selected neighbors from $J_i$. Each vertex in $D_A$ and $D_B$ is also unselected and obtains $r$ selected neighbors from the attached $\{\sigma_{\sigMin},\rho_r\}$-provider.

		\item[\boldmath $(\rho_{r-1})^{\dg}$]
		Select the vertices $A\cup B \cup D_A \cup D_B$.
		For each $i\in \numb{\dg}$, select vertices in $J_i$
		according to the solution that witnesses
		the state $\sigma_{s-1}\sigma_{s-1}\rho_{r-1}$.
		In the $\{\sigma_{\sigMin},\rho_r\}$-providers attached to $D_A$ and $D_B$, select vertices according to the $\sigma_{\sigMin}$-state.

		For each $i$, the portal $c_i$ is unselected and obtains $r-1$ selected neighbors from $J_i$.
		The vertices $a_i$ and $b_i$ are selected and each obtain $s-1$ selected neighbors from $J_i$, and $a_i$ also obtains $1$ selected neighbor from $D_A$, and $b_i$ obtains $1$ selected neighbor from $D_B$. Thus, both obtain a total of $s\in \sigma$ selected neighbors. The vertices in $D_A$ are selected and each obtain $s'-\sigMin$ selected neighbors from $A$, as well as $\sigMin$ selected neighbors from the attached $\{\sigma_{\sigMin},\rho_r\}$-provider, for a total of $s'\in \sigma$. Similarly, the vertices in $D_B$ are selected and obtain $s'\in \sigma$ selected neighbors.

		\item[\boldmath $(\sigma_{s})^{\dg}$]
		Select the vertices $C \cup D_A$.
		For each $i\in \numb{\dg}$, select vertices in $J_i$
		according to the solution that witnesses
		the state $\rho_{r-1} \rho_r\sigma_s$.
		In the $\{\sigma_{\sigMin},\rho_r\}$-providers attached to $D_A$, select vertices according to the $\sigma_{\sigMin}$-state. In the $\{\sigma_{\sigMin},\rho_r\}$-providers attached to $D_B$, select vertices according to the $\rho_{r}$-state.

		For each $i$, the portal $c_i$ is selected and obtains $s$ selected neighbors from $J_i$.
		The vertex $a_i$ is unselected, obtains $r-1$ selected neighbors from $J_i$,
		and obtains $1$ selected neighbor from $D_A$.
		The vertex $b_i$ is unselected and obtains $r$ selected neighbors from $J_i$.
		The vertices in $D_A$ are selected and each obtain
		$\sigMin$ selected neighbors from the attached $\{\sigma_{\sigMin},\rho_r\}$-provider.
		The vertices in $D_B$ are unselected
		and each obtain $r$ selected neighbors from the attached $\{\sigma_{\sigMin},\rho_r\}$-provider.
		\qedhere
	\end{description}
\end{proof}

Before we continue with the next gadget, we first prove a technical lemma
which ensures the existence of bipartite graphs with certain degree sequences.

\begin{lemma}
	\label{lem:bipartite-with-degree-sequence}
	Let $c_1,\dots,c_\ell,d_1,\dots,d_k \in \NN$ such that
	\[\sum_{i \in [\ell]} c_i = \sum_{j \in [k]} d_j.\]
	Also, suppose $a$ is another natural number such that $a \geq \max\{c_1,\dots,c_\ell,d_1,\dots,d_k\}$.
	Then, there is some $s_0 \geq 0$ such that, for every $s \geq s_0$, there is a bipartite graph $G = (V,W,E)$ such that
	\begin{enumerate}
		\item $V = \{v_1,\dots,v_\ell,x_1,\dots,x_s\}$,
		\item $\deg_G(v_i) = c_i$ for all $i \in [\ell]$, and $\deg_G(x_i) = a$ for all $i \in [s]$,
		\item $W = \{w_1,\dots,w_k,y_1,\dots,y_s\}$, and
		\item $\deg_G(w_i) = d_i$ for all $i \in [k]$, and $\deg_G(y_i) = a$ for all $i \in [s]$.
	\end{enumerate}
\end{lemma}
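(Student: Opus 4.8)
The plan is to build $G$ explicitly as an edge-disjoint union of three ``layers'' and then check that the degrees add up; it will turn out that the hypothesis $s \ge s_0 \coloneqq \max\{a,\, C\}$, where $C \coloneqq \sum_{i\in[\ell]} c_i = \sum_{j\in[k]} d_j$, is all that is needed.

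First I would route the prescribed degrees of the ``irregular'' vertices $v_i, w_j$ into \emph{distinct} padding vertices. Since $C \le s$, I can connect $v_1$ to $y_1,\dots,y_{c_1}$, then $v_2$ to $y_{c_1+1},\dots,y_{c_1+c_2}$, and so on, so that each $v_i$ ends up with degree exactly $c_i$ while each of $y_1,\dots,y_C$ receives exactly one of these edges (and $y_{C+1},\dots,y_s$ none). Symmetrically, I would connect the $w_j$'s to $x_1,\dots,x_C$ using consecutive blocks, so that each $w_j$ gets degree $d_j$ and each of $x_1,\dots,x_C$ receives exactly one edge. After these two layers, $v_i$ and $w_j$ already have their final degrees, the vertices $x_1,\dots,x_C,y_1,\dots,y_C$ have degree $1$, and $x_{C+1},\dots,x_s,y_{C+1},\dots,y_s$ have degree $0$.

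The third layer must raise every $x_i$ and every $y_i$ to degree $a$ using only $x$--$y$ edges; that is, I need a bipartite graph on $\{x_1,\dots,x_s\}\times\{y_1,\dots,y_s\}$ in which $x_i$ and $y_i$ have degree $a-1$ for $i\le C$ and degree $a$ for $i>C$. This is exactly where \cref{la:construct-regular-graph} comes in: take the $a$-regular bipartite graph $M^{(a)}_s$ (well-defined since $a\le s$) on the parts $\{x_1,\dots,x_s\}$ and $\{y_1,\dots,y_s\}$, and delete the $C$ matching edges $\{x_i,y_i\}$, $i\in[C]$, whose existence is guaranteed by that same lemma; the remaining graph has precisely the desired degrees. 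Letting $G$ be the union of the three layers, I would conclude with the routine count $\deg_G(v_i)=c_i$, $\deg_G(w_j)=d_j$, and $\deg_G(x_i)=\deg_G(y_i)= a$ (splitting as $1+(a-1)$ for $i\le C$ and $0+a$ for $i>C$), observing that the three layers use pairwise disjoint vertex pairs so that $G$ is simple; thus $s_0 = \max\{a,\,\sum_i c_i\}$ works.

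The one genuinely non-obvious point — the main obstacle — is the very first layer: one cannot route the $c_i$- and $d_j$-edges arbitrarily, because the residual degree sequence left on the padding vertices may then fail to be bipartite-realizable in the relevant range (residuals $(a-1,a-1)$ on one side against $(a,a-2)$ on the other can be infeasible, for instance). Spending \emph{distinct} padding vertices for these edges — which is precisely what forces $s_0 \ge C$ — makes the two residual sequences literally identical (namely $a-1$ with multiplicity $C$ and $a$ with multiplicity $s-C$ on each side), so that the clean ``$a$-regular graph minus a partial perfect matching'' construction applies. As an alternative to the explicit construction, one could instead invoke the Gale--Ryser theorem on the sequences $(c_1,\dots,c_\ell,a^s)$ and $(d_1,\dots,d_k,a^s)$ and verify the dominance inequalities for all $s\ge a$ (the inequality is tight at $t\ge a$ by the equal-sum hypothesis, and trivial for $t<a\le s$), but the direct construction above fits the style of this section better.
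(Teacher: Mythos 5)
Your construction is correct, but it takes a genuinely different route from the paper. The paper proves \cref{lem:bipartite-with-degree-sequence} with $s_0 \coloneqq a$ via a greedy algorithm: it maintains residual capacities $c(u)$ on all vertices, repeatedly picks a $w \in W$ with $c(w) > 0$ and joins it to the $c(w)$ vertices of $V$ with largest remaining capacity, and the whole work goes into an invariant (at least $a$ vertices of $V$ have capacity within $1$ of the maximum) guaranteeing this never gets stuck. You instead build the graph explicitly in three edge-disjoint layers: route each $c_i$ and $d_j$ into pairwise distinct padding vertices, which makes the residual degree sequence on the padding identical on both sides ($a-1$ with multiplicity $C$, $a$ otherwise), and then realize that residual as the $a$-regular graph $\modgraph{a}{s}$ of \cref{la:construct-regular-graph} minus the partial matching $\{x_i,y_i\}$, $i \in [C]$; all side conditions ($C \le s$, $a \le s$, existence of the matching edges when $a \ge 1$, bipartiteness and simplicity of the union) check out, at the price of the larger threshold $s_0 = \max\{a, \sum_i c_i\}$. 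Since the lemma only asserts the existence of \emph{some} $s_0$ (and in its application in \cref{lem:gadgettechnical} the parameter $k$ is taken sufficiently large anyway), your weaker bound is harmless; what your approach buys is a fully explicit, easily verified construction that reuses \cref{la:construct-regular-graph}, whereas the paper's greedy argument buys an $s_0$ depending only on $a$ and not on the total prescribed degree. Your closing remark that Gale--Ryser would also settle it is accurate as a second alternative.
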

\begin{proof}
	Pick $s_0 \coloneqq a$.
	We give an iterative construction for $G$.
	Initially, $E(G) \coloneqq \emptyset$.
	We define $c(v_i) \coloneqq c_i$ for all $i \in [\ell]$, and $c(x_j) \coloneqq a$ for $j \in [s]$.
	Similarly, define $c(w_i) \coloneqq d_i$ for all $i \in [k]$, and $c(y_j) \coloneqq a$ for $j \in [s]$.
	Throughout the construction, $c(u)$ denotes the number of neighbors still to be added to $u \in V \cup W$.
	We shall maintain the property that
	\begin{enumerate}[label = (\roman*)]
		\item\label{item:bipartite-with-degree-sequence-1} $c(u) \geq 0$ for all $u \in V \cup W$,
		\item\label{item:bipartite-with-degree-sequence-2} $\sum_{v \in V} c(v) = \sum_{w \in W} c(w)$, and
		\item\label{item:bipartite-with-degree-sequence-3} $|\{v \in V \mid c(v) \in \{c_{\max},c_{\max} - 1\}\}| \geq a$, where $c_{\max} \coloneqq \max_{v \in V} c(v)$.
	\end{enumerate}
	Observe that this property is initially satisfied.

	If $c(w) = 0$ for all $w \in W$, then $c(v) = 0$ for all $v \in V$
	by Conditions~\ref{item:bipartite-with-degree-sequence-1}
	and~\ref{item:bipartite-with-degree-sequence-2},
	and the construction is complete.
	So, fix some $w \in W$ such that $c(w) > 0$.
	Let $d \coloneqq c(w)$.
	Let $v_1',\dots,v_{\ell+s}'$ be a list of all the vertices from $V$ ordered according to the current capacities, i.e., $c(v_i') \geq c(v_{i+1}')$ for all $i \in [\ell + s - 1]$.

	First, we claim that $c(v_i') \geq 1$ for all $i \in [d]$.
	If $c_{\max} \geq 2$, then this follows directly from
	Condition~\ref{item:bipartite-with-degree-sequence-3} because $d \leq a$.
	Otherwise, $c_{\max} = 1$.
	But, then $|\{v \in V \mid c(v) = c_{\max}\}| \geq d$
	by Condition~\ref{item:bipartite-with-degree-sequence-2}.
	We add an edge between $w$ and $v_i'$ for all $i \in [d]$.
	Also, we update $c(w) \coloneqq 0$ and decrease $c(v_i')$ by one for all $i \in [d]$.
	Clearly, Conditions~\ref{item:bipartite-with-degree-sequence-1}
	and~\ref{item:bipartite-with-degree-sequence-2} are still satisfied.
	For Condition \ref{item:bipartite-with-degree-sequence-3}, it can be observed that every $v \in V$ such that $c(v) \in \{c_{\max},c_{\max} - 1\}$ before the update still satisfies this Condition after the update.

	Repeating this process until all capacities are equal to zero, we obtain the desired graph $G$.
\end{proof}

The following lemma crucially exploits that $(\sigma,\rho)$ is $1$-structured or $2$-structured.
While in the former case we can obtain a provider with only one portal,
this is not possible for the latter case.
On an intuitive level, the property of being $2$-structured raises parity issues
which does not allow us to have two independent copies
of the provider for the $1$-structured case.
Hence, we design a provider where the state of one portal
depends on the state of the other portal, and
in fact, both states must be equal for this provider.

\begin{lemma}\label{lem:gadgettechnical}
	Let $\rhoMax\ge 1$.
	Suppose there is a maximum value $\mname$
	such that $(\sigma, \rho)$ is $\mname$-structured.
	\begin{enumerate}
		\item If $\mname=1$,
		then there is a $\{\rho_0, \rho_1, \sigma_0\}$-provider.
		\item If $\mname=2$,
		then there is a $\{\rho_0, \rho_2, \sigma_0\}$-provider.
		There is also a $\{\rho_0\rho_0, \rho_1\rho_1, \sigma_0\sigma_0\}$-provider.
	\end{enumerate}
\end{lemma}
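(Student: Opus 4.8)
The plan is to reduce everything to the existence of the three-state providers from \cref{lem:3stategadgetsimple,lem:3stategadgetbig}, together with the structural hypothesis on $\mname$. Recall that being $\mname$-structured means every element of $\sigma$ is $\equiv \alpha \pmod{\mname}$ and every element of $\rho$ is $\equiv \beta \pmod{\mname}$ for some $\alpha,\beta$. The key consequence I would extract first: if $\mname$ is the \emph{maximum} value for which $(\sigma,\rho)$ is $\mname$-structured, then by a gcd-type argument there exist two elements $s,s'\in\sigma$ with $s'-s$ not divisible by any integer larger than $\mname$ that also divides all relevant differences---more usefully, $\gcd$ of all pairwise differences within $\sigma\cup\rho$ (and the cross-differences shifted appropriately) equals $\mname$. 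In particular, when $\mname=1$ there are $s,s'\in\sigma$ with $s'-s$ coprime enough that $s' > \sigMin$ and $s'-\sigMin$ is whatever we need; when $\mname=2$ the difference $s'-\sigMin$ is even, which is exactly the parity obstruction mentioned in the statement.

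For the case $\mname=1$: I want a $\{\rho_0,\rho_1,\sigma_0\}$-provider with a single portal. The idea is to take the $\delta$-portal provider $\{(\rho_r)^\delta, (\rho_{r-1})^\delta, (\sigma_s)^\delta\}$ from \cref{lem:3stategadgetbig} for suitable $r,s,s'$ (using $1$-structuredness to guarantee $s'>\sigMin$ with $s'-\sigMin$ arbitrary, so $\delta$ can be taken to be $1$, or we collapse all $\delta$ portals into one), and then ``shift'' the three states $\rho_r,\rho_{r-1},\sigma_s$ down to $\rho_1,\rho_0,\sigma_0$ respectively. The shift is done by attaching, to the portal, appropriate $\{\sigma_s,\rho_r\}$-providers from \cref{lem:cdgadget} that contribute a fixed number of always-selected neighbors: attach $r-1$ such provider-portals that are forced to be selected so the portal sees $r-1$ extra selected neighbors in every solution, turning $\rho_r$ into $\rho_1$ and $\rho_{r-1}$ into $\rho_0$; one must check that the $\sigma_s$ state simultaneously becomes $\sigma_{s-(r-1)}$ and that we can further arrange this to be $\sigma_0$ by also adjusting $s$ (here again $1$-structuredness gives flexibility to pick $s=r-1$, or iterate). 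The bookkeeping---matching up how many forced-selected neighbors land on the portal so that all three target states come out right simultaneously---is the fiddly part, but it is the same ``forced selection'' technique used throughout \cref{sec:provider:complex}.

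For the case $\mname=2$: the parity obstruction means we cannot get a single-portal provider, so I aim first for the two-portal version $\{\rho_0\rho_0, \rho_1\rho_1, \sigma_0\sigma_0\}$-provider. Here I would take $\delta=2$ in \cref{lem:3stategadgetbig} (note $\delta = k(s'-\sigMin)$ is forced to be even when $\mname = 2$, consistent with needing an even number of portals), obtaining $\{(\rho_r)^2,(\rho_{r-1})^2,(\sigma_s)^2\}$, and perform the same simultaneous shift on both portals using forced-selected neighbors, yielding $\{\rho_0\rho_0,\rho_1\rho_1,\sigma_0\sigma_0\}$. Then, for the single-portal $\{\rho_0,\rho_2,\sigma_0\}$-provider, I take two copies of the two-portal provider and identify one portal of each into a common vertex $v$, using the other portal of each as the single output portal---wait, more carefully: I would identify portals so that the output portal's two ``partner'' portals are glued together, forcing both copies into the same state; when both copies are in the $\rho_1$-state the glued partner has $2$ selected neighbors, so it must be that $\rho$ ``allows'' reaching a consistent state, but since the output portal sees states $\rho_0$ (both copies contribute $0$), $\rho_2$? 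Let me instead just chain: glue one portal of copy $A$ to one portal of copy $B$ (call it $w$, not an output), leaving the free portals $v_A, v_B$; then further identify $v_A$ with $v_B$ into a single output $v$. In the all-$\rho_0$ solution $v$ gets $0$ neighbors; in the all-$\rho_1$ solution $v$ gets $2$; in the all-$\sigma_0$ solution $v$ is selected with $0$ neighbors. Consistency at the internal glued vertex $w$ holds because both copies are in matched states. The main obstacle will be verifying this gluing is internally consistent for the $\sigma_0\sigma_0$ state (the glued internal vertex $w$ must be selected with a feasible $\sigma$-count, which is where $2$-structuredness and the precise construction of \cref{lem:3stategadgetbig} must cooperate), and ensuring no unwanted extra states appear in $L(G,\{v\})$ beyond the three we want---though for a \emph{provider} we only need $\supseteq$, so spurious states are harmless, which simplifies matters considerably.
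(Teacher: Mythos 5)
There is a genuine gap, and it sits at the heart of your construction. Your ``shift'' step goes in the wrong direction: in a provider, the state $\rho_j$ (or $\sigma_j$) of the portal records the number of selected neighbors the portal has \emph{inside the gadget}. Attaching $r-1$ always-selected vertices to the portal therefore turns a partial solution witnessing $\rho_r$ into one witnessing $\rho_{2r-1}$, not $\rho_1$ --- adding forced-selected neighbors can only push the count up, never down from $\rho_r$ to $\rho_1$, from $\rho_{r-1}$ to $\rho_0$, or from $\sigma_s$ to $\sigma_0$. Since \cref{lem:3stategadgetbig} only hands you states $\rho_r,\rho_{r-1},\sigma_s$ for whatever $r\in\rho$, $s\in\sigma$ with $r,s\ge 1$ happen to exist (and these may be large, with $0,1,2\notin\rho\cup\sigma$), no amount of attaching providers to the portal produces the required low-count states $\rho_0,\rho_{\mname},\sigma_0$. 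The real difficulty of \cref{lem:gadgettechnical} is to build a gadget in which, in one solution, \emph{no} neighbor of the portal is selected, in another exactly $\mname$ are, and in a third the portal is selected with none of its neighbors selected --- while every internal vertex keeps a feasible count in all three scenarios. The paper handles this with a two-sided construction: a left part $L$ (containing the portal, with intended degree $\mname$ into the right part) and a right part $R$, with auxiliary vertex groups whose intended cross-degrees are the differences $r_i-r_1$ and $s_j-s_1$; the gcd condition \eqref{eq:sigmadifference} yields a Diophantine solution \eqref{eq:mixedchoice1} that balances the edge counts (via \cref{lem:bipartite-with-degree-sequence}), and the providers of \cref{lem:cdgadget,lem:3stategadgetbig} are attached to the \emph{auxiliary} vertices, not the portal, so that ``select nothing'', ``select $R$'', and ``select $L$'' realize $\rho_0$, $\rho_{\mname}$, and $\sigma_0$ respectively. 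Your plan invokes the gcd fact but never builds any mechanism of this kind, so the core of the proof is missing.

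A secondary remark: your order of derivation in the $\mname=2$ case is reversed relative to the paper. The paper first builds the single-portal $\{\rho_0,\rho_{\mname},\sigma_0\}$-provider (portal of degree $\mname$ into $R$) and then obtains the two-portal $\{\rho_0\rho_0,\rho_1\rho_1,\sigma_0\sigma_0\}$-provider simply by splitting that degree-$2$ portal into two degree-$1$ portals. Your gluing of two two-portal copies to recover the single-portal version could in principle be made to work (and you are right that spurious states are harmless for a provider, though the internal glued vertices' constraints would still need checking), but as it stands it rests on the two-portal gadget produced by the flawed shifting step, so it does not repair the argument.
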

\begin{proof}
	\newcommand{\dg}{b}
	\newcommand{\tdg}{\widetilde b}
	We first claim that there are finite subsets $\sigma' \subseteq \sigma$ and $\rho' \subseteq \rho$ such that there exists a maximum value $\mname'$ such that $(\sigma', \rho')$ is $\mname'$-structured, and moreover $\mname' = \mname$.
	Indeed, if $\sigma$ is finite, we simply set $\sigma' \coloneqq \sigma$.
	Otherwise, $\sigma$ is cofinite which means there is some $s \in \NN$ such that $s,s+1 \in \sigma$.
	In particular, it follows that $\mname = 1$.
	We set $\sigma' \coloneqq \{s,s+1\}$ which implies that $\sigma'$ is $1$-structured, but not $\mname'$-structured for any $\mname \geq 2$.
	The set $\rho'$ is defined analogously.

	Since every $L$-provider with respect to $(\sigma',\rho')$ is also an $L$-provider with respect to $(\sigma,\rho)$, it suffices to prove the lemma for the pair $(\sigma',\rho')$.
	For ease of notation, let us suppose that $(\sigma,\rho) = (\sigma',\rho')$, that is, in the remainder of the proof, we suppose that both $\sigma$ and $\rho$ are finite.

	Let $\sigma=\{s_1, \ldots, s_{|\sigma|}\}$ and $\rho=\{r_1, \ldots, r_{|\rho|}\}$, where we assume that elements are ordered with respect to size.
	The fact that $\mname$ is the maximum value
	such that $(\sigma, \rho)$ is $\mname$-structured is equivalent to
	\begin{equation}
		\gcd(\{r-r_1 \mid r\in \rho\}\cup \{s-s_1 \mid s\in \sigma\})=\mname \label{eq:sigmadifference}.
	\end{equation}

	We claim that there are non-negative integers
	$x_{i}, \tx_i$ (for $i\in \fragment{2}{\abs{\rho}}$)
	and non-negative integers $y_{j}, \ty_{j}$
	(for $j\in\numb{\abs{\sigma}}$)
	such that
	\begin{equation}\label{eq:mixedchoice1}
		\mname+ \sum_{i=2}^{|\rho|} (r_i-r_1) (x_i-\tx_i) + \sum_{j=2}^{|\sigma|} (s_{j}-s_1) (y_j-\ty_j) = 0
	\end{equation}
	We argue that such a choice is possible.
	Note that~\eqref{eq:mixedchoice1} corresponds to a Diophantine equation
	with variables $(x_i-\tx_i)$ and $(y_j-\ty_j)$
	(for $2\le i \le |\rho|$, $j\in \numb{\abs{\sigma}}$),
	which has solutions by~\eqref{eq:sigmadifference}.
	We can choose non-negative values for the indeterminates accordingly.

	Using these integer solutions, we define a graph $G$ with one portal as follows. 
	\begin{itemize}
		\item The vertices of $G$ are partitioned into two sets $L$ and $R$ together with some attached providers.
		\item $L$ consists of the following.
		\begin{itemize}
			\item The portal vertex $\port$ which is intended to have $\mname$ neighbors in $R$.
			\item For each $i\in \fragment{2}{|\rho|}$, a set $A_i$ of $x_i$ vertices, each of which are intended to have $r_i-r_1$ neighbors in $R$.
			\item For some yet to be determined positive integer $k$, a set $A^*$ of $k$ vertices, each of which are intended to have $r_{|\rho|}-r_1$ neighbors in $R$.
			\item  For each $j\in \fragment{2}{|\sigma|}$,
			a set $B_j$ of $\dg_j\coloneqq (s_{j}-s_1) y_{j}$ vertices,
			each of which is intended to have $1$ neighbor in $R$.
		\end{itemize}
		We set $A=A^*\cup \bigcup_{i\in \fragment{2}{|\rho|}} A_i$ and $B=\bigcup_{i\in \fragment{2}{|\sigma|}} B_i$.
		\item There are some providers attached to $L$:
		\begin{itemize}
			\item Each vertex in $A$ is the portal of an attached $\{\sigma_{s_1},\rho_{r_1}\}$-provider (exists by \Cref{lem:cdgadget}).
			\item If $|\sigma|\ge 2$, then $\sigMax\ge 1$.
			Also, $\rhoMax\ge 1$ by assumption.
			So, we can choose some $r\in \rho$ and $s\in \sigma$ with $r,s\ge 1$.
			For each $j\in \fragment{2}{|\sigma|}$,
			the vertices of $B_j$ are the $\dg_j$ portals of an attached
            $\{(\rho_{r})^{\dg_j}, (\rho_{r-1})^{\dg_j},(\sigma_{s})^{\dg_j}\}$-provider $J_j$
			(exists by \Cref{lem:3stategadgetbig} using the fact that $s,r\ge 1$).
		\end{itemize}
		\item Analogously, $R$ consists of the following.
		\begin{itemize}
			\item For each $i\in \fragment{2}{|\rho|}$, a set $\tA_i$ of $\tx_i$ vertices, each of which are intended to have $r_i-r_1$ neighbors in $L$.
			\item A set $\tA^*$ of $k$ vertices, each of which are intended to have $r_{|\rho|}-r_1$ neighbors in $L$.
			\item For each $j\in \fragment{2}{|\sigma|}$, a set $\tB_j$ of $\tdg_j\coloneqq(s_{j}-s_1) \ty_{j}$ vertices, each of which is intended to have $1$ neighbor in $L$.
		\end{itemize}
		We set $\tA=\tA^*\cup \bigcup_{i\in \fragment{2}{|\rho|}} \tA_i$ and $\tB=\bigcup_{i\in \fragment{2}{|\sigma|}} \tB_i$.
		\item There are some providers attached to $R$:
		\begin{itemize}
			\item Each vertex in $\tA$ is the portal of an attached $\{\sigma_{s_1},\rho_{r_1}\}$-provider.
			\item For each $j\in \fragment{2}{|\sigma|}$,
			the vertices in $\tB_j$ are the $\tdg_j$ portals of an attached \linebreak
			$\{(\rho_r)^{\tdg_j}, (\rho_{r-1})^{\tdg_j},(\sigma_s)^{\tdg_j}\}$-provider $\tJ_j$.
			(Here we can use the same $s$ and $r$ as in the definition of $L$.)
		\end{itemize}
		\item Using~\eqref{eq:mixedchoice1}, we verify that the intended number of edges going from $L$ to $R$ is the same as the intended number of edges going from $R$ to $L$:
		\begin{align*}
			\mname
			+ &\sum_{i=2}^{|\rho|} (r_i-r_1) x_i
			+ (r_{|\rho|}-r_1)k
			+ \sum_{i=j}^{|\sigma|} (s_i-s_1)y_j\\
			=&\sum_{i=2}^{|\rho|} (r_i-r_1) \tx_i
			+ (r_{|\rho|}-r_1)k
			+ \sum_{j=2}^{|\sigma|} (s_i-s_1) \ty_j
		\end{align*}
		Note that $r_{|\rho|}-r_1$ is the maximum of the intended degrees.
		Therefore, by Lemma~\ref{lem:bipartite-with-degree-sequence}
		applied to $a=r_{|\rho|}-r_1$, for sufficiently large $k$, edges can be introduced such that the intended degrees in the bipartite graph induced by $L\cup R$ are met.
	\end{itemize}

	We show that $(G, \{\port\})$ is a $\{\rho_0, \rho_{m},\sigma_0\}$-provider.
	We give solutions corresponding to the three states.
	\begin{itemize}
		\item[\boldmath $\rho_0$]
		Both $L$ and $R$ are unselected.
		In each of the $\{\sigma_{s_1},\rho_{r_1}\}$-providers attached to $A$ and $\tA$,
		select vertices according to the $\rho_{r_1}$-state.
		For each $i\in \fragment{2}{s_{|\sigma|}}$,
		select vertices in $J_j$ and $\tJ_j$ according to the $(\rho_r^{\dg_j})$- and $(\rho_r^{\tdg_j})$-state, respectively.

		The portal $\port$ is unselected and has no selected neighbors, as required.
		Moreover, the vertices in $A$ and $\tA$ are unselected and obtain $r_1\in \rho$ selected neighbors from the attached $\{ \sigma_{s_1},\rho_{r_1}\}$-providers.
		The vertices in $B$ and $\tB$ are unselected and obtain $r\in \rho$ selected neighbors from the attached $J_j$'s and $\tJ_j$'s, respectively.

		\item[\boldmath $\rho_\mname$]
		$R$ is selected, but $L$ is unselected.
		In each of the $\{\sigma_{s_1},\rho_{r_1}\}$-providers attached to $A$,
		select vertices according to the $\rho_{r_1}$-state.
		In each of the $\{\sigma_{s_1},\rho_{r_1}\}$-providers attached to $\tA$,
		select vertices according to the $\sigma_{s_1}$-state.
		For each $i\in \fragment{2}{s_{|\sigma|}}$,
		select vertices in $J_j$ according to the $(\rho_{r-1}^{\dg_j})$-state,
		and select vertices in $\tJ_j$ according to the $(\sigma_{s}^{\tdg_j})$-state.

		The portal $\port$ is unselected and has $\mname$ selected neighbors in $R$, as required.
		For $i\in \fragment{2}{|\rho|}$, each vertex in $A_i$ is unselected, obtains $r_1\in \rho$ selected neighbors from the attached $\{\sigma_{s_1},\rho_{r_1}\}$-provider, and additionally obtains $r_i-r_1$ selected neighbors in $R$, for a total of $r_i\in \rho$ selected neighbors. The vertices in $\tA$ are selected and obtain $s_1\in \sigma$ selected neighbors from the attached $\{\sigma_{s_1},\rho_{r_1}\}$-providers. They have no further selected neighbors.
		For $i\in \fragment{2}{|\sigma|}$, each vertex in $B_i$ is unselected, obtains $r-1\in \rho$ selected neighbors from $J_j$, and additionally obtains $1$ selected neighbor from $R$, for a total of $r\in \rho$.
		The vertices in $\tB$ are selected and obtain $s\in \sigma$ selected neighbors from the attached $\tJ_j$'s.

		\item[\boldmath $\sigma_0$]
		The selection is symmetric to the previous state:
		$L$ is selected, but $R$ is unselected.
		In each of the $\{\sigma_{s_1},\rho_{r_1}\}$-providers attached to $A$,
		select vertices according to the $\sigma_{s_1}$-state.
		In each of the $\{\sigma_{s_1},\rho_{r_1}\}$-providers attached to $\tA$,
		select vertices according to the $\rho_{r_1}$-state.
		For each $i\in \fragment{2}{s_{|\sigma|}}$,
		select vertices in $J_j$ according to the $(\sigma_{s}^{\dg_j})$-state,
		and select vertices in $\tJ_j$ according to the $(\rho_{r-1}^{\tdg_j})$-state.

		The portal $\port$ is selected and has no selected neighbors, as required.
		The analysis is symmetric to the previous case.
	\end{itemize}
	So, for $\mname\in \{1,2\}$,
	we have shown the existence of a $\{\rho_0, \rho_\mname, \sigma_0\}$-provider.
	Note that, for $\mname=2$,
	if we replace the single portal $\port$ of degree $2$ with two portals, each of degree $1$,
	then the construction yields a $\{\rho_0\rho_0, \rho_1\rho_1, \sigma_0\sigma_0\}$-provider.
\end{proof}

In the remaining part, we extend these providers to a setting
where the selected portals can also get neighbors.
That is, we also have the state $\sigma_1$ for the portal.
We first handle the case when $(\sigma,\rho)$ is $1$-structured,
and then, independently, the case when the sets are $2$-structured.

\subsubsection{\texorpdfstring%
{\boldmath $1$-Structured $(\sigma,\rho)$}%
{1-Structured (Sigma, Rho)}}
In this section, we consider the case where $m=1$ is the largest $m$ for which $(\sigma, \rho)$ is $m$-structured. We first construct an auxiliary provider
which we combine afterward with the provider from \cref{lem:gadgettechnical}.

\begin{lemma}\label{lem:casem1}
	Suppose there is a $\{\rho_0, \rho_1, \sigma_0\}$-provider.
	If $\sigMax\ge 1$,
	then there is a $\{\rho_0\rho_0, \rho_0\sigma_0,\allowbreak \sigma_0\sigma_0, \sigma_1\sigma_1\}$-provider.
	Moreover, the closed neighborhoods of both portals are disjoint.
\end{lemma}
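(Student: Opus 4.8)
The goal is to build a $\{\rho_0\rho_0,\rho_0\sigma_0,\sigma_0\sigma_0,\sigma_1\sigma_1\}$-provider with two portals whose closed neighborhoods are disjoint, given a $\{\rho_0,\rho_1,\sigma_0\}$-provider (call it $H$ with portal $p$) and the assumption $\sigMax\ge 1$ (so there is some $s\in\sigma$ with $s\ge 1$, and also $r\in\rho$ to be used, recalling $\rhoMax\ge 1$ implicitly or picking $r\in\rho$). The first three target states $\rho_0\rho_0$, $\rho_0\sigma_0$, $\sigma_0\sigma_0$ all require the portals to have \emph{no} selected neighbors, while $\sigma_1\sigma_1$ requires each selected portal to have exactly one selected neighbor. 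The natural idea is to take two ``sides'', one per portal, and to make each side independently able to realize any of the local states $\rho_0$, $\sigma_0$, $\sigma_1$ on its portal; then the four global states are obtained by pairing appropriate local states. Since the sides will be vertex-disjoint except for shared providers attached ``below'', and portals in different sides are non-adjacent, the closed-neighborhood disjointness comes for free.

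The core subgoal is therefore: \emph{construct a gadget with one portal $v$ that realizes the local language $\{\rho_0,\sigma_0,\sigma_1\}$}, using the given $\{\rho_0,\rho_1,\sigma_0\}$-provider plus standard providers. I would proceed as follows. Take a fresh portal vertex $v$. To handle $\sigma_1$: when $v$ is selected, it needs exactly one selected neighbor; attach a $\{\rho_0,\rho_1,\sigma_0\}$-provider $H$ via an edge between $v$ and the portal $p$ of $H$. In state $\sigma_1$ of $v$, we want $p$ selected (contributing the one selected neighbor of $v$) — but $H$ only offers $\sigma_0$ on $p$, i.e.\ a selected $p$ with zero selected neighbors inside $H$; since $p$ then has exactly one selected neighbor overall (namely $v$), we need $1\in\sigma$, which holds because $\sigMax\ge1$ would give us $s\ge 1$ but not necessarily $1\in\sigma$. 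So more care is needed: $p$ must itself receive enough further selected neighbors to land in $\sigma$. I would fix this by attaching to $p$ additional structure — e.g.\ further copies of $\{\sigma_s,\rho_r\}$-providers (\cref{lem:cdgadget}) and cliques — so that $p$ can collect $s-1$ selected neighbors from attached providers and reach total degree $s\in\sigma$, while in the states where $p$ is unselected it collects $r\in\rho$ selected neighbors. The remaining constraint on $v$ is the $\rho$-side: in states $\rho_0$ and $\sigma_0$ of $v$, we need $v$ unselected (resp.\ selected) with $0$ selected neighbors, so $p$ must be \emph{unselected} in those cases; $H$'s states $\rho_0$ and $\rho_1$ on $p$ both leave $p$ unselected, and we pick $\rho_0$ to give $v$ no selected neighbor, while also letting $v$ reach a feasible count when selected (again $0\in\sigma$ is not guaranteed, so $v$ when selected in state $\sigma_0$ must collect $s$ selected neighbors from yet more attached $\{\sigma_s,\rho_r\}$-providers, and when unselected collect $r\in\rho$, or use the $r=0$ variant of \cref{lem:fillinggadget}).

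Putting it together: the single-portal gadget for $\{\rho_0,\sigma_0,\sigma_1\}$ is $v$, together with (i) an attached $\{\rho_0,\rho_1,\sigma_0\}$-provider through which $v$'s unique ``variable'' selected neighbor is routed, (ii) enough $\{\sigma_s,\rho_r\}$-providers and cliques hung off $v$ and off the provider's portal so that every vertex other than $v$ always has a feasible degree in each of the three intended selections, and $v$ itself has feasible degree in the two selected states — the numbers work out because we only ever need $v$ to have $0$ or $1$ selected neighbors (the portal), and we have full freedom to pad degrees using providers that offer both a $\sigma_s$-selection and a $\rho_r$-selection. Then the two-portal provider is the disjoint union of two such gadgets (portals $v_1$, $v_2$), and the four required global strings $\rho_0\rho_0$, $\rho_0\sigma_0$, $\sigma_0\sigma_0$, $\sigma_1\sigma_1$ are realized by combining the corresponding local selections; disjointness of the closed neighborhoods of $v_1$ and $v_2$ is immediate since the two gadgets share no vertices and there is no edge between them. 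I expect the main obstacle to be the bookkeeping in step (ii): ensuring \emph{simultaneously} across all three local states that every internal vertex (including the provider's portal $p$, whose selection status flips between the states) has a number of selected neighbors in $\sigma$ or $\rho$ as appropriate — this is exactly the ``feasibility of the neighbors'' subtlety flagged in the technical overview, and it is why we lean on $\{\sigma_s,\rho_r\}$-providers (which behave well under both selection statuses) rather than bare edges to do the padding.
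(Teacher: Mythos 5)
Your reduction of the two-portal statement to a single-portal subgoal is where the argument breaks. You propose to build a one-portal $\{\rho_0,\sigma_0,\sigma_1\}$-provider and then take the disjoint union of two copies (which, as you note, would suffice, since providers may realize supersets and disjointness of closed neighborhoods would be immediate). But such a single-portal provider does not exist in general under the hypotheses of \cref{lem:casem1}: take $\sigma=\{0,2\}$, $\rho=\{0,1\}$, which is maximally $1$-structured, has $\sigMax\ge 1$, and admits a $\{\rho_0,\rho_1,\sigma_0\}$-provider by \cref{lem:gadgettechnical}. In any partial solution witnessing $\sigma_1$ at a single portal $v$, every selected non-portal vertex has an even number of selected neighbors, while $v$ has exactly one; summing degrees in the graph induced on the selected set gives an odd number, contradicting the handshake lemma. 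So the state $\sigma_1$ can only be realized in \emph{pairs} of portals whose gadgets are coupled, which is exactly why the lemma asks for $\sigma_1\sigma_1$ with two portals and why the paper's construction ties both portals to one shared structure: a clique on $\sigMax+1$ vertices with the edge $v_1v_2$ removed, the portals pendant at $v_1$ and $v_2$, and $\rhoMax$ copies of the given provider identified at each clique vertex. In the $\sigma_1\sigma_1$ state the whole clique is selected and the two portals exactly compensate the missing edge, so every clique vertex has $\sigMax$ selected neighbors; in the other three states the clique is unselected and the attached copies supply $\rhoMax$ or $\rhoMax-1$ selected neighbors as needed.

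Even apart from the parity obstruction, your padding mechanism in step (ii) does not go through: to realize $\sigma_1$ you want the provider's portal $p$ to be selected adjacent to the selected $v$ and then to collect $s-1$ further selected neighbors from attached $\{\sigma_s,\rho_r\}$-providers. But any selected padding vertex $w$ adjacent to the selected $p$ must itself land in $\sigma$; with the available building blocks its achievable totals are $s+1$ (its provider's $\sigma_s$-state plus $p$) or $1$ (the given provider's $\sigma_0$-state plus $p$), and neither is guaranteed to lie in $\sigma$. This is not bookkeeping but the crux of the lemma — providers cannot hand extra selected neighbors to an already selected vertex without violating their own portals' constraints, which is why the paper supplies the selected neighbors of selected vertices from the clique itself (whose counts are balanced exactly by the missing edge together with the two portals) and uses the attached copies of the $\{\rho_0,\rho_1,\sigma_0\}$-provider only to pad the \emph{unselected} states.
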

\begin{proof}
	Let $(J,\{\widehat \port\})$ be a $\{\rho_0, \rho_1, \sigma_0\}$-provider.
	We define a graph $G$ with two portals $\port_1$ and $\port_2$ as follows.
	Let $K$ be a clique with vertices $v_1,\dots,v_{\sigMax+1}$
	where we remove the edge between $v_1$ and $v_2$. (Note that $K$ contains at least $2$ vertices since $\sigMax\ge 1$.)
	Then, $G$ consists of $K$ where we add the two edges $\port_1v_1$ and $\port_2v_2$.
	Moreover, for every $i\in\numb{\sigMax+1}$,
	we add $\rhoMax$ copies of $J$ to $G$,
	where we use $v_i$ as the portal vertex for all copies.

	It remains to show that $(G, \{\port_1,\port_2\})$ is a
	$\{\rho_0\rho_0, \rho_0\sigma_0, \sigma_0\sigma_0, \sigma_1\sigma_1\}$-provider.
	We give solutions corresponding to the four states.
	\begin{description}
		\item[\boldmath $\rho_0\rho_0$]
		No vertex from $K$ is selected, and so, $\port_1$ and $\port_2$ have no selected neighbors.
		To satisfy the $\rho$-constraints of the vertices in $K$, each $v_i$ obtains $\rhoMax$ selected neighbors from its attached copies of $J$ (for each of the $\rhoMax$ attached $\{\rho_0, \rho_1, \sigma_0\}$-providers, we select vertices corresponding to the $\rho_1$-state).

		\item[\boldmath $\rho_0\sigma_0$]
		No vertex from $K$ is selected.
		Since $\port_2$ is selected, $v_2$ already has one selected neighbor.
		From the copies of $J$ attached to $v_2$, it receives another $\rhoMax-1$ selected neighbors, for a total of $\rhoMax$ ($\rhoMax-1$ of the attached $\{\rho_0, \rho_1, \sigma_0\}$-providers are in the $\rho_1$-state, and one is in the $\rho_0$-state).
		The remaining vertices of $K$ get $\rhoMax$ selected neighbors via the attached copies of $J$.

		\item[\boldmath $\sigma_0\sigma_0$]
		No vertex from $K$ is selected.
		Now, both $v_1$ and $v_2$ already have one selected neighbor each, and receive another $\rhoMax-1$ selected neighbors through their attached copies of $J$.
		The remaining vertices of $K$ again get $\rhoMax$ selected neighbors via the attached copies of $J$.

		\item[\boldmath $\sigma_1\sigma_1$]
		All vertices from $K$ are selected.
		The copies of $J$ do not provide further neighbors to these vertices (all of them are in the $\sigma_0$-state).
		Both $v_1$ and $v_2$ have $\sigMax-1$ selected neighbors in $K$ (since the edge between $v_1$ and $v_2$ is missing),
		and they are adjacent to the selected $\port_1$ and $\port_2$, respectively.
		Hence, they each have $\sigMax$ selected neighbors in total.
		By construction, the remaining vertices of $K$ have $\sigMax$ selected neighbors in $K$, and thus, in total.
		\qedhere
	\end{description}
\end{proof}

Combining \cref{lem:casem1} with \cref{lem:gadgettechnical},
we obtain the following result.
\begin{lemma}
	\label{lem:mainm1}
	Let $\sigMax\ge 1$.
	Suppose that $\mname=1$ is the maximum value
	such that $(\sigma, \rho)$ is $\mname$-structured.

	If $\rhoMax\ge 1$, then there is a
	$\{\rho_0\sigma_0, \rho_1\sigma_0, \sigma_0\sigma_0, \sigma_1\sigma_1\}$-provider.
	If $\rhoMax =0$ but $\rho\neq\{0\}$, then there is a
	$\{\rho_0\sigma_0, \sigma_0\sigma_0, \sigma_1\sigma_1\}$-provider.
	Moreover, the closed neighborhoods of the portals in both providers are disjoint.
\end{lemma}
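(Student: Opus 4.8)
The plan is to derive both statements from \cref{lem:casem1}, which (given a $\{\rho_0,\rho_1,\sigma_0\}$-provider and $\sigMax\ge 1$) already yields a $\{\rho_0\rho_0,\rho_0\sigma_0,\sigma_0\sigma_0,\sigma_1\sigma_1\}$-provider with disjoint closed portal neighborhoods. For the second part (where $\rhoMax=0$) the target set $\{\rho_0\sigma_0,\sigma_0\sigma_0,\sigma_1\sigma_1\}$ is already a subset of the language supplied by \cref{lem:casem1}, so it will suffice to produce the required $\{\rho_0,\rho_1,\sigma_0\}$-provider. For the first part (where $\rhoMax\ge 1$) the only target string that \cref{lem:casem1} does not directly give is $\rho_1\sigma_0$, and the plan is to add it by attaching one more small gadget to one portal.

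\textbf{First part ($\rhoMax\ge1$).} Since $\mname=1$ is the maximal value for which $(\sigma,\rho)$ is $\mname$-structured and $\rhoMax\ge1$, \cref{lem:gadgettechnical} (case $\mname=1$) gives a $\{\rho_0,\rho_1,\sigma_0\}$-provider $(J,\{w\})$, and then \cref{lem:casem1} (using $\sigMax\ge1$) gives a $\{\rho_0\rho_0,\rho_0\sigma_0,\sigma_0\sigma_0,\sigma_1\sigma_1\}$-provider $(H,\{q_1,q_2\})$ with $N_H[q_1]\cap N_H[q_2]=\emptyset$. I would form $G$ from the disjoint union of $H$ and a fresh copy of $J$ by identifying $w$ with $q_1$, and set $\port_1:=q_1$, $\port_2:=q_2$. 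Since $H$ and $J$ share only the vertex $q_1=w$ and every non-portal vertex of $G$ retains exactly its neighborhood inside $H$ or inside $J$, a witness for a string of $H$ can be combined with a witness for a state of $J$ whenever they agree on the selection status of $q_1$. This then realizes each target string: $\rho_0\sigma_0$ from ($H$ in $\rho_0\sigma_0$, $J$ in $\rho_0$); $\rho_1\sigma_0$ from ($H$ in $\rho_0\sigma_0$, $J$ in $\rho_1$), so that $\port_1$ is unselected with $0+1$ selected neighbors; $\sigma_0\sigma_0$ from ($H$ in $\sigma_0\sigma_0$, $J$ in $\sigma_0$); and $\sigma_1\sigma_1$ from ($H$ in $\sigma_1\sigma_1$, $J$ in $\sigma_0$). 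The disjointness $N_G[\port_1]\cap N_G[\port_2]=\emptyset$ follows since $N_G[\port_2]=N_H[q_2]$ and $N_G[\port_1]=N_H[q_1]\cup N_J[w]$, the two $H$-parts are disjoint by \cref{lem:casem1}, and $N_J[w]\setminus\{w\}$ consists of fresh vertices while $q_1=w\notin N_H[q_2]$.

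\textbf{Second part ($\rhoMax=0$ but $\rho\ne\{0\}$).} First I would observe that $\rho=\NN$: a finite $\rho$ with $\rhoMax=0$ is $\{0\}$, which is excluded, so $\rho$ is cofinite, and then $\rhoMax=0$ forces $\NN\setminus\rho=\emptyset$. Because \cref{lem:gadgettechnical} needs $\rhoMax\ge1$, I would instead build a $\{\rho_0,\rho_1,\sigma_0\}$-provider by hand: take a clique $K$ on $\sigMax+1\ge2$ vertices, pick $v\in K$, add a new vertex $u$ adjacent only to $v$, and make $u$ the portal. Selecting nothing witnesses $\rho_0$; selecting exactly $\{u\}$ witnesses $\sigma_0$ (with $v$ unselected, hence satisfied as $\rho=\NN$); selecting all of $K$ and not $u$ witnesses $\rho_1$, since each vertex of $K$ then has exactly $\sigMax\in\sigma$ selected neighbors (using that $\sigMax\in\sigma$ whether $\sigma$ is finite or cofinite) and $u$ is unselected with one selected neighbor; in all cases the non-portal vertices are satisfied. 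Feeding this provider into \cref{lem:casem1} (again with $\sigMax\ge1$) gives a $\{\rho_0\rho_0,\rho_0\sigma_0,\sigma_0\sigma_0,\sigma_1\sigma_1\}$-provider with disjoint closed portal neighborhoods, and since $\{\rho_0\sigma_0,\sigma_0\sigma_0,\sigma_1\sigma_1\}$ is contained in this language, the same graph and portals are the desired provider.

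The routine parts are the string-by-string checks and the neighborhood bookkeeping for the first part. The one genuinely different point — the main obstacle — is the case $\rhoMax=0$, where \cref{lem:gadgettechnical} is silent and the hypothesis of \cref{lem:casem1} must be re-established; here the key simplification is that $\rho=\NN$ trivializes every $\rho$-constraint, so only the selected clique vertices impose a real condition (namely having $\sigMax\in\sigma$ selected neighbors), which makes the direct construction short.
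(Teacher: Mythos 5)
Your proof is correct. For the case $\rhoMax\ge 1$ it is essentially the paper's argument: the paper also takes the $\{\rho_0\rho_0,\rho_0\sigma_0,\sigma_0\sigma_0,\sigma_1\sigma_1\}$-provider from \cref{lem:casem1} and glues on copies of the $\{\rho_0,\rho_1,\sigma_0\}$-provider from \cref{lem:gadgettechnical} at the portals (it attaches one to each portal, you attach one only to $\port_1$, which suffices since $\port_2$ never needs extra selected neighbors); the combination and disjoint-neighborhood bookkeeping are the same. Where you genuinely diverge is the case $\rhoMax=0$: the paper handles it by peeking into the \emph{proof} of \cref{lem:casem1} and observing that when $\rhoMax=0$ the construction attaches zero copies of the hypothesized provider, so "the exact same proof works" without any $\{\rho_0,\rho_1,\sigma_0\}$-provider. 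You instead keep \cref{lem:casem1} as a black box and re-establish its hypothesis: since $\rhoMax=0$ and $\rho\neq\{0\}$ force $\rho=\NN$, your clique-plus-pendant gadget (selecting nothing, only $u$, or all of $K$; using $\sigMax\in\sigma$ in both the finite and cofinite case) is indeed a $\{\rho_0,\rho_1,\sigma_0\}$-provider, and then the target language is a subset of the provided one. Your route costs one extra small gadget but avoids relying on internals of \cref{lem:casem1}'s construction, which is arguably a cleaner use of the lemma; the paper's remark is shorter but is really a statement about its proof rather than its statement.
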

\begin{proof}
	Assume $\rhoMax\ge 1$.
	Let $(G_1, \{\port_1\})$ and $(G_2,\{\port_2\})$ be two copies of a
	$\{\rho_0, \rho_1, \sigma_0\}$-provider as constructed in \Cref{lem:gadgettechnical}.
	Let $(G_3, \{\port_1,\port_2\})$ be a
	$\{\rho_0\rho_0, \rho_0\sigma_0, \sigma_0\sigma_0, \sigma_1\sigma_1\}$-provider which exists by \Cref{lem:casem1}.
	It is easy to check that $(G_1\cup G_2 \cup G_3, \{\port_1,\port_2\})$
	is a $\{\rho_0\sigma_0, \rho_1\sigma_0, \allowbreak
	\sigma_0\sigma_0, \sigma_1\sigma_1\}$-provider.

	Observe that the construction from \cref{lem:casem1}
	does not use any $\{\rho_0,\rho_1,\sigma_0\}$-provider if $\rhoMax=0$.
	Hence, the exact same proof works in the case when $\rhoMax=0$
	and gives us a
	$\{\rho_0\rho_0, \rho_0\sigma_0,\allowbreak \sigma_0\sigma_0, \sigma_1\sigma_1\}$-provider.
\end{proof}

\subsubsection{\texorpdfstring%
{\boldmath $2$-Structured $(\sigma,\rho)$}
{2-Structured (Sigma, Rho)}}
We now treat the case where $(\sigma,\rho)$ is $2$-structured, but not $m$-structured for any $m$ larger than $2$. We again start with an auxiliary result.
\begin{lemma}\label{lem:casem2}
	Let $\sigMax\ge 1$. If
	a $\{\rho_0,\rho_2,\sigma_0\}$-provider and
	a $\{\rho_0\rho_0,\rho_1\rho_1,\sigma_0\sigma_0\}$-provider exist,
	then there is a $\{\rho_0,\rho_2,\sigma_0,\sigma_2\}$-provider.
\end{lemma}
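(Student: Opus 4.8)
The plan is to adapt the construction in the proof of \cref{lem:casem1}. There, a clique on $\sigMax+1$ vertices with the edge $v_1v_2$ removed, together with portals $\port_1v_1$, $\port_2v_2$ and $\{\rho_0,\rho_1,\sigma_0\}$-provider feeders, realizes $\{\rho_0\rho_0,\rho_0\sigma_0,\sigma_0\sigma_0,\sigma_1\sigma_1\}$. Here I would (a) merge the two portals into one portal $\port$ adjacent to both $v_1$ and $v_2$, so that the ``$\sigma_1\sigma_1$'' behaviour becomes the desired state $\sigma_2$ while ``$\rho_0\rho_0$'' and ``$\sigma_0\sigma_0$'' become $\rho_0$ and $\sigma_0$; (b) replace the $\{\rho_0,\rho_1,\sigma_0\}$-feeders (unavailable when $(\sigma,\rho)$ is $2$-structured) by feeders assembled from the given $\{\rho_0,\rho_2,\sigma_0\}$-provider, with a parity correction supplied by the given $\{\rho_0\rho_0,\rho_1\rho_1,\sigma_0\sigma_0\}$-provider; and (c) add a second \emph{mirror} clique whose purpose is to realize the new state $\rho_2$.

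Concretely: take a clique $K$ on $\sigMax+1$ vertices with the edge $v_1v_2$ deleted, a second clique $K'$ on $\sigMax+1$ vertices with the edge $w_1w_2$ deleted, and add the edges $\port v_1,\ \port v_2,\ v_1w_1,\ v_2w_2$. To every vertex $v$ of $K\cup K'$ attach copies of the $\{\rho_0,\rho_2,\sigma_0\}$-provider (portal identified with $v$), enough to be able to give $v$ any even number of extra selected neighbours up to $\rhoMax$ while $v$ is unselected and nothing while $v$ is selected; and to each of the two pairs $\{v_1,v_2\}$, $\{w_1,w_2\}$ attach one $\{\rho_0\rho_0,\rho_1\rho_1,\sigma_0\sigma_0\}$-provider (portals identified with the pair), so that in addition a simultaneous $+1$ at both members of a pair is available. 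Both kinds of providers are used exactly like the feeders in \cref{lem:casem1}: a selected portal forces them into a $\sigma$-state contributing nothing, an unselected portal allows the stated contributions.

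I would then exhibit the four partial solutions. For $\rho_0$: select nothing, leave $\port$ unselected, and pad every vertex of $K\cup K'$ up to a feasible $\rho$-value (possible since $\rhoMax\ge1$, as $\rho\ne\{0\}$, using a paired provider where the even feeders alone have the wrong parity); then $\port$ has $0$ selected neighbours. For $\sigma_0$: the same, but $\port$ selected; since $v_1,v_2$ stay unselected, $\port$ still has $0$ selected neighbours, and the extra selected neighbour $\port$ now seen by $v_1$ and $v_2$ is rebalanced using the pair $\{v_1,v_2\}$ and its feeders. For $\rho_2$: select all of $K$ and all of $K'$ and leave $\port$ unselected; then $v_1$ (resp. $v_2$) has $\sigMax-1$ selected neighbours inside $K$ plus the selected $w_1$ (resp. $w_2$), totalling $\sigMax\in\sigma$; $w_1,w_2$ have $\sigMax$ symmetrically; every other clique vertex has $\sigMax$ inside its clique; and $\port$ has exactly the two selected neighbours $v_1,v_2$. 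For $\sigma_2$: select all of $K$ and $\port$ but leave $K'$ unselected (padding its vertices, again using the pair $\{w_1,w_2\}$ for parity); then $v_1$ has $\sigMax-1$ inside $K$, no selected $w_1$, but the selected $\port$, totalling $\sigMax$; $v_2$ symmetrically; every other vertex of $K$ has $\sigMax$; and $\port$ again has the two selected neighbours $v_1,v_2$.

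The degree arithmetic is routine once one observes that $\sigMax+1$ is the only clique size for which every fully selected vertex automatically has $\sigMax\in\sigma$ neighbours, and that a single deleted edge creates a deficit of exactly one, filled by $\port$ in the $\sigma$-states and by the mirror clique in $\rho_2$. I expect the one genuinely delicate point to be the padding parity: the $\{\rho_0,\rho_2,\sigma_0\}$-feeders add only even numbers, so whenever an unselected clique vertex already sees an odd number of selected neighbours one must shift parity using the $\{\rho_0\rho_0,\rho_1\rho_1,\sigma_0\sigma_0\}$-provider (which is exactly why both hypotheses are needed); one has to check that such vertices occur only in the pairs $\{v_1,v_2\}$ (in the $\sigma_0$-solution) and $\{w_1,w_2\}$ (in the $\sigma_2$-solution), so that the paired provider suffices and every unselected vertex can indeed be brought to a value in $\rho$.
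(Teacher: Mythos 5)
Your construction has a genuine gap at exactly the point you flag as ``delicate''. Your only source of $+1$ increments for unselected vertices is the single $\{\rho_0\rho_0,\rho_1\rho_1,\sigma_0\sigma_0\}$-provider attached to each of the two pairs $\{v_1,v_2\}$ and $\{w_1,w_2\}$; every other vertex of $K\cup K'$ can be padded only in even steps by the $\{\rho_0,\rho_2,\sigma_0\}$-feeders. The claim that odd-parity vertices occur only in those two pairs is false under the hypotheses of the lemma: the lemma only assumes $\sigMax\ge 1$ and the existence of the two providers, which (via \cref{lem:gadgettechnical}) also holds, e.g., for $\sigma=\{0,2\}$, $\rho=\{1,3\}$, or for $\sigma=\{1,3\}$, $\rho=\{1\}$. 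In such cases every element of $\rho$ is odd, so in your $\rho_0$- and $\sigma_0$-witnesses all vertices of $K\cup K'$ outside the two distinguished pairs (and in your $\sigma_2$-witness all vertices of $K'\setminus\{w_1,w_2\}$) sit at an even count of selected graph-neighbours and can never reach a value in $\rho$; the exhibited selections are therefore not partial solutions, and your argument gives no witness for these states. (Your construction does work in the setting where it is ultimately applied, \cref{lem:mainm2}, since there all elements of $\rho$ are even, but that assumption is not part of the statement you are proving.)

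The repair is essentially what the paper does: make the unit-increment provider available at \emph{every} vertex, not just at two pairs. The paper's proof pairs up all vertices by taking a complete bipartite graph $K_{\sigMax,\sigMax}$ minus one edge with pairs $(w_i,w'_i)$, attaches $\rhoMax$ copies of the $\{\rho_0\rho_0,\rho_1\rho_1,\sigma_0\sigma_0\}$-provider to each pair, and connects $\port$ to $w_1$ and $w'_1$; then every vertex can be given any number of selected neighbours between $0$ and $\rhoMax$ in steps of $1$ (equally within a pair), which yields the witnesses for $\rho_0$, $\sigma_0$, and $\sigma_2$ with no parity assumptions beyond $\sigMax\in\sigma$ and $\rhoMax\in\rho$. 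Your mirror-clique mechanism for the state $\rho_2$ is also unnecessary: the paper obtains $\rho_2$ for free by taking the disjoint union of its $\{\rho_0,\sigma_0,\sigma_2\}$-provider with the given $\{\rho_0,\rho_2,\sigma_0\}$-provider, identifying the two portals. If you either route all padding through paired providers covering every vertex, or adopt the union trick and drop $K'$, your approach can be made to work; as written, it does not prove the stated lemma.
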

\begin{proof}
	Let $(G', \{\port'\})$ be a $\{\rho_0,\rho_2,\sigma_0\}$-provider
	and assume we can construct a $\{\rho_0,\sigma_0,\sigma_2\}$-provider $(G,\{\port\})$.
	By setting $\port=\port'$,
	$(G \cup G', \{\port\})$ is a $\{\rho_0,\rho_2,\sigma_0,\sigma_2\}$-provider.
	In the remainder of this proof,
	we show how to obtain a $\{\rho_0,\sigma_0,\sigma_2\}$-provider $(G,\{\port\})$.

	Let $K$ be a complete bipartite graph with vertex partition $\{w_1,\dots,w_{\sigMax}\}$ and $\{w'_1,\dots,w'_{\sigMax}\}$.
	We remove the edge between $w_1$ and $w'_1$ (observe that $\sigMax\ge 1$, so these vertices always exist).
	For each $i\in \numb{\sigMax}$, let $w_i$ and $w'_i$ be the two portals of $\rhoMax$ attached copies of a
	$\{\rho_0\rho_0,\rho_1\rho_1,\sigma_0\sigma_0\}$-provider.
	The portal $\port$ is adjacent to $w_1$ and $w'_1$.
	Then, $G$ is the graph consisting of $K$,
	the $\sigMax\cdot\rhoMax$ attached
	$\{\rho_0\rho_0,\rho_1\rho_1,\sigma_0\sigma_0\}$-providers,
	and the portal $\port$.
	We show that $(G, \{\port\})$ is a $\{\rho_0,\sigma_0,\sigma_2\}$-provider
	by giving solutions corresponding to the three states.

	\begin{itemize}
		\item[\boldmath $\rho_0$]
		No vertex from $K$ is selected, and so, $\port$ has no selected neighbor.
		To satisfy the $\rho$-constraints of the vertices in $K$,
		all vertices $w_i$ and $w'_i$ receive $\rhoMax$ selected neighbors
		from the attached $\{\rho_0\rho_0,\rho_1\rho_1,\sigma_0\sigma_0\}$-providers
		(they are all in state $\rho_1\rho_1$).

		\item[\boldmath $\sigma_0$]
		No vertex from $K$ is selected, and so, $\port$ has no selected neighbor.
		However, now $\port$ is selected, and
		so, $w_1$ and $w'_1$ already have one selected neighbor.
		Therefore, $w_1$ and $w'_1$ receive only $\rhoMax-1$ selected neighbors
		from the attached $\{\rho_0\rho_0,\rho_1\rho_1,\sigma_0\sigma_0\}$-providers
		($\rhoMax-1$ of these are in state $\rho_1\rho_1$, and the remaining one is in state $\rho_0\rho_0$).
		For $i\ge 2$, $w_i$ and $w'_i$ receive $\rhoMax$ selected neighbors
		from the attached $\{\rho_0\rho_0,\rho_1\rho_1,\sigma_0\sigma_0\}$-providers, as before.

		\item[\boldmath $\sigma_2$]
		All vertices from $K$ are selected.
		Since $K$ is a complete bipartite graph (minus one edge),
		the vertices $w_2,\dots,w_{\sigMax}$ and $w'_2,\dots,w'_{\sigMax}$ are adjacent to $\sigMax$ selected neighbors.
		The vertices $w_1$ and $w'_1$ have $\sigMax-1$ selected neighbors in $K$ and, additionally, the selected portal $\port$ as a neighbor.
		All of the $\{\rho_0\rho_0,\rho_1\rho_1,\sigma_0\sigma_0\}$-providers are in state $\sigma_0\sigma_0$,
		and so, do not give any further selected neighbors to vertices in $K$.
		\qedhere
	\end{itemize}
\end{proof}

As a last step, we combine \cref{lem:casem2} with \cref{lem:gadgettechnical}
to obtain the final provider for the $2$-structured case.
\begin{lemma}\label{lem:mainm2}
	Let $\rhoMax\ge 1$.
	Suppose that $\mname=2$ is the maximum value
	such that $(\sigma, \rho)$ is $\mname$-structured.
	Suppose further that the elements of $\rho$ and $\sigma$ are even.
	Then, for
	$L = \{\rho_{i}  \in \rhoStates \mid i \in 2\NN \}
	\cup \{\sigma_{i}\in \sigStates \mid i \in 2\NN \}$,
	there is an $L$-provider.
\end{lemma}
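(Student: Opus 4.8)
The plan is to derive this from the providers already in hand — the $\{\rho_0,\rho_2,\sigma_0\}$-provider and $\{\rho_0\rho_0,\rho_1\rho_1,\sigma_0\sigma_0\}$-provider of \cref{lem:gadgettechnical} and the four-state provider of \cref{lem:casem2} — and then to \emph{amplify} the four portal states to all even states by a portal-identification argument.

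First I would fix $\allMax\deff\max\{\sigMax,\rhoMax\}$ and $t\deff\allMax/2$. Since $\rhoMax\ge 1$ and all elements of $\rho$ are even, $\rhoMax\ge 2$; since all elements of $\sigma$ are even, $\sigMax$ is even; hence $t$ is a positive integer, $2t\ge\rhoMax$, $2t\ge\sigMax$, and $L=\{\rho_i\mid i\in\{0,2,\dots,\rhoMax\}\}\cup\{\sigma_i\mid i\in\{0,2,\dots,\sigMax\}\}$. By \cref{lem:gadgettechnical} in the case $\mname=2$ (whose hypotheses $\rhoMax\ge1$ and ``$\mname=2$ is maximal'' are exactly ours), a $\{\rho_0,\rho_2,\sigma_0\}$-provider and a $\{\rho_0\rho_0,\rho_1\rho_1,\sigma_0\sigma_0\}$-provider exist; if in addition $\sigMax\ge 1$ (equivalently $\sigMax\ge 2$), \cref{lem:casem2} turns these into a $\{\rho_0,\rho_2,\sigma_0,\sigma_2\}$-provider. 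Let $(H,\{\widehat\port\})$ denote this last provider when $\sigMax\ge 1$, and the $\{\rho_0,\rho_2,\sigma_0\}$-provider when $\sigMax=0$; write $L_H$ for the set of states it provides. The point of $H$ is that for every state $\tau_j\in L_H$ the witnessing partial solution gives $\widehat\port$ exactly $j\in\{0,2\}$ selected neighbours inside $H$, and that $L_H$ contains a $\rho$-state and a $\sigma$-state with subscript $0$ and — unless $\sigMax=0$ — also with subscript $2$.

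The provider $(G,\Port)$ is built by taking $t$ pairwise disjoint copies $(H^{(1)},\{\widehat\port^{(1)}\}),\dots,(H^{(t)},\{\widehat\port^{(t)}\})$ of $(H,\{\widehat\port\})$ and identifying $\widehat\port^{(1)},\dots,\widehat\port^{(t)}$ into one vertex $\port$, with $\Port\deff\{\port\}$. The key structural observation is that every vertex of $G$ other than $\port$ lies in a unique copy and every edge of $G$ lies within a single copy; hence a set $S\subseteq V(G)$ is a partial solution of $(G,\Port)$ if and only if each $S_i\deff S\cap V(H^{(i)})$ is a partial solution of $(H^{(i)},\{\widehat\port^{(i)}\})$ and all the $S_i$ agree on whether $\port$ is selected, in which case $|N_G(\port)\cap S|=\sum_{i=1}^t|N_{H^{(i)}}(\widehat\port^{(i)})\cap S_i|$. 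To realise the state $\rho_{2j}$ for $0\le j\le\rhoMax/2\le t$, I would take partial solutions $S_i$ witnessing $\rho_2$ for $i\le j$ and $\rho_0$ for $j<i\le t$ (both are $\rho$-states, so all copies agree that $\port$ is unselected); the union leaves $\port$ unselected with $2j$ selected neighbours, so $\rho_{2j}\in L(G,\Port)$. Symmetrically, to realise $\sigma_{2j}$ for $0\le j\le\sigMax/2\le t$, I would take $S_i$ witnessing $\sigma_2$ for $i\le j$ and $\sigma_0$ otherwise (when $\sigMax=0$ only $j=0$ occurs, and all copies use $\sigma_0$); this yields $\port$ selected with $2j$ selected neighbours. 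Since $L$ consists exactly of these states, $L\subseteq L(G,\Port)$, which is what we need.

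The only point that needs care is the degenerate case $\sigMax=0$, i.e.\ $\sigma=\{0\}$, which is permitted by the hypotheses (for instance $(\{0\},\{0,2\})$): there the four-state provider is not even defined, but since $\sigStates=\{\sigma_0\}$ no $\sigma_2$-state is required, and identified copies of the $\{\rho_0,\rho_2,\sigma_0\}$-provider already suffice. Apart from this, the remaining work is only the routine verification that the identification yields a simple graph with no new edges, and that the copies may be combined state-by-state independently; the structural observation above makes both transparent, so I do not anticipate any genuine obstacle.
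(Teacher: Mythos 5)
Your proposal is correct and follows essentially the same route as the paper: combine \cref{lem:gadgettechnical} and \cref{lem:casem2} to obtain a $\{\rho_0,\rho_2,\sigma_0,\sigma_2\}$-provider (falling back to the $\{\rho_0,\rho_2,\sigma_0\}$-provider when $\sigMax=0$), then take $\allMax/2$ disjoint copies and identify their portals into a single vertex. The only difference is that you spell out the state-by-state combination that the paper leaves as "straightforward to verify."
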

\begin{proof}
	First assume that $\sigMax \ge 1$.
	Combining \Cref{lem:gadgettechnical,lem:casem2}, there is a $\{\rho_0, \rho_2, \sigma_0,\sigma_2\}$-provider $(G, \{\port\})$.
	Create $\allMax/2$ copies $(G_i, \{\port_i\})_{i\in\numb{\allMax/2}}$ of $(G,\{\port\})$.
	We set $\port_1=\dots=\port_{\allMax/2}$,
	that is, we identify all portals with the same vertex.
	It is straightforward to verify that $(G_1\cup \dots\cup G_{\allMax/2}, \{\port_1\})$ is an $L$-provider.

	If we have that $\sigMax = 0$, then we can use the same construction
	except that we use the $\{\rho_0, \rho_2,\sigma_0\}$-provider
	from \cref{lem:gadgettechnical} instead.
\end{proof}

\section{Constructing Managers}
\label{sec:manager}
The goal of this section is to establish four \encoder{A}s
with different $A \subseteq \allStates$.
Formally, we prove \cref{lem:lb:existenceOfManager}
which we restate here for convenience.
\existenceOfManager*

Before starting with the construction,
we restate the definition of a manager here. See \cref{fig:manager:example}
for an illustration of this definition.
\defEncoder*

Note that \cref{def:encoder} does not a priori rule out solutions
that do \emph{not} correspond to values $x \in A^\ell$.

The proof of \cref{lem:lb:existenceOfManager} is split into four cases,
each case corresponding to one manager.
\Cref{lem:expanding:onlyRhoStates,,lem:expanding:onlyEvenStates,,%
lem:expanding:onlySigmaStates,,lem:expanding:allStates}
handle each one of these cases
and together imply the above lemma.

For the construction of the managers, we mostly rely on the providers
introduced in \cref{sec:provider}.
In the following, we first construct the managers from the first two cases
as they follow rather directly from the constructed providers.
Then, we show how to obtain the remaining two managers
assuming a general construction for managers based on specific providers.
As a last step, we provide this general construction
in \cref{sec:expanding:basicToExpanding}.

\begin{lemma}
	\label{lem:expanding:onlyRhoStates}
	For non-empty sets $\sigma$ and $\rho$,
	there is an $\encoder{\rhoStates}$.
\end{lemma}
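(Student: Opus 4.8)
The plan is to build, for every $\ell\ge 1$, the graph with relations $G_\ell$ as $\ell$ pairwise disjoint copies of one single-portal gadget $M$, where the $i$-th copy contributes the portal $\port_i$, its ``left half'' (which becomes $\Bl_i$) and its ``right half'' (which becomes $\Br_i$), with \emph{no} edges between distinct blocks. Since \cref{def:encoder} only \emph{permits} edges between consecutive blocks (and between $\Bl_\ell$ and $\Br_\ell$) rather than requiring any, such a $G_\ell$ has the prescribed block structure, and the bounds $|\Bl_i|,|\Br_i|\le b$ follow from a constant bound on $|V(M)|$ depending only on $\sigMax,\rhoMax$. Closure under the inverse is immediate, since $\inverse[\rho]{\rho_r}=\rho_{\rhoMax-r}\in\rhoStates$ and $\inverse[\sigma]{\rho_r}=\rho_r$. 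So it suffices to construct $M$ with a portal $\port$ and a partition $V(M)\setminus\{\port\}=B^{L}\cup B^{R}$ into two parts with no edges between them and $N(\port)\subseteq B^{L}\cup B^{R}$, such that for every $r\in\fragment{0}{\rhoMax}$ there is a \emph{unique} $(\sigma,\rho)$-set $S_r$ of $M$ with $\port\notin S_r$ that gives $\port$ exactly $r$ selected neighbours in $B^{L}$ and exactly $\rhoMax-r$ selected neighbours in $B^{R}$.

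For $B^{L}$ I would use the gadget underlying \cref{lem:trivialrhogadget}: fix any $s\in\sigma$ and $r\in\rho$ and attach to $\port$, via edges, the portals $w_1,\dots,w_{\rhoMax}$ of $\rhoMax$ independent copies of a $\{\sigma_s,\rho_r\}$-provider (which exists by \cref{lem:fillinggadget}). Each $w_j$ can be put in state $\sigma_s$ (selected, with $s$ selected neighbours inside its provider) or $\rho_r$ (unselected, with $r$ selected neighbours inside); since the only other neighbour of $w_j$ is the unselected $\port$, every $w_j$ is satisfied in either status. To obtain a \emph{unique} configuration for each possible count, I add the arity-$\rhoMax$ ``prefix'' relation on $(w_1,\dots,w_{\rhoMax})$ whose admissible tuples are exactly the initial segments $\emptyset,\{w_1\},\dots,\{w_1,\dots,w_{\rhoMax}\}$, so that the selected slots — hence the selected neighbours $\port$ receives from $B^{L}$ — always form $\{w_1,\dots,w_r\}$ for a single well-defined $r\in\fragment{0}{\rhoMax}$. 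I build $B^{R}$ symmetrically with slots $c_1,\dots,c_{\rhoMax}$ and its own prefix relation. In the degenerate case $\rhoMax=0$ (that is, $\rho=\{0\}$ or $\rho=\NN$) I simply take $M$ to be the isolated portal $\port$.

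For correctness, fix $r$ and let $S_r$ select $w_1,\dots,w_r$ in $B^{L}$, select $c_1,\dots,c_{\rhoMax-r}$ in $B^{R}$, leave the remaining slots and $\port$ unselected, and complete each attached provider by the witness for the state of its slot. Then $\port$ sees $r+(\rhoMax-r)=\rhoMax\in\rho$ selected neighbours, each slot sees $s\in\sigma$ or $r\in\rho$ selected neighbours according to its status, and all provider interiors are satisfied; so $S_r$ is a $(\sigma,\rho)$-set with the required profile. For uniqueness, in any $(\sigma,\rho)$-set of $M$ with $\port$ unselected having $r$ left-neighbours and $\rhoMax-r$ right-neighbours, the two prefix relations force the selected slots to be exactly $\{w_1,\dots,w_r\}$ on the left and $\{c_1,\dots,c_{\rhoMax-r}\}$ on the right, and then the status of each slot pins down the unique completion of its provider.

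The last clause is where the real work sits: one needs the chosen $\{\sigma_s,\rho_r\}$-provider to be \emph{rigid}, i.e.\ to admit a unique partial solution once the state of its portal is fixed, and this must hold uniformly for all non-empty finite or cofinite $\sigma,\rho$ (including $\rhoMax=0$). This is the one point that cannot be run from the abstract notion of an $L$-provider and must instead be extracted from the concrete constructions in \cref{sec:provider}; if the providers of \cref{lem:happyGadget,lem:fillinggadget} are not rigid enough as stated, the remedy is to replace each of them by a slightly larger gadget whose non-portal part is pinned down by a bounded-arity relation selecting a canonical witness, which leaves all width, arity, and size bookkeeping unchanged. All remaining verifications — the block structure, the constant $b=b(\sigMax,\rhoMax)$, and closure under the inverse — are routine.
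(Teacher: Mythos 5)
Your proposal is correct and follows essentially the same route as the paper: for each portal you attach a disjoint left and right copy of the $\{\rho_0,\dots,\rho_{\rhoMax}\}$-provider of \cref{lem:trivialrhogadget} and use relational constraints to make the witnessing selection for each state unique. The only cosmetic difference is that the paper imposes a single relation whose scope is an entire provider copy (allowing exactly one selection per state $\rho_r$), whereas you split this into a prefix relation on the slot vertices plus per-provider pinning relations — the same device, so the rigidity concern you flag is resolved exactly as you suggest.
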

\begin{proof}
	Let $\ell$ be a fixed \rank of the manager.
	For each $i\in \numb{\ell}$, $G$ contains a distinguished vertex $\port_i$
	that is the portal of two independent copies of
	a $\{\rho_0,\rho_1, \ldots, \rho_{\rhoMax}\}$-provider
	(which exists by \Cref{lem:trivialrhogadget}).
	The first and second copy of the providers neighboring $\port_i$
	form the sets $\Bl_i$ and $\Br_i$, respectively.
	To each copy, we add a relational constraint whose scope are all vertices of the respective copy. This constraint ensures that the selection of vertices for each state $\rho_i$ is unique.
	This defines an $\encoder{\rhoStates}$ of rank $\ell$
	since, for each $r \in \fragment{0}{\rhoMax}$,
	both $\rho_r$ and $\rho_{\rhoMax-r}$
	are compatible with the $\{\rho_0,\rho_1, \ldots, \rho_{\rhoMax}\}$-provider.
\end{proof}

As a next step, we construct the first manager with $\sigma$ and $\rho$-states.
\begin{lemma}
	\label{lem:expanding:onlyEvenStates}
	Let $\rhoMax \ge 1$.
	Suppose that $\mname=2$ is the maximum value
	such that $(\sigma, \rho)$ is $\mname$-structured.
	Suppose further that all elements of $\rho$ and $\sigma$ are even.
	Then, there is an $\encoder{L}$ where
	\[
		L \coloneqq \{\rho_{i}   \in \rhoStates \mid i \in 2\NN \}
	         \cup \{\sigma_{i} \in \sigStates \mid i \in 2\NN \}.
	\]
\end{lemma}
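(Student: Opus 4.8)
The plan is to follow the template of the proof of \cref{lem:expanding:onlyRhoStates}, but to feed it the $L$-provider produced by \cref{lem:mainm2}, whose hypotheses are precisely those assumed here. Fix such an $L$-provider $(H,\{p\})$, and note that $p$ is not isolated in $H$ (it must be able to realize, e.g., the state $\rho_2\in L$). For a given \rank $\ell$, the graph $G_\ell$ consists of $\ell$ distinguished vertices $\port_1,\dots,\port_\ell$ and, for each $i\in\numb{\ell}$, two copies of $H$ that are vertex-disjoint except that both their portals are the single vertex $\port_i$; the vertices of the first copy other than $\port_i$ form the block $\Bl_i$, those of the second form $\Br_i$. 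Then $N(\port_i)\subseteq\Bl_i\cup\Br_i$, there are no edges between distinct blocks (trivially satisfying the adjacency restriction of \cref{def:encoder}), and $\abs{\Bl_i},\abs{\Br_i}\le b\deff\abs{V(H)}$, a constant. For every even $s\le\sigMax$ fix one partial solution of $(H,\{p\})$ witnessing $\sigma_s$ and let $W_{\sigma_s}$ be its intersection with $V(H)\setminus\{p\}$; define $W_{\rho_r}$ analogously for every even $r\le\rhoMax$. To each of the two copies of $H$ at $\port_i$ we attach one relational constraint whose scope is the whole corresponding block and whose allowed selections are exactly the sets $W_{\sigma_s}$ (even $s\le\sigMax$) and $W_{\rho_r}$ (even $r\le\rhoMax$); this relation has arity at most $b$.

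I would then verify the two requirements of \cref{def:encoder}. For existence, given $x\in L^\ell$ define $S_x$ blockwise: if $x\position{i}=\sigma_s$, put $\port_i\in S_x$ and select $W_{\sigma_s}$ in $\Bl_i$ and $W_{\sigma_{\sigMax-s}}$ in $\Br_i$ (legal, since $\sigMax-s$ is a non-negative even integer at most $\sigMax$); if $x\position{i}=\rho_r$, put $\port_i\notin S_x$ and select $W_{\rho_r}$ in $\Bl_i$ and $W_{\rho_{\rhoMax-r}}$ in $\Br_i$. Each $W$ is a provider witness for the matching state, so every block vertex satisfies its constraint, $\port_i$ gets exactly $s+(\sigMax-s)=\sigMax\in\sigma$ selected neighbours (respectively $\rhoMax\in\rho$), split correctly between left and right, and the attached relations hold by definition; hence $S_x$ is a $(\sigma,\rho)$-set with the prescribed degrees.

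The substance of the argument is uniqueness, and this is where the assumption that all elements of $\sigma$ and $\rho$ are even is used. Let $S$ be any $(\sigma,\rho)$-set of $G_\ell$ meeting the degree conditions prescribed by some $x\in L^\ell$, and fix an index $i$; say $x\position{i}=\sigma_s$, so $\port_i\in S$. The relation on $\Bl_i$ forces $\Bl_i\cap S$ to be one of the listed witnesses. It cannot be any $W_{\rho_r}$: that witness was computed with $p$ \emph{unselected}, so some block vertex $v\in N(\port_i)$ has an even number of selected neighbours in it, and now selecting $\port_i$ bumps $v$'s count to an odd number, which lies in neither $\sigma$ nor $\rho$ --- contradicting that $S$ is a $(\sigma,\rho)$-set. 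Hence $\Bl_i\cap S=W_{\sigma_{s'}}$ for some even $s'$, and since $\port_i$ must have exactly $s$ selected neighbours in $\Bl_i$ we get $s'=s$; symmetrically $\Br_i\cap S=W_{\sigma_{\sigMax-s}}$. The case $x\position{i}=\rho_r$ is identical with $\sigma$ and $\rho$ interchanged. So $S$ is completely determined, i.e.\ $S=S_x$. I expect this parity step --- excluding a $\rho$-witness in a block whose portal is selected, and vice versa --- to be the only delicate point; it is exactly what lets each relation stay inside a single block instead of having to couple $\Bl_i$, $\Br_i$, and $\port_i$. Finally, the degenerate case $\sigMax=0$ (so $\sigma=\{0\}$ and $L$ contains only the $\sigma$-state $\sigma_0$) needs no special handling: \cref{lem:mainm2} already supplies the appropriate provider, and everything above goes through verbatim.
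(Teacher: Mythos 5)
Your proposal is correct and essentially matches the paper's proof: the paper likewise builds the manager by attaching, at each distinguished vertex, two copies of the $L$-provider from \cref{lem:mainm2} (the two copies forming $\Bl_i$ and $\Br_i$), adds a relational constraint per copy to enforce uniqueness of the witnessing selection, and uses that $\sigMax$ and $\rhoMax$ are even so that the complementary states $\sigma_{\sigMax-s}$ and $\rho_{\rhoMax-r}$ remain in $L$. The only cosmetic difference is that the paper's constraint scopes the whole copy (portal included) and decrees uniqueness directly, whereas you scope it over the block only and supply the parity argument to tie the block's witness type to the portal's selection status --- a valid, slightly more explicit variant of the same construction.
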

\begin{proof}
	The construction is analogous to that
	in the proof of \cref{lem:expanding:onlyRhoStates},
	but we use the $L$-provider from \cref{lem:mainm2}
	instead of the $\{\rho_0,\rho_1, \ldots, \rho_{\rhoMax}\}$-provider.
	Note that $\rhoMax$ and $\sigMax$ are even,
	and thus, $\rhoMax-k$ and $\sigMax-k$ are even if $k$ is even.
\end{proof}

Before providing the last two managers, we first introduce some notation.
Let $x$ be a vector (string) of elements from $\allStates$.
Recall that, for $a\in \allStates$, $\occ{x}{a}$ denotes the number of occurrences of $a$ in $x$.
Similarly, for $A \subseteq \allStates$,
$\occ{x}{A}$ denotes the number of occurrences of elements from $A$ in $x$.
\begin{definition}
	\label{def:auxiliaryForManager}
	Let $d\ge 1$ and $\beta \in \fragment{0}{2d}$. We define a set
	\[
	L^{(2d)}_{\beta} \coloneqq \{x\in \{\rho_0, \rho_1, \sigma_0, \sigma_1\}^{2d}
	\mid \occ{x}{\sigma_1\!}\in\{0,d\}, \occ{x}{\rhoStates}\le\beta\}.
	\]
\end{definition}

Using this definition,
we can state the following general construction for a manager.
\begin{lemma}
	\label{lem:expanding:basicToExpanding}
	Suppose for some $d\ge 1$ and $\beta\in \{0,1\}$ that
	there is an $L^{(2d)}_{\beta}$-provider
	where the closed neighborhoods of the portals are disjoint.
	Then, there is an $\encoder{\sigStates}$,
	and if $\beta=1$, then there is also an $\encoder{\allStates}$.
\end{lemma}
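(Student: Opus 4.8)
The plan is to build the manager $((G_\ell,\Port_\ell))_{\ell\ge 1}$ directly from the given $L^{(2d)}_\beta$-provider $(H,\Port_H)$ by "chaining" copies of $H$ along a path-like structure, so that the block decomposition required by \cref{def:encoder} is visible from the construction. Fix $\ell$. For each distinguished vertex $\port_i$ ($i\in\numb{\ell}$) we want two blocks $\Bl_i$ and $\Br_i$, with $N(\port_i)\subseteq \Bl_i\cup\Br_i$, and edges only between $\Bl_i$–$\Bl_{i+1}$, $\Br_i$–$\Br_{i+1}$, and $\Bl_\ell$–$\Br_\ell$. The idea is that each use of the provider $H$ supplies, through its $2d$ portals, the "feed" of selected/unselected neighbours on \emph{one side} (left or right) of a consecutive block of information vertices; because $H$'s language is $L^{(2d)}_\beta$, the $2d$ portals of one copy are split into $d$ that will sit in position $\Bl_i$ of some $\port_i$ and $d$ that will sit in $\Bl_{i+1}$ of $\port_{i+1}$ (and symmetrically on the right), and the constraint $\occ{x}{\sigma_1}\in\{0,d\}$ forces the whole copy to be "all portals selected" or "all portals unselected" — which is exactly the synchronization we need so that a left-side copy contributing $s$ selected neighbours to $\port_i$ contributes $\sigMax-s$ on the other end (after inverting). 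First I would make precise this correspondence: for $x\in A^\ell$ with $A=\sigStates$ (resp. $A=\allStates$ when $\beta=1$), I encode $x\position{i}=\sigma_s$ (resp. $\rho_r$) by selecting, among the $2\sigMax$ (resp. $2\rhoMax$, padded up to a common bound $b$) provider-portals feeding $\port_i$ on the left, exactly $s$ (resp. $r$) as "$\sigma_1/\rho_1$-type" and the rest as "$\sigma_0/\rho_0$-type", and the mirror amount $\sigMax-s$ (resp. $\rhoMax-r$) on the right. The closed-neighbourhood disjointness of the provider portals is what lets me wire each provider-portal to exactly one information vertex without creating unwanted edges or giving an information vertex extra selected neighbours it did not ask for.

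Next I would verify the three structural bullets of \cref{def:encoder}. The bound $|\Bl_i|,|\Br_i|\le b$: each block is a bounded number (depending only on $\sigMax,\rhoMax,d$) of provider copies $H$, and $|V(H)|$ is a constant depending only on $\sigma,\rho$; so $b$ depends only on $\sigMax,\rhoMax$ as required. The neighbourhood condition $N(\port_i)\subseteq\Bl_i\cup\Br_i$ holds by construction since $\port_i$ is adjacent only to provider-portals we placed in those two blocks. The edge-pattern condition holds because the only "long-range" edges are those internal to a provider copy $H$ whose $2d$ portals we deliberately split across two consecutive left-blocks (or two consecutive right-blocks), plus one copy bridging $\Bl_\ell$ and $\Br_\ell$; all other copies of $H$ and all relational constraints are local to a single block. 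I would also add, as in \cref{lem:expanding:onlyRhoStates}, a relational constraint on the full vertex set of each provider copy that pins down the \emph{unique} witnessing selection for each state, so that the "unique $(\sigma,\rho)$-set $S_x$" clause of \cref{def:encoder} is satisfied; this is where I lean on the provider actually \emph{realizing} (not merely providing) the needed strings after adding the constraint, and on the fact that distinct $x$ force distinct selection patterns on the feed-portals.

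Then I would check the "managed" property: given $x\in A^\ell$, assemble $S_x$ from the chosen witnessing selections of all provider copies together with the forced selection status of the $\port_i$'s, and confirm that (a) every vertex of $G_\ell$ has a feasible number of selected neighbours — for provider-internal vertices this is immediate from the provider being a realizer of the appropriate string, and for $\port_i$ it is precisely $s$ from $\Bl_i\cap S_x$ plus $\sigMax-s$ from $\Br_i\cap S_x$, totalling $\sigMax\in\sigma$ (resp. $\rhoMax\in\rho$); (b) this $S_x$ is the \emph{unique} $(\sigma,\rho)$-set realizing $x$, by the pinning constraints; and (c) when $\beta=1$ the provider language $L^{(2d)}_1$ additionally permits $\occ{x}{\rhoStates}\le 1$ per copy, which is exactly what is needed so that a single $\rho$-state can be injected into a block that otherwise feeds $\sigma$-states — allowing $A=\allStates$ rather than only $A=\sigStates$. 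The closure of $A$ under the inverse is automatic here since $A$ is all of $\sigStates$ or all of $\allStates$.

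The main obstacle I expect is the bookkeeping that makes the block decomposition genuinely satisfy the rigid adjacency pattern of \cref{def:encoder} while still routing enough provider copies to supply \emph{both} a left feed and a right feed to \emph{every} $\port_i$ with the correct inverse relationship — in particular handling the two "boundary" effects (the first and last information vertices, and the $\Bl_\ell$–$\Br_\ell$ bridge) and reconciling the provider's fixed arity $2d$ with the required per-vertex feed sizes $\sigMax$ and $\rhoMax$ (which may not be multiples of $d$), so some padding copies in state $\sigma_0$ / $\rho_0$ are needed and must themselves be accounted for in the block-size bound $b$. Everything else is a routine but lengthy verification.
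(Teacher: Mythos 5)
There is a genuine gap, and it sits at the core mechanism of your construction. You propose to wire provider portals as \emph{neighbors} of each information vertex $\port_i$ and to encode the state $\sigma_s$ (or $\rho_r$) by ``selecting exactly $s$ of the feeding portals as $\sigma_1$-type and the rest as $\sigma_0$-type''. This misreads the language $L^{(2d)}_{\beta}$: every string in it consists of states from $\{\rho_0,\rho_1,\sigma_0,\sigma_1\}$ with at most $\beta\le 1$ many $\rho$-states, so in \emph{every} compatible partial solution all (or all but one) portals of a provider copy are \emph{selected} vertices --- the index in $\sigma_0$ versus $\sigma_1$ only records how many selected neighbors the portal has \emph{inside} the provider, not whether it is selected. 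Consequently, if the feed of $\port_i$ consists of provider portals, then $\port_i$ receives a fixed number of selected neighbors (essentially all of them), and your intended modulation of this count between $0$ and $\sigMax$ is impossible; likewise, the dichotomy $\occ{x}{\sigma_1}\in\{0,d\}$ does not mean ``all portals selected or all unselected'', so the synchronization you build on it does not exist. The same misreading undermines your treatment of $\rho$-states for $\beta=1$: allowing one $\rho$-state per copy does not let you drop the selected-neighbor count of $\port_i$ to an arbitrary $r$.

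For comparison, the paper's construction avoids this problem by making the information vertex $\port_i$ itself one of the $2d$ portals of $\allMax$ provider copies $J_{i,1},\dots,J_{i,\allMax}$ on each side: each copy either gives $\port_i$ one additional selected neighbor (state $\sigma_1$ or $\rho_1$ for $\port_i$) or none, so the count is built up one unit per layer. The constraint $\occ{x}{\sigma_1}\in\{0,d\}$ is then used to couple ``give $\port_i$ a neighbor'' with advancing a token among the chain vertices $\hx_{i,z,j}$; this is tracked by the functions $\Fl,\Fr$ with an invariant modulo $d$, and the leftover is absorbed by auxiliary information vertices $\port_{\ell+1},\dots,\port_{\ell^*}$ (chosen depending on $\beta$) together with the wrap-around identification between the last left and right layers and the extra filling providers $\Fil^{\sigMax}_{\star}$. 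The $\beta=1$ slack is needed precisely because $\port_i$, being a portal of these providers, must be allowed a $\rho$-state when it is unselected. Your sketch correctly anticipates parity and boundary issues, but without the ``information vertex as provider portal'' idea and the token bookkeeping there is no working mechanism to realize arbitrary states in $\sigStates$ (let alone $\allStates$), so the proof as proposed does not go through.
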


Before proving \cref{lem:expanding:basicToExpanding} in \cref{sec:expanding:basicToExpanding},
we show how to use it to prove
\cref{lem:expanding:onlySigmaStates,,lem:expanding:allStates},
that is, the remaining two managers from \cref{lem:lb:existenceOfManager}.
For the first manager, the distinguished vertices have to always be selected.

\begin{lemma}
	\label{lem:expanding:onlySigmaStates}
	If there is $c \in \rho$ with $\sigMax \geq c \ge 1$,
	then there is an $\encoder{\sigStates}$.
\end{lemma}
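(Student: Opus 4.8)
The plan is to derive this directly from \cref{lem:sigmaGadget} and \cref{lem:expanding:basicToExpanding}, the point being that the hypothesis $\sigMax \ge c \ge 1$ with $c \in \rho$ is exactly what is needed to apply \cref{lem:sigmaGadget}, and its output language matches $L^{(2d)}_{0}$ for an appropriate $d$.

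First I would set $r \coloneqq c$, which satisfies $r \in \rho$ and $r \ge 1$ by hypothesis. Since $\sigma$ is non-empty and finite or cofinite, we have $\sigMax \in \sigma$ — if $\sigma$ is finite this is just $\max \sigma$, and if $\sigma$ is cofinite then $\sigMax = \max(\ZZ \setminus \sigma) + 1$ is not among the omitted integers. Because $\sigMax \ge c = r$, the set $\sigma$ contains an element $s$ with $s \ge r$. Hence \cref{lem:sigmaGadget} applies and produces an $L_r$-provider in which the closed neighborhoods of the portals are pairwise disjoint, where $L_r = \{x \in \{\sigma_0,\sigma_1\}^{4r} \mid \occ{x}{\sigma_1} \in \{0,2r\}\}$.

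Next I would observe that, setting $d \coloneqq 2r \ge 1$, this language is precisely $L^{(2d)}_{0}$ from \cref{def:auxiliaryForManager}: the constraint $\occ{x}{\rhoStates} \le 0$ forces $x$ to avoid all $\rho$-states, leaving $\{x \in \{\sigma_0,\sigma_1\}^{2d} \mid \occ{x}{\sigma_1} \in \{0,d\}\}$, and since $2d = 4r$ and $d = 2r$ this equals $L_r$. So the $L_r$-provider above is an $L^{(2d)}_{0}$-provider whose portals have pairwise disjoint closed neighborhoods, and applying \cref{lem:expanding:basicToExpanding} with this $d$ and $\beta = 0$ immediately yields an $\encoder{\sigStates}$, as desired.

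The argument has no real obstacle; the only points requiring attention are the bookkeeping that makes $L_r$ and $L^{(2d)}_{0}$ coincide on the nose (matching $4r$ with $2d$ and the two Hamming-weight thresholds $2r$ and $d$) and the small observation that $\sigMax$ itself always belongs to $\sigma$, so that the element $s \ge r$ required by \cref{lem:sigmaGadget} genuinely exists.
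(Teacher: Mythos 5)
Your proposal is correct and matches the paper's own proof: both invoke \cref{lem:sigmaGadget} with $r=c$ to obtain the $L_r$-provider, identify it with an $L^{(4r)}_0$-provider (i.e., $d=2r$, $\beta=0$), and conclude via \cref{lem:expanding:basicToExpanding}. The only difference is that you spell out the (correct) bookkeeping that $\sigMax\in\sigma$ supplies the needed $s\ge r$, which the paper leaves implicit.
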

\begin{proof}
	By \Cref{lem:sigmaGadget}, there is an $L$-provider where
	\[L \coloneqq \{
	x \in \{\sigma_0,\sigma_1\}^{4r}
	\mid \occ{x}{\sigma_1\!} \in \{0,2r\}
	\}.\]
	Note that this precisely corresponds to an $L^{(4r)}_0$-provider.
	Moreover, we get from the lemma that
	the closed neighborhoods of the portals are disjoint.
	Then, the claim follows by \Cref{lem:expanding:basicToExpanding}.
\end{proof}

The following manager concludes the series of constructions
by providing an \encoder{\allStates}.
\begin{lemma}
	\label{lem:expanding:allStates}
	Let $\rho \neq \{0\}$.
	Suppose that $\mname=1$ is the maximum value
	such that $(\sigma, \rho)$ is $\mname$-structured.
	Then, there is an $\encoder{\allStates}$.
\end{lemma}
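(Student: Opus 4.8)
The plan is to invoke \cref{lem:expanding:basicToExpanding} with $\beta = 1$: it suffices to construct, for some fixed $d \ge 1$, an $L^{(2d)}_{1}$-provider in which the closed neighborhoods of the portals are pairwise disjoint; \cref{lem:expanding:basicToExpanding} then immediately yields the desired $\encoder{\allStates}$. So all the work lies in building this provider, and we may freely use that $(\sigma,\rho)$ is $1$-structured but not $\mname$-structured for any $\mname \ge 2$. Note also that since we only need a \emph{provider} (realizing a superset of $L^{(2d)}_{1}$), it is harmless if the construction realizes strictly more strings than $L^{(2d)}_{1}$ — in particular, it is fine if several portals may simultaneously take $\rho$-states.

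\emph{Main case $\sigMax \ge 1$.} Here \cref{lem:mainm1} supplies a two-portal ``switch'' $(H,\{a,b\})$ whose language is $\{\rho_0\sigma_0,\rho_1\sigma_0,\sigma_0\sigma_0,\sigma_1\sigma_1\}$ when $\rhoMax \ge 1$ and $\{\rho_0\sigma_0,\sigma_0\sigma_0,\sigma_1\sigma_1\}$ when $\rhoMax = 0$ (i.e.\ $\rho = \NN$), with the two portals having disjoint closed neighborhoods. The salient feature is that $b$ is always selected, that $a$ can be put into state $\sigma_1$ exactly when $b$ is ``activated'' inside $H$, and that whenever $b$ is not activated $a$ may be chosen freely among $\sigma_0,\rho_0,\rho_1$. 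I would take $2d$ portals $u_1,\dots,u_{2d}$ and attach to each $u_i$ a private copy $H_i$ of this switch with $u_i$ playing the role of the flexible portal; I would then wire the secondary portals $b_1,\dots,b_{2d}$ into a \emph{subset selector} built in the spirit of \cref{lem:sigmaGadget} — one internal sub-gadget per size-$d$ subset $A\subseteq\numb{2d}$ — that forces the set of activated $b_i$'s to be either empty or exactly one of the size-$d$ sets, with all remaining sub-gadgets independently satisfiable (the r\^ole of the $W_A$ blocks in \cref{lem:sigmaGadget}). This forces the set of $\sigma_1$-portals among the $u_i$ to be $\emptyset$ or of size $d$, while leaving every other $u_i$ free to take $\sigma_0$, $\rho_0$, or $\rho_1$; exhibiting for each $x\in L^{(2d)}_{1}$ the corresponding partial solution (union of the appropriate switch-solutions, the ``activate $A$'' solution of one sub-gadget, and the all-$W_{A'}$ solutions of the inactive ones) shows that the realized language contains $L^{(2d)}_{1}$. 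Disjointness of the closed portal neighborhoods is automatic, since $N[u_i]$ lives entirely inside the private copy $H_i$.

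The main obstacle is the degree bookkeeping at the interface between the switches and the selector: in every partial solution each non-portal vertex — in particular the vertices $b_i$ and the interior vertices of the selector sub-gadgets — must acquire a number of selected neighbors lying in $\sigma$, respectively $\rho$, and $\sigma$, $\rho$ need not contain conveniently small or compatible elements (so \cref{lem:sigmaGadget} itself need not be directly applicable). I would address this as in \cref{lem:sigmaGadget}: use \cref{la:construct-regular-graph} and \cref{lem:bipartite-with-degree-sequence} to give the interior vertices fixed target degrees, and absorb the remaining slack by attaching filler providers from \cref{lem:cdgadget} and \cref{lem:trivialrhogadget} (and, when $\abs{\sigma}\ge 2$, \cref{lem:3stategadgetbig}), so that both the ``active'' and the ``inactive'' configurations of every sub-gadget become globally feasible. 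Making the ``activate exactly the size-$d$ set $A$'' behavior coexist with all other sub-gadgets being simultaneously in a valid all-unselected-or-all-$W_{A'}$ state — for arbitrary $\sigma$ and $\rho$ — is the delicate combinatorial core, and is where the bulk of the verification goes.

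\emph{Degenerate case $\sigMax = 0$.} If $\sigMax = 0$ then $\sigma = \{0\}$ and \cref{lem:mainm1} does not apply, but now $\sigStates = \{\sigma_0\}$, so it suffices to build an $\encoder{\{\sigma_0\}\cup\rhoStates}$ directly. If additionally $\rhoMax = 0$, then $\rho = \NN$ and a family of pairwise non-adjacent portals with empty blocks already works. Otherwise $\rhoMax \ge 1$, and I adapt \cref{lem:expanding:onlyRhoStates}: each portal $\port_i$ is made the portal of two copies of the $\{\rho_0,\dots,\rho_{\rhoMax}\}$-provider of \cref{lem:trivialrhogadget} (forming $\Bl_i$ and $\Br_i$), with relational constraints pinning down the selections; to additionally allow $\port_i$ to be selected with no selected neighbors, fix both attached providers to the state $\rho_{r-1}$ for some $r\in\rho$ with $1 \le r \le \rhoMax$ (which exists since $\rho\neq\{0\}$) — then their portals, being unselected, have $r-1$ selected neighbors inside plus the one contributed by $\port_i$, hence $r\in\rho$ in total, while $\port_i$ itself is selected with $0\in\{0\}=\sigma$ selected neighbors, so every vertex remains feasible and the selection witnessing $\sigma_0$ at $\port_i$ is again unique.
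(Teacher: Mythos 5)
Your overall skeleton matches the paper's — reduce to building an $L^{(2d)}_{1}$-provider with pairwise disjoint closed portal neighborhoods, invoke \cref{lem:expanding:basicToExpanding} with $\beta=1$, use the provider of \cref{lem:mainm1} when $\sigMax\ge 1$, and treat $\sigma=\{0\}$ separately — but in the main case you never actually construct the gadget. The paper needs no ``selector'' at all: it takes just four portals ($d=2$) and attaches, to every pair $(\port_i,\port_j)$ of distinct portals, an independent copy of the \cref{lem:mainm1}-provider with $\port_i$ as first and $\port_j$ as second portal. Since both endpoints of every attached copy are portals of the big gadget, no internal vertex ever needs extra selected neighbors, and witnessing each string of $L^{(4)}_{1}$ is just a matter of choosing a state per copy — that is why the verification is genuinely short. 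In your design the secondary portals $b_1,\dots,b_{2d}$ become \emph{internal} vertices whose $\sigma$-constraints must be completed by a separate subset-selector, and you explicitly leave that construction undone (``the delicate combinatorial core \dots where the bulk of the verification goes''); since \cref{lem:sigmaGadget} requires $r\in\rho$ with $r\ge1$ and some $s\ge r$ in $\sigma$, it does not apply to arbitrary $1$-structured $(\sigma,\rho)$, so this is precisely the hard part and it is missing. (Also note that the \emph{forcing} behaviour you demand of the selector is unnecessary — a provider never has to exclude strings — but the feasibility of the internal vertices, which is what you actually need, is the unproven step.)

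The degenerate case $\sigma=\{0\}$, $\rhoMax\ge 1$ is flawed as written. If, as you state, $\port_i$ is identified with the portal of two copies of the $\{\rho_0,\dots,\rho_{\rhoMax}\}$-provider of \cref{lem:trivialrhogadget} (as in \cref{lem:expanding:onlyRhoStates}), then the neighbors of $\port_i$ are portals of $\{\sigma_s,\rho_r\}$-providers from \cref{lem:cdgadget}; when $\port_i$ is selected and these neighbors are unselected, each of them gets $r+1$ selected neighbors in total, and $r+1\in\rho$ is not guaranteed — the ``state $\rho_{r-1}$'' you invoke is a state of the whole block provider (whose portal \emph{is} $\port_i$), not of these neighbor vertices, so the claimed witness for $\sigma_0$ need not exist. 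If instead you meant to attach the block providers by an edge, so that ``their portals'' are separate unselected vertices adjacent to $\port_i$, then $\port_i$ has only one neighbor per block and the states $\rho_r$ with $r\ge 2$ or $\rhoMax-r\ge 2$ can no longer be managed. The paper resolves this case with the $\{\rho_0,\rho_1,\sigma_0\}$-provider of \cref{lem:gadgettechnical} (available here since $m=1$ and $\rhoMax\ge 1$), which is exactly a gadget whose portal may be selected with no selected neighbors while all internal vertices remain feasible; taking $\allMax$ copies of it per block, together with uniqueness relations, yields the manager directly.
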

\begin{proof}
	We first handle the case when $\sigMax,\rhoMax\ge 1$.
	By \cref{lem:mainm1}, there is a $Z$-provider
	with $Z=\{(\rho_0,\sigma_0), (\rho_1,\sigma_0), (\sigma_0,\sigma_0), (\sigma_1,\sigma_1)\}$.
	Moreover,
	the closed neighborhoods of both portals are disjoint.
	From this, we construct an $L^{(4)}_1$-provider as follows.
	We introduce four vertices $\port_1,\dots,\port_4$ that serve as portals of the $L^{(4)}_1$-provider.
	For each pair $(\port_i,\port_j)$ of distinct portals,
	we attach an independent copy of a $Z$-provider
	where $\port_i$ is the first portal, and $\port_j$ is the second portal vertex.
	It is straightforward to verify
	that the resulting gadget is an $L^{(4)}_1$-provider
	where the closed neighborhoods of the portals are disjoint.
	The lemma then follows from \cref{lem:expanding:basicToExpanding}.

	The next case is when $\rhoMax \ge 1$ and $\sigMax=0$, i.e., $\sigma=\{0\}$.
	By \cref{lem:gadgettechnical}, there is a $\{\rho_0,\rho_1,\sigma_0\}$-provider $J$.
	For all $i$, the blocks $\Bl_i$ and $\Br_i$
	consist of $\allMax$ copies of $J$ each
	using $\port_i$ as a portal. For each copy we introduce a relational constraint whose scope are all vertices of that respective copy. This constraint ensures that for each state there is only a unique solution.
	Since $\sigma=\{0\}$
	this finishes the construction of the $\encoder{\allStates}$ of \rank $\ell$
	and proves its correctness.

	It remains to handle the case when $\rhoMax=0$.
	If we have $\sigMax=0$,
	then we are interested only in the states $\sigma_0$ and $\rho_0$.
	In this case, the empty graph already gives an \encoder{\allStates}.
	Thus, assume $\sigMax \ge 1$.
	We claim that it suffices to have a $Z'$-provider
	with $Z'=\{(\rho_0,\sigma_0), (\sigma_0,\sigma_0), (\sigma_1,\sigma_1)\}$.
	Then, we could use the construction for $\rhoMax,\sigMax \ge 1$
	as we never need the combination $(\rho_1,\sigma_0)$.
	The desired $Z'$-provider follows from \cref{lem:mainm1}
	and finishes the proof.
\end{proof}

\subsection{Proof of \texorpdfstring{\Cref{lem:expanding:basicToExpanding}}{Lemma \ref{lem:expanding:basicToExpanding}}:
A Blueprint for Managers}
\label{sec:expanding:basicToExpanding}

Now, we turn to the proof of \cref{lem:expanding:basicToExpanding}.
That is, given an $L^{(2d)}_{\beta}$-provider for some $d\ge 1$ and $\beta\in \{0,1\}$,
we construct an $\encoder{\sigStates}$, and if $\beta=1$, then also an $\encoder{\allStates}$.

{
	\def\hell{\ell^*}
	\def\newx{x^*}
	\newcommand\Xl{X}
	\newcommand\Xr{\inv X}
	\newcommand\Fl{\mu}
	\newcommand\Fr{\inv \mu}
	\newcommand{\LSize}{\delta_L}
	\newcommand{\Fil}{F}
	\newcommand{\Fir}{\inv F}

	\begin{figure}[t]
		\centering
		\includegraphics[page=2]{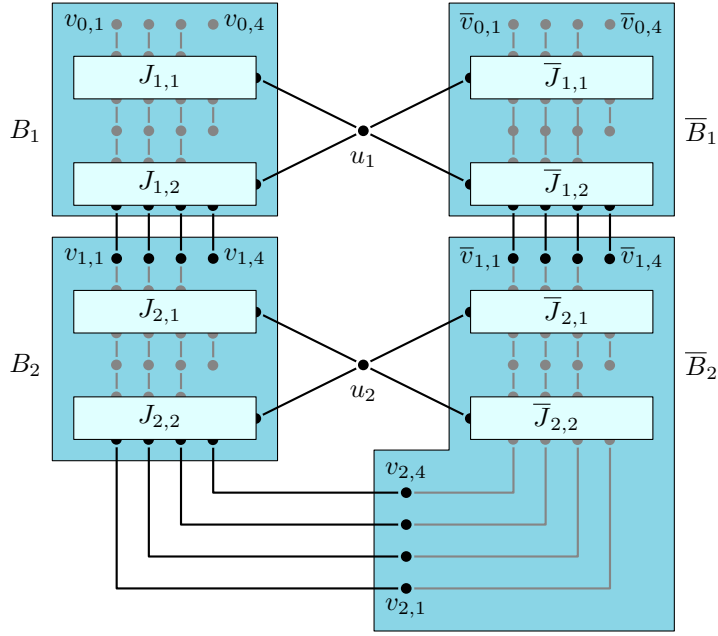}
		\caption{
		Construction of the manager from \cref{lem:expanding:basicToExpanding}.
		The big dark blue boxes represent the blocks,
		while the smaller light blue boxes represent the $L$-providers.
		The $L$-providers $\Fil_{i,z}^j,\Fir_{i,z}^j$ are not shown
		to keep the figure simple.
		}
		\label{fig:manager:example}
	\end{figure}

	To simplify notation,
	let $L=L^{(2d)}_{\beta}$ in the following.

	\paragraph{Construction of the Graph.}
	Let $\ell$ be a fixed \rank of the manager.
	For ease of notation, we omit $\ell$ from $G_\ell$ and $\Port_\ell$
	in the following, and just write $G$ and $\Port$.
	We first define the graph $G$
	with its set of distinguished vertices $\Port=\{\port_1,\dots,\port_{\ell}\}$.
	An illustration is given in \cref{fig:manager:example}.
	Afterward, we establish that it is an $\encoder{\sigStates}$
	or $\encoder{\allStates}$ for $\beta=0$ or $\beta=1$, respectively.

	Apart from the vertices in $\Port$,
	$G$ also contains $\hell-\ell$ auxiliary vertices $\port_{\ell+1},\dots,\port_{\hell}$,
	which are used to circumvent some parity issues that would otherwise prevent us from obtaining proper solutions.
	In the construction of $G$, we treat these auxiliary vertices in the same way as the vertices in $\Port$.
	However, for the encoding that we want to model (see \Cref{def:encoder}),
	we are interested only in the vertices $\port_1,\ldots, \port_\ell$
	and their selected neighbors.
	Later, we set the number $\hell$ (as a number depending on $\beta$).
	For each $i\in\numb{\hell}$,
	there are some subgraphs $\Xl_i$ and $\Xr_i$,
	which we define subsequently.
	They provide the neighbors for $\port_i$ in $G$,
	and with some minor modifications,
	these subgraphs $\Xl_i$ and $\Xr_i$ form the blocks $\Bl_i$ and $\Br_i$ of the manager $G$.

	For all $i\in \fragment{0}{\hell}$ and all $j\in\numb{d}$,
	we create vertices $v_{i,j}$ and $\inv v_{i,j}$.
	For $j\in\numb{d}$, we identify $v_{\hell,j}=\inv v_{\hell,d+1-j}$.
	Intuitively, the case $i=\hell$ is special
	as it is the connection between the $v_{i,j}$'s and the $\inv v_{i,j}$'s.
	For each $i\in\fragment{0}{\hell}$,
	the vertices $v_{i,1}, \ldots, v_{i,d}$ serve as portals of $\sigMax$ copies of the given $L$-provider.
	Let us name these providers $\Fil^1_i,\ldots, \Fil^{\sigMax}_i$.
	(We use $\Fil$ as in ``fill''
	since these providers are used to fill up the count of selected neighbors
	of their portals as far as possible.)
	Note that the $L$-provider has $2d$ portals,
	and therefore, each vertex in $v_{i,1}, \ldots, v_{i,d}$ serves as two portals for each of these $L$-providers.
	Identifying two portals of an $L$-provider is possible without forming loops or multiple edges by the assumption that the portals of the given $L$-provider are independent and have no common neighbors.
	Analogously, for each $i\in\fragment{0}{\hell}$,
	the vertices $\inv v_{i,1}, \ldots, \inv v_{i,d}$
	serve as portals of $\sigMax$ $L$-providers $\Fir^1_i,\dots,\Fir^{\sigMax}_i$.

	\newcommand{\hx}{x} 
	For each $i\in\numb{\hell}$, there is a subgraph $\Xl_i$ defined as follows.
	It contains a set of vertices
	$\{\hx_{i,z,j} \mid z\in\fragment{0}{\allMax}, j\in\numb{d}\}$.
	For $z\notin \{0,\allMax\}$,
	these $\hx_{i,z,j}$'s are new vertices,
	but, for all $j\in\numb{d}$, we set $\hx_{i,0,j}=v_{i-1,j}$
	and $\hx_{i,\allMax,j}=v_{i,j}$.
	Just like the $v_{i,j}$'s,
	for each $z\in\fragment{1}{\allMax-1}$,
	the vertices $\hx_{i,z,1}, \ldots, \hx_{i,z,d}$ serve as portals
	of $\sigMax$ copies $\Fil^1_{i,z},\dots,\Fil^{\sigMax}_{i,z}$
	of the given $L$-provider.
	We set $\Fil^y_{i,0}=\Fil^y_{i-1}$ and $\Fil^y_{i,\allMax}=\Fil^y_i$
	for all $y\in\numb{\sigMax}$.
	Furthermore, for all $z\in\numb{\allMax}$,
	$\Xl_i$ contains an $L$-provider $J_{i,z}$
	with portals $\port_i$, $\hx_{i,z-1,1},\dots,\hx_{i,z-1,d-1}$,
	and $\hx_{i,z,1},\dots,\hx_{i,z,d}$.

	For each $i$, we similarly define a subgraph $\Xr_i$
	with vertices $\{\inv \hx_{i,z,j} \mid z\in\fragment{0}{\allMax}, j\in\numb{d}\}$,
	and $\inv \hx_{i,0,j}=\inv v_{i-1,j}$, $\inv \hx_{i,\allMax,j}=\inv v_{i,j}$
	(for all $j\in\numb{d}$).
	Moreover, for all $z\in\numb{\allMax}$,
	$\Xr_i$ contains an $L$-provider $\inv J_{i,z}$
	with portals $\port_i$, $\inv \hx_{i,z-1,1},\dots,\inv \hx_{i,z-1,d-1}$,
	and $\inv \hx_{i,z,1},\dots,\inv \hx_{i,z,d}$.
	For all $z\in\numb{\allMax-1}$,
	we create $\sigMax$ $L$-providers
	$\Fir^1_{i,z},\dots,\Fir^{\sigMax}_{i,z}$,
	and additionally set for all $y\in\numb{\sigMax}$,
	$\Fir^y_{i,0}=\Fir^y_{i-1}$
	and $\Fir^y_{i,\allMax}=\Fir^y_{i}$.
	As a last step, we add a relational constraint
	whose scope is the union of the vertices in $\Xl_i$ with the vertex $\port_i$.
	This constraint ensures that there is exactly one solution
	for each state the vertex $\port_i$ can obtain.
	For this, we use the solution which we construct in the following.
	We apply the same to the vertices of $\Xr_i$ together with $\port_i$.
	This completes the definition of $G$.

	\paragraph{Partitioning the Graph.}
	We  use $G$ for both cases $\beta=0$ and $\beta=1$.
	We  adjust only the value of $\hell$ accordingly.
	We have to show that there is an integer $b$,
	and a partition of $G$ into blocks
	that fulfill the requirements stated in \Cref{def:encoder}.

	Let $\LSize$ denote the number of vertices of the $L$-provider (including the $2d$ portal vertices).
	We set
	\begin{equation}
	 \label{eq:block-size}
	 b \coloneqq (\hell - \ell + 1) \cdot (\sigMax + 1) \cdot (\allMax + 1) \cdot \LSize.
	\end{equation}
	For all $i\in\numb{\ell-1}$,
	the block $\Bl_i$ consists of the vertices of $\Xl_i$
	minus the ones from $\Xl_{i+1}$ (and minus $\port_i$).
	Note that the common vertices of $\Xl_i$ and $\Xl_{i+1}$
	are $v_{i,1},\dots,v_{i,d}$
	together with the $L$-providers $\Fil^1_i,\ldots, \Fil^{\sigMax}_i$.
	Likewise, the block $\Br_i$ consists of the vertices of $\Xr_i$
	minus the vertices from $\Xr_{i+1}$.
	The final blocks $\Bl_{\ell}$ and $\Br_{\ell}$
	hold all the remaining vertices of $V(G)\setminus \Port$,
	that is, $\Bl_{\ell}$ consists of all $\Xl_i$
	with $i\in \fragment{\ell}{\hell}$
	minus the vertices from $\Xl_{\hell} \cap \Xr_{\hell}$.
	(These are $v_{\hell,1}= \inv v_{\hell,d},\dots,v_{\hell,d}=\inv v_{\hell,1}$
	together with the $L$-providers $\Fil^1_{\hell},\ldots, \Fil^{\sigMax}_{\hell}$.)
	Then, $\Br_{\ell}$ consists of all $\Xr_i$
	with $i\in \fragment{\ell}{\hell}$.

	It is straightforward to verify that these blocks form a partition of the vertices of $G$.
	Each set $\Xl_i$ contains $\sigMax \cdot (\allMax + 1)$ many $L$-providers of the form $\Fil^{s}_{i,z}$, and $\allMax$ many $L$-providers of the form $J_{i,z}$.
	So, in total, this gives at most $(\sigMax + 1) \cdot (\allMax + 1)$ many $L$-providers.
	Since each vertex $x_{i,z,j}$ is a portal of at least one of those $L$-providers, we conclude that $\Xl_i$ contains at most $(\sigMax + 1) \cdot (\allMax + 1) \cdot \LSize$ many vertices.
	The same bound also holds for all subgraphs $\Xr_i$ for all $i\in \numb{\hell}$.
	Since each block $\Bl_i$ or $\Br_i$, $i\in \numb{\ell}$, contains vertices from at most $(\hell - \ell + 1)$ many subgraphs $\Xl_i$ or $\Xr_i$, it follows that $|\Bl_i| \leq b$ and $|\Br_i| \leq b$ for all $i\in \numb{\ell}$.
	Finally, by construction, it is immediately clear that $N(\port_i) \subseteq \Bl_i \cup \Br_i$ for all $i \in \numb{\ell}$,
	and there are edges only between $\Bl_i$ and $\Bl_{i+1}$, $\Br_i$ and $\Br_{i+1}$, for each $i \in \numb{\ell-1}$, and $\Bl_\ell$ and $\Br_\ell$.

	\paragraph{Constructing a Solution.}
	To show that the corresponding graph $G$
	together with the distinguished vertices $\Port=\{\port_1,\dots,\port_{\ell}\}$
	is an $\encoder{\allStates}$ (for $\beta=1$)
	or an $\encoder{\sigStates}$ (for $\beta=0$)
	of \rank $\ell$, we additionally need to prove the second property from \Cref{def:encoder}.

	For a given $x\in\allStates^\ell$,\footnote{
	For $\beta=0$, we restrict ourselves to strings from $\sigStates^\ell$.}
	we define an $\newx \in \allStates^{\hell}$
	with $\newx\position{i}=x\position{i}$ for all $i\in\numb{\ell}$,
	while the remaining entries are set depending on $\beta$ (and $x$).
	We iteratively construct (partial) solutions $S_{i,z}$
	for all $i\in \numb{\hell}$ and $z\in \fragment{0}{\allMax}$
	such that $S_x \deff S_{\hell,\allMax}$ corresponds to the final solution.
	Based on these $S_{i,z}$, we define two functions
	$\Fl,\Fr\from\numb{\hell}\times\fragment{0}{\allMax}\to\fragment{0}{d-1}$
	such that
	$\Fl(i,z)=k$ if and only if
	the $k$ vertices $\hx_{i,z,1},\dots,\hx_{i,z,k}$
	need (exactly) one more neighbor
	in $S_{i,z}$,
	respectively for $\Fr(i,z)=k$
	and the vertices $\inv \hx_{i,z,1},\dots,\inv \hx_{i,z,k}$.
	Due to the construction of $G$,
	we set $S_{i,\allMax}=S_{i+1,0}$,
	and therefore, $\Fl(i,\allMax)=\Fl(i+1,0)$ and $\Fr(i,\allMax)=\Fr(i+1,0)$
	for all $i\in\numb{\hell-1}$.
	We later set $\hell$ and $\newx$
	such that the partial solutions maintain the following invariant
	for all $i\in\numb{\hell}$:
	\begin{equation}
		\label{eq:bToExanding:invariant}
		\Fl(i,\allMax)+\Fr(i,\allMax)
		\equiv_d \sigMax \cdot \occ{\newx\fragment{1}{i}}{\sigStates}
		\quad
		\text{and}
		\quad
		\Fl(1,0)=\Fr(1,0)=0
	\end{equation}

	$S_{1,0}$ consists of the vertices $\hx_{1,0,j}$ and $\inv \hx_{1,0,j}$
	for all $j\in\numb{d}$.
	Moreover, the $L$-providers $\Fil^1_{1,0},\dots,\Fil^{\sigMax}_{1,0}$
	and $\Fir^1_{1,0},\dots,\Fir^{\sigMax}_{1,0}$
	give each of these vertices $\sigMax$ neighbors.
	We directly get $\Fl(1,0)=\Fr(1,0)=0$ which satisfies the invariant.

	Now, for each $i$ from $1$ to $\hell$,
	we iterate over all $z$ from $1$ to $\allMax$,
	and apply the following selection process.
	Unless mentioned otherwise,
	the portals of the $L$-providers get state $\sigma_0$.
	\begin{itemize}
		\item
		$S_{i,z}$ consists of all selected vertices from $S_{i,z-1}$.
		If $\newx\position{i} \in \sigStates$,
		then we select vertex $\port_i$.
		Moreover,
		we select the vertices $\hx_{i,z,j}$ and $\inv \hx_{i,z,j}$
		for all $j\in\numb{d}$.
		We use the $L$-providers $\Fil^y_{i,z},\Fir^y_{i,z}$
		for all $y\in\numb{\sigMax-1}$
		to add $\sigMax-1$ neighbors to each of these vertices.

		\item
		Selection process for $\Bl_i$ if $\newx\position{i} =\sigma_s$.
		\begin{itemize}
			\item
			If $\port_i$ has less than $s$ neighbors in $\Bl_i\cap S_{i,z-1}$,
			then we give it one more neighbor in $S_{i,z}$.
			The $L$-provider $J_{i,z}$ gives state $\sigma_1$ to $\port_i$
			and each of
			$\hx_{i,z-1,1},\dots,\hx_{i,z-1,\Fl(i,z-1)}$
			and $\hx_{i,z,\Fl(i,z-1)+2},\dots,\hx_{i,z,d}$.
			Hence, the $\Fl(i,z-1)+1$ vertices $\hx_{i,z,1},\dots,\hx_{i,z,\Fl(i,z-1)+1}$ need one more neighbor.

			If $\Fl(i-1,z)+1=d$,
			then we use the additional $L$-provider $\Fil^{\sigMax}_{i,z}$
			to give every vertex one additional neighbor.
			Therefore,
			$\Fl(i,z)=\Fl(i,z-1)+1 \mod d$.

			\item
			If $\port_i$ has $s$ neighbors in $\Fl_i\cap S_{i,z-1}$,
			then we do not give it one more neighbor in $S_{i,z}$.
			The $L$-provider $J_{i,z}$ gives state $\sigma_0$ to $\port_i$,
			and gives state $\sigma_1$ to each of
			$\hx_{i,z-1,1},\dots,\hx_{i,z-1,\Fl(i,z-1)}$
			and $\hx_{i,z,\Fl(i,z-1)+1},\dots,\hx_{i,z,d}$.
			Hence,
			$\Fl(i,z)=\Fl(i,z-1)$.
		\end{itemize}

		\item
		Selection process for $\Br_i$ if $\newx\position{i} =\sigma_s$.
		\begin{itemize}
			\item
			If $\port_i$ has less than $\sigMax-s$ neighbors in $\Br_i\cap S_{i,z-1}$,
			then we give it one more neighbor in $S_{i,z}$.
			The $L$-provider $\inv J_{i,z}$ gives state $\sigma_1$ to $\port_i$
			and each of
			$\inv \hx_{i,z-1,1},\dots,\inv \hx_{i,z-1,\Fr(i,z-1)}$
			and $\inv \hx_{i,z,\Fr(i,z-1)+2},\dots,\inv \hx_{i,z,d}$.
			Hence, the $\Fr(i,z-1)+1$ vertices $\inv \hx_{i,z,1},\dots,\inv \hx_{i,z,\Fr(i,z-1)+1}$ need one more neighbor.

			If $\Fr(i-1,z)+1=d$,
			then we use the additional $L$-provider $\Fir^{\sigMax}_{i,z}$
			to give every vertex one additional neighbor.
			Therefore,
			$\Fr(i,z)=\Fr(i,z-1)+1 \mod d$.

			\item
			If $\port_i$ has $\sigMax-s$ neighbors in $\Br_i\cap S_{i,z-1}$,
			then we do not give it one more neighbor in $S_{i,z}$.
			The $L$-provider $\inv J_{i,z}$ gives state $\sigma_0$ to $\port_i$,
			and gives state $\sigma_1$ to each of
			$\inv \hx_{i,z-1,1},\dots,\inv \hx_{i,z-1,\Fr(i,z-1)}$
			and $\inv \hx_{i,z,\Fr(i,z-1)+1},\dots,\inv \hx_{i,z,d}$.
			Hence,
			$\Fr(i,z)=\Fr(i,z-1)$.
		\end{itemize}

		\item
		Selection process for $\Bl_i$ if $\newx\position{i} =\rho_r$.
		(Note that this case does not occur when $\beta=0$)
		\begin{itemize}
			\item
			If $\port_i$ has less than $r$ neighbors in $\Bl_i\cap S_{i,z-1}$,
			then we give it one more neighbor in $S_{i,z}$.
			The $L$-provider $J_{i,z}$ gives state $\rho_1$ to $\port_i$,
			and gives state $\sigma_1$ to each of
			$\hx_{i,z-1,1},\dots,\hx_{i,z-1,\Fl(i,z-1)}$
			and $\hx_{i,z,\Fl(i,z-1)+1},\dots,\hx_{i,z,d}$.
			Since the $\Fl(i,z-1)$ vertices $\hx_{i,z,1},\dots,\hx_{i,z,\Fl(i,z-1)}$ need one more neighbor,
			$\Fl(i,z)=\Fl(i,z-1)$.

			\item
			If $\port_i$ has $r$ neighbors in $\Bl_i\cap S_{i,z-1}$,
			then we do not give it one more neighbor in $S_{i,z}$.
			The $L$-provider $J_{i,z}$ gives state $\rho_0$ to $\port_i$,
			and gives state $\sigma_1$ to each of
			$\hx_{i,z-1,1},\dots,\hx_{i,z-1,\Fl(i,z-1)}$
			and $\hx_{i,z,\Fl(i,z-1)+1},\dots,\hx_{i,z,d}$.
			Hence,
			$\Fl(i,z)=\Fl(i,z-1)$.
		\end{itemize}

		\item
		Selection process for $\Br_i$ if $\newx\position{i} =\rho_r$.
		(Note that this case does not occur when $\beta=0$)
		\begin{itemize}
			\item
			If $\port_i$ has less than $\rhoMax-r$ neighbors in $\Br_i\cap S_{i,z-1}$,
			then we give it one more neighbor in $S_{i,z}$.
			The $L$-provider $\inv J_{i,z}$ gives state $\rho_1$ to $\port_i$,
			and gives state $\sigma_1$ to each of
			$\inv \hx_{i,z-1,1},\dots,\inv \hx_{i,z-1,\Fr(i,z-1)}$
			and $\inv \hx_{i,z,\Fr(i,z-1)+1},\dots,\inv \hx_{i,z,d}$.
			Since the $\Fr(i,z-1)$ vertices $\inv \hx_{i,z,1},\dots,\inv \hx_{i,z,\Fr(i,z-1)}$ need one more neighbor,
			$\Fr(i,z)=\Fr(i,z-1)$.

			\item
			If $\port_i$ has $\rhoMax-r$ neighbors in $\Br_i\cap S_{i,z-1}$,
			then we do not give it one more neighbor in $S_{i,z}$.
			The $L$-provider $\inv J_{i,z}$ gives state $\rho_0$ to $\port_i$,
			and gives state $\sigma_0$ to each of
			$\inv \hx_{i,z-1,1},\dots,\inv \hx_{i,z-1,\Fr(i,z-1)}$
			and $\inv \hx_{i,z,\Fr(i,z-1)+1},\dots,\inv \hx_{i,z,d}$.
			Hence,
			$\Fr(i,z)=\Fr(i,z-1)$.
		\end{itemize}
	\end{itemize}
	\paragraph{Correctness of the Solution.}
	We first show that these steps preserve the invariant in \Cref{eq:bToExanding:invariant}.
	The claim immediately follows if $\newx\position{i}=\rho_r$
	because, in this case, $\Fl(i,z-1)=\Fl(i,z)$ for all $z\in\numb{\allMax}$.
	Likewise, we get $\Fr(i,z-1)=\Fr(i,z)$.

	For the case $\newx\position{i}=\sigma_s$,
	we get that $\Fl(i,z)\equiv_d \Fl(i,0)+z$ for all $z\in\numb{s}$,
	and $\Fl(i,z)=\Fl(i,z-1)$ for all $z\in\fragmentoc{s}{\allMax}$.
	Conversely, we get
	$\Fr(i,z)\equiv_d \Fr(i,0)+z$ for all $z\in\numb{\sigMax-s}$,
	and $\Fr(i,z)=\Fr(i,z-1)$ for all $z\in\fragmentoc{\sigMax-s}{\allMax}$.
	This directly yields
	\begin{align*}
		\Fl(i,\allMax)+\Fr(i,\allMax)
		&\equiv_d \Fl(i,0)+\Fr(i,0) + s + (\sigMax-s) \\
		&\equiv_d \occ{\newx\fragment{1}{i-1}}{\sigStates} \cdot \sigMax + \sigMax \\
		&\equiv_d \occ{\newx\fragment{1}{i}}{\sigStates} \cdot\sigMax,
	\end{align*}
	proving that the invariant is preserved.

	\paragraph{Setting the Parameters.}
	Assume we can set $\hell$ and $\newx\fragmentoc{\ell}{\hell}$
	such that the solution satisfies
	$\Fl(\hell,\allMax)+\Fr(\hell,\allMax)\in\{0,d\}$.
	We then show that $S_{\hell,\allMax}$ is indeed a valid solution.
	From $\Fl(\hell,\allMax)=k$, we know
	that the vertices $v_{\hell,1},\dots,v_{\hell,k}$ need one more neighbor.
	From $\Fr(\hell,\allMax)=k'$, we know
	that the vertices $\inv v_{\hell,1},\dots,\inv v_{\hell,k'}$ need one more neighbor.
	Recall that $v_{\hell,j}=\inv v_{\hell,d+1-j}$ for all $j$.
	If $\Fl(\hell,\allMax)+\Fr(\hell,\allMax)=d$,
	then $k'=d-k$ and the vertices $v_{\hell,1},\dots,v_{\hell,k}$
	get a neighbor from $\inv J_{\hell,\allMax}$.
	Conversely, the vertices $\inv v_{\hell,1}=v_{\hell,d},\dots,\inv v_{\hell,d-k}=v_{\hell,k+1}$
	get a neighbor from $J_{\hell,\allMax}$.
	Thus, none of these vertices need an additional neighbor.

	If $\Fl(\hell,\allMax)+\Fr(\hell,\allMax)=0$,
	then the $L$-provider $\Fil^{\sigMax}_{\hell}=\Fir^{\sigMax}_{\hell}$
	already provided one additional neighbor for these vertices.

	It remains to set $\hell$ and $\newx\fragmentoc{\ell}{\hell}$
	depending on $\beta$
	such that the claim holds.
	\begin{itemize}
		\item
		$\beta=1$:
		We set $\hell=\ell+d$.
		Let $\newx\position{i}=\sigma_0$
		for all $i\in\fragmentoc{\ell}{\ell+\ell'}$
		and $\newx\position{i}=\rho_0$
		for all $i\in\fragmentoc{\ell+\ell'}{\hell}$,
		where $\ell' = d-(\occ{x}{\sigStates} \mod d)$.
		From the invariant in \Cref{eq:bToExanding:invariant}, we get:
		\begin{align*}
			\Fl(\hell,\allMax)+\Fr(\hell,\allMax)
			&\equiv_d \sigMax \cdot \occ{\newx}{\sigStates} \\
			&\equiv_d \sigMax \cdot (\occ{x}{\sigStates} + \occ{\newx\fragmentoc{\ell}{\hell}}{\sigStates}) \\
			&\equiv_d \sigMax \cdot (\occ{x}{\sigStates} + \ell') \\
			&\equiv_d \sigMax \cdot (\occ{x}{\sigStates} + d-(\occ{x}{\sigStates} \mod d))
			 \equiv_d 0
		\end{align*}

		\item
		$\beta=0$:
		We set $\hell=\ell+d-(\ell \mod d)$
		and $\newx\position{i}=\sigma_0$ for all $i\in\fragmentoc{\ell}{\hell}$.
		By the invariant in \Cref{eq:bToExanding:invariant}
		and the fact that $\newx\in \sigStates^{\hell}$, we get:
		\begin{align*}
			\Fl(\hell,\allMax)+\Fr(\hell,\allMax)
			&\equiv_d \sigMax \cdot \occ{\newx}{\sigStates} \\
			&\equiv_d \sigMax \cdot \hell \\
			&\equiv_d \sigMax \cdot (\ell + d-(\ell \mod d))
			 \equiv_d 0
			 \qedhere
		\end{align*}
	\end{itemize}
	We complete the proof of \cref{lem:expanding:basicToExpanding} by observing that
    $\hell - \ell + 1 \leq d + 1$ in both cases, which means that $b$ (see Equation
    \eqref{eq:block-size}) depends only on $\sigMax$ and $\rhoMax$.

}

\section{Lower Bound for the Problem with Relations}
\label{sec:high-level}\label{sec:LBforRelations}
In \cref{sec:high-level:decision}, we prove the intermediate lower bound for the decision version,
that is, we prove \cref{lem:lowerBoundWhenHavingSuitableGadget}.
We achieve this by giving a reduction from SAT to \srDomSetRel
assuming a suitable \encoder{A} is given.
In \cref{sec:high-level:counting}, we reuse this construction to show the lower bound for the counting version,
that is, \cref{lem:count:lowerBoundWhenHavingSuitableGadget}.
As the construction for the decision version is only for the case
when $\sigma$ and $\rho$ are finite or simple cofinite,
we provide some modifications and extensions
(exploiting properties of the counting version)
such that the sets might also be cofinite without being simple cofinite.

Observe that each $\encoder{A}$ from
\cref{lem:expanding:onlyRhoStates,lem:expanding:onlyEvenStates,lem:expanding:onlySigmaStates,lem:expanding:allStates},
satisfies our conditions on $A$ from the lower bounds,
that is, $A$ is closed under the inverse with respect to $\sigma,\rho$
(see \cref{def:inverseOfStates} for the definition of the inverse).
Hence, we can use them in the following constructions.

\subsection{\boldmath Proof of
\texorpdfstring{\cref{lem:lowerBoundWhenHavingSuitableGadget}}
{Lemma \ref{lem:lowerBoundWhenHavingSuitableGadget}}:
Lower bound for \texorpdfstring{\srDomSetRel}
{(Sigma,Rho)-Dominating Set with Relations}}
\label{sec:high-level:decision}

In this section we prove \cref{lem:lowerBoundWhenHavingSuitableGadget},
which we restate here for convenience. We will use a
reduction from SAT to \srDomSetRel that keeps
the pathwidth low.
For this, we follow previous approaches as in \cite{LokshtanovMS18,CurticapeanM16,MarxSS21}.
Recall the relations defined in \cref{def:hammingWeightOneAndEquality}.

\lowerBoundForDecDomSetRel*

	Let $\phi$ be a CNF-SAT formula with $n$ variables $x_1,\dots,x_n$ and $m$ clauses $C_1,\dots,C_m$.
	We first define a corresponding instance $G_\phi$ of \srDomSetRel.
	Afterward, we establish necessary properties of $G_\phi$.
	Finally, we put the pieces together to show that a faster algorithm
	for $\srDomSetRel$ gives an algorithm for SAT that would contradict the SETH.

\subsubsection*{\boldmath Construction of $G_\phi$}\label{sec:construction}
\def\numGroups{t} 
\def\sizeGroup{q} 
\def\numVertPG{g} 
\def\numNeighb{T} 

We group the variables of $\phi$
into $\numGroups \deff \ceil{n/\sizeGroup}$ groups $F_1,\dots,F_{\numGroups}$ of $\sizeGroup$ variables each,
where $\sizeGroup$ is chosen later.
Later, we also choose some $\numVertPG$ with $2^{\sizeGroup} \le \abs{A}^{\numVertPG}$.

We now define a corresponding graph $G_\phi$ that serves as an instance of \srDomSetRel.
The construction is illustrated in
\Cref{fig:lower:construction}.

In the following, the indices we use
are superscripts if they refer to ``columns'' of the construction,
and they are subscripts if they refer to ``rows'' of the construction.
Whenever we say that some set of vertices $V'$ is subject to a relation $R$,
we mean that there is a relational constraint $C$ with $\scope(C)=V'$ and $\rel(C)=R$.
In a separate step, we show how to remove these relational constraints.
Here is the construction of $G_\phi$:

\begin{itemize}
  \item
  Introduce vertices $w_{i,\ell}^j$
  for all $i \in\numb{\numGroups}$,
  $\ell \in\numb{\numVertPG}$,
  and $j\in\fragment{0}{m}$.
  \item
  Introduce vertices $c_{i}^j$
  for all $i\in\fragment{0}{\numGroups}$ and $j\in\numb{m}$.
  \item
  For all $j\in\fragment{0}{m}$,
  we create a copy $J^j$ of the given $\encoder{A}$ of \rank $\numGroups \numVertPG$
  for which $\{ w_{i,\ell}^j \mid i \in\numb{\numGroups},
  \ell\in\numb{\numVertPG}\}$ act as distinguished vertices.
  Let $\Bl_{i,\ell}^j$ and $\Br_{i,\ell}^j$
  be the corresponding blocks.
  \item For each $w_{i,\ell}^j$,
  $N_{\Bl}(w_{i,\ell}^j)$ and $N_{\Br}(w_{i,\ell}^j)$
  are the neighborhoods of $w_{i,\ell}^j$
  in $\Bl_{i,\ell}^j$ and $\Br_{i,\ell}^j$, respectively.
  \item
  For all $i\in\fragment{0}{\numGroups}$ and $j\in\numb{m}$,
  there is a $\{\sigma_{\sigMin},\rho_{\rhoMin}\}$-provider $Q_i^j$
  (exists by \cref{lem:fillinggadget})
  with $c_i^j$ as portal.
  We add a relation to the vertices of $Q_i^j$ and $c_i^j$
  which ensures that the provider only has two solutions,
  namely one unique solution corresponding to the state $\sigma_{\sigMin}$,
  and one unique solution for the state $\rho_{\rhoMin}$.
  \item
  For all $j\in\numb{m}$, $c_0^j$ is additionally subject to a relation $\HWeq[1]{0}$,
  and $c_t^j$ is additionally subject to $\HWeq[1]{1}$.
  \item
  For each $i\in\numb{\numGroups}$ and $j\in\numb{m}$,
  the set of vertices
  \[ Z_i^j\coloneqq \{c_{i-1}^j,c_i^j\} \cup
  \bigcup_{\ell \in\numb{\numVertPG}} \bigl(
    w_{i,\ell}^{j-1} \cup N_{\Br}(w_{i,\ell}^{j-1})
    \cup w_{i,\ell}^j \cup N_{\Bl}(w_{i,\ell}^j)
  \bigr)
  \]
  is subject to a relation $R_i^j$, which we define in a moment.
  \item
  Similarly, for each $i\in\numb{\numGroups}$,
  \[ Z_i^0\coloneqq
  \bigcup_{\ell \in\numb{\numVertPG}} \bigl(
    w_{i,\ell}^1 \cup N_{\Bl}(w_{i,\ell}^1)
  \bigr)
  \quad
  \text{and}
  \quad
  Z_i^{m+1}\coloneqq
  \bigcup_{\ell \in\numb{\numVertPG}} \bigl(
    w_{i,\ell}^{m} \cup N_{\Br}(w_{i,\ell}^{m})
  \bigr),
  \]
  are subject to relations $R_i^0$
  and $R_i^{m+1}$, respectively.
\end{itemize}
This completes the definition of $G_\phi$.
In order to define the relations $R_i^j$, we need some more notation.
For each group of variables $F_i$, there are at most $2^{\sizeGroup}$ corresponding partial assignments. We encode these assignments by states from $A^{\numVertPG}$.
For technical reasons, instead of using all states,
we only use those states that have a certain property.
For now it suffices to think of this property as having a certain weight.

To this end, for a vector \(x \in \allStates^n\),
we define the \emph{weight-vector of \(x\)}
as \(\degvec{x} \in \ZZ_{\geq 0}^n\)
with \( \degvec{x}\vposition{i} \deff c \)
where \( x\position{i} \in \{\sigma_c, \rho_c \}\).
Moreover,
the weight $\wt(x)$ of a vector $x \in \allStates^{\numVertPG}$ is defined as
the $1$-norm of $\degvec{x}$.
Formally, we define
\[
  \wt(x) \deff \sum_{i=1}^{\numVertPG} \degvec{x}\position{i}.
\]
Let $\max A=\max\{i \mid \sigma_i \in A \lor \rho_i \in A\}$.
Observe that the vectors in $A^g$ can have at most $\numVertPG\cdot\max A+1$ possible weights between $0$ and $\numVertPG\cdot\max A$.
There are $\abs{A^{\numVertPG}}$ vectors in total.

\newcommand{\Encsvar}{\mathfrak A}
\newcommand{\Encs}{\widehat{ \mathfrak A}}

\begin{claim}
  There is a weight $w$ and a set $\Encsvar \subseteq A^{\numVertPG}$ such that
  $\abs{\Encsvar} \ge
  \lceil{\abs{A^{\numVertPG}}} / (\numVertPG\cdot\max A+1)\rceil$,
  and, for all $x \in \Encsvar$, we have $\wt(x) = w$.
\end{claim}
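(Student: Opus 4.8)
This is a direct pigeonhole argument on the weight function. The plan is to partition the set $A^{\numVertPG}$ of candidate encodings into weight classes and observe that there are few classes but many vectors, so some class must be large.

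First I would note that for every $x \in A^{\numVertPG}$ and every coordinate $i \in \numb{\numVertPG}$, the entry $\degvec{x}\position{i}$ is the index of a state occurring in $A$, hence lies in $\fragment{0}{\max A}$ by definition of $\max A$. Summing over the $\numVertPG$ coordinates, $\wt(x) = \sum_{i=1}^{\numVertPG}\degvec{x}\position{i} \in \fragment{0}{\numVertPG\cdot\max A}$. Thus the map $x \mapsto \wt(x)$ sends $A^{\numVertPG}$ into a set of size exactly $\numVertPG\cdot\max A + 1$.

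Next I would apply the pigeonhole principle: partitioning $A^{\numVertPG} = \bigcup_{j=0}^{\numVertPG\cdot\max A} \{x \in A^{\numVertPG} \mid \wt(x) = j\}$ into at most $\numVertPG\cdot\max A+1$ parts, at least one part — say the one with weight value $w$ — has cardinality at least $\lceil \abs{A^{\numVertPG}} / (\numVertPG\cdot\max A+1)\rceil$. Setting $\Encsvar$ to be this part gives a set with $\abs{\Encsvar} \ge \lceil \abs{A^{\numVertPG}} / (\numVertPG\cdot\max A+1)\rceil$ such that $\wt(x)=w$ for all $x \in \Encsvar$, as claimed. (If $\abs A = 0$ the statement is vacuous, and otherwise $A^{\numVertPG}$ is nonempty so such a $w$ exists; in fact the argument above does not even need nonemptiness.)

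There is no real obstacle here — the step requiring the most care is simply bookkeeping the bound $\max A$ on each coordinate so that the number of weight classes is correctly counted as $\numVertPG\cdot\max A+1$ rather than off by one.
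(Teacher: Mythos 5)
Your proof is correct and follows essentially the same pigeonhole-on-weight-classes argument as the paper; the paper merely phrases it with an explicit (and unnecessary) case distinction on whether the number of possible weights exceeds $\abs{A^{\numVertPG}}$, which your single application of the pigeonhole principle subsumes. The bookkeeping that weights range over $\fragment{0}{\numVertPG\cdot\max A}$, giving $\numVertPG\cdot\max A+1$ classes, matches the paper exactly.
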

\begin{claimproof}
  If $\numVertPG\cdot\max A+1 \ge \abs{A^{\numVertPG}}$,
  then we only need that there is one weight which appears at least once.
  This is true as $A$ is non-empty.
  Otherwise,
  there are more vectors than weights, and
  $\numVertPG\cdot\max A+1 \le \abs{A^{\numVertPG}}$.
  The pigeonhole principle provides some weight $w$
  such that at least the claimed fraction of vectors has weight $w$.
\end{claimproof}

Recall the definition of the inverse of a state from \Cref{def:inverseOfStates}.
Using this definition,
we define a new set $\Encs \subseteq \allStates^g$ as the inverse of $\Encsvar$
with respect to $\sigma$ and $\rho$ if the sets $\sigma$ and $\rho$ are simple cofinite, respectively.
We initially set $\Encs = \Encsvar$.
\begin{itemize}
  \item
  If $\sigma$ is simple cofinite,
  then update $\Encs \deff \inverse[\sigma]{\Encs}$.
  \item
  If $\rho$ is simple cofinite,
  then update $\Encs \deff \inverse[\rho]{\Encs}$.
\end{itemize}
Since $A$ is closed under inversion with respect to $\sigma,\rho$,
we get $\Encs \subseteq A^g$.
Moreover, as the inverse is a bijective function,
we have $\abs{\Encs}=\abs{\Encsvar}$.

We set $\sizeGroup=\floor{\numVertPG \log \abs{A} - \log(\numVertPG \cdot \max A +1)}$
such that $2^{\sizeGroup} \le \abs{A}^{\numVertPG} / (\numVertPG\cdot\max A + 1) \le \abs{\Encs}$,
that is,
even when using vectors from $\Encs$ for the encodings,
we can still encode all partial assignments.
As a last step, we choose an (injective) mapping
$e \colon 2^{\sizeGroup} \to \Encs$ to fix the encoding.

Given a selection $S$ of vertices from $G_\phi$, for each $i \in\numb{\numGroups}$,
$\ell \in\numb{\numVertPG}$, and $j\in\fragment{0}{m}$,
we define two states $a^j_{i,\ell}$ and $\inv a^j_{i,\ell}$ as follows.
Let $T^j_{i,\ell}\coloneqq S\cap N_{\Bl}(w^{j}_{i,\ell})$
be the number of selected neighbors of $w^{j}_{i,\ell}$ in $\Bl^j_{i,\ell}$,
and let $\inv T^j_{i,\ell}\coloneqq S\cap N_{\Br}(w^{j}_{i,\ell})$
be the number of selected neighbors of $w^{j}_{i,\ell}$ in $\Br^j_{i,\ell}$.
\begin{equation}\label{equ:encodedstates}
	\begin{aligned}
		&\text{If $w^{j}_{i,\ell}$ is selected, then we set $a^j_{i,\ell}=\sigma_{T^j_{i,\ell}}$ and $\inv a^j_{i,\ell}=\sigma_{\inv T^j_{i,\ell}}$.} \\
		&\text{If $w^{j}_{i,\ell}$ is \emph{not} selected, then we set $a^j_{i,\ell}=\rho_{T^j_{i,\ell}}$ and $\inv a^j_{i,\ell}=\rho_{\inv T^j_{i,\ell}}$.}
	\end{aligned}
\end{equation}

Note that, for each $i\in\numb{\numGroups}$ and $j\in\numb{m}$,
a selection of vertices $S^j_i$ from $Z_i^j$
determines the states $a^j_{i,\ell}$ and $\inv a^{j-1}_{i,\ell}$
for each $\ell \in\numb{\numVertPG}$.

Using these states, we now have everything in place to conveniently define the relations $R^j_i$.
For each $i\in\numb{\numGroups}$ and $j\in\numb{m}$,
$S_i^j$ is in $R_i^j$ if and only if all of the following hold:

\begin{itemize}
	\item
  $a^j_{i,1}\ldots a^j_{i,\numVertPG}$ is in $\Encs \subseteq A^{\numVertPG}$,
  and it is the encoding $e(\pi_i)$ of a partial assignment $\pi_i$ of the group of variables $F_i$.
	\item For each $\ell\in \numb{\numVertPG}$, the state $\inv a^{j-1}_{i,\ell}$ complements $a^j_{i,\ell}$ in the sense that
  $\inv a^{j-1}_{i,\ell} = \inverse[\sigma,\rho]{a^j_{i,\ell}}$.
	Note that therefore $w^j_{i,\ell}$ is selected
  if and only if $w^{j-1}_{i,\ell}$ is selected.
	\item If the vertex $c_{i-1}^{j}$ is selected,
	then the vertex $c_{i}^j$ must also be selected.
	\item If $c_{i-1}^{j}$ is not selected, then
	$c_i^j$ is selected
	if and only if
	the encoded assignment $\pi_i$ satisfies the clause $C_{j}$.
\end{itemize}

Similarly, we define $R_i^0$ and $R_i^{m+1}$ for each $i\in\numb{\numGroups}$.
A selection of vertices $S^0_i$ from $Z^0_i$
determines, for each $\ell \in\numb{\numVertPG}$, the state $a^0_{i,\ell}$; and a selection of vertices $S^m_i$ from $Z^m_i$
determines, for each $\ell \in\numb{\numVertPG}$, the state $\inv a^m_{i,\ell}$
where we follow the above procedure.

$S_i^0$ is in $R_i^0$ if and only if
\begin{itemize}
	\item $a^0_{i,1}\ldots a^0_{i,\numVertPG}$ is in $\Encs \subseteq A^{\numVertPG}$, and it is the encoding $e(\pi_i)$ of a partial assignment $\pi_i$ of the group of variables $F_i$.
\end{itemize}

In order to define $R_i^{m+1}$, we define, for each $\ell\in \numb{\numVertPG}$,
the auxiliary state $a^{m+1}_{i,\ell}$ with
$a^{m+1}_{i,\ell} \deff \inverse[\sigma,\rho]{\inv a^{m}_{i,\ell}}$.
Then, $S_i^{m+1}$ is in $R_i^{m+1}$ if and only if
\begin{itemize}
	\item $a^{m+1}_{i,1}\ldots a^{m+1}_{i,\numVertPG}$ is in $\Encs \subseteq A^{\numVertPG}$,
  and it is the encoding $e(\pi_i)$ of a partial assignment $\pi_i$ of the group of variables $F_i$.
\end{itemize}

\begin{figure}
  \includegraphics[width=\linewidth]{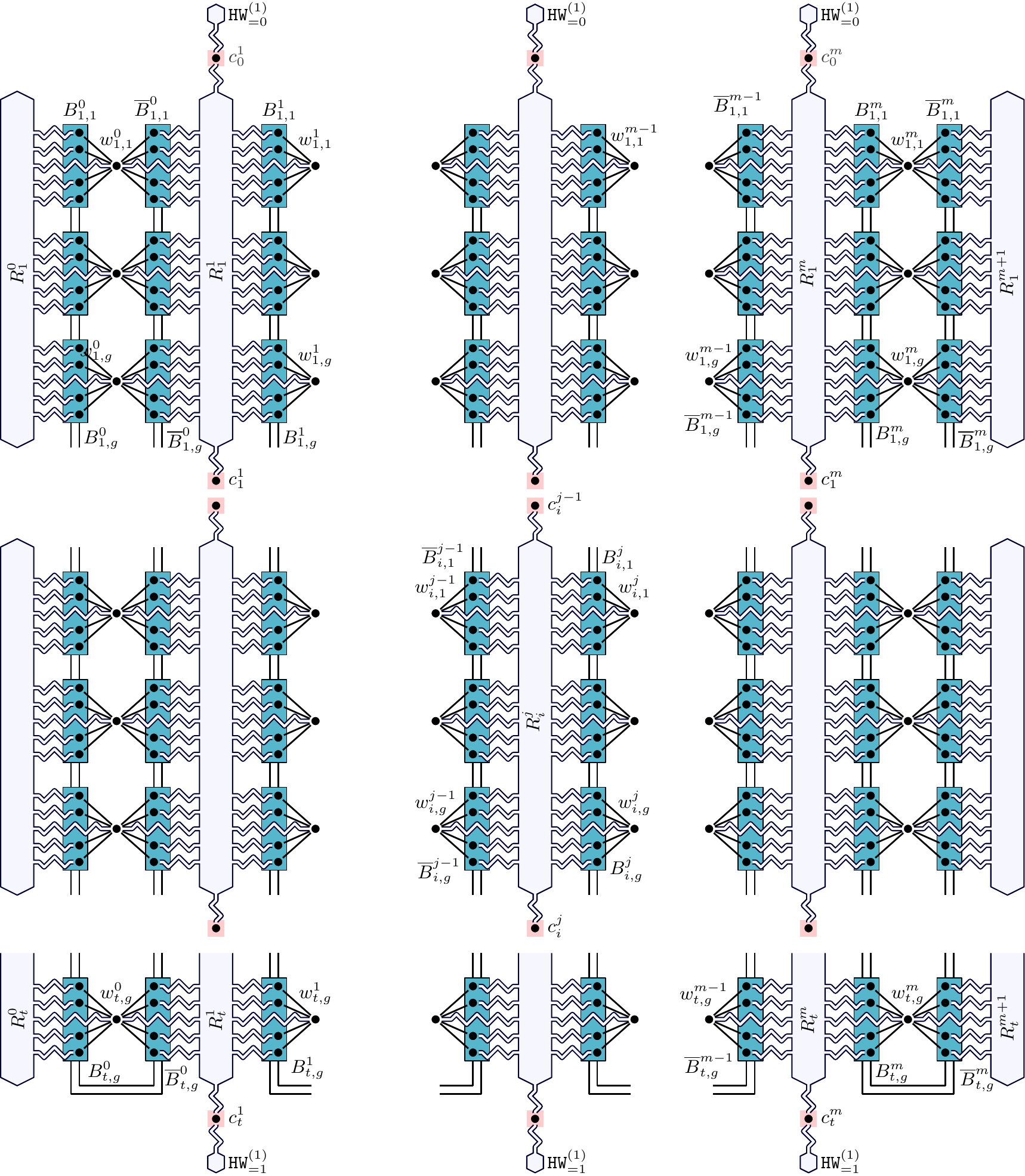}
  \caption{The construction in \cref{sec:high-level:decision}
  for the lower bound for the problem with relations.
  Vertices are shown by dots,
  $\{\sigma_{\sigMin},\rho_{\rhoMin}\}$-providers are shown in red,
  the blocks from the $\encoder{A}$ are shown in blue,
  and relations by hexagons.
  }
  \label{fig:lower:construction}
\end{figure}

\subsubsection*{\boldmath Properties of $G_\phi$}
\label{sec:high-level:decision:correctness}

\begin{claim}
  If $\phi$ is satisfiable,
  then \srDomSetRel has a solution on $G_\phi$.
\end{claim}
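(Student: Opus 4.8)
The plan is to turn a satisfying assignment of $\phi$ into an explicit $(\sigma,\rho)$-set of $G_\phi$ and then verify, one family of constraints at a time, that it is valid. First I would fix a satisfying assignment $\tau$ of $\phi$, let $\pi_i$ be its restriction to the variable group $F_i$, and note that $e(\pi_i)\in\Encs\subseteq A^{\numVertPG}$. For every column $j\in\fragment{0}{m}$ I would select, inside the manager $J^j$, the unique $(\sigma,\rho)$-set $S_{x^j}$ of $J^j$ guaranteed by \cref{def:encoder} for the string $x^j\in A^{\numGroups\numVertPG}$ that assigns to each distinguished vertex $w_{i,\ell}^j$ the state $e(\pi_i)\position{\ell}$; this is legitimate precisely because $e(\pi_i)\in A^{\numVertPG}$, so $x^j$ lies in $A^{\numGroups\numVertPG}$. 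This fixes the selection status of every $w_{i,\ell}^j$ and of every block vertex. For the column vertices I would let $i^*_j$ be the smallest index of a variable group containing a literal of $C_j$ satisfied by $\tau$ (such $i^*_j$ exists because $\tau$ satisfies $C_j$, and clearly $1\le i^*_j\le\numGroups$), put $c_i^j$ into the solution exactly when $i\ge i^*_j$, and in each provider $Q_i^j$ pick the unique partial solution matching the chosen status of $c_i^j$. Call the resulting set $S$.

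Next I would check that $S$ is a $(\sigma,\rho)$-set of the underlying graph of $G_\phi$. The managers $J^0,\dots,J^m$ are pairwise vertex-disjoint and, by \cref{def:encoder}, each vertex of $J^j$ — including $w_{i,\ell}^j$, whose whole neighbourhood in $G_\phi$ lies in $\Bl_{i,\ell}^j\cup\Br_{i,\ell}^j$ — has a feasible number of selected neighbours under $S_{x^j}$. Likewise, every provider $Q_i^j$ meets the rest of $G_\phi$ only through its portal $c_i^j$, whose whole neighbourhood lies inside $Q_i^j$, so the chosen provider solution makes $c_i^j$ and all of $Q_i^j$ satisfy their constraints. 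Since every vertex of $G_\phi$ lies in some $J^j$ or some $Q_i^j$, this settles the $(\sigma,\rho)$-part.

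Then I would verify the relational constraints. The internal relations of the managers and of the providers hold by construction; $\HWeq[1]{0}$ on $c_0^j$ holds since $i^*_j\ge1$, and $\HWeq[1]{1}$ on $c_{\numGroups}^j$ since $i^*_j\le\numGroups$. For $R_i^j$ with $j\in\numb{m}$, the crucial point is that in $S_{x^j}$ the vertex $w_{i,\ell}^j$ receives exactly the $\Bl$/$\Br$ split prescribed by \cref{def:encoder}, so by \eqref{equ:encodedstates} the induced left-state is $a_{i,\ell}^j=e(\pi_i)\position{\ell}$, which lies in $\Encs$ and decodes to $\pi_i$, while the induced right-state in column $j-1$ (read off from $S_{x^{j-1}}$, which uses the same string) is $\inv a_{i,\ell}^{j-1}=\inverse[\sigma,\rho]{e(\pi_i)\position{\ell}}=\inverse[\sigma,\rho]{a_{i,\ell}^j}$, as required; in particular $w_{i,\ell}^j\in S$ iff $w_{i,\ell}^{j-1}\in S$. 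For the two clause conditions: if $c_{i-1}^j\in S$ then $i-1\ge i^*_j$, hence $i\ge i^*_j$ and $c_i^j\in S$; if $c_{i-1}^j\notin S$ then $i\le i^*_j$, and by minimality of $i^*_j$ the assignment $\pi_i$ satisfies $C_j$ exactly when $i=i^*_j$, which is exactly when $c_i^j\in S$. The relations $R_i^0$ and $R_i^{m+1}$ are checked in the same way, using that inversion is an involution so that $a_{i,\ell}^{m+1}=\inverse[\sigma,\rho]{\inv a_{i,\ell}^m}=e(\pi_i)\position{\ell}$.

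I expect the only real work to be this last bookkeeping step — confirming that using the single encoding string $e(\pi_i)$ inside every manager produces a sequence of states that the whole chain of relations $R_i^0,R_i^1,\dots,R_i^{m+1}$ simultaneously accepts. This hinges on the inversion-complementation property built into \cref{def:encoder} together with the fact that $A$, and therefore $\Encs$, is closed under the inverse with respect to $\sigma,\rho$; no step in this direction is genuinely hard.
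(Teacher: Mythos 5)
Your proposal is correct and follows essentially the same route as the paper's proof: select each manager $J^j$ according to the unique managed solution for the single string formed by the encodings $e(\pi_i)$, set $c_i^j$ selected exactly for $i\ge i^*_j$, extend into the providers, and then verify the $\sigma$-/$\rho$-constraints, the $\HWeq[1]{0}$/$\HWeq[1]{1}$ relations, and the relations $R_i^j$ via the $\Bl$/$\Br$ split and inversion property of \cref{def:encoder}. The bookkeeping you flag (same encoding in every column yielding complementary states $\inv a^{j-1}_{i,\ell}=\inverse[\sigma,\rho]{a^{j}_{i,\ell}}$, and the case analysis on $c_{i-1}^j$) is exactly how the paper closes the argument, so no gap remains.
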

\begin{claimproof}
  Let $\pi$ be a satisfying assignment for $\phi$, and let $\pi_1, \ldots, \pi_{\numGroups}$ be the corresponding partial assignments to the variable groups $F_1, \ldots, F_{\numGroups}$.
  For each $i\in\numb{\numGroups}$,
  $e_i\coloneqq e(\pi_i) \in \Encs \subseteq A^{\numVertPG} \subseteq \allStates^{\numVertPG}$
  denotes the corresponding encoding.
  So, these encodings as a whole form an element $x\in A^{\numGroups \numVertPG}$.
  By the definition of an $\encoder{A}$ of \rank $\numVertPG \numGroups$,
  for each $x\in A^{\numGroups \numVertPG}$,
  there is a unique solution $S_x$ that manages $x$.

  We now define a selection $S$ of vertices of the graph $G_\phi$,
  and afterward show that $S$ is a solution for this instance of \srDomSetRel.
   For each $j\in \numb{m}$, the following vertices are selected:
   \begin{itemize}
   	\item From the $\encoder{A}$ $\numVertPG \numGroups$ $J^j$ of \rank,
    we select vertices according to the solution $S_x$.
   	\item Let $i^*\in\numb{\numGroups}$ be the smallest index
   	such that $F_{i^*}$ contains a variable
    that satisfies the clause $C_{j}$ under the assignment $\pi$.
   	A vertex $c_{i}^j$ is selected if and only if $i \ge i^*$.
   	\item We lift the selection
   	to the $\{\sigma_{\sigMin},\rho_{\rhoMin}\}$-providers:
    if $c_i^j$ is selected, then we select vertices according to
    the $\sigma_{\sigMin}$-state of the attached $\{\sigma_{\sigMin},\rho_{\rhoMin}\}$-provider,
    and otherwise, according to the $\rho_{\rhoMin}$-state.
   \end{itemize}

   In order to show that this selection of vertices is a feasible solution of \srDomSetRel, we have to verify that
   \begin{enumerate}
   	\item  for each $j\in\numb{m}$,
    the corresponding $\HWeq[1]{0}$- and $\HWeq[1]{1}$-relations are satisfied,
    i.e., $c_0^j$ is unselected, and $c_t^j$ is selected,\label{item:HWs}
   	\item for each $i$ and $j$, the relation $R_i^j$ is satisfied, and\label{item:Rij}
   	\item
    all vertices of $G_\phi$ have a feasible number of selected neighbors,
    according to their own selection status.
    \label{item:validsel}
   \end{enumerate}

	Since $\pi$ is a satisfying assignment, for each $j\in \numb{m}$,
  there is an index $i^*\in\numb{\numGroups}$
  such that $\pi_{i^*}$ satisfies $C_{j}$.
  This gives \Cref{item:HWs}.

	In order to verify \Cref{item:Rij},
  let $i\in\numb{\numGroups}$ and $j\in\numb{m}$.
  We check that $R_i^j$ is satisfied.
  The relevant selected vertices are $S_i^j=S\cap Z_i^j$.
  Recall from \Cref{equ:encodedstates} that $S^j_i$ determines the states
  $a^j_{i,\ell}$ and $\inv a^{j-1}_{i,\ell}$
  for each $\ell \in\numb{\numVertPG}$.
	The states $a^j_{i,1}, \ldots, a^j_{i,\numVertPG}$ are determined
  by the selected vertices of $J^j$,
  i.e.,\ an $\encoder{A}$ of \rank $\numVertPG \numGroups$,
  and they are determined by the selection status of its distinguished vertices
  and their selected neighbors in the corresponding $\Bl$-blocks.
	As we selected vertices of $J^j$ according to the solution $S_x$, the fact that $a^j_{i,1}\ldots a^j_{i,\numVertPG}$ is the encoding $e_i$ of $\pi_i$ follows from the fact that $S_x$ manages $x$ (see \Cref{def:encoder}).

	Similarly, the states $\inv a^{j-1}_{i,1}, \ldots, \inv a^{j-1}_{i,\numVertPG}$
  are determined by the selected vertices of $J^{j-1}$, and
  in particular, by the selection status of its distinguished vertices
  and their selected neighbors in the corresponding $\Br$-blocks.
	Let $\ell \in\numb{\numVertPG}$.
	Since we selected vertices of $J^{j-1}$ according to $S_x$ as well, we have $a^{j-1}_{i,\ell} = a^{j}_{i,\ell}$. By \Cref{def:encoder}, $\inv a^{j-1}_{i,\ell}$ complements the state $a^{j-1}_{i,\ell} = a^{j}_{i,\ell}$ as required by $R^j_i$.

   	It is clear that in our selection we include $c_i^j$ into $S$
    whenever we include $c_{i-1}^j$.
    Finally, suppose $c_{i-1}^j$ is not selected.
    If $c_i^j$ is selected, then we have $i=i^*$,
    which means that $\pi_i$ satisfies $C_{j}$ by the choice of $i^*$.
  	Conversely, if we do not select $c_i^j$, then $i<i^*$,
    which means that $\pi_i$ does not satisfy $C_{j}$.
  	This shows that the selection $S$ satisfies $R_i^j$.

  	It remains to show \cref{item:validsel}.
    Let $v$ be a vertex in $G_\phi$.
    If $v$ is a non-portal vertex of 
    a $\{\sigma_{\sigMin},\rho_{\rhoMin}\}$-provider, then,
    by the definition of such providers (\cref{def:provider}),
    $v$ has a feasible number of selected neighbors
    (and, as it is not a portal, it does not have any neighbors outside of the respective provider).
    In the remaining cases, $v$ is either a vertex of some $J^j$,
    i.e., an $\encoder{A}$ of \rank $\numVertPG \numGroups$,
    or it is a vertex of the form $c_i^j$.
  	Suppose $v$ is in some $J^j$.
    By \cref{def:encoder}, $v$ has a feasible number of neighbors within $J^j$.
    If $v$ is in some set $Z_i^j$ then it does not have any selected neighbors other than the feasible number within $J^j$.
  	Finally, the same argument holds for $v=c_i^j$:
    it has a feasible number of selected neighbors within the
    attached $\{\sigma_{\sigMin},\rho_{\rhoMin}\}$-provider,
    but it does not receive any more selected neighbors.
\end{claimproof}

In the following, we show the reverse direction from the proof of correctness.

\begin{claim}
  If \srDomSetRel has a solution on $G_\phi$,
  then $\phi$ is satisfiable.
\end{claim}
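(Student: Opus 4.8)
The plan is to start from a hypothetical $(\sigma,\rho)$-set $S$ of the graph with relations $G_\phi$ and decode a satisfying assignment of $\phi$ from it. Two facts will be extracted: (i) for every group index $i\in\numb{\numGroups}$, the column-strings $a^j_{i,1}\cdots a^j_{i,\numVertPG}$ read off $S$ as in \cref{equ:encodedstates} are the \emph{same} for all columns $j$, so that they spell out a single partial assignment $\pi_i$ of the variable group $F_i$; and (ii) the global assignment $\pi=(\pi_1,\dots,\pi_{\numGroups})$ satisfies every clause $C_j$. Throughout I would use only three kinds of information about $S$: that every vertex of $G_\phi$ satisfies its own $\sigma$- or $\rho$-constraint; that every relational constraint is satisfied (the $R_i^j$, and the $\HWeq[1]{0}$- and $\HWeq[1]{1}$-constraints on $c_0^j$ and $c_{\numGroups}^j$); and, structurally, that each information vertex $w^j_{i,\ell}$ has all of its neighbours inside $\Bl^j_{i,\ell}\cup\Br^j_{i,\ell}$, with these blocks pairwise disjoint over $(i,\ell)$ — this is the part of \cref{def:encoder} that does not refer to managed solutions. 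In particular, the number of selected neighbours of $w^j_{i,\ell}$ is exactly $T^j_{i,\ell}+\inv T^j_{i,\ell}$.

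For (i), fix $i$ and a position $\ell$. Chaining the second bullet of $R_i^1,\dots,R_i^m$ shows $w^0_{i,\ell},\dots,w^m_{i,\ell}$ all have the same selection status; assume they are all selected (the unselected case is identical with $\rho,\rhoMax$ replacing $\sigma,\sigMax$). For each $j\in\numb{m}$, the constraint $R_i^j$ forces $\inv a^{j-1}_{i,\ell}=\inverse[\sigma,\rho]{a^j_{i,\ell}}$, hence $\inv T^{j-1}_{i,\ell}=\sigMax-T^j_{i,\ell}$, while the $\sigma$-constraint of $w^{j-1}_{i,\ell}$ gives $T^{j-1}_{i,\ell}+\inv T^{j-1}_{i,\ell}\in\sigma$, so $T^{j-1}_{i,\ell}+\sigMax-T^j_{i,\ell}\in\sigma$. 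If $\sigma$ is finite this value is $\le\sigMax$, so $T^{j-1}_{i,\ell}\le T^j_{i,\ell}$; if $\sigma$ is simple cofinite it is $\ge\sigMax$, so $\sigMax-T^{j-1}_{i,\ell}\le\sigMax-T^j_{i,\ell}$. I would therefore define a transformed count $\widehat{T}^j_{i,\ell}$ to be $T^j_{i,\ell}$ at positions whose governing set ($\sigma$ for selected, $\rho$ for unselected positions) is finite, and $\sigMax-T^j_{i,\ell}$ resp.\ $\rhoMax-T^j_{i,\ell}$ when that set is simple cofinite; this is exactly the count that the $\Encsvar$-preimage of $a^j_{i,\cdot}$ carries at position $\ell$, since $\Encs$ is obtained from $\Encsvar$ precisely by inverting the states of whichever of $\sigma,\rho$ is simple cofinite. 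The inequalities above give $\widehat{T}^0_{i,\ell}\le\widehat{T}^1_{i,\ell}\le\dots\le\widehat{T}^m_{i,\ell}$, and the relations $R_i^0,\dots,R_i^m$ force each $a^j_{i,1}\cdots a^j_{i,\numVertPG}$ into $\Encs$, whose $\Encsvar$-preimages all have the same weight $w$ by the defining property of $\Encsvar$; hence $\sum_{\ell}\widehat{T}^j_{i,\ell}=w$ for every $j$. A nondecreasing tuple of integers whose coordinate-sum is the same for all $j$ must be constant in $j$, so $\widehat{T}^\bullet_{i,\ell}$, and thus $T^\bullet_{i,\ell}$, and thus $a^\bullet_{i,\ell}$, is independent of $j$. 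Since each $a^j_{i,1}\cdots a^j_{i,\numVertPG}$ is required by $R_i^j$ to be an encoding $e(\pi_i)$ and $e$ is injective, all columns spell the same $\pi_i$.

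For (ii), fix a clause index $j$. The $\HWeq[1]{0}$- and $\HWeq[1]{1}$-constraints force $c_0^j\notin S$ and $c_{\numGroups}^j\in S$, so there is a smallest $i^*\in\numb{\numGroups}$ with $c_{i^*}^j\in S$; then $c_{i^*-1}^j\notin S$, and the fourth bullet of $R_{i^*}^j$ yields that $\pi_{i^*}$ satisfies $C_j$. Ranging over $j$ shows $\pi$ satisfies $\phi$. The step I expect to be the crux is the consistency argument (i): one must marry the \emph{direction} of the one-sided inequality — which flips between the finite and the simple-cofinite case — to the fact that $\Encs$ was tailor-made (via those very inversions, together with the earlier pigeonhole over weight classes) to be a single weight class, so that ``monotone with constant coordinate-sum'' collapses each per-step inequality into an equality along the whole row. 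Once this is in place, the $c$-vertex telescoping in (ii) is routine.
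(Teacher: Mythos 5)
Your proposal is correct and follows essentially the same route as the paper's own proof: you derive the one-sided monotonicity of the counts $T^j_{i,\ell}$ from the finite versus simple-cofinite cases, pass to the same transformed counts $\widehat{T}^j_{i,\ell}$ (the paper's Equation~\eqref{equ:TsToModifiedTs}), use that the relations only admit strings from $\Encs$, whose $\Encsvar$-preimages share a common weight, to force constancy along each row, and then extract clause satisfaction from the $\HWeq[1]{0}$/$\HWeq[1]{1}$ constraints and the transition index of the $c$-vertices exactly as the paper does. No gaps.
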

\begin{claimproof}
	Let $S$ be a selection of vertices from $G_\phi$ that is a solution of \srDomSetRel.
	Recall that $S$ determines the states $a^j_{i,\ell}$, $\inv a^j_{i,\ell}$ and numbers $T^j_{i,\ell}$, $\inv T^j_{i,\ell}$, as described in \Cref{equ:encodedstates}.

  Observe that, for all $i,\ell$, we have that
  $w_{i,\ell}^j$ is selected
  if and only if $w_{i,\ell}^{j-1}$ is selected.
  This follows from the definition of the relation $R_i^j$.
  First, suppose the vertices $w_{i,\ell}^0, \ldots, w_{i,\ell}^{m}$ are selected.

  Assume further that $\sigma$ is finite.
  As $S$ is a solution,
  $w^j_{i,\ell}$ can have at most $\sigMax$ selected neighbors.
  Hence, for each $j\in\fragment{0}{m}$, we have
  $T_{i,\ell}^j+\inv T_{i,\ell}^j \le \sigMax$.
  Moreover, by the definition of the relations $R_i^j$,
  for each $j\in\numb{m}$, we get
  $\inv T_{i,\ell}^{j-1}+ T_{i,\ell}^j = \sigMax$.
  Combining both constraints yields
  $T_{i,\ell}^0 \le \dots \le T_{i,\ell}^{m}$
  and $\inv T_{i,\ell}^0 \ge \dots \ge \inv T_{i,\ell}^{m}$.

  Now, assume $\sigma$ is simple cofinite.
  As $S$ is a solution,
  $w^j_{i,\ell}$ has at least $\sigMax$ selected neighbors.
  Hence, for each $j\in\fragment{0}{m}$, we have
  $T_{i,\ell}^j+\inv T_{i,\ell}^j \ge \sigMax$.
  As before, we get by the definition of the relations $R_i^j$ that,
  for each $j\in\numb{m}$,
  $\inv T_{i,\ell}^{j-1}+ T_{i,\ell}^j = \sigMax$.
  Combining both constraints yields
  $T_{i,\ell}^0 \ge \dots \ge T_{i,\ell}^{m}$
  and $\inv T_{i,\ell}^0 \le \dots \le \inv T_{i,\ell}^{m}$
  in the simple cofinite case.

  It is easy to check that we get the same constraints
  if the vertices $w^j_{i,\ell}$ are not selected.

  We define new values $\widehat T_{i,\ell}^j$.
  \begin{equation}
    \label{equ:TsToModifiedTs}
    \widehat T_{i,\ell}^j \deff
      \begin{cases}
        \sigMax - T_{i,\ell}^j & w^j_{i,\ell} \text{ is selected }\land
          \sigma \text{ is simple cofinite} \\
        \rhoMax - T_{i,\ell}^j & w^j_{i,\ell} \text{ is not selected }\land
          \rho \text{ is simple cofinite} \\
        T_{i,\ell}^j & \text{else}
      \end{cases}
  \end{equation}
  One can easily check that we get $\widehat T_{i,\ell}^0 \le \dots \le \widehat T_{i,\ell}^{m}$
  independently from $\sigma$ or $\rho$ being finite or cofinite.

  Recall that we only use encodings from $\Encs$
  (all other encodings are not accepted by the relations $R_i^j$),
  and hence, they have a one-to-one correspondence
  to the encodings in $\Encsvar$ which all have the same weight.
  Therefore,
  for all $i\in\numb{\numGroups}$ and $j\in\fragment{0}{m+1}$, we get:
  $\sum_{\ell=1}^{\numVertPG} \widehat T^j_{i,\ell} = w$.
  Now, assume that for some
  $i\in\numb{\numGroups}$, $j\in\numb{m}$, and $\ell\in\numb{\numVertPG}$,
  we have $\widehat T^j_{i,\ell} < \widehat T^{j+1}_{i,\ell}$.
  This implies that there is some $\ell'$
  such that $\widehat T^j_{i,\ell'} > \widehat T^{j+1}_{i,\ell'}$,
  which contradicts the above assumption.
  Hence, we obtain $\widehat T_{i,\ell}^0 = \dots = \widehat T_{i,\ell}^{m}$
  for each $i$ and $\ell$.
  Combined with \cref{equ:TsToModifiedTs}, this implies
  \begin{equation}
    \label{equ:Ts}
    T_{i,\ell}^0 = \dots = T_{i,\ell}^{m}
    \quad
    \text{and}
    \quad
    \inv T_{i,\ell}^0 = \dots = \inv T_{i,\ell}^{m}
    \quad
    \text{for each $i$ and $\ell$.}
  \end{equation}

  We define an assignment $\pi$ of $\phi$ as follows.
  For each $i$, $a^1_{i,1}\ldots a^1_{i,\numVertPG}\in A^\numVertPG$ is subject to $R_i^1$, and therefore, it is the encoding of a partial assignment $\pi_i$ of the group of variables $F_i$. Let $\pi$ be the assignment comprised of these partial assignments.

  It remains to verify that $\pi$ satisfies $\phi$.
  For $j\in \numb{m}$, we verify that $\pi$ satisfies the clause $C_j$.
  Consider the vertices $c_0^j,\ldots, c_{\numGroups}^j$. Since $c_0^j$ and $c_{\numGroups}^j$ are subject to a $\HWeq[1]{0}$- and $\HWeq[1]{1}$-relation, respectively, we have $c_0^j \notin S$ and $c_{\numGroups}^j \in S$. Hence, there is an $i\in \numb{\numGroups}$ for which $c_{i-1}^j$ is not selected, but $c_{i}^j$ is. As $c_{i-1}^j$ and $c_{i}^j$ are subject to $R_i^j$, it follows that $a^j_{i,1}\ldots a^j_{i,\numVertPG}$ encodes a partial assignment that satisfies the clause $C_j$.
  The equalities from \Cref{equ:Ts} imply $a^1_{i,1}\ldots a^1_{i,\numVertPG}=a^j_{i,1}\ldots a^j_{i,\numVertPG}$. Therefore, $\pi_i$ satisfies $C_j$, and consequently $\pi$ satisfies $C_j$.
\end{claimproof}

This finishes the correctness of the construction.
As a last step, we analyze the graph and its pathwidth.

\begin{claim}
  \label{clm:lower:boundForPathwidth}
  There is some function $f$ that depends only on $\numVertPG$, $\sigMax$, and $\rhoMax$ such that the graph $G_\phi$ has size at most
  $m \cdot \numGroups \cdot f(\numVertPG, \sigMax,\rhoMax)$ vertices,
  pathwidth
  at most
  $\numVertPG\numGroups +f(\numVertPG,\sigMax,\rhoMax)$,
  and arity at most $f(\numVertPG, \sigMax, \rhoMax)$.
\end{claim}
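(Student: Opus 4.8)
The plan is to verify the three assertions — size, pathwidth, arity — in turn; the first and last are short counting arguments, and the path decomposition is the substantive part.

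\textbf{Size and arity.} I would bound the number of vertices by summing over the kinds of objects in $G_\phi$: there are $\numGroups\numVertPG(m{+}1)$ vertices $w_{i,\ell}^j$ and $(\numGroups{+}1)m$ vertices $c_i^j$; each of the $m{+}1$ copies $J^j$ of the given \encoder{A} of \rank $\numGroups\numVertPG$ has at most $\numGroups\numVertPG(2b{+}1)$ vertices, where $b$ is the block-size constant of \cref{def:encoder}, which depends only on $\sigMax,\rhoMax$; and each of the $(\numGroups{+}1)m$ providers $Q_i^j$ has size bounded by a function of $\sigMax,\rhoMax$ alone (\cref{lem:fillinggadget,lem:happyGadget}). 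Summing gives a vertex count of the form $m\cdot\numGroups\cdot f(\numVertPG,\sigMax,\rhoMax)$. For the contribution of the relations to $\abs{G_\phi}$, note that every $R_i^j$ (including $R_i^0,R_i^{m+1}$) has arity $\abs{Z_i^j}\le 2+\numVertPG(2b{+}2)$, hence at most $2^{2+\numVertPG(2b+2)}$ tuples, the relations on the $Q_i^j$ have arity bounded in terms of $\sigMax,\rhoMax$, and the Hamming-weight relations on $c_0^j,c_{\numGroups}^j$ have arity $1$; there are $O(\numGroups m)$ relations in total. Enlarging $f$ to absorb these terms proves both $\abs{G_\phi}\le m\cdot\numGroups\cdot f(\numVertPG,\sigMax,\rhoMax)$ and $\ar(G_\phi)\le f(\numVertPG,\sigMax,\rhoMax)$.

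\textbf{Pathwidth.} The plan is to build a path decomposition that sweeps left to right through the columns $0,1,\dots,m$. When instantiating each manager $J^j$ I would identify its $\numGroups\numVertPG$ portals with the $w_{i,\ell}^j$ in lexicographic order of $(i,\ell)$; then by \cref{def:encoder} the blocks of $J^j$ lie on a path $\Bl_{1,1}^j-\dots-\Bl_{\numGroups,\numVertPG}^j-\Br_{\numGroups,\numVertPG}^j-\dots-\Br_{1,1}^j$ with edges only between consecutive blocks and $w_{i,\ell}^j$ adjacent only to $\Bl_{i,\ell}^j\cup\Br_{i,\ell}^j$, and (as one checks from the explicit managers of \cref{sec:manager}) every relational constraint of $J^j$ has scope inside two consecutive blocks together with the incident portal. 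The decomposition follows these block-paths in order, but the transition from $J^{j-1}$ to $J^j$ is performed one row at a time, in decreasing row order $i=\numGroups,\dots,1$: for a fixed $i$ one finishes the row-$i$ piece of $J^{j-1}$'s $\Br$-sweep (the blocks $\Br_{i,1}^{j-1},\dots,\Br_{i,\numVertPG}^{j-1}$, which are consecutive on the block-path), opens the row-$i$ head of $J^j$ (the blocks $\Bl_{i,1}^j,\dots,\Bl_{i,\numVertPG}^j$ and the portals $w_{i,1}^j,\dots,w_{i,\numVertPG}^j$), introduces $c_{i-1}^j,c_i^j$ and the provider $Q_i^j$, places $Z_i^j$ in a single bag to realize $R_i^j$ (when $j=1$ one also realizes $R_i^0$ here, and the relations $R_i^{m+1}$ are realized analogously while sweeping $J^m$'s $\Br$-blocks), and then forgets $\Br_{i,\bullet}^{j-1}$, $w_{i,\bullet}^{j-1}$, $c_i^j$ and all but one of the $\Bl_{i,\bullet}^j$. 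One keeps alive throughout a transition only a prefix $\{w_{1,\bullet}^j,\dots,w_{i,\bullet}^j\}$ of column $j$'s portals and a complementary suffix $\{w_{i,\bullet}^{j-1},\dots,w_{\numGroups,\bullet}^{j-1}\}$ of column $(j{-}1)$'s — at most $(\numGroups{+}1)\numVertPG$ information vertices — together with $O(\numVertPG b)$ block vertices (one row's worth of $\Bl$- and $\Br$-blocks plus $O(1)$ continuity blocks, notably the fold block $\Bl_{\numGroups,\numVertPG}^j$, whose incident fold edge spans two consecutive transitions) and $O(\sigMax\rhoMax)$ vertices from the $c$'s and their providers; between transitions only the $\numGroups\numVertPG$ portals of one column plus two consecutive manager blocks are alive. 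Thus every bag has at most $\numGroups\numVertPG+O(\numVertPG b+\sigMax\rhoMax)$ vertices, i.e.\ $\numGroups\numVertPG+f(\numVertPG,\sigMax,\rhoMax)$ after a final enlargement of $f$.

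\textbf{The main obstacle.} The hard part is the bookkeeping in the previous paragraph: checking that this interleaved sweep still covers every edge and every relational-constraint scope while the alive set stays within the claimed bound. Two structural facts make it go through and should be stated explicitly: a bridge relation $R_i^j$ refers only to row $i$'s blocks of $J^{j-1}$ and of $J^j$, so processing rows one at a time never forces more than $O(\numVertPG)$ (rather than $\Theta(\numGroups\numVertPG)$) block vertices into a bag; and the manager's own constraints are local, so they never enlarge a bag beyond the $O(b)$-per-block budget. The remaining delicate point is the monotone handoff of information vertices at each column boundary — opening column $j$'s portals in the same (decreasing-row) order in which column $(j{-}1)$'s are closed — which is exactly what caps the information-vertex count at $(\numGroups{+}1)\numVertPG$ instead of $2\numGroups\numVertPG$, so that the extra $\numVertPG$ can be absorbed into $f$.
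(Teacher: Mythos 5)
Your proposal is correct and is essentially the paper's own argument: the size and arity counts are the same, and your explicit column-by-column, row-by-row sweep is just the path-decomposition formulation of the paper's node-search strategy (stages over columns, rounds over rows), with the same key invariant that only about one column's worth plus one row's worth of information vertices are alive at a time, and the same device of keeping the fold block of a manager alive across an entire column transition (the paper achieves this by retaining $Y^{j-1}_{\numGroups}$ throughout stage $j$). The only slip is cosmetic: since you process rows in decreasing order, the alive portals of column $j$ form a \emph{suffix} and those of column $j-1$ a \emph{prefix} (you wrote the reverse), which does not affect the $(\numGroups+1)\numVertPG$ count or the final bound.
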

\begin{claimproof}
	Every vertex in $G_\phi$ is part of (at least) one of the following:
	\begin{itemize}
		\item an $\encoder{A}$ of rank $\numVertPG \numGroups$
    (and there are $m+1$ of these),
		\item a $\{\sigma_{\sigMin},\rho_{\rhoMin}\}$-provider $Q_i^j$
    (and there are $(\numGroups +1)m$ of these)
	\end{itemize}
	We analyze the number of vertices in these components.
	\begin{itemize}
		\item
    By \Cref{def:encoder}, the size of an $\encoder{A}$ of \rank $\numVertPG \numGroups$
    is of the form  $\numVertPG \numGroups \cdot b$,
    where $b$ is an upper bound on the size of the blocks.
    Thus, $b$ depends only on $\sigMax$ and $\rhoMax$.
		\item
    The sizes of the $\{\sigma_{\sigMin},\rho_{\rhoMin}\}$-providers $Q_i^j$
    depend only on $\sigMin$ and $\rhoMin$
    (see the construction in \cref{lem:fillinggadget}).
	\end{itemize}
		Recall that the arity of the graph is defined
    as the largest degree of any relation in the graph.
    Hence, the arity is upper bounded by the size of the sets $Z^j_i$
    or some function depending on the size of the blocks from the manager.
		Recall that
		\[ Z_i^j= \{c_{i-1}^j,c_i^j\}
    \cup \bigcup_{\ell \in\numb{\numVertPG}} \bigl(
      w_{i,\ell}^{j-1} \cup N_{\Br}(w_{i,\ell}^{j-1})
      \cup w_{i,\ell}^j \cup N_{\Bl}(w_{i,\ell}^j)
    \bigr).\]
		The size of the sets $N_{\Br}(w_{i,\ell}^{j-1})$ and $N_{\Bl}(w_{i,\ell}^j)$
    is bounded by that of the blocks of
    the $\encoder{A}$ of \rank $\numVertPG \numGroups$.
    So, once again, their size depends only on $\sigMax$ and $\rhoMax$.
    So, there is some function $f''$
    such that, for each $i$ and $j$, we have
    $\abs{Z^j_i}\le g\cdot f''(\sigMax, \rhoMax)$.
    This proves the desired bound on the arity of $G_\phi$.

    So for the bound on the size of $G_\phi$ it suffices to bound the number of vertices plus the number of relations. We have already bounded the number vertices. For the number of relations, note that
    there are $2m$ relations of the form $\HWeq[1]{0}$ or $\HWeq[1]{1}$,
    $\numGroups (m+2)$ relations of the form $R_i^j$,
    and $(\numGroups+1)m$ relations attached to $Q_i^j$ with $c_i^j$.
	This proves the bound on the size of $G_\phi$.

  We use a node search strategy to bound the pathwidth of $G_\phi$ (see~\cite[Section 7.5]{CyganFKLMPPS15}).
  For each $i\in \numb{\numGroups}$ and $j\in \numb{m}$,
  we define a set $Y^j_i$ that contains the following vertices:
  \begin{itemize}
  	\item
    The vertices $Z^j_i$ that are subject to relation $R_i^j$.
    \item
    The vertices in $\Br^{j-1}_{i,\ell}$ and $\Bl^{j}_{i,\ell}$
    for all $\ell \in \numb{\numVertPG}$.
    (Some of these vertices are also contained in $Z^j_i$.)
  	\item The vertices of the $\{\sigma_{\sigMin},\rho_{\rhoMin}\}$-providers
    $Q_{i-1}^j$ and $Q_i^j$ attached to $c^j_{i-1}$ and $c^j_i$.
  \end{itemize}
  We similarly define, for each $i \in \numb{\numGroups}$,
  the sets $Y_i^0$ and $Y_i^{m+1}$.
	To simplify notation, we consider $Y^j_i$ to be the empty set
  if $i\notin \numb{\numGroups}$ or $j\notin \fragment{0}{m+1}$.

	We now describe $m+2$ stages of selecting vertices as positions for searchers,
  where each stage consists of $\numGroups$ rounds.
	Let $i\in \numb{\numGroups}$.
	In the $i$th round of the $j$th stage, where $j\in \fragment{0}{m+1}$,
  the selected vertices are $Y^j_i \cup Y^j_{i+1} \cup Y^{j-1}_{\numGroups}$
  together with,
  for each $z\in \fragment{1}{i-1}$ and $\ell\in \numb{\numVertPG}$,
  the vertex $w^{j}_{z,\ell}$ (if it exists),
  and for each $z\in \fragment{i+1}{\numGroups}$ and $\ell\in \numb{\numVertPG}$,
  the vertex $w^{j-1}_{z,\ell}$ (if it exists).

  Using \cref{fig:lower:construction},
  it is straightforward to verify that selecting vertices
  according to the described stages cleans the graph as required.
  Note that we select the set $Y^{j-1}_{\numGroups}$
  in all rounds of the $j$th stage
  to prevent ``recontamination'' via the edges
  between the blocks $\Bl^{j-1}_{\numGroups,\numVertPG}$
  and $\Br^{j-1}_{\numGroups,\numVertPG}$.
  Moreover, for each relational constraint, there is some stage in some round
  where the entire scope of the constraint is covered
  (that is, all vertices int he scope are in the same bag of the decomposition).

  Now, consider the number of searchers/the selected vertices in each stage.
  By the previously specified bounds on the size of
  blocks, and $\{\sigma_{\sigMin},\rho_{\rhoMin}\}$-providers,
  there is some function $f$
  that depends only on $\numVertPG$, $\sigMax$, and $\rhoMax$,
  such that the size of $Y^j_i$ is bounded from above by
  $f(\numVertPG, \sigMax, \rhoMax)$ for each $i$ and $j$.
  Hence, at each stage we select at most
  $\numVertPG \numGroups + \Oh(1) \cdot f(\numVertPG, \sigMax, \rhoMax)$ vertices.
  This is an upper bound on the node search number of $G_\phi$,
  and consequently, an upper bound on its pathwidth.
\end{claimproof}

The following observation follows directly from the construction
and the previous proofs.
\begin{observation}
  \label{obs:lower:parsimoniousAndFull}
  There is a one-to-one correspondence between
  satisfying assignments for $\phi$ and solutions of \srDomSetRel on $G_\phi$.
  In all solutions for $G_\phi$, each vertex $w_{i,\ell}^j$
  has exactly $\sigMax$ neighbors if it is selected,
  and exactly $\rhoMax$ neighbors if it is not selected.
\end{observation}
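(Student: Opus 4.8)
The two claims already established give the existence equivalence between satisfiability of $\phi$ and existence of a solution of \srDomSetRel on $G_\phi$, so the plan is to upgrade this to a bijection and then to read the degree property off the constraints. I would first name the two maps: let $\Phi$ be the assignment-to-solution map $\pi\mapsto S$ built in the proof of the first claim, and let $\Psi$ be the solution-to-assignment map $S\mapsto\pi$ of the second claim, where $\pi_i$ is the partial assignment of the group $F_i$ with $e(\pi_i)=a^1_{i,1}\cdots a^1_{i,\numVertPG}$. The first step is to check that $\Phi$ is single-valued: its construction makes only three kinds of choices, all of which are in fact forced. (i) Inside each copy $J^j$ of the \encoder{A} of \rank $\numGroups\numVertPG$, the selection is the \emph{unique} $(\sigma,\rho)$-set $S_x$ of $J^j$ managing the state-tuple $x$ obtained by concatenating $e(\pi_1),\dots,e(\pi_{\numGroups})$; uniqueness is part of \cref{def:encoder} (it is enforced by the relational constraints built into the manager). (ii) The selection status of the vertices $c_i^j$ is fixed to ``$c_i^j\in S$ exactly when $i\ge i^{*}_{j}$'' with $i^{*}_{j}\coloneqq\min\{i:\pi_i\text{ satisfies }C_j\}$, which is well defined because $\pi$ satisfies $C_j$. (iii) Inside each $\{\sigma_{\sigMin},\rho_{\rhoMin}\}$-provider $Q_i^j$ there is exactly one partial solution compatible with the selection status of $c_i^j$, by the relational constraint attached to $Q_i^j$. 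Hence $\Phi$ is a well-defined function from satisfying assignments to solutions.

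Next I would prove that $\Phi$ is a bijection. For injectivity: if $\pi\neq\pi'$, then they differ on some group $F_i$, hence $e(\pi_i)\neq e(\pi_i')$ since $e$ is injective, hence the states of the portals $w_{i,1}^0,\dots,w_{i,\numVertPG}^0$ differ; since the state of a portal determines its selection status and its number of selected neighbours in the left block, $\Phi(\pi)$ and $\Phi(\pi')$ already differ on $J^0$. Reading off $a^1_{i,1}\cdots a^1_{i,\numVertPG}=e(\pi_i)$ from $\Phi(\pi)$ shows $\Psi\circ\Phi=\mathrm{id}$, which re-proves injectivity and shows $\Psi$ is onto the satisfying assignments. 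For surjectivity I would show $\Phi(\Psi(S))=S$ for every solution $S$: set $\pi\coloneqq\Psi(S)$ and let $x$ be the state-tuple encoding $\pi$. Using equation~\eqref{equ:Ts} together with the complementation clause of the relations $R_i^j$, the full encoder-state of every portal $w_{i,\ell}^j$ under $S$ (its selection status and its numbers $T_{i,\ell}^j$, $\inv T_{i,\ell}^j$ of selected neighbours in the left and right blocks) is independent of $j$ and equals the state prescribed by $x$; since all neighbours of $w_{i,\ell}^j$ lie in $\Bl_{i,\ell}^j\cup\Br_{i,\ell}^j$ by \cref{def:encoder}, the set $S\cap V(J^j)$ is a $(\sigma,\rho)$-set of the manager $J^j$ realising these portal states, and hence equals $S_x$ by the uniqueness in \cref{def:encoder}. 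Moreover, the $\HWeq[1]{0}$- and $\HWeq[1]{1}$-relations together with the relations $R_i^j$ force the sequences $c_0^j,\dots,c_{\numGroups}^j$ into the monotone pattern with switch index $i^{*}_{j}$, and the relation on $Q_i^j$ then fixes $S\cap V(Q_i^j)$. So $S$ agrees with $\Phi(\pi)$ on every gadget, i.e.\ $S=\Phi(\pi)$, and $\Phi$ is a bijection.

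For the second assertion I would fix a solution $S$ and a portal $w_{i,\ell}^j$. By \cref{def:encoder}, $N(w_{i,\ell}^j)\subseteq\Bl_{i,\ell}^j\cup\Br_{i,\ell}^j$, so $|N(w_{i,\ell}^j)\cap S|=T_{i,\ell}^j+\inv T_{i,\ell}^j$. The selection status of $w_{i,\ell}^j$ is constant over $j$ (forced by the complementation clause of the relations $R_i^j$), and equation~\eqref{equ:Ts} together with that clause (applied to a neighbouring column) yields $\inv T_{i,\ell}^j=\sigMax-T_{i,\ell}^j$ if $w_{i,\ell}^j\in S$ and $\inv T_{i,\ell}^j=\rhoMax-T_{i,\ell}^j$ if $w_{i,\ell}^j\notin S$; at the extreme columns $j\in\{0,m\}$ one argues the same way using $R_i^0$, $R_i^{m+1}$ and the auxiliary states $a_{i,\ell}^0$, $a_{i,\ell}^{m+1}$. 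Thus $w_{i,\ell}^j$ has exactly $\sigMax$ selected neighbours when selected and exactly $\rhoMax$ otherwise.

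I expect the main obstacle to be the first half of the surjectivity argument — showing that in an \emph{arbitrary} solution each manager copy is forced to realise the canonical set $S_x$ for one state-tuple $x$ that is moreover the same across all columns. This is exactly where one must combine the uniqueness guarantee inside \cref{def:encoder} (the internal relational constraints of the manager), equation~\eqref{equ:Ts} for column-independence, and the complementation clauses of the $R_i^j$ including the boundary relations $R_i^0$ and $R_i^{m+1}$; once these are in place, the claims that the $c$-chains and the $Q_i^j$-selections are determined by $\pi$ are routine consequences of the remaining constraints.
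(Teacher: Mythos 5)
Your proposal is correct and follows essentially the same route as the paper: the paper states that the observation ``follows directly from the construction and the previous proofs,'' and your argument is precisely a careful unfolding of those ingredients -- the uniqueness guarantee in \cref{def:encoder}, the parsimonious relational constraints on the $Q_i^j$-providers, the forced $c$-chain pattern, and \cref{equ:Ts} with the complementation clauses of the $R_i^j$. No gaps; you have simply made explicit the details the paper leaves implicit.
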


\subsubsection*{Putting the Pieces together}
\label{sec:high-level:decision:bound}
\newcommand{\states}{A}

	We now use the previous construction to solve SAT in time exponentially faster than $2^n$ (contradicting the SETH)
	assuming an algorithm for \srDomSetRel is given that beats the lower bound stated in \cref{lem:lowerBoundWhenHavingSuitableGadget}.
	We fix an $\eps>0$.
  To use the construction from the previous sections,
  it remains to choose the values of $\sizeGroup$ and $\numVertPG$.
  For ease of notation,
  we define $\epsilon' = \log_{\abs{\states}} (\abs{\states}-\epsilon) < 1$.
  We choose some $\alpha >1$
  such that $\alpha \cdot \epsilon' \le \delta' = \log (2-\delta) <1$
  for some $\delta>0$.
  We choose $\numVertPG$ sufficiently large such that it satisfies
  $\numVertPG \log{\abs{\states}} \le \alpha \floor{\numVertPG \log{\abs{\states}} - \log(\numVertPG\cdot\max \states+1)}$.
  Finally, set $\sizeGroup = \floor{\numVertPG \log{\abs{\states}} - \log(\numVertPG\cdot\max \states+1)}$.
  Observe that we can actually encode all partial assignments of one group.

  Using these parameters, we can construct a \srDomSetRel instance $G_\phi$
  based on $\phi$.
  For the size bound and pathwidth, observe
  that $\abs{\states} \le \abs{\allStates}$,
  and thus, $\abs{\states}$ can be bounded in terms of $\sigMax$ and $\rhoMax$.
  Moreover,
  $\numVertPG$ only depends on $\epsilon$, $\delta$, and $\abs{\states}$.
  Since $\sigma,\rho$ are fixed,
  any term only depending on $\epsilon$, $\delta$, $\sigMax$, and $\rhoMax$
  can be treated as a constant.
  Based on this,
	from \cref{clm:lower:boundForPathwidth} it follows that
	there is some fixed function $f$ such that $G_\phi$ has pathwidth at most
  $\numGroups\numVertPG+f(\numVertPG,\sigMax,\rhoMax) = \numGroups\numVertPG+\Oh(1)$,
  where $\numGroups = \ceil{ \frac{n}{\sizeGroup} }$,
	and that a path decomposition of this size can be computed efficiently.
  Using \cref{def:lower:graphWithRelations}, it is straightforward to check that $G_\phi$ has size in
  \[
   \Oh(
    \ceil{\tfrac{n}{\sizeGroup}} \cdot m \cdot f(\numVertPG,\sigMax,\rhoMax)
    )
   =
   \Oh(n \cdot m \cdot f'(\numVertPG, \sigMax,\rhoMax))
   =
   \Oh(n \cdot m)
  \]
  and arity at most $f'(\numVertPG, \sigMax, \rhoMax)$.
  We choose the constant $d$ from the lemma statement as the arity of $G_\phi$
  which is a constant by the above arguments.

  Now assume there is an algorithm solving \srDomSetRel on instances of size $N$ and degree at most $d$
  given a path decomposition of width at most $k$ in time
  $(\abs{\states} - \eps)^{k+\Oh(1)} \cdot N^{\Oh(1)}$.
  We execute this algorithm on the graph $G_\phi$ which we just constructed:
  \begin{align*}
    (\abs{\states} - \epsilon)^{\pw(G_\phi)+\Oh(1)} \cdot \abs{G_\phi}^{\Oh(1)}
    \le & (\abs{\states}-\epsilon)^
      {\numGroups\numVertPG+\Oh(1)}
      \cdot (n \cdot m)^{\Oh(1)} \\
    \le & (\abs{\states}-\epsilon)^
      {\ceil{ \frac{n}{\sizeGroup}} \numVertPG+\Oh(1)}
      \cdot (n \cdot m)^{\Oh(1)} \\
    \le & (\abs{\states}-\epsilon)^
      {\frac{n}{\sizeGroup} \numVertPG+\numVertPG+\Oh(1)}
      \cdot (n \cdot m)^{\Oh(1)} \\
    \intertext{
    By the same argument as above,
    we can ignore the term of \(\numVertPG+\Oh(1)\) in the exponent
    as it only contributes a constant factor to the runtime.
    }
    \le & (\abs{\states}-\epsilon)^
      {\frac{n}{\sizeGroup} \cdot \numVertPG}
      \cdot (n \cdot m)^{\Oh(1)} \\
    \le & 2^
      {\log{\abs{\states}} \cdot \epsilon'
        \cdot \frac{n}{\sizeGroup} \cdot \numVertPG}
      \cdot (n \cdot m)^{\Oh(1)} \\
    \le & 2^
      {\epsilon'
        \cdot \frac{n} {\floor{\numVertPG \log{\abs{\states}} - \log(\numVertPG\cdot\max \states+1)}}
        \cdot \numVertPG \cdot \log{\abs{\states}}}
      \cdot (n \cdot m)^{\Oh(1)} \\
    \intertext{
    By our choice of \(\numVertPG\) we get:
    }
    \le & 2^{\epsilon' \alpha n} \cdot (n \cdot m)^{\Oh(1)}
    \le   2^{\delta' n} \cdot (n \cdot m)^{\Oh(1)}
    =   (2-\delta)^{n} \cdot (n \cdot m)^{\Oh(1)}.
  \end{align*}
  This directly contradicts SETH and finishes the proof of \cref{lem:lowerBoundWhenHavingSuitableGadget}.

\subsection{\boldmath Proof of
\texorpdfstring{\cref{lem:count:lowerBoundWhenHavingSuitableGadget}}
{Lemma \ref{lem:count:lowerBoundWhenHavingSuitableGadget}}:
Lower Bound for \texorpdfstring{\srCountDomSetRel}
{Counting (Sigma,Rho)-Dominating Set with Relations}}
\label{sec:high-level:counting}

Now, we move to the proof of the intermediate lower bound for the counting version.
By \cref{obs:lower:parsimoniousAndFull}, the previous reduction is parsimonious.
Hence, the lower bound for the decision version from
\cref{lem:lowerBoundWhenHavingSuitableGadget} directly transfers to the counting version.
\begin{corollary}
  \label{lem:count:intermediateLowerBoundWhenHavingSuitableGadget}
  Let $\sigma,\rho \subseteq \NN$ be two fixed and non-empty sets
  which are \emph{finite} or \emph{simple cofinite}.
  Suppose there is an $A \subseteq \allStates$
  that is closed under the inverse with respect to $\sigma,\rho$
  such that there is an $\encoder{A}$.

  For all $\eps>0$, there is a constant $d$ such that
  \srCountDomSetRel on instances of size $n$ and arity at most $d$
  cannot be solved in time
  $(\abs{A} - \epsilon)^{k+\Oh(1)} \cdot n^{\Oh(1)}$,
  even if the input is given with a path decomposition of width $k$,
  unless \#SETH fails.
\end{corollary}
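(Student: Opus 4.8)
The plan is to observe that the SAT-to-\srDomSetRel reduction built in the proof of \cref{lem:lowerBoundWhenHavingSuitableGadget} is already \emph{parsimonious}, and hence transfers verbatim to the counting setting with \#SAT in place of SAT. Since the hypotheses of \cref{lem:lowerBoundWhenHavingSuitableGadget}---$\sigma,\rho$ non-empty and finite or simple cofinite, $A$ closed under the inverse with respect to $\sigma,\rho$, and the existence of an \encoder{A}---are precisely those of the corollary, the entire construction of $G_\phi$ is available to us. First I would invoke \cref{obs:lower:parsimoniousAndFull}: for the instance $G_\phi$ associated with a CNF formula $\phi$, there is a one-to-one correspondence between the satisfying assignments of $\phi$ and the $(\sigma,\rho)$-sets of $G_\phi$ that respect all relational constraints. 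In other words, $\phi \mapsto G_\phi$ is a parsimonious polynomial-time reduction from \#SAT to \srCountDomSetRel, in which the number counted by the target instance equals the number of satisfying assignments of $\phi$.

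Next I would re-run the parameter choice and final runtime computation from the proof of \cref{lem:lowerBoundWhenHavingSuitableGadget} unchanged. Fix $\epsilon > 0$, set $\epsilon' \coloneqq \log_{|A|}(|A| - \epsilon) < 1$, pick $\alpha > 1$ and $\delta > 0$ with $\alpha\epsilon' \le \log(2-\delta) < 1$, choose the number $g$ of information vertices per group large enough that $g\log|A| \le \alpha \lfloor g\log|A| - \log(g \cdot \max A + 1)\rfloor$, and set the group size $q \coloneqq \lfloor g\log|A| - \log(g \cdot \max A + 1)\rfloor$, so that $2^q \le |A|^g/(g \cdot \max A + 1)$ and all partial assignments of a group can be encoded. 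By \cref{clm:lower:boundForPathwidth}, the resulting graph $G_\phi$ has size $\Oh(nm)$, pathwidth $\lceil n/q\rceil \cdot g + \Oh(1)$, and arity bounded by a constant $d$ depending only on $g$, $\sigMax$, $\rhoMax$; a path decomposition of that width is computable in polynomial time. I would take this $d$ as the constant claimed in the statement.

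Then, assuming an algorithm that computes the number of $(\sigma,\rho)$-sets of a graph with relations of size $N$ and arity at most $d$, given a path decomposition of width $k$, in time $(|A| - \epsilon)^{k + \Oh(1)}\cdot N^{\Oh(1)}$, I would run it on $G_\phi$. By \cref{obs:lower:parsimoniousAndFull} the output is exactly the number of satisfying assignments of $\phi$, and, by the identical chain of estimates used in the decision case, the running time is at most
\[
  (|A| - \epsilon)^{\lceil n/q\rceil \cdot g + \Oh(1)} \cdot (nm)^{\Oh(1)}
  \le 2^{\epsilon' \alpha n} \cdot (nm)^{\Oh(1)}
  \le (2 - \delta)^{n} \cdot (nm)^{\Oh(1)}.
\]
This counts the satisfying assignments of an arbitrary CNF formula in time $(2-\delta)^n \cdot \mathrm{poly}$, contradicting \#SETH.

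The only point requiring genuine care---the ``main obstacle'', although a mild one here---is confirming that \cref{obs:lower:parsimoniousAndFull} really yields a \emph{bijection}: distinct assignments of $\phi$ must give distinct solutions of $G_\phi$, and every solution of $G_\phi$ must arise from exactly one assignment. Both directions were essentially already arranged in the construction behind \cref{lem:lowerBoundWhenHavingSuitableGadget}: the \encoder{A} has a \emph{unique} $(\sigma,\rho)$-set for each managed string, the providers $Q_i^j$ carry relations forcing a unique selection per state, and the relations $R_i^j$ together with the $\HWeq[1]{0}$- and $\HWeq[1]{1}$-constraints pin down the vertices $c_i^j$ from the encoded assignment. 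Hence verifying parsimony is pure bookkeeping, and all the remaining ingredients---the parameter choices, the pathwidth and arity bounds, and the runtime calculation---can be quoted directly from the proof of \cref{lem:lowerBoundWhenHavingSuitableGadget}.
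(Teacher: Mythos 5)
Your proposal is correct and follows exactly the paper's route: the paper's proof of this corollary is simply the observation that the SAT-to-\srDomSetRel construction is parsimonious (\cref{obs:lower:parsimoniousAndFull}), so the decision lower bound of \cref{lem:lowerBoundWhenHavingSuitableGadget} transfers verbatim to the counting setting with \#SAT and \#SETH in place of SAT and SETH. Your additional bookkeeping (re-running the parameter choices, pathwidth/arity bounds, and runtime estimate, and checking the bijection via the uniqueness guarantees of the manager and the relations on the providers) is exactly what the paper implicitly relies on, so there is nothing to add.
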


In what follows, we extend this result to \srCountDomSetRel,
where we allow the sets $\sigma$ and $\rho$ to also be arbitrary cofinite sets
and not only \emph{simple cofinite} sets.
See \cref{fig:count:simpleCofToCof} for an illustration of the process.

\begin{figure}[t]
\centering
\begin{tikzpicture}[
  scale=.615,
  transform shape,
]
  \input{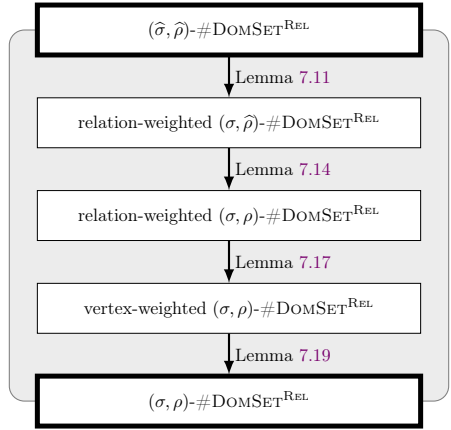}
  \def\x{8cm}
  \def\xGap{1cm}
  \def\y{-2 cm}


  \tikzset{redTur/.style={red}}
  \tikzset{%
  part/.style = { 
    draw=gray,
    fill=gray!15,
    rounded corners=2mm,
  },
  lowerB/.style = { 
  },
  important/.style = { 
    line width=.66mm,
  },
  case/.style = { 
  },
  res/.style = { 
    anchor=west,
    yshift=-0.5*\y,
  },
  }


  \node[problem, lowerB, noBreak, important] (domSetRelFull) at (0,-1*\y)
    {\CountDomSetRel{\widehat\sigma}{\widehat\rho}};

  \node[problem,noBreak] (relWeiDomSetRelSigma) at (0,0*\y)
    {relation-weighted \CountDomSetRel{\sigma}{\widehat\rho}};
  \draw[red] (domSetRelFull) -- (relWeiDomSetRelSigma);
  \node[res] at (relWeiDomSetRelSigma) {\cref{lem:count:makeSigmaCofinite}};

  \node[problem,noBreak] (relWeiDomSetRel) at (0,1*\y)
    {relation-weighted \CountDomSetRel{\sigma}{\rho}};
  \draw[red] (relWeiDomSetRelSigma) -- (relWeiDomSetRel);
  \node[res] at (relWeiDomSetRel) {\cref{lem:count:makeRhoCofinite}};

  \node[problem,noBreak] (verWeiDomSetRel) at (0,2*\y)
    {vertex-weighted \CountDomSetRel{\sigma}{\rho}};
  \draw[red] (relWeiDomSetRel) -- (verWeiDomSetRel);
  \node[res] at (verWeiDomSetRel)
    {\cref{lem:count:relationWeightedToVertexWeighted}};

  \node[problem, lowerB, noBreak, important] (domSetRel) at (0,3*\y)
    {\CountDomSetRel{\sigma}{\rho}};
  \draw[redTur] (verWeiDomSetRel) -- (domSetRel);
  \node[res] at (domSetRel) {\cref{lem:count:vertexWeightedToUnweighted}};

  \begin{scope}[on background layer]
    \draw[part]
      (-.5*\x-0.7cm, -1*\y) rectangle (0.5*\x+0.7cm, 3*\y);
  \end{scope}
\end{tikzpicture}
\caption{
An overview of how the reductions in \cref{sec:high-level:counting}
are combined to obtain the lower bound
from \cref{lem:count:lowerBoundWhenHavingSuitableGadget}
when $\sigma$ and $\rho$ are cofinite sets that are not simple cofinite.
The other cases where at least one set is finite or simple cofinite
are easier as some steps of the reduction can be omitted.
The sets $\widehat\sigma$ and $\widehat\rho$ are simple cofinite sets
with
$\max(\NN\setminus \sigma) = \max(\NN\setminus \widehat\sigma)$
and $\max(\NN\setminus \rho) = \max(\NN\setminus \widehat\rho)$.
}
\label{fig:count:simpleCofToCof}
\end{figure}

It is instructive to generalize the definition of graphs with relations to include weights.
\begin{definition}
  A tuple $G=(V, E, \CC)$ is a \emph{graph with weighted relations}
  if $G$ is defined as in \cref{def:lower:graphWithRelations}
  except that, each constraint $C\in \CC$
  is now a function $f_C\from 2^{\scope(C)} \to \SetQ$.

  The definitions of pathwidth and arity of a graph with weighted relations
  transfer directly from \cref{def:lower:graphWithRelations}.
\end{definition}

We use the \emph{relation-weighted} version of \srCountDomSetRel, which is defined as follows.

\begin{definition}
  [$q$-Relation-weighted \srCountDomSetRel]
  Let $q$ be a non-negative integer.
  In the \emph{$q$-relation-weighted \srCountDomSetRel} problem,
  we are given a graph with weighted relations $G=(V, E, \CC)$
  such that $\abs{\{ f_C(U) \mid C \in \CC, U \subseteq \scope(C) \} } \le q$,
  that is, at most $q$ different weights are used by the constraints of $G$.

  Let $\Omega$ be the set of $(\sigma,\rho)$-sets of $G$. The task is to compute
  \[
    \sum_{U\in \Omega}
      \prod_{C\in\CC}
        f_C(U \cap \scope(C))
    .
  \]
\end{definition}
Note that, for $f_C \from 2^{\scope(C)} \to \{0,1\}$, this corresponds to the unweighted \srCountDomSetRel problem.
Therefore, unweighted \srCountDomSetRel trivially reduces to
$2$-relation weighted \srCountDomSetRel.
As we allow negative weights,
it is possible that solutions cancel each other out.
We exploit this fact to remove unwanted solutions from the sum.

In the following, we reuse some of the providers
which we have seen in the previous parts about the decision version.
As we are now dealing with the counting version,
we have to be more careful about the number of (partial) solutions of such providers.
We are especially interested in $L$-providers which
do not have any ``unwanted'' (partial) solutions that might change the output.
We make this formal by the following definition.
\begin{definition}
  [Parsimonious Provider]
  Let $G$ be a graph (possibly with relations) together with a set of portals $U\subseteq V(G)$.
  For a language $L \subseteq \allStatesExt$,
  we say that $(G,U)$ is a \emph{parsimonious $L$-provider}
  if $(G,U)$ is an $L$-realizer (as defined in \cref{def:providersWithRelations})
  and, for all $x \in L$,
  there is a unique partial solution that witnesses $x$.
\end{definition}
Observe that the requirement of being an $L$-\emph{realizer}
rules out the possibility of having (partial) solutions
which witness an element not contained in $L$.
By \cref{lem:fillinggadget}, there is a $\{\sigma_s, \rho_r\}$-provider
for all $s \in \sigma$ and $r\in \rho$.
We extend this now to a parsimonious provider.
\begin{lemma}\label{lem:count:parsFillingGadget}
  For any $s \in \sigma$ and $r \in \rho$,
  there is a parsimonious $\{\sigma_s, \rho_r\}$-provider.
\end{lemma}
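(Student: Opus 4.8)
The plan is to take one of the non-parsimonious $\{\sigma_s,\rho_r\}$-providers built earlier and turn it into a parsimonious one by pinning down the selection with a single relational constraint. Concretely, I would first obtain a graph $G_0$ with a single portal $u$ together with two \emph{disjoint} $(\sigma,\rho)$-sets $X$ and $Y$ of $G_0$ such that $X\cup Y=V(G_0)$, $u\in X$ with $\abs{N(u)\cap X}=s$, and $u\notin Y$ with $\abs{N(u)\cap Y}=r$. For $r\ge 1$ this is exactly the gadget from \cref{lem:happyGadget}; for $r=0$ one instead takes $G_0$ to be a clique on $s+1$ vertices (as in the proof of \cref{lem:fillinggadget}), with $X\deff V(G_0)$ and $Y\deff\emptyset$, which are $(\sigma,\rho)$-sets of $G_0$ since $s\in\sigma$ and $0=r\in\rho$.

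Next I would form $G$ from $G_0$ by adding one relational constraint $C$ whose scope $\scope(C)$ is a tuple listing all of $V(G_0)$ and whose relation is $\rel(C)\deff\{X,Y\}$. Since $\scope(C)=V(G)$, a set $S\subseteq V(G)$ satisfies $C$ precisely when $S\in\{X,Y\}$; combined with the fact that both $X$ and $Y$ are $(\sigma,\rho)$-sets of $G_0$ and hence partial solutions of $(G_0,\{u\})$, the partial solutions of $(G,\{u\})$ are exactly $X$ and $Y$. These are distinct because $X\cap Y=\emptyset$ and $X\cup Y=V(G_0)\neq\emptyset$. Now $X$ witnesses the string $\sigma_s$ (as $u\in X$ and $\abs{N(u)\cap X}=s$) and is its unique witness since $Y$ does not select $u$; symmetrically, $Y$ is the unique witness of $\rho_r$. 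Hence $L(G,\{u\})=\{\sigma_s,\rho_r\}$ and each of these two strings has a unique witnessing partial solution, so $(G,\{u\})$ is a parsimonious $\{\sigma_s,\rho_r\}$-provider, as required.

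The argument is essentially immediate once the right starting gadget is chosen; the only step warranting a little care is verifying that adding a relation over \emph{all} vertices of $G_0$ (the portal included) does not accidentally destroy $X$ or $Y$ as partial solutions of $(G,\{u\})$ — and this is fine exactly because $X$ and $Y$ were selected to be genuine $(\sigma,\rho)$-sets of $G_0$, so they satisfy the $\sigma$- and $\rho$-constraints at every vertex, portal included. I do not anticipate any real obstacle beyond this bookkeeping.
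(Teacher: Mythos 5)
Your proposal is correct and matches the paper's proof: it takes the $\{\sigma_s,\rho_r\}$-provider from \cref{lem:fillinggadget} (which is the gadget of \cref{lem:happyGadget} for $r\ge 1$ and the $(s+1)$-clique for $r=0$) and adds a single relational constraint over all of its vertices, including the portal, that accepts exactly the two designated partial solutions. The only cosmetic difference is that you name the two accepted selections via the disjoint sets $X,Y$ rather than via two fixed witnesses $S_0,S_1$, which changes nothing in substance.
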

\begin{proof}
  Let $(G,\port)$ be the $\{\sigma_s, \rho_r\}$-provider
  from \cref{lem:fillinggadget} with portal $\port$.
  Let $S_0$ be a partial solution corresponding to state $\rho_r$,
  and let $S_1$ be a partial solution corresponding to the state $\sigma_s$.
  We add a relation $R$ to the graph which observes all vertices of $G$,
  including $\port$.
  Relation $R$ accepts only two possible selections of vertices,
  namely $S_0$ whenever $\port$ is unselected,
  and $S_1$ if $\port$ is selected.
  The properties of the parsimonious provider follow immediately.
\end{proof}

Now, we have everything ready to formally state the first reduction.

\begin{lemma}
  \label{lem:count:makeSigmaCofinite}
  Let $\rho$ be a fixed and non-empty set.
  Let $\sigma,\sigma'$ be two fixed and non-empty sets
  where $\sigma$ is cofinite and $\sigma'$ is simple cofinite such that
  $\max(\NN\setminus \sigma) = \max(\NN\setminus \sigma')$.
  There is a non-negative, constant integer $q_0$
  such that the following holds:

  For all constants $q$,
  there is a \pwar-reduction
  from $q$-relation-weighted \CountDomSetRel{\sigma'}{\rho}
  to $(q+q_0)$-relation-weighted \CountDomSetRel{\sigma}{\rho}.
\end{lemma}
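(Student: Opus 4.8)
The reduction needs to turn a cofinite $\sigma$ constraint into a simple cofinite $\sigma'$ constraint, where both have the same "$\sigMax$" value $k\deff\max(\NN\setminus\sigma)+1=\max(\NN\setminus\sigma')+1$; note $\sigma'=\{k,k+1,\dots\}$ while $\sigma$ may additionally be missing some values below $k$, say $\NN\setminus\sigma=\{k-1\}\setminus\sigma$ is the finite "missing set" $M\subseteq\fragmentco{0}{k-1}$ with $k-1\notin M$ (since $k-1\in\sigma$ by maximality). Starting from a graph $G$ with weighted relations whose vertices are $(\sigma',\rho)$-vertices, I build $G'$ by attaching to every vertex $v$ a gadget $\CG_v$ with a single portal identified with $v$, engineered so that: (i) if $v$ is unselected the gadget contributes a unique extension giving $v$ no extra selected neighbours; (ii) if $v$ is selected the gadget has exactly $d_i$ extensions giving $v$ exactly $i$ new selected neighbours for $i\in\fragment{0}{k}$, where $d_0,\dots,d_k$ are chosen so that, combined with the original "$\ge k$ or was already fine" behaviour, the total count of extensions that make $v$'s degree land in $\sigma$ equals the count that makes it land in $\sigma'$. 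Concretely, as in the introduction's worked example, the $d_i$ must solve the linear system whose coefficient matrix has entry $1$ on or below the anti-diagonal and $0$ one row above it — this matrix is non-singular, so a rational solution $(d_i)$ exists. A gadget realizing prescribed numbers of extensions $d_i\ge 0$ is easy to build from parsimonious $\{\sigma_s,\rho_r\}$-providers (Lemma~\ref{lem:count:parsFillingGadget}) and relations; its size, pathwidth increase, and arity increase are all bounded by a constant depending only on $k$ (hence on $\sigMax,\rhoMax$), which fixes the constant $q_0$ (number of fresh weight-values introduced by these gadget relations).

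The catch, exactly as flagged in the overview, is that some $d_i$ may be negative, so "$d_i$ extensions" is not literally realizable. The standard fix via interpolation: pick a parameter $x$, and whenever $d_i<0$ build the gadget to have $2^x\cdot|d_i|$ extensions instead (and $2^x\cdot d_i$ when $d_i\ge 0$, to keep things homogeneous — or just $d_i$ when nonnegative). Attaching such gadgets to all $n$ vertices makes the weighted count of $(\sigma',\rho)$-sets of $G'$ a polynomial $P(y)$ in $y=2^x$ of degree at most $n$ (each vertex contributes a factor that is a degree-$\le 1$, or bounded-degree, polynomial in $y$ after grouping). Here I will use the relation-weighted oracle for $(\sigma,\rho)$ to evaluate, for each of $\deg P + 1$ choices of $x$, the quantity $P(2^x)$; these oracle calls are on instances of size polynomial in $|G|$, pathwidth $\pw(G)+\Oh(1)$, and arity $\ar(G)+\Oh(1)$, using $q+q_0$ distinct weights. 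Standard polynomial interpolation recovers all coefficients of $P$, and then evaluating $P(-1)$ simulates exactly $-1\cdot|d_i|=d_i$ extensions in each negative slot, i.e. yields the desired weighted count of $(\sigma',\rho)$-sets of the original $G$.

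The one point needing care is the unselected case (i) and verifying the degree bound is polynomial: if $v$ is unselected, its $\rho$-constraint must still be respected, so the gadget must give $v$ exactly $0$ new selected neighbours with a unique witness — this is arranged by making the "unselected" branch of $\CG_v$ a rigid structure (e.g. a solution of internal $\{\sigma_s,\rho_r\}$-providers in their unselected state) guarded by a relation, contributing the constant factor $1$. Then, writing $a_d$ (resp. $b_d$) for the weighted number of partial selections of $G$ in which $v$ is unselected with $d$ selected neighbours from the rest of $G$ (resp. selected with $d$ such neighbours), the weighted count of $(\sigma,\rho)$-sets of $G'$ factors vertex-by-vertex into $a_d\cdot 1$ plus $b_d$ times a linear-in-$y$ expression $\sum_{i\,:\,d+i\in\sigma}(\text{realized }\#\text{ of }i\text{-extensions})$; since each vertex contributes a factor of $y$-degree bounded by a constant, $P$ has degree $\Oh(n)$, so $\Oh(n)$ oracle calls suffice.

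**Main obstacle.** The conceptually delicate step is arguing that the interpolation is legitimate in the weighted setting: I must check that $P(y)$ is genuinely a polynomial with rational coefficients (no hidden dependence of the gadget structure on $x$ beyond the multiplicities $2^x|d_i|$), that the evaluation points $2^x$ are distinct and the resulting Vandermonde system is solvable over $\SetQ$, and — crucially — that the "$2^x$" scaling makes every realized multiplicity a genuine nonnegative integer for all $x\ge x_0$, while the $\Oh(n)$ choices of $x$ keep gadget sizes polynomial in $|G|$ (so $x=\Oh(\log|G|)$ range suffices, giving polynomial blowup). Verifying that the non-singular anti-triangular system indeed forces the count for $\sigma$ to match the count for $\sigma'$ — i.e. that the right-hand side "$(0,\dots,0,1)^{\mathsf T}$ shifted appropriately" is the correct target encoding "behave like $\ge k$" — is the bookkeeping heart of the argument, but it is exactly the computation sketched in the technical overview and carries over verbatim to each missing value in $M$ simultaneously.
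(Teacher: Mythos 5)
Your route differs from the paper's. The paper's proof of \cref{lem:count:makeSigmaCofinite} does \emph{not} interpolate at all: it attaches to every vertex $v$ a constant-size gadget consisting of $\sigMax$ parsimonious $\{\sigma_0,\sigma_1,\rho_0\}$-providers (built inside the proof, using cofiniteness of $\sigma$) together with a single \emph{weighted} relation of arity $\sigMax+1$ on $v$ and its gadget-neighbours, whose weights $f_0,\dots,f_{\sigMax}$ solve exactly the linear system you describe. Negative values are simply legal relation weights in the target problem, so the unwanted selections (selected $v$ with fewer than $\sigMax$ selected neighbours) cancel inside one oracle call, and $q_0=\sigMax+1$. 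That is the whole point of stating the lemma for the relation-weighted problem: the negative coefficients can be written down directly, and the interpolation is deferred to the later weight-removal steps (\cref{lem:count:relationWeightedToVertexWeighted,lem:count:vertexWeightedToUnweighted}). Your proposal instead realizes the coefficients combinatorially as literal extension counts scaled by $2^x$ and recovers the answer by interpolating a degree-$n$ polynomial and evaluating at $y=-1$. This is admissible in principle (pathwidth-preserving reductions are Turing reductions), but it re-imports machinery the statement was designed to avoid, costs $\Theta(n)$ oracle calls on instances inflated by multiplicity gadgets of size $\Theta(x)$, and makes the pathwidth/arity bookkeeping harder than the paper's purely local, single-relation construction.

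Beyond being heavier than necessary, your write-up has a concrete gap: you only establish that the $d_i$ are \emph{rational}, but a gadget with ``$d_i$ (or $2^x|d_i|$) extensions'' only makes sense for integers, and you cannot clear denominators: multiplying all $d_i$ by a common denominator $D$ multiplies the contribution of a solution with $k$ selected vertices by $D^{k}$, which is not a uniform factor that can be divided out afterwards. The fix is to observe that the coefficient matrix $M$ with $M_{\alpha,\gamma}=1$ if $\alpha+\gamma\in\sigma$ and $0$ otherwise, $\alpha,\gamma\in\fragment{0}{\sigMax}$, is unimodular (reverse the column order; then all entries on or below the diagonal are $1$ and the first superdiagonal is $0$ because $\sigMax-1\notin\sigma$, and subtracting consecutive rows yields an upper-triangular matrix with unit diagonal), so the $d_i$ are integers---but this must be argued, and the paper's weighted-relation route never needs it. Two further slips: the parenthetical option of using $2^x d_i$ also for $d_i\ge 0$ ``to keep things homogeneous'' would break the evaluation at $y=-1$ (it flips the sign of the nonnegative contributions as well); you must scale only the negative slots. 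And the branch gadget that gives a selected $v$ exactly $\gamma$ extra selected neighbours while keeping all internal vertices satisfied, with a unique rigid extension when $v$ is unselected, is not obtainable from the $\{\sigma_s,\rho_r\}$-providers of \cref{lem:count:parsFillingGadget} alone; you need something like the paper's $\{\sigma_0,\sigma_1,\rho_0\}$-provider, whose construction genuinely uses that $\sigma$ is cofinite, and this would still have to be supplied.
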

\begin{proof}
  We can assume that $\sigma \neq \sigma'$ as otherwise the reduction is trivial. It follows that $\sigma\neq \NN$, and consequently $\sigMax>0$.
  We use the following auxiliary provider in the proof.
  \begin{claim}
    \label{lem:count:smallHelperGadget:sigma}
    Let $\sigma,\rho$ be fixed and non-empty.
    If $\sigma$ is cofinite, then there is
    a parsimonious $\{\sigma_0, \sigma_1, \rho_0\}$-provider
    which uses relations.
  \end{claim}
  \begin{claimproof}
    We use a set of vertices $V=\{\port, w, v_1, \ldots, v_{\allMax}\}$,
    where $\port$ is the portal vertex, and $\port$ is adjacent only to $w$.
    For each $i\in \numb{\allMax}$,
    $w$ is adjacent to $v_i$, and $v_i$ serves as the portal
    of a new parsimonious $\{\sigma_{\sigMax}, \rho_{\rhoMin}\}$-provider $J_i$,
    which exists by \cref{lem:count:parsFillingGadget}.
    We show that there is a partial solution for each state.

    \begin{itemize}
    \item[\boldmath $\rho_0$]
    For each $i\in \numb{\rhoMax}$, the vertex $v_i$ is selected,
    which is extended to the $\sigma_{\sigMax}$-state of $J_i$.
    The remaining vertices of $V$ are unselected,
    and this is extended to the $\rho_{\rhoMin}$-states
    of $J_{\rhoMax+1}, \ldots, J_{\allMax}$.
    Hence, the vertices $v_1,\dots,v_{\rhoMax}$ obtain $\sigMax$ selected neighbors from the attached $\{\sigma_{\sigMax}, \rho_{\rhoMin}\}$-providers, and
    the vertices $v_{\rhoMax+1},\dots,v_{\allMax}$ obtain $\rhoMin$ selected neighbors.
    Moreover, $w$ is unselected and has $\rhoMax$ selected neighbors $v_1,\dots,v_{\rhoMax}$.

    \item[\boldmath $\sigma_0$]
    For each $i\in \numb{\rhoMax-1}$, the vertex $v_i$ is selected,
    which is extended to the $\sigma_{\sigMax}$-state of $J_i$.
    The vertex $\port$ is also selected.
    The remaining vertices in $V$ are unselected,
    and this is extended to the $\rho_{\rhoMin}$-states
    of $J_{\rhoMax}, \ldots, J_{\allMax}$.
    The vertex $w$ is unselected and has $\rhoMax$ selected neighbors ($v_1,\dots,v_{\rhoMax-1}$ and $\port$).
    The vertices $v_1,\dots,v_{\allMax}$ again obtain a feasible number of selected neighbors
    from the attached $\{\sigma_{\sigMax}, \rho_{\rhoMin}\}$-providers.

    \item[\boldmath $\sigma_1$]
    Each vertex in $V$ is selected,
    and this is extended to the $\{\sigma_{\sigMax}, \rho_{\rhoMin}\}$-providers
    $J_1, \ldots, J_{\allMax}$.
    Each vertex in $V\setminus \{\port\}$ is selected with $\sigMax+1$ selected neighbors, which is feasible since $\sigma$ is cofinite.%
    \footnote{
    The state $\rho_1$ could also be obtained by selecting all of $V\setminus\{\port\}$.
    }
    \end{itemize}
    We add a relation that observes all vertices of $V$. 
    The relation ensures that exactly three partial solutions are feasible,
    each corresponding to one of the mentioned states.
    As the $\{\sigma_{\sigMax},\rho_{\rhoMin}\}$-providers are also parsimonious,
    all other potential partial solutions are ruled out.
  \end{claimproof}

  Let $G$ be the graph from an instance of \CountDomSetRel{\sigma'}{\rho}.
  Observe that every $(\sigma',\rho)$-set is also a $(\sigma,\rho)$-set,
  but the converse is not necessarily true as in a $(\sigma,\rho)$-set
  a selected vertex might also have fewer than $\sigMax$ selected neighbors.
  We modify $G$ with the goal of modeling the situation where no selected vertex can receive fewer than $\sigMax$ selected neighbors. To this end,
  we intend to use weighted relations to cancel out unwanted selections.
  The underlying idea is to add neighbors to each vertex $v \in V(G)$
  that can be selected almost arbitrarily.
  Based on the set $\sigma$,
  some selections in which $v$ receives fewer than $\sigMax$ selected neighbors are allowed and some are not.
  We add weights based on the number of selected neighbors of $v$
  such that in total the vertex behaves as if it had at least $\sigMax$ selected neighbors.

  Formally, we apply the following modification for each vertex $v \in V(G)$.
  To simplify notation, set $Z = \{\sigma_0, \sigma_1, \rho_0\}$.
  Note that, by \cref{lem:count:smallHelperGadget:sigma}, a $Z$-provider exists
  (if we allow additional relations).
  We introduce $\sigMax$ copies of the $Z$-provider with $v$ as the portal vertex.
  Let $W=\{w_1,\dots,w_{\sigMax}\}$ be the set of neighbors of $v$ in these copies (i.e., the vertex $w$ from each copy).
  We add a relation $R$ observing $v, w_1,\dots,w_{\sigMax}$ that only allows the following selections: if $v$ is unselected, then all vertices in $W$ are unselected; if $v$ is selected, then $R$ allows a selection for each $\gamma\in \fragment{0}{\sigMax}$ such that $w_1,\dots,w_\gamma$ are selected
  and $w_{\gamma+1},\dots,w_{\sigMax}$ are unselected.
  Moreover, if exactly the vertices $v,w_1,\dots,w_\gamma$ are selected,
  then the relation accepts with a weight $f_\gamma$ which we define in a moment.

  Let us first check that selecting vertices according to $R$ actually can be extended to partial solutions.
  If $v$ and all vertices in $W$ are unselected, then we obtain a partial solution if all of the $\sigMax$ $Z$-providers are in state $\rho_0$.
  Similarly, if $v$ is selected, the first $\gamma$ $w_i$'s are selected, and the remaining $w_i$'s are unselected, then we obtain a partial solution if the first $\gamma$ $Z$-providers are in state $\sigma_1$,
  and the remaining $Z$-providers are in state $\sigma_0$.

  Now let us define the weights.
  Note that if $v$ is selected and has $\alpha$ selected neighbors in the original graph
  (ignoring the vertices in $W$),
  then by the $\sigma$-constraint and the definition of $R$, there is some $\gamma$ with $\alpha+\gamma \in \sigma$ such that from $W$ precisely $w_1,\ldots, w_\gamma$ are selected.
  We now aim to define weights $f_i$,
  for $i \in \fragment{0}{\sigMax}$,
  such that valid weights sum up to 0 whenever $\alpha<\sigMax$.
  Formally, we want the following equation to hold for all $\alpha \in \fragment{0}{\sigMax}$:
  \begin{equation}
    \label{eqn:count:replaceVtcsByGadgets:property}
    \sum_{\substack{\gamma=0: \\ \alpha+\gamma \in \sigma}}^{\sigMax}
      f_\gamma
      =
      \begin{cases}
        1 & \alpha \ge \sigMax \\
        0 & \text{else}
      \end{cases}
    .
  \end{equation}

  \begin{claim}
    We can (efficiently) find the values for the $f_i$'s such that
    they satisfy the constraints from \cref{eqn:count:replaceVtcsByGadgets:property}.
  \end{claim}
  \begin{claimproof}
    Observe that there are $\sigMax+1$ unknown values
    and $\sigMax+1$ constraints.
    Recall that $\sigMax > 0$, and therefore, $\sigMax-1 \notin \sigma$.
    From \cref{eqn:count:replaceVtcsByGadgets:property},
    the sums for $\alpha=\sigMax$ and $\alpha=\sigMax-1$
    differ by exactly one weight, namely $f_0$.
    Hence, we can determine the value for $f_0$
    and eliminate it from the constraints.
    Repeating this procedure yields a solution for the equations
    and provides the values for all $f_i$.
  \end{claimproof}

  Note that there are at most $\sigMax+1$ different weights.
  Hence, the above reduction constructs an instance of
  $q+\sigMax+1$-relation weighted \srCountDomSetRel
  whenever the input was an instance of $q$-relation weighted \srCountDomSetRel.
  Thus, we have $q_0 \deff \sigMax+1$.

  It remains to show that the reduction is pathwidth-preserving.
  The relation $R$ has a scope of size $\sigMax+1$.
  Hence, we can duplicate one bag containing $v$
  and add all $\sigMax$ vertices $w_1,\dots,w_{\sigMax}$ to that bag.
  Then, the scope of the relation $R$ is contained in one bag.
  As $\sigma$ is fixed, $\sigMax$ is a constant,
  and hence, the pathwidth increases by a constant only.

  Moreover, the reduction is arity-preserving.
  Indeed, the existing relations are not changed
  and the relation $R$ has arity at most $\sigMax+1$
  which is a constant as $\sigma$ is fixed.
\end{proof}

Similarly to the previous result, we show the following dual result,
that is, we allow $\rho$ to be cofinite.
\begin{lemma}
  \label{lem:count:makeRhoCofinite}
  Let $\sigma$ be a fixed and non-empty set.
  Let $\rho,\rho'$ be two fixed sets
  where $\rho$ is cofinite and $\rho'$ is simple cofinite such that
  $\max(\NN\setminus \rho) = \max(\NN\setminus \rho')$.
  There is a non-negative constant $q_0$ such that the following holds:

  For all constants $q$,
  there is a \pwar-reduction
  from $q$-relation-weighted \CountDomSetRel{\sigma}{\rho'}
  to $(q+q_0)$-relation-weighted \CountDomSetRel{\sigma}{\rho}.
\end{lemma}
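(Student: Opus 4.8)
The plan is to follow the proof of \cref{lem:count:makeSigmaCofinite}, dualizing the roles of ``selected'' and ``unselected''. We may assume $\rho\neq\rho'$ (otherwise the reduction is the identity), which forces $\rho\neq\NN$, so $\rhoMax\geq 1$, $\rhoMax\in\rho$, and $\rho'=\{\rhoMax,\rhoMax+1,\dots\}$. Given an instance $G$ of $q$-relation-weighted \CountDomSetRel{\sigma}{\rho'}, every $(\sigma,\rho')$-set is a $(\sigma,\rho)$-set, but a $(\sigma,\rho)$-set may contain an unselected vertex with fewer than $\rhoMax$ selected neighbors; we use negative weights to cancel out exactly these spurious solutions.

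The first ingredient is the dual of \cref{lem:count:smallHelperGadget:sigma}: if $\rho$ is cofinite, then there is a parsimonious $\{\rho_0,\rho_1,\sigma_0\}$-provider using relations. I would build this with a portal $\port$ adjacent to a single vertex $w$, where $w$ is the portal of a parsimonious $\{\sigma_{\sigMin},\rho_{\rhoMax}\}$-provider $J$ (which exists by \cref{lem:count:parsFillingGadget}), together with a relation observing $\{\port,w\}$ that forbids selecting both $\port$ and $w$. The three portal states are witnessed as follows: $\port,w$ both unselected with $J$ in state $\rho_{\rhoMax}$, so $w$ has exactly $\rhoMax\in\rho$ selected neighbors (state $\rho_0$); $\port$ unselected and $w$ selected with $J$ in state $\sigma_{\sigMin}$, so $w$ has exactly $\sigMin\in\sigma$ selected neighbors (state $\rho_1$); and $\port$ selected and $w$ unselected with $J$ in state $\rho_{\rhoMax}$, so now $w$ has $\rhoMax+1$ selected neighbors, which lies in $\rho$ precisely because $\rho$ is cofinite (state $\sigma_0$). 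This is the one place where the hypothesis that $\rho$ is cofinite is used, and it is not a perfect mirror of \cref{lem:count:smallHelperGadget:sigma}, since in state $\rho_1$ the auxiliary vertex $w$ is itself selected and must satisfy a $\sigma$-constraint; routing $w$'s degree through the single provider $J$ (rather than through a bank of feeder vertices, as on the $\sigma$-side) sidesteps this. Parsimony is inherited from $J$ together with the relation.

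With this $Z'$-provider ($Z'\deff\{\rho_0,\rho_1,\sigma_0\}$) in hand, I would modify $G$ exactly as in \cref{lem:count:makeSigmaCofinite} with selection status swapped: to each vertex $v$ attach $\rhoMax$ copies of the $Z'$-provider with $v$ as portal, let $W=\{w_1,\dots,w_{\rhoMax}\}$ be the corresponding neighbors of $v$, and add a weighted relation $R$ observing $v,w_1,\dots,w_{\rhoMax}$ that, when $v$ is selected, forces all of $W$ unselected with weight $1$, and, when $v$ is unselected, allows for each $\gamma\in\fragment{0}{\rhoMax}$ exactly the selection $w_1,\dots,w_\gamma$ with a weight $g_\gamma$. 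As in the original, every allowed pattern of $R$ extends to a unique partial solution (copy $i\le\gamma$ in state $\rho_1$, copies $i>\gamma$ in state $\rho_0$, and all copies in state $\sigma_0$ when $v$ is selected). The weights must satisfy, for every $\alpha\in\fragment{0}{\rhoMax}$, that $\sum_{\gamma\in\fragment{0}{\rhoMax}\,:\,\alpha+\gamma\in\rho}g_\gamma$ equals $1$ if $\alpha\geq\rhoMax$ and $0$ otherwise; then an unselected vertex with $\alpha$ selected neighbors in $G$ contributes total weight $1$ iff $\alpha\in\rho'$, while selected vertices are unaffected. Since $\rho$ is cofinite we have $\rhoMax-1\notin\rho$, so this system is triangular (differencing the equations for $\alpha=\rhoMax$ and $\alpha=\rhoMax-1$ determines $g_0$, then recurse), hence efficiently solvable. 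Finally, $R$ has scope of constant size $\rhoMax+1$, so duplicating a bag containing $v$ and adding its scope increases pathwidth and arity only by additive constants, and only $O(\rhoMax)$ new weights appear, giving a constant $q_0$. The main obstacle is the dual auxiliary provider of the second paragraph; the remainder is a near-verbatim transcription of the proof of \cref{lem:count:makeSigmaCofinite}.
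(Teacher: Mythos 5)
Your proposal is correct and takes essentially the same route as the paper: the paper's proof builds the very same parsimonious $\{\sigma_0,\rho_0,\rho_1\}$-provider (a portal adjacent to the portal $w$ of a parsimonious $\{\sigma_{\sigMin},\rho_{\rhoMax}\}$-provider, with a relation ruling out all but the three intended partial solutions, cofiniteness of $\rho$ covering the $\rhoMax+1$ case) and then reruns the construction of \cref{lem:count:makeSigmaCofinite} with the roles of selected/unselected swapped, noting exactly as you do that a selected vertex needs no additional neighbors, so the state $\sigma_1$ is never required and the weight system stays triangular via $\rhoMax-1\notin\rho$. Your only deviation — enforcing parsimony with a binary relation on $\{\port,w\}$ plus the parsimony of the inner provider, rather than a relation observing the whole gadget — is harmless and equally valid.
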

\begin{proof}
  We use the following auxiliary provider in the proof
  which complements the result from \cref{lem:count:smallHelperGadget:sigma}.
  \begin{claim}
    \label{lem:count:smallHelperGadget:rho}
    Let $\sigma,\rho$ be fixed and non-empty.
    If $\rho$ is cofinite, then there is
    a parsimonious $\{\sigma_0, \rho_0, \rho_1\}$-provider
    which uses relations.
  \end{claim}
  \begin{claimproof}
    Let $\port$ be the portal vertex, which is adjacent to a new vertex $w$
    which serves as the portal of a
    parsimonious $\{\sigma_{\sigMin}, \rho_{\rhoMax}\}$-provider
    (\cref{lem:count:parsFillingGadget}).
    For states $\sigma_0$ and $\rho_0$,
    $v$ is unselected which is extended to the $\{\sigma_{\sigMin}, \rho_{\rhoMax}\}$-provider.
    Hence, $v$ has at least $\rhoMax$ neighbors.
    For state $\rho_1$,
    $v$ is selected which is again extended to the $\{\sigma_{\sigMin}, \rho_{\rhoMax}\}$-provider.
    Hence, $v$ has $\sigMin$ neighbors as $\port$ is not selected.

    We add a relation
    which ensures that there are exactly three partial solutions,
    each corresponding to one of the mentioned states.
    By this, all other potential partial solutions are ruled out.
  \end{claimproof}

  The proof follows precisely the same idea as before.
  The most notable difference is that we now use $Z$-providers
  with $Z=\{\sigma_0, \rho_0, \rho_1\}$,
  that is, the state $\sigma_1$ is not allowed,
  but the state $\rho_1$ is allowed.
  Observe that this is not an issue as we never need to add neighbors to $v$
  if the vertex is selected, and
  actually the relation $R$ forbids to add selected neighbors to $v$ in this case.

  Then, the remainder of the proof is analogous to that of \cref{lem:count:makeSigmaCofinite}.
\end{proof}

In the next step, we remove the weighted relations
by allowing weighted vertices.
In this setting, the vertices contribute with their weight to the solution
whenever they are selected.
We refer to this new variant as \emph{vertex-weighted \srCountDomSetRel}.
\begin{definition}
  [$q$-Vertex-weighted \srCountDomSetRel]
  Let $q$ be a non-negative integer.
  In the \emph{$q$-vertex-weighted \srCountDomSetRel} problem,
  we are given a graph with relations $G=(V, E, \CC)$
  and a weight function $\wt\from V \to \SetQ$
  such that $\abs{\{ \wt(v) \mid v \in V \} } \le q$,
  that is, there are at most $q$ different vertex weights.

  Let $\Omega$ be the set of $(\sigma,\rho)$-sets of $G$. The task is to compute
  \[
    \sum_{U\in \Omega}
      \prod_{v \in U}
        \wt(v)
    .
  \]
\end{definition}
Recall that a $(\sigma,\rho)$-set of $G$
has to satisfy the relations of $G$. Therefore, this requirement appears only implicitly in the formula.
Note that to obtain the original \srCountDomSetRel problem we can set
$\wt(v)=1$ for all $v\in V$.

Now, we can formally state the next step of the reduction.
This step crucially relies on the fact
that only a constant number of different weights are used,
as otherwise the pathwidth of the used construction would be too large.
\begin{lemma}
  \label{lem:count:relationWeightedToVertexWeighted}
  For all constants $q$,
  there is a \pwar-reduction from
  $q$-relation-weighted \srCountDomSetRel
  to $q$-vertex-weighted \srCountDomSetRel.
\end{lemma}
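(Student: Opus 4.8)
The plan is to replace every weighted relational constraint by an \emph{unweighted} relation together with a constant number of fresh ``weight vertices'', whose \emph{vertex} weights record the values taken by the weight function; parsimonious providers (\cref{lem:count:parsFillingGadget}) are used to turn those fresh vertices into legal $(\sigma,\rho)$-vertices without perturbing the rest of the graph. Since $\sigma,\rho$ are non-empty, fix once and for all some $s\in\sigma$ and $r\in\rho$.

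Let $G=(V,E,\CC)$ be an instance of $q$-relation-weighted \srCountDomSetRel, and let $\{a_1,\dots,a_q\}$ be a set containing all values taken by the weight functions $f_C$. We may assume $1\in\{a_1,\dots,a_q\}$: otherwise, pick a nonzero value $a$ used by some constraint, divide every $f_C$ by $a$, solve the resulting instance, and multiply the returned value by $a^{\abs{\CC}}$ — and the degenerate cases, where $\CC=\emptyset$ (the instance is already vertex-weighted with all weights $1$) or where every weight equals $0$ (the answer is $0$), are handled directly. Now, for every constraint $C$ with $D\coloneqq\scope(C)$ and distinct weight values $v_1,\dots,v_m$ (so $m\le q$), introduce fresh vertices $x^C_1,\dots,x^C_m$, attach to each $x^C_i$ a fresh parsimonious $\{\sigma_s,\rho_r\}$-provider whose portal is $x^C_i$ (so $x^C_i$ has neighbours only inside its own provider, and in particular none in $V$), add one \emph{unweighted} relation $R'_C$ with scope $D\cup\{x^C_1,\dots,x^C_m\}$ accepting a selection $(t,b_1,\dots,b_m)$ exactly when $b_i=1$ for the unique $i$ with $f_C(t)=v_i$ and $b_{i'}=0$ for $i'\ne i$, and set $\wt(x^C_i)\coloneqq v_i$. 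All vertices of $V$ and all provider vertices get weight $1$. Call the resulting vertex-weighted instance $G'$. Every vertex weight of $G'$ lies in $\{a_1,\dots,a_q\}$ (since $1$ is among these values), so $G'$ is a valid $q$-vertex-weighted instance; it has polynomial size and is computable in polynomial time, and the reduction makes a single oracle call.

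For correctness, note that the $(\sigma,\rho)$-sets of $G$ (equivalently, of $(V,E)$) are in bijection with the $(\sigma,\rho)$-sets of $G'$. Given a $(\sigma,\rho)$-set $U$ of $(V,E)$, it has a \emph{unique} extension $U'$ to a $(\sigma,\rho)$-set of $G'$: for each $C$, the relation $R'_C$ forces exactly the vertex $x^C_{j}$ with $f_C(U\cap D)=v_{j}$ to be selected; this forces the provider attached to $x^C_{j}$ into its $\sigma_s$-state and every other attached provider into its $\rho_r$-state; and by parsimony each of these states has exactly one realizing partial solution. Conversely, restricting any $(\sigma,\rho)$-set of $G'$ to $V$ yields a $(\sigma,\rho)$-set of $(V,E)$, because the vertices of $V$ retain their neighbourhoods in $G'$. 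Moreover $\prod_{v\in U'}\wt(v)=\prod_{C}\wt\bigl(x^C_{j(C)}\bigr)=\prod_{C}f_C(U\cap\scope(C))$ (all other selected vertices of $U'$ have weight $1$); in particular a selection with $f_C(U\cap\scope(C))=0$ contributes $0$ on both sides, so no hard copy of the original relation is needed. Summing over $U$ shows that the $q$-vertex-weighted value of $G'$ equals the $q$-relation-weighted value of $G$ (before the rescaling correction).

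Finally, the reduction is pathwidth- and arity-preserving. Each attached provider is of constant size (depending only on $\sigma,\rho$; see \cref{lem:fillinggadget,lem:count:parsFillingGadget}), and it is attached to a single vertex, so it raises the pathwidth only by $\Oh(1)$ in the standard way. The only other new vertices are the $x^C_i$, at most $q=\Oh(1)$ per constraint, so a bag of a path decomposition of $G$ containing the clique $D$ can be duplicated and enlarged by these $q$ vertices to accommodate the scope of $R'_C$; hence the pathwidth rises by $\Oh(1)$. Likewise $R'_C$ has arity $\abs{D}+m\le \ar(G)+q$, and every provider relation has constant arity, so the arity rises by $\Oh(1)$. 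The point requiring care is the parsimony of both the providers and of $R'_C$: this is precisely what guarantees that each $(\sigma,\rho)$-set of $G$ lifts to exactly one $(\sigma,\rho)$-set of $G'$ carrying the intended weight, so that no spurious terms enter the sum. This completes the plan.
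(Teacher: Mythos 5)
Your proposal is correct and follows essentially the same route as the paper's proof: each weighted constraint is replaced by an unweighted relation whose scope is enlarged by a constant number of fresh weight-carrying vertices, each attached to a parsimonious $\{\sigma_s,\rho_r\}$-provider, with the same bag-duplication argument for pathwidth and the same constant bound on the arity increase. The only differences are cosmetic refinements — you let a weight-$0$ vertex annihilate zero-weight selections instead of having the relation reject them, and you add the rescaling step to ensure weight $1$ is among the $q$ allowed vertex weights, a bookkeeping point the paper glosses over.
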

\begin{proof}
  For a fixed $q$,
  let $w_1,\dots,w_{q'}$ be the $q' \le q$ different weights used by the relations
  of a $q$-relation-weighted \srCountDomSetRel instance $G$.
  Formally, we have
  $\{ w_1, \dots, w_{q'} \} = \{ f_C(U) \mid C \in \CC, U \subseteq \scope(C) \}$.
  We assume without loss of generality that $q'=q$ in the following.

  We apply the following modification for all weighted constraints in $\CC$.
  Let $f_C$ be the weight function of some constraint $C\in \CC$.
  We introduce $q$ new vertices $v_1,\dots,v_q$.
  Each of these vertices is the portal of a parsimonious
  $\{\sigma_{\sigMin}, \rho_{\rhoMin}\}$-provider with relations,
  which exists by \cref{lem:count:parsFillingGadget}.
  For all $i\in \numb{q}$,
  we assign weight $w_i$ to vertex $v_i$.
  Moreover, we replace the constraint $C$ by a relational (unweighted) constraint $C'$ with $\scope(C')\coloneqq \scope(C)\cup \{v_1,\ldots,v_q\}$.
  The new relation $\rel(C')$ accepts a selection $U\subseteq \scope(C')$ if and only if
  \begin{itemize}
  	\item the corresponding weight $f_C(U\cap \scope(C))$ is nonzero, 
  	\item the vertex $v_i$ with $w_i=f_C(U\cap \scope(C))$ is selected, and
  	\item all $v_j$ with $i\neq j$ are unselected.
  \end{itemize}

  It remains to argue that the reduction
  increases the pathwidth and the arity only by a constant.
  By assumption, there is a bag containing $\scope(C)$.
  We duplicate this bag,
  make it a neighbor of the original bag in the path decomposition,
  and add the new vertices $v_1,\dots,v_q$ to the copy
  which increases the size of the bag by $q$.
  For each $v_i$ (one after the other), we duplicate this new bag
  and also add all the vertices from its attached $\{\sigma_{\sigMin}, \rho_{\rhoMin}\}$-provider.
  Observe that the size of each parsimonious
  $\{\sigma_{\sigMin}, \rho_{\rhoMin}\}$-provider
  with relations is bounded by a constant,
  as the size depends only on $\sigma$ and $\rho$, and these two sets are fixed.
  Hence, the width of the decomposition increases by
  at most an additive constant
  depending only on $\sigma$, $\rho$, and $q$.

  To bound the arity of the new graph,
  observe that the arity of each relation increases by at most $q$,
  which is a constant.
  Moreover, the arity of the relations in the
  parsimonious
  $\{\sigma_{\sigMin},\rho_{\rhoMin}\}$-provider with relations
  is bounded by a constant as it is clearly bounded by the size of such a provider.
  Hence, the reduction increases the arity of the graph
  at most by an additive constant.
\end{proof}

The last step now finally removes the weights from the vertices,
and thus, also from the instance.
This procedure makes use of the following known fact
about recovering multivariate polynomials.
\begin{fact}
  [Lemma~2.5 in \cite{Curticapean18}]
  \label{fct:count:recoveringPolynomial}
  Let $P \in \SetQ[x_1,\dots,x_q]$ be a multivariate polynomial
  such that, for all $i \in \numb{q}$,
  the degree of $x_i$ in $P$ is bounded by $d_i \in \NN$.
  Furthermore, assume we are given sets $\Xi_i \subseteq \SetQ$
  for $i \in \numb{q}$
  such that $\abs{\Xi_i} = d_i + 1$ for all $i \in \numb{q}$.
  Consider the Cartesian product of these sets, that is,
  \[
    \Xi \deff \Xi_1 \times \dots \times \Xi_q
    .
  \]
  Then, we can compute the coefficients of $P$
  with $\Oh(\abs{\Xi}^3)$ arithmetic operations
  when given as input the set $\{ (\xi, P(\xi)) \mid \xi \in \Xi \}$.
\end{fact}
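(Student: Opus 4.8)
The plan is to prove this by induction on the number of variables $q$, peeling off one variable at a time so that the base case reduces to ordinary univariate interpolation. First I would record why the data determines $P$ uniquely: the coefficient vector of $P$ has exactly $\prod_{i=1}^{q}(d_i+1)=\abs{\Xi}$ entries, indexed by the monomials $x_1^{e_1}\cdots x_q^{e_q}$ with $0\le e_i\le d_i$, and evaluating at the $\abs{\Xi}$ points of $\Xi$ yields a linear system whose matrix is the Kronecker product $V_1\otimes\cdots\otimes V_q$ of the univariate Vandermonde matrices $V_i$ on the node sets $\Xi_i$. Since $\Xi_i$ consists of $d_i+1$ pairwise distinct rationals, each $V_i$ is invertible, hence so is $V_1\otimes\cdots\otimes V_q$, so the coefficients are uniquely recoverable and it makes sense to speak of ``the'' coefficients of $P$. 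Without loss of generality I would assume $d_i\ge 1$ for every $i$ (if $d_i=0$ then $x_i$ does not occur in $P$ and the factor $\Xi_i$ is a single point, so that variable can be discarded up front).

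For the base case $q=1$, Lagrange interpolation through the $d_1+1$ node--value pairs produces the $d_1+1$ coefficients of $P$ using $\Oh((d_1+1)^2)=\Oh(\abs{\Xi}^2)$ arithmetic operations, well within the claimed bound.

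For the inductive step I would view $P$ as a polynomial in $x_q$ over $\SetQ[x_1,\dots,x_{q-1}]$, writing $P(x_1,\dots,x_q)=\sum_{j=0}^{d_q}P_j(x_1,\dots,x_{q-1})\,x_q^j$ with $\deg_{x_i}P_j\le d_i$ for $i<q$. Put $\Xi'\coloneqq\Xi_1\times\cdots\times\Xi_{q-1}$, so $\abs{\Xi'}=\abs{\Xi}/(d_q+1)$. For each fixed $\eta\in\Xi'$ the values $\{P(\eta,\xi)\mid\xi\in\Xi_q\}$ are the evaluations of the degree-$\le d_q$ univariate polynomial $x_q\mapsto P(\eta,x_q)$ at $d_q+1$ distinct nodes, so univariate interpolation recovers all of $P_0(\eta),\dots,P_{d_q}(\eta)$. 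Doing this for every $\eta\in\Xi'$ costs $\abs{\Xi'}$ univariate interpolations, i.e.\ $\Oh(\abs{\Xi'}(d_q+1)^2)=\Oh(\abs{\Xi}(d_q+1))\subseteq\Oh(\abs{\Xi}^2)$ operations, and yields, for each $j$, the complete table $\{(\eta,P_j(\eta))\mid\eta\in\Xi'\}$. Applying the inductive hypothesis to each of the $d_q+1$ polynomials $P_j$ on the grid $\Xi'$ recovers their coefficients in $(d_q+1)\cdot\Oh(\abs{\Xi'}^3)=\Oh(\abs{\Xi}^3/(d_q+1)^2)\subseteq\Oh(\abs{\Xi}^3)$ operations, and concatenating the coefficients of the $P_j$ (the coefficient of $x_1^{e_1}\cdots x_{q-1}^{e_{q-1}}x_q^j$ in $P$ equals the coefficient of $x_1^{e_1}\cdots x_{q-1}^{e_{q-1}}$ in $P_j$) gives the coefficients of $P$. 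The two contributions sum to $\Oh(\abs{\Xi}^3)$; feeding the recurrence $T(q)\le c\abs{\Xi}^2+(d_q+1)\,T(q-1)$ together with $\abs{\Xi'}\le\abs{\Xi}/2$ into a short induction shows the hidden constant can be chosen independent of $q$.

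I do not expect a real obstacle: this is the standard tensor-grid multivariate interpolation argument. The only points that need a little care are (i) the complexity bookkeeping, in particular discarding the variables with $d_i=0$ so the recursion does not pick up a spurious $q$-dependent factor, and (ii) making the Kronecker/Vandermonde structure explicit so that unique solvability over $\SetQ$ is justified rather than assumed.
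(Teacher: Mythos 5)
The paper does not actually prove this statement: it is imported as a black-box \emph{Fact} with a citation to Lemma~2.5 of \cite{Curticapean18}, so there is no in-paper proof to measure you against. Your argument is a correct, self-contained proof by the standard tensor-grid route: the base case is univariate Lagrange interpolation, and the inductive step interpolates in $x_q$ separately for each grid point $\eta \in \Xi_1 \times \dots \times \Xi_{q-1}$ to obtain evaluation tables for the coefficient polynomials $P_0,\dots,P_{d_q}$, then recurses. The two points you flag yourself are indeed the only delicate ones, and you handle both: discarding variables with $d_i = 0$ makes $\abs{\Xi'} \le \abs{\Xi}/2$, so the recurrence $T(q) \le c\abs{\Xi}^2 + (d_q+1)T(q-1)$ closes with a constant independent of $q$; and the Kronecker product $V_1 \otimes \dots \otimes V_q$ of Vandermonde matrices being invertible justifies uniqueness of the recovered coefficients (although your peeling argument already yields uniqueness level by level, so this remark is a sanity check rather than a needed step). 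For comparison, the cheapest way to get exactly the stated $\Oh(\abs{\Xi}^3)$ bound is to observe that the evaluations form a nonsingular $\abs{\Xi} \times \abs{\Xi}$ linear system (the same Kronecker--Vandermonde matrix) and solve it by Gaussian elimination in one shot; your recursive scheme is slightly more work to write down but in fact runs in $\Oh(\abs{\Xi}\sum_i d_i) \subseteq \Oh(\abs{\Xi}^2)$ operations, comfortably inside the claimed bound.
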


\begin{lemma}
  \label{lem:count:vertexWeightedToUnweighted}
  For all constants $q$,
  there is a \pwar-reduction from
  $q$-vertex-weighted \srCountDomSetRel
  to unweighted \srCountDomSetRel.
\end{lemma}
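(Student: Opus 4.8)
The plan is to reduce to polynomial interpolation, where the vertex weights are simulated by small gadgets so that a single call to unweighted $\srCountDomSetRel$ evaluates the relevant polynomial at a chosen point. Fix the constant $q$ and an instance $(G,\wt)$ of $q$-vertex-weighted $\srCountDomSetRel$ with $G=(V,E,\CC)$, let $w_1,\dots,w_q\in\SetQ$ be the (at most $q$) distinct weights, and set $V_i\deff\{v\in V\mid \wt(v)=w_i\}$, $d_i\deff|V_i|$, and let $i(v)$ be the index with $v\in V_{i(v)}$. For a $(\sigma,\rho)$-set $U$ of $G$ we have $\prod_{v\in U}\wt(v)=\prod_{i=1}^{q} w_i^{|U\cap V_i|}$, so the number we must output equals $P(w_1,\dots,w_q)$, where
\[
  P(x_1,\dots,x_q)\deff \sum_{U}\ \prod_{i=1}^{q} x_i^{\,|U\cap V_i|}
  \;=\;\sum_{j_1,\dots,j_q} N_{j_1,\dots,j_q}\prod_{i=1}^{q} x_i^{\,j_i},
\]
the first sum ranging over all $(\sigma,\rho)$-sets $U$ of $G$, and $N_{j_1,\dots,j_q}$ denoting the number of such $U$ with $|U\cap V_i|=j_i$ for all $i$. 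The degree of $x_i$ in $P$ is at most $d_i\le|V|$.

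The first step is to construct, for every non-negative integer $e$, a \emph{weight-$2^e$ gadget} $\CG^{(e)}$ with a single portal $v$ that does not perturb the constraints of $v$ with respect to the rest of the graph. Concretely, $\CG^{(e)}$ consists of fresh vertices $u_1,\dots,u_e$, each the portal of its own parsimonious $\{\sigma_{\sigMin},\rho_{\rhoMin}\}$-provider (which exists by \cref{lem:count:parsFillingGadget}), together with $e$ relational constraints, the $j$-th having scope $\{v,u_j\}$ and relation $2^{\{v,u_j\}}\setminus\{\{u_j\}\}$ (so $u_j$ may be selected only if $v$ is). Crucially, $v$ has \emph{no} edges inside $\CG^{(e)}$. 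Hence if $v$ is unselected, all $u_j$ are forced out and each provider contributes its unique partial solution, giving exactly one extension; if $v$ is selected, each $u_j$ is free, giving $2^e$ extensions; and in both cases $v$ receives no selected neighbor from $\CG^{(e)}$. In the notation $\ext_\CG(\cdot)$ from the technical overview this reads $\ext_{\CG^{(e)}}(\rho_0)=1$, $\ext_{\CG^{(e)}}(\sigma_0)=2^{e}$, and all other $\ext$-values are $0$.

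Now, given a tuple $\vec y=(y_1,\dots,y_q)$ with $y_i=2^{e_i}$ for some $e_i\in\fragment{0}{d_i}$, build $G'_{\vec y}$ from $G$ by attaching to every $v\in V_i$ a fresh copy of $\CG^{(e_i)}$ (identifying its portal with $v$) and keeping all original edges and relations. Since the gadgets add no edges incident to $V$ and introduce only the binary relations above, the restriction of any $(\sigma,\rho)$-set of $G'_{\vec y}$ to $V$ is a $(\sigma,\rho)$-set of $G$; conversely, each $(\sigma,\rho)$-set $U$ of $G$ extends to exactly $\prod_{v\in U} y_{i(v)}=\prod_{i} y_i^{|U\cap V_i|}$ $(\sigma,\rho)$-sets of $G'_{\vec y}$. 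Therefore one unweighted $\srCountDomSetRel$ oracle call on $G'_{\vec y}$ returns $P(\vec y)$. I would call the oracle for every $\vec y\in\Xi_1\times\dots\times\Xi_q$ with $\Xi_i\deff\{2^0,2^1,\dots,2^{d_i}\}$; these are $d_i+1$ distinct values, so \cref{fct:count:recoveringPolynomial} recovers all $N_{j_1,\dots,j_q}$ using $\Oh\bigl((\prod_i(d_i+1))^3\bigr)=|V|^{\Oh(1)}$ arithmetic operations (here $q$ being constant is essential), and the final answer $P(w_1,\dots,w_q)=\sum_{j_1,\dots,j_q} N_{j_1,\dots,j_q}\prod_i w_i^{j_i}$ is then computed directly.

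Finally one checks this is a pwa-reduction: there are $\prod_i(d_i+1)\le(|V|+1)^q=|V|^{\Oh(1)}$ oracle calls, each on a graph $G'_{\vec y}$ of size $|V|^{\Oh(1)}$ (each gadget has $\Oh(d_i)\le\Oh(|V|)$ vertices, and each attached provider has constant size since $\sigma,\rho$ are fixed); all new relations have arity $2$, so arity is preserved; and a path decomposition of $G'_{\vec y}$ of width $\pw(G)+\Oh(1)$ is obtained from one of $G$ (in which relation scopes are cliques) by inserting, after some bag $B\ni v$, the short sequence of bags $B\cup\{u_j\}\cup V(\text{provider of }u_j)$ for $j=1,\dots,e_i$, processed one at a time, each enlarging $B$ only by a constant. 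The only real obstacle is precisely this arity control: realizing arbitrary evaluation points $y_i$ would require a relation of arity $\Theta(\log y_i)$ to single out $y_i$ of the $2^{e_i}$ configurations, so I commit to powers of two as evaluation points, which realize the needed integer multipliers using binary-arity relations only.
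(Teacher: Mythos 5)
Your proposal is correct and follows essentially the same route as the paper's proof: treat the weighted solution count as a polynomial in the $q$ weights, realize evaluation points of the form $2^{e}$ by attaching, to each weighted vertex, $e$ relation-guarded copies of the parsimonious $\{\sigma_{\sigMin},\rho_{\rhoMin}\}$-provider (so a selected vertex gains a factor $2^{e}$ and an unselected one a factor $1$, without receiving selected neighbors), recover the coefficients via \cref{fct:count:recoveringPolynomial}, and evaluate at the original weights, with the same bag-duplication argument for pathwidth. The only minor imprecision is the claim that all new relations have arity $2$ — the relations inside the parsimonious providers have constant (not necessarily binary) arity — but since the providers have constant size this still yields an arity-preserving reduction, exactly as in the paper.
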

\begin{proof}
  We use known interpolation techniques
  to remove the weights.
  Let $G=(V,E,\CC)$ be the graph with relations, and let $\wt$ be the vertex weight function
  of a given $q$-vertex-weighted \srCountDomSetRel instance $I$.
  Let $w_1, \dots, w_q$ be the distinct values of the vertex weights,
  that is, $\{w_1, \dots, w_q\} = \{ \wt(v) \mid v \in V \}$.

  We replace each weight $w_i$ by a variable $x_i$,
  and treat the sought-after weighted sum of solutions of $I$ as a polynomial $P$ in the $q$ variables $x_1,\dots,x_q$.
  Observe that there are at most $n$ vertices,
  and hence, the total degree is at most $n$ for each variable.
  Hence, if we can realize all combinations of $n+1$ different weights for each variable $x_i$,
  then we can use \cref{fct:count:recoveringPolynomial}
  to recover the coefficients of $P$ in time $\Oh((n+1)^{3q})$
  which is polynomial in $n$ as $q$ is a constant.
  Then, we can output $P(w_1, \dots, w_q)$ to recover the original solution.

  It remains to realize $n+1$ different weights for each variable.
  For this, it suffices
  to realize weights of the form $2^\ell$.

  Let $v$ be the vertex for which we want to realize weight $\wt(v)=2^\ell$.
  For this, we introduce $\ell$ new vertices $v_1,\dots,v_\ell$
  which are all portals of a parsimonious
  $\{\sigma_{\sigMin}, \rho_{\rhoMin},\}$-provider with relations,
  which exists by \cref{lem:count:parsFillingGadget}.
  Moreover, for each $j=\numb{\ell}$, we add a relational constraint $C_j$ with scope $\{v,v_j\}$ that ensures that if $v$ is unselected,
  then $v_j$ must also be unselected, but if $v$ is selected, then $v_j$ can be either selected or unselected.
  Whenever $v$ is selected,
  this construction contributes a factor of $2^\ell$ to the solution,
  whereas if $v$ is unselected, it only contributes a factor of $1$.

  Observe that the arity of the resulting graph increases by at most a constant
  (or stays unchanged if it is already larger)
  as $C_j$ has arity $2$ and the degrees of the relations
  in the parsimonious $\{\sigma_{\sigMin},\rho_{\rhoMin}\}$-provider
  are bounded by constants as the size of the provider is constant.

  It is straightforward to see that this modification
  does not change the pathwidth too much.
  Indeed, for each weighted vertex $v$ where we realize weight $2^\ell$,
  we pick an arbitrary bag containing $v$
  and duplicate this bag for each $j\in \numb{\ell}$.
  Then, we add $v_j$
  and the corresponding $\{\sigma_{\sigMin}, \rho_{\rhoMin}\}$-provider
  together with the relations to the $j$th copy of the bag containing $v$.
  By the size bound for the $\{\sigma_{\sigMin}, \rho_{\rhoMin}\}$-provider,
  the claim follows.
\end{proof}

Combining all the previous results,
we obtain the lower bound for the counting version in terms of $\encoder{A}$s
(\cref{lem:count:lowerBoundWhenHavingSuitableGadget}),
which we restate here for convenience.
\lowerBoundForCountDomSetRel*
\begin{proof}
	The basic idea of the proof is to combine the reductions from
	\cref{lem:count:makeSigmaCofinite,lem:count:makeRhoCofinite,%
		lem:count:relationWeightedToVertexWeighted,%
		lem:count:vertexWeightedToUnweighted}
	with the lower bound from
	\cref{lem:count:intermediateLowerBoundWhenHavingSuitableGadget}.
  If both $\sigma$ and $\rho$ are finite or simple cofinite, then
  the lower bound follows directly from
  \cref{lem:count:intermediateLowerBoundWhenHavingSuitableGadget}.
  So, let us assume that at least one of $\sigma$ or $\rho$ is cofinite, but not simple cofinite.
  In fact, let us assume that both $\sigma$ and $\rho$ are cofinite, but not simple cofinite
  as the other cases can be treated analogously.

  Let $\sigma' \subseteq \sigma$ and $\rho'\subseteq \rho$
  be simple cofinite sets such that
  $\max(\NN \setminus \sigma) = \max(\NN \setminus \sigma')$ and
  $\max(\NN \setminus \rho) = \max(\NN \setminus \rho')$.

  Fix some arbitrary $\eps > 0$
  and assume that \#SETH holds.
  We know by \cref{lem:count:intermediateLowerBoundWhenHavingSuitableGadget},
  that there is a constant $d_0$ such that
  \CountDomSetRel{\sigma'}{\rho'} cannot be solved in time
  $(\abs{A}-\eps)^{k+\Oh(1)} \cdot n^{\Oh(1)}$ on graphs of size $n$
  and arity at most $d_0$,
  even if the input is given with a path decomposition of width $k$.

  Let $G$ be an \CountDomSetRel{\sigma'}{\rho'} instance
  with arity at most $d_0$.
  Note that, by definition, $G$ is an instance of
  $2$-relation weighted \CountDomSetRel{\sigma'}{\rho'}.
  We sequentially use the \pwar-reductions from
	\cref{lem:count:makeSigmaCofinite,lem:count:makeRhoCofinite,%
		lem:count:relationWeightedToVertexWeighted,%
		lem:count:vertexWeightedToUnweighted}
	to obtain polynomially (in $n$) many instances $H_i$ of \srCountDomSetRel
  whose solutions can be used to compute a solution for
  \CountDomSetRel{\sigma'}{\rho'} on input $G$.

  First observe that, for each intermediate problem involving weights,
  the number of different weights is bounded by a constant.
  Since the \pwar-reductions are pathwidth-preserving
  (see \cref{obs:pwredtransitivity}),
	we have $\pw(H_i) \le \pw(G) + \Oh(1)$
	and $\abs{H_i} \le \abs{G}^{\Oh(1)}$.
	Similarly, since all reductions are also arity-preserving, there is a constant $d'$, such that, for each $i$, we have $\ar(H_i) \le \ar(G) + d'$.

  We set $d\deff d_0 + d'$ as the constant in the lemma statement.
  Now assume there is an algorithm solving \srCountDomSetRel
  on instances of size $N$ and arity at most $d$ in time
  $(\abs{A}-\eps)^{k + \Oh(1)} \cdot N^{\Oh(1)}$
  if the input is given with a path decomposition of width $k$.

	We apply this algorithm to all instances $H_i$
  to recover the solution for \CountDomSetRel{\sigma'}{\rho'} on input $G$.
  This can be done in time at most
	\begin{align*}
		\sum_i (\abs{A} - \epsilon)^{\pw(H_i) + \Oh(1)} \cdot \abs{H_i}^{\Oh(1)} + \abs{G}^{\Oh(1)}
		= (\abs{A} - \epsilon)^{\pw(G) + \Oh(1)} \cdot \abs{G}^{\Oh(1)}
		.
	\end{align*}
	By \cref{lem:count:intermediateLowerBoundWhenHavingSuitableGadget},
	this immediately contradicts \#SETH, and hence, finishes the proof.
\end{proof}

\section{Realizing Relations: Basics and Decision Version}
\label{sec:realizingRelations}
In \cref{sec:LBforRelations}, we established lower bounds under SETH
for the intermediate problem \srDomSetRel and its counting version.
In order to show that these lower bounds transfer
to the original problem versions (without the relations),
we show in this section how to express arbitrary relations
using graphs with portals.
In particular, for a
given \(d\)-ary relation~\(R\), we wish to construct a graph with portals whose compatible language
is \emph{equivalent} to \(R\). We call such a graph a \emph{realization of \(R\)}, formally defined in \cref{def:realization}.
Crucially, we wish to add (a realization of) a relation to a set of vertices,
without invalidating the $\sigma$- and $\rho$-constraints of these vertices.
Hence, we require that a realization does not add any \emph{selected} neighbors to the
portal vertices.

We first show in \cref{sec:RelToHW1} how to model arbitrary relations using $\HWeq{1}$ relations only. We show a parsimonious reduction since we use it both in the decision setting and in the counting setting. A crucial requirement for this reduction is the restriction to arbitrary relations whose arity is bounded by a constant. As a side note, this is also the reason why we had to establish the lower bound in \cref{lem:lowerBoundWhenHavingSuitableGadget} for this setting.

Afterward, in \cref{sec:realizingRelations:decision},
we prove the result for the decision version,
that is, \cref{lem:lower:dec:removingRelations}.
For this, we show how to realize $\HWeq{1}$ relations using graphs with portals, where the size of such a realization is always
bounded by a function of the arity $d$,
and the values of $\rhoMax$ and $\sigMax$.
This then gives the sought-after pathwidth-preserving reduction
from the decision problem with relations to the one without.

For the counting version, the main result is \cref{thm:RelfromDS}, and \cref{sec:realizingrelationsforcounting} is devoted to proving this result. Here, the situation is more complicated.
Indeed, the proof of \cref{lem:lower:count:removingRelations} is very technical and
splits into a number of cases and sequences of reductions. In the counting version, it is not plausible that relations can be directly realized by attaching some graph gadget. Instead, we heavily rely on interpolation to isolate the number of selections we wish to determine.
An overview of the corresponding reductions is given in \cref{fig:count:removingRelations}.

\subsection{\boldmath Realizing Relations using
\texorpdfstring{\HWeq{1}}{Hamming Weight One Relations}}
\label{sec:RelToHW1}

Recall from \cref{def:hammingWeightOneAndEquality},
that we denote by $\HWeq[d]{1}$ the $d$-ary \emph{Hamming weight one} relation
where exactly one portal must be selected,
and by $\EQ{d}$ the $d$-ary \emph{equality} relation
where either all or no portals must be selected.
By \HWset{1} relations, we refer to the set of relations \HWeq[d]{1} for some arity $d$. Similarly, we talk about \EQset relations when we mean the set of equality relations of any arity.

In a first step, we show how to replace arbitrary relations
by gadgets using only \HWset{1} relations and \EQset relations.
In a second step, we then remove the \EQset relations
by modeling them using only \HWset{1} relations.

To formally state the steps of the reductions,
we introduce some notation.
Recall that \srCountDomSetRel allows arbitrary relations to appear.
For some set of relations $\CR$,
we write \srCountDomSetRel[\CR] for the restricted problem
where only relations in $\CR$ are used.
Slightly abusing notation, we write \srCountDomSetRel[{\HWset{1}},\EQset]
for the restriction of \srCountDomSetRel
to instances in which all relations are either \EQset relations or \HWset{1} relations (of any arity).

In order to ensure that certain vertices can obtain a feasible number of selected
neighbors, we again use (parsimonious) $\{\sigma_s,\rho_r\}$-providers for some $s\in \sigma$ and $r\in \rho$.
Recall that said providers exist by \cref{lem:count:parsFillingGadget}.
However, the proof of \cref{lem:count:parsFillingGadget} uses arbitrary relations to
obtain a parsimonious provider.
As we intend to remove just these arbitrary relations, we first show in \cref{lem:count:parsFillingGadgetWithHWone} how to obtain parsimonious $\{\sigma_s,\rho_r\}$-providers that use only \HWset{1} relations.
The proof is based on \cref{lem:happyGadget}, which we restate for the reader's convenience.

\lemHappyGadgetCounting*

\begin{lemma}
	\label{lem:count:parsFillingGadgetWithHWone}
	\label{clm:happyGadget}
  Let $\sigma,\rho$ denote non-empty sets with $\rho \neq \{ 0 \}$.
	There are $s\in \sigma$ and $r \in \rho$ such that
	there is a parsimonious $\{\sigma_s, \rho_r\}$-provider
	that uses only \HWeq[2]{1} relations.
\end{lemma}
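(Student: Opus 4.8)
The plan is to rigidify the (non-parsimonious) $\{\sigma_s,\rho_r\}$-provider of \cref{lem:happyGadget} by adding $\HWeq[2]{1}$ constraints, exploiting the fact that this provider comes equipped with two \emph{disjoint} partial solutions whose union is its entire vertex set. First I would fix the parameters: since $\rho$ is non-empty and $\rho\neq\{0\}$, there is some $r\in\rho$ with $r\ge 1$, and I pick any $s\in\sigma$. Let $(H,u)$ be the $\{\sigma_s,\rho_r\}$-provider from \cref{lem:happyGadget}, with disjoint $(\sigma,\rho)$-sets $X,Y$ such that $X\cup Y=V(H)$, $u\in X$ with $|N(u)\cap X|=s$, and $u\notin Y$ with $|N(u)\cap Y|=r$; note $|X|=|Y|=r(s+1)\ge 1$.

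Next I would build $G$ from $H$ by adding, for every pair $(v,v')\in X\times Y$, a relational constraint with scope $\{v,v'\}$ and relation $\HWeq[2]{1}$ — introducing no new vertices — and declare $u$ to be the single portal. Since $|X|$ and $|Y|$ depend only on $\sigma,\rho$, this adds only constantly many arity-$2$ relations, so $G$ has constant size and the construction is trivially efficient and pathwidth-preserving.

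The correctness argument has an easy half and a hard half. The easy half is that both $X$ and $Y$ are partial solutions of $(G,\{u\})$: each is already a $(\sigma,\rho)$-set of $H$, and each satisfies every added constraint because for $(v,v')\in X\times Y$ exactly one of $v,v'$ lies in $X$ (respectively in $Y$); thus $X$ witnesses $\sigma_s$ and $Y$ witnesses $\rho_r$. The hard half — and the step I expect to be the main obstacle — is parsimony: I must show that $X$ and $Y$ are the \emph{only} partial solutions of $(G,\{u\})$. Here I would use that $V(G)=V(H)=X\sqcup Y$, so any partial solution $S$ is determined by $S\cap X$ and $S\cap Y$; fixing one vertex $v_0'\in Y$, the constraints $\HWeq[2]{1}(\{v,v_0'\})$ over all $v\in X$ force $S\cap X\in\{X,\emptyset\}$, and symmetrically fixing one $v_0\in X$ forces $S\cap Y\in\{Y,\emptyset\}$, with the two cases complementary. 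Hence $S\in\{X,Y\}$, and since $u\in X\setminus Y$ this $S$ is uniquely determined by the selection status of $u$. This yields $L(G,\{u\})=\{\sigma_s,\rho_r\}$ with a unique witness for each string, i.e.\ a parsimonious provider.

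Finally I would note that nothing else needs to be checked: no new vertices are added, so feasibility of all non-portal vertices is inherited verbatim from $H$; the added relations are $\HWeq[2]{1}$ relations of arity $2$, as required; and invoking \cref{lem:happyGadget} rather than the weaker \cref{lem:fillinggadget} is essential precisely because the parsimony argument relies on having a pair of disjoint $(\sigma,\rho)$-sets that cover all vertices.
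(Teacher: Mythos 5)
Your proposal is correct and is essentially the paper's own proof: the paper also takes the provider from \cref{lem:happyGadget} with its two disjoint solutions partitioning the vertex set and places a \HWeq[2]{1} relation between every pair consisting of one vertex from each solution, which forces exactly the two partial solutions $X$ and $Y$. Your explicit parsimony argument (fixing one vertex on each side to pin down $S\cap X$ and $S\cap Y$) just spells out the step the paper states in one line.
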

\begin{proof}
	There are $s\in \sigma$ and $r\in \rho$ with $s\ge 0$ and $r\ge 1$
	since both $\sigma$ and $\rho$ are non-empty,
	and additionally $\rho\neq \{0\}$.
	Then, by \cref{lem:happyGadget}, there is a graph $G$ with a vertex $w$
	such that there are two solutions $S_0$ and $S_1$
	partitioning the vertex set of $G$ and satisfying $w \notin S_0$,
	$\abs{N(w) \cap S_0} = r$,
	$w \in S_1$, and $\abs{N(w) \cap S_1} = s$.

	Then each pair of vertices $v\in S_0$ and $u\in S_1$
	is subject to a relation \HWeq[2]{1}.
	These relations ensure that in each partial solution
	either all vertices from $S_0$ are selected and none from $S_1$,
	or all vertices from $S_1$ are selected and none from $S_0$.
	The gadget does not contain any other vertices so
	it has precisely these two partial solutions.
\end{proof}

\paragraph*{Handling Arbitrary Relations.}
Now our goal is to
replace arbitrary relations
by graphs using only \HWset{1} and \EQset relations
such that the ``behavior of the graph'' does not change.
For this, we formally introduce the notion of a realization of a relation.
This is a graph with portals (and possibly some restricted set of relations) that ``simulates'' the relation.

\begin{definition}[Realization of a relation]\label{def:realization}
	For a set of vertices $S$ with $d=\abs{S}$,
	let $R\subseteq 2^{S}$ denote a $d$-ary relation.
	For an element \(r \in R\), we
	write \(x_r\) for the length-\(d\) string
	that is \(\sigma_0\) at every position \(v\in r\),
	and \(\rho_0\) at the remaining positions, i.e.,
	\[
		x_r\position{v} \deff
			\begin{cases}
				\sigma_0 & \text{if \(v \in r\)},\\
				\rho_0   & \text{otherwise.}
			\end{cases}
	\]
	We set $L_R \deff \{ x_r \mid r \in R \}$.
	Let $G=(V,E,\CC)$ be a graph with relations.
	Let $U=\{ \port_1,\dots,\port_d \}\subseteq V$ be a set of portals of $G$.

	Slightly overloading notation, we say that $H$ \emph{realizes} $R$
	if $H$ realizes $L_R$ (as defined in \cref{def:providersWithRelations}).
	We say that $R$ is realizable
	if there is a graph with $d$ portals that realizes $R$.
	Such a realization is \emph{parsimonious} if, for each $x\in L_R$,
	there is exactly one partial solution in $H$ that witnesses $x$.
\end{definition}

With this definition at hand,
we first show that we can realize arbitrary relations
using \HWset{1} and \EQset relations.

\begin{lemma}
  \label{lem:realizing:arbitraryRelations}
	Let $\sigma,\rho$ denote non-empty sets with $\rho \neq \{ 0 \}$,
  and let $Q \subseteq 2^{\numb{d}}$ denote an arbitrary relation.
  Then, $Q$ is parsimoniously realizable by a graph $\CG$
	with relations and portals with the following properties:
  \begin{itemize}
  	\item All relations used in $G$ are \HWset{1} or \EQset relations of arity at most $2^d+1$.
  	\item The size of $\CG$ is $\Oh(2^d\cdot d)$.
  \end{itemize}
\end{lemma}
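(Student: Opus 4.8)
The plan is to give an explicit combinatorial construction, starting from a collection of ``selector'' vertices, one per element of $Q$, wired together so that exactly one of them is active, and propagating this choice to the portals via equality relations. Concretely, for every $r \in Q$, introduce one new vertex $y_r$, and add an \HWeq[|Q|]{1} relation on the set $\{y_r \mid r \in Q\}$, so that in any partial solution exactly one $y_r$ is selected; this $y_r$ ``chooses'' the selection $r$ of the portals. (If $Q = \emptyset$, the statement is vacuous by adding a single portal subject to \HWeq[1]{1} and \HWeq[1]{0}, forcing no solution; if $Q = \{\emptyset\}$ we simply force all portals unselected; these degenerate cases are handled separately.) To connect $y_r$ to the portals, for each portal $\port_v$ with $v \in \numb{d}$ I would do the following: let $R_v \deff \{ r \in Q \mid v \in r\}$ be the set of selections that put $\port_v$ into the solution. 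Add an \EQset relation of arity $|R_v| + 1$ on the set $\{\port_v\} \cup \{y_r \mid r \in R_v\}$ — wait, this does not quite work because different $y_r$ should not be forced equal. Instead the cleaner route is: route through \HWset{1} relations directly. For each portal $\port_v$, introduce a ``shadow'' vertex $\port_v'$ and an \HWeq[2]{1} relation on $\{\port_v, \port_v'\}$ forcing $\port_v$ and $\port_v'$ to have opposite selection status; then add one more vertex per pair $(v, r)$ and wire through \EQset relations so that $\port_v$ is selected exactly when the active $y_r$ satisfies $v \in r$. The combinatorics here is routine but must be done carefully; the key design principle is that \EQset relations let us copy the ``is $y_r$ active'' bit, and \HWeq{1} relations let us take an ``exclusive or'' / selection among the copies.

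The second ingredient is ensuring all the newly-introduced \emph{auxiliary} vertices (the $y_r$'s, the shadow vertices, and any vertex appearing in an \EQset scope) themselves satisfy the $(\sigma,\rho)$-constraints, and — crucially — that the realization adds \emph{no selected neighbor} to the portals $\port_1, \dots, \port_d$. The way to achieve ``no selected neighbor to a portal'' is to never make a portal adjacent (in the underlying graph) to any auxiliary vertex: all interaction between portals and auxiliary vertices goes through relational scopes, which are \emph{not} edges. So the portals $\port_v$ are isolated vertices of the underlying graph $\CG$ (they get all their real neighbors from the host graph to which $\CG$ is attached), and $\CG$ merely imposes relational constraints on them. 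This is consistent with \cref{def:realization}, where $x_r$ records $\sigma_0$ or $\rho_0$ at portal positions, meaning the realization contributes zero selected neighbors. For the remaining auxiliary vertices, I would attach to each one a parsimonious $\{\sigma_s, \rho_r\}$-provider that uses only \HWeq[2]{1} relations, which exists by \cref{lem:count:parsFillingGadgetWithHWone} (using $\rho \neq \{0\}$): whichever selection status the auxiliary vertex is forced into by the \HWset{1}/\EQset relations, the attached provider supplies exactly the right number of selected neighbors (either $s$ or $r$) to satisfy its constraint, and parsimoniously so.

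The third step is verification. I would check: (i) every string in $L_Q$ is compatible with $(\CG, U)$ — given $r \in Q$, select $y_r$, deselect the other $y_{r'}$, propagate through the \EQset copies, set each $\port_v$ to selected iff $v \in r$, and complete each auxiliary vertex via its attached provider in the unique way; (ii) conversely, every partial solution of $(\CG, U)$ restricts on $U$ to some $x_r$ with $r \in Q$ — the \HWeq{1} relation on the $y_r$'s forces exactly one active selector, and the \EQset wiring forces the portal pattern to be exactly $r$; (iii) parsimony — since the selector is unique, the \EQset copies are determined, and the providers are parsimonious, there is exactly one witnessing partial solution per $x_r$. Finally I would tally the bounds: there are $O(2^d)$ auxiliary vertices (at most one $y_r$ per $r \in Q$, $|Q| \le 2^d$; at most $O(2^d \cdot d)$ copy-vertices for the portal wiring; each attached provider has size $O(1)$ depending only on the fixed $\sigma, \rho$), giving total size $O(2^d \cdot d)$; the largest relation is the \HWeq[|Q|]{1} on the selectors, of arity $|Q| \le 2^d$, and the \EQset scopes have arity at most $2^d + 1$, so all relations have arity at most $2^d + 1$.

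\textbf{Main obstacle.}
The part I expect to be most delicate is the wiring between the selector vertices $y_r$ and the portals using only \HWset{1} and \EQset relations — in particular, realizing the ``$\port_v$ is selected iff the unique active $y_r$ has $v \in r$'' logic, which is a kind of multiplexer, without ever adding a graph-edge to a portal and without accidentally creating extra spurious partial solutions. The natural encoding uses, for each $r$, an \EQset relation tying together all the copies of the bit ``$y_r$ is active'', and then for each portal an \HWset{1}-style relation that picks out whether the active one is in $R_v$; getting the arities right (so that the $\EQset$ scopes are bounded by $2^d+1$) and confirming parsimony of this multiplexer is where the careful bookkeeping lives. Everything else — attaching providers, bounding sizes, and the two compatibility directions — is routine given the earlier lemmas.
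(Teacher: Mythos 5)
Your overall architecture does match the paper's: one selector vertex per element of $Q$ constrained by a single $\HWeq[\abs{Q}]{1}$ relation, propagation of that choice to the portals using only \EQset and \HWset{1} relations (never graph edges incident to portals, so no selected neighbors are added), parsimonious $\{\sigma_s,\rho_r\}$-providers from \cref{lem:count:parsFillingGadgetWithHWone} attached to every auxiliary vertex, and the same size/arity tallies. The problem is that the one step you yourself flag as the crux --- the multiplexer turning ``selector $q_i$ is active'' into ``exactly the portals indexed by $q_i$ are selected'' --- is never actually constructed. Your first attempt (an \EQset relation on $\{\port_v\}\cup\{y_r \mid v\in r\}$) you correctly discard; your second sketch (a shadow vertex $\port_v'$ of opposite status via $\HWeq[2]{1}$, ``one more vertex per pair $(v,r)$'', and unspecified \EQset wiring) stays at the level of a design principle, and it is precisely there that the no-spurious-solutions and parsimony claims in your step (ii) and (iii) would have to be checked. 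As written, the proof is therefore incomplete at its central point.

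The idea your sketch is missing is \emph{complementation}. For each $q_i\in Q$, the paper introduces auxiliary vertices $s^{(i,j)}$ only for the indices $j\in\numb{d}\setminus q_i$ (indicators that the active tuple does \emph{not} contain $j$), plus the selector $t^{(i)}$, and ties them together with a single \EQ{k_i+1} relation of arity $\abs{\numb{d}\setminus q_i}+1\le d+1$. Then, for each $j\in\numb{d}$, one \HWset{1} relation on $\{\port_j\}\cup\{s^{(\star,j)}\}$ performs the entire multiplexing: if $t^{(i)}$ is the unique active selector, the \EQset relations force $s^{(i,j)}$ to be selected exactly when $j\notin q_i$ and all $s^{(i',\star)}$ with $i'\neq i$ to be unselected, so the Hamming-weight-one constraint forces $\port_j$ selected iff $j\in q_i$; conversely every relation-respecting selection is of this form, and the selection status of every auxiliary vertex (hence, via the parsimonious providers, the whole witnessing partial solution) is uniquely determined. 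No shadow vertices and no copy vertex for every pair $(v,r)$ are needed, and the arities come out as at most $d+1$ for the \EQset relations and at most $\abs{Q}+1\le 2^d+1$ for the \HWset{1} relations (your claim that the \EQset scopes may need arity $2^d+1$ is an artifact of the unspecified wiring). With this one device in place, the rest of your outline --- both compatibility directions, parsimony, and the $\Oh(2^d\cdot d)$ size bound --- goes through exactly as you describe.
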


\begin{figure}[p]
  \centering
  \includegraphics[scale=2.55]{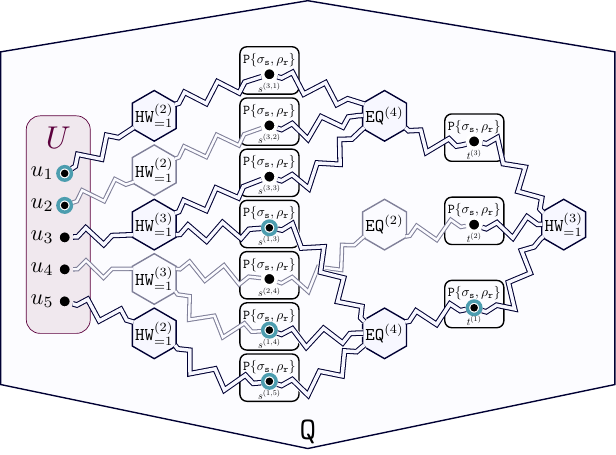}
  \includegraphics[scale=1.25]{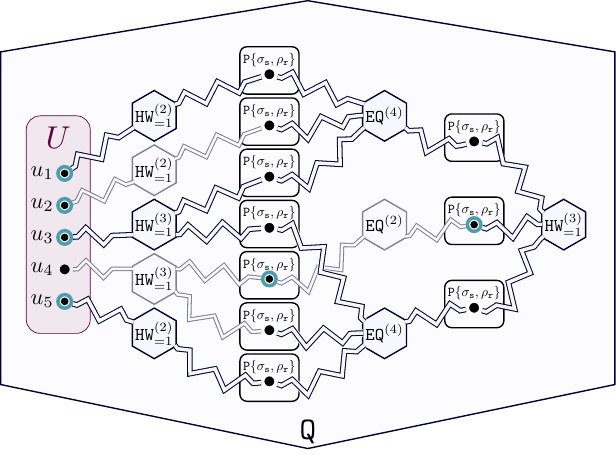}
  \includegraphics[scale=1.25]{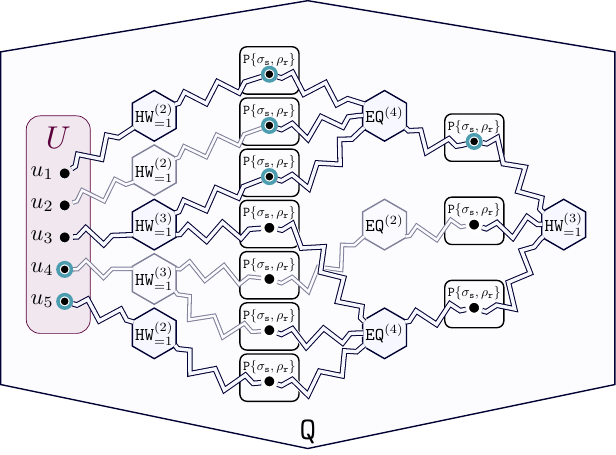}
  \caption{
    The realization of $Q=\{ \{1,2\}, \{1,2,3,5\},\{4,5\} \}$
    and its three partial solutions.
    For providers, we depict only their portal vertices;
    also we write \(\tt P\{\sigma_s,\rho_r\}\)
    for a parsimonious \( \{\sigma_{s},\rho_{r}\}\)-provider
		from \cref{lem:count:parsFillingGadgetWithHWone}.
    Further, hexagons depict relations
    that are realized between vertices.
  }
  \label{fig:arbitraryrelation}
\end{figure}

Observe that the proof uses the construction from \cite[Lemma~3.3]{CurticapeanM16}.
However, we need to modify the construction from \cite[Lemma~3.3]{CurticapeanM16} to account for the fact that we are selecting vertices rather than edges.

\begin{proof}
	Suppose $Q=\{q_1,\dots,q_{\abs{Q}}\}$ where $q_i \subseteq \numb{d}$.
	We construct a graph with portals $\CG=(G,\Port)$ with $\Port=\{\port_1,\dots,\port_d\}$.
	For each set \(q_i\),
	we introduce \( k_i\deff \abs{\numb{d} \setminus q_i} \)
	independent copies \( (H^{(i,j)}, \{s^{(i,j)}\}) \) ($j\in \numb{d}\setminus q_i$)
	plus an extra copy \( (H^{(i)}, \{t^{(i)}\}) \)
	of a parsimonious \(\{\sigma_s,\rho_r\}\)-provider
	with \HWeq[2]{1} relations  to \(G\) (as given by \cref{lem:count:parsFillingGadgetWithHWone}).
	They are connected as follows.
	We put an \EQ{k_i + 1} relation
	between the vertices $\{s^{(i,j)} \mid j\in \numb{d}\setminus q_i\}$ and \(t^{(i)}\).
	Additionally, we put a \(\HWeq[\abs{Q}]{1}\) relation between the vertices
	$\{t^{(1)}, \ldots, t^{(\abs{Q})}\}$.
	Finally, for each \(j \in \numb{d}\),
	we put a \(\HWeq{1}\) relation between \(\port_j\)
	and all vertices of the form \(s^{(\star,j)}\).
	Consult \cref{fig:arbitraryrelation} for a visualization of an example.

	\begin{claim}
		The constructed graph \(H\) parsimoniously realizes \(Q\).
	\end{claim}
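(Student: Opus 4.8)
The plan is to verify that the constructed graph $H$ (with its portal set $\Port = \{\port_1,\dots,\port_d\}$) has exactly the compatible language $L_Q$, and that each compatible string has a unique witnessing partial solution. First I would fix any string $x \in L_Q$; by definition, $x = x_{q_i}$ for some $i \in \numb{\abs Q}$, meaning $x\position{v} = \sigma_0$ for $v \in q_i$ and $x\position{v} = \rho_0$ for $v \notin q_i$. I would then exhibit the corresponding partial solution: select the vertices $t^{(i)}$, select each $s^{(i,j)}$ for $j \in \numb d \setminus q_i$ (so the $\EQ{k_i+1}$ relation on $\{s^{(i,j)} \mid j \notin q_i\} \cup \{t^{(i)}\}$ is satisfied in its ``all selected'' mode), and inside the provider copies $H^{(i,j)}$ (resp.\ $H^{(i)}$) take the unique partial solution witnessing the $\sigma_s$-state of $s^{(i,j)}$ (resp.\ of $t^{(i)}$). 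For every other index $i' \neq i$, select no vertex $t^{(i')}$ — so the $\HWeq[\abs Q]{1}$ relation on $\{t^{(1)},\dots,t^{(\abs Q)}\}$ is satisfied — and for $j' \in \numb d \setminus q_{i'}$, select no $s^{(i',j')}$ and take the provider's $\rho_r$-state inside $H^{(i',j')}$. Finally, since for each $j \in \numb d$ the vertices of the form $s^{(\star,j)}$ that are selected are exactly those $s^{(i',j)}$ with $j \notin q_{i'}$, the $\HWeq{1}$ relation between $\port_j$ and $\{s^{(\star,j)}\}$ forces: $\port_j$ is selected iff none of the $s^{(\star,j)}$ is selected. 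I would check that ``$\port_j$ selected'' happens precisely when $j \in q_i$ — but this requires the observation that $s^{(i',j)}$ exists only for $j \notin q_{i'}$, so among copies $H^{(i',j)}$ the one with $i'=i$ is absent exactly when $j \in q_i$; one must argue the selected $s^{(\star,j)}$ are nonempty iff some $i' \neq i$ has $j \notin q_{i'}$ — and here a subtlety arises if several $q_{i'}$ miss coordinate $j$. The clean argument: among all copies $H^{(i',j)}$ (which exist for all $i'$ with $j \notin q_{i'}$), $s^{(i',j)}$ is selected iff $t^{(i')}$ is selected iff $i' = i$; so the set of selected $s^{(\star,j)}$ is nonempty iff $j \notin q_i$, i.e.\ iff $x\position{j} = \rho_0$, matching the $\HWeq{1}$ constraint that $\port_j \notin S$. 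Crucially, the portals receive no selected neighbors (the $\HWeq{1}$ relations are relational constraints, not edges, and the $\port_j$ have no graph-edges in $H$ at all), so $\port_j$ has exactly $0$ selected neighbors, consistent with states $\sigma_0$ and $\rho_0$. This establishes $L_Q \subseteq L(H,\Port)$.

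For the converse, I would take an arbitrary partial solution $S$ of $(H,\Port)$ and show the induced string lies in $L_Q$. The $\HWeq[\abs Q]{1}$ relation forces exactly one $t^{(i)} \in S$. The $\EQ{k_i+1}$ relations then force: for this $i$, all $s^{(i,j)}$ ($j \notin q_i$) are selected, and for every $i' \neq i$, all $s^{(i',j)}$ ($j \notin q_{i'}$) are unselected. Inside each provider copy, the partial solution must restrict to a partial solution of that provider with the appropriate selection status of its portal; since the providers from \cref{lem:count:parsFillingGadgetWithHWone} are parsimonious, this restriction is uniquely determined (one witness for the selected state, one for the unselected state). Now for each $j \in \numb d$: the $\HWeq{1}$ relation between $\port_j$ and $\{s^{(\star,j)}\}$ is satisfied, and by the above the selected members of $\{s^{(\star,j)}\}$ are exactly $\{s^{(i,j)}\}$ if $j \notin q_i$ and $\emptyset$ if $j \in q_i$. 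Hence $\port_j \in S$ iff $j \in q_i$, so the induced string on $\Port$ equals $x_{q_i} \in L_Q$. This proves $L(H,\Port) \subseteq L_Q$, so $H$ realizes $Q$.

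Parsimony follows because, once $i$ is chosen (and there are $\abs Q$ disjoint possibilities, one per string $x_{q_i}$), every other ingredient of $S$ is forced: the $t^{(i')}$ and $s^{(i',j)}$ selection status is determined by the $\HWeq[\abs Q]{1}$ and $\EQ{}$ relations as above, and the interior of each parsimonious provider copy is then uniquely determined. So each $x \in L_Q$ has exactly one witnessing partial solution, giving a parsimonious realization. For the stated size and arity bounds: the $\EQ{k_i+1}$ relations have arity $k_i + 1 \le d+1$, the single $\HWeq[\abs Q]{1}$ relation has arity $\abs Q \le 2^d$ (actually $\le 2^d$ since $Q \subseteq 2^{\numb d}$), so the maximum arity of a relation is $\max\{2^d, d+1, 2\} \le 2^d + 1$ (the $+1$ absorbing the provider-internal $\HWeq[2]{1}$ relations of arity $2$); and the number of provider copies is $\sum_i (k_i + 1) \le \abs Q \cdot (d+1) = \Oh(2^d \cdot d)$, each of constant size (depending only on $\sigma,\rho$), so $|\CG| = \Oh(2^d \cdot d)$.

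The main obstacle I anticipate is the careful bookkeeping in the argument that $\port_j \in S$ iff $j \in q_i$ — in particular handling coordinates $j$ that are missing from several sets $q_{i'}$, and correctly tracking which $s^{(\star,j)}$ copies exist. The key insight that makes this routine is that $s^{(i',j)}$ is selected \emph{if and only if} $t^{(i')}$ is selected (forced by the $\EQ{}$ relation), so the entire ``selected'' pattern of the $s$-vertices is controlled by the single selected $t^{(i)}$; the nonexistence of $s^{(i,j)}$ for $j \in q_i$ is exactly what produces the correct portal behavior. A secondary point requiring care is noting that the portals have no incident graph edges (only relational constraints), so attaching this realization to an existing graph adds no selected neighbors to the portal vertices — which is the property required for using realizations as gadgets without invalidating $(\sigma,\rho)$-constraints.
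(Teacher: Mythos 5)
Your proof is correct and follows essentially the same route as the paper: exhibit the canonical selection for each $q_i \in Q$, use the $\HWeq[\abs{Q}]{1}$ and \EQset{} relations to show every partial solution is of this form, and invoke parsimony of the $\{\sigma_s,\rho_r\}$-providers for uniqueness. Your added remarks on size and arity are not part of the claim (the paper argues them separately), and note only that the arity bound $2^d+1$ is attained by the portal relations on $\port_j$ and all $s^{(\star,j)}$, which have arity up to $\abs{Q}+1$, not by the single $\HWeq[\abs{Q}]{1}$ relation alone.
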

	\begin{claimproof}
		Let $q_i \in Q$. Then, the following selection is a solution that ``witnesses'' $q_i$.
		We select the portal vertices $\port_j$ with $j \in q_i$.
		In addition, we select $t^{(i)}$ and the vertices \(s^{(i,j)}\) for $j \in \numb{d}\setminus q_i$.
		We do no select any other vertices \(\port_{\star}, s^{(\star,\star)}\) or
		\(t^{(\star)}\).
		Observe that, for each $i$ and $j$, the parsimonious
		\(\{\sigma_s,\rho_r\}\)-provider \( (H^{(i,j)}, \{s^{(i,j)}\}) \) ensures
		that both selecting or not selecting \(s^{(i,j)}\) can be extended to satisfy the $\sigma$- or $\rho$-constraints of \(s^{(i,j)}\), respectively. The same holds for the vertex $t^{(i)}$. So, vertices from the providers are selected such that all the corresponding $\sigma$- and $\rho$-constraints are satisfied.

		It is clear that this selection witnesses $q_i$. Next, we show that the selection is a solution, and for this, we need to show
		that our selection satisfies all relation constraints on
		the vertices of \(H\) that we introduced.

		The vertices $\{t^{(1)}, \ldots, t^{(\abs{Q})}\}$ are subject to the relation
        \(\HWeq[\abs{Q}]{1}\), and this is fine since we selected only \(t^{(i)}\) from this set.
        Since we also selected the vertices \(s^{(i,j)}\) for $j \in \numb{d}\setminus q_i$, the \(\EQset\)-relation on these vertices and \(t^{(i)}\) is satisfied;
		for all remaining \(\EQset\) relations, all corresponding vertices are unselected
		(which in turn also satisfies said \(\EQset\) relations). Finally, for each \(j \in
		\numb{d}\), we selected precisely one of \(\port_j\) (if \(j \in q_i\)) or \(s^{(i,j)}\) (if \(j
		\in \numb{d}\setminus q_i\)). Thus, the remaining \(\HWset{1}\) relations are also satisfied.
		In total, we conclude that our selection is a valid partial solution for \(H\).

		In order to show the claim, it remains to show that the realization is parsimonious. Since the used providers are parsimonious and the selection status of all vertices outside the providers is determined by the relations and the selection of vertices from $\Port$,
		the constructed partial solution is the only one that witnesses $q_i$.

		Finally, we show that there are no partial solutions that extend selections from $\Port$ that do not belong to $Q$. For this, assume that we are given some partial solution.
		As the relations are satisfied,
		there is exactly one vertex $t^{(i)}$ that is selected.
		By the $\EQset$ relations, we also get
		that all $s^{(i,j)}$ are selected for $j\in \numb{d}\setminus q_i$,
		and that all remaining $s^{(\star,\star)}$ are unselected.
		Finally, from the remaining \(\HWset{1}\) relations, we get that, for each \(j \in
		\numb{d} \setminus q_i\), the portal \(\port_j\) is unselected (as \(s^{(i,j)}\) is
		already selected), and that, for each \(j \in q_i\), the portal \(\port_j\) is selected
		(as all vertices \(s^{(i',j)}\) are unselected in this case, as \(i' \ne i\)).
		In particular, the indices of the selected portals correspond to \(q_i\), which
		completes the proof of the claim.
	\end{claimproof}

	For the bound on the size of $H$, observe that $\abs{Q}\le 2^d$, and
	the size of a parsimonious $\{\sigma_s,\rho_r\}$-provider is upper-bounded
	by a function in $\rhoMax$ and $\sigMax$
	(see \cref{lem:count:parsFillingGadgetWithHWone}), which is a constant.
	As there are at most $2^d \cdot d$ vertices $s^{(i,j)}$
	which are all the portal of some $\{\sigma_s,\rho_r\}$-provider,
	the size of the graph is bounded by $\Oh(2^d \cdot d)$.

  Moreover, the arity of the graph is bounded by $2^d+1$
	as the \HWset{1} relations have arity at most $\abs{Q}+1 \le 2^d+1$
	and the \EQset relations have arity at most $d+1$.
\end{proof}

\begin{lemma}
	\label{lem:realizing:parsiArbitraryToHWoneAndEQ}
	Let $\sigma,\rho$ denote non-empty sets with $\rho \neq \{ 0 \}$.
	For all constants $d$,
	there is a parsimonious pathwidth-preserving reduction from \srDomSetRel
	on instances of arity at most $d$
	to \srDomSetRel[{\HWset{1}},\EQset]
	on instances of arity at most $2^d+1$.
\end{lemma}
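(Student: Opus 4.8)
The plan is to prove \cref{lem:realizing:parsiArbitraryToHWoneAndEQ} by replacing every relational constraint of an input instance with a parsimonious realization supplied by \cref{lem:realizing:arbitraryRelations}. Concretely, given an instance $G = (V,E,\CC)$ of \srDomSetRel with arity at most $d$, process each $C \in \CC$ as follows: write $\scope(C) = (\port_1,\dots,\port_k)$ with $k \le d$, identify position $i$ of the ground set $\numb{k}$ with the vertex $\port_i$, and let $Q_C \deff \rel(C) \subseteq 2^{\scope(C)}$; by \cref{lem:realizing:arbitraryRelations} (applied with arity $k$) there is a graph with relations and portals $\CG_C$, on fresh non-portal vertices and with portal set $\scope(C)$, that parsimoniously realizes $Q_C$, uses only \HWset{1} and \EQset relations of arity at most $2^d+1$, and has $\Oh(2^d \cdot d)$ vertices. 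Let $G'$ be obtained from $(V,E)$ by deleting all constraints of $\CC$ and then, for every $C$, adding the vertices, edges, and relations of $\CG_C$ (distinct gadgets share only their portal vertices, all of which lie in $V$). Then $G'$ is an instance of \srDomSetRel[{\HWset{1}},\EQset] of arity at most $2^d+1$; the reduction outputs $G'$ and returns the oracle's answer unchanged. Since $d$ is a constant, $|G'| \le |G| + |\CC| \cdot \Oh(2^d d)$ is polynomial in $|G|$, so this is a polynomial-time (many-one, hence Turing) reduction.

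First I would establish a bijection between the $(\sigma,\rho)$-sets of $G$ and those of $G'$, which is exactly what \emph{parsimonious} demands and which also covers the decision statement. Given a $(\sigma,\rho)$-set $S$ of $G$, each $C$ satisfies $S \cap \scope(C) \in Q_C$, so the string $x_{S \cap \scope(C)}$ lies in $L_{Q_C} = L(\CG_C, \scope(C))$; by parsimony there is a \emph{unique} partial solution $T_C$ of $(\CG_C, \scope(C))$ witnessing it, with $T_C \cap \scope(C) = S \cap \scope(C)$, and since the witnessed string carries only $\sigma_0,\rho_0$ on the portals, $T_C$ gives \emph{no} selected neighbor to any portal. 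Put $S' \deff S \cup \bigcup_C (T_C \setminus \scope(C))$. Then $S'$ is a $(\sigma,\rho)$-set of $G'$: the count of selected neighbors of every vertex of $V$ is unchanged (gadgets contribute no selected neighbor to portals, and non-portal vertices of $V$ acquire no new neighbors at all), every non-portal vertex of each $\CG_C$ meets its $\sigma$- or $\rho$-constraint because $T_C$ is a partial solution, and all \HWset{1}/\EQset relations of each $\CG_C$ are satisfied by $T_C$. Conversely, for a $(\sigma,\rho)$-set $S'$ of $G'$, set $S \deff S' \cap V$; for each $C$ the set $S' \cap V(\CG_C)$ is a partial solution of $(\CG_C, \scope(C))$ — its non-portal vertices have all their $G'$-neighbors inside $\CG_C$, and the conditions of \cref{def:providersWithRelations} are inherited from $G'$ — hence its compatible string lies in $L(\CG_C, \scope(C)) = L_{Q_C}$, forcing $S \cap \scope(C) \in Q_C$; combined with the gadgets adding no selected neighbors to $V$, this makes $S$ a $(\sigma,\rho)$-set of $G$. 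The two maps are mutually inverse since, by parsimony, each $T_C$ is determined by $S \cap \scope(C)$, so the reduction is parsimonious.

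It then remains to verify that the reduction is pathwidth-preserving. Take a path decomposition of $G$ of width $t$, i.e., of the graph in which each $\scope(C)$ is completed to a clique; then for every $C$ some bag $X_C$ contains $\scope(C)$. Build a new path decomposition by inserting immediately after $X_C$ the bag $X_C \cup V(\CG_C)$ (when several constraints share a bag, insert their new bags consecutively). This is a valid path decomposition of the clique-completion of $G'$: every edge of $\CG_C$ and every scope of a relation of $\CG_C$ is contained in $V(\CG_C)$, hence in the new bag; each original vertex's occurrence-interval grows by at most a contiguous block of bags; and each fresh gadget vertex occurs in exactly one bag. The width grows by at most $\max_C |V(\CG_C) \setminus \scope(C)| = \Oh(2^d d) = \Oh(1)$, since $d$ is fixed, so $\pw(G') \le \pw(G) + \Oh(1)$, and this decomposition is produced in polynomial time. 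The main things to get right are the bijection argument, run carefully through \cref{def:compatibleString,def:providersWithRelations} — in particular, using the ``no selected neighbors to portals'' property to decouple the gadget constraints from those of $V$ — and the bookkeeping ensuring that inserting constant-size bags still yields a genuine \emph{path} (not tree) decomposition; with \cref{lem:realizing:arbitraryRelations} in hand, the rest is routine.
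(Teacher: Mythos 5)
Your proposal is correct and follows essentially the same route as the paper: replace each relational constraint by the parsimonious realizer from \cref{lem:realizing:arbitraryRelations}, observe that parsimony and the arity bound $2^d+1$ are inherited from that lemma, and preserve pathwidth by duplicating, for each constraint separately, a bag containing its scope and adding the (constant-size) realizer vertices to that copy. Your write-up merely spells out the solution-bijection and the path-decomposition bookkeeping in more detail than the paper does.
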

\begin{proof}
	Let $G$ be an instance of \srDomSetRel.
	We replace each relation in $G$ that is neither a \HWset{1} relation nor an \EQset relation
	with the corresponding realizer from \cref{lem:realizing:arbitraryRelations}.
	Let $H$ be the resulting
	\srDomSetRel[{\HWset{1}},\EQset] instance.
	Indeed, by \cref{lem:realizing:arbitraryRelations},
	$H$ contains only \HWset{1} or \EQset relations.
	Moreover, by the construction of the realization,
	the number of solutions is preserved, and hence, the reduction is parsimonious.

	We first analyze the arity of $H$.
	We know that the arity of $G$ is at most $d$.
	From \cref{lem:realizing:arbitraryRelations}, we know
	that the arity for each realization is at most $2^d+1$.
	Hence, the arity of $H$ itself is at most $2^d+1$.

	It remains to argue about the pathwidth of $H$.
	Assume that we are given a path decomposition of $G$.
	Let $R$ be a relation in $G$ which we replaced by a realization.
	By definition, there is a bag $B$ in the path decomposition of $G$
	containing all vertices in the scope of $R$.
	We duplicate bag $B$ to obtain a copy $B_R$
	and add all vertices of the realization of $R$ to $B_R$.
	Observe that this is a valid path decomposition of the resulting graph
	of width at most $\pw(H) \le \pw(G) + \Oh(2^d d) \le \pw(G) + \Oh(1)$
	as $d$ is a constant.

	Note that copying the original bag for each relation separately is crucial.
	Indeed, we cannot simply add the vertices of all realizations to the same bag
	as this could increase the size of the bag by more than a constant (since there might be many relations).
\end{proof}

\paragraph*{Handling Equality Relations.}
The next step is to remove the \EQset relations
and replace them by realizers which use only \HWset{1} relations.

\begin{lemma}
  \label{lem:realizing:eqConditional}
  Let $\sigma,\rho$ denote non-empty sets with $\rho \neq \{0\}$.
  Then, for any $k\geq 1$, there is a graph $G$ with relations and portals that parsimoniously realizes $\EQ{k}$ such that $G$ has the following properties:
	\begin{itemize}
		\item All relations used in $G$ are \HWeq[2]{1} relations.
		\item The size of $G$ is $\Oh(k)$.
		\item $G\setminus \Port$ has pathwidth $\Oh(1)$.
	\end{itemize}
\end{lemma}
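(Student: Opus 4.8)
The plan is to realize $\EQ{k}=\{\emptyset,S\}$ over the portal set $S=\Port=\{\port_1,\dots,\port_k\}$ by a single \emph{star}-shaped gadget that interacts with the portals only through binary $\HWeq[2]{1}$ constraints. Concretely, I would take one fresh auxiliary vertex $c$, attach to it a parsimonious $\{\sigma_s,\rho_r\}$-provider $(H,\{c\})$ for suitable $s\in\sigma$, $r\in\rho$ (such a provider using only $\HWeq[2]{1}$ relations exists by \cref{lem:count:parsFillingGadgetWithHWone}, which is where the hypothesis $\rho\neq\{0\}$ enters), and add, for every $i\in\numb{k}$, a relational constraint $\HWeq[2]{1}$ with scope $\{c,\port_i\}$. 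The portals receive no graph edges at all; they occur only in the scopes of these $k$ binary constraints, so in any partial solution each $\port_i$ has exactly $0$ selected neighbors, which automatically gives the ``no selected neighbors added to portals'' property required of a realization.

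Next I would verify correctness. Each constraint $\HWeq[2]{1}$ on $\{c,\port_i\}$ forces exactly one of $c,\port_i$ to be selected, so every partial solution falls into one of two cases: either $c$ is selected and all $\port_i$ are unselected, or $c$ is unselected and all $\port_i$ are selected; no other selection of $\Port$ is feasible. In the first case $c$'s only neighbors lie inside $H$, and since $H$ is a parsimonious $\{\sigma_s,\rho_r\}$-provider there is a unique extension of ``$c$ selected'' to a partial solution of $H$, giving $c$ exactly $s\in\sigma$ selected neighbors; since each $\port_i$ is then unselected with $0$ selected neighbors, this partial solution witnesses the string $x_\emptyset$ (which is $\rho_0$ in every coordinate). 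The second case is symmetric, using the unique $\rho_r$-extension of $H$, and witnesses $x_S$ (which is $\sigma_0$ in every coordinate). Hence $L(G,\Port)=\{x_\emptyset,x_S\}=L_{\EQ{k}}$ with exactly one witness per string, i.e.\ $G$ parsimoniously realizes $\EQ{k}$.

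For the quantitative claims: $G$ has $k+1+|V(H)|$ vertices and $k+\Oh(1)$ relations of constant size each, and both $|V(H)|$ and the number of relations internal to $H$ are bounded in terms of $\sigma,\rho$ only; hence $|G|=\Oh(k)$. The graph $G\setminus\Port$ is just $c$ together with the provider $H$, a graph with relations whose size depends only on $\sigma,\rho$, so its pathwidth is $\Oh(1)$. Every relation occurring in $G$ --- the $k$ star constraints as well as those internal to $H$ --- is an $\HWeq[2]{1}$ relation, as required.

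I do not expect a serious obstacle here; the points that need care are (i) checking that the two cases above are genuinely the only partial solutions and that each extends in exactly one way, so that the realization is parsimonious, and (ii) making sure that identifying $c$ with the portal of $H$ does not create loops or multiple edges --- which is fine because the providers of \cref{lem:happyGadget,lem:count:parsFillingGadgetWithHWone} are constructed with a plain (loop- and multiedge-free) portal neighborhood. A chain of $2(k-1)$ relations $\HWeq[2]{1}$ through $k-1$ auxiliary vertices, each carrying its own provider, would work equally well with the same bounds; the star is merely more economical.
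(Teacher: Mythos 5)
Your construction is exactly the paper's: a single parsimonious $\{\sigma_s,\rho_r\}$-provider from \cref{lem:count:parsFillingGadgetWithHWone} whose portal is tied to each $\port_i$ by a pairwise $\HWeq[2]{1}$ relation, with the same correctness, size, and pathwidth analysis (the paper merely treats $k=1$ separately as a trivial case). The proposal is correct and essentially identical to the paper's proof.
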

\begin{figure}[t]
  \centering
  \includegraphics[scale=2]{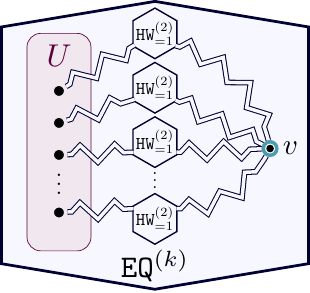}\qquad
  \includegraphics[scale=2]{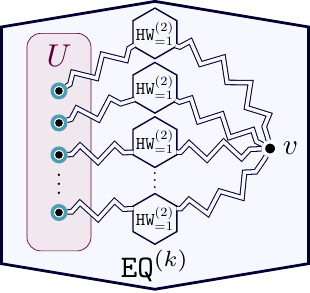}
  \caption{
    The two solutions of the
    gadget construction from \cref{lem:realizing:eqConditional}
    for realizing \(\EQ{k}\).
    For realizers, we depict only their portal vertices; we use hexagons to
    depict that some relation is realized between the vertices connected to said
    hexagons.
  }
  \label{fig:EQ}
\end{figure}
\begin{proof}
  First, observe that we can trivially realize $\EQ{1}$
  by a graph that consists of a single portal vertex.

  Suppose that $k\ge 2$.
  We construct a graph with portals \((G, \Port)\) with
  \(\Port = \{\port_1,\dots,\port_k\}\).
  We add a single parsimonious $\{\sigma_s,\rho_r\}$-provider
  \((H, \{v\})\) with \HWeq[2]{1} relations
  (as given by \cref{lem:count:parsFillingGadgetWithHWone}).
  Finally, we add a \(\HWeq[2]{1}\) relation pairwise between \(v\)
  and each of \(\port_1,\dots,\port_k\).
  Consult \cref{fig:EQ} for a visualization.

  To see that $G$ realizes $\EQ{k}$,
	first observe that if at least one portal vertex \(\port_i\) is selected,
	then $v$ is forced to be unselected by the $\HWeq[2]{1}$ relations.
  This forces the remaining portal vertices to be selected.
  On the other hand, if no portal \(\port_i\) is selected,
  then we can get a valid solution by selecting $v$.

	The size bound follows trivially as we added only one parsimonious $\{\sigma_s,\rho_r\}$-provider
  \((H, \{v\})\) with \HWeq[2]{1} relations,
	and $k$ \HWeq[2]{1} relations.
	Likewise, the bound on arity and pathwidth follow.
\end{proof}

\begin{lemma}
  \label{lem:realizing:parsiHWoneAndEQToHWone}
  Let $\sigma,\rho$ denote non-empty sets with $\rho \neq \{0\}$.
	Then, there is a parsimonious \pwar-reduction
  from \srDomSetRel[{\HWset{1}}, \EQset]
  to \srDomSetRel[{\HWset{1}}].
\end{lemma}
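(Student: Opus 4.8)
The plan is to replace every \EQset relation in the input by the parsimonious realizer of $\EQ{k}$ provided by \cref{lem:realizing:eqConditional} and to check that this substitution preserves the number of solutions as well as (up to additive constants) the pathwidth and the arity. Concretely, given an instance $G=(V,E,\CC)$ of \srDomSetRel[{\HWset{1}},\EQset], I would process each constraint $C\in\CC$ whose relation is $\EQ{k}$ with $k=\abs{\scope(C)}$: for $k=1$ the constraint is trivially satisfiable and can simply be deleted; for $k\ge 2$ I take a fresh copy of the graph $G_C$ from \cref{lem:realizing:eqConditional} that parsimoniously realizes $\EQ{k}$, identify its $k$ portals (in order) with the vertices of $\scope(C)$, and remove $C$. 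All \HWset{1} relations of $G$ are kept, and by \cref{lem:realizing:eqConditional} every newly introduced relation is an $\HWeq[2]{1}$ relation; hence the resulting graph $H$ is a valid instance of \srDomSetRel[{\HWset{1}}]. The construction runs in polynomial time and produces a single output instance, so it is in fact a parsimonious many-one reduction.

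For parsimony, fix a $(\sigma,\rho)$-set $X$ of $G$. Since $X$ satisfies $C$, we have $X\cap\scope(C)\in\{\emptyset,\scope(C)\}$, so $X\cap\scope(C)$ corresponds to a string in $L_{\EQ{k}}$ (each portal state being $\sigma_0$ or $\rho_0$). By \cref{def:realization} and the parsimony of $G_C$, there is exactly one partial solution of $G_C$ witnessing this string, and it selects no neighbor of any portal outside $G_C$ (the portal states carry subscript $0$). Because the realizers attached to distinct constraints use pairwise disjoint sets of non-portal vertices, these unique extensions combine to a single $(\sigma,\rho)$-set of $H$ whose restriction to $V$ is $X$; conversely, any $(\sigma,\rho)$-set of $H$ restricts on $V$ to a $(\sigma,\rho)$-set of $G$ (the realizers force $X\cap\scope(C)$ into $\EQ{k}$ for each $C$). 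Thus we obtain a bijection between the solution sets, so the reduction is parsimonious.

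For the width bound, take a path decomposition of $G$ in the sense of graphs with relations (so the scope of every relation is a clique and lies in a common bag). For each replaced constraint $C$, pick a bag $B_C$ containing $\scope(C)$, insert a duplicate $B_C'$ of $B_C$ adjacent to $B_C$, and add to $B_C'$ all non-portal vertices of $G_C$; by \cref{lem:realizing:eqConditional} there are only $\Oh(1)$ of these (the realizer has size $\Oh(k)$ but all but $\Oh(1)$ of that size is accounted for by the $k$ new $\HWeq[2]{1}$ relations, not by vertices). Since $G_C$ is attached to the rest of $H$ only through $\scope(C)\subseteq B_C'$, and every new relation has its scope inside $B_C'$, this yields a valid path decomposition of $H$ of width $\pw(G)+\Oh(1)$; it is essential, exactly as in \cref{lem:realizing:parsiArbitraryToHWoneAndEQ}, to duplicate a bag separately for each constraint so that the per-constraint overhead does not accumulate. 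The arity bound is immediate: every relation of $H$ is either an original \HWset{1} relation or an $\HWeq[2]{1}$ relation of arity $2$, hence $\ar(H)\le\ar(G)+\Oh(1)$. Together this gives the desired parsimonious \pwar-reduction.

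I do not expect any serious obstacle here: the realizer of $\EQ{k}$ is already supplied by \cref{lem:realizing:eqConditional}, so the argument is essentially the bookkeeping of plugging it in. The only point requiring care is the pathwidth accounting — making sure each realizer is absorbed into its own duplicated bag so the constant overheads stay additive rather than summing over all \EQset constraints — and this is handled by the same device used in the proof of \cref{lem:realizing:parsiArbitraryToHWoneAndEQ}.
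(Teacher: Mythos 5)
Your proposal is correct and matches the paper's proof: both replace each \EQset relation by the parsimonious realizer of \cref{lem:realizing:eqConditional}, note that parsimony of the realizers gives a bijection of solutions, that all new relations are \HWeq[2]{1} so arity grows by $\Oh(1)$, and handle pathwidth by duplicating, per constraint, a bag containing the scope. The only (immaterial) difference is in the width bookkeeping: you absorb the $\Oh(1)$ non-portal vertices of the realizer into a single duplicated bag (relying on the fact, visible from the construction though not from the bare statement of \cref{lem:realizing:eqConditional}, that only $\Oh(1)$ of its $\Oh(k)$ size comes from vertices), whereas the paper instead uses the stated property that the realizer minus its portals has pathwidth $\Oh(1)$ and interleaves its bags with copies of the original bag — both yield $\pw(G)+\Oh(1)$.
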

\begin{proof}
  Let $G$ be an instance of \srDomSetRel[{\HWset{1}}, \EQset].
  We use the construction from \cref{lem:realizing:eqConditional}
  to replace all \EQset relations in $G$ by the appropriate realizers
  which use \HWset{1} relations.
  Let $H$ be the obtained instance of \srDomSetRel[{\HWset{1}}].
  By \cref{lem:realizing:eqConditional}, all relations used in $H$ are $\HWset{1}$ relations. Moreover, the realizations ensure that the number of solutions is preserved, and thereby, the reduction is parsimonious.
	Note that this reduction is now also arity-preserving
	as the realization of the \EQset relations has arity at most $2$.

	It remains to argue about the pathwidth of $H$.
	Assume we are given a path decomposition of $G$.
	Let $R$ be a relation in $G$ which we replaced by a realization.
	By the definition of the path decomposition,
	there is a bag $B$ containing all vertices in the scope of $R$.
	Let $X_1,\dots,X_\ell$ be the bags of width at most $\Oh(1)$
	from the path decomposition of the realization of $R$.
	We duplicate the bag $B$ $\ell$ times
	and let $B, B_1,\dots,B_\ell$ denote the resulting copies.
	We combine the bags $B_i$ and $X_i$ into a new bag
	of the path decomposition of $H$.
	Observe that this is a valid path decomposition
	of width at most $\pw(H) \le \pw(G) + \Oh(1)$.
\end{proof}

This concludes the first step of removing the relations.
We formally summarize the result by the following corollary
which follows directly from
\cref{lem:realizing:parsiArbitraryToHWoneAndEQ,%
lem:realizing:parsiHWoneAndEQToHWone}.
\begin{corollary}
	\label{cor:realizing:parsiArbitraryToHWone}
	\label{lem:RelfromDShelper}\label{lem:Irelversion}
	Let $\sigma,\rho$ denote non-empty sets with $\rho \neq \{ 0 \}$.
	For all constants $d$,
	there is a parsimonious pathwidth-preserving reduction from \srDomSetRel
	on instances of arity at most $d$
	to \srDomSetRel[{\HWset{1}}]
	on instances of arity at most $2^d+1$.
\end{corollary}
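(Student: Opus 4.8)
The plan is to obtain the reduction by composing the two reductions established in \cref{lem:realizing:parsiArbitraryToHWoneAndEQ,lem:realizing:parsiHWoneAndEQToHWone} and invoking the transitivity of pathwidth-preserving reductions from \cref{obs:pwredtransitivity}.

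First I would take an instance $G$ of \srDomSetRel of arity at most $d$ and apply \cref{lem:realizing:parsiArbitraryToHWoneAndEQ}. This produces, in polynomial time, an instance of \srDomSetRel[{\HWset{1}},\EQset] of arity at most $2^d+1$, with pathwidth increased only by an additive constant; moreover, this reduction is parsimonious. Next I would apply \cref{lem:realizing:parsiHWoneAndEQToHWone} to the resulting instance: this is a parsimonious \pwar-reduction (hence both pathwidth-preserving and arity-preserving) from \srDomSetRel[{\HWset{1}},\EQset] to \srDomSetRel[{\HWset{1}}]. The point to double-check here is the arity bound: by \cref{lem:realizing:eqConditional}, the realizers used to eliminate the \EQset relations introduce only \HWeq[2]{1} relations, so the arity of the output stays bounded by $2^d+1$ (rather than merely $2^d+1+\Oh(1)$), since all pre-existing \HWset{1} relations already have arity at most $2^d+1$ and no larger relations are created.

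It then remains to collect the properties of the composition. Parsimony is preserved under composition of parsimonious reductions, so the composed reduction is parsimonious. Since we compose exactly two — hence $\Oh(1)$ many — pathwidth-preserving reductions, \cref{obs:pwredtransitivity} yields a pathwidth-preserving reduction from \srDomSetRel on instances of arity at most $d$ to \srDomSetRel[{\HWset{1}}]; together with the arity bound established in the previous paragraph, this is precisely the statement of the corollary.

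There is no real obstacle here beyond routine bookkeeping: the only subtle point is verifying that the second step — eliminating the \EQset relations — does not push the arity above $2^d+1$, which holds because the \EQset realizers use only arity-$2$ relations while leaving the \HWset{1} relations (of arity at most $2^d+1$) untouched.
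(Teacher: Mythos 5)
Your proposal is correct and matches the paper's own argument: the corollary is stated there as following directly from \cref{lem:realizing:parsiArbitraryToHWoneAndEQ,lem:realizing:parsiHWoneAndEQToHWone}, composed via transitivity of parsimonious pathwidth-preserving reductions. Your extra check that the \EQset-elimination step introduces only \HWeq[2]{1} relations (so the arity bound $2^d+1$ is not exceeded) is a correct and welcome piece of bookkeeping consistent with \cref{lem:realizing:eqConditional}.
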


In the remainder of this section,
we show how to replace the \HWset{1} relations by realizations.
For this, we distinguish between the decision version (\Cref{sec:realizingRelations:decision}) and the counting version (\Cref{sec:realizingRelations:counting}).

\subsection{Decision Version: Proof of \texorpdfstring{\cref{lem:lower:dec:removingRelations}}{Lemma \ref{lem:lower:dec:removingRelations}}}
\label{sec:realizingRelations:decision}

In this section, we show how to replace \HWeq{1} relations
by gadgets which do not use any relations at all. These gadgets ensure the existence of a solution that respects the relations, and are not necessarily parsimonious.
We again make use of the notion of a realization from \cref{def:realization}.
Crucially, however, while \cref{def:realization} defines realizers and providers as graphs with relations and portals, in this section, the constructed providers and realizers do not use relations, i.e., they are simply graphs with portals.
We refrain from stating this in each result --- it is easily verifiable as the constructions do not use relations.

We start with a helpful gadget to ensure that a vertex \(v\) is selected (with \(\sigMax\) selected
neighbors) in any partial solution, that is, how to realize the language
\(\{\sigma_{\sigMax}\}\).

\begin{lemma}
    \label{lem:forcingGadget}
    Let \(\sigma\) and \(\rho\) denote finite and non-empty sets with $0 \notin \rho$.
    Then, there is a \(\{\sigma_{\sigMax}\}\)-realizer.
\end{lemma}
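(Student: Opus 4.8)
The plan is to construct a small gadget with one portal vertex $v$ that forces $v$ to be selected with exactly $\sigMax$ selected neighbours, while contributing no other selected neighbours to $v$. The only obstacle to a naive construction is the $\rho$-constraint: since $0 \notin \rho$, an unselected vertex must have at least one selected neighbour, so we cannot simply attach a graph that might be entirely unselected. But $0 \notin \rho$ also means $\rhoMax \ge 1$, which is precisely what we will exploit.

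First I would recall that, by \cref{lem:fillinggadget} (or \cref{lem:happyGadget}, since $\rhoMin \ge 1$), for any $s \in \sigma$ and $r \in \rho$ there is a $\{\sigma_s, \rho_r\}$-provider; in particular there is one realizing exactly the states $\sigma_{\sigMax}$ and $\rho_{\rhoMin}$ at its portal. The idea is to take the portal $v$ of our would-be $\{\sigma_{\sigMax}\}$-realizer and attach to it a structure that makes the state $\rho_r$ infeasible at $v$ for every $r \le \rhoMax$, leaving $\sigma_{\sigMax}$ as the only compatible string. Concretely, I would build a clique $K$ on $\sigMax + 1$ vertices, pick one of its vertices to be $v$, and attach to each vertex of $K$ enough copies of $\{\sigma_{\sigMin},\rho_{\rhoMin}\}$-providers (or a $\{\rho_0,\dots,\rho_{\rhoMax}\}$-provider from \cref{lem:trivialrhogadget}) so that: if all of $K$ is selected, every vertex of $K$ (including $v$) gets exactly $\sigMax$ selected neighbours inside $K$ and none from the attached providers — giving $v$ state $\sigma_{\sigMax}$; whereas if $v$ is unselected, we must argue no valid extension exists. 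The clean way to force the latter is to make $v$ itself have no attached $\rho$-supplying provider, so that if $v \notin S$ then $v$'s only possible selected neighbours are the (at most $\sigMax$) other vertices of $K$; then one arranges via further constraints — e.g. by also forbidding partial selection of $K$ through making every proper nonempty subset of $K$ create an unsatisfiable vertex — that $K$ is all-or-nothing, and the all-unselected case fails the $\rho$-constraint of $v$ since $0 \notin \rho$.

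The cleanest realization, and the one I would actually write, mirrors the pattern already used repeatedly (e.g.\ in \cref{lem:casem1}): let $K$ be the clique on $v = v_1, v_2, \dots, v_{\sigMax+1}$; to each $v_i$ attach $\rhoMax$ independent copies of a $\{\sigma_{\sigMin}, \rho_{\rhoMin}\}$-provider \emph{except} that to $v = v_1$ we attach only $\rhoMax$ copies as well but additionally ensure (by choosing the provider states appropriately in the witnessing solution) that when $K$ is selected all providers sit in their $\sigma$-state and contribute nothing. When all of $K$ is selected, each $v_i$ has $\sigMax$ neighbours in $K$, hence $v$ realizes $\sigma_{\sigMax}$; when $K$ is unselected, each $v_i$ needs $\rhoMax \ge 1$ selected neighbours, supplied by its $\rhoMax$ attached providers in their $\rho_1$-states — but this selection makes $v$'s state $\rho_{\rhoMax}$, not $\sigma_{\sigMax}$. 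To kill the unselected case I would instead attach the providers only to $v_2,\dots,v_{\sigMax+1}$ and none to $v$, and additionally attach one more structure forcing $v$'s neighbours in $K$ to be unavailable when $v\notin S$: the simplest is to observe that any partial solution with $v \notin S$ still needs $|N(v)\cap S|\in\rho$, and $N(v)=\{v_2,\dots,v_{\sigMax+1}\}$, so we can only realize $\rho_k$ for $k \in \rho \cap \fragment{0}{\sigMax}$ at $v$, which may be nonempty. Hence the truly safe construction adds a pendant: attach to $v$ a single edge to a new vertex $v'$ which is the portal of a $\{\sigma_{\sigMax}\}$\textsuperscript{?}-gadget — circular. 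So the correct move is the standard trick: attach to $v$ a copy of the provider that \emph{can only be extended when $v$ is selected}; since $0 \notin \rho$ every vertex of that provider needs a selected neighbour, and by building it as a clique of size $\sigMax+1$ all of whose vertices (other than their links to $v$) are degree-$\sigMax$ inside the clique, the clique is forced to be fully selected, which forces $v$ to be selected too and supplies it $\sigMax$ selected neighbours. The main obstacle, and the only place needing care, is verifying that this clique admits \emph{no} partial solution in which it is only partially selected — this follows because a vertex of the clique that is selected with fewer than $\sigMax$ selected clique-neighbours violates $\sigma$ (as $\sigma$ is finite with max $\sigMax$, and all clique-internal degrees equal $\sigMax$), and a clique-vertex that is unselected violates $\rho$ unless it has a selected neighbour, which for the non-portal clique vertices we arrange to be impossible by giving them no outside neighbours. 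I would then conclude that the attached clique must be entirely selected, $v$ must be selected with exactly $\sigMax$ selected neighbours, and the gadget adds no selected neighbours to $v$ beyond $K$ — so it realizes $\{\sigma_{\sigMax}\}$.
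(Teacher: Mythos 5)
Your final construction does not work, and the two claims it rests on are both false in general. The claim that ``a vertex of the clique that is selected with fewer than $\sigMax$ selected clique-neighbours violates $\sigma$'' holds only when $\sigma=\{\sigMax\}$; if $\sigma$ contains smaller elements the clique can be partially selected. Likewise, ``a clique-vertex that is unselected violates $\rho$ unless it has a selected neighbour, which \dots{} we arrange to be impossible by giving them no outside neighbours'' overlooks that an unselected clique vertex can receive selected neighbours \emph{inside} the clique, so only the all-unselected configuration is excluded. Concretely, take $\sigma=\{0,2\}$ and $\rho=\{1\}$ (finite, non-empty, $0\notin\rho$, $\sigMax=2$): selecting a single vertex $w$ of your $3$-clique is a valid partial solution ($w$ has $0\in\sigma$ selected neighbours, the other two clique vertices have $1\in\rho$), so the portal attains a $\rho$-state and the gadget is not a $\{\sigma_{\sigMax}\}$-realizer. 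Worse, the intended witness does not exist either: if all of $K$ and $v$ are selected, the clique vertices adjacent to $v$ have $\sigMax+1\notin\sigma$ selected neighbours (recall $\sigma$ is finite), so no partial solution realizes $\sigma_{\sigMax}$; if anything, a fully selected clique of this shape forces $v$ to be \emph{unselected}, the opposite of the goal.

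What is missing is precisely the part your proposal keeps circling around without resolving: whether a fixed gadget is ``all-or-nothing'' and forces selection depends on the interplay of $\sigMin$, $\rhoMin$ and $\rhoMax$, not just on $0\notin\rho$. The paper first builds a helper gadget whose portal is forced to be selected with at least $\sigMin$ selected neighbours inside it, via a three-way case distinction: a $(\sigMin+1)$-clique when $0<\sigMin<\rhoMin$; a clique augmented with an independent set of degree-$\rhoMin$ watchdog vertices when $\sigMin\ge\rhoMin\ge1$ (a bare clique fails there, e.g.\ for $\sigma=\{1,2\}$, $\rho=\{1\}$, where selecting only the portal is compatible); and a complete bipartite graph when $\sigMin=0$. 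It then takes $\sigMax-\sigMin+1$ copies of this helper and joins their portals into a clique: each portal is forced selected, gets $\sigMax-\sigMin$ selected neighbours from the clique and at least $\sigMin$ from its helper, and finiteness of $\sigma$ pins the total at exactly $\sigMax$. Your single-clique shortcut omits both the case analysis and this two-stage ``force selection first, then top up to exactly $\sigMax$'' structure, and it cannot be repaired without essentially reintroducing them.
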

\begin{proof}
    We start with a useful helper gadget
    which ensures that the portal is always selected
    and receives at least $\sigMin$ selected neighbors in the gadget.
    \begin{claim}\label{cl-helper-gadget}
        There is a \(\{\sigma_{\sigMin}\}\)-provider \( (H, \{u\}) \)
        such that
        \(L(H, \{u\}) \subseteq \{ \sigma_i \in \sigStates \mid \sigMin \le i\}\).
    \end{claim}
    \begin{claimproof}
        \begin{figure}[t]
            \centering
            \begin{subfigure}[t]{.48\textwidth}
                \centering
                \includegraphics[scale=1.3]{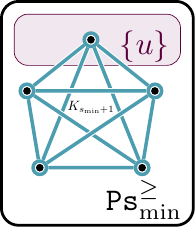}\qquad
                \includegraphics[scale=1.3]{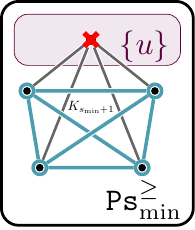}
                \caption{The \(\{\sigma_{\sigMin}\}\)-provider for Case 1 when \(0 < \sigMin < \rhoMin\).
                    Observe that not selecting the portal vertex violates the
                \(\rho\)-constraints of the portal vertex.}\label{sf-139}
            \end{subfigure}%
            \hfill
            \begin{subfigure}[t]{.48\textwidth}
                \centering
                \includegraphics[scale=1.3]{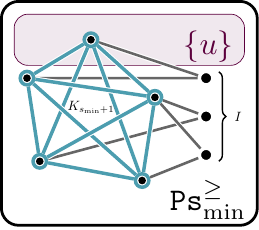}\qquad
                \includegraphics[scale=1.3]{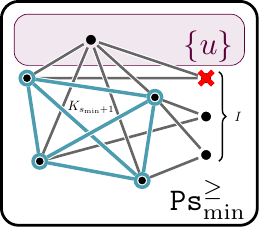}
                \caption{The \(\{\sigma_{\sigMin}\}\)-provider for Case 2 when \(\sigMin \ge \rhoMin \ge 1\).
                    Observe that not selecting the portal vertex violates the
                \(\rho\)-constraints of at least one vertex from the independent set \(I\).
            }\label{sf-148}
            \end{subfigure}%
            \hfill
            \begin{subfigure}[t]{.49\textwidth}
                \centering
                \includegraphics[scale=1.3]{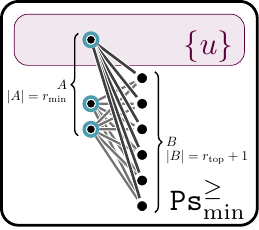}\qquad
                \includegraphics[scale=1.3]{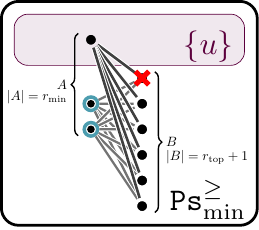}
                \caption{The \(\{\sigma_{\sigMin}\}\)-provider for Case~3 when \(\sigMin=0\), \(\rhoMin \ge 1\).
                    Observe that not selecting the portal vertex violates the
                \(\rho\)-constraints of at least one vertex from \(B\).
            }\label{sf-182}
            \end{subfigure}
            \caption{The gadget constructions from \cref{cl-helper-gadget}.}
        \end{figure}
        Recall that we need to construct a graph \(H\)
        with a single portal vertex \(u\) that satisfies
        $\{\sigma_{\sigMin}\}\subseteq L(H,\{u\})
        \subseteq \{ \sigma_i \in \sigStates \mid \sigMin \le i \}$.
        We consider three cases.

        \paragraph{Case 1: ${0 < \sigMin < \rhoMin}$.}
        We choose \(H\) to be a clique of order $\sigMin+1$;
        we declare any of the vertices to be \(u\); consult \cref{sf-139} for a
        visualization.

        Now, suppose that one vertex \( v \neq u \) of \(H\) is unselected;
        this vertex exists as \( \sigMin+1 \ge 2 \).
        Then, \(v\) needs at least $\rhoMin$ selected neighbors.
        However, \(v\) can have at most $\sigMin < \rhoMin$ selected neighbors,
        and thus, \(v\) cannot be unselected.
        Since \(v\) is selected, it needs at least $\sigMin$ selected neighbors,
        which implies that the remaining $\sigMin$ vertices of the clique
        (including \(u\)) must be selected as well.

        \paragraph{Case 2: \({\sigMin \ge \rhoMin \ge 1}\).}
        We construct \(H\) as follows. Starting from a clique \(C = K_{\sigMin + 1}\) (of
        \(\sigMin + 1\) vertices) and an independent set
        \(I = I_{\ceil{(\sigMin+1)/\rhoMin}}\) (of \(\ceil{(\sigMin+1)/\rhoMin}\) vertices),
        we add \(\rhoMin \cdot \ceil{(\sigMin+1)/\rhoMin}\) edges between \(C\) and \(I\) such
        that each vertex in \(I\) has a degree of exactly \(\rhoMin\), and such that
        every vertex in \(C\) is adjacent to at least one vertex in \(I\). It is readily
        verified that this is always possible.
        Finally, we pick any vertex from \(C\) to be the portal vertex \(u\).
        Consult \cref{sf-148} for a visualization.

        Consider an arbitrary vertex $v$ of the clique $C$.
        Assume that $v$ is unselected.
        Let $v'$ be a neighbor of $v$ in the independent set $I$.
        If $v'$ is selected,
        it needs (at least) $\sigMin\ge \rhoMin$ selected neighbors in $C$,
        and if $v'$ is unselected,
        it needs (at least) $\rhoMin$ selected neighbors in $C$.
        By our assumption that $v$ is unselected and the construction of $H$,
        $v'$ can have at most $\rhoMin-1$ selected neighbors in $C$,
        which cannot be turned into a valid solution.
        Hence, $v$ must be selected.
        As this argument applies to each vertex $v$ of the clique $C$,
        the whole clique $C$ has to be selected,
        which gives $\rhoMin$ selected neighbors to every vertex of $I$.

        \paragraph{Case 3: ${\sigMin=0, \rhoMin \ge 1}$.}
        We choose $H$ to be a complete bipartite graph with bipartition $(A,B)$
        such that $\abs{A}=\rhoMin$ and $\abs{B}=\rhoMax+1$.
        We pick an arbitrary vertex $u$ in $A$ as the portal
        vertex \(u\).
        Consult \cref{sf-182} for a visualization.

        Suppose that $u$ is not selected.
        Then, every vertex in $B$ can have at most
        $\rhoMin-1$ selected neighbors,
        which means that every vertex in $B$ must be selected.
        In this case, $u$ would have $\rhoMax+1$ selected neighbors,
        which is not feasible. Therefore, $u$ must be selected.
        If \(u\) is indeed selected, then selecting every vertex in
        $A$, and leaving every vertex in $B$ unselected
        yields a solution for which $u$ has $\sigMin=0$ selected neighbors.
    \end{claimproof}
    \begin{figure}[t]
        \centering
        \includegraphics[scale=1.5]{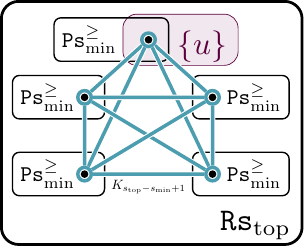}
        \caption{The main construction of \cref{lem:forcingGadget}: we use the providers
            from \cref{cl-helper-gadget} to obtain \(({\tt R\sigMax},\{u\})\). Observe that
            \(u\) has exactly \(\sigMin + (\sigMax-\sigMin) = \sigMax\) selected
            neighbors. Indeed, it has \(\sigMin\) selected neighbors from inside the provider, and
            \((\sigMax-\sigMin)\) selected neighbors from the selected portals of the other
        instances of the provider. For providers, only their portal vertex is depicted.}
    \end{figure}

    Finally, we use \cref{cl-helper-gadget} to construct a
    \(\{\sigma_{\sigMax}\}\)-realizer \( (G, \{u\}) \) as follows.
    As our graph \( G \), we use \(\sigMax-\sigMin+1\ge 1\)
    independent copies \( (H^{(i)}, \{u^{(i)}\}) \) of the provider
    from \cref{cl-helper-gadget}, and connect the vertices \(u^{(i)}\) to form an
    \((\sigMax-\sigMin + 1)\)-clique. We choose any of the vertices \(u^{(i)}\) to be the
    portal vertex \(u\).

    By \cref{cl-helper-gadget}, all vertices \(u^{(i)}\) are selected in any partial
    solution for \( (G,\{u\}) \) with the additional guarantee that each vertex \(u^{(i)}\)
    already has at least \(\sigMin\) selected neighbors inside of the corresponding helper
    gadgets.
    Hence, each vertex \(u^{(i)}\) (and in particular \(u\))
    has at least \( \sigMax \) selected neighbors.
    More precisely, as \( \sigma \) is finite,
    each such vertex \( u^{(i)} \) has exactly \( \sigMax \) selected neighbors
    and, in particular, exactly \( \sigMin \) selected neighbors in \( H^{(i)} \),
    which is a valid solution by \cref{cl-helper-gadget}.
    This completes the proof.
\end{proof}

As a next step, we realize \HWeq{1} relations.

\begin{lemma}
    \label{lem:realizing:hwExactly1}\label{lem:HW1}
    Let $\sigma,\rho$ denote finite non-empty sets with $0 \notin \rho$.
    Then, for all $k\geq 1$,
    $\HWeq[k]{1}$ is realizable.
\end{lemma}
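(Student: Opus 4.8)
The plan is to build, for each $k\ge 1$, a graph $G_k$ with portals $U=\{\port_1,\dots,\port_k\}$ and \emph{no} relations such that $L(G_k,U)=L_R$ for $R=\HWeq[k]{1}$, in the sense of \cref{def:realization}. Unfolding the definition, this means exactly that (i) every partial solution of $(G_k,U)$ selects exactly one portal, and (ii) in every partial solution no portal has a selected neighbour inside $G_k$. Condition (ii) is the delicate one: every vertex of $G_k$ adjacent to a portal must be \emph{unselected} in every partial solution, so the gadget has to encode the cardinality constraint purely through the feasibility of its non‑portal vertices. The available building blocks are the $\{\sigma_{\sigMax}\}$-realizer of \cref{lem:forcingGadget} (a one‑portal graph whose portal is forced into every partial solution, carrying exactly $\sigMax$ selected neighbours inside), and the $\{\sigma_s,\rho_r\}$-providers of \cref{lem:fillinggadget} (used to let every non‑portal vertex reach a feasible selected‑neighbour count regardless of what the portals do).

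First I would set up the core \emph{selector} construction. Introduce a vertex $v$ joined to all of $\port_1,\dots,\port_k$, attach to $v$ a prescribed number $p$ of vertices that are \emph{forced selected} (each the portal of a copy of the $\{\sigma_{\sigMax}\}$-realizer, with leftover neighbour demands absorbed by $\{\sigma_s,\rho_r\}$-providers), and arrange that $v$ itself is forced unselected. Then for a selection of the portals with $j$ of them selected, $v$ has exactly $p+j$ selected neighbours, so the partial solution extends iff $p+j\in\rho$. Using \emph{two} such selectors, one with $p=\rhoMax-1$ (which makes every $j\ge 2$ infeasible, since $\rhoMax-1+j>\rhoMax$) and one with $p=\rhoMin-1$ (which makes $j=0$ infeasible, since $\rhoMin-1\notin\rho$ — here the hypothesis $0\notin\rho$ enters), while $j=1$ stays feasible for both because $\rhoMin,\rhoMax\in\rho$, yields a graph whose partial solutions are exactly those selecting one portal; a routine check then rules out spurious partial solutions and confirms each of the $k$ intended ones extends, so $L(G_k,U)=L_R$. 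Alternatively one may reduce to small arities first, using $\HWle[k]{1}=\bigcap_{i<j}\HWle[2]{1}(\port_i,\port_j)$ together with the fact that gluing realizers at their common portals realizes the intersection of the underlying relations (all realized languages put the portals in state $\cdot_0$), but handling general $k$ in one shot as above is just as clean.

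The main obstacle — and the point where the proof genuinely splits into cases — is forcing the selector vertex $v$ (and, more generally, every vertex incident to a portal) to be unselected in \emph{all} partial solutions, without giving the portals any selected neighbours and without accidentally killing every partial solution. The cheap mechanism, overloading $v$ with more than $\sigMax$ forced‑selected neighbours, only works when $\rhoMax$ is large enough relative to $\sigMax$ (otherwise $v$ unselected can no longer meet its $\rho$-constraint and the gadget has no solution at all). In the complementary regime $\rhoMax\le\sigMax$ one instead forces $v$ unselected via an auxiliary always‑unselected \emph{relay} vertex whose $\rho$-demand is already saturated by forced‑selected neighbours, which itself only behaves correctly under $\rhoMax\le\sigMax$. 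A short list of extremal cases — small $\sigMax$, the case $\sigMax=0$, and particular relative positions of $\sigMin$, $\rhoMin$, $\sigMax$, $\rhoMax$ — has to be treated by hand with small ad hoc gadgets, in the same spirit as the three‑case analysis already carried out in \cref{cl-helper-gadget}. Assembling these pieces and verifying in each case that the resulting graph realizes $R=\HWeq[k]{1}$ exactly is where essentially all the work lies.
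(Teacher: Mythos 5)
Your architecture is in fact the paper's: auxiliary vertices joined to all $k$ portals, each loaded with portals of $\{\sigma_{\sigMax}\}$-realizers from \cref{lem:forcingGadget}, one carrying $\rhoMax-1$ forced-selected neighbours to cap the number of selected portals at one, and one carrying a ``gap'' count to force at least one; and the counting logic of your two selectors ($p=\rhoMax-1$ and $p=\rhoMin-1$) is correct whenever it applies. The genuine gap is precisely in the step you yourself defer as ``where essentially all the work lies'': forcing the selector vertex $v$ to be unselected. You propose either overloading $v$ with more than $\sigMax$ forced-selected neighbours (which, as you note, only works when $\rhoMax$ is much larger than $\sigMax$) or a relay that you claim works only when $\rhoMax\le\sigMax$, leaving ``a short list of extremal cases'' to ad hoc gadgets. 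The missing observation, which the paper uses and which makes all of this unnecessary, is that a \emph{single} forced-selected neighbour already forces $v$ to be unselected: in every partial solution the portal of a $\{\sigma_{\sigMax}\}$-realizer is selected with exactly $\sigMax$ selected neighbours inside the realizer, and since $\sigma$ is finite it cannot tolerate one more, so any outside neighbour of it (in particular $v$) must be unselected. (Likewise, these realizer portals have no ``leftover neighbour demands'' to absorb by providers---they arrive exactly saturated.) With this, your $p=\rhoMin-1$ selector works verbatim whenever $\rhoMin\ge 2$, no interplay between $\sigMax$ and $\rhoMax$ (nor any case on $\sigma$, e.g.\ $\sigMax=0$) ever enters, and the only case distinction needed is on the structure of $\rho$.

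As written, your plan does not close the case $\rho=\{1,\dots,\rhoMax\}$, in particular $\rho=\{1\}$: there $\rhoMin-1=0$, your lower-bound selector has no forced-selected neighbour at all, so nothing rules out partial solutions that select $v$, which would give the portals selected neighbours and add compatible strings outside $L_R$, destroying the realization. The correct fix is exactly the relay you gesture at, but with the right justification and no side condition: a vertex $v_1$ with $\rhoMax$ forced-selected neighbours is itself forced unselected (by the saturation argument) and, $\rho$ being finite, is saturated at $\rhoMax$, so its neighbour $v_2$, which is joined to all portals, is forced unselected as well; then $0\notin\rho$ forces at least one portal to be selected, and the $\rhoMax-1$ selector (dropped when $\rho=\{1\}$, where $v_2$ alone enforces ``exactly one'') caps it at one. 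So the route is the paper's three-case construction on $\rho$; what is missing from your write-up is the single saturation observation that makes the unselection forcing uniform, and without it the central step of your proof remains open.
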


\begin{figure}[t]
    \centering
    \begin{subfigure}[b]{0.48\textwidth}
        \centering
        \includegraphics[width=\textwidth]{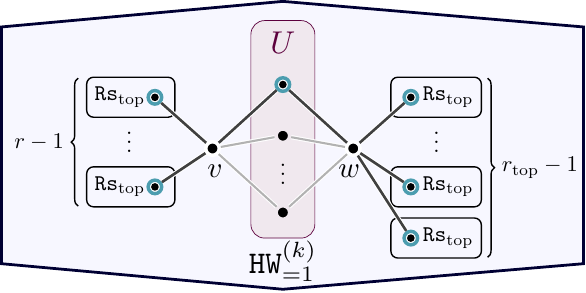}
        \caption{The construction in Case~1 when there is an $r\in \rho$
        with $r\ge 2$ and $r-1\notin \rho$.}
        \label{fig:hwExactly1:gap}
    \end{subfigure}
    \hfill
    \begin{subfigure}[b]{0.48\textwidth}
        \centering
        \includegraphics[width=\textwidth]{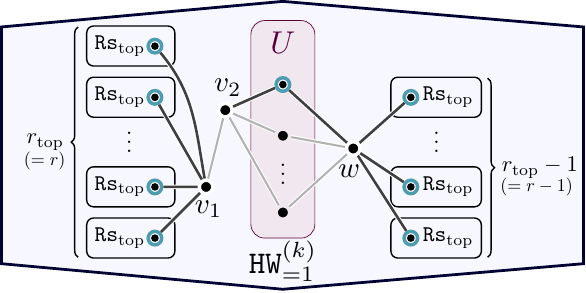}
        \caption{The construction in Case~2 when $\rho=\numb{r}$ for some $r\geq 2$.}
        \label{fig:hwExactly1:noGap}
    \end{subfigure}
    \begin{subfigure}[b]{0.48\textwidth}
        \centering
        \includegraphics[width=\textwidth]{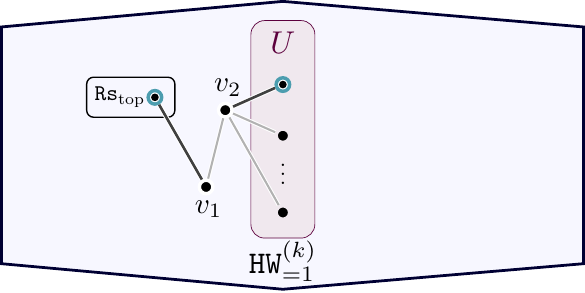}
        \caption{The construction in Case~3 when $\rho=\{1\}$.}
        \label{fig:hwExactly1:c}
    \end{subfigure}
    \caption{The gadget constructions from \cref{lem:HW1} for realizing \(\HWeq[k]{1}\).
        For realizers, we depict only their portal vertices; we depict an arbitrary vertex
        from \(U\) as being selected.}
\end{figure}

\begin{proof}
    We distinguish three different cases for the definition of the realization
    $\CG=(G,\Port)$ with portals $\Port = \{\port_1,\dots,\port_k\}$.

    \paragraph{\boldmath Case 1: There is an $r\in \rho$ with $r\ge 2$ and $r-1\notin \rho$.}
    We create two additional vertices $v$ and $w$,
    and make both adjacent to each portal $\port_i$.
    Further, we take \(r-1\) independent copies
    \( (H^{(i,1)}, \port^{(i,1)}) \) of the \( \{\sigma_{\sigMax}\} \)-realizer
    from \cref{lem:forcingGadget},
    and connect \(v\) to the \(r-1\) vertices \(\port^{(i,1)}\).
    Next, we take \(\rhoMax-1\) independent copies
    \( (H^{(i,2)}, \port^{(i,2)}) \)
    of the \( \{\sigma_{\sigMax}\} \)-realizer from \cref{lem:forcingGadget},
    and connect \(w\) to the \(\rhoMax-1\) vertices \(\port^{(i,2)}\).
    Consult \cref{fig:hwExactly1:gap} for a visualization.%

    To see that $\CG$ realizes $\HWeq[k]{1}$,
    first observe that by \cref{lem:forcingGadget},
    any vertex \(\port^{(\star,\star)}\) is always selected with \(\sigMax\) selected
    neighbors (inside of the corresponding realizer).
    Hence, no neighbor  of any \(\port^{(\star,\star)}\) outside the corresponding
    realizer can be selected. In particular, \(v\) and \(w\) have to be unselected.

    Since $r-1\not\in \rho$,
    at least one portal vertex \(\port_1,\dots,\port_k\) must be selected into the solution
    to satisfy the \(\rho\)-constraint of $v$.
    Further, as $\rho$ is finite and $w$ already has $\rhoMax-1$ selected neighbors,
    at most one of the $k$ portal vertices \(\port_1,\dots,\port_k\) can be selected.
    Thus, we must select exactly one portal vertex into the solution, completing the proof
    for this case.

    \paragraph{\boldmath Case 2: $\rho=\numb{r}$ for $r\geq 2$.}
    We create three additional vertices $v_1, v_2$, and $w$,
    and make \(v_2\) and \(w\) adjacent to each portal $\port_i$.
    Further, we take \(\rhoMax = r\) independent copies
    \( (H^{(i,1)}, \port^{(i,1)}) \) of the \( \{\sigma_{\sigMax}\} \)-realizer
    from \cref{lem:forcingGadget},
    and connect \(v_1\) to the \(r\) vertices \(\port^{(i,1)}\); we also connect \(v_1\)
    and \(v_2\).
    Next, we take \(\rhoMax-1 = r - 1\) independent copies
    \( (H^{(i,2)}, \port^{(i,2)}) \)
    of the \( \{\sigma_{\sigMax}\} \)-realizer from \cref{lem:forcingGadget},
    and connect \(w\) to the \(r-1\) vertices \(\port^{(i,2)}\).
    Consult \cref{fig:hwExactly1:noGap} for a visualization.

    As in Case~1, $v_1$ and $w$ must be unselected. Now, observe that \(v_2\) must also be
    unselected as \(v_1\) already has \(r\) selected neighbors and \(r + 1 \notin \rho\).
    As \(0 \notin \rho\), at least one of the neighbors of \(v_2\) (but not \(v_1\)) must be
    selected, that is, at least one of \(\port_1,\dots,\port_k\) must be selected.
    Finally, observe that \(\rhoMax = r\), so as in Case~1, the vertex \(w\) ensures that at
    most one of the vertices \(\port_1,\dots,\port_k\) is selected. This completes the
    proof in this case.

    \paragraph{\boldmath Case 3: $\rho=\{1\}$.}
    We use the same construction as in Case~2,
    but we remove the vertex $w$.
    Consult \cref{fig:hwExactly1:c} for a visualization.

    The constraint that exactly one portal is selected
    follows from the properties of $v_2$,
    as it needs exactly one selected neighbor.
    This completes the proof of \cref{lem:HW1}.
\end{proof}

The sought-after reduction now directly follows from \cref{lem:HW1}

\begin{lemma}
  \label{lem:realizing:removingHWone}
  Let $\sigma,\rho$ denote finite non-empty sets with $0 \notin \rho$.
  For all constants $d$,
  there is a pathwidth-preserving reduction
  from \srDomSetRel[\HWsetGen{=1}]
  on instances of arity at most $d$
  to \srDomSet.
\end{lemma}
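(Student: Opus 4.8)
The plan is to carry out the replacement advertised in the introduction of this section: given an instance $G=(V,E,\CC)$ of \srDomSetRel[\HWsetGen{=1}] of arity at most the (fixed) constant $d$, together with a path decomposition of width $\pw(G)$, substitute every relational constraint by a graph gadget and output a relation-free graph. Each $C\in\CC$ has the form $(\scope(C),\HWeq[k]{1})$ with $k=|\scope(C)|\le d$; if $k=0$ then the relation is the unsatisfiable empty relation, so $G$ is a trivial no-instance and we answer directly, and otherwise $k\ge 1$. For each $C$ I would invoke \cref{lem:HW1} to obtain a realization $(\CG_C,\Port_C)$ of $\HWeq[k]{1}$ --- which, as emphasized at the start of this section, is a \emph{relation-free} graph with portals --- and form $H$ from $(V,E)$ by adding a disjoint copy of each $\CG_C$ and identifying its portals $\Port_C$ with $\scope(C)$, respecting the ordering of the scope. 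Then $H$ is an instance of \srDomSet obtained with a single oracle call; inspecting the construction of \cref{lem:HW1} (it is assembled from $\Oh(k)$ auxiliary vertices and $\Oh(\rhoMax)$ copies of the $\{\sigma_{\sigMax}\}$-realizer of \cref{lem:forcingGadget}), each $\CG_C$ has size bounded by a function of $d$, $\sigMax$, $\rhoMax$, hence $H$ has size polynomial in $|G|$ and is computable in polynomial time.

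Next I would check equivalence of the two instances. By definition, a set $X\subseteq V$ is a $(\sigma,\rho)$-set of the graph with relations $G$ exactly when $X$ is a $(\sigma,\rho)$-set of $(V,E)$ and $|X\cap\scope(C)|=1$ for every $C\in\CC$. On the other side, since $(\CG_C,\Port_C)$ realizes $\HWeq[k]{1}$, we have $L(\CG_C,\Port_C)=L_R$ for $R=\HWeq[k]{1}$; unwinding \cref{def:realization} together with \cref{def:providersWithRelations}, this says that a selection of $\scope(C)$ extends to a partial solution of $\CG_C$ in which every portal receives $0$ selected neighbors \emph{inside} $\CG_C$ precisely when exactly one portal is selected. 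The ``$0$ selected neighbors'' clause is exactly what ensures that attaching the gadgets does not disturb the $\sigma$- and $\rho$-constraints of the vertices of $(V,E)$; combining it with the $L_R$-realizer property over all $C$, one gets that $X$ extends to a $(\sigma,\rho)$-set of $H$ if and only if $X$ is a $(\sigma,\rho)$-set of $(V,E)$ meeting each $\scope(C)$ in exactly one vertex. Hence $G$ is a yes-instance of \srDomSetRel[\HWsetGen{=1}] if and only if $H$ is a yes-instance of \srDomSet.

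Finally, I would bound $\pw(H)$, and this is the only step requiring (routine) care. By the width measure for graphs with relations, each $\scope(C)$ is a clique in the graph whose pathwidth equals $\pw(G)$, so some bag $B_C$ of the given decomposition contains all of $\scope(C)=\Port_C$. Fix a path decomposition of $\CG_C-\Port_C$; its width is $\Oh(1)$ because $\CG_C$ has constant size. I then splice, immediately after $B_C$, one copy of $B_C$ per bag of this internal decomposition, merging the $i$-th internal bag into the $i$-th copy; since $\Port_C\subseteq B_C$, these new bags cover all edges inside $\CG_C$, and each has size $\pw(G)+\Oh(1)$. Treating the (at most $|\CC|$) constraints one at a time in this fashion --- and it is essential to duplicate $B_C$ separately for each constraint rather than throwing all gadget vertices into a single bag, since there may be many constraints --- produces a valid path decomposition of $H$ of width $\pw(G)+\Oh(1)$, exactly as in the proof of \cref{lem:realizing:parsiArbitraryToHWoneAndEQ}. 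Thus the described map is a pathwidth-preserving reduction. I expect no genuine obstacle here: \cref{lem:HW1} has already done all the combinatorial work of realizing $\HWeq[k]{1}$ without relations, and what remains is standard bookkeeping on the path decomposition.
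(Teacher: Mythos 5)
Your proposal is correct and follows essentially the same route as the paper: replace each $\HWeq{1}$ relation by the (relation-free) realization from \cref{lem:HW1} and, for the width bound, duplicate the bag containing the scope once per constraint, so that the constant-size gadget only adds $\Oh(1)$ to the width. Your extra splicing of an internal path decomposition of $\CG_C-\Port_C$ is harmless but unnecessary, since the paper simply places all (constantly many) gadget vertices into the single copied bag; your explicit equivalence argument via the realizer property is exactly what the paper leaves implicit.
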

\begin{proof}
  Let $G$ be an instance of \srDomSetRel[\HWsetGen{=1}]
  where all relations have arity at most $d$.

  We use \cref{lem:realizing:hwExactly1}
  to replace each \HWeq{1} relation by its realization. It is easy to check that these realizations do not use relations, i.e., they are simply graphs with portals.
  Let $H$ be the resulting graph.
  A given path decomposition of $G$ is modified as follows.
  For each \HWeq{1} relation $R$ of $G$,
  there is a bag $B$ in the path decomposition of $G$ that contains $R$
  and the scope of $R$.
  We duplicate the bag $B$ and let $B_R$ denote the copy.
  We add all the vertices of the realization of $R$ to $B_R$.

  As the arity of each relation is bounded by a constant
  and we add only vertices of one realization to a bag,
  we get $\pw(H) \le \pw(G) + \Oh(1)$.
  Hence, the reduction is pathwidth-preserving.
\end{proof}

As the final step, we prove \cref{lem:lower:dec:removingRelations}.

\removingRelationsForDecVersion*

\begin{proof}
  The proof follows from combining the reductions in
  \cref{cor:realizing:parsiArbitraryToHWone,lem:realizing:removingHWone}.

  As the arity of the initial instance of \srDomSetRel
  is bounded by a constant,
  the arity of all constructed instances is also bounded by a constant
  (by the statement of \cref{cor:realizing:parsiArbitraryToHWone}).
  Hence, it follows from the transitivity of pathwidth-preserving reductions
  that the final reduction is also pathwidth-preserving.
\end{proof}

\section{Realizing Relations: Counting Version}
\label{sec:realizingrelationsforcounting}
\label{sec:realizingRelations:counting}

In this section, we show that if the counting problem \srCountDomSet is not polynomial-time solvable,
then it is sufficiently expressive to realize arbitrary relations, that is, we prove \cref{thm:RelfromDS}.
First, we introduce some notation which we use throughout \cref{sec:realizingRelations:counting}.

\paragraph*{Notation.}
Let $\allSets$ be the set of all finite and cofinite subsets of $\NN$,
and let $\allSetsne\deff\allSets\setminus\{\emptyset\}$.

Let $(\sigma, \rho)\in \allSetsne^2$.
For some non-negative integer $k$,
let $\CP \subseteq \allSets^2$ be a set of pairs
$\{(\sigma^{(i)}, \rho^{(i)}) \in \allSets^2\mid i\in \nset{k}\}$.
We now define a generalization of \srCountDomSet
that allows us to augment the problem with additional constraints 
specified by $\CP$ in order to obtain more modeling power.
For ease of notation, let $(\sigma^{(0)},\rho^{(0)})\deff (\sigma, \rho)$.

The problem \srCountDomSetRel[\CP] takes as input some graph $G$
together with a mapping $\lambda \from V(G) \to \fragment{0}{k}$.
Then, a $\lambda$-set of $G$ is a subset $S\subseteq V(G)$
with the following properties:
\begin{itemize}
	\item For each $v\in S$, we have $|N(v)\cap S| \in \sigma^{(\lambda(v))}$.
	\item For each $v\notin S$, we have $|N(v)\cap S| \in \rho^{(\lambda(v))}$.
\end{itemize}
\srCountDomSetRel[\CP] then asks for the number of $\lambda$-sets of $G$.

If $\CP$ is the empty set, then we also drop the superscript
and simply write \srCountDomSet.
In this case, it also suffices to take as input only the graph $G$
and assume that $\lambda$ is constant $0$.
A $\lambda$-set in this case is simply a $(\sigma, \rho)$-set.
We also refer to the set $S$ as a \emph{solution} of \srCountDomSetRel[\CP]
(on input $G,\lambda$).
For a less formal, but more convenient specification of $\lambda$, we say that a vertex $v$ with $\lambda(v)=i$ is a \emph{$(\sigma^{(i)}, \rho^{(i)})$-vertex}.

Sometimes we restrict the problem \srCountDomSetRel[\CP] to instances
for which certain pairs from $\CP$ are used only a constant number of times
(w.r.t.~the number of vertices in $G$).
To this end, for some positive integer $c$,
we say that a pair $(\sigma^{(i)},\rho^{(i)}) \in\CP$ is \emph{$c$-bounded}
if we restrict \srCountDomSetRel[\CP] to instances $G,\lambda$
with $|\lambda^{-1}(i)|\le c$.
In order to define pairs in $\CP$, the following notation is useful.
For a set $\tau$ of non-negative integers (think of $\sigma$ or $\rho$)
and an integer $s$, let $\tau-s \deff \{k-s \mid k\in \tau, k-s\ge 0\}$.

When working with \srCountDomSetRel[\CP],
it is helpful to generalize some definitions from \cref{sec:prelims}.
For a graph $G$ with portals $U$
and a mapping $\lambda \from V(G) \to \fragment{0}{k}$,
the tuple $\CG=(G,U,\lambda)$ is a \emph{gadget} for \srCountDomSetRel[\CP].
Again, we drop $\lambda$ if $\CP=\emptyset$.
Then, a \emph{partial solution} of $(G,U,\lambda)$
is a set $S\subseteq V(G)$ with:
\begin{itemize}
	\item For each $v\in S\setminus U$,
	we have $|N(v)\cap S| \in \sigma^{(\lambda(v))}$.
	\item For each $v\notin S\cup U$,
	we have $|N(v)\cap S| \in \rho^{(\lambda(v))}$.
\end{itemize}

Let $n=|V(G)|$ be the number of vertices in $G$.
Recall that
$\allStates_n\deff \{\sigma_0,\dots, \sigma_n, \rho_0, \dots, \rho_n\}$.
Recall that a partial solution $S$ \emph{witnesses} the string $x\in \allStates_n^U$ if, for each $u\in U$ and $z \deff |N(u) \cap S|$, we have
\[x\position{u} = \begin{cases}
	\sigma_{z} &\text{if $u\in S$}\\
	\rho_{z}	 &\text{otherwise.}
\end{cases}\]
Then, for $x\in \allStates_n^U$, $\ext_{\CG}(x)$ is the number of partial solutions of $\CG$ that witness $x$. We also refer to this as the number of \emph{extensions} of $x$ (to the gadget $\CG$).

Given a set of pairs $\CP$ from $\allSets^2$ and a pair $(\sigma',\rho')\in \allSets^2$, we use the shorthand $\CP+(\sigma',\rho')$ for the set $\CP\cup \{(\sigma',\rho')\}$.

\subsection{Proof of \texorpdfstring{\cref{thm:RelfromDS}}{Theorem \ref{thm:RelfromDS}}}

\begin{figure}
\centering
\begin{tikzpicture}[
  scale=.615,
  transform shape,
]
  \input{drawings/overview-tikz-style}
  \def\x{4.5cm}
  \def\xGap{1.25cm}
  \def\y{-2 cm}
  \def\xW{4.2cm}
  \def\xWimp{4.2cm-1.12mm}

  \crefname{lemma}{Lem.}{Lem.}

  \tikzset{redTur/.style={red}}
  \tikzset{%
  part/.style = { 
    draw=gray,
    fill=gray!15,
    rounded corners=2mm,
  },
  lowerB/.style = { 
  },
  important/.style = { 
    line width=.66mm,
  },
  case/.style = { 
  },
  res/.style = { 
    anchor=west,
    yshift=-0.5*\y,
  },
  }






  \node[problem, lowerB, break, minimum width=4*\x+\xWimp+2*\xGap, important]
    (domSetRel) at (\x+\xGap, 0)
    {\CountDomSetRel{\sigma}{\rho}};

  \node[case,break] (goodCase) at (.5*\x+\xGap/2, \y)
    {$\rho \neq \NN \lor \sigma$ finite};
  \draw[switch] (domSetRel.south -| goodCase) -- (goodCase);

  \node[problem,break] (domSetEqHWone) at (.5*\x+\xGap/2, 5*\y)
    {\srCountDomSetRel[\HWsetGen{=1},\EQset]};
  \draw[red] (goodCase) -- (domSetEqHWone);
  \node[res] at (domSetEqHWone)
    {\cref{lem:realizing:parsiArbitraryToHWoneAndEQ}};

  \node[problem,minimum width=3*\x+\xWimp+\xGap,important] (domSetHWone) at (.5*\x+\xGap/2, 6*\y)
    {\srCountDomSetRel[\HWsetGen{=1}]};
  \draw[red] (domSetEqHWone) -- (domSetHWone);
  \node[res] at (domSetHWone) {\cref{lem:realizing:parsiHWoneAndEQToHWone}};

  \node[case,break] (empty1Case) at (0, 7*\y)
    {$\rho$ finite $\land\; \rhoMax-1\notin \rho$};
  \draw[switch] (domSetHWone.south -| empty1Case) -- (empty1Case);

  \node[problem,break] (empty1) at (0, 10*\y)
    {\srCountDomSetRel[(\emptyset, \{1\})]};
  \draw[red] (empty1Case) -- (empty1);
  \node[res] at (empty1) {\cref{lem:I}};

  \node[case,break] (empty01Case) at (-1*\x, 7*\y)
    {$\rho$ finite $\land\; \rhoMax-1\in \rho$};
  \draw[switch] (domSetHWone.south -| empty01Case) -- (empty01Case);

  \node[problem,break] (domSetHWleOne) at (-1*\x, 8*\y)
    {\srCountDomSetRel[\HWsetGen{\le1}]};
  \draw[redTur] (empty01Case) -- (domSetHWleOne);
  \node[res] at (domSetHWleOne) {\cref{lem:IIrelversion}};

  \node[problem,break] (empty01) at (-1*\x, 10*\y)
    {\srCountDomSetRel[(\emptyset,\{0,1\})]};
  \draw[redTur] (domSetHWleOne) -- (empty01);
  \node[res] at (empty01) {\cref{lem:II}};

  \node[case,break] (cofCase) at (1.5*\x+.5*\xGap, 7*\y)
    {$\rho$ cofinite};
  \draw[switch] (domSetHWone.south -| cofCase) -- (cofCase);

  \node[problem,break,minimum width=\x+\xW+\xGap] (domSetHWgeOne) at (1.5*\x+\xGap/2, 8*\y)
    {\srCountDomSetRel[\HWsetGen{\ge1}]};
  \draw[redTur] (cofCase) -- (cofCase |- domSetHWgeOne.north);
  \node[res] at (domSetHWgeOne) {\cref{lem:IIIrelversion}};

  \node[case,break] (emptyFrom1Case) at (\x, 9*\y)
    {$\rho$ cofinite $\land\; \rho\neq \NN$};
  \draw[switch] (domSetHWgeOne.south -| emptyFrom1Case) -- (emptyFrom1Case);

  \node[problem,break] (emptyFrom1) at (\x, 10*\y)
    {\srCountDomSetRel[(\emptyset,\ZZ_{\ge 1})]};
  \draw[red] (emptyFrom1Case) -- (emptyFrom1);
  \node[res] at (emptyFrom1) {\cref{lem:III}};

  \node[problem,minimum width=2*\x+\xW] (sigEmptyEmptyRho) at (0, 11*\y)
    {\srCountDomSetRel[(\sigma,\emptyset)+(\emptyset,\rho)]};
  \draw[red] (empty1) -- (sigEmptyEmptyRho);
  \draw[red] (empty01) -- ([xshift=-1*\x] sigEmptyEmptyRho.north);
  \draw[red] (emptyFrom1) -- ([xshift=\x] sigEmptyEmptyRho.north);
  \node[res, xshift=-1*\x] at (sigEmptyEmptyRho) {\cref{lem:rel:liftingRho}};
  \node[res              ] at (sigEmptyEmptyRho) {\cref{lem:rel:liftingRho}};
  \node[res, xshift=\x]    at (sigEmptyEmptyRho) {\cref{lem:rel:liftingRho}};

  \node[case] (sigFinite) at (2*\x+\xGap, 9*\y)
    {$\rho=\NN \;\land\; \sigma$ finite};
  \draw[switch] (domSetHWgeOne.south -| sigFinite) -- (sigFinite);

  \node[problem,break] (0everything) at (2*\x+\xGap, 10*\y)
    {\srCountDomSetRel[(\{0\},\NN)]};
  \draw[redTur] (sigFinite) -- (0everything);
  \node[res] at (0everything) {\cref{lem:IV}};

  \node[problem,break] (sigEmpty) at (2*\x+\xGap, 11*\y)
    {\srCountDomSetRel[(\sigma,\emptyset)]};
  \draw[redTur] (0everything) -- (sigEmpty);
  \node[res] at (sigEmpty) {\cref{lem:sigfin2,lem:sigfin3}};

  \node[case] (sigCofinite) at (3*\x+2*\xGap, 1*\y)
    {$\rho=\NN \;\land\; \sigma$ cofinite};
  \draw[switch] (domSetRel.south -| sigCofinite) -- (sigCofinite);

  \node[problem,break] (cofRelWeight) at (3*\x+2*\xGap, 2*\y)
    {rel-wt. \srCountDomSetRel[(\ZZ_{\ge1},\NN),\altRel]};
  \draw[red] (sigCofinite) -- (cofRelWeight);
  \node[res] at (cofRelWeight) {\cref{lem:RelfromweightedRelS}};

  \node[problem,break] (cofVertWeight) at (3*\x+2*\xGap, 3*\y)
    {vtx-wt. \srCountDomSetRel[(\ZZ_{\ge1},\NN),\altRel]};
  \draw[red] (cofRelWeight) -- (cofVertWeight);
  \node[res] at (cofVertWeight) {\cref{lem:vertexWeightedRelS}};

  \node[problem,break] (cofUnWeight) at (3*\x+2*\xGap, 4*\y)
    {\srCountDomSetRel[(\ZZ_{\ge1},\NN),\altRel]};
  \draw[red] (cofVertWeight) -- (cofUnWeight);
  \node[res] at (cofUnWeight) {\cref{lem:relWeightedRelS}};

  \node[problem,break] (cofHWeq1) at (3*\x+2*\xGap, 6*\y)
    {\srCountDomSetRel[(\ZZ_{\ge1},\NN),\alt{\HWsetGen{=1}}]};
  \draw[red] (cofUnWeight) -- (cofHWeq1);
  \node[res] at (cofHWeq1) {\cref{lem:RelSfromHWS}};

  \node[problem,break] (cofHWge1) at (3*\x+2*\xGap, 8*\y)
    {\srCountDomSetRel[(\ZZ_{\ge1},\NN),\alt{\HWsetGen{\ge1}}]};
  \draw[red] (cofHWeq1) -- (cofHWge1);
  \node[res] at (cofHWge1) {\cref{lem:HWeqS}};

  \node[problem,break] (cofAllEmpty) at (3*\x+2*\xGap, 10*\y)
    {\srCountDomSetRel[(\ZZ_{\ge1},\NN)+(\NN, \emptyset)]%
      \footnotesize{\\\vspace{-.4em}$(\NN, \emptyset)$ 1-bounded}};
  \draw[red] (cofHWge1) -- (cofAllEmpty);
  \node[res] at (cofAllEmpty) {\cref{lem:HWgeS}};


  \node[problem,break] (cofAllEmptyBounded) at (3*\x+2*\xGap, 11*\y)
    {\srCountDomSetRel[{(\NN,\emptyset)}]
      \footnotesize{\\\vspace{-.1em}$(\NN, \emptyset)$ $\sigMax$-bounded}};
  \draw[red] (cofAllEmpty) -- (cofAllEmptyBounded);
  \node[res] at (cofAllEmptyBounded) {\cref{lem:sigcof2}};

  \node (domSetAnchor) at (\x+\xGap, 12*\y) {};
  \node[problem, lowerB, minimum width=4*\x+\xWimp+2*\xGap, important]
    (domSet) at (domSetAnchor) {\srCountDomSet};

  \draw[red] (sigEmptyEmptyRho) -- (domSet.north -| sigEmptyEmptyRho);
  \node[res] at (0, 12*\y) {\cref{lem:forceboth}};
  \draw[redTur] (sigEmpty) -- (domSet.north -| sigEmpty);
  \node[res] at (2*\x+\xGap, 12*\y) {\cref{lem:sigfin1}};
  \draw[red] (cofAllEmptyBounded) -- (domSet.north -| cofAllEmptyBounded);
  \node[res] at (3*\x+2*\xGap, 12*\y) {\cref{lem:sigcof1}};

  \begin{scope}[on background layer]
    \draw[part,fill=red!10]
      (-1.6*\x,          0*\y-.25cm) rectangle (2.6*\x+1*\xGap,  6*\y+.25cm);
    \draw[part,fill=green!50!black!5]
      (-1.6*\x,          6*\y-.25cm) --        ( 2.6*\x+1*\xGap, 6*\y-.25cm)--
      (2.6*\x+1*\xGap,   8*\y+.25cm) --        ( 1.6*\x,         8*\y+.25cm)--
      (1.6*\x,          12*\y+.25cm) --        (-1.6*\x,        12*\y+.25cm)
      -- cycle;
    \draw[part,fill=blue!10]
      ( 1.4*\x+1*\xGap,  8*\y-.25cm) rectangle (2.6*\x+1*\xGap, 12*\y+.25cm);
    \draw[part,fill=orange!10]
      ( 2.4*\x+2*\xGap,  0*\y-.25cm) rectangle (3.6*\x+2*\xGap, 12*\y+.25cm);
  \end{scope}
\end{tikzpicture}
\caption{
An overview of the reductions for the counting version of the problem.
The sets
$\sigma$ and $\rho$ are cofinite sets.
\\
The reductions in the top-left, red box
are given in \cref{sec:RelToHW1}.
The reductions of the remaining three boxes are covered
in \cref{sec:realizingRelations:counting}.
The green box on the left is covered in \cref{sec:rhoNotAll},
the blue box in the middle is covered in \cref{sec:rhoiseverything},
and the orange box on the right is covered in \cref{sec:rhoiseverything2}.
}
\label{fig:count:removingRelations}
\end{figure}

We restate the main result of this section using the new notation.
\thmCountRemovingRelations*

In its entirety, the proof of \cref{thm:RelfromDS} is very long and involved, which is why we try to break it down as much as possible. Consult \cref{fig:count:removingRelations} for a more detailed overview of the steps involved in this proof.
Below, we give the proof relying on three intermediate results that are proved in subsequent sections, namely \cref{lem:relrhonoteverything,lem:rhoeverything,lem:rhoeverything2}.

The proof is by a case distinction on $\sigma$ and $\rho$. At the top level, we consider three cases:
\begin{enumerate}[label = (\Alph*)]
 \item\label{item:relations-top-level-case-1} $\rho\neq \NN$,
 \item\label{item:relations-top-level-case-2} $\rho=\NN$ and $\sigma$ is finite, and
 \item\label{item:relations-top-level-case-3} $\rho=\NN$ and $\sigma$ is cofinite.
\end{enumerate}
Corresponding to these three cases, we state three intermediate results.

\begin{restatable}{lemma}{lemrelrhonoteverything}\label{lem:relrhonoteverything}
	Let $(\sigma,\rho)\in \allSetsne^2$ be non-trivial.
	If $\rho\neq \NN$,
	then we have \srCountDomSetRel[\HWeq1]~$\pwred$ \srCountDomSet.
\end{restatable}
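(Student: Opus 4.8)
\textbf{Proof plan for \cref{lem:relrhonoteverything}.}
The plan is to follow the road map laid out in \cref{fig:count:removingRelations} for Case~\ref{item:relations-top-level-case-1}. By \cref{cor:realizing:parsiArbitraryToHWone} it suffices to prove $\srCountDomSetRel[\HWsetGen{=1}] \pwred \srCountDomSet$ (in fact the reduction chain through \cref{lem:realizing:parsiArbitraryToHWoneAndEQ,lem:realizing:parsiHWoneAndEQToHWone} would also start from arbitrary relations, but $\HWsetGen{=1}$ is the cleaner starting point once the equality relations are removed). The overall strategy is to pass through a sequence of intermediate problems $\srCountDomSetRel[\CP]$ where $\CP$ is a small \emph{constant-size} set of extra vertex-types, using the fact (stated in the Technical Overview, to be formalized as \cref{lem:I,lem:II,lem:III}) that a single \HWeq{1} relation can be simulated using any one of the pairs $(\emptyset,\{1\})$, $(\emptyset,\{0,1\})$, $(\emptyset,\ZZ_{\ge 1})$. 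Hence I would first reduce $\srCountDomSetRel[\HWsetGen{=1}]$ to $\srCountDomSetRel[(\emptyset,\rho^*)]$ where $\rho^*\in\{\{1\},\{0,1\},\ZZ_{\ge1}\}$ is chosen according to whether $\rho$ is finite with $\rhoMax-1\notin\rho$, finite with $\rhoMax-1\in\rho$, or cofinite (but $\neq\NN$); each such simulation is a local gadget replacement and is clearly pathwidth-preserving since the scope of each $\HWeq{1}$ relation sits in a single bag, which we duplicate and extend by the $O(1)$ gadget vertices.

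The second stage is the ``shifting down'' argument sketched in the overview: I would reduce $\srCountDomSetRel[(\emptyset,\rho^*)]$ further to $\srCountDomSetRel[(\sigma,\emptyset)+(\emptyset,\rho)]$ (\cref{lem:rel:liftingRho}). The key construction introduces, for each vertex that should be an $(\emptyset,\rho^*)$-vertex, exactly $\rhoMax-1$ cliques on $\min\sigma+1$ vertices of type $(\sigma,\emptyset)$, with one vertex of the new $(\emptyset,\rho)$-type adjacent to one vertex of each clique; the $(\sigma,\emptyset)$-cliques force all their vertices to be selected, so the $(\emptyset,\rho)$-vertex already receives $\rhoMax-1$ ``free'' selected neighbors and therefore behaves exactly like an $(\emptyset,\rho-(\rhoMax-1))$-vertex, i.e.\ like $(\emptyset,\rho^*)$ for the appropriate $\rho^*$. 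This is where non-triviality is essential: we need $\rho\neq\{0\}$ (so that $\rhoMax\geq 1$) and $\rho\neq\NN$ to get the correct shifted set; both hold by hypothesis. Finally I would invoke the last step $\srCountDomSetRel[(\sigma,\emptyset)+(\emptyset,\rho)]\pwred\srCountDomSet$ (\cref{lem:forceboth}), which shows that the two special pair-types $(\sigma,\emptyset)$ (``force selected'') and $(\emptyset,\rho)$ (``force unselected'') can each be eliminated in favour of plain $(\sigma,\rho)$-vertices via the interpolation gadgets of Steps~1--4 from the Technical Overview (building $\CG_1,\CG_2,\CG_3$ and recovering $a_0$ from the $x^{\rhoMax}$-coefficient of an interpolated polynomial, then a further interpolation to handle an arbitrary number of $(\sigma,\emptyset)$-vertices). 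Composing these $O(1)$ many pathwidth-preserving reductions via \cref{obs:pwredtransitivity} yields the claimed reduction.

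\textbf{Main obstacle.} The routine parts are the $\HWeq{1}$-simulation gadgets and the ``shifting'' cliques — these are small, explicit, and the pathwidth bookkeeping is standard bag-duplication. The genuinely hard part is \cref{lem:forceboth}, i.e.\ the elimination of the $(\sigma,\emptyset)$- and $(\emptyset,\rho)$-vertices by interpolation. Here one must (i) verify the crucial inequality $\ext_{\CG_2}(\rho_0)\neq\ext_{\CG_2}(\sigma_0)$, which requires the degree argument comparing two polynomials in $x$ of degrees $\rhoMax$ and $\rhoMax-1$; (ii) handle the ``orders of magnitude'' separation so that computing $A+B$ modulo a large power of $\ext_{\CG_2}(\sigma_0)$ isolates $A$; (iii) deal with the degenerate subcase $\ext_{\CG_2}(\rho_1)=0$ separately; and (iv) control the cofinite-$\sigma$ subtleties where $A$ is no longer a finite sum of binomials and one must work with functions of the form $c^x-p(x)$, using the trick $f(x+1)-c\cdot f(x)$ to reduce back to polynomials. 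Throughout, one must also check that attaching the interpolation gadgets to \emph{every} vertex (as Step~3/Step~4 require) keeps the pathwidth increase to an additive constant — this is true because each gadget has constant size and attaches at a single portal, so its vertices can be pushed into a duplicated copy of any bag containing that portal. The delicate accounting that the number of solutions of the modified graph is the polynomial $P(y)$ whose coefficients we interpolate, and that evaluating at $y=-1$ realizes the negative coefficients $d_i$, is the technical heart and the step I expect to occupy most of the detailed proof.
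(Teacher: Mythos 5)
Your plan is essentially the paper's own proof: the same case distinction on $\rho$ reducing \srCountDomSetRel[\HWeq{1}] to \srCountDomSetRel[{(\emptyset,\rho-(\rhoMax-1))}] (your $\rho^*$), the same lifting via $(\sigMin+1)$-cliques of $(\sigma,\emptyset)$-vertices (\cref{lem:rel:liftingRho}), and the same final appeal to \cref{lem:forceboth}, whose winner/strong-candidate interpolation machinery you correctly identify as the technical heart. One imprecision is worth flagging: in two of the three cases your first stage is \emph{not} a purely local gadget replacement. An $(\emptyset,\{0,1\})$-vertex attached to the scope of a relation only enforces \HWle{1}, and an $(\emptyset,\ZZ_{\ge 1})$-vertex only \HWge{1}; substituting them directly for \HWeq{1} relations would also count the unwanted selections with zero (respectively, two or more) selected scope vertices. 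The paper therefore interposes interpolation-based Turing reductions from \srCountDomSetRel[\HWeq{1}] to \srCountDomSetRel[\HWle{1}] and to \srCountDomSetRel[\HWge{1}] (\cref{lem:IIrelversion,lem:IIIrelversion}), attaching $x$ copies of the $\{\sigma_s,\rho_r\}$-provider from \cref{lem:happyGadget} to each scope and recovering the ``exactly one selected'' count by polynomial interpolation; only the case $\rho^*=\{1\}$ (\cref{lem:I}) is the parsimonious local replacement you describe. These extra steps are still pathwidth-preserving, so your overall chain goes through once they are added.
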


The proof of \cref{lem:relrhonoteverything} is given in \cref{sec:rhoNotAll}.

\begin{restatable}{lemma}{lemrhoeverything}\label{lem:rhoeverything}
	Let $(\sigma,\rho)\in \allSetsne^2$ be non-trivial.
	If $\rho= \ZZ_{\ge0}$ and $\sigma$ is finite,
	then \srCountDomSetRel[\HWeq1]~$\pwred$ \srCountDomSet.
\end{restatable}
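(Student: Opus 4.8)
Since $\rho=\NN$ and $(\sigma,\rho)$ is non-trivial, we automatically have $\sigma\neq\NN$ (as $\sigma$ is finite and non-empty), so $\sigMax=\max\sigma\in\sigma$. The plan is to realize the $\HWeq[k]{1}$ relations inside the relationless counting problem by a constant-length chain of pathwidth- and arity-preserving reductions passing through the vertex-typed variants $\srCountDomSetRel[\CP]$ introduced above; by \cref{obs:paRedTransitivity} such a chain composes. Concretely, I would prove
\[
\srCountDomSetRel[\HWeq1]
\;\pwred\;
\srCountDomSetRel[\HWge1]
\;\pwred\;
\srCountDomSetRel[(\{0\},\NN)]
\;\pwred\;
\srCountDomSetRel[(\sigma,\emptyset)]
\;\pwred\;
\srCountDomSet ,
\]
each step peeling off one layer of structure. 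In the degenerate case $\sigma=\{0\}$ (so $\sigMax=0$) the third and fourth problems coincide with the base problem, so only the first two reductions are needed.

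\textbf{The two shifting reductions.} The steps $\srCountDomSetRel[\HWge1]\pwred\srCountDomSetRel[(\{0\},\NN)]$ and $\srCountDomSetRel[(\{0\},\NN)]\pwred\srCountDomSetRel[(\sigma,\emptyset)]$ are gadget substitutions combined with light interpolation. For the first: since $\rho=\NN$, attaching a $(\{0\},\NN)$-vertex $c$ to the scope $\{\port_1,\dots,\port_k\}$ of an $\HWge[k]{1}$ constraint allows $c$ to be selected exactly when all $\port_i$ are unselected and imposes nothing otherwise, so $c$ merely ``flags'' the all-zero configuration; making the contribution of $c$ being selected equal to a tunable weight $y$ (realized without adding selected neighbours to $c$, by attaching pendant parsimonious providers from \cref{lem:count:parsFillingGadget} through fresh vertices rather than edges), interpolating to recover the solution-counting polynomial in $y$, and evaluating it at the point that cancels the all-zero slice, yields exactly the ``at least one selected'' count. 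For the second step, a $(\sigma,\NN)$-vertex ``shifted down'' by $\sigMax$ behaves as a $(\{0\},\NN)$-vertex: one attaches $\sigMax$ pairwise-disjoint gadgets each forcing exactly one selected neighbour onto the simulated vertex, built from $(\sigma,\emptyset)$-vertices (which must be selected) whose own $\sigma$-constraint is kept satisfied regardless of the simulated vertex's status by standard padding using the providers of \cref{sec:provider}. Both reductions increase pathwidth and arity by only $\Oh(1)$ because $k$ and $\sigMax$ are constants.

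\textbf{The interpolation core.} The real work is $\srCountDomSetRel[(\sigma,\emptyset)]\pwred\srCountDomSet$: realizing a vertex \emph{forced to be selected} with no relations available. Following the Step~1--4 scheme of the technical overview (which simplifies here because $\rho=\NN$ makes ``forcing a vertex unselected'' automatic and strips the $\rho$-dependence from the relevant polynomials), I would build gadgets $\CG_1,\CG_2,\CG_3$ with controlled extension numbers: $\CG_1$ realizing $\{\rho_0,\sigma_0\}$ with positive counts; $\CG_2$, obtained by bundling $x$ copies of $\CG_1$ under a fresh portal, satisfying $\ext_{\CG_2}(\rho_0)\neq\ext_{\CG_2}(\sigma_0)$ for a suitable $x$ (the two quantities are values of polynomials in $x$ of different degrees); and $\CG_3$, obtained by bundling $x$ copies of $\CG_2$, such that attaching it to a vertex $v$ of $G$ makes the number of solutions equal to $A+B$, where $A$ collects the contributions with $v$ unselected and $B$ those with $v$ selected. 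Since $\ext_{\CG_2}(\rho_0)\neq\ext_{\CG_2}(\sigma_0)$, for large $x$ one of $A,B$ is divisible by a far larger power than the other is in magnitude, so reducing $A+B$ modulo that power isolates the ``smaller'' side; interpolating in $x$ then extracts the leading coefficient, which counts exactly the partial solutions in which $v$ has the prescribed status. Applying this simultaneously to all vertices that must be of type $(\sigma,\emptyset)$, with a single global parameter $x$, plus a final interpolation, recovers the desired count as a value of the bare \srCountDomSet problem.

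\textbf{Main obstacle.} The delicate part is the interpolation core: proving the separation $\ext_{\CG_2}(\rho_0)\neq\ext_{\CG_2}(\sigma_0)$ for \emph{every} admissible finite $\sigma$ (a degree argument, but one must rule out accidental coincidences and handle sub-cases such as $\ext_{\CG_1}(\rho_1)=0$), and controlling the nested interpolations so that the number of oracle calls, the instance blow-up, and the arithmetic all remain polynomial while pathwidth rises by only $\Oh(1)$. A secondary, routine-but-error-prone source of work is verifying in the $\HWge1$- and $(\{0\},\NN)$-steps that the correction terms have the expected degree and sign so that the chosen evaluation point really cancels the unwanted slice, and ensuring that every auxiliary provider invoked (from \cref{lem:count:parsFillingGadget,lem:happyGadget}) is parsimonious and attached without creating new selected neighbours of the portals, so that no spurious solutions are introduced.
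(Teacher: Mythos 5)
Your chain of intermediate problems is exactly the paper's: \srCountDomSetRel[\HWeq1]~$\pwred$~\srCountDomSetRel[\HWge1]~$\pwred$~\srCountDomSetRel[(\{0\},\NN)]~$\pwred$~\srCountDomSetRel[(\sigma,\emptyset)]~$\pwred$~\srCountDomSet, and your first three links match \cref{lem:IIIrelversion}, \cref{lem:IV} and \cref{lem:sigfin2} in spirit (the paper handles the \HWge1-to-$(\{0\},\NN)$ step without any weight gadgets, simply by attaching $x=n+1$ plain $(\{0\},\NN)$-vertices and reading off the count modulo $2^x$, which also avoids your reliance on the parsimonious providers of \cref{lem:count:parsFillingGadget} — those use relations and are not available in the relation-free target problems).

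The genuine gap is in your ``interpolation core'' for $\srCountDomSetRel[(\sigma,\emptyset)]\pwred\srCountDomSet$. You propose to reuse the Step~1--3 scheme ($\CG_1,\CG_2,\CG_3$), whose engine is the separation $\ext_{\CG_2}(\rho_0)\neq\ext_{\CG_2}(\sigma_0)$ obtained by bundling $x$ copies of $\CG_1$ under a fresh vertex $w$ and exploiting $w$'s $\rho$-constraint: the two counts are $\sum_{i\in\rho}\binom{x}{i}\ext_{\CG_1}(\rho_1)^i\ext_{\CG_1}(\rho_0)^{x-i}$ and $\sum_{i+1\in\rho}\binom{x}{i}\ext_{\CG_1}(\rho_1)^i\ext_{\CG_1}(\rho_0)^{x-i}$, and their different degrees as polynomials in $x$ come precisely from $\rho$ being a proper restriction. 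With $\rho=\NN$ both sums run over all $i$ and are identically equal to $(\ext_{\CG_1}(\rho_0)+\ext_{\CG_1}(\rho_1))^x$ for every $x$, so no choice of $x$ separates them: an unselected $w$ imposes no constraint when $\rho=\NN$, hence it cannot transmit the selection status of the new portal. This is exactly why the paper's candidate/winner machinery (\cref{lem:candidate}, \cref{lem:forcesingle00}) is stated only for $\rho\neq\ZZ_{\ge0}$, and why Section~\ref{sec:rhoiseverything} takes a different route for the last link: in \cref{lem:sigfin1} one attaches to each $(\sigma,\emptyset)$-vertex $u$ a gadget of $x$ cliques of order $\sigMin+1$, each touching $u$ in one vertex. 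If $u$ is unselected, every clique is freely all-or-nothing, contributing a factor $2^x$ that is annihilated by computing the solution count modulo $2^x$; if $u$ is selected with $i$ selected neighbours in $G$, the number of extensions is a polynomial $f_i(x)=\sum_{i+s\in\sigma}\binom{x}{s}$ of degree at most $\sigMax$ whose constant term is $1$ iff $i\in\sigma$, so interpolating the residue (a polynomial of degree at most $\sigMax\cdot|U|$) and extracting its constant term yields the wanted count. Your proposal would need to be repaired by replacing the $\CG_1/\CG_2$ separation with such a mechanism that distinguishes the two statuses through the \emph{$\sigma$-constraint of the forced-selected vertex itself} rather than through the $\rho$-constraint of an auxiliary unselected vertex.
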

The proof of \cref{lem:rhoeverything} is given in \cref{sec:rhoiseverything}.

\begin{restatable}{lemma}{lemrhoeverythingtwo}\label{lem:rhoeverything2}
	Let $(\sigma,\rho)\in \allSetsne^2$ be non-trivial.
	If $\rho= \ZZ_{\ge0}$ and $\sigma$ is cofinite,
	then, for all constants $d$, there is a pathwidth-preserving reduction
	from \srCountDomSetRel on instances with arity at most $d$ to \srCountDomSet.
\end{restatable}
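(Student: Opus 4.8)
The plan is to prove \cref{lem:rhoeverything2} by composing a bounded-length chain of pathwidth-preserving reductions, realizing the orange block of \cref{fig:count:removingRelations}. What makes this case genuinely harder than \cref{lem:rhoeverything} is the saturation phenomenon described in the technical overview: with $\rho=\NN$ and $\sigma$ cofinite, an unselected vertex is always satisfied, and a selected vertex with at least $\sigMax$ selected neighbors is satisfied no matter what else happens to it, so once a $(\sigma,\NN)$-vertex is saturated it behaves exactly like a $(\NN,\NN)$-vertex. Consequently one cannot hope to attach a gadget that simultaneously \emph{observes} the selection status of such a vertex and keeps it meaningfully constrained with respect to the rest of the graph --- which is exactly what realizing a relation on it would require. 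The workaround is to route the whole argument through an auxiliary problem, the $\altRel$-variant, in which \emph{every} vertex lying in the scope of some relation is declared to be a $(\NN,\NN)$-vertex (so it imposes no degree constraint of its own and only has to obey the relations), while the remaining vertices keep the pair $(\sigma,\rho)=(\sigma,\NN)$; in addition we permit a fresh vertex type $(\ZZ_{\ge 1},\NN)$, which is the ``atomic'' cofinite-$\sigma$ behaviour and which can be simulated inside \srCountDomSet by wiring a $(\sigma,\NN)$-vertex to a shared clique of $\sigMax-1$ forced-selected $(\NN,\emptyset)$-vertices (note $\sigMax-1\notin\sigma$ by definition of $\sigMax$, and $\sigMax\ge 1$ since $(\sigma,\NN)$ is non-trivial), which is why only a \emph{constant} number of $(\NN,\emptyset)$-vertices is ever needed.

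The chain itself would proceed as follows. (i) Reduce the given \srCountDomSetRel instance to a \emph{relation-weighted} instance of the $\altRel$-variant together with the extra pair $(\ZZ_{\ge 1},\NN)$: each relation-constrained vertex is split off onto a fresh $(\NN,\NN)$ vertex carrying the relation, and rational weights on the relations bookkeep the selected-neighbor count that the original cofinite-$\sigma$ vertex would have had; this is the step that absorbs the saturation issue. (ii) Remove the weights by the interpolation machinery already set up in \cref{lem:count:relationWeightedToVertexWeighted,lem:count:vertexWeightedToUnweighted} (relation-weighted $\pwred$ vertex-weighted $\pwred$ unweighted), noting that only a constant number of distinct weights occurs, so pathwidth and arity stay within additive constants. (iii) Simplify the relations of the $\altRel$-variant, first to $\HWeq{1}$ and then to $\HWge{1}$ relations, reusing the parsimonious realizers of \cref{sec:RelToHW1}; the constant arity bound $d$ is used exactly as in \cref{cor:realizing:parsiArbitraryToHWone}. (iv) Realize $\HWge{1}$ relations using only the constantly-many $(\NN,\emptyset)$-vertices and an interpolation argument that isolates the intended family of selections while cancelling the spurious ones. (v) Express the $(\NN,\emptyset)$-vertices --- equivalently, $(\sigma,\emptyset)$-vertices, i.e.\ forced-selected vertices --- and then delete them, recovering their weighted contribution by one more interpolation, which brings us down to plain \srCountDomSet.

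The main obstacle is step (i) together with the $\HWge{1}$-realization in step (iv): both depend on being able to make a cofinite-$\sigma$ vertex ``disclose'' whether it is selected even though, once saturated, it is oblivious to the gadget attached to it. This is also why the interpolation throughout is not over honest polynomials but over functions of the shape $f(x)=c^{x}-p(x)$ with $p$ a polynomial; the identity $f(x+1)-c\cdot f(x)=p(x+1)-c\cdot p(x)$ from the overview collapses these to polynomials of the same degree, whose coefficients can then be recovered by a Vandermonde argument, after which one evaluates at the desired point. Finally, each reduction in the chain enlarges a single bag of the path decomposition by at most a constant (every gadget has size bounded in terms of $\sigMax$, $\rhoMax$, and $d$) and increases the arity by at most a constant, so since the chain has constant length, \cref{obs:pwredtransitivity} yields the single pathwidth-preserving reduction asserted by \cref{lem:rhoeverything2}.
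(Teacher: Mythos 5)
Your chain (i)--(v) is the paper's own route: it is exactly the orange column of \cref{fig:count:removingRelations}, i.e.\ \cref{lem:RelfromweightedRelS} (relation-weighted $\altRel$-variant with the extra pair $(\ZZ_{\ge1},\NN)$), the weight-removal steps \cref{lem:vertexWeightedRelS,lem:relWeightedRelS}, the relation-simplification steps \cref{lem:RelSfromHWS,lem:HWeqS}, the realization of $\alt{\HWsetGen{\ge1}}$ via a single $(\NN,\emptyset)$-vertex plus interpolation (\cref{lem:HWgeS}), the shift by $\sigMax-1$ shared forced-selected vertices to turn $(\ZZ_{\ge1},\NN)$ back into $(\sigma,\NN)$ (\cref{lem:sigcof2}), and finally the removal of the constantly many $(\NN,\emptyset)$-vertices (\cref{lem:sigcof1}). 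Your justification for why only constantly many forced vertices are needed, and why $\sigMax-1\notin\sigma$ makes the shift produce exactly $\ZZ_{\ge1}$, matches the paper.

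The genuine gap is step (i), which you yourself flag as ``the main obstacle'' but then only assert: saying that ``rational weights on the relations bookkeep the selected-neighbor count that the original cofinite-$\sigma$ vertex would have had'' is not a construction, and it is precisely the point where the saturation problem has to be defeated. The paper does this concretely: each scope vertex $z$ of a relation gets a mirror copy $z_Z$ that carries the relation, and for every such pair one attaches a fixed gadget containing two $(\ZZ_{\ge1},\NN)$-vertices and a $4$-ary \emph{weighted} relation on $\{a,b,a',b'\}$; the weights are obtained by inverting the Kronecker product $M\otimes M$ of an explicit $3\times 3$ extension-count matrix (\cref{clm:tensor}), so that \emph{all} portal-state pairs cancel except $(\sigma_0,\sigma_0)$ and $(\rho,\rho)$, each surviving with total weight $1$. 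This simultaneously forces $z$ and $z_Z$ to have the same selection status and guarantees that a selected $z$ receives no additional selected neighbors from the gadget, which is what lets $z$ remain a genuine $(\sigma,\NN)$-vertex of the original graph while the relation is enforced on the $(\NN,\NN)$-copies. Without this (or an equivalent cancellation device producing the constant weight set $q$), your step (i) does not go through, and the rest of the chain has nothing to stand on. Two smaller inaccuracies: $(\NN,\emptyset)$-vertices are \emph{not} equivalent to $(\sigma,\emptyset)$-vertices (and using $(\sigma,\emptyset)$-vertices as the shared forced neighbors in the shift would impose spurious degree constraints on them and break \cref{lem:sigcof2}); and the $f(x)=c^x-p(x)$ trick you invoke belongs to the $\rho\neq\NN$ case of \cref{sec:forcesingle00} --- in this case the interpolations are plain polynomials in $2^x$ (\cref{lem:HWeqS,lem:HWgeS}) and the final step \cref{lem:sigcof1} uses a floor/modulus domination argument rather than interpolation.
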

\Cref{sec:rhoiseverything2} contains the proof of this lemma.
With these three results, we have everything ready
to prove the main result of this section.
\begin{proof}[Proof of \cref{thm:RelfromDS}]
	Recall that we have already established in \cref{sec:RelToHW1}
	that it suffices to model $\HWeq1$ relations in order to model arbitrary relations (\cref{lem:Irelversion}). We use this result in the first two of the following three cases:
	\begin{itemize}
		\item
		If $\rho\neq\NN$,
		then the proof follows from \cref{lem:Irelversion,lem:relrhonoteverything}.
		\item
		If $\rho=\NN$ and $\sigma$ is finite,
		then the proof follows from
		\cref{lem:Irelversion,lem:rhoeverything}.
		\item
		If $\rho=\NN$ and $\sigma$ is cofinite,
		then the proof follows from \cref{lem:rhoeverything2}.
		\qedhere
	\end{itemize}
\end{proof}

Before going into the detailed proofs of \cref{lem:relrhonoteverything,lem:rhoeverything,lem:rhoeverything2} in the next sections, let us give some rough outline to point out how they relate, or how they differ.

We know from \cref{sec:RelToHW1}
that it suffices to model $\HWeq1$ relations,
and we can use \srCountDomSetRel[\HWeq1] as a starting point for our reductions.
This is what we do in Case \ref{item:relations-top-level-case-1}
($\rho\neq \NN$) and Case \ref{item:relations-top-level-case-2} ($\rho=\NN$, $\sigma$ finite).
However, Case \ref{item:relations-top-level-case-3} ($\rho=\NN$, $\sigma$ cofinite) is substantially different. Let us first talk about Cases \ref{item:relations-top-level-case-1} and \ref{item:relations-top-level-case-2}.

A first key ingredient is augmenting the problem by additional constraint pairs,
i.e., using the problem \srCountDomSetRel[\CP] for certain sets of pairs $\CP$.
In particular, we show that each of the following pairs can be used to model $\HWset{1}$ constraints:
\begin{itemize}
	\item $(\emptyset, \{1\})$,
	\item $(\emptyset, \{0,1\})$,
	\item $(\emptyset, \ZZ_{\ge 1})$,
	\item $(\{0\}, \NN)$.
\end{itemize}
The first of these pairs, for instance, refers to vertices that are never selected (since the first entry is $\emptyset$, and if they are unselected they have exactly 1 selected neighbor).

As a second key ingredient, we show that in case $\rho\neq\NN$, if we can force the selection status of $(\sigma, \rho)$-vertices, that means, if we can model $(\emptyset, \rho)$-vertices as well as $(\sigma, \emptyset)$-vertices, then we can model one of the constraint pairs in the list. In fact, we can model one of $(\emptyset, \{1\})$,
$(\emptyset, \{0,1\})$,
or $(\emptyset, \ZZ_{\ge 1})$. The chain of reductions is then finished by showing that we can in fact force the selection status of $(\sigma, \rho)$-vertices.

The treatment of Case \ref{item:relations-top-level-case-2} ($\rho=\NN$, $\sigma$ finite) is similar to the first case and arguably even slightly simpler. It turns out that here it suffices to force selected vertices, i.e., to model $(\sigma, \emptyset)$-vertices, in order to model $(\{0\}, \NN)$-vertices.

Now, let us talk about Case \ref{item:relations-top-level-case-3} ($\rho=\NN$ with cofinite $\sigma$).
This case turns out to be somewhat special. To give some a-priori intuition for why this is the case, suppose that $\sigma=\ZZ_{\ge1}$, which turns out to be the key case.
Think of some $(\ZZ_{\ge1},\NN)$-vertex $v$ that is subject to a relation $R$. In order to model the relation, we need to distinguish between solutions in which $v$
is selected and those in which it is not (in the sense that there is a different number of partial solutions extending the two states).
Note that, since $\rho=\NN$, unselected neighbors of $v$ behave the same independently of the selection status of $v$. Also, in general, if $v$ has only unselected neighbors, then their number does not affect the selection status of $v$.
Therefore, the fact of whether $v$ is selected can be distinguished only
if $v$ has at least one selected neighbor.
However, as soon as the $(\ZZ_{\ge1},\NN)$-vertex $v$ has at least one selected neighbor,
it has a feasible number of selected neighbors independently of its selection status or how many additional selected neighbors it might obtain.
So, at this point, $v$ is entirely unaffected
by the selection status of its remaining neighbors
and acts essentially as a $(\NN,\NN)$-vertex.
Since $v$ now behaves the same no matter what,
it is hard to imagine how one could recover the situation
where $v$ behaves like a $(\ZZ_{\ge1},\NN)$-vertex
(while having a handle on its selection status).
This is why while we are able to observe the selection status of $v$ and, in this sense, model the relation $R$ (by attaching some gadgets and interpolating), we did not manage to simultaneously recover the situation where $v$ behaves like a $(\ZZ_{\ge1},\NN)$-vertex with respect to the original graph.
To handle this situation, we take a different view on the concept of
``$(\sigma,\rho)$-sets with relations''. In particular, we diverge from
\cref{def:lower:graphWithRelations} in the sense that all vertices that are subject to
some relation are now treated as $(\NN,\NN)$-vertices, that is, they do not impose any constraints on the solutions other than that they have to satisfy the relations.
For the details, we refer to \cref{sec:rhoiseverything2}.

Generally speaking, working with $(\NN,\NN)$-vertices seems particularly suited for the setting where both $\sigma$ and $\rho$ are cofinite because, once some gadget provides $\max\{\sigMax,\rhoMax\}$ selected neighbors to a vertex, this vertex ``is always satisfied'' in the sense that it acts as a $(\NN,\NN)$-vertex with respect to the rest of the graph (outside the gadget). However, in the context of finite $\sigma$ or $\rho$, it is harder to imagine how to model the concept of being ``always satisfied'', so it seems less useful.

\subsection{\boldmath The Case
\texorpdfstring{$\rho\neq \NN$}{Rho is Restricted}}
\label{sec:RelfromRhoStates}
\label{sec:rhoNotAll}

In this section, we prove \cref{lem:relrhonoteverything}, that is, we reduce from \srCountDomSetRel[\HWeq1] to \srCountDomSet
when $\rho$ is not equal to $\NN$.
We do this with the following two steps.
\begin{enumerate}
  \item
  We first model $\HWeq{1}$ relations using $(\sigma,\emptyset)$-vertices
  and $(\emptyset,\rho)$-vertices. Intuitively, these are (normal $(\sigma,\rho)$-)vertices
  that we force to be selected and such vertices that we force to be unselected, respectively.

  \item
  We replace the $(\sigma,\emptyset)$-vertices
  and the $(\emptyset,\rho)$-vertices
  using special gadgets which we later refer to
  as \emph{winners} and \emph{strong candidates}.
\end{enumerate}

The first step is formally described by the following lemma.

\begin{restatable}{lemma}{lemrelfromforcing}\label{lem:Relfromforcing}
  Let $(\sigma,\rho)\in \allSetsne^2$ be non-trivial with $\rho\neq \ZZ_{\ge0}$.
  Then, \srCountDomSetRel[\HWeq1]~$\pwred$
  \srCountDomSetRel[\{(\emptyset,\rho),(\sigma,\emptyset)\}].
\end{restatable}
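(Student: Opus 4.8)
The plan is to construct a pathwidth-preserving gadget that realizes each $\HWeq[k]{1}$ relation using only $(\sigma,\emptyset)$- and $(\emptyset,\rho)$-vertices (together with the vertices of the scope $\Port=\{\port_1,\dots,\port_k\}$, which remain ordinary $(\sigma,\rho)$-vertices). Recall that a $(\sigma,\emptyset)$-vertex is forced to be selected (its $\sigma$-constraint must be met and it has no legal unselected state), and a $(\emptyset,\rho)$-vertex is forced to be unselected; crucially, a realization must not add any \emph{selected} neighbors to the portals. So the strategy mirrors the decision-version construction in \cref{lem:realizing:hwExactly1}, but now we get the ``forcing'' behavior directly from the auxiliary constraint pairs in $\CP$ rather than from \cref{lem:forcingGadget}, and --- since we are in the counting setting --- I must make the realization \emph{parsimonious}, i.e.\ for each accepted selection of the portals there is exactly one extension into the gadget.

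Here is the construction I would use. Since $(\sigma,\rho)$ is non-trivial and $\rho\neq\NN$, we have $\rho\neq\{0\}$, so $\rhoMax\geq 1$; also $\rho$ is either finite or cofinite-but-not-$\NN$, and in either case there is a threshold behavior I can exploit. Introduce one $(\emptyset,\rho)$-vertex $v$ adjacent to all of $\port_1,\dots,\port_k$ and also adjacent to $\rhoMax-1$ ``padding'' $(\sigma,\emptyset)$-vertices, each of which is in turn attached to a parsimonious $\{\sigma_s,\rho_r\}$-provider (from \cref{lem:count:parsFillingGadget}) so that these padding vertices can always meet their $\sigma$-constraint \emph{without} sending any selected neighbor back to $v$ --- wait, padding vertices \emph{are} selected, so they do contribute to $v$'s count; that is exactly the point. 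The vertex $v$ is forced unselected, already has $\rhoMax-1$ selected neighbors from the padding, hence needs between $\min$ and $\max$ more from the portals so that its total lies in $\rho$. When $\rho$ is finite this forces exactly one portal selected provided $\rhoMax-1\notin\rho$; when $\rhoMax-1\in\rho$ one needs a second ``blocking'' $(\emptyset,\rho)$-vertex $v'$ with a different amount of padding to rule out the zero-portal case, exactly as in Cases~1--3 of \cref{lem:realizing:hwExactly1}. In the cofinite case ($\rho\neq\NN$ cofinite), the set $\rho$ contains an interval $[\rhoMax,\infty)$ but misses $\rhoMax-1$, so the same counting argument pins down precisely one selected portal. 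Parsimony is obtained by fixing, once and for all, the partial solution inside each attached provider and each padding gadget (each provider is parsimonious, and the padding vertices have a forced selection status), so the only freedom left is which portal is selected --- which is precisely the content of $\HWeq[k]{1}$. I would also double-check that $v$ and $v'$ being $(\emptyset,\rho)$-vertices means they genuinely send no selected neighbor to the portals (they are unselected), and the padding $(\sigma,\emptyset)$-vertices are only adjacent to $v$ (or $v'$) and to their own provider, never to a portal, so no portal gains a selected neighbor from the gadget. Good.

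With the gadget in hand, the reduction itself is routine: given an instance $(G,\CC)$ of \srCountDomSetRel[\HWeq1], replace each $\HWeq[k]{1}$ constraint by a fresh copy of the gadget on its scope, set $\lambda$ appropriately on the new vertices, and leave all original vertices as $(\sigma,\rho)$-vertices. Parsimony of the gadget gives a bijection between $(\sigma,\rho)$-sets of $G$ respecting $\CC$ and $\lambda$-sets of the new graph, so the number of solutions is preserved. For pathwidth: each gadget has size bounded by a function of $\rhoMax,\sigMax$ (hence a constant), so by duplicating, for each constraint $C$, a bag of the given path decomposition that contains $\scope(C)$ and adding the $O(1)$ gadget vertices to the copy, we increase the width by only an additive constant; doing this separately for each constraint (as stressed at the end of \cref{lem:realizing:parsiArbitraryToHWoneAndEQ}) keeps the increase additive. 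This yields \srCountDomSetRel[\HWeq1]~$\pwred$~\srCountDomSetRel[\{(\emptyset,\rho),(\sigma,\emptyset)\}], as claimed.

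The main obstacle I anticipate is the case analysis on $\rho$ needed to make the counting argument pin down \emph{exactly} one selected portal in a parsimonious way: the finite-with-$\rhoMax-1\in\rho$ subcase and the cofinite subcase each need a slightly different arrangement of the blocking vertex $v'$ and its padding, and one has to verify carefully that no unintended extension sneaks in (e.g.\ that selecting a padding provider in its ``wrong'' state is impossible because the providers are parsimonious, and that $v'$ cannot be accidentally selectable). A secondary technical point is ensuring the attached $\{\sigma_s,\rho_r\}$-providers supplying the padding vertices exist with the right forced selection status --- here I would lean on \cref{lem:count:parsFillingGadget} and the non-triviality hypothesis $\rho\neq\{0\}$ (so $r\geq 1$ is available) exactly as in \cref{lem:count:parsFillingGadgetWithHWone}. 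Everything else (the bag-duplication pathwidth bookkeeping, the bijection of solutions) is standard.
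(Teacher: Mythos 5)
There is a genuine gap: your core claim that the exactly-one constraint can be enforced by a direct gadget ``exactly as in Cases 1--3 of \cref{lem:realizing:hwExactly1}'' fails in precisely the subcases that make this lemma hard. The decision-version construction relies on $\rho$ being finite and $0 \notin \rho$; here, non-triviality only excludes $\rho = \{0\}$ and $\rho = \NN$, so $0 \in \rho$ and cofinite $\rho$ are allowed, and in those regimes no attached gadget (over $(\emptyset,\rho)$-, $(\sigma,\emptyset)$- and $(\sigma,\rho)$-vertices, adding no selected neighbors to the portals) can realize $\HWeq{1}$ at all, parsimoniously or otherwise. Concretely, for $\rho=\{0,1\}$ any partial solution of the gadget witnessing ``exactly one portal selected'' remains a valid partial solution after unselecting that portal (every unselected gadget-neighbor of the portal drops from count $1$ to count $0\in\rho$, and the portal's own constraint is not checked inside the gadget), so the zero-portal selection is always also accepted and ``at least one'' cannot be forced; dually, for cofinite $\rho$ (e.g.\ $\rho=\ZZ_{\ge 1}$) any witness of one selected portal stays valid after selecting a second portal, so ``at most one'' cannot be forced. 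Your proposed second ``blocking'' vertex $v'$ with different padding cannot escape this: for $\rho=\{0,1\}$ the only paddings yield effective constraints $\{0,1\}$, $\{0\}$ or $\emptyset$, never $\{1\}$ or $\ZZ_{\ge1}$. This is exactly why the paper does \emph{not} realize $\HWeq{1}$ directly here but instead reduces $\HWeq{1}$ to $\HWle{1}$ (\cref{lem:IIrelversion}) and to $\HWge{1}$ (\cref{lem:IIIrelversion}) by polynomial interpolation using the provider of \cref{lem:happyGadget}, models $\HWle{1}$, $\HWge{1}$ and $\HWeq{1}$ by a single $(\emptyset,\{0,1\})$-, $(\emptyset,\ZZ_{\ge1})$- or $(\emptyset,\{1\})$-vertex (\cref{lem:I,lem:II,lem:III}), and only then shifts to $(\emptyset,\rho)$ using $(\sigma,\emptyset)$-padding (\cref{lem:rel:liftingRho}); the resulting reduction is a Turing reduction, not parsimonious, which is fine since $\pwred$ permits polynomially many oracle calls. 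Your gadget is correct only in the easy subcase ($\rho$ finite with $\rhoMax-1\notin\rho$), where it essentially coincides with the paper's.

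A secondary, fixable issue: you attach the parsimonious $\{\sigma_s,\rho_r\}$-providers of \cref{lem:count:parsFillingGadget} to the padding vertices, but those providers use (arbitrary) relations, which the target problem \srCountDomSetRel[\{(\emptyset,\rho),(\sigma,\emptyset)\}] does not have. The paper avoids this by making each padding unit a relation-free $(\sigMin+1)$-clique of $(\sigma,\emptyset)$-vertices, whose members already receive $\sigMin$ selected neighbors inside the clique and hence need nothing further (see the gadget in \cref{lem:rel:liftingRho}).
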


The second step proves the following result.

\begin{restatable}{lemma}{lemforceboth}\label{lem:forceboth}
  Let $(\sigma,\rho)\in \allSetsne^2$ be non-trivial with $\rho\neq \ZZ_{\ge0}$.
  Let $\CP$ be some (possibly empty) set of pairs from $\allSets^2$.
  Then, $\srCountDomSetRel[\CP+(\emptyset,\rho)+(\sigma,\emptyset)]
  \pwred \srCountDomSetRel[\CP]$.
\end{restatable}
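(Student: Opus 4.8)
\textbf{Proof plan for \cref{lem:forceboth}.}
The goal is to eliminate the two special constraint pairs $(\emptyset,\rho)$ and $(\sigma,\emptyset)$ from an instance of $\srCountDomSetRel[\CP+(\emptyset,\rho)+(\sigma,\emptyset)]$, reducing it to $\srCountDomSetRel[\CP]$ in a pathwidth-preserving way. Intuitively, an $(\emptyset,\rho)$-vertex is a normal $(\sigma,\rho)$-vertex that we \emph{force to be unselected}, and a $(\sigma,\emptyset)$-vertex is one that we \emph{force to be selected}; so we must build gadgets out of ordinary $(\sigma,\rho)$-vertices (plus possibly pairs from $\CP$) that simulate these two behaviors. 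The plan is to attach to each such special vertex $v$ a gadget $\CG$ with a single portal identified with $v$, designed so that: (i) the gadget never adds a selected neighbor to $v$ (so that the $\sigma$/$\rho$-constraints of $v$ with respect to the rest of the graph are undisturbed — exactly the ``realization'' requirement from \cref{def:realization}); and (ii) the number of extensions $\ext_\CG(\rho_i)$ and $\ext_\CG(\sigma_i)$ of the two possible ``moods'' of $v$ differ in a way we can exploit. The natural building blocks are the parsimonious providers from \cref{lem:count:parsFillingGadget,lem:happyGadget,lem:fillinggadget} together with the gadgets $\CG_1,\CG_2,\CG_3$ sketched in Steps~1--3 of the technical overview, which already show how to realize a single $(\emptyset,\{0\})$-vertex; attaching an $(\emptyset,\{0\})$-vertex to $v$ makes $v$ unselectable while adding no selected neighbor, giving $(\emptyset,\rho)$-vertices directly. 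A symmetric interpolation argument (Step~4 of the overview) yields the $(\sigma,\emptyset)$-vertices.

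Concretely, I would proceed as follows. First, handle $(\emptyset,\rho)$: build the gadget $\CG_1$ (portal $w$, two sets $X,Y$ with a single $X$--$v$ edge, internal degrees $\sigMax$ inside $X,Y$ and $\rhoMax$ between them) so that both full-selection and non-selection of each part are partial solutions; this needs $\rhoMax\ge 1$, which holds since $\rho\neq\{0\}$, and $\rho\neq\NN$ by hypothesis. Then stack copies to get $\CG_2$ with $\ext_{\CG_2}(\rho_0)\neq\ext_{\CG_2}(\sigma_0)$ (the two quantities are polynomials in the stacking parameter $x$ of degrees $\rhoMax$ and $\rhoMax-1$ respectively, so they differ for some $x$, which is where $\rho\neq\NN$ — i.e.\ $\rho$ finite enough to make these honest polynomials — is used). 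Then stack copies of $\CG_2$ to get $\CG_3$; attaching $\CG_3$ to a vertex $v$ of the host graph $G$ makes the solution count a sum $A+B$ where $A$ collects partial solutions with $v$ unselected and $B$ those with $v$ selected, and the key point is that $B$ is divisible by a huge power of $\max\{\ext_{\CG_2}(\rho_0),\ext_{\CG_2}(\sigma_0)\}$, so computing modulo that power isolates $A$; dividing $A$ by $\ext_{\CG_2}(\rho_0)^x$ gives a degree-$\rhoMax$ polynomial in $x$ whose top coefficient recovers $a_0$, the count of solutions where $v$ is unselected with no selected neighbors — i.e.\ the $(\emptyset,\{0\})$-behavior. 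Running this interpolation over polynomially many values of $x$ (each an oracle call to $\srCountDomSetRel[\CP]$) and solving a Vandermonde system gives the $(\emptyset,\{0\})$-vertex, and hence $(\emptyset,\rho)$-vertices. For $(\sigma,\emptyset)$ I would run the mirror-image construction: we already have $(\emptyset,\rho)$-vertices available, so a further interpolation of the same flavor forces an arbitrary number of vertices to be selected. Since all new gadgets have size bounded by a constant (depending only on $\sigma,\rho$), and each is attached at a single portal, the pathwidth increases only by an additive constant — we duplicate a bag containing $v$ and dump the (constant-size) gadget into the copy — and the number of oracle calls is polynomial; by \cref{obs:pwredtransitivity} composing the two steps gives the claimed $\pwred$-reduction.

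The main obstacle is the cofinite case: when $\rho$ is cofinite (but $\neq\NN$), the expression $A$ above is not a sum of boundedly many binomial terms, so it is not simply a low-degree polynomial in the stacking parameter; instead (Step~5 of the overview) one writes $A$ as ``all extensions minus the bad ones'' and ends up with functions of the form $f(x)=c^x-p(x)$ for a polynomial $p$. The technique to salvage interpolation is to pass to $f(x+1)-c\cdot f(x)=p(x+1)-c\cdot p(x)$, which is again a polynomial of the same degree as $p$ and hence interpolatable; one must also carefully separate the exponential bases $\ext_{\CG_2}(\rho_0)+\ext_{\CG_2}(\rho_1)$ versus the relevant power, and handle the degenerate subcase $\ext_{\CG_2}(\rho_1)=0$ separately. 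A secondary subtlety is verifying that $\ext_{\CG_2}(\rho_0)\neq\ext_{\CG_2}(\sigma_0)$ always holds: this is precisely the degree comparison ($\rhoMax$ vs.\ $\rhoMax-1$) above and requires $\rho$ to be finite \emph{or} simple-cofinite-enough that these are genuine polynomials — the hypothesis $\rho\neq\NN$ is exactly what rules out the bad case. I would carry out the finite case in full first, then indicate the modifications for cofinite $\rho$, and finally assemble the two forcing constructions and the pathwidth bookkeeping into the statement of \cref{lem:forceboth}.
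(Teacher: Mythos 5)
Your plan is, at its core, the same route the paper takes: your $\CG_1,\CG_2,\CG_3$ are the paper's candidate/strong-candidate gadgets (\cref{def:candidates,lem:candidate}), your ``isolate $A$ modulo a large power, then interpolate in $x$'' step is \cref{lem:forcesingle00}, and assembling the $(\emptyset,\{0\})$-vertex into $(\emptyset,\rho)$-vertices and then a second interpolation for $(\sigma,\emptyset)$-vertices is exactly how \cref{lem:forceboth} is proved. So the approach is right, but two places where you wave your hands are precisely where the paper has to do real work, and as written your argument would fail there.

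First, in the cofinite case the $f(x+1)-c\cdot f(x)$ trick does kill the exponential terms, but the top-degree coefficient from which you want to read off $a_0$ then comes multiplied by $\bigl(\ext_{\CG_2}(\rho_0)-\alpha\bigr)\bigl(\ext_{\CG_2}(\rho_0)-\beta\bigr)$, where $\alpha=\ext_{\CG_2}(\rho_0)+\ext_{\CG_2}(\rho_1)$ and $\beta=\ext_{\CG_2}(\sigma_0)+\ext_{\CG_2}(\sigma_1)$. The first factor is nonzero once $\ext_{\CG_2}(\rho_1)\ge 1$, but the second requires the extra gadget property $\ext_{\CG_2}(\rho_0)\neq\ext_{\CG_2}(\sigma_0)+\ext_{\CG_2}(\sigma_1)$; your degree-$\rhoMax$ versus degree-$(\rhoMax-1)$ comparison only yields $\ext_{\CG_2}(\rho_0)\neq\ext_{\CG_2}(\sigma_0)$. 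This is condition \ref{item:strongagg} in the paper's definition of a \emph{strong candidate}, and securing it is not free: when $\rho$ is finite the paper needs a further stacking construction with a second parameter $y$ in the proof of \cref{lem:candidate}. Second, the degenerate subcase $\ext_{\CG_2}(\rho_1)=0$ that you defer is not a footnote: if additionally $\ext_{\CG_2}(\sigma_1)=0$ the gadget is a \emph{winner}, and then the entire $(\emptyset,\{0\})$-detour is unnecessary (a single direct interpolation in $(\ext(\sigma_0)/\ext(\rho_0))^x$ recovers all coefficients $a_i$, so $a_0$ forces unselected and $a_{|U|}$ forces selected vertices); if instead $\ext_{\CG_2}(\sigma_1)\ge 1$, one must modify the gadget (attach a fresh copy at the portal's unique neighbor, the paper's $\CZ^*$) to regain $\ext(\rho_1)\ge 1$. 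Relatedly, your ``mirror-image'' step for $(\sigma,\emptyset)$-vertices needs the concrete mechanism the paper uses: replace the portal's neighbors inside the gadget by the now-available $(\emptyset,\rho)$-vertices, which turns the strong candidate into a winner, and only then does attaching $x$ copies to every forced-selected vertex not feed spurious selected neighbors into its $\sigma$-constraint; the forced-selected count is then the top coefficient $a_{|U|}$ of the same interpolation polynomial.
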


Given these two results we directly obtain the main result of this section.

\begin{proof}[Proof of \cref{lem:relrhonoteverything}]
  The lemma follows from \cref{lem:Relfromforcing} together with \cref{lem:forceboth} (applied to an empty set of pairs $\CP$).
\end{proof}

We first focus on the proof of \cref{lem:Relfromforcing}, and afterward, we prove \cref{lem:forceboth}.

\subsubsection{Realizing Relations by Forcing the Selection Status}
To prove \cref{lem:Relfromforcing}, we first show the following result
which allows us to model $\HWeq{1}$ relations
using certain additional $(\emptyset, \rho')$-pairs.
We show that one can realize $\HWeq{1}$ with
\begin{itemize}
  \item $(\emptyset, \{1\})$ in \cref{lem:I},
  \item $(\emptyset, \{0,1\})$ in \cref{lem:II}, and
  \item $(\emptyset, \ZZ_{\ge1})$ in \cref{lem:IIIrelversion,lem:III}.
\end{itemize}

The first case is straightforward.
It merely requires to replace each \HWeq1 relation by an $(\emptyset,\{1\})$-vertex.

\begin{lemma}\label{lem:I}
  Let $(\sigma,\rho)\in \allSetsne^2$ be non-trivial.
  Then, \srCountDomSetRel[{\HWeq{1}}]~$\pwred$
  \srCountDomSetRel[{(\emptyset, \{1\})}].
\end{lemma}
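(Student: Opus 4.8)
\textbf{Proof plan for \cref{lem:I}.}

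The plan is to give a direct, parsimonious, pathwidth-preserving reduction that simply replaces each $\HWeq{1}$ relation by a single fresh $(\emptyset,\{1\})$-vertex. Concretely, given an instance $(G,\lambda)$ of \srCountDomSetRel[{\HWeq{1}}] where $G=(V,E,\CC)$ and every $C\in\CC$ is an $\HWeq[d]{1}$ relation of some arity $d$, I would build an instance $(G',\lambda')$ of \srCountDomSetRel[{(\emptyset,\{1\})}] as follows. Start with $G'$ equal to the underlying graph $(V,E)$ of $G$ with $\lambda'\!\restriction_V = \lambda$ (so all original vertices keep their $(\sigma,\rho)$-type, i.e.\ are $(\sigma^{(0)},\rho^{(0)})$-vertices unless already reassigned). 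For each constraint $C\in\CC$ with $\scope(C)=(u_1,\dots,u_d)$ and $\rel(C)=\HWeq[d]{1}$, introduce one new vertex $z_C$, set $\lambda'(z_C)$ to be the index of the pair $(\emptyset,\{1\})$, and add the edges $z_Cu_1,\dots,z_Cu_d$. This completes the construction of $G'$.

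The correctness argument is: a set $S\subseteq V(G')$ is a $\lambda'$-set of $G'$ if and only if $S\cap V$ is a $\lambda$-set of $G$, and moreover this correspondence is a bijection between the solution sets. Since $(\emptyset,\{1\})$ has $\emptyset$ as its $\sigma$-part, no new vertex $z_C$ can ever be selected, so every $\lambda'$-set $S$ satisfies $S\subseteq V$, and the newly added vertices $z_C$ contribute no selected neighbors to any $u_i$; hence for every original vertex $v\in V$ the quantity $|N_{G'}(v)\cap S|=|N_G(v)\cap S|$, so the $\sigma^{(\lambda(v))}$- and $\rho^{(\lambda(v))}$-constraints of $v$ in $G'$ are exactly those in $(V,E)$. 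It remains to observe that the $(\emptyset,\{1\})$-constraint on the unselected vertex $z_C$ says precisely $|N_{G'}(z_C)\cap S|=|\{u_1,\dots,u_d\}\cap S|=1$, which is exactly the condition $S\cap\scope(C)\in\rel(C)$ for $\rel(C)=\HWeq[d]{1}$. Therefore $S$ is a $\lambda'$-set of $G'$ iff $S\subseteq V$, $S$ satisfies all $(\sigma,\rho)$-constraints of the underlying graph, and $S$ satisfies every relation $C\in\CC$ — i.e.\ iff $S$ is a $\lambda$-set (equivalently a $(\sigma,\rho)$-set with relations) of $G$. The map $S\mapsto S$ is the desired bijection, so the reduction is parsimonious; in particular the number of solutions is preserved, which is what the counting reduction requires.

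Finally I would check the pathwidth bound. Given a path decomposition of $G$ (recall that the pathwidth of a graph with relations treats each $\scope(C)$ as a clique, so there is a bag $B_C$ containing all of $u_1,\dots,u_d$), duplicate the bag $B_C$ to obtain a new bag, insert it adjacent to $B_C$ in the path, and add $z_C$ to that new bag; doing this independently for each $C$ (so that at most one fresh vertex is added per bag copy) yields a valid path decomposition of $G'$ whose width exceeds that of $G$ by at most $1$. Since the reduction runs in polynomial time and makes a single oracle call, it is pathwidth-preserving in the sense of \cref{def:pwred}. I do not expect any real obstacle here — this lemma is the easiest of the four $\HWeq{1}$-modelling cases; the only thing to be slightly careful about is that one must \emph{not} add all the $z_C$'s to a single shared bag (which would blow up the width), exactly as in the analogous step of \cref{lem:realizing:removingHWone}.
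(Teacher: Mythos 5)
Your proposal is correct and matches the paper's proof: the paper likewise replaces each $\HWeq{1}$ relation by a fresh $(\emptyset,\{1\})$-vertex fully connected to the relation's scope, observes the one-to-one correspondence of solutions, and bounds the pathwidth increase by $1$ using the fact that each scope is treated as a clique. Your write-up just spells out the correctness and bag-duplication details more explicitly.
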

\begin{proof}
  Let $I$ be an instance of \srCountDomSetRel[\HWeq{1}].
  We create an instance $I'$ of \srCountDomSetRel[(\emptyset, \{1\})] as follows.
  For each relation $\HWeq{1}$ with scope $S$, we completely connect the vertices in $S$ to a new $(\emptyset, \{1\})$-vertex.

  It is straightforward to see that the solutions of $I$ and $I'$ are in a one-to-one-correspondence.
  Since each scope of a relation in $I$
  is considered as a clique in the definition of the pathwidth of $I$,
  the pathwidth of $I'$ is at most that of $I$ plus 1.
\end{proof}

In the next step, we establish that $\HWset{1}$ relations can also be modeled using $(\emptyset, \{0,1\})$-vertices.
However, this is less straightforward than the previous result as we are using interpolation techniques and the auxiliary result from
\cref{lem:happyGadget} to obtain the reduction.
We restate \cref{lem:happyGadget} for convenience.
\lemHappyGadgetCounting*

Note that the requirements of \cref{lem:happyGadget} are always fulfilled
if $(\sigma,\rho)$ is a non-trivial pair in $\allSetsne^2$.
In this case, both $\sigma$ and $\rho$ are non-empty, and $\rho\neq \{0\}$.

Recall the definition of $\HWge[d]{1}$ and $\HWge[d]{1}$ from \cref{def:hammingWeightOneAndEquality}. If the arity is irrelevant/unspecified, then we often simply write $\HWge{1}$.
Analogously, we define $\HWle{1}$ relations.
We show that, for the counting problem, $\HWeq{1}$ relations can be modeled using $\HWle{1}$ relations (\cref{lem:IIrelversion}) or $\HWge{1}$ relations (\cref{lem:IIIrelversion}).
Analogously to $\srCountDomSetRel[\HWeq1]$ (definition in \cref{sec:RelToHW1}),
we define \srCountDomSetRel[\HWle{1}] and \srCountDomSetRel[\HWge{1}].

\begin{lemma}\label{lem:IIrelversion}
	Let $(\sigma,\rho)\in \allSetsne^2$ be non-trivial.
  Then, \srCountDomSetRel[\HWeq{1}]~$\pwred$ \srCountDomSetRel[\HWle{1}].
\end{lemma}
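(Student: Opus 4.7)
The plan is to prove \cref{lem:IIrelversion} as a pathwidth-preserving Turing reduction via polynomial interpolation. The starting observation is that, upon weakening each \HWeq{1} constraint in an instance \(I\) to an \HWle{1} constraint, a valid \((\sigma,\rho)\)-set now satisfies every scope with either zero or one selected vertex. Grouping these relaxed solutions by the subset of scopes that contain no selected vertex, the \HWle{1}-count is \(\sum_{t=0}^{m} N_t\), where \(N_t\) is the number of relaxed solutions with exactly \(t\) empty scopes. The original \HWeq{1}-count on \(I\) is then precisely \(N_0\), and the task becomes isolating this single summand from a handful of oracle answers.

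To do so, for each original constraint \(C_j\) with scope \(S_j\) I would introduce a fresh auxiliary vertex \(a_j\), replace the \HWeq{1} constraint by an \HWle{1} constraint on \(S_j\cup\{a_j\}\), and attach to \(a_j\) a gadget \(\CG_k\) parameterised by a positive integer \(k\). The gadget is designed so that it admits \(p_k\) partial-solution extensions when \(a_j\) is selected and \(q_k\) extensions when \(a_j\) is unselected. Summing over the states of \(a_j\) allowed by the \HWle{1} constraint, a scope with zero selected vertices contributes a factor \(p_k+q_k\) (both states of \(a_j\) are legal) while a scope with one selected vertex contributes only \(q_k\) (the state \(a_j\) selected is blocked). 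Hence the oracle answer on the \(k\)-th modified instance is
\[f(k) \;=\; \sum_{t=0}^{m} N_t\,(p_k+q_k)^t\,q_k^{\,m-t}.\]
Dividing through by \(q_k^{\,m}\) exhibits \(f(k)/q_k^{\,m}\) as a polynomial of degree \(m\) in \(y_k = 1 + p_k/q_k\); recording \(f(k)\) for \(m+1\) values of \(k\) with pairwise distinct \(y_k\) and inverting the associated Vandermonde system then recovers all of \(N_0,\ldots,N_m\), and in particular \(N_0\).

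For the gadget \(\CG_k\) I would take \(k\) disjoint copies of the \(\{\sigma_s,\rho_r\}\)-provider of \cref{lem:happyGadget} (which exists since \((\sigma,\rho)\) is non-trivial, so one can pick \(s\in\sigma\), \(r\in\rho\) with \(r\ge 1\)), all sharing \(a_j\) as their common portal. This yields \(p_k = \alpha^k\) and \(q_k = \beta^k\) for positive integers \(\alpha,\beta\) that depend only on the provider, so \(y_k = 1 + (\alpha/\beta)^k\) is pairwise distinct as \(k\) varies provided \(\alpha\ne\beta\); this symmetry can always be broken, if necessary, by disjointly appending a small constant-size sub-gadget whose extension counts differ between the two portal states. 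Each modification is local to a single constraint and adds only \(\Oh(1)\) vertices and edges, so pathwidth grows by at most an additive constant, the arity of each new \HWle{1} constraint exceeds that of the original \HWeq{1} constraint by one, and the reduction makes \(\Oh(m)\) oracle calls.

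The main obstacle is arithmetic compatibility at \(a_j\): when \(a_j\) is selected, it must have a number of selected neighbors in \(\sigma\), namely \(ks\); when unselected, a number in \(\rho\), namely \(kr\). Since \(\sigma\) or \(\rho\) may be finite, only a bounded set of \(k\) satisfies these two conditions directly, which is at first glance insufficient to sample a degree-\(m\) polynomial. I would circumvent this by preloading \(a_j\) with fixed, carefully chosen numbers of selected neighbors coming from additional constant-size providers supplied again by \cref{lem:happyGadget}, thereby shifting the effective congruence classes into \(\sigma\) and \(\rho\) and producing an arithmetic progression of valid \(k\) of length at least \(m+1\) (this is where the case analysis on \((\sigma,\rho)\), and in particular the ambient hypothesis \(\rho\ne\{0\}\), is used). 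Once these interpolation points are in hand, standard Vandermonde inversion extracts \(N_0\) and thus the desired \HWeq{1}-count, completing the pathwidth-preserving Turing reduction.
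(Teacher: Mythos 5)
Your high-level strategy is the same as the paper's: relax \HWeq{1} to \HWle{1}, attach copies of the $\{\sigma_s,\rho_r\}$-provider from \cref{lem:happyGadget}, express the oracle answer as a polynomial whose coefficients count relaxed solutions by the number of ``empty'' scopes, and recover the coefficient $N_0$ by interpolation. However, your concrete gadget has a genuine flaw. You make a \emph{single} auxiliary vertex $a_j$ the shared portal of $k$ provider copies, so the $(\sigma,\rho)$-constraint of $a_j$ couples all $k$ copies: when $a_j$ is selected, the \emph{total} number of selected neighbors it receives from the copies must lie in $\sigma$, and when unselected, in $\rho$. Consequently the extension counts are not $p_k=\alpha^k$ and $q_k=\beta^k$; they are convolution-type sums over how many copies deliver neighbors to $a_j$ (compare the binomial expressions in the proof of \cref{lem:candidate}), they depend on per-copy extension counts that \cref{lem:happyGadget} does not pin down (it only guarantees \emph{existence} of the two solutions, and a copy in the ``$Y$'' state need not remain valid once $a_j$ is selected, since its vertices adjacent to $a_j$ gain a neighbor), and for finite $\sigma$ the selected state of $a_j$ may become infeasible for all large $k$. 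Your acknowledged fix --- ``preloading'' $a_j$ with constant-size providers to shift congruence classes and obtain an arithmetic progression of $m+1$ valid values of $k$ --- is not worked out and cannot in general succeed: if $\sigma$ is finite and each copy contributes at least one neighbor to a selected $a_j$, no constant preload yields unboundedly many valid $k$, and even where valid $k$ exist you have not established exact, known values of $p_k,q_k$ nor pairwise distinct interpolation points $y_k$, both of which are essential for a counting reduction.

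The paper avoids this coupling entirely: for each scope $Z$ it introduces $x$ \emph{separate} copies $J^Z_1,\dots,J^Z_x$ with \emph{distinct} portal vertices $p^Z_1,\dots,p^Z_x$, and places one \HWle{1} relation per copy with scope $Z\cup\{p^Z_i\}$. Since relations add no edges, each portal's neighborhood is internal to its own copy, so every copy contributes a full standalone solution count ($\alpha$ if its portal is selected, $\beta$ otherwise), giving exactly $(\alpha+\beta)^x$ extensions when $Z$ is empty and $\beta^x$ when $Z$ has one selected vertex, with no arithmetic-compatibility issue and with $(\alpha+\beta)/\beta>1$ guaranteeing distinct interpolation points. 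To repair your argument you would essentially have to replace your single shared $a_j$ by this per-copy construction, at which point it coincides with the paper's proof.
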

\begin{proof}
	\begin{figure}[t]
		\centering
    \raisebox{4.475ex}{%
		\includegraphics[scale=1.9]{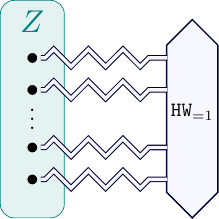}}\qquad\raisebox{15ex}{\scalebox{3}{\(\leadsto\)}}\qquad
		\includegraphics[scale=1.9]{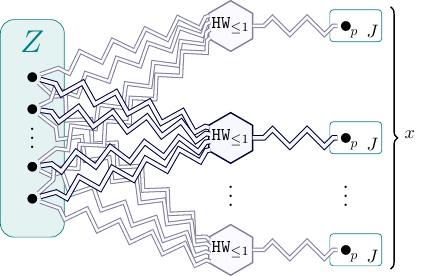}
		\caption{
			The construction of instance $I'_x$ from \cref{lem:IIrelversion}
			to replace a \(\HWeq{1}\) relation between the
			vertices of a single set \(Z\).
		}\label{fig:z-1}
	\end{figure}
	\renewcommand{\CP}{{(\emptyset, \{0,1\})}}
	Let $I$ be an instance of \srCountDomSetRel[\HWeq{1}].
  Let $\CC$ be the set of relational constraints of $I$
  (all of which enforce $\HWeq{1}$ relations), and let $\CZ=\{\scope(C) \mid C\in \CC\}$.
	We remove the relational constraints $\CC$ fromt he instance $I$ to form a new instance $I'$.
  $I'$ can be cast as an instance of \srCountDomSet.
  Then, the number of solutions of $I$ is identical
  to the number of those solutions of $I'$
  that select precisely one vertex from each $Z\in \CZ$.

	By \cref{lem:happyGadget}, there is a graph $J$ that contains a vertex $p$ such that $J$ has $\alpha\ge 1$ $(\sigma,\rho)$-sets that contain $p$, and it has $\beta\ge1$ $(\sigma,\rho)$-sets that do not contain $p$. It is not clear whether or not $\alpha=\beta$.

	For a positive integer $x$,
  let $I'_x$ be the instance of \srCountDomSetRel[\HWle{1}]
  obtained from $I'$ by adding, for each set $Z\in \CZ$,
  $x$ copies $J^Z_1, \ldots, J^Z_x$ of the graph $J$. Let $p^Z_1, \ldots, p^Z_x$ be the corresponding copies of $p$. For each $i\in \numb{x}$, we add a relational constraint with scope $Z\cup \{p^Z_i\}$ that enforces a $\HWle{1}$-relation on this scope.
  Consult \cref{fig:z-1} for a
	visualization of this construction (for a single set \(Z\)).

	Note that a solution $S'$ of $I'$ for \srCountDomSet
  can be extended only to a solution of $I'_x$ for \srCountDomSetRel[\HWle{1}]
  if, in each set $Z\in \CZ$, at most one vertex is selected
  (because of the attached $\HWle{1}$-relations).
  Let us say that solutions of $I'$ with this property are \emph{good}.
	Suppose in a good solution $S'$, a set $Z\in \CZ$ is entirely unselected. Then, there are $f_0\deff (\alpha+\beta)^x$ feasible extensions to its attached copies of $J$ ($\alpha+\beta$ for each graph $J^Z_i$ since $p_i^Z$ can either be selected or not).
	If, in $S'$, exactly $1$ vertex of $Z$ is selected,
	then there are $f_1\deff \beta^x$ extensions to the attached copies of $J$
	(as we can use only the selections of the $J^Z_i$'s
	for which $p_i^Z$ is unselected).

	Let $a_i$ be the number of good solutions of $I'$
	in which precisely $i$ of the sets in $\CZ$ are entirely unselected.
	(In the remaining $\abs{\CZ}-i$ sets from $\CZ$,
	exactly one vertex is selected.)
	Let $\#S(I'_x)$ denote the number of solutions of $I'_x$. Then,
	\begin{align*}
		\#S(I'_x)
		&=\sum_{i=0}^{\abs{\CZ}} a_i\cdot f_0^i f_1^{\abs{\CZ}-i}\\
		&= f_1^{\abs{\CZ}} \cdot\sum_{i=0}^{\abs{\CZ}} a_i\cdot \left(\frac{f_0}{f_1}\right)^i\\
		&= \beta^{x\abs{\CZ}} \cdot\sum_{i=0}^{\abs{\CZ}} a_i\cdot \left(\left(\frac{\alpha+ \beta}{\beta}\right)^x\right)^{i}.
	\end{align*}

	Recall that $\beta\ge 1$, and therefore, $\#S(I'_x)/\beta^{x\abs{\CZ}}$ is a polynomial  in
	$((\alpha+ \beta)/{\beta})^x$ of degree $\abs{\CZ}$.
	Since $(\alpha+\beta)/\beta >1$, it suffices to choose $\abs{\CZ}$ different values of $x> 0$ for the interpolation. This way we can recover the coefficients $a_i$. In particular, we can recover $a_0$, which corresponds to the number of good solutions of $I'$ in which none of the sets in $\CZ$ are entirely unselected, i.e., in which all of these sets contain precisely one selected vertex. This is exactly the sought-after number of solutions of $I$.

	Since the scope of a relational constraint in $I$
	is considered as a clique in the definition of the pathwidth of $I$,
	for $Z\in \CZ$, we can add all vertices from $J^Z_1, \ldots, J^Z_x$
	one after another to a copy of the original bag containing the clique $Z$.
	Hence, the pathwidth of $I'_x$ is at most that of $I$
	plus an additive constant (depending only on $\sigma$ and $\rho$).
\end{proof}

\begin{lemma}\label{lem:II}
  Let $(\sigma,\rho)\in \allSetsne^2$ be non-trivial.
  Then,
  \srCountDomSetRel[\HWle{1}]~$\pwred$ \srCountDomSetRel[(\emptyset, \{0,1\})].
\end{lemma}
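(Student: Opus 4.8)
I want to reduce \srCountDomSetRel[\HWle{1}] to \srCountDomSetRel[(\emptyset,\{0,1\})]. The natural idea is: a relation \(\HWle[d]{1}\) on a scope \(Z\) says "at most one vertex of \(Z\) is selected". If I connect every vertex of \(Z\) to a single new \((\emptyset,\{0,1\})\)-vertex \(v_Z\), then \(v_Z\) is never selected, and its \(\rho\)-constraint \(\{0,1\}\) forces \(v_Z\) to have at most one selected neighbor — which is exactly "at most one vertex of \(Z\) is selected", provided \(v_Z\) has no other neighbors. So the first attempt is simply: for each constraint \(C\in\CC\) enforcing \(\HWle{1}\) on \(\scope(C)=Z\), delete \(C\), add a fresh \((\emptyset,\{0,1\})\)-vertex \(v_Z\), and make \(v_Z\) adjacent to every vertex of \(Z\). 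This is clearly a one-to-one correspondence of solutions, and it is pathwidth-preserving because each \(Z\) is treated as a clique in the width measure of the source instance, so we can take a bag containing \(Z\), duplicate it, and throw in the single vertex \(v_Z\), raising the width by at most one.

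**Checking the obstacle.** The one subtlety to check is that the \((\emptyset,\{0,1\})\)-vertex \(v_Z\) really does behave as intended, i.e., that \(0\notin\{0,1\}\) is not an issue (it is not, since \(0\in\{0,1\}\) just means \(v_Z\) may also have zero selected neighbors, which is consistent with \(Z\) being entirely unselected — exactly what \(\HWle{1}\) permits) and that \(v_Z\) being an \((\emptyset,\rho')\)-vertex with \(\rho'=\{0,1\}\) genuinely forbids selecting \(v_Z\). Since the first component of the pair is \(\emptyset\), no selected vertex can have any feasible number of selected neighbors, hence \(v_Z\notin S\) in every \(\lambda\)-set \(S\); this is the whole point of \((\emptyset,\cdot)\)-vertices as used elsewhere in the paper. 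So no interpolation or auxiliary provider is needed here — unlike \cref{lem:II} for \((\emptyset,\rho)\) with a more complicated \(\rho\), or \cref{lem:IIrelversion}/\cref{lem:IIIrelversion} where one has to juggle \(\HWge{1}\) versus \(\HWeq{1}\). The only thing to be careful about is that the reduction is between the \emph{relational} problem \srCountDomSetRel[\HWle{1}] (whose width measure cliquifies scopes) and \srCountDomSetRel[(\emptyset,\{0,1\})] (an augmented-pair problem with no relations at all); the width bookkeeping is the routine "duplicate a bag, add the new vertex" argument, identical to the one in the proof of \cref{lem:I}.

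**Write-up.** Concretely: given an instance \(I\) of \srCountDomSetRel[\HWle{1}] with relational constraints \(\CC\), let \(\CZ=\{\scope(C)\mid C\in\CC\}\). Form \(I'\) by removing all constraints in \(\CC\), and, for each \(Z\in\CZ\), adding a new vertex \(v_Z\) of type \((\emptyset,\{0,1\})\) completely joined to \(Z\). Then observe: for any \(S\subseteq V(I')\), all original vertices keep their \((\sigma,\rho)\)-constraints unchanged except that vertices in some \(Z\) gain the potential extra selected neighbor \(v_Z\); but since \(v_Z\) is an \((\emptyset,\cdot)\)-vertex it is never in \(S\), so in fact no original vertex gains a selected neighbor, and \(S\) restricted to \(V(I)\) satisfies the \((\sigma,\rho)\)-constraints of \(I\) iff \(S\) satisfies them in \(I'\). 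The vertex \(v_Z\) is satisfied in \(I'\) iff \(|N(v_Z)\cap S|=|Z\cap S|\in\{0,1\}\), i.e., iff the constraint \(\HWle[|Z|]{1}\) on \(Z\) holds in \(I\). Hence \(S\mapsto S\) is a bijection between \(\lambda\)-sets of \(I\) and \(\lambda\)-sets of \(I'\), so the solution counts agree. Finally, starting from any path decomposition of \(I\), for each \(Z\in\CZ\) there is a bag \(B\supseteq Z\) (since \(Z\) is a clique in the width measure of \(I\)); insert a copy \(B'\coloneqq B\cup\{v_Z\}\) adjacent to \(B\) in the path. This is a valid path decomposition of \(I'\) of width at most \(\pw(I)+1\), so the reduction is pathwidth-preserving (and, since no relations are introduced, trivially arity-preserving). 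This completes the proof.
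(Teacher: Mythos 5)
Your proposal is correct and matches the paper's own proof: both replace each \(\HWle{1}\)-scope \(Z\) by a new \((\emptyset,\{0,1\})\)-vertex completely joined to \(Z\), observe the one-to-one correspondence of solutions, and bound the pathwidth increase by one using the fact that scopes are cliquified in the width measure. The extra verification you spell out (that the new vertex is never selected and adds no selected neighbors) is exactly what the paper leaves as "straightforward."
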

\begin{proof}
	From an instance $I$ of \srCountDomSetRel[\HWle{1}],
	we create an instance $I'$ of \srCountDomSetRel[(\emptyset, \{0,1\})]
	by completely connecting each scope $Z$ of some $\HWle{1}$-relation in $I$ to a new $(\emptyset, \{0,1\})$-vertex $z'$.

	It is straightforward to see that the solutions of $I$ and $I'$
	are in a one-to-one correspondence.
	Since each scope of a relation in $I$
	is considered as a clique in the definition of the pathwidth of $I$,
	the pathwidth of $I'$ is at most that of $I$ plus 1.
\end{proof}

\begin{lemma}\label{lem:IIIrelversion}
  Let $(\sigma,\rho)\in \allSetsne^2$ be non-trivial.
  Then, \srCountDomSetRel[\HWeq{1}]~$\pwred$ \srCountDomSetRel[\HWge{1}].
\end{lemma}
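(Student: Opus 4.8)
The plan is to adapt the interpolation argument from the proof of \cref{lem:IIrelversion}. The difference is that a $\HWge{1}$ constraint (unlike $\HWle{1}$) cannot render a scope with two or more selected vertices \emph{infeasible}; so instead of killing such scopes outright, we must \emph{detect} them and cancel their contribution by interpolation. As in \cref{lem:IIrelversion}, given a \srCountDomSetRel[\HWeq{1}] instance $I$, I would first let $I'$ be the graph obtained by deleting all relational constraints and write $\CZ$ for the collection of their scopes; then the number of solutions of $I$ equals the number of $(\sigma,\rho)$-sets $S$ of $I'$ with $\abs{S\cap Z}=1$ for every $Z\in\CZ$. Fix $s\in\sigma$, $r\in\rho$ with $r\ge 1$ (possible since $(\sigma,\rho)$ is non-trivial), and let $J$ with distinguished vertex $p$ be the graph from \cref{lem:happyGadget}; let $\alpha\ge 1$ and $\beta\ge 1$ be the numbers of $(\sigma,\rho)$-sets of $J$ that contain, respectively, do not contain $p$.

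For a positive integer $x$ I would build a \srCountDomSetRel[\HWge{1}] instance $I'_x$ as follows: take $I'$ and, for every scope $Z=\{z_1,\dots,z_{d}\}\in\CZ$ with $d\deff\abs{Z}$, add the constraint $\HWge[d]{1}$ on $Z$; and for every $j\in\numb{d}$ and $i\in\numb{x}$, add a fresh copy $J^{Z,j}_i$ of $J$ with portal $p^{Z,j}_i$ together with the constraint $\HWge{1}$ on $(Z\setminus\{z_j\})\cup\{p^{Z,j}_i\}$. The first constraint forces $\abs{S\cap Z}\ge 1$. For the auxiliary gadgets, $p^{Z,j}_i$ has no neighbours outside $J^{Z,j}_i$, so its $(\sigma,\rho)$-constraint is settled inside that copy: selecting $p^{Z,j}_i$ satisfies its $\HWge{1}$ constraint and admits $\alpha$ completions of $J^{Z,j}_i$, while not selecting it admits $\beta$ completions but requires $\abs{S\cap(Z\setminus\{z_j\})}\ge1$. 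The key observation is that, for $S$ with $\abs{S\cap Z}\ge 1$, we have $\abs{S\cap(Z\setminus\{z_j\})}=0$ for exactly one $j$ when $\abs{S\cap Z}=1$ (the index of the selected vertex) and for no $j$ when $\abs{S\cap Z}\ge 2$. Hence the total number of completions of all gadgets attached to $Z$ is $\alpha^{x}(\alpha+\beta)^{x(d-1)}$ if $\abs{S\cap Z}=1$, and $(\alpha+\beta)^{xd}$ if $\abs{S\cap Z}\ge 2$.

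Crucially, the ratio of these two values is $\bigl((\alpha+\beta)/\alpha\bigr)^{x}$, which is \emph{independent of the arity $d$}. Letting $B_\ell$ be the number of $(\sigma,\rho)$-sets $S$ of $I'$ with $\abs{S\cap Z}\ge 1$ for all $Z$ and with exactly $\ell$ scopes satisfying $\abs{S\cap Z}\ge 2$, and writing $q\deff(\alpha+\beta)/\alpha>1$ and $F_x\deff\alpha^{x\abs{\CZ}}(\alpha+\beta)^{x\sum_{Z\in\CZ}(\abs{Z}-1)}$ (an explicitly computable constant, since $\alpha,\beta$ and all $\abs{Z}$ are known), one obtains
\[
\#S(I'_x)=F_x\cdot\sum_{\ell=0}^{\abs{\CZ}}B_\ell\,q^{x\ell}.
\]
Dividing by $F_x$ yields the value at $q^x$ of a degree-$\abs{\CZ}$ polynomial in one variable; evaluating for $x=1,\dots,\abs{\CZ}+1$ (which gives pairwise distinct arguments $q^x$ since $q>1$) and Lagrange-interpolating recovers all $B_\ell$, in particular $B_0$, which is exactly the number of solutions of $I$. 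Each of the $\abs{\CZ}+1$ oracle calls is to an instance of size polynomial in $\abs{I}$, so this is a polynomial-time Turing reduction.

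It remains to see that the reduction is pathwidth-preserving. In a path decomposition of $I$ each scope $Z$ lies in a single bag (scopes are cliques in the width measure for graphs with relations); for each pair $(j,i)$ we duplicate that bag and add to the copy the constantly many vertices of $J^{Z,j}_i$ together with $p^{Z,j}_i$, raising the width by an additive constant depending only on $\sigma,\rho$, and replacing $\HWeq{1}$ on $Z$ by $\HWge[d]{1}$ on $Z$ leaves the clique structure unchanged. I expect the only genuine work to be the bookkeeping in the counting step: checking that the per-scope completion count depends on $\abs{S\cap Z}$ only through the trichotomy $\{{=}0,{=}1,{\ge}2\}$, that the explicit $\HWge[d]{1}$ constraint on $Z$ discards the ``$=0$'' case (whose ratio to the ``$=1$'' value would be arity-dependent), and that the arity-dependence of the two surviving values cancels in the ratio — this last point is what lets a single-variable interpolation work even though the original relations may have different arities. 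Everything else parallels the proof of \cref{lem:IIrelversion}.
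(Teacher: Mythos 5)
Your proposal is correct and follows essentially the same route as the paper's proof: the same gadget $J$ from \cref{lem:happyGadget}, the same attachment of $x$ copies per co-singleton subset $Z\setminus\{z_j\}$ of each scope with a $\HWge{1}$ constraint on $(Z\setminus\{z_j\})\cup\{p^{Z,j}_i\}$, the same trichotomy of per-scope completion counts, and interpolation in $\bigl((\alpha+\beta)/\alpha\bigr)^x$ to recover the coefficient counting solutions with exactly one selected vertex per scope, with the identical bag-duplication argument for pathwidth.
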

\begin{proof}
  The proof is similar to that of \cref{lem:IIrelversion}, but uses a different gadget, which leads to different numbers of extensions that have to be considered.
  Let $I$ be an instance of \srCountDomSetRel[\HWeq{1}].
  We define $\CZ$, $I'$, $J$, $\alpha$, and $\beta$ precisely as we did in the proof of \cref{lem:IIrelversion}.

  When defining $I'_x$, we divert from the proof of \cref{lem:IIrelversion}.
  For a positive integer $x$,
  let $I'_x$ be the instance of \srCountDomSetRel[\HWge{1}]
  obtained from $I'$ by making each $Z\in \CZ$ the scope of a $\HWge{1}$ relation.
  In addition, we introduce, for each subset $Z'$ of $Z$ with $\abs{Z'}=\abs{Z}-1$,
  a total of $x$ copies $J^{(Z,Z')}_1, \ldots, J^{(Z,Z')}_x$ of the graph $J$.
  For each corresponding copy
  $p^{(Z,Z')}_i$ of $p$ we introduce a $\HWge{1}$ relation with scope $Z'\cup \{p^{(Z,Z')}_i\}$.
  Note that some set $Z'$ may receive multiple such attachments for different supersets $Z$.
  For an illustration see \cref{fig:z-3}.
  \begin{figure}[t]
		\centering
    \raisebox{3ex}{%
		\includegraphics[scale=1.9]{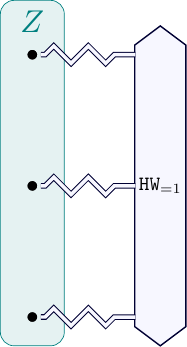}}\hfill\raisebox{20ex}{\scalebox{3}{\(\leadsto\)}}\hfill
		\includegraphics[scale=1.9]{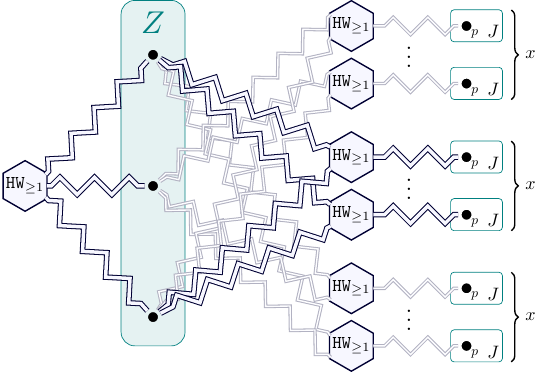}
		\caption{
			The construction of instance $I'_x$ from \cref{lem:IIIrelversion}
			to replace a \(\HWeq{1}\) relation between the
			vertices of a single set \(Z\).
		}\label{fig:z-3}
	\end{figure}

  Note that a solution $S'$ of $I'$ for \srCountDomSet
  can be extended only to a solution of $I'_x$ for \srCountDomSetRel[\HWge{1}]
  if, in each set $Z\in \CZ$, at least one vertex is selected
  (because of the $\HWge{1}$ relation attached to $Z$).
  Let us say that solutions of $I'$ with this property are \emph{good}.
  If, in a good solution $S'$ of $I'$, some $Z'$ has at least one selected vertex, then there are $f_1\deff (\alpha+\beta)^x$ feasible extensions to the graphs $J^{(Z,Z')}_1, \ldots, J^{(Z,Z')}_x$ ($\alpha+\beta$ for each graph $J^{(Z,Z')}_i$ since the corresponding copy of $p$ can either be selected or not).
  If, in $S'$, no vertex of $Z'$ is selected,
  then there are only $f_0\deff \alpha^x$ extensions
  (each copy of $p$ in some gadget $J^{(Z,Z')}_i$ has to be selected to fulfill the corresponding \HWge{1} constraints).
  Continuing from this, we say that if a set $Z\in \CZ$ contains at least 2 selected vertices (from $S'$), then it is \emph{undesired}, and if it contains precisely 1 selected vertex, then it is \emph{desired}.

  For an undesired set $Z\in \CZ$, there is at least one vertex selected in each of the $\abs{Z}$ corresponding subsets $Z'$ (this gives a total of $f_1^{\abs{Z}}$ extensions to the graphs $J^{(Z, \star)}_{\star}$), whereas, for a desired set $Z$, there is exactly one subset $Z'$ that is entirely unselected (which gives a total of $f_1^{(\abs{Z}-1)}\cdot f_0$ extensions).
  For $\CZ'\subseteq \CZ$, let $a_{\CZ'}$ be the number of good solutions of $I'$ for which the sets in $\CZ'$ are undesired and the sets in $\CZ\setminus \CZ'$ are desired. Let $a_i$ be the number of good solutions of $I'$ in which precisely $i$ of the sets in $\CZ$ are undesired. Let $\#S(I'_x)$ denote the number of solutions of $I'_x$. Then,
  \begin{align*}
    \#S(I'_x)
    &=\sum_{\CZ'\subseteq \CZ} a_{\CZ'}
        \cdot \left( \prod_{Z\in \CZ'} f_1^{\abs{Z}}
               \cdot \prod_{Z\notin \CZ'} f_1^{(\abs{Z}-1)} f_0
              \right)\\
    &= f_1^{\sum_{Z\in \CZ}(\abs{Z}-1)} \cdot \sum_{\CZ'\subseteq \CZ} a_{\CZ'}\cdot  f_1^{\abs{\CZ'}} f_0^{\abs{\CZ\setminus\CZ'}}\\
    &=
    \underbrace{f_1^{\sum_{Z\in \CZ}(\abs{Z}-1)}f_0^{\abs{\CZ}}
    }_{\eqqcolon F}
    \cdot \sum_{i=0}^{\abs{\CZ}} a_i
          \cdot \left( \frac{(\alpha+\beta)^x}{\alpha^x} \right)^i.
  \end{align*}
  Note that $\#S(I'_x)/F$ is a polynomial in $(\alpha+\beta)^x/\alpha^x$ of degree $\abs{\CZ}$.
  Since $\alpha, \beta \ge 1$,
  the values of $((\alpha+\beta)/\alpha)^x$ are defined
  and distinct for different $x>0$.
  It suffices to choose $\abs{\CZ}$ different values of $x>0$
  for the interpolation.
  This way we can recover the coefficients $a_i$.
  In particular, we can recover $a_0$,
  which corresponds to the number of good solutions of $I'$
  in which none of the sets in $\CZ$ are undesired,
  which means that all of the sets in $\CZ$ are desired,
  i.e., in which all of these sets contain precisely one selected vertex.
  This is exactly the sought-after number of solutions of $I$.

  As in the proof of \cref{lem:IIrelversion},
  we can argue that the pathwidth of $I'_x$ is at most that of $I$
  plus an additive term in $\Oh(1)$.
\end{proof}

As a next step, we replace the \HWge{1} relations by some appropriate vertices
as we did previously for the \HWeq{1} relations.

\begin{lemma}\label{lem:III}
  Let $(\sigma,\rho)\in \allSetsne^2$ be non-trivial.
  Then, \srCountDomSetRel[\HWge{1}]~$\pwred$
  \srCountDomSetRel[(\emptyset, \ZZ_{\ge 1})].
\end{lemma}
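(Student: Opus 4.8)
The plan is to follow exactly the same recipe used for \cref{lem:I,lem:II}, namely to replace each relational constraint by a single gadget vertex. Given an instance $I$ of \srCountDomSetRel[\HWge{1}] with constraint set $\CC$, I would build an instance $I'$ of \srCountDomSetRel[(\emptyset, \ZZ_{\ge 1})] as follows: for every $C\in\CC$, with scope $Z=\scope(C)$ enforcing a $\HWge{1}$ relation, delete $C$ and introduce one fresh vertex $z'$ of type $(\emptyset,\ZZ_{\ge1})$ made completely adjacent to all of $Z$ and to no other vertex; all original vertices keep their type $(\sigma,\rho)$. This construction is clearly computable in polynomial time and increases the instance size only additively in $|\CC|$.

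The core observation is the semantic equivalence, which I would spell out in one short paragraph. Since a $(\emptyset,\ZZ_{\ge1})$-vertex can never be selected, introducing $z'$ leaves $|N(v)\cap S|$ unchanged for every original vertex $v$, so the $(\sigma,\rho)$-constraints of the original graph are unaffected. The only requirement imposed by $z'$ is its $\rho$-constraint, which reads $|N(z')\cap S|\in\ZZ_{\ge1}$; as $N(z')=Z$, this is precisely ``at least one vertex of $Z$ is selected'', i.e.\ the $\HWge{1}$ constraint on $Z$. Hence $S\subseteq V(I)$ is a solution of $I$ if and only if $S$, together with all $z'$ left unselected, is a solution of $I'$; moreover every solution of $I'$ necessarily leaves every $z'$ unselected, so the two solution sets are in bijection and the reduction is in fact parsimonious. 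Finally, for the pathwidth bound I would use that in the pathwidth of a graph with relations each scope $Z$ is treated as a clique, so any path decomposition of $I$ has a bag $B\supseteq Z$; duplicating $B$, inserting the copy $B_Z$ next to $B$, and adding $z'$ to $B_Z$ yields a valid path decomposition of $I'$ of width at most $\pw(I)+1$, which gives \srCountDomSetRel[\HWge{1}]~$\pwred$~\srCountDomSetRel[(\emptyset, \ZZ_{\ge 1})].

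There is essentially no obstacle here: in contrast to \cref{lem:IIrelversion,lem:IIIrelversion}, where passing between $\HWeq{1}$ and $\HWle{1}/\HWge{1}$ required an interpolation argument, \cref{lem:III} is the easy ``final vertex replacement'' step, because $\HWge{1}$ is \emph{exactly} the $\rho$-behaviour of an always-unselected vertex with $\rho$-set $\ZZ_{\ge1}$. The only point to handle with a little care is to make $z'$ adjacent to all of $Z$ and to nothing else, so that its $\rho$-constraint coincides with $\HWge{1}$ on $Z$ and so that $z'$ does not contaminate the neighbourhoods of original vertices; both are immediate consequences of the $\sigma$-part of $z'$'s type being $\emptyset$.
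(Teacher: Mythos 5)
Your proposal matches the paper's proof exactly: replace each $\HWge{1}$ constraint by a fresh $(\emptyset,\ZZ_{\ge1})$-vertex fully adjacent to its scope, observe the one-to-one correspondence of solutions (since that vertex can never be selected and its $\rho$-constraint is precisely ``at least one selected neighbor in the scope''), and use the clique-on-scope convention to bound the pathwidth by $\pw(I)+1$. The argument is correct and complete.
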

\begin{proof}
  From an instance $I$ of \srCountDomSetRel[\HWge{1}],
  we create an instance $I'$ of \srCountDomSetRel[(\emptyset, \ZZ_{\ge 1})]
  by completely connecting
  each scope $Z$ of some $\HWge{1}$-relation in $I$ to a new $(\emptyset, \ZZ_{\ge 1})$-vertex $z'$.

  It is straightforward to see that the solutions of $I$ and $I'$
  are in a one-to-one correspondence.
  Since each scope of a relation in $I$
  is considered as a clique in the definition of the pathwidth of $I$,
  the pathwidth of $I'$ is at most that of $I$ plus 1.
\end{proof}

To conclude the proof of \cref{lem:Relfromforcing},
we need one more auxiliary result, which shows how to use $(\sigma,\emptyset)$-vertices
to replace $(\emptyset, \rho-i)$-vertices by $(\emptyset,\rho)$-vertices.

\begin{lemma}
  \label{lem:rel:liftingRho}
  Let $(\sigma,\rho)\in \allSetsne^2$ be non-trivial,
  and let $\CP$ be some set of pairs from $\allSets^2$.
  Then, for all $0 \le i \le \rhoMax$,
  \srCountDomSetRel[\CP + (\emptyset,\rho - i)]~$\pwred$
  \srCountDomSetRel[\CP + (\sigma, \emptyset) + (\emptyset, \rho)].
\end{lemma}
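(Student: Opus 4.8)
I would prove this by a direct, parsimonious many‑one reduction (which is in particular a polynomial‑time Turing reduction): given an instance $(G,\lambda)$ of \srCountDomSetRel[\CP + (\emptyset,\rho - i)], I construct an instance $(G',\lambda')$ of \srCountDomSetRel[\CP + (\sigma, \emptyset) + (\emptyset, \rho)] with the same number of solutions, by re‑casting each $(\emptyset,\rho-i)$‑vertex as an $(\emptyset,\rho)$‑vertex and attaching to it a small gadget that is forced to supply exactly $i$ additional selected neighbors. If $i=0$ the reduction is the identity up to relabelling (and $\rhoMax=0$ forces $i=0$), so the content is the case $i\ge 1$, where $\rho$ is non‑trivial and $\rhoMax\ge 1$; also note $\rhoMax\in\rho$ in all cases, so $\rho-i=\{k-i : k\in\rho,\ k\ge i\}\ni\rhoMax-i\ge0$ is non‑empty whenever $0\le i\le\rhoMax$.

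The building block is a \emph{forcing clique}: a clique on $\sigMin+1$ vertices, all of type $(\sigma,\emptyset)$ (a single vertex when $\sigMin=0$), with one designated \emph{portal} vertex. The two facts I would invoke are that a $(\sigma,\emptyset)$‑vertex is selected in every solution (its $\sigma$‑part is empty), and an $(\emptyset,\rho)$‑vertex is never selected. Hence every vertex of a forcing clique is forced to be selected, and, \emph{provided no clique vertex receives a selected neighbor from outside the clique}, each one has exactly $\sigMin\in\sigma$ selected neighbors, so all its constraints are met. Now I build $G'$: keep all $\CP$‑ and $(\sigma,\rho)$‑vertices and all edges of $G$; for each vertex $v$ with $\lambda(v)$ pointing to $(\emptyset,\rho-i)$, change its type to $(\emptyset,\rho)$ and add $i$ vertex‑disjoint fresh forcing cliques $K_1,\dots,K_i$, joining the portal of each $K_j$ to $v$. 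Since $v$ stays unselected, each clique portal still has only $\sigMin$ selected neighbors, and $v$ gains exactly one selected neighbor per clique.

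For correctness I would let $W$ denote the union of all clique vertices (forced into every solution of $G'$) and check that $S\mapsto S\cup W$ is a bijection between solutions of $G$ and solutions of $G'$: every $\CP$‑ or $(\sigma,\rho)$‑vertex has an unchanged neighborhood, hence an unchanged constraint; every clique vertex is satisfied as above; and each re‑cast vertex $v$ satisfies its $\rho$‑constraint in $G'$ iff $|N_G(v)\cap S|+i\in\rho$, i.e.\ iff $|N_G(v)\cap S|\in\rho-i$, which is exactly its constraint in $G$. So the reduction preserves the count (indeed it is parsimonious). For the pathwidth bound, each re‑cast vertex gets at most $\rhoMax(\sigMin+1)=\Oh(1)$ new vertices (as $\sigma,\rho$ are fixed); given a path decomposition of $G$, I process the re‑cast vertices one at a time, and for each $v$ insert, right after a bag $B_v\ni v$, a copy of $B_v$ augmented with $v$'s new clique vertices. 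Since these new vertices are adjacent only to one another and to $v\in B_v$, this is a valid path decomposition of $G'$ of width $\pw(G)+\Oh(1)$, giving \srCountDomSetRel[\CP + (\emptyset,\rho - i)]~\pwred~\srCountDomSetRel[\CP + (\sigma, \emptyset) + (\emptyset, \rho)].

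\textbf{Main obstacle.} There is no genuine difficulty here — this is one of the easy ``liftings''. The only points that need care are the bookkeeping around the forced vertices (ensuring each clique portal truly sees exactly $\sigMin$ selected neighbors, which hinges on $v$ being unselected) and the arithmetic identity $i+t\in\rho \iff t\in\rho-i$ together with $\rho-i\neq\emptyset$ for $0\le i\le\rhoMax$; the pathwidth argument is standard provided the per‑vertex gadgets are added to separate copied bags rather than all at once.
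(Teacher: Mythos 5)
Your proposal is correct and matches the paper's proof essentially verbatim: the paper also attaches to each recast $(\emptyset,\rho-i)$-vertex $i$ copies of an $(\sigMin+1)$-clique of $(\sigma,\emptyset)$-vertices (whose portal is thus always selected with $\sigMin$ selected neighbors), and bounds the pathwidth by duplicating a bag containing the vertex once per attached gadget. The only cosmetic difference is that you argue the bijection of solutions explicitly, while the paper phrases it as the portal behaving like a $(\sigma-\sigMin,\emptyset)$-vertex with $0\in\sigma-\sigMin$.
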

\begin{proof}
  We first construct a gadget $(H,\{\port\})$
  which simulates a $(\sigma', \emptyset)$-vertex,
  for some $\sigma' \subseteq \NN$ with $0 \in \sigma'$.
  The gadget $H$ consists of an $(\sigMin +1)$-clique
  of $(\sigma, \emptyset)$-vertices. Fix one vertex in this clique
  as the distinguished portal $\port$.
  It is immediate that $\port$
  behaves like a $(\sigma-\sigMin, \emptyset)$-vertex
  which is, by definition of $\sigMin$, a $(\sigma',\emptyset)$-vertex
  with $0\in \sigma'$.

  Now, we modify a given \srCountDomSetRel[\CP + (\emptyset,\rho - i)] instance $I$ as follows.
  We turn each $(\emptyset, \rho-i)$-vertex $v$ into a $(\emptyset, \rho)$-vertex that is
  adjacent to the portals of $i$ new copies of $H$.
  Since the portals of the freshly added gadgets are selected,
  the vertex $v$ already has $i$ selected neighbors,
  and thus, behaves like an $(\emptyset, \rho-i)$-vertex with respect to the vertices of the original instance.
  Let $I'$ be the resulting instance of
  \srCountDomSetRel[\CP + (\sigma, \emptyset) + (\emptyset, \rho)].

  It remains to argue that this modification preserves the pathwidth.
  For each $(\emptyset, \rho-i)$-vertex $v$,
  there is a bag in the path decomposition of $I$ containing $v$.
  We duplicate this bag $i$ times for the path decomposition of $I'$,
  and add one copy of $H$ to precisely one bag.
  As the size of $H$ is bounded by a function in $\sigMin$,
  the size of $H$ is constant
  and the pathwidth of $I'$ increases only by an additive constant.
\end{proof}

Now, we have everything ready to prove \cref{lem:Relfromforcing}.
\lemrelfromforcing*
\begin{proof}
  Since $\rho \neq \{0\}$ and $\rho \neq \NN$, we have $\rhoMax > 0$.
  We claim that
  \srCountDomSetRel[\HWsetGen{=1}]~$\pwred$
  \srCountDomSetRel[(\emptyset,\rho-(\rhoMax-1))].

  First, suppose that $\rho$ is finite.
  If $\rhoMax-1 \notin \rho$,
  then $\rho-(\rhoMax-1)$ is equal to $\{1\}$,
  and the reduction follows from \cref{lem:I}.
  Otherwise, we have that $\rhoMax-1 \in \rho$.
  In this case, $\rho-(\rhoMax-1)$ is equal to $\{0,1\}$,
  and the reduction follows from \cref{lem:IIrelversion,lem:II}.

  Second, suppose that $\rho$ is cofinite.
  Then, $\rhoMax$ is the largest integer missing from $\rho$ plus $1$, and consequently $\rhoMax-1\notin \rho$.
  Therefore, $\rho-(\rhoMax-1)$ is equal to $\ZZ_{\ge1}$,
  and the reduction follows from \cref{lem:IIIrelversion,lem:III}.

  Thus, we have that
  \srCountDomSetRel[\HWsetGen{=1}]~$\pwred$
  \srCountDomSetRel[(\emptyset,\rho-(\rhoMax-1))].
  Now, we use \cref{lem:rel:liftingRho} with $i=\rhoMax-1$
  to obtain \srCountDomSetRel[\CP+(\emptyset,\rho-(\rhoMax-1))]~$\pwred$
  \srCountDomSetRel[\CP + (\sigma, \emptyset) + (\emptyset, \rho)], which (when applied to an empty set of pairs $\CP$) concludes the proof.
\end{proof}

\subsubsection{How to Force the Selection Status}
\label{sec:forcingboth}

Now, the main work is to show that selected and unselected vertices
can always be modeled, that is, \cref{lem:forceboth}.
We show that this is the case whenever gadgets with certain extension counts exist. We
call the sought-after gadgets ``winners'' (defined momentarily). Unfortunately, such
winners do not always exist. Sometimes, we can ensure only a weaker version, namely a ``strong candidate'' (see \cref{lem:candidate}).

\begin{restatable}[Candidates and Winners]{definition}{defCandidatesWinner}
	\label{def:candidates}
Let $(\sigma,\rho)\in \allSetsne^2$, and let $\CP$ be some possibly empty set of pairs from $\allSets^2$.
Let $\CG$ be a gadget for \srCountDomSetRel[\CP] with a single portal.
We say that $\CG$ is a \emph{candidate} if it has the following properties:
\begin{enumerate}[label = (\Roman*)]
  \item $\ext_\CG(\rho_0), \ext_\CG(\sigma_0)\ge 1$,
	\label{item:candidate:rho0sig0ge1}
  \item $\ext_\CG(\rho_0)\neq \ext_\CG(\sigma_0)$,
	\label{item:candidate:item:rho0NEQsig0}
  \item $\ext_\CG(\sigma_i)=\ext_\CG(\rho_i)=0$ for all $i\ge 2$.
	\label{item:candidate:item:rhoiEQsigiEQ0}
\end{enumerate}
We say that a candidate $\CG$ is a \emph{winner} if additionally it holds that
\begin{enumerate}[label = (\Roman*)]
  \setcounter{enumi}{3}
  \item $\ext_\CG(\rho_1)=\ext_\CG(\sigma_1)=0$. \label{item:rho1equsig1equ0}
\end{enumerate}
We say that a candidate $\CG$ is a \emph{strong candidate} if (instead of \cref{item:rho1equsig1equ0}) it additionally holds that
\begin{enumerate}[label = (\Roman*)]
  \setcounter{enumi}{4}
  \item $\ext_\CG(\rho_1)\ge 1$. \label{item:strongagg0}
  \item $\ext_\CG(\rho_0)\neq \ext_\CG(\sigma_0)+\ext_\CG(\sigma_1)$.\label{item:strongagg}
\end{enumerate}
\end{restatable}

We defer the proof of the following \cref{lem:candidate} to \cref{sec:candidateorwinner}.

\begin{restatable}{lemma}{lemcandidate}\label{lem:candidate}
  Let $(\sigma,\rho)\in \allSetsne^2$  be non-trivial.
	If $\rho\neq \ZZ_{\ge0}$, then there is a gadget $\CG=(G,\{p\})$
	for \srCountDomSet that is either a winner or a strong candidate.
\end{restatable}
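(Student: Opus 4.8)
The goal is to construct, for every non-trivial $(\sigma,\rho)\in\allSetsne^2$ with $\rho\neq\ZZ_{\ge0}$, a single-portal gadget for \srCountDomSet that is either a winner or a strong candidate (\cref{def:candidates}). The natural starting point is a ``candidate'' in the sense of the first three properties of \cref{def:candidates}, and then to boost it to satisfy property~\ref{item:rho1equsig1equ0} or properties~\ref{item:strongagg0}--\ref{item:strongagg}. I would build the candidate in several layers, mirroring the Step~1/Step~2 construction sketched in the technical overview: first a base gadget $\CG_1$ with $\ext_{\CG_1}(\rho_0),\ext_{\CG_1}(\rho_1)\ge 1$ and $\ext_{\CG_1}(\sigma_i)=0$ for $i\ge 2$ (obtained from the ``happy gadget'' of \cref{lem:happyGadget} together with a pendant edge, using that $\rho\neq\{0\}$ so $r\ge 1$ is available); then take $x$ copies of $\CG_1$, identify the portals into a single vertex $w$, and attach a new portal $v$ adjacent to $w$. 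As computed in the overview, $\ext_{\CG_2}(\rho_0)=\sum_{i\in\rho}\binom{x}{i}\ext_{\CG_1}(\rho_1)^i\ext_{\CG_1}(\rho_0)^{x-i}$ while $\ext_{\CG_2}(\sigma_0)=\sum_{i+1\in\rho}\binom{x}{i}\ext_{\CG_1}(\rho_1)^i\ext_{\CG_1}(\rho_0)^{x-i}$; dividing by $\ext_{\CG_1}(\rho_0)^x$ gives polynomials in $x$ of degree $\rhoMax$ and $\rhoMax-1$ respectively (here $\rho\neq\ZZ_{\ge0}$ is used to make these genuine polynomials of distinct degrees, so for some $x$ they differ), which secures properties~\ref{item:candidate:rho0sig0ge1}--\ref{item:candidate:item:rho0NEQsig0}. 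Property~\ref{item:candidate:item:rhoiEQsigiEQ0} is ensured because $v$ has only the single neighbor $w$, so $v$ can never acquire two selected neighbors.

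The second layer is the case distinction that determines whether we land on a winner or on a strong candidate. Given a candidate $\CG_2$ with portal $v$, I would attach to $v$ further copies of a small auxiliary gadget (again derived from \cref{lem:happyGadget}/\cref{lem:fillinggadget}) whose purpose is to control $\ext(\rho_1)$ and $\ext(\sigma_1)$. If $\sigMax=0$ (i.e.\ $\sigma=\{0\}$), a selected portal can have no selected neighbor, so $\ext_{\CG}(\sigma_1)=0$ automatically; combined with a construction that also kills $\ext_{\CG}(\rho_1)$ --- which is possible when $\rho$ does not contain~$1$, by forbidding exactly one selected neighbor on $v$ --- we get a winner. When $1\in\rho$, we cannot simultaneously drive $\ext(\rho_1)$ to zero without destroying property~\ref{item:candidate:rho0sig0ge1}, so instead we aim for a strong candidate: we need $\ext_\CG(\rho_1)\ge 1$ (automatic once $1\in\rho$ and the base gadget admits one selected neighbor on $v$) and the separation $\ext_\CG(\rho_0)\neq\ext_\CG(\sigma_0)+\ext_\CG(\sigma_1)$. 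The latter is the delicate inequality: I would obtain it by another interpolation/degree argument, taking $y$ copies of the candidate attached to a common portal and observing that the relevant generating functions in $y$ on the two sides have different degrees (the left side tracks only $\rho$-extensions of the common vertex, the right side also sees the extra selected neighbor contributed by a selected portal), so the two polynomials cannot be identically equal and some choice of $y$ witnesses the inequality.

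The third layer handles the cofinite and the ``non-structured'' subcases uniformly: when $\sigma$ or $\rho$ is cofinite the generating functions above are no longer polynomials but have the shape $c^x-p(x)$ with $p$ a polynomial, as in the Step~5 discussion of the overview; I would use the same trick of passing from $f(x)$ to $f(x+1)-c\cdot f(x)$, which is polynomial of the same degree as $p$, to argue that the two sides of the relevant inequality cannot coincide for all admissible parameters. Throughout, one must also check that all vertices introduced by the auxiliary providers receive a feasible number of selected neighbors in every claimed extension --- this is routine given the properties guaranteed by \cref{lem:happyGadget,lem:fillinggadget}, but it does need to be stated.

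The main obstacle I anticipate is \emph{simultaneously} achieving properties~\ref{item:candidate:item:rho0NEQsig0} and \ref{item:strongagg} (or, in the winner case, keeping \ref{item:candidate:item:rho0NEQsig0} alive while forcing \ref{item:rho1equsig1equ0}): each is a ``genericity'' statement proved by an interpolation argument, but the two arguments pull on the same parameter, so one must be careful to pick the copy-counts in a compatible order --- first fix a value securing the $\rho_0\neq\sigma_0$ separation, then choose a further, larger value securing the second inequality without spoiling the first. Making this two-stage parameter choice rigorous, and tracking exactly which of $\{0\}\ne\rho$, $\rho\ne\ZZ_{\ge0}$, $\sigMax=0$ vs.\ $\sigMax\ge1$, and $1\in\rho$ vs.\ $1\notin\rho$ is invoked at each step, is where the real care lies; the rest is bookkeeping with binomial sums and degrees.
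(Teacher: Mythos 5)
Your first two layers coincide with the paper's construction (the pendant-portal gadget from \cref{lem:happyGadget}, then $x$ copies identified at a vertex $w$ with a fresh portal $v$, and the degree-$\rhoMax$ versus degree-$(\rhoMax-1)$ comparison to get properties \ref{item:candidate:rho0sig0ge1}--\ref{item:candidate:item:rhoiEQsigiEQ0}). The genuine gap is in how you decide between ``winner'' and ``strong candidate''. You govern this by conditions like ``$1\in\rho$'' and ``$\sigMax=0$'', and you assert that $\ext_{\CG}(\rho_1)\ge 1$ is automatic once $1\in\rho$, and that $\ext(\rho_1)$ can be killed when $1\notin\rho$ ``by forbidding one selected neighbor on $v$''. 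This misreads what $\ext(\rho_1)$ counts: in a partial solution the portal's own $(\sigma,\rho)$-constraint is \emph{not} enforced, so whether $\rho_1$ is extendable has nothing to do with $1\in\rho$; it depends on whether the unique neighbor $w$ of the portal can itself be selected, i.e., on whether the inner Step-1 gadgets admit $\sigma$-state extensions at $w$ --- and the Step-1 gadget only guarantees $\rho_0$- and $\rho_1$-extensions, not $\sigma$-extensions. Consequently the case $\ext_{\CZ}(\rho_1)=0$ but $\ext_{\CZ}(\sigma_1)\ge 1$ can genuinely occur (e.g.\ when $w$ selected is feasible only with the portal's $+1$ contribution), and in that case your gadget is neither a winner (property \ref{item:rho1equsig1equ0} fails) nor a strong candidate (property \ref{item:strongagg0} fails), and your plan has no mechanism to repair it. The paper handles exactly this case by attaching a second copy of the candidate at $w$ (the $\CZ^*$ doubling trick), which converts a $\sigma_1$-extension of the copy into a $\rho_1$-extension of the combined gadget while preserving the candidate properties multiplicatively; some such step is indispensable.

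A secondary, fixable weakness is your treatment of the inequality $\ext(\rho_0)\neq\ext(\sigma_0)+\ext(\sigma_1)$ (property \ref{item:strongagg}). A pure ``different polynomial degrees in $y$'' argument does not suffice, because after taking $y$ copies the three quantities live over different exponential bases ($\ext(\rho_0),\ext(\sigma_0)$ scale with $\ext(\rho_0)^y$ of the old gadget, while $\ext(\sigma_1)$ scales with $\ext(\sigma_0)^y$ or $(\ext(\sigma_0)+\ext(\sigma_1))^y$). The paper first reduces to the situation where the old gadget satisfies the \emph{equality} $\ext(\rho_0)=\ext(\sigma_0)+\ext(\sigma_1)$, which forces $\ext(\rho_0)>\ext(\sigma_0)$, and then uses this base separation (together with the finite/cofinite-$\sigma$ sub-cases, and a direct argument when $\rho$ is cofinite showing $\ext(\rho_0)<\ext(\sigma_0)$ already) to drive the ratio below $1$ for large $y$; it also re-verifies properties \ref{item:candidate:rho0sig0ge1}--\ref{item:candidate:item:rhoiEQsigiEQ0} and \ref{item:strongagg0} for the new gadget. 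Your sketch gestures at the right kind of growth comparison but omits the base-separation step that makes it go through.
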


A second important ingredient for the proof of \cref{lem:forceboth}, is the fact that, given a strong candidate gadget, one can model a single $(\emptyset, \{0\})$-vertex (or a constant number thereof), see \cref{lem:forcesingle00}.

\begin{restatable}{lemma}{forcesinglezerozero}\label{lem:forcesingle00}
  Let $(\sigma,\rho)\in \allSetsne^2$ be non-trivial,
	and let $\CP$ be some (possibly empty) set of pairs from $\allSets^2$.
	Suppose that there is a strong candidate $\CJ$ for \srCountDomSet.
  Then,
  \[
  	\srCountDomSetRel[\CP+(\emptyset, \{0\})] \pwred \srCountDomSetRel[\CP],
  \]
  where $(\emptyset, \{0\})$ is $1$-bounded in the source problem.
\end{restatable}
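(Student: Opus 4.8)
The plan is to carry out ``Step~3'' of the overview: given an instance $(G,\lambda)$ of $\srCountDomSetRel[\CP+(\emptyset,\{0\})]$ whose unique $(\emptyset,\{0\})$-vertex is $v$, I would attach to $v$ a gadget assembled from many copies of the strong candidate $\CJ$ and use interpolation to read off the number of solutions. Write $G_0$ for $G$ with the constraint on $v$ deleted, so $v$ becomes a portal, and for $d\ge 0$ let $a_d$ (resp.\ $b_d$) be the number of partial solutions of $(G_0,\lambda,\{v\})$ witnessing $\rho_d$ (resp.\ $\sigma_d$) at $v$; these sums are finite, and the number of $\lambda$-sets of the original instance is exactly $a_0$, so it suffices to compute $a_0$ with $O(1)$ oracle calls. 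For an integer $x\ge 1$ let $G'_x$ be obtained from $G_0$ by adding $x$ vertex-disjoint copies $\CJ_1,\dots,\CJ_x$ of $\CJ$, identifying all of their portals with $v$, and declaring $v$ to be a $(\sigma,\rho)$-vertex (all gadget vertices keep the types they have in $\CJ$, and everything else is untouched). Since each $\CJ_j\setminus\{v\}$ is a bounded-size graph attached to the rest of $G'_x$ only through $v$, inserting these pieces one at a time into a path decomposition of $G$ that always keeps $v$ in the current bag gives $\pw(G'_x)\le\pw(G)+O(1)$, and $|V(G'_x)|=|V(G)|+O(x)$; so the construction is legitimate for a $\pwred$-reduction.

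Next I would establish the counting identity. By property \ref{item:candidate:item:rhoiEQsigiEQ0} of a (strong) candidate, restricted to a single copy $\CJ_j$ the portal $v$ can only be in a state among $\rho_0,\rho_1$ (when $v\notin S$) or $\sigma_0,\sigma_1$ (when $v\in S$); thus $\CJ_j$ has $\ext_\CJ(\rho_0)$ partial solutions giving $v$ no extra selected neighbor and $\ext_\CJ(\rho_1)$ giving it exactly one (analogously with $\sigma$). Splitting the $\lambda$-sets of $G'_x$ by the selection status of $v$ and by how many of the $x$ copies hand $v$ an extra selected neighbor, and using that $v$ must still satisfy $\sigma$/$\rho$ after counting its $d$ neighbors already present in $G_0$, yields
\[
 N(x):=\#\{\lambda\text{-sets of }G'_x\}
  =\sum_{d\ge 0} a_d\!\!\sum_{i:\,d+i\in\rho}\!\!\binom{x}{i}\ext_\CJ(\rho_1)^i\ext_\CJ(\rho_0)^{x-i}
  +\sum_{d\ge 0} b_d\!\!\sum_{i:\,d+i\in\sigma}\!\!\binom{x}{i}\ext_\CJ(\sigma_1)^i\ext_\CJ(\sigma_0)^{x-i}.
\]
Put $c:=\ext_\CJ(\rho_1)/\ext_\CJ(\rho_0)$, a fixed rational that is nonzero because $\ext_\CJ(\rho_1)\ge 1$ by \ref{item:strongagg0} and $\ext_\CJ(\rho_0)\ge 1$ by \ref{item:candidate:rho0sig0ge1}, and similarly $c':=\ext_\CJ(\sigma_1)/\ext_\CJ(\sigma_0)$. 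Using $\sum_i\binom{x}{i}\ext_\CJ(\rho_1)^i\ext_\CJ(\rho_0)^{x-i}=(\ext_\CJ(\rho_0)+\ext_\CJ(\rho_1))^x=:E_\rho^x$ for integral $x\ge 0$, one checks that for each fixed $d$ the inner $\rho$-sum equals $\ext_\CJ(\rho_0)^x$ times a polynomial in $x$ when $\rho$ is finite, and equals $E_\rho^x$ minus $\ext_\CJ(\rho_0)^x$ times a polynomial in $x$ when $\rho$ is cofinite; in both cases the unique top-degree contribution (degree $\rhoMax$, resp.\ $\rhoMax-1$) comes from $d=0$, with leading coefficient an explicit nonzero constant (a power of $c$ over a factorial) times $a_0$. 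The same holds on the $\sigma$-side with $b_0$, $\sigMax$, $c'$, $E_\sigma$. Hence $N(x)$ lies in the fixed $O(1)$-dimensional space spanned by the functions $x\mapsto\gamma^x x^k$ with $\gamma$ ranging over $\{\ext_\CJ(\rho_0),\ext_\CJ(\sigma_0)\}$ together with $E_\rho$ (if $\rho$ is cofinite) and $E_\sigma$ (if $\sigma$ is cofinite), and $0\le k\le\rhoMax$ resp.\ $0\le k\le\sigMax$.

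I would then recover $a_0$ by interpolation: evaluate $N(x)$ for $x=1,2,\dots$ up to the (constant) dimension of the above span by calling the $\srCountDomSetRel[\CP]$ oracle on $G'_x$; since the functions $\gamma^x x^k$ with pairwise distinct bases $\gamma$ are linearly independent on $\ZZ_{\ge 0}$ and consecutive integer sample points determine an element of their span (they solve a fixed linear recurrence whose characteristic polynomial has only the roots $\gamma$), these evaluations pin down every coefficient, in particular the coefficient of $\ext_\CJ(\rho_0)^x x^{\rhoMax}$ (resp.\ $x^{\rhoMax-1}$); dividing it by the known nonzero constant gives $a_0$, which we output. The main obstacle is ensuring that the bases $\ext_\CJ(\rho_0),\ext_\CJ(\sigma_0),E_\rho,E_\sigma$ are pairwise distinct whenever they actually occur, so that linear independence—and hence the interpolation—applies: $\ext_\CJ(\rho_0)\ne\ext_\CJ(\sigma_0)$ is \ref{item:candidate:item:rho0NEQsig0}, the inequalities $\ext_\CJ(\rho_0)\ne E_\rho$ and $\ext_\CJ(\sigma_0)\ne E_\sigma$ follow from $\ext_\CJ(\rho_1)\ge 1$ (\ref{item:strongagg0}), and the remaining potential collision $\ext_\CJ(\rho_0)=E_\sigma$ is excluded by \ref{item:strongagg}; in the one residual situation where a surviving equality merely merges lower-order terms (for instance $E_\sigma=\ext_\CJ(\rho_0)$ with $\rho$ finite), a short bookkeeping argument shows the top-degree coefficient carrying $a_0$ is unchanged. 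A secondary, routine task is to track the exact degrees and leading coefficients of the $d$-indexed polynomials across the four finite/cofinite combinations of $\sigma$ and $\rho$, and to confirm the size and pathwidth bounds that make the reduction a genuine $\pwred$.
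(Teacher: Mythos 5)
Your construction and counting identity are exactly the ones in the paper's proof: attach $x$ portal-identified copies of the strong candidate $\CJ$ to the single $(\emptyset,\{0\})$-vertex, turn it into a $(\sigma,\rho)$-vertex, and expand the number of solutions of the new instance by the state of $v$ and the number of copies handing it a selected neighbour; the target is $a_0$ (the paper's $\ext_{\CG}(\rho_0)$), and the pathwidth/size analysis is the same. Where you genuinely differ is the recovery step. The paper first isolates the $\rho$-part: it applies the finite-difference operator $\#S(I_{x+2})-(\alpha+\beta)\#S(I_{x+1})+\alpha\beta\,\#S(I_x)$ to annihilate the exponentials $\alpha^x,\beta^x$ in the cofinite cases, then separates the two remaining bases by a magnitude argument (for $x=\Theta(n)$, dividing by $\ext_{\CJ}(\rho_0)^{x-r^*}$ makes the smaller-base part less than $1$, so a floor computation extracts an integer), and only then runs ordinary polynomial interpolation of degree $r^*$, using \ref{item:strongagg0} and \ref{item:strongagg} to see that the leading coefficient $\ext_{\CJ}(\rho_1)^{r^*}(\ext_{\CJ}(\rho_0)-\alpha)(\ext_{\CJ}(\rho_0)-\beta)$ is nonzero. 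You instead recover all coefficients at once by exact linear algebra in the constant-dimensional space of exponential-polynomials with known bases, from constantly many small sample points. This is a valid alternative and avoids both the $x=\Theta(n)$ blow-up and the floor/divisibility bookkeeping; the essential point, which you do use correctly, is that the base $\ext_{\CJ}(\rho_0)$ carrying $a_0$ is distinct from every other base that can occur, precisely by \ref{item:candidate:item:rho0NEQsig0}, \ref{item:strongagg0} and \ref{item:strongagg}, and that only the $d=0$ term reaches the top degree on that base.

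One caveat you should fix: your distinctness bookkeeping for the \emph{other} bases is partly wrong. The inequality $\ext_{\CJ}(\sigma_0)\neq E_\sigma$ does not follow from \ref{item:strongagg0}; a strong candidate may well have $\ext_{\CJ}(\sigma_1)=0$ (the paper explicitly allows this), in which case $E_\sigma=\ext_{\CJ}(\sigma_0)$, and nothing in \cref{def:candidates} excludes $\ext_{\CJ}(\sigma_0)=E_\rho$ or $E_\rho=E_\sigma$ either. Conversely, the ``residual situation'' you name, $E_\sigma=\ext_{\CJ}(\rho_0)$ with $\rho$ finite, cannot occur at all, by \ref{item:strongagg}. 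So the deferred bookkeeping should be: merge coinciding bases among $\ext_{\CJ}(\sigma_0)$, $E_\rho$, $E_\sigma$ before setting up the (confluent) Vandermonde system; since none of these can equal $\ext_{\CJ}(\rho_0)$, the coefficient of $\ext_{\CJ}(\rho_0)^x x^{\rhoMax}$ (finite $\rho$), respectively $x^{\rhoMax-1}$ (cofinite $\rho$), remains well-defined and equals a known nonzero multiple of $a_0$. With that correction your argument goes through.
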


We defer the proof of \cref{lem:forcesingle00} to \cref{sec:forcesingle00}, and first
discuss how we use it to obtain \cref{lem:forceboth}.

\lemforceboth*
\begin{proof}
  Let $\CJ=(J,\{v\})$ be the gadget given by \cref{lem:candidate}
	(using that $\rho\neq \NN$).
	Then, $\CJ$ is either a winner or a strong candidate for \srCountDomSet.
	Note that $\CJ$ is also a winner or a strong candidate
	for \srCountDomSetRel[\CP] (that does not use vertices from $\CP$).

  We first give the reduction assuming that $\CJ$ is a winner, and consequently, $\ext_\CJ(\rho_1)=\ext_\CJ(\sigma_1)=0$.
  Afterward, we give a modified reduction for the case where $\CJ$ is a strong candidate.
	In both cases, the proof is split into two steps
	where we first show the reduction
	from \srCountDomSetRel[\CP+(\emptyset,\rho)]
	to \srCountDomSetRel[\CP],
	that is, how to force vertices to be unselected,
	and then the reduction
	from \srCountDomSetRel[\CP+(\sigma,\emptyset)+(\emptyset,\rho)]
	to \srCountDomSetRel[\CP+(\emptyset,\rho)],
	that is, how to force vertices to be selected given that one can already force unselected vertices.

	\begin{claim}
		If $\CJ$ is a winner,
		then $\srCountDomSetRel[\CP+(\emptyset,\rho)]\pwred \srCountDomSetRel[\CP]$.
	\end{claim}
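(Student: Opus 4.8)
\begin{claimproof}
The plan is a standard polynomial-interpolation reduction that converts the two distinct extension counts of the winner $\CJ=(J,\{v\})$ into a multiplicative weighting of the selection status of the vertices we wish to force. Let $I$ be an instance of $\srCountDomSetRel[\CP+(\emptyset,\rho)]$ and let $v_1,\dots,v_t$ be its $(\emptyset,\rho)$-vertices. First I would form the auxiliary instance $I_0$ by reinterpreting each $v_j$ as an ordinary $(\sigma,\rho)$-vertex; since the $\rho$-component of the two pairs agrees, a solution of $I$ is precisely a solution of $I_0$ in which none of $v_1,\dots,v_t$ is selected. For a parameter $x\ge 1$, I would then build $I'_x$ from $I_0$ by attaching $x$ independent copies of $\CJ$ to each $v_j$, where ``attaching'' means identifying the portal of every copy with $v_j$. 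The crucial observation is that, because $\CJ$ is a winner, $\ext_{\CJ}(\sigma_i)=\ext_{\CJ}(\rho_i)=0$ for every $i\ge1$; hence in every partial solution the portal of each copy has exactly $0$ selected neighbours inside that copy, so these copies add no selected neighbour to $v_j$. Consequently the number of selected neighbours of $v_j$ in $I'_x$ equals that in $I_0$, the $(\sigma,\rho)$-constraint on $v_j$ in $I'_x$ coincides with the one it had in $I_0$, and $I'_x$ is a legitimate instance of $\srCountDomSetRel[\CP]$ (the gadget $\CJ$ uses only ordinary $(\sigma,\rho)$-vertices).

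Next I would set up the interpolation. For $W\subseteq\{v_1,\dots,v_t\}$ let $c_W$ be the number of solutions of $I_0$ in which exactly the vertices of $W$ are selected, and put $a_i\deff\sum_{|W|=i}c_W$; in particular $a_0=c_\emptyset=\#S(I)$ is the quantity we want to compute. A copy of $\CJ$ attached to a selected $v_j$ admits $\ext_{\CJ}(\sigma_0)$ extensions and a copy attached to an unselected $v_j$ admits $\ext_{\CJ}(\rho_0)$ extensions, so grouping the solutions of $I_0$ by their selected vertices among $v_1,\dots,v_t$ gives
\[
  \#S(I'_x)=\sum_{i=0}^{t} a_i\,\ext_{\CJ}(\sigma_0)^{xi}\,\ext_{\CJ}(\rho_0)^{x(t-i)}
  =\ext_{\CJ}(\rho_0)^{xt}\sum_{i=0}^{t} a_i\,\bigl(\ext_{\CJ}(\sigma_0)/\ext_{\CJ}(\rho_0)\bigr)^{xi}.
\]
By the candidate axioms $\ext_{\CJ}(\rho_0),\ext_{\CJ}(\sigma_0)\ge1$ and $\ext_{\CJ}(\rho_0)\neq\ext_{\CJ}(\sigma_0)$, so $r\deff\ext_{\CJ}(\sigma_0)/\ext_{\CJ}(\rho_0)$ is a positive rational different from $1$ and the powers $r^x$ are pairwise distinct for $x=1,\dots,t+1$. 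Querying the oracle for $\srCountDomSetRel[\CP]$ on $I'_1,\dots,I'_{t+1}$ and dividing by $\ext_{\CJ}(\rho_0)^{xt}\neq0$ therefore yields $t+1$ evaluations of the univariate polynomial $\sum_{i=0}^{t}a_i y^i$ (of degree at most $t$) at distinct points; standard interpolation, e.g.\ via \cref{fct:count:recoveringPolynomial}, recovers all $a_i$, and in particular $a_0=\#S(I)$. Since $t\le n$, this uses only polynomially many oracle calls.

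Finally I would verify that the reduction preserves pathwidth. Starting from a path decomposition of $I$, for each $v_j$ pick a bag $B$ containing $v_j$ and insert, immediately after $B$ in the decomposition, one new bag $B\cup V(c)$ for each of the $x$ copies $c$ of $\CJ$ attached to $v_j$; these copies are pairwise disjoint and are joined to the rest of the graph only through $v_j\in B$, so this is a valid path decomposition of $I'_x$, and each new bag has size $\pw(I)+\Oh(1)$ because $\CJ$ has constant size (it depends only on the fixed pair $(\sigma,\rho)$). Hence $\pw(I'_x)\le\pw(I)+\Oh(1)$ and $|I'_x|$ is polynomial in $|I|$. I expect the only point requiring genuine care to be the argument that each attached copy contributes exactly the clean factor $\ext_{\CJ}(\sigma_0)$ or $\ext_{\CJ}(\rho_0)$ per $v_j$, according to its selection status, and no selected neighbour of $v_j$---this is precisely where the winner conditions $\ext_{\CJ}(\sigma_i)=\ext_{\CJ}(\rho_i)=0$ for $i\ge1$ are used---while the remaining steps are routine bookkeeping.
\end{claimproof}
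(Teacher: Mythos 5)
Your proposal is correct and follows essentially the same route as the paper: attach $x$ disjoint copies of the winner $\CJ$ to each $(\emptyset,\rho)$-vertex (identified with the portal), use the winner conditions to ensure the copies contribute no selected neighbours so each copy cleanly multiplies by $\ext_{\CJ}(\sigma_0)$ or $\ext_{\CJ}(\rho_0)$, then divide by $\ext_{\CJ}(\rho_0)^{xt}$ and interpolate the degree-$t$ polynomial in $\bigl(\ext_{\CJ}(\sigma_0)/\ext_{\CJ}(\rho_0)\bigr)^x$ to recover $a_0$. The pathwidth argument (one duplicated bag per attached copy, constant gadget size) also matches the paper's.
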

	\begin{claimproof}
  If $\CJ$ is a winner, then the only states of $\CJ$ with non-zero extensions are $\rho_0$ and $\sigma_0$.
  In particular, $v$ never receives any selected neighbors within $\CJ$.

  Let $I=(G,\lambda)$ be an instance of
	\srCountDomSetRel[\CP+(\emptyset,\rho)],
	and let $U=\{u_1, \ldots, u_k\}$ be the set of $(\emptyset,\rho)$-vertices
	in $G$.
  For a positive integer $x$, we define an instance $I_x=(G_x,\lambda_x)$
	of \srCountDomSetRel[\CP] by attaching $x$ new copies of the gadget $\CJ$
	to each vertex $u \in U$,
	where $u$ is identified with the portal of each attached copy.
  The function $\lambda_x$ is identical to $\lambda$ on the vertices in $V(G)\setminus U$, and maps the remaining vertices to $0$ (i.e., the vertices in $U$ and in their attached copies of $\CJ$ are $(\sigma,\rho)$-vertices).

  Let $\#S(I_x)$ denote the number of solutions of \srCountDomSetRel[\CP]
	on input $I_x$.
	Further, let $a_i$ be the number of solutions of \srCountDomSetRel[\CP]
	on instance $I'=(G,\lambda_x\vert_{V(G)})$
	for which precisely $i$ of the vertices in $U$ are selected.
  Intuitively, the instance $I'$ is obtained from $I$
	by replacing the $(\emptyset,\rho)$-vertices in $U$
	by $(\sigma,\rho)$-vertices.
	Then, our goal is to compute $a_0$
	since this is precisely the number of solutions on instance $I$.

  We recover $a_0$ from $\#S(I_x)$ using interpolation.
  We observe that
  \begin{align*}
   \#S(I_x)
   &=\sum_{i=0}^{|U|} a_i\cdot \ext_{\CJ}(\sigma_0)^{xi}
	 				\cdot \ext_{\CJ}(\rho_0)^{x(|U|-i)}\\
   &=\ext_{\CJ}(\rho_0)^{x|U|}\cdot \sum_{i=0}^{|U|} a_i
	 		\cdot \left(\frac{\ext_{\CJ}(\sigma_0)}{\ext_{\CJ}(\rho_0)}\right)^{xi}.
  \end{align*}
  Hence, $\#S(I_x)/\ext_{\CJ}(\rho_0)^{x|U|}$ is a polynomial of degree $|U|$
	with coefficients $a_0, \dots, a_{|U|}$
	and indeterminates $(\ext_{\CJ}(\sigma_0)/\ext_{\CJ}(\rho_0))^{x}$.
  Since $\CJ$ is a winner, we have $\ext_\CJ(\rho_0), \ext_\CJ(\sigma_0)\ge 1$, and $\ext_\CJ(\rho_0) \neq \ext_\CJ(\sigma_0)$.
  Thus, we can recover the coefficients using polynomial interpolation
	using $|U|$ distinct values of $x$,
	which gives $|U|$ distinct values of the indeterminates.
  In particular, we obtain $a_0$ as required.
	\end{claimproof}

	For the next claim, we use exactly the same approach as for the previous one. Note
    that every winner for \srCountDomSetRel[\CP] is also a winner for
    \srCountDomSetRel[\CP+(\emptyset,\rho)].
		(It does not even use $(\emptyset,\rho)$-vertices.)
    So, the following claim states a slightly stronger statement than we need at this
    point. This stronger variant is useful for the second case where $\CJ$ is only a strong candidate.
	\begin{claim}\label{clm:winner2}
		If $\CJ$ is a winner for \srCountDomSetRel[\CP+(\emptyset,\rho)],
		then
		\[
		\srCountDomSetRel[\CP+(\sigma,\emptyset)+(\emptyset,\rho)]
		\pwred \srCountDomSetRel[\CP+(\emptyset,\rho)].
		\]
	\end{claim}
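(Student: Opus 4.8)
The plan is to reuse the interpolation scheme of the preceding claim essentially verbatim, but now reading off the \emph{top} coefficient of the recovered polynomial rather than its constant term. First I would record that the winner $\CJ$ produced by \cref{lem:candidate} is a gadget for $\srCountDomSet$ that uses no $(\emptyset,\rho)$-vertex, so its extension counts are unchanged when it is viewed as a gadget for $\srCountDomSetRel[\CP+(\emptyset,\rho)]$; in particular it is a winner there, so by \cref{def:candidates} we have $\ext_\CJ(\sigma_1)=\ext_\CJ(\rho_1)=0$, $\ext_\CJ(\sigma_0),\ext_\CJ(\rho_0)\ge 1$, and $\ext_\CJ(\sigma_0)\neq\ext_\CJ(\rho_0)$.

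Given an instance $I=(G,\lambda)$ of $\srCountDomSetRel[\CP+(\sigma,\emptyset)+(\emptyset,\rho)]$ with $(\sigma,\emptyset)$-vertex set $W=\{w_1,\dots,w_k\}$, I would first replace every $w_j$ by an ordinary $(\sigma,\rho)$-vertex to obtain an instance $I'$ of $\srCountDomSetRel[\CP+(\emptyset,\rho)]$; since a \emph{selected} $(\sigma,\rho)$-vertex obeys exactly the $(\sigma,\emptyset)$-constraint, the solutions of $I$ are precisely the solutions of $I'$ in which every vertex of $W$ is selected. Then, for each $x\ge 1$, build $I_x$ from $I'$ by hanging $x$ fresh copies of $\CJ$ on each $w_j$, identifying $w_j$ with every copy's portal. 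The winner property guarantees that no copy of $\CJ$ contributes a selected neighbor to $w_j$, so the selected-neighbor count of $w_j$ is determined entirely by $G$; consequently a solution of $I'$ in which exactly $i$ vertices of $W$ are selected has exactly $\ext_\CJ(\sigma_0)^{xi}\,\ext_\CJ(\rho_0)^{x(k-i)}$ extensions to $I_x$. Writing $a_i$ for the number of solutions of $I'$ with exactly $i$ vertices of $W$ selected, and $\#S(I_x)$ for the number of solutions of $I_x$, this yields
\[
\#S(I_x)=\sum_{i=0}^{k} a_i\,\ext_\CJ(\sigma_0)^{xi}\,\ext_\CJ(\rho_0)^{x(k-i)}=\ext_\CJ(\rho_0)^{xk}\sum_{i=0}^{k} a_i\bigl(\ext_\CJ(\sigma_0)/\ext_\CJ(\rho_0)\bigr)^{xi},
\]
which, after dividing by $\ext_\CJ(\rho_0)^{xk}\neq 0$, is a polynomial of degree $k$ in $\bigl(\ext_\CJ(\sigma_0)/\ext_\CJ(\rho_0)\bigr)^{x}$. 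As $\ext_\CJ(\sigma_0)$ and $\ext_\CJ(\rho_0)$ are distinct positive integers, distinct values of $x$ give distinct evaluation points, so oracle calls to $\srCountDomSetRel[\CP+(\emptyset,\rho)]$ on $I_1,\dots,I_{k+1}$ together with polynomial interpolation recover all $a_i$; in particular $a_k=\#S(I)$, as desired.

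For the pathwidth bound I would argue as in the earlier reductions: $\CJ$ has constant size (depending only on $\sigma,\rho$), so a bag of the input path decomposition containing $w_j$ is duplicated once per attached copy of $\CJ$ and the vertices of that copy are added to it, increasing the width by only an additive constant; moreover $x\le k+1\le n$, so each $I_x$ has size polynomial in $|I|$, and no relations are introduced. The only point that needs genuine care — and the reason this step insists on a \emph{winner} rather than an arbitrary candidate — is the clean factorization of $\#S(I_x)$: it hinges on $\ext_\CJ(\sigma_1)=\ext_\CJ(\rho_1)=0$, since otherwise the number of extensions of a copy of $\CJ$ would depend on the current selected-neighbor count of $w_j$ in $G$, the power form above would break, and the interpolation would require the substantially more delicate setup carried out separately for strong candidates (cf.\ \cref{lem:forcesingle00} and the subsequent reduction). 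I expect verifying this factorization, and the bijection between solutions of $I$ and solutions of $I'$ with all of $W$ selected, to be the only places where one must be attentive; the rest mirrors the first claim exactly.
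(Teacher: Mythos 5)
Your proposal is correct and matches the paper's own argument, which simply reuses the interpolation scheme of the preceding claim and reads off the top coefficient $a_{|U|}$ instead of $a_0$; your detailed write-out of the factorization and of why the winner property ($\ext_\CJ(\sigma_1)=\ext_\CJ(\rho_1)=0$) is what makes it work is exactly the intended reasoning. Only your opening sentence is slightly off target: the hypothesis is that $\CJ$ is a winner for \srCountDomSetRel[\CP+(\emptyset,\rho)] (possibly itself using $(\emptyset,\rho)$-vertices, which is how the claim is later applied in the strong-candidate case), not necessarily the relation-free winner from \cref{lem:candidate} — but since your argument only uses the extension counts with respect to the target problem, this does not affect correctness.
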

	\begin{claimproof}
		Observe that the coefficient $a_{|U|}$ in the proof of the previous claim
		corresponds to the number of solutions
		for which all vertices in $U$ are selected.
	  This is precisely what we need in order to model the vertices in $U$ as $(\sigma,\emptyset)$-vertices.
		Hence, with the same proof, we obtain the reduction.
		Moreover, since we are reducing to \srCountDomSetRel[\CP+(\emptyset,\rho)],
		it suffices that $\CJ$ be a gadget/winner for this problem.
	\end{claimproof}

	The proof of the statement when $\CJ$ is a winner
	follows from the combination of both claims.

  Now, we revisit our assumption about the gadget $\CJ$
	and consider the remaining case where $\CJ$ is not a winner,
	but a strong candidate.
	In this case, $\CJ$ fulfills the requirements of \cref{lem:forcesingle00}
	with which we obtain $\srCountDomSetRel[\CP+(\emptyset, \{0\})]
	\pwred \srCountDomSetRel[\CP]$,
	where $(\emptyset, \{0\})$ is $1$-bounded in the source problem.
  Together with the following claim,
	we obtain
	\[
		\srCountDomSetRel[\CP+(\emptyset,\rho)] \pwred \srCountDomSetRel[\CP].
	\]

	\begin{claim}
		It holds that
		$\srCountDomSetRel[\CP+(\emptyset,\rho)] \pwred
		\srCountDomSetRel[\CP+(\emptyset,\{0\})]$,
		even if $(\emptyset,\{0\})$ is $1$-bounded in the target problem.
	\end{claim}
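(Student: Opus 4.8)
The plan is to give a direct, parsimonious reduction in which a \emph{single} $(\emptyset,\{0\})$-vertex is used to force \emph{all} the $(\emptyset,\rho)$-vertices of an instance to be unselected simultaneously; this is precisely why restricting $(\emptyset,\{0\})$ to be $1$-bounded in the target problem is not an obstacle.

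Concretely, given an instance $I=(G,\lambda)$ of $\srCountDomSetRel[\CP+(\emptyset,\rho)]$, let $U=\{u_1,\dots,u_k\}$ be the set of $(\emptyset,\rho)$-vertices of $I$. I would construct $I'=(G',\lambda')$ by (i) adding one new vertex $z$ of type $(\emptyset,\{0\})$, (ii) adding the edges $zu_1,\dots,zu_k$, and (iii) reassigning each $u_i$ to the base type $(\sigma,\rho)$ (i.e.\ setting $\lambda'(u_i)=0$), while all other vertices and their types are left unchanged. Since $z$ can never be selected and, when unselected, must have exactly $0$ selected neighbors, in every solution of $I'$ none of $u_1,\dots,u_k$ is selected; hence each $u_i$ is forced to be unselected and behaves exactly as an $(\emptyset,\rho)$-vertex with respect to the rest of $G'$, because its selected-neighbor count is unaffected by $z$ (as $z$ is never selected). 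This gives a bijection between the solutions of $I$ and of $I'$: from a solution $S$ of $I$ take $S'\deff S$ (noting $S\cap U=\emptyset$ and $z\notin S'$, so $z$'s $(\emptyset,\{0\})$-constraint holds and each $u_i$'s $\rho$-constraint is the same in $G'$ as in $G$); conversely, from a solution $S'$ of $I'$ observe that $z\notin S'$ forces $S'\cap U=\emptyset$, and then $S\deff S'$ is a solution of $I$. In particular the reduction is parsimonious, so it also transports counts. Note here that the base pair $(\sigma,\rho)$ is always an available vertex type in $\srCountDomSetRel[\CP']$ for any $\CP'$, so step (iii) is legitimate.

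For the width bound, starting from a path decomposition of $G$ I would simply add $z$ to every bag; this increases the width by exactly $1$ and covers all the new edges $zu_i$, since $z$ now lies in every bag. Finally, the construction uses exactly one $(\emptyset,\{0\})$-vertex, so the produced instance falls within the restriction where $(\emptyset,\{0\})$ is $1$-bounded, which is what we need to subsequently invoke \cref{lem:forcesingle00}. There is no genuine difficulty in this step; the only point to be careful about is that we add a single high-degree vertex $z$ rather than one forcing vertex per element of $U$, so that the pathwidth overhead is an additive constant instead of something depending on $k$.
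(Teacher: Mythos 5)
Your reduction is correct and is essentially the paper's own argument: attach a single $(\emptyset,\{0\})$-vertex to all the $(\emptyset,\rho)$-vertices (now retyped as ordinary $(\sigma,\rho)$-vertices), which forces them to be unselected without adding selected neighbors, and add that vertex to every bag to keep the pathwidth overhead at $1$. Nothing further is needed.
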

	\begin{claimproof}
		The construction is straightforward:
		it suffices for all vertices that should be unselected
		(that should be $(\emptyset,\rho)$-vertices)
		to be adjacent to a single $(\emptyset,\{0\})$-vertex $p$.
		The key observation is that $p$ forces its neighbors to be unselected,
		but since $p$ itself is unselected, it does not alter the original solutions.
	\end{claimproof}

  In order to show the sought-after
	\srCountDomSetRel[\CP+(\sigma,\emptyset)+(\emptyset,\rho)]~$\pwred$ 
	\srCountDomSetRel[\CP],
	it now suffices to show
	\srCountDomSetRel[\CP+(\sigma,\emptyset)+(\emptyset,\rho)]~$\pwred$
	\srCountDomSetRel[\CP+(\emptyset,\rho)].
	We reuse \cref{clm:winner2} to obtain this reduction by applying it to the gadget $\CJ'$ as given by the following claim.
	\begin{claim}
		If $\CJ$ is a strong candidate for \srCountDomSetRel[\CP],
		then there is a winner $\CJ'$ for \srCountDomSetRel[\CP+(\emptyset, \rho)].
	\end{claim}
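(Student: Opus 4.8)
The plan is to build a winner $\CJ'$ for \srCountDomSetRel[\CP+(\emptyset,\rho)] out of the strong candidate $\CJ$ by attaching a single $(\emptyset,\rho)$-vertex to (a constant number of copies of) $\CJ$ in order to kill off the unwanted $\sigma_1$- and $\rho_1$-extensions. Recall that for a strong candidate we have $\ext_\CJ(\rho_0),\ext_\CJ(\sigma_0)\ge 1$, $\ext_\CJ(\rho_1)\ge 1$, $\ext_\CJ(\sigma_i)=\ext_\CJ(\rho_i)=0$ for $i\ge 2$, and the two separation conditions $\ext_\CJ(\rho_0)\neq\ext_\CJ(\sigma_0)$ and $\ext_\CJ(\rho_0)\neq\ext_\CJ(\sigma_0)+\ext_\CJ(\sigma_1)$. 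The key point is that, because $\rho\neq\NN$, a $(\emptyset,\rho)$-vertex $z$ cannot be selected and can tolerate only certain numbers of selected neighbors; in particular since $\rhoMax\ge 1$ and, crucially, $\rhoMax\notin\rho$ is \emph{not} what we want — rather we exploit that $0\in\rho$ or $0\notin\rho$ governs whether $z$ may have no selected neighbor. I would take a fixed number $t$ of copies of $\CJ$, identify their portals into a single new portal vertex $p'$ of $\CJ'$, and additionally attach $p'$ (or the copies' internal structure) to one $(\emptyset,\rho)$-vertex $z$ chosen so that $z$'s $\rho$-constraint forces each copy of $\CJ$ into a state where its portal gets $0$ selected neighbors, thereby zeroing out $\ext(\rho_1)$ and $\ext(\sigma_1)$ of the composite gadget while keeping $\ext(\rho_0),\ext(\sigma_0)\ge 1$ and distinct.

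Concretely, the first step is to analyze what states the portal $p'$ of the composite can take. If $p'\notin S$, then each attached copy of $\CJ$ independently contributes either a $\rho_0$-extension or a $\rho_1$-extension to $p'$; the total number of selected neighbors $p'$ receives equals the number of copies in the $\rho_1$-state, and this must land in $\rho$. If $p'\in S$, each copy contributes a $\sigma_0$- or $\sigma_1$-extension and the $\sigma$-constraint of $p'$ (now a $(\sigma,\rho)$-vertex of the original type) must be met. The second step is to route the copies through an $(\emptyset,\rho)$-vertex $z$ so that the only feasible global configuration has every copy in its ``$0$-neighbor'' substate for the portal: this makes $\ext_{\CJ'}(\rho_1)=\ext_{\CJ'}(\sigma_1)=0$ because no copy can supply a selected neighbor to $p'$, giving condition~\ref{item:rho1equsig1equ0}. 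The third step is to verify $\ext_{\CJ'}(\rho_0)=\ext_\CJ(\rho_0)^t\cdot(\text{fixed factor})$ and $\ext_{\CJ'}(\sigma_0)=\ext_\CJ(\sigma_0)^t\cdot(\text{fixed factor})$ (or a sum over $\sigma_0/\sigma_1$ splits weighted appropriately), and use the strong-candidate separation inequalities — exactly the reason conditions~\ref{item:strongagg0} and~\ref{item:strongagg} are in the definition — to conclude $\ext_{\CJ'}(\rho_0)\neq\ext_{\CJ'}(\sigma_0)$ and $\ext_{\CJ'}(\sigma_i)=\ext_{\CJ'}(\rho_i)=0$ for $i\ge 2$. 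The pathwidth claim is routine: $\CJ$ has constant size, we attach a constant number of copies plus one vertex $z$, so $\pw$ grows by $O(1)$, and we need only a constant number of $(\emptyset,\rho)$-vertices in the target problem (indeed one).

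The main obstacle I anticipate is the case analysis forced by whether $0\in\rho$. If $0\notin\rho$, then an $(\emptyset,\rho)$-vertex $z$ requires at least $\rhoMin\ge 1$ selected neighbors, so one cannot simply ``neutralize'' $z$ by leaving everything unselected; one must arrange that $z$'s required selected neighbors come from auxiliary $(\sigma,\emptyset)$-style cliques (as in \cref{lem:rel:liftingRho}) rather than from the $\CJ$-copies, so that the $\CJ$-copies remain in their $0$-neighbor substate. If $0\in\rho$, then $z$ is happy with no selected neighbor and the construction is cleaner but one must double-check that $z$ does not accidentally \emph{permit} a copy of $\CJ$ to be in its $\rho_1$-state. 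Getting the counting identities right in each sub-case — tracking the fixed multiplicative constants coming from the attached copies and the auxiliary gadgets, and then invoking precisely the right one of the two separation inequalities~\ref{item:item:rho0NEQsig0} vs.~\ref{item:strongagg} — is where the real care is needed; everything else (pathwidth bookkeeping, one-to-one correspondence of solutions) is straightforward. Once $\CJ'$ is established as a winner for \srCountDomSetRel[\CP+(\emptyset,\rho)], \cref{clm:winner2} applied to $\CJ'$ yields \srCountDomSetRel[\CP+(\sigma,\emptyset)+(\emptyset,\rho)]~$\pwred$~\srCountDomSetRel[\CP+(\emptyset,\rho)], and combined with the earlier reductions via \cref{lem:forcesingle00} this completes the proof of \cref{lem:forceboth}.
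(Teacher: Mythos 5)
There is a genuine gap, and it sits at the heart of your construction: an $(\emptyset,\rho)$-vertex forces only \emph{itself} to be unselected; it does not constrain the selection status of its neighbors at all (its only other demand is that \emph{its own} number of selected neighbors lies in $\rho$, and since $\rho\neq\{0\}$, and may even be cofinite, this forbids nothing useful). So attaching a single $(\emptyset,\rho)$-vertex $z$ to $p'$ or to the internals of the $\CJ$-copies cannot ``force each copy into its $0$-neighbor substate'' and hence cannot zero out $\ext(\rho_1)$ and $\ext(\sigma_1)$ — what you would need for that is an $(\emptyset,\{0\})$-vertex, which is exactly the object that is \emph{not} available as a gadget ingredient here (\cref{lem:forcesingle00} only simulates it via an interpolation-based \emph{reduction}, not inside a single gadget). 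Your fallback for the case $0\notin\rho$ is also unavailable: the target problem is \srCountDomSetRel[\CP+(\emptyset,\rho)], so $(\sigma,\emptyset)$-vertices (the cliques of \cref{lem:rel:liftingRho}) may not be used in $\CJ'$; invoking them would be circular, since forcing selected vertices is precisely what the surrounding proof of \cref{lem:forceboth} is trying to derive from this claim. Moreover, with $t\ge 2$ identified copies and no working forcing mechanism, states $\sigma_i,\rho_i$ with $i\ge 2$ reappear, so even condition~\ref{item:candidate:item:rhoiEQsigiEQ0} of \cref{def:candidates} would fail; and interpolation over $t$ is not an option, because a winner is a single fixed gadget, not a family of oracle calls.

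The intended argument is much simpler and uses the $(\emptyset,\rho)$-pair in a different place: take the strong candidate $\CJ$ and replace \emph{the neighbors of its portal} by $(\emptyset,\rho)$-vertices to obtain $\CJ'$. These neighbors are now forced to be unselected, so the portal never receives a selected neighbor inside $\CJ'$, giving $\ext_{\CJ'}(\rho_i)=\ext_{\CJ'}(\sigma_i)=0$ for all $i\ge 1$ (in particular condition~\ref{item:rho1equsig1equ0}). At the same time, the partial solutions counted by $\ext_{\CJ}(\rho_0)$ and $\ext_{\CJ}(\sigma_0)$ already leave all portal-neighbors unselected, and an unselected $(\sigma,\rho)$-vertex obeys exactly the same constraint as an $(\emptyset,\rho)$-vertex, so $\ext_{\CJ'}(\rho_0)=\ext_{\CJ}(\rho_0)\ge 1$ and $\ext_{\CJ'}(\sigma_0)=\ext_{\CJ}(\sigma_0)\ge 1$ with $\ext_{\CJ'}(\rho_0)\neq\ext_{\CJ'}(\sigma_0)$ inherited from the candidate property; no copies, no auxiliary vertex $z$, no case distinction on $0\in\rho$, and the strong-candidate inequality~\ref{item:strongagg} is not even needed for this particular claim (it is used elsewhere, in \cref{lem:forcesingle00}).
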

	\begin{claimproof}
  	If in the gadget $\CJ$ we replace all neighbors of its portal
		by $(\emptyset,\rho)$-vertices,
		then we obtain a gadget $\CJ'$ for \srCountDomSetRel[\CP+(\emptyset,\rho)]
  that inherits the properties $\ext_{\CJ'}(\rho_0), \ext_{\CJ'}(\sigma_0)\ge 1$, and  $\ext_{\CJ'}(\rho_0) \neq \ext_{\CJ'}(\sigma_0)$ from $\CJ$.
  But now, since all neighbors of the portal have to be unselected, we also have $\ext_{\CJ'}(\rho_i)=\ext_{\CJ'}(\sigma_i)=0$ for all $i\ge 1$.
  Thus, $\CJ'$ is now a \emph{winner}.
	\end{claimproof}

	This completes the proof for the case when $\CJ$ is a strong candidate.
\end{proof}

\subsubsection{Omitted Proofs}
\label{sec:forcingboth:ommittedProofs}
In the following, we give the proofs of \cref{lem:candidate,lem:forcesingle00}.
\subsubsection*{\boldmath Constructing Strong Candidates and Winners: Proof of \texorpdfstring{\cref{lem:candidate}}{Lemma \ref{lem:candidate}}}
\label{sec:candidateorwinner}

We start with the proof of \cref{lem:candidate}.
Recall the definitions of (strong) candidates and winners.
\defCandidatesWinner*

In the following, we show the existence of such gadgets.

\lemcandidate*

\begin{proof}
We start the construction with an auxiliary gadget.

\begin{claim}\label{clm:extensions1}
  Let $(\sigma,\rho)\in \allSetsne^2$ be non-trivial.
  Then, there is a gadget $\CJ=(J,\{p\})$ for \srCountDomSet
  such that $\ext_\CJ(\rho_0), \ext_\CJ(\rho_1)\ge 1$,
  but $\ext_\CJ(\rho_i)=0$ for each $i\ge 2$.
\end{claim}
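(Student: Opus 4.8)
\textbf{Plan for the proof of Claim \ref{clm:extensions1}.}
The goal is to build a gadget $\CJ=(J,\{p\})$ for \srCountDomSet{} (so no relations, and all vertices are $(\sigma,\rho)$-vertices) whose portal $p$ can be left unselected with either $0$ or $1$ selected neighbor (each with at least one extension), but never with $2$ or more selected neighbors. The natural first step is to invoke \cref{lem:happyGadget}: since $(\sigma,\rho)$ is non-trivial, both $\sigma$ and $\rho$ are non-empty and $\rho\neq\{0\}$, so there are $s\in\sigma$ and $r\in\rho$ with $r\ge 1$, and \cref{lem:happyGadget} gives a $\{\sigma_s,\rho_r\}$-provider $(G_0,\{u\})$ with the two disjoint ``all/nothing'' solutions $X$ and $Y$ partitioning $V(G_0)$, where $u\in X$ has $s$ selected neighbors and $u\notin Y$ has $r$ selected neighbors. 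This provider is the workhorse: it lets us attach to any vertex a pendant structure that is happy both when that vertex is selected and when it is not, and (crucially) never forces a selected neighbor onto a vertex that doesn't want one.

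Given this, I would build $J$ as follows. Take the portal $p$ and a single auxiliary vertex $w$ adjacent to $p$. Attach to $w$ enough copies of the $\{\sigma_s,\rho_r\}$-provider (using $w$ as the portal vertex of each copy) so that $w$ can be selected with exactly $\sigMax$ selected neighbors coming from the providers if we want to select $w$, and can be unselected with exactly $\rhoMax-1$ or $\rhoMax$ selected neighbors from the providers, depending on whether $p$ is selected. More concretely: attach $\rhoMax$ copies of the provider to $w$; then if $p$ is unselected and $w$ is unselected, having $\rhoMax$ of the provider-portals ``selected'' (i.e.\ using the $X$-solution in each) gives $w$ exactly $\rhoMax$ selected neighbors, which is feasible; and whichever of the provider copies are in the $Y$-state contribute $0$ to $w$'s count. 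This gives $p$ either $0$ selected neighbors (when $w\notin S$) or would give $p$ one selected neighbor only if $w\in S$. To allow $w\in S$, we need $w$ to reach a count in $\sigma$: attach additionally $\sigMax$ copies of the provider to $w$ and route things so that when $w$ is selected it picks up exactly $\sigMax$ selected neighbors from among the providers. The point is that $w$ is the \emph{only} neighbor of $p$, so $p$ can have at most one selected neighbor; hence $\ext_\CJ(\rho_i)=0$ for $i\ge 2$ automatically. We then just need $\ext_\CJ(\rho_0)\ge 1$ (take $w\notin S$) and $\ext_\CJ(\rho_1)\ge 1$ (take $w\in S$, with its providers arranged to give it a count in $\sigma$), both of which follow from the provider guarantees.

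The main obstacle is the bookkeeping that ensures $w$ (and all the provider vertices) can \emph{simultaneously} be made happy in both the $w\in S$ and $w\notin S$ scenarios while $p$ is unselected, and in particular that when $w\in S$ we can hit a value in $\sigma$ exactly. The clean way to handle this is to not try to be clever: give $w$ degree $\sigMax$ worth of ``selectable'' provider-neighbors and degree $\rhoMax$ worth of ``selectable'' provider-neighbors in disjoint groups, and observe that by choosing the state ($X$ vs.\ $Y$) of each attached provider independently we can realize any number of selected neighbors for $w$ between $0$ and $\sigMax$ (for the first group) or between $0$ and $\rhoMax$ (for the second group); since $s\in\sigma$ with $0\le s\le \sigMax$ and $r\in\rho$ with $1\le r\le\rhoMax$, both $w\in S$ (hitting $s$) and $w\notin S$ (hitting $r$ or $r-1$ once $p$'s contribution is accounted for) are achievable. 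One must also double-check that non-portal vertices inside each provider copy are satisfied in every case, but that is exactly what ``$(G_0,\{u\})$ has disjoint $(\sigma,\rho)$-sets $X$ and $Y$ with $X\cup Y=V(G_0)$'' buys us. Finally, one records that $J$ has no relations and polynomial size, and that the portal's possible states are precisely $\rho_0,\rho_1$ (with positive extension counts) together with some $\sigma$-states we do not care about here, so conditions on $\ext_\CJ(\rho_i)$ for $i\ge 2$ hold as claimed.
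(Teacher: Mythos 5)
Your overall idea is the right one---use the provider from \cref{lem:happyGadget} and make sure the portal $p$ has only a single neighbor in $J$, so that $\ext_\CJ(\rho_i)=0$ for $i\ge 2$ is automatic---but your implementation has a genuine gap in the step that is supposed to give $\ext_\CJ(\rho_1)\ge 1$. To witness $\rho_1$ you must select the auxiliary vertex $w$, and $w$'s neighbors are the portals of the attached $\{\sigma_s,\rho_r\}$-providers. \cref{lem:happyGadget} only guarantees the two states $\sigma_s$ and $\rho_r$ measured \emph{inside} the provider; once $w$ is selected, every provider portal adjacent to $w$ picks up one extra selected neighbor, so a selected portal ends up with $s+1$ selected neighbors and an unselected one with $r+1$, and neither $s+1\in\sigma$ nor $r+1\in\rho$ is guaranteed (take $\sigma=\{s\}$, $\rho=\{r\}$). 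Your ``double-check'' remark only covers the non-portal vertices of the providers, which is exactly the part that is unproblematic; the portals are where the construction breaks. This is the same feedback issue that forces the paper to use more refined providers (e.g.\ $\{\rho_0,\rho_1,\sigma_0\}$-providers) whenever a gadget must tolerate a selected central vertex, so it cannot be waved away.

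The paper's proof avoids the intermediate vertex $w$ altogether: it takes the graph $G$ with vertex $u$ from \cref{lem:happyGadget} and defines $J$ by attaching the portal $p$ as a pendant vertex to $u$. Since $p$ is \emph{unselected} in both states of interest, it never adds a selected neighbor to $u$, so the two disjoint full solutions of $G$ remain valid verbatim: the solution with $u$ unselected witnesses $\rho_0$, the one with $u$ selected witnesses $\rho_1$, and $\deg_J(p)=1$ rules out $\rho_i$ for $i\ge 2$. If you want to salvage your variant, you would have to replace the plain $\{\sigma_s,\rho_r\}$-providers attached to $w$ by providers that can absorb one extra selected neighbor at their portal, which is substantially more work than the direct construction.
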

\begin{claimproof}
  Since $\rho\neq \{0\}$ by non-triviality, there is an $r\in \rho$ with $r\ge 1$. Hence, we can apply \cref{lem:happyGadget} to obtain a graph $G$ with a vertex $u$ such that $G$ has two disjoint solutions $S_0$ and $S_1$
  for which $u\notin S_0$ with $\abs{N(u)\cap S_0}=r$, and $u\in S_1$ with $\abs{N(u)\cap S_1}=s\in \sigma$.
  Observe that the size of $G$ depends only on $\sigma$ and $\rho$, and therefore, the size of all gadgets defined in this proof ultimately depend only on $\sigma$ and $\rho$.

  Then, $J$ is obtained by attaching a new vertex $p$, the portal of $J$,
  to the vertex $u$ in $G$.
  Note that each partial solution of the gadget $(J,\{p\})$
  in which $p$ is unselected is a solution of \srCountDomSet
  when restricted to $G$
  ($u$ has a feasible number of selected neighbors in $G$).
  Since such solutions exist independently of the selection status of $u$
  (\cref{lem:happyGadget}), both states $\rho_0$ and $\rho_1$
  (of the portal $p$) can be extended by $(J,\{p\})$.
  Since $p$ only has a single neighbor in $J$,
  there are no extensions for $\rho_i$ if $i\ge 2$.
\end{claimproof}

\begin{figure}[t]
  \begin{subfigure}[b]{.3\textwidth}
    \centering
    \includegraphics{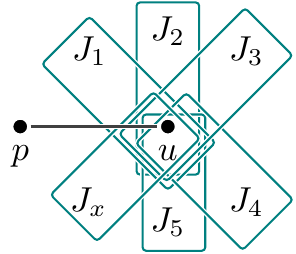}
    \caption{
      The gadget $\CZ=(Z, \{p\})$ with portal $p$ for $x=6$.
    }
    \label{fig:candidate:first}
  \end{subfigure}%
  \hfill
  \begin{subfigure}[b]{.68\textwidth}
    \centering
    \includegraphics{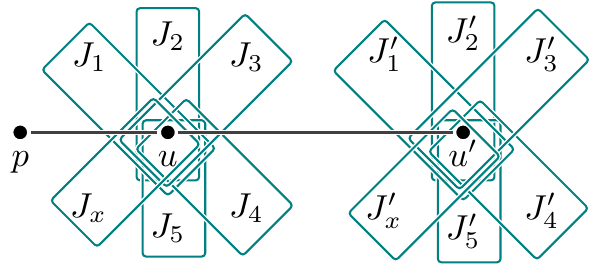}
    \caption{
      The modified gadget $\CZ^*=(Z^*,\{p\})$
      resulting from combining the gadget $\CZ=(Z,\{p\})$
      with a copy $\CZ'=(Z',\{p'\})$ where $p'=u$.
    }
    \label{fig:candidate:second}
  \end{subfigure}
  \centering
  \caption{
    The constructions of the gadgets from \cref{lem:candidate},
    where $J$ denotes the gadget from \cref{clm:extensions1}.
  }\label{fig:candidate}
\end{figure}

By \cref{clm:extensions1}, the gadget $\CJ$ has $f_i\deff \ext_\CJ(\rho_i)\ge 1$ if and only if $i\in \{0,1\}$. For a yet to be determined value $x$, let $(J_1,\{u_1\}),\ldots, (J_x,\{u_x\})$ be $x$ copies of the gadget $\CJ$. From this we obtain a graph $Z$ by identifying the vertices $u_1,\ldots, u_x$ to a single vertex $u$, which is then adjacent to the portal $p$ of $\CZ=(Z,\{p\})$.
See \cref{fig:candidate:first} for an illustration.

Now, consider extensions of the states $\rho_0$ and $\sigma_0$ of $\CZ$ depending on whether $\sigma$ and $\rho$ are finite or cofinite.
Since in both cases the vertex $u$ (i.e., the unique neighbor of $p$ in $Z$) is unselected,
each of the attached copies of $\CJ$ is in state $\rho_0$ or $\rho_1$.
Then, we obtain a solution whenever $u$
obtains a feasible number of selected neighbors
(from $p$ and its neighbors in the attached copies of $\CJ$).

We have
\begin{align*}
  \ext_\CZ(\rho_0)=&
  \begin{dcases}
    \sum_{r\in \rho}
        \binom{x}{r} f_1^r f_0^{x-r}
    &\text{if $\rho$ is finite.}
    \\
    (f_0+f_1)^x-\sum_{r\notin \rho}
        \binom{x}{r} f_1^r f_0^{x-r}
    &\text{if $\rho$ is cofinite.}
  \end{dcases} \\
  =&
  \begin{dcases}
    f_0^x \cdot\sum_{r\in \rho}
        \binom{x}{r} \left(\frac{f_1}{f_0}\right)^r
    &\text{if $\rho$ is finite.}
    \\
    (f_0+f_1)^x - f_0^x \cdot\sum_{r\notin \rho, r\ge 0}
        \binom{x}{r} \left(\frac{f_1}{f_0}\right)^r
    &\text{if $\rho$ is cofinite.}
  \end{dcases}
\end{align*}
and analogously
\begin{equation*}
  \ext_\CZ(\sigma_0)=
  \begin{dcases}
    f_0^x \cdot\sum_{r\in \rho, r\ge 1}
        \binom{x}{r-1} \left(\frac{f_1}{f_0}\right)^{r-1}
    &\text{if $\rho$ is finite.}
    \\
    (f_0+f_1)^x - f_0^x \cdot\sum_{r\notin \rho,r\ge 1}
        \binom{x}{r-1} \left(\frac{f_1}{f_0}\right)^{r-1}
    &\text{if $\rho$ is cofinite.}
  \end{dcases}
\end{equation*}

\begin{claim}\label{clm:ZisCandidate}
  There is an integer $x_0$ such that, for all $x\ge x_0$, $\CZ$ is a candidate.
\end{claim}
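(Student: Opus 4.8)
The plan is to analyze the expressions for $\ext_\CZ(\rho_0)$ and $\ext_\CZ(\sigma_0)$ as functions of $x$ and show that for all sufficiently large $x$, conditions \ref{item:candidate:rho0sig0ge1}--\ref{item:candidate:item:rhoiEQsigiEQ0} from \cref{def:candidates} are met. Condition \ref{item:candidate:item:rhoiEQsigiEQ0} is immediate: since the unique neighbor $u$ of $p$ in $Z$ can only supply at most one selected neighbor to $p$ (namely $u$ itself), the portal $p$ has no way to obtain two selected neighbors within $\CZ$, so $\ext_\CZ(\sigma_i)=\ext_\CZ(\rho_i)=0$ for all $i\ge 2$; note also that $\CZ$ uses no $\CP$-vertices, so it is a gadget for \srCountDomSet as required.

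For condition \ref{item:candidate:rho0sig0ge1}, I would argue that both $\ext_\CZ(\rho_0)$ and $\ext_\CZ(\sigma_0)$ are strictly positive for large $x$. In the finite-$\rho$ case this is clear since $\rho$ is non-empty (so there is a term $\binom{x}{r}f_1^r f_0^{x-r}$ with $r\in\rho$, which is positive as $f_0,f_1\ge 1$), and since $\rho\neq\{0\}$ there is some $r\in\rho$ with $r\ge 1$, giving positivity of $\ext_\CZ(\sigma_0)$. In the cofinite case, $(f_0+f_1)^x$ dominates the subtracted sum: the subtracted terms range over $r\notin\rho$, a finite set, so the sum is bounded by a polynomial in $x$ times $(\max\{f_0,f_1\})^x$ — more carefully, one can bound $\sum_{r\notin\rho}\binom{x}{r}f_1^r f_0^{x-r} \le f_0^x \cdot \sum_{r\notin\rho}\binom{x}{r}(f_1/f_0)^r$, which grows strictly slower than $(f_0+f_1)^x$ — so the difference is positive for large $x$. (If $f_0 = 1$ one still has $(1+f_1)^x$ versus a polynomially-many binomial terms; the binomial expansion $(f_0+f_1)^x = \sum_{r\ge 0}\binom{x}{r}f_1^r f_0^{x-r}$ makes the comparison transparent since we are subtracting only the terms with $r$ in a fixed finite set.)

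The crux of the argument — and the main obstacle — is condition \ref{item:candidate:item:rho0NEQsig0}, namely $\ext_\CZ(\rho_0)\neq\ext_\CZ(\sigma_0)$. Here I would divide both expressions by $f_0^x$ (which is nonzero) and treat them as functions of $x$ built from polynomials in $x$ evaluated with the ``ratio'' $t = f_1/f_0 \ge 1$, possibly plus an exponential term $((f_0+f_1)/f_0)^x = (1+t)^x$ in the cofinite case. In the finite-$\rho$ case, $\ext_\CZ(\rho_0)/f_0^x = \sum_{r\in\rho}\binom{x}{r}t^r$ is a polynomial in $x$ of degree $\rhoMax$ (the $r=\rhoMax$ term contributes $t^{\rhoMax}x^{\rhoMax}/\rhoMax! + \ldots$), while $\ext_\CZ(\sigma_0)/f_0^x = \sum_{r\in\rho,\,r\ge 1}\binom{x}{r-1}t^{r-1}$ has degree $\rhoMax - 1$; since $\rhoMax\ge 1$ (using $\rho\neq\{0\}$), these polynomials have different degrees, hence are distinct, so they differ for all large $x$. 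In the cofinite case, both normalized expressions equal $(1+t)^x$ minus a polynomial in $x$ (of degree $\rhoMax$ for $\rho_0$ and $\rhoMax-1$ for $\sigma_0$, by the same leading-term analysis, noting $\rhoMax - 1 \notin \rho$ so that term genuinely appears); subtracting the two expressions cancels the $(1+t)^x$ term and leaves a nonzero polynomial in $x$ of degree $\rhoMax$, which is nonzero for all large $x$. Taking $x_0$ to be the maximum of the finitely many thresholds arising from conditions \ref{item:candidate:rho0sig0ge1} and \ref{item:candidate:item:rho0NEQsig0} (and handling the degenerate subcase $\rhoMax = 1$ separately if needed, where one checks the constant/linear coefficients directly) completes the proof of the claim. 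The subsequent construction of $\CZ^*$ from $\CZ$ (identifying $p'=u$, as in \cref{fig:candidate:second}) would then use these extension counts to boost $\CZ$ into a winner or strong candidate, which I expect to be the remaining step of the lemma's proof.
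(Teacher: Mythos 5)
Your proposal is correct and follows essentially the same route as the paper's proof: normalize by $f_0^x$, exploit the degree gap between the two resulting polynomials in $x$ (degrees $\rhoMax$ vs.\ $\rhoMax-1$ in the finite-$\rho$ case), observe that in the cofinite case the exponential term cancels in the difference leaving a nonzero polynomial, and note that $p$ having a single neighbor in $Z$ rules out all states $\sigma_i,\rho_i$ with $i\ge 2$. The only slip is an off-by-one in the cofinite case: with the paper's convention that $\rhoMax-1$ is the largest integer missing from $\rho$, the subtracted polynomials have degrees $\rhoMax-1$ and (at most) $\rhoMax-2$, not $\rhoMax$ and $\rhoMax-1$; this relabeling does not affect the argument, since only the fact that the difference is a nonzero polynomial is used.
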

\begin{claimproof}
  We have to show properties
  \ref{item:candidate:rho0sig0ge1}, \ref{item:candidate:item:rho0NEQsig0},
  and \ref{item:candidate:item:rhoiEQsigiEQ0} from \cref{def:candidates}.

  First, consider the case where $\rho$ is finite.
  Then, $\ext_\CZ(\rho_0)/f_0^x$ is a polynomial in $x$ of degree $\rhoMax$, whereas $\ext_\CZ(\sigma_0)/f_0^x$ is a polynomial of degree $\rhoMax-1$.
  Since $\rhoMax\ge 1$ by the fact that $(\sigma, \rho)$ is non-trivial, it follows that these polynomials are not constant $0$.
  Hence, there is a value of $x$ that depends only on $\rho$ for which $\ext_\CZ(\rho_0)\neq \ext_\CZ(\sigma_0)$ and $\ext_\CZ(\rho_0), \ext_\CZ(\sigma_0)\ge 1$, as required.

  Second, suppose that $\rho$ is cofinite.
  Then, $\sum_{r\notin \rho} \binom{x}{r} (f_1/f_0)^r$ is a polynomial in $x$.
  Its degree is $\rhoMax-1$,
  i.e., the largest non-negative integer that is missing from $\rho$.
  Since $\rho\neq \ZZ_{\ge0}$, we have $\rhoMax\ge 1$
  and the polynomial is not constant $0$
  (but possibly a constant if $\rhoMax=1$).
  Similarly, $\sum_{r\notin \rho,r\ge 1} \binom{x}{r-1} (f_1/f_0)^{r-1}$
  is a polynomial in $x$ with degree $\rhoMax-2$.
  In this case, the sum might be empty (and thus, equal to $0$),
  but in any case the expressions for $\ext_\CZ(\rho_0)$
  and $\ext_\CZ(\sigma_0)$ are positive and distinct
  for all sufficiently large $x$.

  Finally, since $p$ has only a single neighbor in $Z$, there are no extensions in $\CZ$ for states $\sigma_i$ or $\rho_i$ if $i\ge 2$.
\end{claimproof}

Now, we are in one of three cases.
\begin{description}
  \item[Case 1:] $\ext_\CZ(\rho_1)=\ext_\CZ(\sigma_1)=0$.
  \item[Case 2:] $\ext_\CZ(\rho_1)=0$ and $\ext_\CZ(\sigma_1)\ge 1$.
  \item[Case 3:] $\ext_\CZ(\rho_1)\ge 1$.
\end{description}

In Case 1, the candidate $\CZ$ is a winner, the statement of \cref{lem:candidate} holds, and we are done.
In Case 2, $\CZ$ is a candidate, but not a strong candidate.
In this case, we define another gadget $\CZ^*$ (a modification of $\CZ$), and show that
$\CZ^*$ satisfies the constraint from Case~3. As a consequence, we then have to deal only with Case 3.
In this third case, the candidate $\CZ$ has the property from \cref{item:strongagg0}, and it remains to show that $\ext_\CZ(\rho_0)\neq \ext_\CZ(\sigma_0)+\ext_\CZ(\sigma_1)$ in order for $\CZ$ to be a strong candidate.
We start by defining $\CZ^*$, and subsequently show how to handle Case 3.

Recall that $u$ is the unique neighbor of the portal $p$ in $\CZ$.
Let $\CZ'=(Z',\{p'\})$ be another copy of $\CZ$.
Then, we attach, to the vertex $u$ in $\CZ$, the gadget $\CZ'$
by identifying $u$ and $p'$.
This forms the new gadget $\CZ^*=(Z^*, \{p\})$.
See \cref{fig:candidate:second} for an illustration of this construction.

\begin{claim}\label{clm:CZstar}
  Suppose that $\ext_\CZ(\rho_1)=0$ and $\ext_\CZ(\sigma_1)\ge 1$. Then, $\CZ^*$ is a candidate with $\ext_{\CZ^*}(\rho_1)\ge 1$. Moreover,
  $\ext_{\CZ^*}(\rho_0)=\ext_{\CZ}(\rho_0)^2$ and $\ext_{\CZ^*}(\sigma_0)=\ext_{\CZ}(\sigma_0)\cdot \ext_{\CZ}(\rho_0)$.
\end{claim}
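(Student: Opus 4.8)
\textbf{Proof plan for Claim \ref{clm:CZstar}.}

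The plan is to analyze the extensions of $\CZ^*$ by tracking how a partial solution restricts to the two copies $\CZ$ and $\CZ'$ of the original gadget, which share the single vertex $u=p'$. Recall that $u$ is the unique neighbor of the portal $p$ in $\CZ$, and in $\CZ^*$ the vertex $u$ additionally plays the role of the portal $p'$ of the attached copy $\CZ'$. So a partial solution $S$ of $\CZ^*$ decomposes as $S\cap V(Z)$ together with $S\cap V(Z')$, and these agree only on the shared vertex $u$. The key point is that the selection status of $u$, and the number of its selected neighbors \emph{inside $Z'$}, is exactly the ``portal state'' seen by $\CZ'$, whereas within $\CZ$ the vertex $u$ is a non-portal vertex that must have a feasible number of selected neighbors overall (counting neighbors from both $Z$ and $Z'$).

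First I would compute $\ext_{\CZ^*}(\rho_0)$ and $\ext_{\CZ^*}(\sigma_0)$. In both cases the unique neighbor $u$ of $p$ satisfies $u\notin S$ (since $p$ has no selected neighbors), so within $\CZ$ the vertex $u$ is unselected, and within $\CZ'$ the portal $p'=u$ is also unselected. Now I would use the hypothesis $\ext_\CZ(\rho_1)=0$: this says that $\CZ$ has \emph{no} partial solution in which its portal is unselected with exactly one selected neighbor; combined with $\ext_\CZ(\rho_i)=0$ for $i\ge 2$ (property \ref{item:candidate:item:rhoiEQsigiEQ0} of the candidate $\CZ$), the only way $u$ (viewed as the portal of $\CZ'$) can be unselected is with \emph{zero} selected neighbors inside $Z'$. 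Hence $u$ receives no selected neighbor from the $\CZ'$-side, and so the number of feasible selections of $Z$ extending the state of $p$ is exactly $\ext_\CZ(\rho_0)$ when $p\notin S$ and $\ext_\CZ(\sigma_0)$ when $p\in S$ (in the latter case $u$ does get one selected neighbor, namely $p$, which is already accounted for in $\ext_\CZ(\sigma_0)$). Meanwhile the $\CZ'$-side contributes a factor of $\ext_\CZ(\rho_0)$ in either case. This yields $\ext_{\CZ^*}(\rho_0)=\ext_\CZ(\rho_0)^2$ and $\ext_{\CZ^*}(\sigma_0)=\ext_\CZ(\sigma_0)\cdot\ext_\CZ(\rho_0)$, as claimed. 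For $\ext_{\CZ^*}(\rho_1)\ge 1$: take a partial solution of $Z$ witnessing $\sigma_1$ for $p$ (which exists since $\ext_\CZ(\sigma_1)\ge 1$) — here $u=p$ is... wait, rather: use $\ext_\CZ(\sigma_1)\ge 1$ to select $u$ into the $\CZ'$-side so that $p'=u$ is selected with one selected neighbor inside $Z'$; simultaneously take any solution of $Z$ in which $u$ is \emph{selected}. Since $\ext_\CZ(\sigma_1)\ge1$, there is a partial solution of $Z$ with $p\in S$ and one selected neighbor of $p$; identifying that neighbor appropriately shows $u$ can be selected on the $Z$-side while contributing the one required selected neighbor to $p$. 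Combining gives $p\notin S$ with exactly one selected neighbor $u$, i.e.\ $\rho_1$ is witnessed, so $\ext_{\CZ^*}(\rho_1)\ge1$.

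Then I would verify that $\CZ^*$ is a candidate, i.e.\ check properties \ref{item:candidate:rho0sig0ge1}, \ref{item:candidate:item:rho0NEQsig0}, and \ref{item:candidate:item:rhoiEQsigiEQ0}. Property \ref{item:candidate:item:rhoiEQsigiEQ0} is immediate since $p$ still has only the single neighbor $u$ in $Z^*$, so there are no extensions of $\sigma_i$ or $\rho_i$ for $i\ge 2$. Property \ref{item:candidate:rho0sig0ge1} follows from $\ext_\CZ(\rho_0),\ext_\CZ(\sigma_0)\ge1$ (which $\CZ$ has as a candidate) via the product formulas just derived. For property \ref{item:candidate:item:rho0NEQsig0}, note $\ext_{\CZ^*}(\rho_0)=\ext_\CZ(\rho_0)^2$ and $\ext_{\CZ^*}(\sigma_0)=\ext_\CZ(\sigma_0)\ext_\CZ(\rho_0)$ differ because $\ext_\CZ(\rho_0)\neq\ext_\CZ(\sigma_0)$ (property \ref{item:candidate:item:rho0NEQsig0} for $\CZ$) and $\ext_\CZ(\rho_0)\neq 0$, so dividing by $\ext_\CZ(\rho_0)$ preserves the inequality. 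The main obstacle I anticipate is the careful bookkeeping in the first step — precisely justifying that the shared vertex $u$, being forced unselected and (by $\ext_\CZ(\rho_1)=0$) receiving no selected neighbor from the $Z'$-side, makes the two sides' contributions genuinely multiply without over- or under-counting solutions; this requires being meticulous about which selected neighbors of $u$ lie in $Z$ versus $Z'$ and confirming the feasibility condition on $u$ is exactly captured by the portal-state of $\CZ'$. Everything else is routine arithmetic with the candidate properties.
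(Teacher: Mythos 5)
Your proposal is correct and follows essentially the same route as the paper's proof: you decompose partial solutions of $\CZ^*$ along the shared vertex $u=p'$, use $\ext_\CZ(\rho_1)=0$ together with $\ext_\CZ(\rho_i)=0$ for $i\ge 2$ to force the $\CZ'$-side into state $\rho_0$ whenever $u$ is unselected (yielding the two product formulas), and you witness $\rho_1$ exactly as the paper does, by pairing the $\sigma_1$-witness of $\CZ'$ with the $\sigma_1$-witness of $\CZ$ minus $p$, so that the lost selected neighbor $p$ of $u$ is compensated by the one selected neighbor of $u$ inside $Z'$. The only thing to tighten in a final write-up is to state that compensation for $u$'s $\sigma$-constraint explicitly (the paper's ``key observation''), which you already flag as the bookkeeping point to verify.
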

\begin{claimproof}
  As $\CZ$ is a candidate with $\ext_\CZ(\rho_1)=0$, we have $\ext_\CZ(\rho_i)=0$ for $i\ge 1$.
  Note that $\ext_{\CZ^*}(\rho_0)=\ext_{\CZ}(\rho_0)^2$ because if $u$ is unselected, then $\CZ'$ has to be in state $\rho_0$ as $\CZ'$ is a copy of $\CZ$ and $\ext_\CZ(\rho_i)=0$ for $i\ge 1$. This does not affect the number of selected neighbors of $u$, so every partial solution of $\CZ$ that witnesses $\rho_0$ can be extended by every such partial solution of the copy $\CZ'$ to a partial solution of $\CZ^*$ that witnesses $\rho_0$.
  Analogously, $\ext_{\CZ^*}(\sigma_0)=\ext_{\CZ}(\sigma_0)\cdot \ext_{\CZ}(\rho_0)$.

  Thus, we still have $\ext_{\CZ^*}(\rho_0)\neq \ext_{\CZ^*}(\sigma_0)$ and $\ext_{\CZ^*}(\rho_0), \ext_{\CZ^*}(\sigma_0)\ge 1$.
  Also, as before, $p$ has only a single neighbor in $Z^*$, and consequently, there are no extensions for states $\sigma_i$ or $\rho_i$ if $i\ge 2$.

  We now verify that $\ext_{\CZ^*}(\rho_1)\ge 1$.
  Since $\ext_\CZ(\sigma_1)\ge 1$, there is a partial solution $S$ of $\CZ$
  that witnesses $\sigma_1$.
  Let $S'$ be a copy of this set for the gadget $\CZ'$.
  Then, $S^*\deff S'\cup S\setminus\{p\}$ is a partial solution of $\CZ^*$
  that witnesses $\rho_1$.
  The key observation is that $u$ has the same number of selected neighbors
  as in the partial solution $S$
  (where now instead of $p$, it has a single selected neighbor in $\CZ'$).
\end{claimproof}

In order to complete the proof of \cref{lem:candidate} for Cases 2 and 3, we set
\[
\CG\deff
\begin{cases}
	\CZ^* & \text{if $\ext_\CZ(\rho_1)=0$ and $\ext_\CZ(\sigma_1)\ge 1$.}\\
  	\CZ &\text{if $\ext_\CZ(\rho_1)\ge 1$.}
\end{cases}
\]
As a consequence, we have that $\CG$ is a candidate
with the properties from Case~3, i.e., $\ext_{\CG}(\rho_1)\ge 1$.
It remains to show that
$\ext_\CG(\rho_0)\neq \ext_\CG(\sigma_0)+\ext_\CG(\sigma_1)$
(\cref{item:strongagg}) for $\CG$ to be a strong candidate.

To this end, we consider two cases
depending on whether $\rho$ is cofinite or finite.
\begin{description}
\item [\boldmath $\rho$ is cofinite.]
In this case, we can directly obtain a strong candidate from $\CG$ as follows.
\begin{claim}
  If $\rho$ is cofinite, then there is an integer $x_0$ such that, for all $x \ge x_0$, $\CG$ is a strong candidate.
\end{claim}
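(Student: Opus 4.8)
Since, by the discussion preceding the claim, $\CG$ is already a candidate with $\ext_\CG(\rho_1)\ge 1$, the only property of a strong candidate left to establish is \ref{item:strongagg}, that is, $\ext_\CG(\rho_0)\neq\ext_\CG(\sigma_0)+\ext_\CG(\sigma_1)$. The plan is to start from the cofinite branch of the formulas for $\ext_\CZ(\rho_0)$ and $\ext_\CZ(\sigma_0)$ displayed above. Writing $F\deff\NN\setminus\rho$, the hypotheses ($\rho$ cofinite and $\rho\neq\NN$) give that $F$ is finite and non-empty with $\max F=\rhoMax-1$, and recall $f_0,f_1\ge 1$ by \cref{clm:extensions1}. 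For all sufficiently large $x$ (so that $F$ and $\{r\ge 0 : r+1\in F\}$ are contained in $\{0,\dots,x\}$) the formulas become
\[
  \ext_\CZ(\rho_0)=(f_0+f_1)^x-f_0^x\,P_1(x),\qquad
  \ext_\CZ(\sigma_0)=(f_0+f_1)^x-f_0^x\,P_2(x),
\]
where $P_1(x)=\sum_{r\in F}\binom{x}{r}(f_1/f_0)^r$ and $P_2(x)=\sum_{r\ge 0,\,r+1\in F}\binom{x}{r}(f_1/f_0)^r$ are polynomials in $x$.

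The heart of the argument is a degree comparison. All coefficients of $P_1$ are positive, so $P_1$ has degree exactly $\rhoMax-1$ with positive leading coefficient, whereas $\{r\ge 0 : r+1\in F\}$ has maximum at most $\max F-1=\rhoMax-2$, so $\deg P_2\le\rhoMax-2$ (and $P_2\equiv 0$ if $\rhoMax=1$). Hence $Q\deff P_1-P_2$ has degree exactly $\rhoMax-1\ge 0$ (using $\rho\neq\NN$) with positive leading coefficient, so $f_0^xQ(x)\to+\infty$. Fixing $x_0$ large enough, and at least the threshold from \cref{clm:ZisCandidate} that already makes $\CZ$ a candidate, I would conclude $\ext_\CZ(\rho_0)-\ext_\CZ(\sigma_0)=-f_0^xQ(x)<0$ for all $x\ge x_0$. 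It then remains only to absorb the $\sigma_1$-term, and crucially no upper bound on it is needed: if $\CG=\CZ$, then $\ext_\CZ(\rho_0)-\ext_\CZ(\sigma_0)-\ext_\CZ(\sigma_1)\le\ext_\CZ(\rho_0)-\ext_\CZ(\sigma_0)<0$ because $\ext_\CZ(\sigma_1)\ge 0$; and if $\CG=\CZ^{*}$, then \cref{clm:CZstar} gives $\ext_{\CZ^{*}}(\rho_0)-\ext_{\CZ^{*}}(\sigma_0)=\ext_\CZ(\rho_0)\bigl(\ext_\CZ(\rho_0)-\ext_\CZ(\sigma_0)\bigr)<0$ for $x\ge x_0$ (as $\ext_\CZ(\rho_0)\ge 1$), and subtracting the non-negative $\ext_{\CZ^{*}}(\sigma_1)$ keeps the value strictly negative. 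Either way $\ext_\CG(\rho_0)\neq\ext_\CG(\sigma_0)+\ext_\CG(\sigma_1)$, so $\CG$ is a strong candidate for every $x\ge x_0$.

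The only genuinely delicate point should be the degree bookkeeping for $P_1$ and $P_2$: one must check that the top-degree term of $P_1$ really survives (immediate, since $P_1$ has no negative coefficients) and that $P_2$ has strictly smaller degree, which reduces to the elementary fact $\max\{r\ge 0 : r+1\in F\}\le\max F-1$. Everything else is a one-line asymptotic comparison. It is worth emphasising that, unlike the finite-$\rho$ case treated next, here we need nothing about $\ext_\CG(\sigma_1)$ beyond non-negativity: since it enters \ref{item:strongagg} with a minus sign, it can only push the already-negative gap $\ext_\CG(\rho_0)-\ext_\CG(\sigma_0)$ further below zero, so no separate argument isolating its contribution is required.
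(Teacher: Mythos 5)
Your proof is correct and follows essentially the same route as the paper: both use the cofinite expressions for $\ext_\CZ(\rho_0)$ and $\ext_\CZ(\sigma_0)$ to conclude $\ext_\CZ(\rho_0)<\ext_\CZ(\sigma_0)$ for all sufficiently large $x$, transfer this to $\CZ^*$ via \cref{clm:CZstar}, and absorb the $\sigma_1$-term using only its non-negativity; your contribution is merely to spell out the degree comparison that the paper leaves implicit behind ``using that $x$ is sufficiently large''. One cosmetic slip: the monomial coefficients of $\binom{x}{r}$ (and hence of $P_1$) are not all positive for $r\ge 2$, but your conclusion stands because the unique top-degree summand $r=\rhoMax-1$ has positive leading coefficient and strictly dominates, so $Q=P_1-P_2$ is eventually positive as claimed.
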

\begin{claimproof}
  It remains to show that $\ext_\CG(\rho_0)\neq \ext_\CG(\sigma_0)+\ext_\CG(\sigma_1)$.
  For cofinite $\rho$, recall that,
  for the candidate $\CZ$, we have that
  \begin{align*}
    \ext_\CZ(\rho_0)= &(f_0+f_1)^x
      -f_0^x \cdot\sum_{r\notin \rho, r\ge 0}
        \binom{x}{r} \left(\frac{f_1}{f_0}\right)^r
      \\
    \intertext{and}
    \ext_\CZ(\sigma_0)= & (f_0+f_1)^x
      - f_0^x \cdot\sum_{r\notin \rho,r\ge 1}
        \binom{x}{r-1} \left(\frac{f_1}{f_0}\right)^{r-1}
      .
  \end{align*}
  This shows that $\ext_\CZ(\rho_0)< \ext_\CZ(\sigma_0)$ (using that $x$ is sufficiently large), and consequently, $\ext_\CZ(\rho_0)\neq \ext_\CZ(\sigma_0) +\ext_\CZ(\sigma_1)$.
  This concludes the proof if $\CG=\CZ$.

  For the case when $\CG = \CZ^*$,
  we get from \cref{clm:CZstar} that $\ext_{\CG}(\rho_0)=\ext_{\CZ}(\rho_0)^2$
  and $\ext_{\CG}(\sigma_0)=\ext_{\CZ}(\sigma_0)\cdot \ext_{\CZ}(\rho_0)$.
  Combined with the previous observations, this implies
  $\ext_{\CG}(\rho_0) < \ext_{\CG}(\sigma_0)$.
\end{claimproof}

\item[\boldmath $\rho$ is finite.]

Now, suppose that $\rho$ is finite. We fix $x$ in the definition of $\CG$ (i.e., in the definition of $\CZ$ or $\CZ^*$, respectively) to be sufficiently large for \cref{clm:ZisCandidate} to hold. We can assume that $\ext_\CG(\rho_0)= \ext_\CG(\sigma_0) +\ext_\CG(\sigma_1)$, as otherwise $\CG$ is a strong candidate and we are done.
We make one last modification and define $\CG'=(G',\{q\})$ as follows.

For a yet to be determined value $y$, let $(G_1,\{p_1\}),\ldots, (G_y,\{p_y\})$ be $y$ copies of the gadget $\CG$. We use the same trick as before. We obtain a graph $G'$ by identifying the vertices $p_1,\ldots, p_y$ to a single vertex $p$, which is then adjacent to the portal $q$ of $\CG'=(G',\{q\})$.
In order to complete the proof of \cref{lem:candidate}, we show the following claim.
\begin{claim}
  If $\rho$ is finite and $\ext_\CG(\rho_0)= \ext_\CG(\sigma_0) +\ext_\CG(\sigma_1)$, then there is an integer $y_0$ such that, for all $y\ge y_0$, $\CG'$ is a strong candidate.
\end{claim}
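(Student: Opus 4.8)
The plan is to compute the four relevant extension counts of $\CG'$ in terms of those of $\CG$, and then verify all the defining properties of a strong candidate (\cref{def:candidates}) for all sufficiently large $y$.

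Set $F_0\deff\ext_\CG(\rho_0)$, $F_1\deff\ext_\CG(\rho_1)$, $G_0\deff\ext_\CG(\sigma_0)$, and $G_1\deff\ext_\CG(\sigma_1)$. Since $\CG$ satisfies the constraint from Case~3 we have $F_1\ge 1$, and since $\CG$ is a candidate we have $F_0,G_0\ge 1$ and $F_0\neq G_0$ by \cref{item:candidate:rho0sig0ge1,item:candidate:item:rho0NEQsig0}. Combining $F_0\neq G_0$ with the standing hypothesis $F_0=G_0+G_1$ of the claim yields $G_1\ge 1$; this observation will be crucial. Recall also that $\rho$ is finite and $\rhoMax\ge 1$. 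First I would record the extension counts of $\CG'$. The portal $q$ has exactly one neighbour in $G'$ (namely $p$), so \cref{item:candidate:item:rhoiEQsigiEQ0} for $\CG'$ is immediate, and the state of $q$ forces $p$ to be unselected in states $\rho_0,\sigma_0$ and selected in states $\rho_1,\sigma_1$. When $p$ is unselected each of the $y$ copies of $\CG$ is in a state $\rho_0$ or $\rho_1$ (no higher $\rho$-states, by \cref{item:candidate:item:rhoiEQsigiEQ0} for $\CG$), contributing $0$ or $1$ selected neighbours to $p$; when $p$ is selected the copies are in states $\sigma_0$ or $\sigma_1$. Summing over how many copies are in the ``$1$''-state, and accounting for the extra selected neighbour a selected $q$ contributes, a routine binomial computation gives, for $y\ge\rhoMax$,
\begin{align*}
  \ext_{\CG'}(\rho_0) &= \sum_{r\in\rho}\binom{y}{r}F_1^rF_0^{y-r}, &
  \ext_{\CG'}(\sigma_0) &= \sum_{r\in\rho,\,r\ge1}\binom{y}{r-1}F_1^{r-1}F_0^{y-r+1},\\
  \ext_{\CG'}(\rho_1) &= \sum_{k\in\sigma,\,k\le y}\binom{y}{k}G_1^kG_0^{y-k}, &
  \ext_{\CG'}(\sigma_1) &= \sum_{k\ge0,\,k+1\in\sigma}\binom{y}{k}G_1^kG_0^{y-k}.
\end{align*}

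With these formulas the ``easy'' properties follow for large $y$: \cref{item:candidate:rho0sig0ge1} holds since $\ext_{\CG'}(\rho_0)\ge\binom{y}{\rhoMax}F_1^{\rhoMax}F_0^{y-\rhoMax}\ge1$ and $\ext_{\CG'}(\sigma_0)\ge\binom{y}{\rhoMax-1}F_1^{\rhoMax-1}F_0^{y-\rhoMax+1}\ge1$ (using $F_0,F_1\ge1$, $\rhoMax\ge1$); and \cref{item:strongagg0} holds since either $0\in\sigma$ (giving the term $G_0^y$) or the smallest element $k$ of $\sigma$ is $\ge1$, in which case $G_1\ge1$ makes the term $\binom{y}{k}G_1^kG_0^{y-k}$ positive.

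The main work — and the main obstacle — is \cref{item:candidate:item:rho0NEQsig0} and \cref{item:strongagg}, i.e.\ $\ext_{\CG'}(\rho_0)\neq\ext_{\CG'}(\sigma_0)$ and $\ext_{\CG'}(\rho_0)\neq\ext_{\CG'}(\sigma_0)+\ext_{\CG'}(\sigma_1)$. The idea is to divide through by $F_0^{\,y}=(G_0+G_1)^y$ (this equality is exactly where the hypothesis is used). Since $\binom{y}{r}$ is a polynomial in $y$ of degree exactly $r$ with positive leading coefficient, $\ext_{\CG'}(\rho_0)/F_0^{\,y}$ is a polynomial $P(y)$ of degree $\rhoMax$, and $\ext_{\CG'}(\sigma_0)/F_0^{\,y}$ is a polynomial $Q(y)$ of degree $\rhoMax-1$. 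For the $\sigma_1$-term I would use $G_1\ge1$ (so $G_0/F_0<1$): if $\sigma$ is finite, $\ext_{\CG'}(\sigma_1)/F_0^{\,y}$ is a bounded-degree polynomial in $y$ times $(G_0/F_0)^{y-O(1)}$, hence $\to0$; if $\sigma$ is cofinite, then $\sum_{k\ge0,\,k+1\notin\sigma}\binom{y}{k}G_1^kG_0^{y-k}$ is such a quantity, so by the binomial theorem $\ext_{\CG'}(\sigma_1)/F_0^{\,y}=1-o(1)$. Thus $\big(\ext_{\CG'}(\rho_0)-\ext_{\CG'}(\sigma_0)-\ext_{\CG'}(\sigma_1)\big)/F_0^{\,y}=P(y)-Q(y)-c+o(1)$ for a constant $c\in\{0,1\}$, where $P(y)-Q(y)-c$ is a polynomial in $y$ of degree $\rhoMax\ge1$ (the leading term coming from $P$ is not cancelled). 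Being non-constant, it is bounded away from $0$ in absolute value for large $y$, which overrides the $o(1)$ correction; hence the difference is nonzero. The same argument with the $\ext_{\CG'}(\sigma_1)$ term omitted gives \cref{item:candidate:item:rho0NEQsig0}. Taking $y_0$ to be the maximum of the finitely many thresholds produced above finishes the proof; note $y_0$, and hence the size of $\CG'$, depends only on $\sigma$ and $\rho$. The delicate point to get right is the cofinite-$\sigma$ case of the $\sigma_1$-term, where one must see that the ``all extensions'' contribution equals $(G_0+G_1)^y=F_0^{\,y}$ precisely and that the leftover is exponentially negligible relative to $y^{\rhoMax}$.
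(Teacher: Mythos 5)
Your proposal is correct and takes essentially the same route as the paper's proof: both derive $\ext_\CG(\sigma_1)\ge 1$ (hence $\ext_\CG(\sigma_0)<\ext_\CG(\rho_0)$) from the hypothesis, express the extension counts of $\CG'$ as binomial sums, normalize by $\ext_\CG(\rho_0)^y$, and conclude by comparing a degree-$\rhoMax$ polynomial against a degree-$(\rhoMax-1)$ polynomial plus a $\sigma_1$-contribution that is negligible (finite $\sigma$) or at most $1+o(1)$ (cofinite $\sigma$). The only cosmetic difference is that you bound the normalized difference, whereas the paper bounds the ratio $\bigl(\ext_{\CG'}(\sigma_0)+\ext_{\CG'}(\sigma_1)\bigr)/\ext_{\CG'}(\rho_0)<1$.
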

\begin{claimproof}
  Using that $\ext_{\CG}(\rho_0),\ext_{\CG}(\rho_1)\ge 1$, the proof that $\CG'$ is a candidate for sufficiently large $y$ is analogous to the proof that $\CZ$ is a candidate for sufficiently large $x$.

  Since $\CG$ is a candidate, we have $\ext_{\CG}(\sigma_0)\ge 1$ and $\ext_{\CG}(\rho_0)\neq \ext_{\CG}(\sigma_0)$. From the fact that $\ext_\CG(\rho_0)= \ext_\CG(\sigma_0) +\ext_\CG(\sigma_1)$, it then follows that $\ext_\CG(\sigma_1)\ge 1$.
  To shorten notation, let $g_0\deff \ext_{\CG}(\rho_0)$,
  $g_1\deff \ext_{\CG}(\rho_1)$,
  $h_0\deff \ext_{\CG}(\sigma_0)$, and
  $h_1\deff \ext_{\CG}(\sigma_1)$.
  From the properties of $\CG$, we have $g_0,g_1, h_0,h_1\ge 1$, $g_0\neq h_0$, and $g_0=h_0+h_1$.

  Then, as before, using the fact that $\rho$ is finite, we have
  \begin{align*}
    \ext_{\CG'}(\rho_0)&= g_0^y\cdot\sum_{r\in \rho}
      \binom{y}{r} \left(\frac{g_1}{g_0}\right)^r
    \quad \text{and} \\
    \ext_{\CG'}(\sigma_0)&= g_0^y \cdot \sum_{r \in \rho,r\ge 1}
      \binom{y}{r-1} \left(\frac{g_1}{g_0}\right)^{r-1}
    .
  \end{align*}

  First, suppose $\sigma$ is finite, then analogously
  \begin{align*}
    \ext_{\CG'}(\rho_1)&= h_0^y\cdot \sum_{s\in \sigma}
      \binom{y}{s} \left(\frac{h_1}{h_0}\right)^s
    \quad \text{and} \\
    \ext_{\CG'}(\sigma_1)&= h_0^y\cdot \sum_{s\in \sigma,s\ge 1}
      \binom{y}{s-1} \left(\frac{h_1}{h_0}\right)^{s-1}
    .
  \end{align*}
  So, $\ext_{\CG'}(\rho_1)\ge 1$ as $h_0,h_1\ge 1$, and the corresponding sum is not empty ($\CG'$ has property \cref{item:strongagg0}).
  Note that there are polynomials $p_1, p_2, p_3$ in $y$ such that
  $\ext_{\CG'}(\rho_0)      = g_0^y \cdot p_1(y)$,
  $\ext_{\CG'}(\sigma_0)    = g_0^y \cdot p_2(y)$,
  and $\ext_{\CG'}(\sigma_1)= h_0^y \cdot p_3(y)$,
  where $p_1(y) > p_2(y)+1$ (for sufficiently large $y$)
  since $\rho\neq \{0\}$ as $(\sigma, \rho)$ is non-trivial.
  Thus, using the fact that $g_0=h_0+h_1$ with $h_1\ge 1$,
  and consequently, $g_0>h_0$, for sufficiently large $y$ we have
  \[
    \frac{\ext_{\CG'}(\sigma_0)+\ext_{\CG'}(\sigma_1)}{\ext_{\CG'}(\rho_0)}
    =\frac{g_0^y\cdot p_2(y)+h_0^y\cdot p_3(y)}{g_0^y\cdot p_1(y)}
    =\frac{p_2(y)}{p_1(y)}+ \frac{h_0^y\cdot p_3(y)}{g_0^y\cdot p_1(y)}
    < 1.
  \]
  This shows that $\CG'$ is a strong candidate if $\sigma$ is finite.

  It remains to consider the case where $\sigma$ is cofinite. The expressions for $\ext_{\CG'}(\rho_0)$ and $\ext_{\CG'}(\sigma_0)$ are the same as in the previous case, but
  \begin{align*}
    \ext_{\CG'}(\rho_1) &= (h_0+h_1)^y
      - h_0^y\cdot \sum_{s\notin \sigma}
        \binom{y}{s} \left(\frac{h_1}{h_0}\right)^s
    \quad \text{and} \\
    \ext_{\CG'}(\sigma_1) &= (h_0+h_1)^y
      - h_0^y\cdot \sum_{s\notin \sigma,s\ge 1}
        \binom{y}{s-1} \left(\frac{h_1}{h_0}\right)^{s-1}
    .
  \end{align*}
  Clearly, $\ext_{\CG'}(\rho_1)\ge 1$ (since $y$ is sufficiently large).
  Moreover, $\ext_{\CG'}(\sigma_1) \le (h_0+h_1)^y=g_0^y$.
  Hence, for sufficiently large $y$,
  \[
    \frac{\ext_{\CG'}(\sigma_0)+\ext_{\CG'}(\sigma_1)}{\ext_{\CG'}(\rho_0)}
    \le\frac{g_0^y\cdot p_2(y)+g_0^y}{g_0^y\cdot p_1(y)}
    =\frac{p_2(y)+1}{p_1(y)}
    < 1.
  \]
  This shows that $\CG'$ is a strong candidate if $\sigma$ is cofinite.
\end{claimproof}
\end{description}
This concludes the construction of the gadget
which is either a winner or a strong candidate.
\end{proof}

\subsubsection*{Forcing a Single Unselected Vertex with no Selected Neighbors: Proof of \texorpdfstring{\cref{lem:forcesingle00}}{Lemma \ref{lem:forcesingle00}}}
\label{sec:forcesingle00}

We now prove \cref{lem:forcesingle00},
which we restate here for convenience.
\forcesinglezerozero*

\begin{proof}
Let $I=(G,\lambda)$ be an instance of \srCountDomSetRel[\CP+(\emptyset, \{0\})],
and let
$n=|V(G)|$. If $G$ contains no $(\emptyset, \{0\})$-vertex, then we can directly call the
\srCountDomSetRel[\CP]-oracle on $I$.
So, suppose that $p$ is a single $(\emptyset, \{0\})$-vertex
in $G$.

Note that $\CG=(G,\{p\}, \lambda)$
can be interpreted as a gadget for \srCountDomSetRel[\CP] with portal $p$.
Then, our goal is to compute $\ext_\CG(\rho_0)$,
as this corresponds to the number of partial solutions of $\CG$
for which $p$ is unselected and has no selected neighbors in $\CG$.
These are precisely the solutions of \srCountDomSetRel[\CP+(\emptyset, \{0\})]
on input $I$.

The proof outline is as follows. For an integer $x$, we use $x$ copies of the strong candidate $\CJ$ to define a gadget
$\CJ^{(x)}\deff(J^{(x)},\{p\})$.
The main idea is to combine the gadgets $\CG$ and $\CJ^{(x)}$
into an instance $I_x$ of \srCountDomSetRel[\CP].
Then, using the oracle for \srCountDomSetRel[\CP],
we can compute the number of solutions for $I_x$.
By our design of $\CJ^{(x)}$, we are then able to use polynomial interpolation over $x$
to recover the number of extensions for $\CG$ from these values.
In particular, we can recover the value of $\ext_{\CG}(\rho_0)$. As we have seen, this corresponds to the number of solutions for $I$.

\begin{figure}[t]
  \centering
  \includegraphics{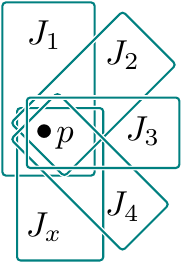}
  \caption{
    The construction of the gadget $\CJ^{(x)}$ for $x=5$
    from \cref{lem:forcesingle00}.
    This gadget is used to replace a single \((\emptyset,\{0\})\)-vertex $p$.
  }\label{fig:forcesingle00}
\end{figure}

\subparagraph*{\boldmath Definition of $I_x$.}
For a positive integer $x$,
we define the instance $I_x=(G_x,\lambda_x)$ of \srCountDomSetRel[\CP].
Intuitively, in $I_x$, the vertex $p$ is replaced by a $(\sigma, \rho)$-vertex
to which we attach $x$ copies of the strong candidate $\CJ$,
where $p$ acts as portal for each of them.
Note that since $\CJ$ is a gadget for \srCountDomSet,
it is also a gadget for \srCountDomSetRel[\CP]
that happens to use only $(\sigma, \rho)$-vertices.

More formally, for each $i\in \nset{x}$, let $(J_{i},\{v_{i}\})$ be a copy of $\CJ$.
The graph $G_x$ is obtained from $G$ by identifying all vertices in $\{p,v_1, \ldots,v_x\}$ to a single vertex.
Then, $\lambda_x$ is identical to $\lambda$ on the vertices in $V(G)\setminus \{p\}$, and maps the remaining vertices to $0$ (i.e., the vertices in $\bigcup_{i=1}^x V(J_i)$ are $(\sigma, \rho)$-vertices).
See \cref{fig:forcesingle00} for an illustration of the construction.

With this definition,
we let $J^{(x)}$ be the graph induced by the union of the $x$ copies of $\CJ$
that are attached to $p$ in $G_x$.
As mentioned above, $\CJ^{(x)} \deff (J^{(x)},\{p\})$
can be cast as a gadget for \srCountDomSetRel[\CP].
(Here, we dropped the $\lambda$-term since $\lambda_x$ is constant $0$ on $J^{(x)}$.)

We first analyze the pathwidth of $I_x$.
Since all copies of $\CJ$ are disjoint (except for the portal $p$),
we can, for each such copy of $\CJ$, duplicate one bag
from the original path decomposition of $G$,
and add the vertices of the corresponding copy of $\CJ$ to it.
If we create a fresh copy of the bag for each copy of $\CJ$,
then the pathwidth increases by at most the size of the strong candidate $\CJ$.
As the size of a strong candidate depends only on $\sigma$ and $\rho$,
it is bounded by a constant, and thus, the reduction is pathwidth-preserving.

\subparagraph*{Properties of the Gadgets and Graphs.}
To analyze the constructions and gadgets from above,
the usual definitions of $\rhoMax$ and $\sigMax$ are not convenient.
Instead, we use $r^*$ as the maximum element of $\rho$ if $\rho$ is finite,
and the maximum missing integer from $\rho$ if $\rho$ is cofinite.
Analogously, we define $s^*$ for $\sigma$.
In the following, we always assume that $x\ge \max\{s^*, r^*\}$.

Let $\#S(I_x)$ denote the number of solutions of \srCountDomSetRel[\CP]
on input $I_x$. We have
\begin{align}
  \#S(I_x) &= A_x + B_x
    \label{equ:SIx}
  \intertext{
  where $A_x$ denotes the number of solutions in $I_x$
  when $p$ is not selected, that is,
  }
  A_x &= \sum_{i+j \in \rho}
            \ext_\CG(\rho_i) \cdot \ext_{\CJ^{(x)}}(\rho_j)
  \intertext{
  and $B_x$ denotes the number of solutions in $I_x$ when $p$ is selected,
  that is,
  }
  B_x &= \sum_{i+j \in \sigma}
            \ext_\CG(\sigma_i) \cdot \ext_{\CJ^{(x)}}(\sigma_j)
  .
\end{align}

\subparagraph*{\boldmath Analyzing $A_x$.}
We first focus on the more detailed analysis of $A_x$,
which corresponds to the case when $p$ is not selected.
The results for $B_x$ then follow analogously.
By the definition of $\CJ^{(x)}$,
we get that $\ext_{\CJ^{(x)}}(\rho_r) = 0$
for all $r > x$.
By the definition of a strong candidate, we have $\ext_{\CJ}(\rho_j)\ge 1$ if and only if $j\in \{0,1\}$.
Thus, in order for $p$ to have state $\rho_r$ in $\CJ^{(x)}$, there are precisely $r$ copies of $\CJ$ in which $p$ has state $\rho_1$, and in the remaining $x-r$ copies it has state $\rho_0$.
Hence, for $r \in \fragment{0}{x}$,
\begin{align}
  \ext_{\CJ^{(x)}}(\rho_{r})
  &= \binom{x}{r} \ext_{\CJ}(\rho_{1})^r \ext_{\CJ}(\rho_{0})^{x-r}
  .\nonumber
  \intertext{
    For ease of notation, we rewrite this expression as
  }
  &= \ext_{\CJ}(\rho_{0})^{x-r^*} \cdot a_r(x)  \label{equ:extrho}
  \intertext{
    where
  }
  a_r(x) &\deff \binom{x}{r} \ext_{\CJ}(\rho_{1})^r \ext_{\CJ}(\rho_{0})^{r^*-r}
  \label{equ:aSUBrOFx}
\end{align}
Observe that $a_r(x)$ is a polynomial of degree $r$ in $x$
since $\ext_\CJ(\rho_0)$ and $\ext_\CJ(\rho_1)$ are positive integers
by our assumptions about the gadget $\CJ$.
Moreover, if $x\ge r$ and $r^*\ge r$, then $a_r(x)$ is a positive integer.

In the following, it is convenient to consider a state of $p$ in $\CG$, and then analyze the number of corresponding extensions to $\CJ^{(x)}$ that give a feasible selection for $I_x$ overall. Essentially, given that $p$ has precisely $i$ selected neighbors in $\CG$, the extensions to $\CJ^{(x)}$ have to complement this state of $p$ in $\CG$
such that $p$ has a feasible number of neighbors overall.
With
\begin{align}
  \alpha &\deff \ext_{\CJ}(\rho_1)+\ext_{\CJ}(\rho_0)
  \intertext{
    and the fact that
  }
  \alpha^x &= (\ext_{\CJ}(\rho_1)+\ext_{\CJ}(\rho_0))^x
            = \sum_{r=0}^x \ext_{\CJ^{(x)}}(\rho_{r}),
    \nonumber
  \intertext{
    we define
  }
  f_i(x) &\deff
    \begin{dcases}
      \sum_{i+r\in \rho} \ext_{\CJ^{(x)}}(\rho_{r})
        &\text{if $\rho$ is finite}\\
      \alpha^x -\sum_{i+r\notin \rho} \ext_{\CJ^{(x)}}(\rho_{r})
        &\text{if $\rho$ is cofinite}
    \end{dcases}
    \nonumber
  \intertext{
    Note that $f_i(x)$ is the number of extensions of $\CJ^{(x)}$ to a solution of $I_x$, given that $p$ is unselected and already has $i$ selected neighbors in $\CG$ (i.e., it has state $\rho_i$ in $\CG$).
    Using \cref{equ:extrho}, this equals
  }
  &=  \begin{dcases}
        \sum_{i+r\in \rho} \ext_{\CJ}(\rho_{0})^{x-r^*} \cdot a_r(x)
          &\text{if $\rho$ is finite}\\
        \alpha^x
          -\sum_{i+r\notin \rho} \ext_{\CJ}(\rho_{0})^{x-r^*} \cdot a_r(x)
          &\text{if $\rho$ is cofinite}
      \end{dcases}
      \nonumber
  \intertext{
    which we can simplify to
  }
  &=\begin{dcases}
      \ext_{\CJ}(\rho_{0})^{x-r^*} \cdot \sum_{i+r\in \rho}  a_r(x)
        &\text{if $\rho$ is finite}\\
      \alpha^x
        - \ext_{\CJ}(\rho_{0})^{x-r^*} \cdot \sum_{i+r\notin \rho} a_r(x)
        &\text{if $\rho$ is cofinite.}
    \end{dcases}
    \label{equ:extfix}
  \intertext{
    Now we revisit the definition of $A_x$
    and combine the previous observations.
    Let $k$ be the overall number of neighbors of $p$ in $G$.
    From the above arguments and the definition of $f_i(x)$, we get
  }
  A_x &= \sum_{i+j \in \rho}
            \ext_\CG(\rho_i) \cdot \ext_{\CJ^{(x)}}(\rho_j)
       = \sum_{i=0}^k \ext_\CG(\rho_i) \cdot f_i(x)
       .
  \intertext{
    Replacing $f_i(x)$ by \cref{equ:extfix} yields
  }
  &= \ext_{\CJ}(\rho_{0})^{x-r^*}\cdot
      \underbrace{
        \sum_{i=0}^k \ext_\CG(\rho_i)
          \cdot \sum_{i+r\in \rho} a_r(x)
      }_{a(x)}
      && \text{if $\rho$ is finite and}
      \label{equ:Axfin}
    \\
  &=
    \alpha^x\cdot \sum_{i=0}^k \ext_\CG(\rho_i) - \ext_{\CJ}(\rho_{0})^{x-r^*}\cdot
    \underbrace{
      \sum_{i=0}^k \ext_\CG(\rho_i)
      \cdot \sum_{i+r\notin \rho} a_r(x)
    }_{a(x)}
    && \text{if $\rho$ is cofinite.}
    \label{equ:Axcof}
\end{align}
In \cref{equ:Axfin} together with \cref{equ:Axcof},
we define an expression $a(x)$ depending on
whether $\rho$ is finite or cofinite.
We later use the crucial fact that $a(x)$ is a polynomial in $x$
because, for each $r$, $a_r(x)$ is a polynomial in $x$.

\subparagraph*{\boldmath Analyzing $B_x$.}
For the case when the vertex $p$ is selected,
we get the following results analogously to the previous observations.
Similarly to before, we have $\ext_{\CJ^{(x)}}(\sigma_\ell) = 0$
for all $\ell > x$, and, for $s \in \fragment{0}{x}$,
\begin{align}
  \ext_{\CJ^{(x)}}(\sigma_s)
  &= \ext_{\CJ}(\sigma_0)^{x-s^*} \cdot b_s(x)  \label{equ:extsig}
  \intertext{
    where
  }
  b_s(x) &\deff \binom{x}{s} \ext_{\CJ}(\sigma_1)^s \ext_{\CJ}(\sigma_0)^{s^*-s}
\end{align}
In contrast to $a_r(x)$,
$b_s(x)$ is not necessarily a positive integer
as the definition of the strong candidate $\CJ$ does not guarantee that $\ext_\CJ(\sigma_1)$ is non-zero.
However, $b_s(x)$ is still a polynomial in $x$.
Analogously, for
\begin{align}
  \beta &\deff \ext_{\CJ}(\sigma_1)+\ext_{\CJ}(\sigma_0),
  \intertext{
  we can rewrite $B_x$ as
  }
  &= \ext_{\CJ}(\sigma_{0})^{x-s^*}\cdot
    \underbrace{
      \sum_{i=0}^k \ext_\CG(\sigma_i) \cdot \sum_{i+s\in \sigma} b_s(x)
    }_{b(x)}
    && \text{if $\sigma$ is finite and}
    \label{equ:Bxfin}
    \\
  &= \beta^x\cdot \sum_{i=0}^k \ext_\CG(\sigma_i)
    - \ext_{\CJ}(\sigma_{0})^{x-s^*}\cdot
      \underbrace{
        \sum_{i=0}^k \ext_\CG(\sigma_i) \cdot \sum_{i+s\notin \sigma} b_s(x)
      }_{b(x)}
    && \text{if $\sigma$ is cofinite.}
    \label{equ:Bxcof}
\end{align}
Again, $b(x)$ is a polynomial in $x$
because each $b_s(x)$ is a polynomial in $x$.

\subparagraph*{Polynomial Interpolation.}
At this point,
let us recall that our goal is to compute $\ext_\CG(\rho_0)$ which equals
the number of solutions of \srCountDomSetRel[\CP+(\emptyset, \{0\})]
on input $I$.
We aim to use polynomial interpolation by using $A_x$
to obtain a polynomial in $x$ with $\ext_\CG(\rho_0)$ as a coefficient.
Given the different expressions for $A_x$ depending on whether $\sigma$ and $\rho$ are finite or cofinite, we split the proof into two cases at this point.
We start with the substantially easier case where both sets are finite.

\begin{description}
\item[\boldmath Both $\sigma$ and $\rho$ are finite.]
In order to do polynomial interpolation on $A_x$, we first show how to isolate the term $A_x$ from the value of $\#S(I_x)$, which we can compute by using an oracle call.
\begin{claim}\label{clm:recoverBx}
  For sufficiently large $x\in \Oh(n)$ and given $\#S(I_x)$, the term $A_x$ can be computed (in time polynomial in $n$).
\end{claim}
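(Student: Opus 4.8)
The plan is to isolate the term $A_x$ from the value $\#S(I_x) = A_x + B_x$ that an oracle call provides. Since both $\sigma$ and $\rho$ are finite, we are in the regime governed by \cref{equ:Axfin} and \cref{equ:Bxfin}, so $A_x = \ext_{\CJ}(\rho_0)^{x-r^*}\cdot a(x)$ and $B_x = \ext_{\CJ}(\sigma_0)^{x-s^*}\cdot b(x)$, where $a(x)$ and $b(x)$ are polynomials in $x$ of degree bounded by $\Oh(n)$ (the degree is at most $k + \max\{r^*, s^*\}$, and $k \le n$). The key structural fact we exploit is that $\CJ$ is a \emph{strong candidate}, so by \cref{def:candidates}\ref{item:candidate:item:rho0NEQsig0} we have $\ext_{\CJ}(\rho_0) \neq \ext_{\CJ}(\sigma_0)$, and both are positive integers by \ref{item:candidate:rho0sig0ge1}. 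Without loss of generality assume $\ext_{\CJ}(\rho_0) < \ext_{\CJ}(\sigma_0)$ (the reverse case is symmetric and recovers $B_x$ instead, which then also determines $A_x$ by subtraction). Then $B_x$ is divisible by $\ext_{\CJ}(\sigma_0)^{x-s^*}$, which, for $x$ a sufficiently large polynomial in $n$, exceeds the trivial bound $2^{n+\Oh(1)}\cdot \ext_{\CJ}(\rho_0)^{x-r^*}\cdot (\text{poly factor})$ on $A_x$. Hence computing $\#S(I_x) \bmod \ext_{\CJ}(\sigma_0)^{x-s^*}$ makes $B_x$ vanish and leaves exactly $A_x$ (one must also check $A_x$ is nonnegative, which it is, being a count). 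This proves the claim.

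Once $A_x$ is recovered for a range of values of $x$, I would divide by the known nonzero quantity $\ext_{\CJ}(\rho_0)^{x-r^*}$ to obtain the polynomial values $a(x)$, and then apply standard univariate polynomial interpolation over $\Oh(n)$ distinct evaluation points $x \in \{x_0, x_0+1, \dots, x_0 + \deg a\}$ (with $x_0$ chosen large enough for \cref{clm:ZisCandidate}, \cref{clm:recoverBx}, and $x_0 \ge \max\{s^*, r^*\}$ to hold simultaneously) to recover all coefficients of $a(x)$. From \cref{equ:Axfin}, $a(x) = \sum_{i=0}^k \ext_\CG(\rho_i)\cdot \sum_{i+r\in\rho} a_r(x)$, where each $a_r(x) = \binom{x}{r}\ext_{\CJ}(\rho_1)^r\ext_{\CJ}(\rho_0)^{r^*-r}$ is an explicitly known polynomial of degree exactly $r$ in $x$. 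The only term contributing degree $r^*$ to the inner sum $\sum_{i+r\in\rho} a_r(x)$ is the one with $r = r^* = \max\rho$, and this occurs precisely when $i = 0$ (since $i + r^* \in \rho$ forces $i \le 0$). Therefore the coefficient of $x^{r^*}$ in $a(x)$ is $\ext_\CG(\rho_0)$ times the known leading coefficient of $a_{r^*}(x)$, namely $\ext_{\CJ}(\rho_1)^{r^*}/r^*!$, which is nonzero. Dividing out this known constant yields $\ext_\CG(\rho_0)$, which is exactly the number of solutions of \srCountDomSetRel[\CP+(\emptyset,\{0\})] on input $I$.

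For the general reduction (beyond a single $(\emptyset,\{0\})$-vertex) the statement of \cref{lem:forcesingle00} only requires $(\emptyset,\{0\})$ to be $1$-bounded in the source problem, so handling one such vertex as above suffices; the whole construction runs in polynomial time since $x \in \Oh(n)$, each oracle call is on an instance of size polynomial in $n$ (the attached copies of $\CJ$ have constant size, depending only on $\sigma,\rho$), and the pathwidth increases only by the constant size of $\CJ$ as argued in the construction. I would then remark that the cofinite cases of $\sigma$ and/or $\rho$, which the excerpt sets up via \cref{equ:Axcof} and \cref{equ:Bxcof}, require the additional technique of handling $f(x) = c^x - p(x)$ expressions by forming $f(x+1) - c\cdot f(x)$; I expect the main obstacle there to be the case analysis when $\ext_{\CJ}(\rho_1)$ or $\ext_{\CJ}(\sigma_1)$ is large enough that $\alpha^x$ or $\beta^x$ dominates and one cannot simply read off $A_x$ modulo a power — one has to separately interpolate the exponential and polynomial parts and use the strong-candidate inequality \cref{item:strongagg} ($\ext_\CG(\rho_0) \neq \ext_\CG(\sigma_0)+\ext_\CG(\sigma_1)$) to disentangle them. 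But for the finite-finite case stated in \cref{clm:recoverBx}, the divisibility-modular argument above is the complete proof.
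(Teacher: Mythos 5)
Your proof of the claim is correct and is essentially the paper's argument: both rest on the same two facts, namely that the term built on the larger of $\ext_{\CJ}(\rho_0)$ and $\ext_{\CJ}(\sigma_0)$ is divisible by that value raised to the power $x-r^*$ (respectively $x-s^*$), while the other term is bounded by $2^n$ times a polynomial in $x$ and hence strictly smaller than this power once $x\in\Oh(n)$ is large enough. Recovering the smaller term as $\#S(I_x)\bmod \ext_{\CJ}(\sigma_0)^{x-s^*}$ is just the mirror image of the paper's computation of $\floor{\#S(I_x)/\ext_{\CJ}(\rho_{0})^{x-r^*}}\cdot \ext_{\CJ}(\rho_{0})^{x-r^*}$, and both write-ups handle the opposite ordering of the two extension counts symmetrically by recovering the other term and subtracting.
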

\begin{claimproof}
  From the definition of a candidate, we know that $\ext_{\CJ}(\sigma_{0})$ and $\ext_{\CJ}(\rho_{0})$ are distinct positive values. We show how to recover $A_x$ from $\#S(I_x)$ if $\ext_{\CJ}(\rho_{0})>\ext_{\CJ}(\sigma_{0})$. In the case $\ext_{\CJ}(\rho_{0})<\ext_{\CJ}(\sigma_{0})$, the term $B_x$ can be recovered analogously, and then $A_x= \#S(I_x)-B_x$.

  Plugging \cref{equ:Axfin,equ:Bxfin} into \cref{equ:SIx}, we obtain

  \begin{equation*}
    \#S(I_x)=
    \underbrace{
      \ext_{\CJ}(\rho_{0})^{x-r^*}\cdot
      \sum_{i=0}^k \ext_\CG(\rho_i)
      \cdot
      \sum_{i+r\in \rho} a_r(x)
    }_{= A_x}
    +
    \underbrace{
      \ext_{\CJ}(\sigma_{0})^{x-s^*}\cdot
      \sum_{i=0}^k \ext_\CG(\sigma_i)
      \cdot
      \sum_{i+s\in \sigma} b_s(x)
    }_{= B_x}
  \end{equation*}
  Note that the terms $b_s(x)$ are polynomials in $x$,
  and that each of the terms $\ext_\CG(\sigma_i)$ is upper bounded by $2^n$.
  Hence, the outer sum of $B_x$ can be bounded by $q(x) \cdot 2^n$,
  for some polynomial $q(x)$ in $x$.
  Thus, using the fact that $\ext_{\CJ}(\rho_{0})>\ext_{\CJ}(\sigma_{0})$, we can choose $x\in \Oh(n)$
  such that
  \[
    \frac{B_x}{\ext_{\CJ}(\rho_{0})^{x-r^*}}
    \le
    \frac{\ext_{\CJ}(\sigma_{0})^{x-s^*} q(x) \cdot 2^n}{\ext_{\CJ}(\rho_{0})^{x-r^*}}
    =
    \left(\frac{\ext_{\CJ}(\sigma_{0})}{\ext_{\CJ}(\rho_{0})}\right)^x\cdot \frac{\ext_{\CJ}(\rho_{0})^{r^*} q(x) \cdot 2^n}{\ext_{\CJ}(\sigma_{0})^{s^*}}
    < 1
  \]
  (using the fact that $\ext_{\CJ}(\rho_{0})>\ext_{\CJ}(\sigma_{0})$).
  By the definition of $A_x$,
  we observe that $A_x/\ext_{\CJ}(\rho_{0})^{x-r^*}$ is an integer.

  Therefore, $\floor{\#S(I_x)/\ext_{\CJ}(\rho_{0})^{x-r^*}}=A_x/\ext_{\CJ}(\rho_{0})^{x-r^*}$,
  and thus, $A_x$ can be recovered by computing $A_x= \floor{\#S(I_x)/\ext_{\CJ}(\rho_{0})^{x-r^*}}\cdot \ext_{\CJ}(\rho_{0})^{x-r^*}$.
\end{claimproof}

Recall that our ultimate goal is to compute $\ext_\CG(\rho_0)$
using calls to a \srCountDomSetRel[\CP]-oracle.
Using \cref{equ:Axfin} together with \cref{equ:aSUBrOFx}, we have
\[
  p(x) \deff \frac{A_x}{\ext_{\CJ}(\rho_{0})^{x-r^*}} =
      \sum_{i=0}^k \ext_\CG(\rho_i)
        \cdot \sum_{i+r\in \rho}
          \binom{x}{r} \ext_{\CJ}(\rho_{1})^r \ext_{\CJ}(\rho_{0})^{r^*-r}.
\]
Note that $p(x)$ is a polynomial in $x$ whose highest degree term is
$\ext_\CG(\rho_0) \ext_{\CJ}(\rho_{1})^{r^*} \cdot x^{r^*}$.
Using polynomial interpolation, we can recover the coefficients of $p(x)$ by evaluating the polynomial for $r^*+1$ distinct values of $x$.
This can be done since, according to \cref{clm:recoverBx},
we can compute $A_x$ using a \srCountDomSetRel[\CP]-oracle call
as long as $x$ is sufficiently large in $\Oh(n)$;
and we can compute $\ext_{\CJ}(\rho_{0})^{x-r^*}$
as $\ext_{\CJ}(\rho_{0})$ is a positive integer that does not depend on $n$.
This can be done in time polynomial in $x\in \Oh(n)$, i.e., in time polynomial in $n$.
Using the fact that $\ext_{\CJ}(\rho_{1})$ is non-zero by assumption of the lemma, we can recover $\ext_\CG(\rho_0)$ from the coefficient of the highest degree term $x^{r^*}$.

Summarizing, we have shown that, for finite $\sigma$ and $\rho$,
we can solve \srCountDomSetRel[\CP+(\emptyset, \{0\})] on instance $I$
using $r^*$ oracle calls to \srCountDomSetRel[\CP]
on instances of the form $I_x$.

\item[\boldmath One of $\sigma$ or $\rho$ is cofinite.]
In the cofinite case, there is an additional obstacle to computing $\ext_\CG(\rho_0)$ from $A_x$ by polynomial interpolation. If $\rho$ is cofinite, then $A_x$ contains the unwanted exponential expression $\alpha^x$, see \cref{equ:Axcof}.
We eliminate the leading exponential terms by considering
\begin{align}
  \psi(x)\deff& \#S(I_{x+2}) -(\alpha+\beta)\#S(I_{x+1}) + \alpha\beta\#S(I_{x})
  \nonumber \\
  =&
  \underbrace{
    A_{x+2}-(\alpha+\beta)A_{x+1}+ \alpha\beta A_x
  }_{\eqqcolon A'_x}
  +
  \underbrace{
    B_{x+2}-(\alpha+\beta)B_{x+1}+ \alpha\beta B_x
  }_{\eqqcolon B'_x}.
  \label{equ:SIxmodified}
\end{align}

So, we aim to recover $\ext_\CG(\rho_0)$ from $A'_x$, rather than from $A_x$ directly. The following claim establishes that we can isolate the term $A'_x$ from the value of $\psi(x)$, which we can compute using oracle calls.

\begin{claim}\label{clm:recoverAprimex}
  For sufficiently large $x\in \Oh(n)$ and given $\#S(I_{x+2})$, $\#S(I_{x+1})$, and $\#S(I_{x})$, the term $A'_x$ can be computed (in time polynomial in $n$).
\end{claim}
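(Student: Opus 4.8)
The plan is to follow the template already used in \cref{clm:recoverBx}, but applied to the second difference $\psi$ rather than to a single value: the coefficients $1,\alpha+\beta,\alpha\beta$ in $\psi$ have been chosen precisely so that \emph{both} exponential growth terms cancel simultaneously. First I would write out the closed forms established just above. In the case under consideration at least one of $\sigma,\rho$ is cofinite, so by \cref{equ:Axfin,equ:Axcof} we have $A_x = C_\rho\,\alpha^x - \ext_{\CJ}(\rho_0)^{x-r^*}\,a(x)$ when $\rho$ is cofinite and $A_x = \ext_{\CJ}(\rho_0)^{x-r^*}\,a(x)$ when $\rho$ is finite, where $C_\rho \deff \sum_{i=0}^k \ext_\CG(\rho_i)$, $\alpha = \ext_{\CJ}(\rho_1)+\ext_{\CJ}(\rho_0)$, and $a$ is a polynomial taking integer values on the non-negative integers; symmetrically, by \cref{equ:Bxfin,equ:Bxcof}, $B_x = C_\sigma\,\beta^x - \ext_{\CJ}(\sigma_0)^{x-s^*}\,b(x)$ or $B_x = \ext_{\CJ}(\sigma_0)^{x-s^*}\,b(x)$, with $\beta = \ext_{\CJ}(\sigma_1)+\ext_{\CJ}(\sigma_0)$ and $b$ an integer-valued polynomial. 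The key identity is that the linear recurrence operator $f \mapsto f(x+2) - (\alpha+\beta)f(x+1) + \alpha\beta f(x)$ has characteristic polynomial $t^2-(\alpha+\beta)t+\alpha\beta = (t-\alpha)(t-\beta)$, which vanishes at $t=\alpha$ and $t=\beta$; hence it annihilates $c\alpha^x$ and $c\beta^x$ for every constant $c$. Applying it to the formulas above (this is exactly what $\psi$ does, since $\psi(x)=A'_x+B'_x$ and $\#S(I_x)=A_x+B_x$ by \cref{equ:SIx}), the terms $C_\rho\alpha^x$ and $C_\sigma\beta^x$ disappear and, after factoring out the common power, we obtain
\[
A'_x = \pm\,\ext_{\CJ}(\rho_0)^{x-r^*}\,\widetilde a(x), \qquad B'_x = \pm\,\ext_{\CJ}(\sigma_0)^{x-s^*}\,\widetilde b(x),
\]
where $\widetilde a,\widetilde b$ are the integer-valued polynomials obtained by applying the same operator to $a,b$, and the sign is $+$ for the finite case and $-$ for the cofinite case. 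In particular $A'_x/\ext_{\CJ}(\rho_0)^{x-r^*}=\pm\widetilde a(x)$ is an integer for every integer $x$.

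Then I would isolate $A'_x$ from $\psi(x)$ exactly as in \cref{clm:recoverBx}. A strong candidate is in particular a candidate, so $\ext_{\CJ}(\rho_0)$ and $\ext_{\CJ}(\sigma_0)$ are distinct positive integers (\cref{def:candidates}); assume without loss of generality that $\ext_{\CJ}(\rho_0) > \ext_{\CJ}(\sigma_0)$, the other case being symmetric with the roles of $A'_x$ and $B'_x$ exchanged and $A'_x = \psi(x)-B'_x$ at the end. Since each $\ext_\CG(\sigma_i)$ is at most $2^n$ and $k\le n$, we have $|\widetilde b(x)|\le 2^n q(x)$ for a fixed polynomial $q$, so
\[
\frac{|B'_x|}{\ext_{\CJ}(\rho_0)^{x-r^*}} \le \Bigl(\frac{\ext_{\CJ}(\sigma_0)}{\ext_{\CJ}(\rho_0)}\Bigr)^{x}\cdot\frac{\ext_{\CJ}(\rho_0)^{r^*}}{\ext_{\CJ}(\sigma_0)^{s^*}}\cdot 2^n q(x) < \tfrac{1}{2}
\]
once $x$ exceeds a bound that is linear in $n$ (with a constant depending only on $\sigma,\rho$, because $\ext_{\CJ}(\sigma_0)/\ext_{\CJ}(\rho_0)$ is a constant in $(0,1)$). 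As $A'_x/\ext_{\CJ}(\rho_0)^{x-r^*}$ is an integer, $A'_x$ equals $\ext_{\CJ}(\rho_0)^{x-r^*}$ times the integer nearest to $\psi(x)/\ext_{\CJ}(\rho_0)^{x-r^*}$. Finally, $\psi(x)$ is a fixed integer combination of the three supplied values $\#S(I_{x+2}),\#S(I_{x+1}),\#S(I_x)$ with constant coefficients $1,\alpha+\beta,\alpha\beta$, and $\ext_{\CJ}(\rho_0),\ext_{\CJ}(\sigma_0),r^*,s^*$ depend only on $\sigma,\rho$; since $x\in\Oh(n)$, every integer involved has $\Oh(n)$ bits, so the whole computation runs in time polynomial in $n$.

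The main obstacle --- really the only nonroutine point --- is choosing the operator $\psi$ correctly: one must take exactly the second-order combination whose coefficients are the elementary symmetric functions of $\alpha$ and $\beta$, so that it kills the $\alpha^x$ term coming from $A_x$ and the $\beta^x$ term coming from $B_x$ at the same time (the first-difference trick $f(x+1)-c\,f(x)$ used in the one-sided cofinite cases would remove only one of the two). Once the exponentials are gone, the argument collapses to the flooring/rounding step of \cref{clm:recoverBx}; the only extra care is that $A'_x$ and $B'_x$ need not be non-negative, so one rounds to the nearest integer rather than taking a floor, and that the coefficients of $\widetilde a,\widetilde b$ can be as large as $2^{\Oh(n)}$, which is harmless because only $x=\Oh(n)$ is needed for the tail bound above.
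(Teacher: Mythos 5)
Your proposal is correct and follows essentially the same route as the paper: use that $t^2-(\alpha+\beta)t+\alpha\beta=(t-\alpha)(t-\beta)$ annihilates both exponential terms so that $A'_x$ and $B'_x$ reduce to $\pm\ext_{\CJ}(\rho_0)^{x-r^*}$ resp.\ $\pm\ext_{\CJ}(\sigma_0)^{x-s^*}$ times integer-valued polynomials, case-split on which of $\ext_{\CJ}(\rho_0),\ext_{\CJ}(\sigma_0)$ is larger, bound the smaller contribution by $2^n\cdot\mathrm{poly}(x)$ times the smaller base to the $x$, and exploit integrality of $A'_x/\ext_{\CJ}(\rho_0)^{x-r^*}$ to recover $A'_x$ for $x\in\Oh(n)$. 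Your rounding to the nearest integer (with the $<\tfrac12$ bound) instead of the paper's floor is only a minor refinement handling the possible sign of $B'_x$, not a different approach.
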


Before we proceed with the proof of the claim, to shorten notation, we set
\begin{align}
  a'(x)\deff&
    \ext_{\CJ}(\rho_{0})^{x+2-r^*}\cdot a(x+2)
    - (\alpha+\beta) \ext_{\CJ}(\rho_{0})^{x+1-r^*} \cdot a(x+1)
    + \alpha\beta \ext_{\CJ}(\rho_{0})^{x-r^*} \cdot a(x)
    \nonumber \\
  =& \ext_{\CJ}(\rho_{0})^{x-r^*} \cdot(
    \ext_{\CJ}(\rho_{0})^2 \cdot a(x+2)
    - (\alpha+\beta) \ext_{\CJ}(\rho_{0}) \cdot a(x+1)
    + a(x)
  )
  \label{equ:aprimex}
\end{align}
We use \cref{equ:Axfin,equ:Axcof} to expand the expression $A'_x$.
If $\rho$ is finite, then
\begin{align}
  A'_x &= a'(x).
  \intertext{
    If $\rho$ is cofinite, then
  }
  A'_x&= \left(\alpha^{x+2}-(\alpha+\beta)\alpha^{x+1}
  +\alpha\beta\alpha^{x}\right) \cdot \left(\sum_{i=0}^k \ext_\CG(\rho_i)\right)
  - a'(x) \nonumber\\
  &=-a'(x).
\end{align}

\begin{claimproof}[\ref{clm:recoverAprimex}]
  By the fact that $\CJ$ is a candidate,
  we know that $\ext_{\CJ}(\sigma_{0})$ and $\ext_{\CJ}(\rho_{0})$
  are distinct positive values.
  We show how to recover $A'_x$ from $\psi(x)$
  if $\ext_{\CJ}(\rho_{0})>\ext_{\CJ}(\sigma_{0})$.
  In the case $\ext_{\CJ}(\rho_{0})<\ext_{\CJ}(\sigma_{0})$,
  the term $B'_x$ can be recovered analogously, and then $A'_x= \psi(x)-B'_x$.

  Recall that $A'(x)=\pm a'(x)$,
  where the sign depends on whether $\rho$ is finite or cofinite.
  If, analogously to \cref{equ:aprimex}, we define
  \[
    b'(x) \deff \ext_{\CJ}(\sigma_{0})^{x+2-s^*}\cdot b(x+2)
      -(\alpha+\beta) \ext_{\CJ}(\sigma_{0})^{x+1-r^*} \cdot b(x+1)
      +\alpha\beta \ext_{\CJ}(\sigma_{0})^{x-r^*} \cdot b(x),
  \]
  then again $B'(x)=\pm b'(x)$,
  where the sign depends on whether $\sigma$ is finite or cofinite.
  Then, we recall that $b(x)$ is a polynomial in $x$, and that
  $\ext_\CG(\sigma_0)$, which contributes to the coefficients of $b(x)$,
  is upper bounded by $2^n$.
  Thus, we can choose $x\in \Oh(n)$ such that
  $B'_x/\ext_{\CJ}(\rho_{0})^{x-r^*}<1$, where we use the crucial fact
  that $\ext_{\CJ}(\rho_{0})>\ext_{\CJ}(\sigma_{0})$.
  Then, we observe that $A'_x/\ext_{\CJ}(\rho_{0})^{x-r^*}$
  is an integer as $a_r(x)$ is an integer for $r\le r^*$,
  and consequently, $a(x)$ is an integer.

  Therefore, $\floor{\psi(x)/\ext_{\CJ}(\rho_{0})^{x-r^*}}=A'_x/\ext_{\CJ}(\rho_{0})^{x-r^*}$,
  and $A'_x$ can be recovered by computing
  $A'_x= \floor{\psi(x)/\ext_{\CJ}(\rho_{0})^{x-r^*}}\cdot \ext_{\CJ}(\rho_{0})^{x-r^*}$.
\end{claimproof}

Recall that our ultimate goal is to compute $\ext_\CG(\rho_0)$
using calls to a \srCountDomSetRel[\CP]-oracle.
As noted previously, the terms $a(x)$ are polynomials in $x$.
Therefore, it is straightforward to verify that
$p(x)\deff A'_x/\ext_{\CJ}(\rho_{0})^{x-r^*}$ is also a polynomial in $x$
(whether $\rho$ is finite or cofinite).
With \cref{clm:recoverAprimex} in hand, we can compute $A'_x$,
and then compute $\ext_\CG(\rho_0)$ using standard interpolation.

\begin{claim}\label{clm:pxcoefficients}
  $\ext_\CG(\rho_0)$ can be computed from the coefficients of $p(x)$.
\end{claim}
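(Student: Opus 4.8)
The plan is to extract $\ext_\CG(\rho_0)$ --- which, by the construction of $I$, is precisely the number of solutions we must report --- as the coefficient of the top-degree monomial of $p(x)$, divided by a fixed nonzero rational that we can compute from the (constant-size) strong candidate $\CJ$. Since \cref{clm:recoverAprimex} computes $A'_x$, and hence $p(x)=A'_x/\ext_{\CJ}(\rho_0)^{x-r^*}$, in time polynomial in $n$ for all sufficiently large $x\in\Oh(n)$, and since $p$ has degree at most $r^*$, we may evaluate $p$ at $r^*+1$ such values of $x$ and recover all of its coefficients by interpolation; only the coefficient of $x^{r^*}$ will matter.

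First I would pin down the leading term of $a(x)$. Recall from \cref{equ:Axfin,equ:Axcof} that $a(x)=\sum_{i=0}^{k}\ext_\CG(\rho_i)\sum_{i+r\in\rho}a_r(x)$ if $\rho$ is finite, and the same with the inner sum taken over $i+r\notin\rho$ if $\rho$ is cofinite. In either case the membership condition forces $i+r\le r^*$, so $r\le r^*-i$; thus every $a_r(x)$ occurring for some $i\ge 1$ has degree $r<r^*$, while for $i=0$ the term $a_{r^*}(x)$ does occur (as $r^*\in\rho$ when $\rho$ is finite and $r^*\notin\rho$ when $\rho$ is cofinite). By \cref{equ:aSUBrOFx}, $a_{r^*}(x)=\binom{x}{r^*}\ext_{\CJ}(\rho_1)^{r^*}$, whose leading coefficient is $\ext_{\CJ}(\rho_1)^{r^*}/r^*!$. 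Hence $\deg a(x)\le r^*$, and the coefficient of $x^{r^*}$ in $a(x)$ equals $\ext_\CG(\rho_0)\cdot\ext_{\CJ}(\rho_1)^{r^*}/r^*!$, which is $0$ exactly when $\ext_\CG(\rho_0)=0$.

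By the definition of $a'(x)$ in \cref{equ:aprimex} together with $A'_x=\pm a'(x)$, we have $p(x)=\pm\bigl(\ext_{\CJ}(\rho_0)^2\,a(x+2)-(\alpha+\beta)\ext_{\CJ}(\rho_0)\,a(x+1)+\alpha\beta\,a(x)\bigr)$. Since shifting the argument leaves a polynomial's leading coefficient unchanged, the coefficient of $x^{r^*}$ in $p(x)$ is $\pm\bigl(\ext_{\CJ}(\rho_0)^2-(\alpha+\beta)\ext_{\CJ}(\rho_0)+\alpha\beta\bigr)\cdot\ext_\CG(\rho_0)\cdot\ext_{\CJ}(\rho_1)^{r^*}/r^*!=\pm(\ext_{\CJ}(\rho_0)-\alpha)(\ext_{\CJ}(\rho_0)-\beta)\cdot\ext_\CG(\rho_0)\cdot\ext_{\CJ}(\rho_1)^{r^*}/r^*!$. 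It remains to check this multiplier is nonzero: since $\alpha=\ext_{\CJ}(\rho_1)+\ext_{\CJ}(\rho_0)$ we get $\ext_{\CJ}(\rho_0)-\alpha=-\ext_{\CJ}(\rho_1)\ne 0$ by property \cref{item:strongagg0} of a strong candidate, and since $\beta=\ext_{\CJ}(\sigma_1)+\ext_{\CJ}(\sigma_0)$ we get $\ext_{\CJ}(\rho_0)-\beta=\ext_{\CJ}(\rho_0)-\ext_{\CJ}(\sigma_0)-\ext_{\CJ}(\sigma_1)\ne 0$ by property \cref{item:strongagg}; together with $\ext_{\CJ}(\rho_1)\ge 1$, the factor $\pm(\ext_{\CJ}(\rho_0)-\alpha)(\ext_{\CJ}(\rho_0)-\beta)\ext_{\CJ}(\rho_1)^{r^*}/r^*!$ is a nonzero rational determined by $\CJ$. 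Dividing the interpolated $x^{r^*}$-coefficient of $p(x)$ by this factor then yields $\ext_\CG(\rho_0)$ (and correctly returns $0$ when $\ext_\CG(\rho_0)=0$), which proves the claim. The only step needing care is the degree bookkeeping that isolates the $i=0$, $r=r^*$ contribution to the leading coefficient of $a(x)$, plus the nonvanishing check just described; everything else is routine manipulation.
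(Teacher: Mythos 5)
Your proposal is correct and follows essentially the same route as the paper's proof: isolate the coefficient of $x^{r^*}$ in $p(x)$, which equals $\pm(\ext_{\CJ}(\rho_0)-\alpha)(\ext_{\CJ}(\rho_0)-\beta)\cdot\ext_{\CJ}(\rho_1)^{r^*}\cdot\ext_\CG(\rho_0)$ (up to the harmless $1/r^*!$ factor you track explicitly), and use the strong-candidate properties \cref{item:strongagg0} and \cref{item:strongagg} to see that the multiplier is nonzero. Your extra degree bookkeeping isolating the $i=0$, $r=r^*$ contribution is just a more explicit version of the paper's argument, so no further changes are needed.
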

\begin{claimproof}[\ref{clm:pxcoefficients}]
  We are interested in the highest degree monomial of $p(x)$.
  By \cref{equ:aprimex}, we have
  \begin{align}
    p(x) = \frac{A'_x}{\ext_{\CJ}(\rho_{0})^{x-r^*}}
    = \pm \ext_{\CJ}(\rho_{0})^2 \cdot a(x+2)
      \mp (\alpha+\beta) \ext_{\CJ}(\rho_{0}) \cdot a(x+1)
      \pm a(x)
      \label{equ:pOFxExpanded}
  \end{align}
  Let us first investigate the polynomial $a(x)$. Recall that $a_r(x)$ is a polynomial of degree $r$. Using this fact, we observe that the highest degree monomial of $a(x)$ is
  $
    \ext_{\CG}(\rho_{0}) \ext_{\CJ}(\rho_{1})^{r^*} \cdot x^{r^*},
  $
  where $r^*$ has a different meaning depending on whether $\rho$ is finite or cofinite.

  Therefore, using \cref{equ:pOFxExpanded},
  the highest degree monomial of $p(x)$ is the same
  as the highest degree monomial of
  \begin{equation*}
    \ext_{\CG}(\rho_{0}) \ext_{\CJ}(\rho_{1})^{r^*}\cdot
    \bigl[
    \ext_{\CJ}(\rho_{0})^2(x+2)^{r^*}-
    (\alpha+\beta)\ext_{\CJ}(\rho_{0})(x+1)^{r^*}+
    \alpha\beta x^{r^*}
    \bigr],
  \end{equation*}
  which is
  \begin{equation*}
    \ext_{\CG}(\rho_{0}) \ext_{\CJ}(\rho_{1})^{r^*}\cdot
    \bigl[
    \ext_{\CJ}(\rho_{0})^2-
    (\alpha+\beta)\ext_{\CJ}(\rho_{0})+
    \alpha\beta
    \bigr]\cdot x^{r^*}.
  \end{equation*}

  So, the coefficient of $x^{r^*}$ in $p(x)$ is $c=\ext_{\CG}(\rho_{0}) \ext_{\CJ}(\rho_{1})^{r^*} \cdot c'$, where $c'= (\ext_{\CJ}(\rho_{0})-\alpha)(\ext_{\CJ}(\rho_{0})-\beta)$.
  Recall that $\CJ$ is a strong candidate, and therefore, we have
  \begin{itemize}
    \item $\ext_{\CJ}(\rho_{1})\ge 1$, and consequently
    \item $\ext_{\CJ}(\rho_{0})\neq \ext_{\CJ}(\rho_{0})+\ext_{\CJ}(\rho_{1})=\alpha$, and also
    \item $\ext_{\CJ}(\rho_{0})\neq
    \ext_{\CJ}(\sigma_{0})+\ext_{\CJ}(\sigma_{1})=\beta$
    by \cref{item:strongagg}.
  \end{itemize}
  This shows that $c'\neq 0$ and $\ext_{\CJ}(\rho_{1})^{r^*}\neq 0$.
  Consequently, by computing the coefficient $c$,
  the sought-after value $\ext_{\CG}(\rho_{0})$ can be computed as well.
  This finishes the proof of \cref{clm:pxcoefficients}.
\end{claimproof}

Finally, the coefficients of $p(x)$ can be computed by polynomial interpolation
by evaluating $p(x)$ for $r^*+1$ distinct values of $x$.
This can be done since, according to \cref{clm:recoverAprimex},
we can compute $A'_x$ using three \srCountDomSetRel[\CP]-oracle calls
as long as $x$ is sufficiently large in $\Oh(n)$;
and we can efficiently compute $\ext_{\CJ}(\rho_{0})^{x-r^*}$
as $\ext_{\CJ}(\rho_{0})$ does not depend on $n$.
So, computing the coefficients can be done in time polynomial in $x\in \Oh(n)$,
and from the coefficients, one can compute $\ext_\CG(\rho_0)$
according to \cref{clm:pxcoefficients}.

Summarizing, we have shown that we can solve
\srCountDomSetRel[\CP+(\emptyset, \{0\})] on instance $I$
using at most $3(r^*+1)$ (three per $A'_x$) oracle calls
to \srCountDomSetRel[\CP] on instances of the form $I_x$.
This finishes the proof of \cref{lem:forcesingle00}.
\qedhere
\end{description}
\end{proof}

\subsection{\boldmath The Case
\texorpdfstring{$\rho=\NN$}{Rho is Unrestricted} and \texorpdfstring{$\sigma$}{Sigma} is Finite}
\label{sec:rhoiseverything}

The goal of this section is to prove \cref{lem:rhoeverything},
which formally states that we can realize $\HWset{1}$ relations
if $\rho$ does not enforce any constraints, i.e., $\rho=\NN$,
but the set $\sigma$ is finite.
Recall that \cref{lem:IIIrelversion} shows a reduction
from \srCountDomSetRel[\HWsetGen{=1}] to \srCountDomSetRel[\HWsetGen{\ge1}].
Hence, it suffices to show the reduction
from the latter problem to \srCountDomSet.
We give this reduction in three steps,
\begin{itemize}
  \item
  starting with a reduction from \srCountDomSetRel[\HWsetGen{\ge1}]
  to \srCountDomSetRel[(\{0\},\NN)] which is given in \cref{lem:IV},
  \item
  then reducing to \srCountDomSetRel[(\sigma, \emptyset)]
  in \cref{lem:sigfin2},
  \item
  and finally the reduction to \srCountDomSet
  in \cref{lem:sigfin1}.
\end{itemize}

We start with the first step of this chain.

\begin{lemma}\label{lem:IV}
  Let $(\sigma,\rho)\in \allSetsne^2$ be non-trivial
  with $\rho=\NN$.
  Then, \srCountDomSetRel[\HWge{1}]~$\pwred$
  \srCountDomSetRel[(\{0\},\ZZ_{\ge0})].
\end{lemma}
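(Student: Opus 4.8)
The plan is to give a direct pathwidth-preserving reduction that simulates a $\HWge{1}$ relation by attaching a single new vertex of type $(\{0\},\NN)$. Let $I$ be an instance of $\srCountDomSetRel[\HWge{1}]$ with graph $G$ and relational constraints $\CC$, each enforcing a $\HWge{1}$ relation on its scope. Since $\rho=\NN$, every vertex of $G$ already satisfies its $\rho$-constraint automatically, so the only genuine constraints come from $\sigma$ on selected vertices and from the relations. I would construct an instance $I'=(G',\lambda')$ of $\srCountDomSetRel[(\{0\},\NN)]$ as follows: for each constraint $C\in\CC$ with scope $Z=\scope(C)$, introduce a fresh vertex $z'$ of type $(\{0\},\NN)$, make $z'$ adjacent to every vertex of $Z$, and remove $C$. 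All original vertices keep their type $(\sigma,\rho)=(\sigma,\NN)$.

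The correctness argument is the key step, and it is short. Consider any subset $S\subseteq V(G')$. The new vertex $z'$ has type $(\{0\},\NN)$: if $z'\in S$ it needs $|N(z')\cap S|\in\{0\}$, i.e.\ no neighbor of $z'$ is selected, meaning $Z\cap S=\emptyset$; if $z'\notin S$ the $\rho$-constraint $\NN$ is vacuous. Thus in any solution of $I'$, either $z'\notin S$ (no restriction on $Z$), or $z'\in S$ and $Z\cap S=\emptyset$. Since $z'$ is adjacent to nothing else, $z'$ can always be made to satisfy its own constraint in whichever way is forced by $Z\cap S$: if $Z\cap S=\emptyset$ then $z'$ may be selected or unselected (contributing a factor of $2$), and if $Z\cap S\neq\emptyset$ then $z'$ must be unselected (contributing a factor of $1$). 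That factor of $2$ for empty scopes is exactly what we must cancel, so I would instead attach \emph{two} such vertices $z'_1,z'_2$ and additionally add the edge $z'_1 z'_2$: then when $Z\cap S=\emptyset$ the pair $\{z'_1,z'_2\}$ forms an edge both of whose endpoints are $(\{0\},\NN)$-vertices, so at most one of them can be selected while satisfying its own $\sigma$-type constraint — wait, they are $(\{0\},\NN)$-vertices, so a selected one needs $0$ selected neighbors, forbidding the other from being selected; this still gives $3$ extensions (neither, only $z'_1$, only $z'_2$), not $1$. The clean fix is the standard one used elsewhere in the paper: attach a single $(\{0\},\NN)$-vertex $z'$ adjacent to $Z$ but \emph{force $z'$ to be selected} — however we cannot force selection at this stage. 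So the correct and simplest route is to observe that a single attached $(\{0\},\NN)$-vertex $z'$ realizes precisely the relation $\HWeq{0}$ on $Z$ when $z'$ is selected and imposes nothing when unselected; to convert this into $\HWge{1}$ I would instead attach $z'$ to the complement, i.e.\ I reduce first from $\srCountDomSetRel[\HWge{1}]$ to $\srCountDomSetRel[\HWle{|Z|-1}]$ by a parity/complement trick, or more directly I construct a small gadget on $Z$ using one $(\{0\},\NN)$-vertex per vertex of $Z$: for each $v\in Z$ add a private $(\{0\},\NN)$-vertex adjacent only to $v$, which forces $v$ unselected; this realizes $\HWeq{\emptyset}$, not what we want either.

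The genuinely correct construction, and the one I would commit to, is: for each constraint $C$ with scope $Z$, keep $Z$ as a clique (as it already is in the pathwidth definition) and add one new $(\{0\},\NN)$-vertex $z'$ adjacent to all of $Z$; then a selection restricted to $Z\cup\{z'\}$ is feasible for $z'$ iff either $z'\notin S$ (any $Z\cap S$), or $z'\in S$ and $Z\cap S=\emptyset$. Now run the whole construction and additionally, to isolate the solutions where \emph{every} scope is nonempty, use an interpolation argument exactly as in \cref{lem:IIrelversion,lem:IIIrelversion}: introduce a parameter $x$, attach $x$ independent copies of $z'$ to each scope $Z$, let $f_0$ (resp.\ $f_1$) be the number of feasible local extensions of the $x$ copies when $Z\cap S=\emptyset$ (resp.\ $Z\cap S\neq\emptyset$), note $f_0=2^x$ and $f_1=1^x=1$ since $\rho=\NN$ makes the unselected copies unconstrained and a selected copy forces $Z\cap S=\emptyset$; then $\#S(I'_x)$ is a polynomial in $(f_0/f_1)^x=2^x$ of degree $|\CC|$ whose evaluations at $|\CC|+1$ distinct positive $x$ recover the coefficient counting the fully-nonempty solutions, which is exactly $\#S(I)$. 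The pathwidth increases only by an additive constant since, for each scope, the $x$ extra vertices can be absorbed one at a time into duplicates of the bag containing the clique $Z$, and the arity is unchanged.

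The main obstacle is the one just navigated: a naive single-vertex attachment realizes $\HWeq{0}$-conditioned-on-selection rather than $\HWge{1}$, so the reduction is \emph{not} a straightforward gadget replacement and must go through an interpolation step to extract the desired coefficient, mirroring the technique of \cref{lem:IIrelversion,lem:IIIrelversion}; verifying that $f_0=2^x$ and $f_1=1$ (which hinges crucially on $\rho=\NN$) and that the resulting polynomial has the claimed degree is where the real content lies.
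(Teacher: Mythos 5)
Your construction is, after the initial detours, exactly the paper's: for each scope $Z$ you attach $x$ fresh $(\{0\},\NN)$-vertices fully connected to $Z$, observe (using $\rho=\NN$) that an empty scope contributes $2^x$ local extensions and a nonempty one exactly $1$, and recover from $\#S(I'_x)=\sum_i a_i\,2^{xi}$ the coefficient $a_0$, which counts the solutions of the original instance; this is correct, pathwidth-preserving for the same reason as in the paper, and the only (inessential) difference is that you extract $a_0$ by interpolating at $\abs{\CZ}+1$ values of $x$ whereas the paper makes a single oracle call with $x=n+1$ and reads off $a_0=\#S(I'_x)\bmod 2^x$. (Minor slip: the quantity $\#S(I'_x)$ is a polynomial in $2^x$ of degree $\abs{\CZ}$, not in $(f_0/f_1)^x$, since $f_0/f_1$ already equals $2^x$.)
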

\begin{proof}
  Let $I=(V,E,\CC)$ be an instance of \srCountDomSetRel[\HWge{1}].
  Let $\CZ=\{\scope(C) \mid C\in\CC\}$.
  $I'=(V,E)$ can be cast as an instance of \srCountDomSetRel[(\{0\},\ZZ_{\ge0})].
  Then the number of solutions of $I$
  is identical to the number of those solutions of $I'$
  that select at least one vertex from each $Z\in \CZ$.

  For a positive integer $x$,
  let $I'_x$ be the instance of \srCountDomSetRel[(\{0\},\ZZ_{\ge0})]
  obtained from $I'$ by attaching, to each set $Z\in \CZ$,
  a total of $x$ $(\{0\},\ZZ_{\ge0})$-vertices $v^Z_1,\ldots, v^Z_x$,
  each of which is completely connected to $Z$.

  Let $S$ be some selection of vertices from $V$. Note that, in order for $S$ to be extended to a solution of $I'_x$, it is required that all vertices in $V$ that are \emph{not} in some $Z\in \CZ$ already have a feasible  number of selected neighbors ($(\sigma,\rho)$-constraint) in $S$. We say that subsets $S$ with this property are \emph{good}.
  Suppose that in some good $S$, a set $Z\in \CZ$ is entirely unselected.
  Then, there are $2^x$ feasible extensions
  to the attached vertices $v^Z_1,\ldots, v^Z_x$
  (each of them can be selected or not).
  For this, we crucially exploit that $\rho=\NN$.
  On the contrary, if at least one vertex of $Z$ is selected,
  then, in a feasible extension of $S$ to a solution of $I'_x$,
  all of the vertices $v^Z_1,\ldots, v^Z_x$ have to be unselected
  because they are $(\{0\},\NN)$-vertices.

  Let $a_i$ be the number of good subsets of $V$ in which precisely $i$ of the sets in $\CZ$ are entirely unselected. Let $\#S(I'_x)$ denote the number of solutions of $I'_x$. Then,
  \begin{align*}
    \#S(I'_x)
    &=\sum_{i=0}^{\abs{\CZ}} a_i\cdot 2^{xi}.
  \end{align*}
  Let $n$ be the number of vertices in $I$.
  Then, for each $i$, we have $a_i\le 2^n$.
  Choosing $x=n+1$ gives us that $a_0< 2^x$.
  Hence, $a_0= (\#S(I'_x) \mod 2^x)$ can be computed
  by a single \srCountDomSetRel[(\{0\},\ZZ_{\ge0})]-oracle call.
  Note that $a_0$ is the number of good selections $S$
  such that no set in $\CZ$ is entirely unselected,
  that is, each set $Z\in\CZ$ contains at least one vertex from $S$.
  In this case, all of the attached vertices
  ($v^Z_1,\ldots, v^Z_x$ for each $Z\in \CZ$) have to be unselected,
  which implies that all vertices in $Z$
  obtain a feasible number of selected neighbors already from $S$.
  Therefore, the sets $S$ that contribute to $a_0$
  are precisely the $(\sigma,\rho)$-sets of $I$
  with the additional property that each set $Z$ in $\CZ$
  contains at least one selected vertex.
  This shows that $a_0$ is precisely the number of solutions of $I$.

  Observe that the size of the new instance might increase quadratically,
  as, for each \HWge{1}-relation (and there might be $\Oh(n)$ many),
  we add $\Oh(n)$ vertices to the instance.

  It remains to argue that the pathwidth does not change too much.
  For each clique $Z$, there is a bag $B$ containing $Z$.
  We duplicate this bag $B$ a total of $x\in \Oh(n)$ times,
  and each vertex $v^Z_i$ is added to exactly one of these copies.
  Hence, the pathwidth of $I'_x$ is at most that of $I$
  plus an additive constant.
\end{proof}

For the second step in this chain of reductions,
we require that $\sigma$ is finite.
We add a simple gadget to the vertices we want to remove
to transform the output of the previous lemma
into an instance of \srCountDomSetRel[(\sigma,\emptyset)],
that is, there are vertices which are forced to be selected.

\begin{lemma}\label{lem:sigfin2}
  \label{lem:sigfin3}
  Let $(\sigma,\rho)\in \allSetsne^2$ be non-trivial
  with $\rho= \ZZ_{\ge0}$ and finite $\sigma$.
  Then, \srCountDomSetRel[\{(\{0\},\NN)\}]~$\pwred$
  \srCountDomSetRel[\{(\sigma,\emptyset)\}].
\end{lemma}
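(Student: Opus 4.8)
The plan is to realize each $(\{0\},\NN)$-vertex by a constant-size \emph{shift-down gadget} built only from base $(\sigma,\NN)$-vertices and $(\sigma,\emptyset)$-vertices, exploiting that $\rho=\NN$ imposes no constraint on unselected vertices and that $\sigma$ is finite. First, if $\sigma=\{0\}$ then $(\{0\},\NN)$ is exactly the base pair $(\sigma,\rho)$ and the reduction is trivial, so I would assume $\sigMax\ge 1$. The key arithmetic fact is that, since $\sigMax=\max\sigma\in\sigma$, we have $\sigMax+k\in\sigma$ if and only if $k=0$ for all $k\ge 0$. Hence a $(\sigma,\NN)$-vertex $p$ that is \emph{guaranteed} to receive exactly $\sigMax$ selected neighbours from an attached gadget behaves, with respect to the rest of the graph, precisely as a $(\{0\},\NN)$-vertex: when $p$ is selected it can tolerate no further (``external'') selected neighbours, and when it is unselected it is unconstrained. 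So, given an instance $I$ of \srCountDomSetRel[\{(\{0\},\NN)\}], I would replace every $(\{0\},\NN)$-vertex $v$ by a fresh $(\sigma,\NN)$-vertex $p$ inheriting all edges of $v$, and attach to $p$ the gadget $\CH_v$ described next, obtaining an instance $I'$ of \srCountDomSetRel[\{(\sigma,\emptyset)\}]; one oracle call to $I'$ then suffices.

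The gadget $\CH_v$ depends on whether $\sigMax-1\in\sigma$. If $\sigMax-1\in\sigma$, take $\CH_v$ to be a clique on $\sigMax$ vertices $w_1,\dots,w_{\sigMax}$ of type $(\sigma,\emptyset)$, and make $p$ adjacent to all of them (so $p,w_1,\dots,w_{\sigMax}$ form a clique on $\sigMax+1$ vertices). Each forced-selected $w_i$ then has $(\sigMax-1)+[p\in S]$ selected neighbours, which lies in $\{\sigMax-1,\sigMax\}\subseteq\sigma$ for both values of $[p\in S]$; thus the gadget admits exactly one extension per selection status of $p$, always supplies exactly $\sigMax$ selected neighbours to $p$, and $p$ is satisfiable when selected iff it has no external selected neighbour. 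If $\sigMax-1\notin\sigma$, I would additionally introduce one $(\sigma,\NN)$-vertex $z$, let $\{w_1,\dots,w_{\sigMax},z\}$ form a clique, and make $p$ adjacent to all of $w_1,\dots,w_{\sigMax},z$. Now each $w_i$ has $(\sigMax-1)+[z\in S]+[p\in S]$ selected neighbours; since $\sigMax-1\notin\sigma$ and $\sigMax+1>\sigMax$ while $\sigMax\in\sigma$, this \emph{forces} $[z\in S]+[p\in S]=1$, i.e.\ $z$ is selected exactly when $p$ is not. One checks that $z$ is then satisfiable in both cases (selected, it has $\sigMax$ selected neighbours; unselected, $\rho=\NN$), that the gadget again admits a unique extension per status of $p$, and that $p$ receives exactly $\sigMax$ internal selected neighbours. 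In either case the selection-status correspondence $p\in S'\iff v\in S$ (together with the forced behaviour of the gadget vertices) gives a bijection between $(\sigma,\rho)$-sets of $I$ and $\lambda$-sets of $I'$, so the reduction is even parsimonious.

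To control pathwidth, given a path decomposition of $I$ of width $\pw$, for each replaced vertex $v$ I would pick a bag $B$ containing $v$, insert a fresh copy $B_v$ of $B$ adjacent to $B$ in the path, and add the at most $\sigMax+1$ gadget vertices (the $w_i$, plus $z$ in the second case) to $B_v$. Since $p$ and the gadget vertices form a clique of size at most $\sigMax+2$ and all original edges of $v$ are preserved under the relabelling $v\mapsto p$, this is a valid path decomposition of $I'$ of width $\pw+\Oh(1)$ (recall $\sigma$ is fixed), and $|V(I')|=\Oh(|V(I)|)$; hence the reduction is pathwidth-preserving.

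The step I expect to be the main obstacle is exactly the satisfiability of the forced-selected neighbours of $p$: because $\sigma$ need not contain two consecutive integers, a $(\sigma,\emptyset)$-vertex whose only ``variable'' neighbour is $p$ cannot in general be satisfied for both values of $[p\in S]$. The clique construction above is designed precisely to pin the selected-neighbour count of each $w_i$ to $\sigMax$ — either because its fixed contribution is already $\sigMax-1$ and $\sigMax-1\in\sigma$, or because the auxiliary vertex $z$ absorbs the $\pm 1$ slack so that $[z\in S]+[p\in S]=1$ is forced. Carrying this out requires verifying carefully, in the second case, that this parity constraint is genuinely forced and that $z$ and all $w_i$ are simultaneously satisfiable in both configurations; this is where $\sigMax\in\sigma$, the relation of $\sigMax\pm 1$ to $\sigma$, and $\rho=\NN$ are all used, and where one must also be careful to rule out stray partial solutions of the gadget so that the reduction is (and stays) parsimonious.
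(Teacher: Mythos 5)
Your proposal is correct and takes essentially the same route as the paper: attach a constant-size gadget of forced-selected $(\sigma,\emptyset)$-vertices that pumps $\sigMax$ selected neighbours into the (now ordinary) portal, with the same case distinction on whether $\sigMax-1\in\sigma$ and the same auxiliary "slack" vertex when it is not, plus the same bag-duplication pathwidth argument. The only slip is cosmetic: in your $\sigMax-1\notin\sigma$ gadget the portal receives $\sigMax+1$ (not exactly $\sigMax$) internal selected neighbours when it is unselected, because $z$ is adjacent to $p$ — harmless since $\rho=\NN$, and the paper avoids it by not joining its auxiliary vertex to the portal.
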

\begin{proof}
  Observe that the case $\sigMax=0$ is trivial,
  as then $\sigma=\{0\}$, and consequently, $(\{0\},\NN)$-vertices are also just $(\sigma, \rho)$-vertices.
  For the other case when $\sigMax \ge 1$, we first construct
  a $\{\sigma_{\sigMax},\rho_{\sigMax}\}$-realizer $\CJ$
  for \srCountDomSetRel[\{(\sigma,\emptyset)\}].
  (Recall the definition of a parsimonious realizer from \cref{def:realization}.)
  \begin{claim}
    There is a parsimonious $\{\sigma_{\sigMax},\rho_{\sigMax}\}$-realizer $\CJ$
    for \srCountDomSetRel[\{(\sigma,\emptyset)\}] with a single portal $p$.
  \end{claim}
  \begin{claimproof}
    We distinguish two cases depending on whether $\sigMax-1 \in \sigma$ or not.
    \begin{description}
      \item[\boldmath $\sigMax-1 \notin \sigma$.]
      $\CJ$ has a single portal $p$, which is a $(\sigma,\rho)$-vertex.
      The portal $p$ is fully adjacent to an $\sigMax$-clique $C$
      of $(\sigma,\emptyset)$-vertices.
      There is an additional vertex $v$
      which is also adjacent to all vertices in $C$, but not to $p$.
      Note that there are precisely two partial solutions for this gadget.
      The vertices of $C$ are $(\sigma,\emptyset)$-vertices,
      and therefore, they are always selected.
      If $p$ is selected,
      then the vertices in $C$ cannot have another selected neighbor
      (as $\sigma$ is finite),
      and therefore, $v$ is unselected (which is fine as $\rho=\NN$).
      If $p$ is unselected,
      then the vertices in $C$ have $\sigMax-1$ selected neighbors within $C$.
      Since $\sigMax-1\notin\sigma$, the vertex $v$ must be selected
      (which is fine as it now also has $\sigMax$ selected neighbors).
      In either case, $p$ has $\sigMax$ selected neighbors within $\CJ$.

      \item[\boldmath $\sigMax-1 \in \sigma$.]
      It would suffice to construct a parsimonious
      $\{\sigma_1,\rho_1\}$-realizer $\CJ'$
      for \srCountDomSetRel[\{(\sigma,\emptyset)\}]
      with a single portal $u$.
      Indeed, to obtain $\CJ$, we can use $\sigMax$ copies of $\CJ'$
      and identify all copies of the portal $u$ with the portal $p$ of $\CJ$.

      The realizer $\CJ'$ consists of an $\sigMax$-clique $C$
      of $(\sigma,\emptyset)$-vertices, precisely one of which
      is adjacent to $u$.
      Note that all vertices in $C$ have degree $\sigMax-1$ in $C$.
      Hence, selecting all vertices of $C$ gives a partial solution
      independently of the selection status of $u$
      (because $\sigMax-1,\sigMax\in\sigma$ and $\rho=\NN$).
      In either case, $u$ obtains one selected neighbor from $\CJ'$.
      \qedhere
    \end{description}
  \end{claimproof}

  Using the parsimonious $\{\sigma_{\sigMax},\rho_{\sigMax}\}$-realizer $\CJ$
  from the claim, we can give the reduction.
  Let $I=(G,\lambda)$ be an instance of \srCountDomSetRel[(\{0\},\NN)],
  and let $U$ be the set of $(\{0\},\NN)$-vertices of $I$.
  From $I$, we define an instance $I'$
  of \srCountDomSetRel[\{(\sigma,\emptyset)\}]
  by turning the vertices in $U$ into $(\sigma, \rho)$-vertices,
  and attaching, to each vertex $u$ in $U$, a new copy of the gadget $\CJ$,
  where $u$ is identified with the portal $p$ of $\CJ$.
  This way, in each solution,
  $u$ obtains $\sigMax$ selected neighbors from its copy of $\CJ$.
  Hence, $u$ can be treated as a $(\sigma-\sigMax, \rho-\sigMax)$-vertex,
  which is precisely a $(\{0\},\NN)$-vertex.
  Thus, the number of solutions of instance $I$
  for \srCountDomSetRel[(\{0\},\NN)]
  is the same as the number of solutions of $I'$
  for \srCountDomSetRel[\{(\sigma,\emptyset)\}].

  It remains to bound the pathwidth of the new instance.
  For each vertex $u$ in $U$,
  there is a bag in the path decomposition of $G$ containing $u$.
  We create a copy of this bag for the path decomposition of $I'$,
  and add the vertices of the copy of $\CJ$ which are attached to $u$ to it.
  As we can create a separate copy for each vertex in $U$, the
  pathwidth increases by at most the size of $\CJ$,
  which depends only on $\sigma$ and $\rho$.
  Hence, the pathwidth increases by at most a constant.
\end{proof}

It remains to remove the $(\sigma,\emptyset)$-vertices
which are introduced by the previous reduction.

\begin{lemma}\label{lem:sigfin1}
  Let $(\sigma,\rho)\in \allSetsne^2$ be non-trivial with $\rho= \ZZ_{\ge0}$.
  If $\sigma$ is finite,
  then \srCountDomSetRel[\{(\sigma,\emptyset)\}]~$\pwred$ \srCountDomSet.
\end{lemma}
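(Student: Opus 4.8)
The statement to prove is \cref{lem:sigfin1}: for non-trivial $(\sigma,\rho)\in\allSetsne^2$ with $\rho=\ZZ_{\ge0}$ and finite $\sigma$, we have $\srCountDomSetRel[\{(\sigma,\emptyset)\}]\pwred\srCountDomSet$. In other words, we must remove vertices that are forced to be selected (i.e., $(\sigma,\emptyset)$-vertices), reducing to the plain counting problem where every vertex is a $(\sigma,\rho)$-vertex with $\rho=\NN$. Since $\rho=\NN$, unselected vertices never constrain anything, so the only genuine constraint in the target problem is the $\sigma$-constraint on selected vertices.

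The plan is to replace each $(\sigma,\emptyset)$-vertex by a gadget that, when attached to a portal vertex $p$ (now a $(\sigma,\rho)$-vertex), effectively forces $p$ to be selected while not disturbing the rest of the graph. First I would handle the trivial case $\sigma=\{0\}$ separately, since then a $(\sigma,\emptyset)$-vertex is an isolated-looking vertex forced to be selected with no selected neighbors, which can be simulated directly (or the instance decomposes). For $\sigMax\ge 1$, the natural idea is interpolation: attach to each would-be-forced vertex $v$ some number $x$ of copies of a small gadget $\CJ$ with a single portal, identify all portals with $v$, and turn $v$ into a $(\sigma,\rho)$-vertex. If $\CJ$ has the property that $\ext_\CJ(\sigma_0)\ne\ext_\CJ(\rho_0)$ and higher states vanish (as in a ``candidate'' in the sense of \cref{def:candidates}, or something tailored to $\rho=\NN$), then the number of solutions of the modified graph, as a function of $x$ and of the partition of the forced vertices into ``actually selected'' vs.\ ``actually unselected'' in a solution, becomes a polynomial whose leading coefficient recovers exactly the count where all the target vertices are selected — i.e., the original $\srCountDomSetRel[\{(\sigma,\emptyset)\}]$ count. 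This mirrors the ``winner''/interpolation mechanism already used in \cref{lem:forceboth} and \cref{lem:forcesingle00}; the fact that $\rho=\NN$ should make the bookkeeping simpler because unselected neighbors are always fine, so the gadget $\CJ$ only needs to control the behavior when its portal is selected.

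Concretely, the key steps in order would be: (1) dispose of $\sigma=\{0\}$; (2) construct a suitable single-portal gadget $\CJ$ for $\srCountDomSet$ (with $\rho=\NN$) such that $\ext_\CJ(\sigma_0),\ext_\CJ(\rho_0)\ge 1$ and $\ext_\CJ(\sigma_0)\ne\ext_\CJ(\rho_0)$, and ideally $\ext_\CJ$ vanishes on all states $\sigma_i,\rho_i$ with $i\ge 1$ (so that the portal never picks up selected neighbors from $\CJ$) — one can build this from the $\{\sigma_s,\rho_r\}$-providers of \cref{lem:happyGadget}/\cref{lem:fillinggadget} or reuse the candidate construction of \cref{lem:candidate} noting that here $\rho=\NN$ is cofinite so a candidate exists and is in fact easy; (3) given an instance $I=(G,\lambda)$ with forced-vertex set $U$, $|U|=k$, and $|V(G)|=n$, form $I_x$ by attaching $x$ copies of $\CJ$ to each $u\in U$ (all identifying portals with $u$) and relabeling $U$ and the new vertices as $(\sigma,\rho)$-vertices; (4) observe $\#S(I_x)=\sum_{i=0}^{k} a_i\,\ext_\CJ(\sigma_0)^{x(k-i)}\ext_\CJ(\rho_0)^{xi}$ where $a_i$ counts solutions of $G$ (with $U$ treated as $(\sigma,\rho)$-vertices) in which exactly $i$ vertices of $U$ are unselected — using crucially that $\rho=\NN$ so unselected $u\in U$ needs nothing and selected $u$ gets all $x\cdot\ext_\CJ(\sigma_0)^{x-1}\cdots$ contributions cleanly from the copies; (5) divide by $\ext_\CJ(\rho_0)^{xk}$ to get a polynomial in $(\ext_\CJ(\sigma_0)/\ext_\CJ(\rho_0))^x$ of degree $k$, evaluate at $k+1$ distinct values of $x$ via oracle calls to recover $a_0$, which is exactly the number of solutions of $I$; (6) bound pathwidth: each $u\in U$ lies in some bag, duplicate that bag once per attached copy of $\CJ$ and add the copy's vertices, so pathwidth grows by at most $|V(\CJ)|=\Oh(1)$, and arity/size stay polynomial since $k\le n$ and each $\CJ$ has constant size.

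The main obstacle I expect is step (2)–(4): ensuring the gadget $\CJ$ genuinely contributes \emph{no selected neighbors} to its portal (so that attaching it does not change the $\sigma$-count of $v$ with respect to the original graph), while still having $\ext_\CJ(\sigma_0)\ne\ext_\CJ(\rho_0)$, and then getting the extension-count identity in step (4) exactly right. If the naive gadget happens to have $\ext_\CJ(\sigma_0)=\ext_\CJ(\rho_0)$, one needs the same two-level trick as in \cref{lem:candidate} (stacking copies and re-reading the portal's state) to force these to differ; since $\rho=\NN$, though, this should be cheaper than in the general case — I would expect a single application of the ``take $x$ copies, identify portals'' idea suffices, because the polynomial-degree mismatch that drove \cref{clm:ZisCandidate} (degree $\rhoMax$ vs.\ $\rhoMax-1$) does not even arise; instead one can get the inequality directly from, e.g., $\ext_\CJ(\rho_0)>\ext_\CJ(\sigma_0)$ by making $\CJ$ a structure that offers extra freedom only when the portal is unselected. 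A secondary subtlety is confirming that the only constraints in $I_x$ come from $\sigma$ on selected vertices (so that the decomposition $\#S(I_x)=\sum_i a_i\cdots$ has no extra cross-terms), which again leans on $\rho=\NN$.
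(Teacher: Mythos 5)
Your plan hinges on step (2): a single\hyphen portal gadget $\CJ$ (a plain graph with one portal, all $(\sigma,\rho)$-vertices) with $\ext_\CJ(\sigma_0)\neq\ext_\CJ(\rho_0)$ and, ideally, no partial solution that gives the portal a selected neighbor. This gadget cannot exist when $\rho=\NN$. Indeed, for \emph{any} graph with a single portal $p$, the map $S\mapsto S\cup\{p\}$ is a bijection between the partial solutions witnessing $\rho_0$ and those witnessing $\sigma_0$: since $|N(p)\cap S|=0$, no selected non-portal vertex changes its count of selected neighbors, and the unselected neighbors of $p$ are unconstrained because $\rho=\NN$. Hence $\ext_\CJ(\sigma_0)=\ext_\CJ(\rho_0)$ always, so no ``candidate''/``winner'' in the sense of \cref{def:candidates} exists here; your fallback to \cref{lem:candidate} also fails, since that lemma explicitly assumes $\rho\neq\ZZ_{\ge0}$ (and the obstruction above shows this hypothesis is not an artifact of its proof). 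The only way a gadget can distinguish the portal's selection status when $\rho=\NN$ is by having selected neighbors of the portal (whose finite-$\sigma$ constraints feel the portal), but then your identity in step (4), $\#S(I_x)=\sum_i a_i\,\ext_\CJ(\sigma_0)^{x(k-i)}\ext_\CJ(\rho_0)^{xi}$, breaks: the gadgets now add selected neighbors to a selected $u\in U$, the number of admissible gadget configurations depends on how many selected neighbors $u$ has inside $G$, and the copies attached to the same $u$ are coupled, so $a_0$ no longer equals the number of solutions of $I$ and the degree-$k$ interpolation does not isolate it.

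The paper's proof resolves exactly this by embracing, rather than forbidding, gadget-supplied neighbors. It attaches to each forced vertex $u$ a gadget of $x$ cliques of order $\sigMin+1$, each meeting $u$ in one vertex. If $u$ is unselected there are $2^x$ extensions; if $u$ is selected and has $i$ selected neighbors in $G$, the extensions number $f_i(x)=\sum_{i+s\in\sigma}\binom{x}{s}$, a polynomial in $x$ of degree at most $\sigMax$ whose constant term is $1$ iff $i\in\sigma$ and $0$ otherwise. Reducing $\#S(I_x)$ modulo $2^x$ kills every solution in which some vertex of $U$ is unselected, leaving a polynomial in $x$ of degree at most $\sigMax\cdot|U|$; interpolating it and reading off its \emph{constant term} simultaneously enforces that all of $U$ is selected and that each $u\in U$ has a $\sigma$-feasible number of selected neighbors \emph{within the original graph}. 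This ``exponential vs.\ polynomial plus constant term'' mechanism is the idea missing from your proposal; the pathwidth bookkeeping in your step (6) is fine and matches the paper.
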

\begin{proof}
  We first construct and analyze a gadget which we use in the reduction.
  Let $\CJ^{(x)}=(J_x,\{p\})$ be a gadget for \srCountDomSet
  that consists of $p$ together with $x$ cliques, each of order $\sigMin+1$.
  The portal $p$ is adjacent to precisely one vertex from each clique.

  We analyze the number of extensions for this gadget
  assuming the portal $p$ already has $i$ neighbors from the outside.
  First, consider the case when $p$ is not selected.
  Then, each of the cliques in $\CJ^{(x)}$ can be only entirely selected
  or entirely unselected (as selecting only a few vertices
  would give them less than $\sigMin$ selected neighbors).
  As $p$ is unselected,
  each combination of entirely selected or unselected cliques is feasible
  since $\rho=\ZZ_{\ge 0}$.
  (This is not true if $p$ is selected.)
  This means that the gadget $\CJ^{(x)}$ has $2^x$ feasible extensions
  for the selection $S_G$.

  If the portal $p$ is selected, the situation is more complicated.
  Assuming that $p$ already has $i$ neighbors, there are $f_i(x)$ feasible extensions to $\CJ^{(x)}$,
  where we have
  \[
    f_i(x) \deff \sum_{i+s\in \sigma}\binom{x}{s}
    .
  \]
  Since $\sigma$ is finite, $f_i(x)$ is a polynomial of degree at most $\sigMax$.
  Also, observe that the constant of $f_i(x)$ is 1 if $i\in \sigma$,
  and otherwise it is 0.

  Now, let $I=(G,\lambda)$ be an instance of \srCountDomSetRel[(\sigma,\emptyset)],
  and let $U$ be the set of $(\sigma,\emptyset)$-vertices of $I$.
  For a positive integer $x$, we define an instance $I_x$ of \srCountDomSet
  by making the vertices in $U$ $(\sigma, \rho)$-vertices,
  and attaching, to each vertex $u\in U$, a copy of $\CJ^{(x)}$,
  where $u$ is identified with the portal $p$ of $\CJ^{(x)}$.

  We first argue why this modification does not change the pathwidth too much.
  Recall that each vertex $u$ in $U$ is adjacent
  to one vertex of $x$ $\sigMin+1$-cliques.
  Since each clique is of constant order,
  we can duplicate an arbitrary bag containing $u$ $x$ times,
  and add, to each copy of that bag, the vertices of precisely one clique.
  As a consequence, the pathwidth increases by at most a constant.

  Let $\#S(I_x)$ be the number of solutions of \srCountDomSet
  for the instance $I_x$.
  Let $p(x) \deff \#S(I_x) \mod 2^x$.
  In the following claims,
  we show that $p(x)$ is a polynomial whose degree depends on $\sigma$,
  and whose constant term is the number of solutions of $I$
  for which the vertices in $U$ are selected, which is precisely what we want.
  We recover the constant term by interpolation.

  \begin{claim}
    $p(x)$ is a polynomial in $x$ with maximum degree $\sigMax\cdot\abs{U}$.
  \end{claim}
  \begin{claimproof}
    Observe that $\#S(I_x)$ is a sum over all possible solutions for $I$
    multiplied by the number of corresponding extensions to the copies of $\CJ^{(x)}$.

    Consider a solution $S$ of $I_x$.
    If there is one vertex in $U \setminus S$,
    that is, a former $(\sigma, \emptyset)$-vertex that is not selected,
    then there are $2^x$ extensions to the attached copy of $\CJ^{(x)}$.
    Hence, this solution does not contribute to $p(x)$.

    However, if all vertices from $U$ are selected, that is $U \subseteq S$,
    then, for each vertex $u\in U$, and for the number $i$ of selected neighbors of $u$ in $S\cap V(G)$, there are $f_i(x)$ extensions to the copy of $\CJ^{(x)}$ that is attached to $u$.
    By the above argument, $f_i(x)$ is a polynomial in $x$
    with maximum degree $\sigMax$.
  \end{claimproof}

  By the arguments from the claim,
  it suffices to analyze $\#S'(I_x)$,
  which we define as the number of solutions of $I_x$
  for which all vertices in $U$ are selected.
  It follows directly that $p(x)=\#S'(I_x) \mod 2^x$.

  \begin{claim}
    The constant term of $p(x)$ is equal to the number of solutions of $I$,
    that is, $p(0) = \#S(I)$.
  \end{claim}
  \begin{claimproof}
    Recall that $f_i(x)$ is a polynomial of degree at most $\sigMax$,
    and that the constant of $f_i(x)$ is 1 if $i\in \sigma$,
    and otherwise it is 0.

    Hence, a selection $S_G \supseteq U$ of vertices from $V(G)$
    contributes 1 to the constant term of $p(x)$ if and only if,
    for each vertex in $U$, the corresponding number $i$ of selected neighbors is in $\sigma$, and
    otherwise it contributes 0 to the constant.
    So, it contributes 1 if and only if $S_G$ is a solution of $I$
    in which all vertices of $U$ are selected.

    We conclude that the constant of $p(x)$ is the sought-after number
    of solutions of \srCountDomSetRel[(\sigma,\emptyset)] on instance $I$.
  \end{claimproof}

  As the degree of $p(x)$ in $x$ is bounded by $\sigMax \cdot \abs{U}$,
  we can recover $p(x)$ by polynomial interpolation
  using at most $\sigMax\cdot |U|+1$ different sufficiently large values of $x$
  and corresponding oracle calls to \srCountDomSet on instance $I_x$.
  The constant term of $p(x)$ is then the sought-after number of solutions of $I$.
\end{proof}

Now, we have everything ready to prove the main result of this section
which we restate here for convenience.
\lemrhoeverything*
\begin{proof}
  From \cref{lem:IIIrelversion} together with \cref{lem:IV},
  it follows that
  \[
    \srCountDomSetRel[\HWsetGen{=1}]
    \pwred \srCountDomSetRel[\HWsetGen{\ge1}]
    \pwred \srCountDomSetRel[\{(\{0\},\NN)\}].
  \]
  Then, by \cref{lem:sigfin1,lem:sigfin2}, we have
  \[
    \srCountDomSetRel[\{(\{0\},\NN)\}]
    \pwred \srCountDomSetRel[(\sigma,\emptyset)]
    \pwred \srCountDomSet,
  \]
  which concludes the proof.
  The combined reduction is pathwidth-preserving
  by the transitivity of pathwidth-preserving reductions
  (recall \cref{obs:pwredtransitivity}).
\end{proof}

\subsection{\boldmath The Case
  \texorpdfstring{$\rho=\NN$}{Rho is Unrestricted} and \texorpdfstring{$\sigma$}{Sigma} is Cofinite}
\label{sec:rhoiseverything2}

As outlined at the beginning of \cref{sec:realizingrelationsforcounting}, the proof for the case when $\rho=\NN$ and $\sigma$ is cofinite deviates from the previous proofs.
Instead of continuing with the definition of \srCountDomSetRel,
we introduce a variant of this problem
which we denote by \srCountDomSetRel[\altRel].
Intuitively, this problem variant is different in the sense that all vertices that are in the scope of some relation no longer act as $(\sigma, \rho)$-vertices, but as $(\NN,\NN)$-vertices instead.
Before introducing the required definitions, let us outline the chain of reductions
that is used for the proof of \cref{lem:rhoeverything2}. Afterward, we formally introduce the different problem variants.
Consult \cref{fig:count:removingRelations} for a visualization of the procedure.
\begin{enumerate}
  \item
  In the first step, we reduce from \srCountDomSetRel
  to a relation-weighted version of
  \srCountDomSetRel[\{(\ZZ_{\ge1},\NN)\},\altRel].
  This is made formal in \cref{lem:RelfromweightedRelS}
  and exploits that we are dealing with the counting version of the problem.
  \item
  We remove the weights from the relations
  analogously to the results we have seen in \cref{sec:high-level:counting}.
  We state these reductions formally in
  \cref{lem:vertexWeightedRelS,lem:relWeightedRelS}.
  \item
  We show that it suffices to model $\alt{\HWge{1}}$ relations.
  For this, we first follow the ideas from \cref{sec:RelToHW1}
  to replace arbitrary relations
  by $\alt{\HWeq{1}}$ relations (in \cref{lem:RelSfromHWS}).
  In a second step, which is analogous to \cref{lem:IIIrelversion},
  we replace these relations by $\alt{\HWge{1}}$ relations (in \cref{lem:HWeqS}).
  \item
  We remove the $\alt{\HWge{1}}$ relations by using $(\ZZ_{\ge1},\NN)$-vertices together with a single $(\NN,\emptyset)$-vertex, see \cref{lem:HWgeS}.
  \item
  In \cref{lem:sigcof2}, we replace the $(\ZZ_{\ge1},\NN)$-vertices
  by a constant number of $(\NN,\emptyset)$-vertices.
  \item
  Finally, in \cref{lem:sigcof1}, we model $(\NN,\emptyset)$-vertices using only $(\sigma, \rho)$-vertices.
  Here, we crucially require that there are not too many of these vertices (i.e., just a constant number).
\end{enumerate}

\subparagraph*{Notation and Problem Definitions.}
Let $(\sigma,\rho)\in \allSetsne^2$ and let $G = (V,E,\CC)$ be a graph with relations as defined in \cref{def:lower:graphWithRelations}.
Recall that, in this section, we define a new concept of $(\sigma,\rho)$-sets with relations.
To this end, let $U$ be the set of vertices that are in the scope of at least one relation.

Then, a \emph{$\alt{(\sigma,\rho)}$-set} of $G$
is a selection $X \subseteq V$ such that:
  \begin{itemize}
    \item For each $v\in X\setminus U$, we have $\abs{N(v)\cap X}\in \sigma$.
    \item For each $v\in V\setminus (X\cup U)$, we have $\abs{N(v)\cap X}\in \rho$.
    \item For each $C\in CC$, it holds that $X\cap \scope(C) \in \rel(C)$.
  \end{itemize}
Intuitively, the vertices in $V\setminus U$ are $(\sigma, \rho)$-vertices, but the vertices in $U$ act as $(\NN,\NN)$-vertices, and are not required to fulfill any $\sigma$-constraints or $\rho$-constraints. They are saturated in the sense that any number of selected neighbors is feasible for them to have.

As in the previous sections,
we work with a generalization of $\alt{(\sigma,\rho)}$-sets
to the setting where we can use a set of additional pairs
$\CP=\{(\sigma^{(i)}, \rho^{(i)}) \in \allSets^2\mid i\in \nset{k}\}$
(for some non-negative integer $k$). We set $(\sigma^{(0)}, \rho^{(0)})\coloneqq (\sigma, \rho)$.
Then, given a mapping $\lambda \from V \to \fragment{0}{k}$,
a \emph{$\alt{\lambda}$-set} of $G$ is a subset $X\subseteq V$
with the following properties:
\begin{itemize}
  \item For each $v\in X\setminus U$, we have $\abs{N(v)\cap X}\in \sigma^{(\lambda(v))}$.
  \item For each $v\in V\setminus (X\cup U)$, we have $\abs{N(v)\cap X}\in \rho^{(\lambda(v))}$.
  \item For each $C\in CC$, it holds that $X\cap \scope(C) \in \rel(C)$.
\end{itemize}

Then, \srCountDomSetRel[\CP,\alt{\Rel}] is the problem
that takes as input some graph with relations $G$
and a function $\lambda \from V(G) \to \fragment{0}{k}$,
and asks for the number of $\alt{\lambda}$-sets of $G$.
As usual, if $\CP$ is empty, we drop the superscript $\CP$,
and we also drop $\lambda$ as part of the input
to obtain the problem \srCountDomSetRel[{\altRel}]
that takes as input a graph with relations $G$
and asks for the number of $\alt{(\sigma,\rho)}$-sets of $G$.

\subparagraph*{Weighted Versions.}
We need to generalize our problem even further.
Analogously to \cref{sec:high-level:counting},
for a non-negative integer $q$,
we also introduce the problem \emph{$q$-relation-weighted}
\srCountDomSetRel[{\CP,\altRel}]
that takes as input a graph with weighted relations $G = (V,E,\CC)$
such that there are at most $q$ different weights, i.e., $\abs{\{ f_C(U) \mid C\in \CC, U \subseteq \scope(C) \} } \le q$.
Each solution $X \subseteq V$ of \srCountDomSetRel[\CP,\altRel]
on input $G$ obtains a weight $f(X)\deff \prod_{C\in \CC} f_C(X\cap \scope(C))$.
The task is then to compute the sum of the values $f(X)$
summed over all $\alt{\lambda}$-sets $X$ of $G$.
Similarly, in the \emph{$q$-vertex-weighted} \srCountDomSetRel[\CP,\altRel],
the input is a graph with relations (however, no relation weights)
together with a weight function $\wt\from V \to \SetQ$
such that $\abs{\{ \wt(v) \mid v \in V \} } \le q$,
for some non-negative integer $q$,
and the task is to compute the sum of the values
$g(X)\deff \prod_{v\in X} \wt(v)$ summed over all $\alt{\lambda}$-sets $X$ of $G$.

We are ready to state the first result of this section.

\begin{lemma}\label{lem:RelfromweightedRelS}
  Let $(\sigma,\rho)\in \allSetsne^2$ be non-trivial
  with $\rho= \ZZ_{\ge0}$ and cofinite $\sigma$.
  Then, there is a positive integer $q$ (depending only on $(\sigma,\rho)$)
  such that there is a \pwar-reduction from \srCountDomSetRel
  to $q$-relation-weighted \srCountDomSetRel[\{(\ZZ_{\ge1},\NN)\},\altRel].
\end{lemma}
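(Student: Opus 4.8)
The plan is to reduce from \srCountDomSetRel (where, by \cref{def:lower:graphWithRelations}, vertices in the scope of a relation are still ordinary $(\sigma,\rho)$-vertices) to the $\altRel$-variant where those vertices are unconstrained, in a pathwidth- and arity-preserving way, at the cost of introducing a bounded number of relation weights and of setting $\sigma$ to the key case $\ZZ_{\ge 1}$. First I would handle the ``change of $\sigma$'' part: since $\sigma$ is cofinite, $\sigMin$ is some fixed constant, and I would replace each relation-scope vertex $v$ by a gadget that attaches to $v$ a clique (of size $\sigMin$) of $(\sigma,\emptyset)$-type helper vertices realized via parsimonious $\{\sigma_s,\rho_r\}$-providers (\cref{lem:count:parsFillingGadget}), so that $v$ always receives exactly $\sigMin$ selected neighbors from the gadget; this shifts its constraint set down to $\sigma-\sigMin\supseteq \ZZ_{\ge 1}\cup\{0\}$, and in fact one only needs to ensure that, once a scope vertex has at least one selected neighbor in the original graph, it is automatically satisfied. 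Because $\rho=\NN$, an unselected scope vertex is always fine; the only thing that must be engineered is that a \emph{selected} scope vertex with $k\ge 1$ original selected neighbors is feasible — which the $\ZZ_{\ge 1}$-behaviour gives for free.

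The heart of the argument is the interpolation step that converts ``$(\sigma,\rho)$-vertex subject to a relation'' into ``$(\ZZ_{\ge 1},\NN)$-vertex subject to the relation''. Following the template used repeatedly in \cref{sec:high-level:counting} and \cref{sec:forcesingle00}, for each scope vertex $v$ I would attach $x$ parallel copies of a small provider-based gadget whose extension counts (for the states $\sigma_0$ vs. $\sigma_{\ge 1}$ of $v$, and $\rho_0$ vs. $\rho_{\ge 1}$) I can compute explicitly; summing over all partial solutions of the host graph, the number of solutions of the modified instance becomes a polynomial (times a dominant exponential term) in a quantity like $c^x$, where the coefficients separate out exactly the contributions in which $v$ has $0$ selected neighbors from those in which $v$ has $\ge 1$. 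The crucial point is that in the $\altRel$-semantics the target problem treats scope vertices as $(\NN,\NN)$-vertices, so the oracle I reduce \emph{to} is genuinely the weaker/cleaner problem; the weights (there will be $q=\Oh(1)$ many, $q$ depending only on $\sigma,\rho$ through $\sigMin,\sigMax$) arise because, to cancel the unwanted ``$v$ had exactly $j$ selected neighbors for $j$ too small to be feasible as a genuine $(\sigma,\rho)$-vertex'' terms, I attach weighted relational constraints exactly as in \cref{lem:count:makeSigmaCofinite}. I would then recover the desired count by polynomial interpolation over $\Oh(n)$ choices of $x$, each requiring one oracle call, all instances of size polynomial in $n$, pathwidth $\pw(G)+\Oh(1)$ (duplicate one bag per scope vertex and insert the constant-size gadget), and arity increased by $\Oh(1)$.

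For correctness I would argue both directions: every $(\sigma,\rho)$-set of $G$ lifts to a weighted family of $\altRel$-solutions of the constructed instance whose total weight, after dividing out the dominant term and interpolating, equals $1$; and the interpolation recovers precisely the coefficient corresponding to scope vertices having a feasible number of selected neighbors \emph{with respect to the original $(\sigma,\rho)$-constraint}, so that spurious $\altRel$-solutions (where a scope vertex would have violated $\sigma$ as a genuine vertex) contribute $0$. The pathwidth bound follows because, per scope vertex, I duplicate a single bag containing that vertex and add the $\Oh(1)$-size gadget; since each relational scope is already a clique in the pathwidth measure of a graph with relations, this only increases width by an additive constant, and likewise arity grows only by a constant (the new relational constraints have arity bounded in terms of $\sigMin,\sigMax$).

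The main obstacle I expect is the bookkeeping of the exponential-plus-polynomial expressions: just as in \cref{sec:forcesingle00}, once either $\sigma$ is cofinite the summed extension count is not a finite sum of binomials but of the form $c^x - p(x)$, and one must isolate the polynomial part (e.g. by the trick $f(x+1)-c\cdot f(x)$, or by working modulo a large power of the dominant base) before interpolating, and one must verify that the leading coefficient in which $\ext_\CG(\rho_0)$ (or the analogous ``zero selected neighbors'' count) sits is provably nonzero — this is where the precise extension counts of the parsimonious providers, and the fact that the two candidate bases are distinct positive integers, have to be checked carefully. A secondary subtlety is making sure the reduction genuinely lands in the $\altRel$-problem with constraint pair $(\ZZ_{\ge 1},\NN)$ and not $(\sigma,\NN)$: this is exactly why the ``shift down by $\sigMin$'' preprocessing is needed, and why $\rho=\NN$ is essential (it is what makes unselected scope vertices and their neighbour-count irrelevant, so that the only observable feature of a scope vertex is its selection status once it has a selected neighbour).
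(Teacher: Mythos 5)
There is a genuine gap, and it is exactly the obstruction the paper flags for the case $\rho=\NN$, $\sigma$ cofinite. In your construction the original scope vertices remain in the scopes of the (carried-over) relations of the target instance, so under the $\altRel$-semantics they are $(\NN,\NN)$-vertices, and you must re-impose the $(\sigma,\rho)$-constraint on them via attached gadgets, weights, and interpolation. But a gadget attached at such a vertex $v$ can only see $v$'s selection status, not how many selected neighbors $v$ has in the host graph: the total count of $\altRel$-solutions factorizes into (host partial solutions with a given selection status of $v$) times (gadget extensions for that status), so varying the gadget or the number $x$ of copies gives you no handle on the quantity ``$v$ is selected with $j$ selected host-neighbors, $j\notin\sigma$'' that you need to cancel. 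The cancellation scheme of \cref{lem:count:makeSigmaCofinite} that you invoke only works because there the problem semantics still enforce the $\sigma$-constraint at $v$ (so the gadget's contribution is filtered through ``original $+$ gadget neighbors $\in\sigma$''); once $v$ sits in a relation scope of the $\altRel$-problem, that filter is gone and the weighted relation on $v$ and its gadget neighbors controls only the gadget side. Your ``shift down by $\sigMin$'' preprocessing does not repair this (and the claim $\sigma-\sigMin\supseteq\ZZ_{\ge1}\cup\{0\}$ is false for general cofinite $\sigma$); moreover the $(\ZZ_{\ge1},\NN)$ pair in the target is not meant to be the constraint of the scope vertices at all.

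The paper's proof avoids this by a duplication trick you are missing: for every scope $Z$ and every $z\in Z$ it introduces a fresh mirror vertex $z_Z$, moves the relation from $Z$ to the mirrors $Z'$, and leaves the original vertices outside all scopes, so they remain genuine $(\sigma,\rho)$-vertices and the problem semantics themselves check their neighbor counts. What then remains is a constant-size task: couple the selection statuses of $z$ and $z_Z$ while guaranteeing that a selected $z$ receives no extra selected neighbors from the coupling gadget. This is done with a fixed gadget containing two auxiliary $(\ZZ_{\ge1},\NN)$-vertices (whence the pair in the target problem) and one weighted relation of arity $4$, whose weights are obtained by inverting a Kronecker product $M\otimes M$ of a fixed $3\times 3$ matrix so that the weighted extension count is $1$ exactly for the state pairs $(\sigma_0,\sigma_0)$ and $(\rho,\rho)$ and $0$ otherwise; no interpolation is needed in this lemma (interpolation only enters later, when the weights are removed in \cref{lem:vertexWeightedRelS,lem:relWeightedRelS}), and $q$ is constant because the weights solve a fixed linear system.
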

\begin{proof}
  Let $G = (V,E,\CC)$ be an input to \srCountDomSetRel.
  Let $\CZ=\{\scope(C) \mid C\in \CC\}$,
  and let $U$ be the union of the sets in $\CZ$.
  We could cast $G$ as an instance of \srCountDomSetRel[{\altRel}].
  However, in solutions of \srCountDomSetRel[{\altRel}] on input $G$,
  all vertices in $U$ behave like $(\NN,\NN)$-vertices,
  whereas in solutions of \srCountDomSetRel on input $G$,
  they behave like $(\sigma, \rho)$-vertices.

  \subparagraph*{Idea.}
  As a first step, consider the modified instance $G'$
  which is obtained from $G$ by introducing,
  for each $Z\in \CZ$ and each $z\in Z$, a vertex $z_Z$.
  Let $U'=\{z_Z \mid Z\in \CZ, z\in Z\}$ and let $Z'=\{z_Z \mid z\in Z\}$.
  We know that, for each $Z\in \CZ$,
  there is a relational constraint in $G$ with scope $Z$.
  In $G'$, the scope of this constraint is $Z'$ rather than $Z$
  and the corresponding relation uses $z_Z\in Z'$
  whenever it previously used $z\in Z$.
  The vertices of $G'$ are $V\cup U'$.

  Observe that the solutions of \srCountDomSetRel on instance $G$
  are in a one-to-one correspondence with those solutions of \srCountDomSetRel[{\altRel}] on $G'$
  in which, for each $Z\in \CZ$,
  a vertex $z\in Z$ is selected if and only if $z_Z\in Z'$ is selected.
  Intuitively, this means that the vertices of $Z$
  behave as $(\sigma, \rho)$-vertices,
  but also mirror the selection status of the vertices in $Z'$,
  which means that they satisfy the corresponding relation.
  We now introduce a gadget $\CJ$, which we use to ensure this property.

  \begin{figure}[t]
    \centering
    \includegraphics[scale=1.9]{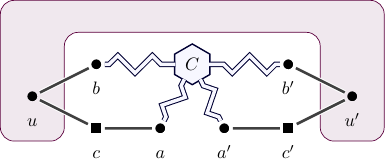}
    \caption{
      The construction of the gadget $\CJ$ with portals $\port$ and $\port'$
      from \cref{lem:RelfromweightedRelS}.
      The vertices $c$ and $c'$ are $(\ZZ_{\ge1},\NN)$-vertices
      and all other vertices are $(\sigma,\rho)$-vertices.
      The relation $C$ is such that $\rel(C) = \{a,b,a',b',aa',ab',ba',bb'\}$.
      \\
      Later, for each $Z \in \CZ$ and $z \in Z$,
      we create a copy of the gadget $\CJ$
      and identify $\port$ with $z$ and $\port'$ with $z_Z$.
      }\label{fig:eq-fake}
  \end{figure}

  \subparagraph*{\boldmath Definition of Auxiliary Gadget $\CJ$.}
  \newcommand{\A}{\allStates}
  The gadget $\CJ=(J, \{\port,\port'\}, \lambda_{\CJ})$ consists
  of a graph $J$ with (weighted) relations and two portals $\port$ and $\port'$.
  The vertex set of $J$ contains $\port$ and $\port'$, and additionally
  six new vertices $a,b,c,a',b',c'$ that are connected as follows:
  $\port$ is adjacent to $c$ and $b$,
  and $c$ is adjacent to $a$.
  Analogously, $\port'$ is adjacent to $c'$ and $b'$,
  and $c'$ is adjacent to $a'$.
  All of these vertices are $(\sigma, \rho)$-vertices, with the exception of $c$ and $c'$, which are $(\ZZ_{\ge1},\NN)$-vertices.
  Additionally, $J$ contains a new weighted relation $C$ with scope
  $\{a,b,a',b'\}$.
  It is more convenient
  to present this weighted relation as a relation $R\subseteq 2^{\{a,b,a',b'\}}$
  together with a weight function $f_C\from 2^{\{a,b,a',b'\}} \to \SetQ$
  that takes non-zero values only on elements from $R$.
  By this notation,
  the corresponding relation $R$ is the set of all subsets of $\{a,b,a',b'\}$
  that contain at most one of $a$ or $b$, and at most one of $a'$ or $b'$.
  We show how to choose the weight function $f_C$ in \cref{clm:tensor}.
  See \cref{fig:eq-fake} for an illustration of the gadget.

  We distinguish three different ``states'' of the vertex $\port$
  with respect to $\CJ$.
  \begin{itemize}
    \item
    State $\sigma_0$ means that $\port$ is selected
    but has no selected neighbor in $J$.
    \item
    State $\sigma_{\ge1}$ means that $\port$ is selected
    and has (at least) one selected neighbor in $J$.
    \item
    State $\rho$ means that $\port$ is unselected
    (with any number of selected neighbors in $J$).
  \end{itemize}
  We distinguish the same three states for the vertex $\port'$.
  Let $\A=\{\sigma_0,\sigma_{\ge1}, \rho\}$.
  Then, given $\tau, \tau' \in \A$,
  we say that an extension of $(\tau, \tau')$ to $\CJ$
  is a feasible selection of vertices from $J$
  such that $\port$ is in state $\tau$ and $\port'$ is in state $\tau'$.
  More precisely, this is a set $S\subseteq \{\port,\port', a, \dots, c\}$
  such that the following hold:
  \begin{itemize}
    \item
    The vertices $\port$ and $\port'$ have states $\tau$ and $\tau'$
    with regard to $S$, respectively.
    \item
    If $c \in S$, then it has at least one neighbor in $S$
    (it is a $(\ZZ_{\ge1},\NN)$-vertex).
    The same holds for $c'$.
    \item
    The set $S$ contains at most one of $a$ and $b$,
    and at most one of $a'$ and $b'$,
    that is, the set $S$ respects the relation $R$.
  \end{itemize}
  Let $\CS(\tau, \tau')$ be the set of extensions of $(\tau, \tau')$ to $\CJ$.
  Then, $\ext_{\CJ}(\tau, \tau') \deff \sum_{S\in \CS(\tau, \tau')}
  f_C(S\cap N(v))$ is the sum of extensions of $(\tau, \tau')$ to $\CJ$
  weighted according to $f_C$.
  We finish the construction of the gadget $\CJ$
  by defining the weights for the relation $R$, that is, defining $f_C$.

  \begin{claim}\label{clm:tensor}
    The weights $f_C \from \scope(C) \to \ZZ$ can be chosen
    such that $\ext_{\CJ}(\tau, \tau')=1$
    if $(\tau, \tau')\in \{(\sigma_0,\sigma_0), (\rho, \rho)\}$,
    and otherwise $\ext_{\CJ}(\tau, \tau')=0$.
  \end{claim}
  \begin{claimproof}
    For some feasible selection $S\subseteq \{a,b,c,a',b',c'\}$
    of vertices of $J$, by the definition of $R$,
    $S$ cannot contain both $a$ and $b$.
    More formally, $S\cap \{a,b\}$ can be in three possible ``states''
    (it can be $\emptyset$, $\{a\}$, or $\{b\}$).
    An analogous statement holds for the set $\{a',b'\}$.
    Let $\B=\{\emptyset, \{a\}, \{b\}\}$
    and $\B'=\{\emptyset, \{a'\}, \{b'\}\}$.
    For $\tau \in \A$ and $\phi\in \B$,
    let $m_{\tau,\phi}$ be the number of possible selection statuses of $c$
    that can be extended to a feasible selection $S$ of $J$
    in which $\port$ is in state $\tau$ and $S\cap\{a,b\}=\phi$.
    For $\tau' \in \A$ and $\phi'\in \B'$,
    we define $m_{\tau',\phi'}$ analogously.

    By checking for which selection status
    the constraint of the $(\ZZ_{\ge1},\NN)$-vertex $c$ can be satisfied,
    we obtain
    \[
    M\deff(m_{\tau,\phi})_{\tau\in \A, \phi\in \B}=
    \begin{tikzpicture}[scale=.7,baseline=(a.base)]
      \node[anchor=base,scale=.8] at (1,-.1) {\(\emptyset\)};
      \node[anchor=base,scale=.8] at (2,-.1) {\(\{a\}\)};
      \node[anchor=base,scale=.8] at (3,-.1) {\(\{b\}\)};

      \node[anchor=base,scale=.8] at (4.3,-1) {\(\sigma_0\)};
      \node[anchor=base] at (1,-1) {\(1\)};
      \node[anchor=base] at (2,-1) {\(1\)};
      \node[anchor=base] at (3,-1) {\(0\)};

      \node[anchor=base,scale=.8] at (4.3,-2) {\(\sigma_{\ge1}\)};
      \node[anchor=base] at (.5,-2)
      {\(\left(\vphantom{\rule{0pt}{6ex}}\right.\)};
      \node[anchor=base] at (1,-2) {\(1\)};
      \node[anchor=base] (a) at (2,-2) {\(1\)};
      \node[anchor=base] at (3,-2) {\(2\)};
      \node[anchor=base] at (3.5,-2)
      {\(\left.\vphantom{\rule{0pt}{6ex}}\right)\)};

      \node[anchor=base,scale=.8] at (4.3,-3) {\(\rho\)};
      \node[anchor=base] at (1,-3) {\(1\)};
      \node[anchor=base] at (2,-3) {\(2\)};
      \node[anchor=base] at (3,-3) {\(1\)};
    \end{tikzpicture}
    \]
    By the structure of $J$, it follows that
    $M'=(m_{\tau',\phi'})_{\tau'\in \A, \phi'\in \B}=M$
    if we use the same order of rows and columns for $M'$.

    For $\tau, \tau'\in \A$, $\phi\in B$, and $\phi'\in \B'$,
    let $m_{\tau, \tau',\phi, \phi'}$ be the number of partial solutions
    $S\in \CS(\tau, \tau')$ with $S\cap \{a,a',b,b'\}=\phi\cup \phi'$.
    Then, we have
    \[
      \ext_{\CJ}(\tau, \tau') =
        \sum_{\phi \in \B, \phi'\in \B'}
          m_{\tau, \tau',\phi, \phi'} \cdot f_C(\phi\cup \phi').
    \]
    Moreover, it holds that
    $m_{\tau, \tau',\phi, \phi'}= m_{\tau,\phi} \cdot m_{\tau',\phi'}$.
    Using this fact,
    for some corresponding order of tuples $(\tau, \tau')\in \A^2$
    and $(\phi,\phi')\in \B\times \B'$
    (lexicographic order w.r.t.~the orders $\sigma_0<\sigma_{\ge1}<\rho$
    and $\emptyset <\{a\},\{a'\} < \{b\},\{b'\}$),
    we have
    \[
      \left(\ext_{\CJ}(\tau, \tau')\right)_{(\tau, \tau')\in \A^2} = (M\otimes M) \cdot
      \left(f_C(\phi\cup \phi')\right)_{(\phi,\phi')\in \B\times \B'},
    \]
    where $\otimes$ denotes the Kronecker product of matrices.
    Now, observe that $M$ is invertible, and consequently, $M\otimes M$ is invertible.
    So, we can set
    \begin{equation}\label{eq:tensor}
      \left(f_C(\phi\cup \phi')\right)_{(\phi,\phi')\in \B\times \B'} =(M\otimes M)^{-1} \mathbf{e}
    \end{equation}
    where we can choose $\mathbf{e}$ as the vector of our choice,
    that is, as $\left(\ext_{\CJ}(\tau, \tau')\right)_{(\tau, \tau')\in \A^2}$.
    Note that $f_C$ can be efficiently computed.
  \end{claimproof}

  \subparagraph*{Constructing the Target Instance.}
  As mentioned above, we now use the gadget $\CJ$
  to ensure that all pairs $z$ and $z_Z$ have the same selection status in $G'$.
  Given $G'$, we define an instance $I^*=(G^*,\lambda)$ of \emph{$q$-relation-weighted}
  \srCountDomSetRel[\{(\ZZ_{\ge1},\NN)\},\altRel],
  where $G^*$ is a graph with weighted relations,
  and $\lambda$ is the mapping that determines
  which vertices are $(\sigma, \rho)$-vertices,
  and which are $(\ZZ_{\ge1},\NN)$-vertices.
  We choose $q$ later. 
  For each pair $(Z,z)$ in $G'$ with $Z\in \CZ$ and $z\in Z$,
  we create a copy $J_{(Z,z)}$ of the gadget $\CJ=(J, \{\port,\port'\})$
  from above 
  and identify $\port$ with $z$ and $\port'$ with $z_Z$.

  Then, we obtain the graph with relations $G^*$ from $I^*$ as the union of $G'$
  and $\{J_{(Z,z)} \mid Z\in \CZ, z\in Z\}$.
  We choose $\lambda$ such that it coincides with $\lambda_{\CJ}$ on each copy of $\CJ$, and such that all vertices of $G'$
  (including $U$) are $(\sigma,\rho)$-vertices (that is, only the copies of $c$ and $c'$ in $\CJ$ are $(\ZZ_{\ge 1}, \NN)$-vertices).
  The relations transfer directly and are not changed.

  \subparagraph*{Correctness.}
  A \emph{configuration} is a function $\gamma$ that maps each pair $(Z,z)$
  with $Z\in \CZ$ and $z\in Z$ to a pair of states
  $(\gamma_1(Z, z), \gamma_2(Z, z)) \in \A^2$.
  Let $\Gamma$ be the set of configurations.
  We say that a solution of ($q$-relation-weighted)
  \srCountDomSetRel[\{(\ZZ_{\ge1},\NN)\},\altRel] on input $I^*$
  is a \emph{$\gamma$-selection} if, for each
  pair $(Z,z)$, the vertex $z$ is in state $\gamma_1(Z,z)$
  (with respect to $J_{(Z,z)}$),
  and $z_Z$ is in state $\gamma_2(Z,z)$.

  Recall that each solution $S$ of $I^*$ has a weight $f(S)=\prod_{C\in \CC} f_C(S\cap \scope(C))$, where the product is over all constraints of $I^*$.
  For each $\gamma \in \Gamma$,
  let $\#S_\gamma(I^*)$ be the sum of weights $f(S)$
  over all $\gamma$-selections of $I^*$.
  Then, $\sum_{\gamma\in \Gamma} \#S_\gamma(I^*)$ is the output of
  \srCountDomSetRel[\{(\ZZ_{\ge1},\NN)\},\altRel] on input $I^*$.

  A selection of vertices from $G'$
  is \emph{compatible} with a configuration $\gamma$
  if it can be extended to a $\gamma$-selection of $I^*$.
  Let $\#S_\gamma(G')$ be the number of selections of $G'$
  that are compatible with $\gamma$.
  (Here, we can use the number of solutions
  as the weights of all relations in $G'$ are $1$.)
  \begin{claim}
  	We have
  	\[
  		\#S_\gamma(I^*) = \#S_\gamma(G') \cdot \prod_{Z\in \CZ, z\in Z}
  	\ext_{\CJ}(\gamma_1(Z,z),\gamma_2(Z,z)).
  	\]
  \end{claim}
  \begin{claimproof}
  	We partition the set of $\gamma$-selections
    according to their restriction to $V(G')$.
    Hence, the set of restrictions that appear in the partition
    are precisely those sets of vertices from $G'$
    that are compatible with $\gamma$.
    Now, let $S'$ be some fixed set of vertices from $G'$
    that is compatible with $\gamma$.
    We analyze in how many ways $S'$ can be extended
    to a $\gamma$-selection of $I^*$.
    For each $Z \in \CZ$ and for each $z\in Z$,
  	we can choose any extension to $J_{(Z,z)}$
    in which $z$ has state $\gamma_1(Z,z)$ and $z_Z$ has state $\gamma_2(Z,z)$.
    Note that these extensions are independent
    of the selected vertices in other copies of $J$.
    Here, we use the fact that we defined extensions to $\CJ$
    in a way that precisely captures the constraints
    that are imposed by a $\alt{\lambda}$-set.
    In particular, we do \emph{not} require
    the copies of $a$, $a'$, $b$, and $b'$ to act as $(\sigma,\rho)$-vertices
    since they are in the scope of the relation $R_v$,
    and therefore, act as $(\NN,\NN)$-vertices.
    So, if we take the weights into account,
    the set $S'$ contributes $\prod_{Z\in \CZ, z\in Z}
  	\ext_{\CJ}(\gamma_1(Z,z),\gamma_2(Z,z))$
    (the value of this term is actually independent of $S'$)
    to the weighted sum $\#S_\gamma(I^*)$.
  \end{claimproof}

  It follows from \cref{clm:tensor}
  that $\#S_\gamma(I^*)$ can be non-zero only
  if, for each $(Z,z)$, the pair $(\gamma_1(Z,z), \gamma_2(Z,z))$
  is one of $(\sigma_0,\sigma_0)$ or $(\rho, \rho)$.
  Moreover, in this case, we have $\#S_\gamma(I^*)=\#S_\gamma(G')$.
  We say that configurations with this property are \emph{good},
  and denote the set of good configurations by $\Gamma_{\!\textsf{good}}$.
  Consequently, the output on input $I^*$ is
  \[
    \sum_{\gamma\in \Gamma} \#S_\gamma(I^*)
    = \sum_{\gamma\in \Gamma_{\!\textsf{good}} } \#S_\gamma(G')
    .
  \]
  Note that the solutions that contribute to
  $\sum_{\gamma\in \Gamma_{\!\textsf{good}} } \#S_\gamma(G')$
  are in a one-to-one correspondence
  to the solutions of the original problem \srCountDomSetRel on input $G$.
  To see this, note that, for both allowed pairs of states $(\sigma_0,\sigma_0)$ or $(\rho, \rho)$, the corresponding vertices $z$ and $z_Z$ have the same selection status.
  Moreover, if $z$ is selected, then it does not obtain additional selected neighbors from $J_{(Z,z)}$, and therefore it has some $s\in \sigma$ selected neighbors in $V$.
  If $z$ is unselected, then it may obtain additional selected neighbors from $J_{(Z,z)}$, but this does not matter as $\rho=\NN$, and therefore, $z$ still has $r\in \rho$ selected neighbors in $V$.
  So, the solutions that contribute to
  $\sum_{\gamma\in \Gamma_{\!\textsf{good}} } \#S_\gamma(G')$
  are precisely those $(\sigma,\rho)$-sets of the original graph $G$
  in which each vertex $z\in Z$ is selected
  if and only if $z_Z\in Z'$ is selected.
  We have observed earlier that these are precisely
  the solutions of \srCountDomSetRel on input $G$.

  \subparagraph*{Analyzing the Reduction.}
  Finally, it remains to show that the reduction is a \pwar-reduction.
  First, note that all of the weights are either $0$, $1$,
  or solutions of \cref{eq:tensor}, in which $M$ is a fixed matrix
  and $\mathbf{e}$ is a vector in $\{0,1\}^9$.
  Hence there is some constant upper bound $q$
  on the number of different weights.
  Second, the reduction is arity-preserving as all relations in $I^*$
  are the copy of some relation from $G$,
  or otherwise they are contained in $\CJ$,
  in which case they have arity $4$.
  Third, the reduction is pathwidth-preserving.
  To see this, recall that in the path decomposition associated with $G$,
  for each set $Z\in \CZ$, there is some bag
  that contains all vertices of $\CZ$.
  For each $z\in Z$, we can add a copy of this bag
  that also contains the (constant number of) vertices
  of the copy $J_{Z,z}$ of $\CJ$.
\end{proof}

In the next two steps, we show how to model relation weights. Instead of carrying through
the $(\ZZ_{\ge1},\NN)$-vertices used in \cref{lem:RelfromweightedRelS}, we state the result for arbitrary sets of pairs $\CP$.
We return to work with $(\ZZ_{\ge1},\NN)$-vertices in \cref{lem:HWgeS}.
For now, we first replace the weighted relations by weighted vertices,
and then completely remove them from the instance.
The proofs are analogous to the results presented in
\cref{sec:high-level:counting}.
To make this fact easily verifiable,
we give the slightly modified constructions.

\begin{lemma}
  \label{lem:vertexWeightedRelS}
  Let $(\sigma,\rho)\in \allSetsne^2$ be non-trivial
  with $\rho= \NN$ and cofinite $\sigma$.
  For all $\CP \subseteq \allSets^2$
  and for all constants $q$,
  there is a \pwar-reduction from
  $q$-relation-weighted \srCountDomSetRel[\CP,\altRel]
  to $q$-vertex-weighted \srCountDomSetRel[\CP, \altRel].
\end{lemma}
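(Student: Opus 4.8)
The plan is to transcribe the proof of \cref{lem:count:relationWeightedToVertexWeighted} --- the analogous statement for the standard relation semantics --- which in the $\altRel$ setting in fact becomes \emph{simpler}, because any vertex in the scope of a relation is automatically a $(\NN,\NN)$-vertex and hence needs no gadget to satisfy a $\sigma$- or $\rho$-constraint. Fix a $q$-relation-weighted instance $(G,\lambda)$ of \srCountDomSetRel[\CP,\altRel] with $G=(V,E,\CC)$, and let $w_1,\dots,w_q$ enumerate the distinct weights occurring in the constraints of $G$ (pad with dummies if there are fewer). For each weighted constraint $C\in\CC$ I would introduce $q$ fresh vertices $v_1^C,\dots,v_q^C$, give $v_i^C$ the vertex weight $w_i$ and every vertex of $V$ the weight $1$, let $\mu$ agree with $\lambda$ on $V$ and map each $v_i^C$ to $0$, and replace $C$ by an \emph{unweighted} relational constraint $C'$ with $\scope(C')\deff\scope(C)\cup\{v_1^C,\dots,v_q^C\}$ whose relation accepts a selection $Y\subseteq\scope(C')$ exactly when $f_C(Y\cap\scope(C))\neq 0$, the unique $i$ with $w_i=f_C(Y\cap\scope(C))$ has $v_i^C\in Y$, and $v_j^C\notin Y$ for all $j\neq i$. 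Call the result $(G',\mu)$; it is an instance of $(q+1)$-vertex-weighted \srCountDomSetRel[\CP,\altRel] (the distinct vertex weights lie in $\{1,w_1,\dots,w_q\}$), which --- as in \cref{lem:count:relationWeightedToVertexWeighted}, and since $q$ ranges over all constants --- is absorbed into the $q$-vertex-weighted problem.

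The two observations making this work are: (a) each $v_i^C$ lies in $\scope(C')$, so under $\altRel$ it is a $(\NN,\NN)$-vertex and may be selected or not freely --- hence, unlike in \cref{lem:count:relationWeightedToVertexWeighted}, \emph{no} auxiliary $\{\sigma_{\sigMin},\rho_{\sigMin}\}$-provider needs to be attached to it; and (b) the scope-vertex set of $G'$ is precisely $U'=U\cup\{v_i^C : C,i\}$, where $U$ is the scope-vertex set of $G$, so every vertex of $V$ keeps its role and nothing else about the graph changes. For correctness I would check that every $\alt{\lambda}$-set $X$ of $G$ with all $f_C(X\cap\scope(C))\neq 0$ extends to a \emph{unique} $\alt{\mu}$-set $Y=X\cup\{v_{i(C)}^C:C\in\CC\}$ of $G'$ (with $i(C)$ determined by $X$), that $\prod_{v\in Y}\wt(v)=\prod_{C\in\CC}f_C(X\cap\scope(C))$ for this $Y$, and that $\alt{\lambda}$-sets with some vanishing relation weight have no extension and contribute $0$ on both sides. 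Summing over all $\alt{\mu}$-sets then equates the output of $q$-vertex-weighted \srCountDomSetRel[\CP,\altRel] on $(G',\mu)$ with the output of $q$-relation-weighted \srCountDomSetRel[\CP,\altRel] on $(G,\lambda)$.

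Finally, the reduction is pathwidth- and arity-preserving, verified exactly as in \cref{lem:count:relationWeightedToVertexWeighted} but even more cheaply since there are no provider bags to add: given a path decomposition of $G$ in which each $\scope(C)$ is a clique (hence contained in some bag $B_C$), for each weighted $C$ duplicate $B_C$ and add $v_1^C,\dots,v_q^C$ to the copy, raising the width by at most $q=\Oh(1)$; and the constraint $C'$ has arity $|\scope(C)|+q\le\ar(G)+q$ with all other relations untouched. The only part that demands real care --- and the part I would write out in detail --- is the bookkeeping of the $\altRel$ semantics in the correctness argument, in particular the interplay between ``relation weight $0$'' and ``no valid extension'', ensuring the \emph{weighted} sum (not just the solution count) is preserved; everything else is a direct translation of the earlier proof.
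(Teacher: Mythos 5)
Your construction is exactly the paper's proof: you transcribe the reduction of \cref{lem:count:relationWeightedToVertexWeighted}, drop the $\{\sigma_s,\rho_r\}$-providers because the fresh weight-indicator vertices lie in a relation scope and hence act as $(\NN,\NN)$-vertices under the $\altRel$ semantics, and verify pathwidth/arity by duplicating the bag containing each scope. Your extra remark about the weight $1$ possibly pushing the count to $q+1$ distinct vertex weights (absorbed since $q$ is an arbitrary constant) is a harmless refinement the paper glosses over; otherwise the two arguments coincide.
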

\begin{proof}
  We follow the ideas behind the proof of
  \cref{lem:count:relationWeightedToVertexWeighted}.
  The only difference to that construction
  is that here we do not need the $\{\sigma_s,\rho_r\}$-providers that were used in \cref{lem:count:relationWeightedToVertexWeighted}.
  The purpose of these providers was to ensure that the newly introduced vertices
  that are not adjacent to other vertices,
  that is, $v_1,\dots,v_q$ in the following,
  get a feasible number of neighbors depending on their selection status.
  Omitting the aforementioned providers is not an issue
  as these vertices are in the scope of the relation $R$
  and, by the definition of \srCountDomSetRel[\CP, \altRel],
  such vertices can now be treated like $(\NN,\NN)$-vertices.
  Hence, the previous proof goes through without any additional changes.

  We give the modified construction.
  For a fixed $q$,
  let $w_1,\dots,w_{q'}$ be the $q' \le q$ different weights used
  by the relations of a $q$-relation-weighted
  \srCountDomSetRel[\CP,\altRel] instance $G$.
  Formally, we have
  $\{ w_1, \dots, w_{q'} \} = \{ f_C(U) \mid C \in \CC, U \subseteq \scope(C) \}$.
  For ease of notation,
  we assume without loss of generality that $q'=q$ in the following.
  We apply the following modification for all constraints of $G$ to form an instance $G'$ of $q$-vertex-weighted \srCountDomSetRel[\CP, \altRel].

  For each $C\in \CC$, we introduce $q$ new vertices $v_1,\dots,v_q$
  which form an independent set.
  We replace the weighted relational constraint $C$ by an unweighted relational constraint $C'$ with $\scope(C')\coloneqq\scope(C)\cup \{v_1, \ldots, v_q\}$. The new constraint $C'$ behaves (i.e., accepts or rejects)
  in exactly the same way as $C$ on the scope of $C$,
  but whenever $C$ accepts with weight $w_i$,
  $C'$ also forces the vertex $v_i$ to be selected whereas all the vertices $v_j$ with $i\neq j$ are forced to be unselected.
  The weight function $\wt$ assigns weight 1
  to all vertices of the original instance,
  and for each $i$ it assigns $\wt(v_i)=w_i$.

  The reduction increases pathwidth by at most $q$,
  as for each relation, there is a bag containing its scope,
  and we can copy this bag and add the additional vertices $v_1, \ldots, v_q$ to this bag.
  Moreover, the arity of the instance increases by at most $q$.
\end{proof}

As a next step, we remove the vertex weights from the vertices
to obtain an unweighted instance of the problem.
We follow the proof of \cref{lem:count:vertexWeightedToUnweighted} for this.

\begin{lemma}
  \label{lem:relWeightedRelS}
  Let $(\sigma,\rho)\in \allSetsne^2$ be non-trivial
  with $\rho= \NN$ and cofinite $\sigma$.
  For all $\CP \subseteq \allSets^2$
  and for all constants $q$,
  there is a \pwar-reduction from
  $q$-vertex-weighted \srCountDomSetRel[\CP, \altRel]
  to unweighted \srCountDomSetRel[\CP,\altRel].
\end{lemma}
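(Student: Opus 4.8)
The plan is to mirror the interpolation argument of \cref{lem:count:vertexWeightedToUnweighted}, exploiting the $\altRel$ semantics to drop the $\{\sigma_{\sigMin},\rho_{\sigMin}\}$-providers used there, exactly as was done when passing from \cref{lem:count:relationWeightedToVertexWeighted} to \cref{lem:vertexWeightedRelS}. Let $G=(V,E,\CC)$ with weight function $\wt\colon V\to\SetQ$ and distinct weight values $w_1,\dots,w_q$ be an instance of $q$-vertex-weighted \srCountDomSetRel[\CP,\altRel], and let $n=\abs{V}$. Replace every occurrence of $w_i$ by a formal variable $x_i$; then the weighted count of $\alt\lambda$-sets becomes a polynomial $P\in\SetQ[x_1,\dots,x_q]$ in which each $x_i$ has degree at most $n$ (a vertex of weight $w_i$ is selected at most once per solution). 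If, for every $i$, we can realize $n+1$ distinct values of $x_i$ using unweighted gadgets, then \cref{fct:count:recoveringPolynomial} recovers all coefficients of $P$ from the corresponding calls to an unweighted oracle in time $\Oh((n+1)^{3q})$, which is polynomial since $q$ is a constant; the sought-after answer is then $P(w_1,\dots,w_q)$.

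It therefore suffices to realize a weight of the form $2^\ell$ at a given vertex $v$, for $\ell\in\fragment{0}{n}$. For this I would introduce $\ell$ fresh vertices $v_1,\dots,v_\ell$ and, for each $j\in\numb{\ell}$, a relational constraint $C_j$ with $\scope(C_j)=(v,v_j)$ and $\rel(C_j)=\{\emptyset,\{v\},\{v,v_j\}\}$, i.e.\ $v_j$ may be selected only when $v$ is selected. Whenever $v$ is selected each $v_j$ is free, contributing a factor $2^\ell$; whenever $v$ is unselected all $v_j$ are forced unselected and the contribution is $1$. The key point — and the reason this is simpler than \cref{lem:count:vertexWeightedToUnweighted} — is that each $v_j$ lies in the scope of $C_j$, so under the $\altRel$ semantics it behaves as an $(\NN,\NN)$-vertex and needs no attached provider to obtain a feasible number of selected neighbors; hence the gadget for one weighted vertex has only $\Oh(\ell)$ vertices and arity-$2$ relations. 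As in \cref{lem:vertexWeightedRelS}, attaching the $C_j$'s does not disturb the rest of the instance: the weighted vertices to which this is applied are already in the scope of a relation (so $v$ remains an $(\NN,\NN)$-vertex), and vertices of weight $1$ require no change at all; the ambient assumptions $\rho=\NN$ and $\sigma$ cofinite are inherited from the section but not otherwise needed here.

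For the width and arity bookkeeping: given a path decomposition of $G$, for each weighted vertex $v$ realizing $2^\ell$ I would pick one bag containing $v$, make $\ell$ fresh copies of it, and place $v_j$ together with the binary constraint $C_j$ into the $j$-th copy; this raises the width by at most one and the arity by at most a constant, so the reduction is pathwidth-preserving and arity-preserving, hence a \pwar-reduction. The only genuinely delicate part is the same one already present in \cref{lem:count:vertexWeightedToUnweighted}: checking that realizing all weights $2^\ell$ up to $\ell=n$ yields an interpolation grid $\Xi_1\times\dots\times\Xi_q$ with $\abs{\Xi_i}$ one more than the degree of $x_i$ in $P$, as required by \cref{fct:count:recoveringPolynomial}, and that the intermediate integers stay polynomially bounded in bit-length so that all the arithmetic (including the final evaluation at $w_1,\dots,w_q$) is efficient; both follow exactly as in the weighted-to-unweighted reduction of \cref{sec:high-level:counting}.
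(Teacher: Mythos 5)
Your proposal matches the paper's proof essentially verbatim: the same interpolation via \cref{fct:count:recoveringPolynomial} with one variable per weight class, the same realization of weights $2^\ell$ by $\ell$ fresh vertices attached through binary relations $\{\emptyset,\{v\},\{v,v_j\}\}$, the same observation that the $\altRel$ semantics make the providers of \cref{lem:count:vertexWeightedToUnweighted} unnecessary, and the same bag-duplication argument for pathwidth and arity. No gaps; if anything, your explicit remark that the non-unit-weight vertices are already in relation scopes (so attaching $C_j$ does not alter their constraints) is slightly more careful than the paper's own write-up.
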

\begin{proof}
  As for the previous proof, the only difference is
  that the $\{\sigma_s,\rho_r\}$-providers are omitted from the construction.
  As mentioned before, this does not incur any issues
  in this setting as the portals of these providers
  are always in the scope of a relation.
  As a consequence, they behave as $(\NN, \NN)$-vertices.

  We use known interpolation techniques
  to remove the vertex weights.
  Let $G$ be the graph with relations,
  and let $\wt \from V(G) \to \SetQ$ be the vertex weight function
  of a given $q$-vertex-weighted \srCountDomSetRel[\CP, \altRel] instance $I$.
  Let $w_1, \dots, w_q$ be the distinct values of the vertex weights,
  that is, $\{w_1, \dots, w_q\} = \{ \wt(v) \mid v \in V(G) \}$.

  We replace each weight $w_i$ by a variable $x_i$
  and treat the sought-after weighted sum of solutions of $I$ as a polynomial $P$ in the $q$ variables $x_1,\dots,x_q$.
  Observe that there are at most $n$ vertices,
  and hence, the total degree is at most $n$ for each variable.
  Hence, if we can realize all combinations of $n+1$ different weights for each variable $x_i$,
  then we can use \cref{fct:count:recoveringPolynomial}
  to recover the coefficients of $P$ in time $\Oh((n+1)^{3q})$,
  which is polynomial in $n$ as $q$ is a constant.
  Then, we can output $P(w_1, \dots, w_q)$ to recover the original solution.

  It remains to realize $n+1$ different weights for each variable.
  For this, it suffices
  to realize weights of the form $2^\ell$.

  Let $v$ be the vertex for which we want to realize weight $\wt(v)=2^\ell$.
  For this, we introduce $\ell$ new vertices $v_1,\dots,v_\ell$.
  Moreover, we add a relational constraint $C_j$ with scope $\{v, v_j\}$ that ensures that if $v$ is unselected,
  then $v_j$ must also be unselected, but if $v$ is selected, then $v_j$ can be either selected or unselected.
  Whenever $v$ is selected,
  this construction contributes a factor of $2^\ell$ to the solution,
  whereas if $v$ is unselected, then it contributes only a factor of $1$.

  Observe that the arity of the resulting graph increases by at most $2$
  as $C$ has arity $2$ and no other relations are introduced.

  It is straightforward to see
  that this modification does not change the pathwidth too much.
  Indeed, for each $v$, we can pick an arbitrary bag containing $v$,
  duplicate this bag for each $j \in \numb{\ell}$,
  and then add $v_j$ to this copy of the bag.
  The claim then directly follows.
\end{proof}

In the following, we show how to replace arbitrary relations
by gadgets involving only $\alt{\HWsetGen{=1}}$ relations. Technically, such an instance of \srCountDomSetRel[{\altRel}] contains only $\HWsetGen{=1}$ relations.
However, we still refer to them as $\alt{\HWsetGen{=1}}$ relations (and also $\alt{\EQset}$ relations) to highlight that these relations are interpreted differently in the context of \srCountDomSetRel[{\altRel}].
Correspondingly, it is convenient for the following statements
to use \srCountDomSetRel[\alt{\HWsetGen{=1}}]
to denote the problem \srCountDomSetRel[{\altRel}] restricted to instances
in which all of the relations are \HWeq[\ell]{1} relations
of any arity $\ell$.
Analogously, we define \srCountDomSetRel[\alt{\HWsetGen{\ge 1}}].
The proofs are very similar to the results from \cref{sec:RelToHW1}.

\begin{lemma} \label{lem:RelSfromHWS}
  Let $(\sigma,\rho)\in \allSetsne^2$ be non-trivial with $\rho= \ZZ_{\ge0}$ and cofinite $\sigma$.
  Let $\CP$ be some set of pairs from $\allSets^2$.
  For all constants $d$, there is a pathwidth-preserving reduction
  from \srCountDomSetRel[\CP,\altRel] with arity at most $d$
  to \srCountDomSetRel[\CP,\alt{\HWsetGen{=1}}].
\end{lemma}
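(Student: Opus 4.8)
The plan is to follow the same two-stage strategy that was used in \cref{sec:RelToHW1} to realize arbitrary relations: first realize arbitrary relations using only $\alt{\HWsetGen{=1}}$ and $\alt{\EQset}$ relations, and then realize the $\alt{\EQset}$ relations using only $\alt{\HWsetGen{=1}}$ relations. The crucial difference from \cref{sec:RelToHW1} is that we are working in the $\altRel$-setting, where every vertex in the scope of some relation is treated as a $(\NN,\NN)$-vertex. This actually \emph{simplifies} the constructions from \cref{lem:realizing:arbitraryRelations,lem:realizing:eqConditional}: in those lemmas, a substantial part of the work was to attach parsimonious $\{\sigma_s,\rho_r\}$-providers to the auxiliary vertices of the realization gadgets (such as the vertices $s^{(i,j)}$, $t^{(i)}$ in \cref{lem:realizing:arbitraryRelations}, or $v$ in \cref{lem:realizing:eqConditional}), precisely to ensure that those auxiliary vertices obtain a feasible number of selected neighbors. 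In the $\altRel$-world, any auxiliary vertex that lies in the scope of one of the newly introduced $\alt{\HWsetGen{=1}}$ or $\alt{\EQset}$ relations is automatically a $(\NN,\NN)$-vertex, so no providers are needed for it. One must only be mildly careful about any vertex of a realization that is \emph{not} itself in the scope of a relation — but the constructions can be arranged so that every non-portal vertex of a realization lies in the scope of at least one relation, which is already essentially the case in the original constructions.

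Concretely, I would proceed as follows. First I would state and prove a variant of \cref{lem:realizing:arbitraryRelations}: for an arbitrary $d$-ary relation $Q$, there is a graph with relations and portals that realizes $Q$ in the $\altRel$-sense, uses only $\alt{\HWsetGen{=1}}$ and $\alt{\EQset}$ relations of arity at most $2^d+1$, and has size $\Oh(2^d \cdot d)$. The construction mirrors the one in \cref{lem:realizing:arbitraryRelations}: introduce, for each $q_i \in Q$, vertices $s^{(i,j)}$ for $j \in \numb{d}\setminus q_i$ and a vertex $t^{(i)}$; put an $\alt{\EQ{k_i+1}}$ relation on $\{s^{(i,j)}\} \cup \{t^{(i)}\}$; put an $\alt{\HWeq[\abs Q]{1}}$ relation on $\{t^{(1)},\dots,t^{(\abs Q)}\}$; and for each $j \in \numb{d}$, put an $\alt{\HWeq{1}}$ relation on $\port_j$ together with all $s^{(\star,j)}$. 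The only change is that the parsimonious providers are dropped entirely, since every $s^{(i,j)}$ and every $t^{(i)}$ lies in the scope of an $\alt{\EQset}$ or $\alt{\HWsetGen{=1}}$ relation and thus is a $(\NN,\NN)$-vertex. The correctness argument (that the constructed graph realizes $L_Q$ in the $\altRel$-sense) is then a near-verbatim copy of the claim proof in \cref{lem:realizing:arbitraryRelations}, but with all the ``vertices from providers are selected such that their constraints are satisfied'' sentences deleted. I would similarly state and prove an $\altRel$-variant of \cref{lem:realizing:eqConditional} realizing $\alt{\EQ{k}}$ with only $\alt{\HWeq[2]{1}}$ relations: use $k$ portals $\port_1,\dots,\port_k$, one auxiliary vertex $v$, and pairwise $\alt{\HWeq[2]{1}}$ relations between $v$ and each $\port_i$; again no provider is needed for $v$ because $v$ sits in the scope of these relations.

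Given these two building blocks, the reduction itself is the analogue of the chain \cref{lem:realizing:parsiArbitraryToHWoneAndEQ} followed by \cref{lem:realizing:parsiHWoneAndEQToHWone}. Starting from an instance of \srCountDomSetRel[\CP,\altRel] of arity at most $d$, replace each relation that is not already $\alt{\HWsetGen{=1}}$ — this includes the pairs' constraints are untouched, only the relational constraints are modified — by the corresponding $\altRel$-realizer; this gives an intermediate instance over $\alt{\HWsetGen{=1}}$ and $\alt{\EQset}$ of arity at most $2^d+1$, and then replace each $\alt{\EQset}$ relation by its $\alt{\HWeq[2]{1}}$ realizer. For the pathwidth bound, one argues exactly as in \cref{lem:realizing:parsiArbitraryToHWoneAndEQ,lem:realizing:parsiHWoneAndEQToHWone}: since the scope of each relation forms a clique in the width measure for graphs with relations, there is a bag containing it, and one duplicates that bag once per replaced relation and adds the (size-$\Oh(2^d d)$, hence constant since $d$ is a constant) realizer's vertices to the copy; duplicating a fresh bag for each relation separately is what keeps the increase additive. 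Since $\CP$ is untouched and the reduction is parsimonious by construction of the realizers, the statement follows. The main technical point to get right — and the only place one must be genuinely careful — is verifying that every non-portal vertex of each realizer really does lie in the scope of some relation of the realizer, so that it is legitimately a $(\NN,\NN)$-vertex and no $\sigma$- or $\rho$-constraint needs to be checked for it; this is what licenses dropping the providers, and it is the one step where the $\altRel$-semantics genuinely changes the argument rather than just shortening it. I would also double-check the degenerate cases ($k=1$ for $\alt{\EQ{k}}$, and $\abs Q \le 1$ or $d$ small) which are handled trivially just as in the original lemmas.
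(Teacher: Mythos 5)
Your proposal matches the paper's proof: the same two-stage replacement (arbitrary relations via $\alt{\EQset}$ and $\alt{\HWsetGen{=1}}$ gadgets mirroring \cref{lem:realizing:arbitraryRelations}, then $\alt{\EQset}$ via a single auxiliary vertex with pairwise $\alt{\HWeq[2]{1}}$ relations as in \cref{lem:realizing:eqConditional}), with the identical key observation that the $\altRel$-semantics makes all auxiliary vertices $(\NN,\NN)$-vertices so the parsimonious providers can be dropped, and the same bag-duplication pathwidth argument. This is essentially the paper's argument.
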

\begin{proof}
  The proof consists of two parts.
  First, we replace arbitrary relations by gadgets
  which use only $\alt{\EQset}$ and $\alt{\HWsetGen{=1}}$ as relations.
  In the second part, we then also remove the $\alt{\EQset}$ relations,
  and replace them by gadgets using only $\alt{\HWsetGen{=1}}$ relations.
  The first part is analogous to the proofs of
  \cref{lem:realizing:arbitraryRelations,%
  lem:realizing:parsiArbitraryToHWoneAndEQ},
  and the second part is based on the proofs of
  \cref{lem:realizing:eqConditional,lem:realizing:parsiHWoneAndEQToHWone}.

  \begin{description}
    \item[Removing arbitrary relations.]
    As for the previous two lemmas,
    the only difference to the earlier results from \cref{sec:RelToHW1}
    is the fact that the $\{\sigma_s,\rho_r\}$-providers
    are omitted from the construction.
    For completeness, we give the construction of the gadget
    which replaces each arbitrary relation in the following.

    Given a relation $R=\{r_1,\dots,r_{\abs{R}}\}$,
    where $r_i \subseteq \numb{d'}$ for $d' \le d$, we define a graph with portals $H_R=(G,\Port)$
    with $\Port=\{\port_1,\dots,\port_{d'}\}$.
    For each set \( r_i \),
    we add \( \abs{\numb{d'} \setminus r_i} \)
    independent vertices \( s^{(i,j)} \) with \(j \in \numb{d'} \setminus r_i\),
    and we also add an extra vertex \( t^{(i)} \) to \(G\).
    Next, we introduce an $\alt{\EQset}$ relation of arity $\abs{\numb{d} \setminus r_i} + 1$
    that observes the vertices \(s^{(i,\star)}\) and \(t^{(i)}\).
    Additionally, we add an $\abs{R}$-ary $\alt{\HWsetGen{=1}}$ relation that observes the vertices
    \(t^{(\star)}\).
    Finally, for each \(j \in \numb{d}\),
    we introduce a $\alt{\HWsetGen{=1}}$ relation that observes \(\port_j\)
    and all vertices \(s^{(\star,j)}\).

    For a given \srCountDomSetRel[\CP,\altRel] instance $H$,
    we replace all relations $R$ that are neither $\alt{\EQset}$ nor $\alt{\HWsetGen{=1}}$ relations by the corresponding graph $H_R$,
    where we identify the portals of $H_R$
    with the respective vertices in the scope of $R$.
    Let $H'$ be the new instance of
    \srCountDomSetRel[\CP,\alt{\EQset},\alt{\HWsetGen{=1}}].
    It follows from the proofs of \cref{lem:realizing:arbitraryRelations,%
    lem:realizing:parsiArbitraryToHWoneAndEQ}
    that the number of solutions of $H$ and $H'$ are the same.
    Moreover, as the arity in the source problem is constant,
    the modification does not change the pathwidth
    by more than an additive constant (see \cref{lem:realizing:parsiArbitraryToHWoneAndEQ} for the details).

    \item[Removing equality relations.]
    Let $H$ be an instance of
    \srCountDomSetRel[\CP,\alt{\EQset},\alt{\HWsetGen{=1}}].
    We replace each relation \EQ{k} by the graph $H_k$,
    which is defined as follows,
    to obtain a \srCountDomSetRel[\CP,\alt{\HWsetGen{=1}}] instance $H'$.

    The graph with portals $H_k=(G_k, \Port_k)$,
    where we have \(\Port_k = \{\port_1,\dots,\port_k\}\),
    contains a single vertex $v$ and a $\alt{\HWsetGen{=1}}$ relation between
    $v$ and each of $\port_1,\dots,\port_k$.

    It directly follows from the construction and the proofs
    of \cref{lem:realizing:eqConditional,lem:realizing:parsiHWoneAndEQToHWone}
    that the number of solutions for $H$ and $H'$ is the same,
    the pathwidth increases only by an additive constant,
    and the arity increases by at most $2$.
  \end{description}
  By first applying the first part of the proof and then the second part,
  the statement follows.
\end{proof}

As a next step, we show how to replace the $\alt{\HWsetGen{=1}}$ relations
by $\alt{\HWsetGen{\ge1}}$ relations.

\begin{lemma}\label{lem:HWeqS}
  Let $(\sigma,\rho)\in \allSetsne^2$ be non-trivial with $\rho= \ZZ_{\ge0}$ and cofinite $\sigma$.
  Let $\CP$ be some set of pairs from $\allSets^2$.
  Then, \srCountDomSetRel[\CP,\alt{\HWsetGen{=1}}] $\pwred$
  \srCountDomSetRel[\CP,\alt{\HWsetGen{\ge1}}].
\end{lemma}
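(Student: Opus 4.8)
The plan is to adapt the proof of \cref{lem:IIIrelversion} to the $\altRel$ setting. The key simplification is that in \srCountDomSetRel[\CP,\altRel] every vertex lying in the scope of some relation automatically behaves like an $(\NN,\NN)$-vertex, so we no longer need the $\{\sigma_s,\rho_r\}$-provider $J$ used in \cref{lem:IIrelversion,lem:IIIrelversion}: the auxiliary vertices can simply be bare vertices placed in the scope of a relation. Concretely, let $I=(G,\lambda)$ be an instance of \srCountDomSetRel[\CP,\alt{\HWsetGen{=1}}] with relational constraints $\CC$ (all enforcing $\HWeq{1}$) and $\CZ\deff\{\scope(C)\mid C\in\CC\}$. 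For a positive integer $x$, I would build an instance $I'_x$ of \srCountDomSetRel[\CP,\alt{\HWsetGen{\ge1}}] by deleting the constraints in $\CC$, adding a $\HWge{1}$ constraint on $Z$ for every $Z\in\CZ$, and, for every $Z\in\CZ$ and every $Z'\subsetneq Z$ with $\abs{Z'}=\abs{Z}-1$, introducing $x$ fresh isolated vertices $p^{(Z,Z')}_1,\dots,p^{(Z,Z')}_x$ and a $\HWge{1}$ constraint with scope $Z'\cup\{p^{(Z,Z')}_i\}$ for each $i\in\numb{x}$. The $\CP$-types of the original vertices are kept, while the new vertices may be assigned any type since they sit in a relation scope.

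Next comes the interpolation step, following \cref{lem:IIIrelversion} verbatim except for the values of the two extension counts. Call a selection $S\subseteq V(G)$ \emph{good} if every vertex outside all relation scopes satisfies its $(\sigma,\rho)$- or $\CP$-constraint and each $Z\in\CZ$ contains at least one vertex of $S$. For a good $S$ and a codimension-one subset $Z'\subsetneq Z$: if $Z'\cap S\neq\emptyset$ the $\HWge{1}$ constraints leave each $p^{(Z,Z')}_i$ free, giving $f_1\deff 2^x$ extensions, while if $Z'\cap S=\emptyset$ each $p^{(Z,Z')}_i$ is forced into the solution, giving $f_0\deff 1$ extension. Declaring $Z$ \emph{desired} in $S$ if exactly one vertex of $Z$ is selected and \emph{undesired} if at least two are, a desired $Z$ contributes $f_1^{\abs{Z}-1}f_0$ (one codimension-one subset is entirely unselected) while an undesired $Z$ contributes $f_1^{\abs{Z}}$ (every codimension-one subset meets $S$). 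Writing $a_i$ for the number of good $S$ with exactly $i$ undesired sets,
\[
  \#S(I'_x)=2^{x\sum_{Z\in\CZ}(\abs{Z}-1)}\cdot\sum_{i=0}^{\abs{\CZ}}a_i\,(2^x)^i ,
\]
so $\#S(I'_x)/2^{x\sum_{Z}(\abs Z-1)}$ is a polynomial of degree $\abs{\CZ}$ in $2^x$. Evaluating at $\abs{\CZ}+1$ distinct positive integers $x$ (the values $2^x$ being pairwise distinct) and interpolating recovers all $a_i$, in particular $a_0$, which counts the good $S$ in which every $Z\in\CZ$ is desired, i.e.\ has exactly one selected vertex. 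These are exactly the $\alt{\lambda}$-sets of $I$, so $a_0$ is the quantity to be computed.

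It remains to check that this is a pathwidth-preserving reduction. Each $Z\in\CZ$ is treated as a clique both in $I$ and in $I'_x$, so a path decomposition of $I$ has a bag $B_Z\supseteq Z$; for each of the at most $\abs{Z}$ subsets $Z'$ and each $i\in\numb{x}$ we duplicate $B_Z$ and add $p^{(Z,Z')}_i$, which keeps every new scope $Z'\cup\{p^{(Z,Z')}_i\}$ inside a single bag and raises the width by at most $1$. The arity grows by at most $1$, at most $\abs{Z}\cdot x$ new vertices are added per relation with $x=\Oh(n)$, and we make only $\Oh(n)$ oracle calls, so everything stays polynomial; this yields the claimed reduction. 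I do not anticipate a genuine obstacle here: the argument is combinatorial bookkeeping transplanted from \cref{lem:IIIrelversion}, and the only point that truly needs care is confirming that the provider $J$ can be dropped, which is immediate from the $\altRel$ semantics because the new vertices $p^{(Z,Z')}_i$ lie in relation scopes.
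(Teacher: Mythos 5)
Your proposal is correct and matches the paper's proof essentially verbatim: the paper also removes the $\HWeq{1}$ constraints, adds a $\HWge{1}$ constraint on each scope $Z$ and, for each codimension-one subset $Z'$, attaches $x$ fresh vertices each tied to $Z'$ by a $\HWge{1}$ constraint (these are bare vertices precisely because the $\altRel$ semantics make scope vertices act as $(\NN,\NN)$-vertices), yielding the same extension counts $2^x$ versus $1$, the same desired/undesired bookkeeping and polynomial $F\cdot\sum_i a_i (2^x)^i$, interpolation to $a_0$, and the same bag-duplication pathwidth argument.
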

\begin{proof}
  The proof is almost identical to that of \cref{lem:IIIrelversion}.
  Let $I=(V,E,\CC)$ be an instance of \srCountDomSetRel[\CP,\alt{\HWsetGen{=1}}].
  Let $\CZ=\{\scope(C) \mid C\in \CC\}$, and let $U$ be the union of the sets in $\CZ$.

  Let $G=(V,E)$.
  For a positive integer $x$, let $I_x$ be the instance of
  \srCountDomSetRel[\alt{\CP,\HWsetGen{\ge1}}] obtained from $G$
  by attaching, for each $Z\in \CZ$, a relational constraint
  with relation $\alt{\HWge{1}}$ and scope $Z$.
  In addition, we introduce, for each subset $Z'$ of $Z$ with $\abs{Z'}=\abs{Z}-1$,
  a total of $x$ $(\sigma,\rho)$-vertices
  $u^{(Z,Z')}_1,\dots,u^{(Z,Z')}_x$. For each $i\in \numb{x}$, we introduce a relation $\alt{\HWge{1}}$ with scope $Z'\cup \{u^{(Z,Z')}_i\}$.

  As in previous proofs, a selection of vertices from $V$ is \emph{good} if all vertices in $V\setminus U$ have a feasible number of selected neighbors.
  If, in a good selection $S$, some $Z'$ has at least one selected vertex,
  then there are $2^x$ feasible extensions
  to the vertices $u^{(Z,Z')}_1, \dots, u^{(Z,Z')}_x$.
  If no vertex of $Z'$ is selected, then there is only a single such extension (where all of $u^{(Z,Z')}_1, \ldots, u^{(Z,Z')}_x$ are selected).
  Continuing from this, we say that if a set $Z\in \CZ$ contains at least 2 selected vertices (from $S$), then it is \emph{undesired}, and if it contains precisely 1 selected vertex, then it is \emph{desired}.

  Observe that each $Z \in \CZ$ has to contain at least one selected vertex,
  that is, $Z$ is either desired or undesired.
  For an undesired set $Z\in \CZ$, there is at least one vertex selected in each of the $\abs{Z}$ corresponding subsets $Z'$ (this gives a total of $2^{x\abs{Z}}$ extensions), whereas, for a desired set $Z$, there is exactly one subset $Z'$ that is entirely unselected (which gives a total of $2^{x(\abs{Z}-1)}$ extensions).

  For $\CY\subseteq \CZ$, let $a_{\CZ'}$ be the number of good selections in which precisely the sets in $\CY$ are undesired and the sets in $\CZ\setminus \CY$ are desired. Similarly, let $a_i$ be the number of good selections in which precisely $i$ of the sets in $\CZ$ are undesired. Let $\#S(I_x)$ denote the number of solutions of $I_x$. Then,
  \begin{align*}
    \#S(I_x)
    &=\sum_{\CY\subseteq \CZ} a_{\CY}\cdot
      \left( \prod_{Z\in \CY} 2^{x\abs{Z}}
        \cdot \prod_{Z\notin \CY} 2^{x(\abs{Z}-1)} \right)\\
    &= 2^{x\sum_{Z\in \CZ}(\abs{Z}-1)}
    \cdot \sum_{\CY\subseteq \CZ} a_{\CZ'}\cdot  2^{x\abs{\CY}}\\
    &= F \cdot \sum_{i=0}^{\abs{\CZ}} a_{i}\cdot  2^{xi}
    \intertext{where we set}
    F &\deff 2^{x\sum_{Z\in \CZ}(\abs{Z}-1)}
    .
  \end{align*}
  Note that $\#S(I_x)/F$ is a polynomial in $2^x$ of degree $\abs{\CZ}$,
  and by choosing $\abs{\CZ}$ different values of $x$,
  we can use polynomial interpolation to recover the coefficients $a_i$.
  In particular, $a_{0}$ is the number of good selections of vertices from $V$
  in which no set in $\CZ$ is undesired,
  that is, every set $Z\in \CZ$ is desired (i.e., contains exactly one selected vertex). So, $a_0$ is precisely the sought-after number of solutions of instance $I$
  of \srCountDomSetRel[\CP,\alt{\HWsetGen{=1}}].

  Recall that the scope of each relation of $I$
  is considered to be a clique in the definition of the pathwidth of $I$.
  Using this, for $Z\in \CZ$, we can add,
  for each $Z'\subseteq Z$ with $\abs{Z'}=\abs{Z}-1$,
  the vertices $v^{(Z,Z')}_1,\dots, v^{(Z,Z')}_x$
  one after another to a new copy of the original bag containing the clique $Z$.
  Hence, the pathwidth of $I_x$ is at most that of $I$
  plus an additive constant.
\end{proof}

As the next step, we remove the relations from the instance.
We do this by making use of $(\NN, \emptyset)$-vertices,
that is, vertices which are forced to be selected,
as well as $(\ZZ_{\ge1},\NN)$-vertices, which we already used in \cref{lem:RelfromweightedRelS}.

\begin{lemma}\label{lem:HWgeS}
  Let $(\sigma,\rho)\in \allSetsne^2$ be non-trivial
  with $\rho= \ZZ_{\ge0}$ and cofinite $\sigma$.
  Let $\CP$ be some set of pairs from $\allSets^2$.
  Then, \srCountDomSetRel[\CP, \alt{\HWsetGen{\ge1}}]
  $\pwred$ \srCountDomSetRel[\CP + (\ZZ_{\ge1},\NN) + (\NN,\emptyset)],
  even if $(\NN,\emptyset)$ is $1$-bounded in the target problem.
\end{lemma}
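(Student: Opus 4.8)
The plan is to give a direct reduction that replaces each $\alt{\HWge{1}}$ relation by a gadget using a fresh $(\ZZ_{\ge1},\NN)$-vertex for each scope, together with a single global $(\NN,\emptyset)$-vertex, and then argue that the solution counts are preserved while the pathwidth and arity grow only by an additive constant. Concretely, given an instance $I=(V,E,\CC)$ of \srCountDomSetRel[\CP,\alt{\HWsetGen{\ge1}}] with $\CZ=\{\scope(C)\mid C\in\CC\}$, first introduce one $(\NN,\emptyset)$-vertex $w$ (which is forced to be selected, so it never affects the $\rho$-behaviour of anything since $\rho=\NN$, and never affects $\sigma$-behaviour since it is not constrained). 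Then, for each $Z\in\CZ$, introduce a single new vertex $z'$ of type $(\ZZ_{\ge1},\NN)$ that is completely joined to $Z$ and also adjacent to $w$. The point is that $z'$ is selected if and only if it gets selected; but being adjacent to $w$ (which is always selected) means that if $z'$ is selected, it automatically has a selected neighbour, so its $(\ZZ_{\ge1},\NN)$-constraint is always satisfiable. If $z'$ is unselected, then $\rho=\NN$ imposes nothing, so again no constraint. Thus $z'$ is entirely unconstrained; the only effect of adding it is through the relations of $\CP$ that might be imposed elsewhere — but here $z'$ is a $(\ZZ_{\ge1},\NN)$-vertex, not in any scope. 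That is still too weak: we need to actually \emph{force} at least one vertex of $Z$ to be selected.

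So the correct construction must make $z'$ genuinely observe whether $Z$ is nonempty. Here is the fix I would use, mirroring the spirit of \cref{lem:HWgeS}'s statement (use of both $(\ZZ_{\ge1},\NN)$ and $(\NN,\emptyset)$): make $z'$ a $(\ZZ_{\ge1},\NN)$-vertex \emph{not} adjacent to $w$, but completely joined to $Z$; and separately have $w$ be a $(\NN,\emptyset)$-vertex that is \emph{not} adjacent to anything relevant (it serves only to let us force selection of helper vertices via cliques, exactly as in \cref{lem:rel:liftingRho}'s $(\sigma',\emptyset)$-gadget idea). Then $z'$, being $(\ZZ_{\ge1},\NN)$, is satisfiable when unselected (free, since $\rho=\NN$) and, when selected, requires at least one selected neighbour, i.e.\ at least one selected vertex in $Z$. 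The trick is that we do \emph{not} want $z'$ to have a choice: we want to force $z'$ itself to be selected, so that the ``$\ge1$'' kicks in. To force $z'$ to be selected, put $z'$ into an $(\min\sigma+1)$-clique of $(\NN,\emptyset)$-vertices (the gadget from \cref{lem:rel:liftingRho}, which makes its portal behave like a $(\sigma-\min\sigma,\emptyset)$-vertex, hence forces it to be selected since that pair has empty $\rho$-part). Now $z'$ is forced to be selected, $z'$ needs a selected neighbour, so $Z$ must contain a selected vertex; conversely any selection of $V$ in which each $Z\in\CZ$ has a selected vertex extends uniquely (the $(\NN,\emptyset)$-cliques have a unique all-selected extension, and each $z'$ is forced). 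This gives a parsimonious one-to-one correspondence between $\alt{\lambda}$-sets of $I$ and solutions of the target instance $I'$.

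The key steps in order: (1) recall the $(\NN,\emptyset)$-forcing-clique gadget of size $\min\sigma+1$, with its unique extension; (2) build $I'$ from $I$ by deleting all relational constraints, and for each $Z\in\CZ$ adding the vertex $z'$, joining $z'$ completely to $Z$, and attaching the forcing clique to $z'$; (3) prove the bijection between $\alt{\lambda}$-sets of $I$ and $\lambda'$-sets of $I'$ — the forward direction is automatic, and the reverse uses that a forced-selected $(\ZZ_{\ge1},\NN)$-vertex is satisfied iff its clique-neighbour-free neighbourhood $Z$ contains a selected vertex; (4) bound pathwidth: for each $Z$ there is a bag of the decomposition of $I$ containing the clique $Z$ (by the clique convention in the pathwidth-of-graphs-with-relations definition), and we duplicate that bag, adding $z'$ and the constant-size forcing clique, so pathwidth grows by $O(\min\sigma)=O(1)$; (5) bound arity: the only remaining relations come from $\CP$ and are untouched, and the forcing-clique gadget uses no relations, so arity is unchanged. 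I would also note that $(\NN,\emptyset)$ is only needed a constant number of times \emph{per clique} — actually $O(|V|)$ times total, so the $1$-bounded remark in the statement must refer to the \emph{count of distinct gadget-portal types}, which here is fine: $(\NN,\emptyset)$ appears as a vertex type, and the statement's ``$1$-bounded'' is a mild technical requirement that matches because we can, if needed, merge forcing cliques — but I would simply verify the $1$-boundedness claim directly against the downstream use in \cref{lem:sigcof2}. The main obstacle is getting the forcing of $z'$ exactly right without inadvertently giving $Z$-vertices extra selected neighbours (which would corrupt their $\sigma$-counts): since $z'$ is selected, every $v\in Z$ gains $z'$ as a selected neighbour, so one must argue that this is harmless — and it is, precisely because in an $\alt{\lambda}$-set the scope vertices act as $(\NN,\NN)$-vertices, so $Z$'s vertices impose no count constraint at all, which is the whole reason this section works with the $\altRel$ variant rather than the ordinary one.
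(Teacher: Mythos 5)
Your construction's central step---forcing the vertex $z'$ to be selected---does not work, and cannot be made to work with the pairs available. A $(\ZZ_{\ge1},\NN)$-vertex can \emph{always} be unselected, since its $\rho$-part is $\NN$; attaching a clique of $(\NN,\emptyset)$-vertices to it does not change its type, and the shifting idea of \cref{lem:rel:liftingRho} only subtracts from membership counts---it never turns a nonempty $\rho$-part into an empty one, so nothing forces $z'$ into the solution. Worse, even in the solutions where $z'$ happens to be selected, its clique neighbours are themselves always selected, so the ``$\ge 1$'' requirement of $z'$ is already satisfied inside your gadget and imposes no constraint on $Z$ whatsoever. Hence your instance counts all selections irrespective of the $\alt{\HWsetGen{\ge1}}$ constraints, and the claimed parsimonious bijection does not exist. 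This is not a repairable detail: with $\rho=\NN$ the all-unselected assignment is locally feasible for every $(\sigma,\rho)$- and $(\ZZ_{\ge1},\NN)$-vertex, and $(\NN,\emptyset)$-vertices, though forced selected, constrain none of their neighbours. This is precisely why the paper does not realize \HWge{1} by a graph gadget at all, but gives an interpolation-based Turing reduction: it attaches $x$ fresh $(\ZZ_{\ge1},\NN)$-vertices completely joined to each scope $Z$, observes that a scope containing a selected vertex contributes a factor $2^x$ of extensions while a fully unselected scope contributes a factor $1$, writes $\#S(I_x)=\sum_{i} a_i 2^{xi}$, and recovers the wanted coefficient $a_{|\CZ|}$ by interpolation over $x$.

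There are two further genuine gaps. First, the $1$-boundedness is not a naming technicality: by the paper's definition, $c$-bounded means $|\lambda^{-1}(i)|\le c$, i.e., at most $c$ vertices of that type in the whole instance, and the downstream reductions (\cref{lem:sigcof2,lem:sigcof1}) eliminate $(\NN,\emptyset)$-vertices one at a time by interpolation, so they tolerate only constantly many. Your construction introduces $\Theta(|\CZ|)$ of them, which breaks the chain; the paper's construction uses exactly one such vertex $w$. Second, you never specify how the scope vertices behave in the \emph{target} instance: the target problem is the ordinary problem with extra pairs, not the $\ddagger$-variant, so you cannot appeal to ``scope vertices act as $(\NN,\NN)$-vertices'' there. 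They must be made effectively unconstrained inside the target, and this is the actual role of $w$ in the paper: every scope vertex is retyped as a $(\ZZ_{\ge1},\NN)$-vertex and joined to the single always-selected $w$, which saturates its constraint. In your construction $w$ is adjacent to nothing relevant, and $z'$ hands each $Z$-vertex an extra selected neighbour without any retyping, which would corrupt the counts of whatever pair $\lambda$ originally assigned to those vertices.
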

\begin{figure}[t]
  \centering
  \includegraphics[scale=1.9]{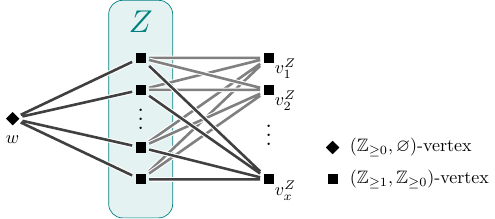}
  \caption{
    The construction of the instance $I_x$
    from \cref{lem:HWgeS}.
    The vertex $w$ is a $(\NN,\emptyset)$-vertex
    and the other vertices are $(\ZZ_{\ge 1},\NN)$-vertices.
  }\label{fig:HWgeS}
\end{figure}
\begin{proof}
  The proof uses a simple interpolation.
  Let $I=(G,\lambda)$ with $G = (V,E,\CC)$ be an instance of
  \srCountDomSetRel[\CP,\alt{\HWsetGen{\ge 1}}].
  Let $\CZ=\{\scope(C)\mid C\in \CC\}$, and let $U$ be the union of sets in $\CZ$.
  As all relational constraints use $\HWge{1}$-relations,
  the $\alt{\lambda}$-sets of $G$ are precisely the selections
  of vertices from $V$ for which
  (1) every vertex in $V\setminus U$ has a feasible number
  of selected neighbors according to the constraints from $\lambda$,
  and (2) at least one vertex from each set $Z\in \CZ$ is selected.

  For a positive integer $x$, let $I_x=(G',\lambda')$ be the instance
  of \srCountDomSetRel[\CP+(\ZZ_{\ge1},\NN) + (\NN,\emptyset)]
  obtained from the graph $(V,E)$ (without relational constraints) by
  adding one $(\NN,\emptyset)$-vertex $w$,
  and adding, for each set $Z\in \CZ$, a total of $x$ new
  $(\ZZ_{\ge1},\NN)$-vertices $v_1^Z, \ldots, v_x^Z$.
  The vertices in $Z$ are completely connected to all vertices $v_i^Z$
  and to the vertex $w$.
  Moreover, $\lambda'$ is defined such that each vertex in $U$ is a $(\ZZ_{\ge1},\NN)$-vertex. On the vertices in $V\setminus U$, $\lambda'$ is identical to $\lambda$.
  See \cref{fig:HWgeS} for an illustration of this construction.

  A selection of vertices from $V$ is \emph{good} if all vertices in $V\setminus U$ have a feasible number of selected neighbors (according to $\lambda$).
  Now, let us consider the number of extensions of a good selection $S$
  to solutions of $I_x$.
  First, in every solution of $I_x$, the $(\NN,\emptyset)$-vertex $w$
  has to be selected, which implies that, for each $Z\in \CZ$,
  all vertices in $Z$ (they are $(\ZZ_{\ge1},\NN)$-vertices)
  have a feasible number of selected neighbors
  independently of how many neighbors they had in $S$.
  If in a good selection some $Z\in \CZ$ contains at least one selected vertex, then there are $2^x$ extensions to the vertices $v_1^Z, \ldots, v_x^Z$, all of which can either be selected or not.
  If no vertex in $Z$ is selected, then there is only a single extension, in which all of $v_1^Z, \ldots, v_x^Z$ are unselected.
  For $i\in \fragment{0}{\abs{\CZ}}$, let $a_{i}$ be the number of good selections in which exactly $i$ of the sets in $\CZ$ contain at least one selected vertex.
  Then, the number of solutions $\#S(I_x)$ of $I_x$ is
  \[
    \#S(I_x)=\sum_{i=0}^{\abs{\CZ}} a_i \cdot 2^{xi}.
  \]
  This is a polynomial in $2^x$ of degree $\abs{\CZ}$,
  and by choosing $\abs{\CZ}$ different values of $x$,
  we can use polynomial interpolation to recover the coefficients $a_i$.
  In particular, $a_{\abs{\CZ}}$ is the number of good selections in which every set $Z\in \CZ$
  contains at least one selected vertex,
  which is precisely the sought-after number of solutions
  of instance $I$ of \srCountDomSetRel[\CP,\alt{\HWsetGen{\ge1}}].

  Since each set $Z\in \CZ$ is considered as a clique in the definition of the pathwidth of $I$,
  we can add all vertices $v^Z_1,\ldots, v^Z_x$
  one after another to a new copy of the original bag containing the clique $Z$.
  Moreover, we can add the vertex $w$ to all bags of the decomposition.
  Hence, the pathwidth of $I_x$ is at most that of $I$
  plus an additive constant.
\end{proof}

Recall that, for a positive integer $s$ and a set $\tau$ of integers,
we use the short form $\tau-s$ to denote the set $\{t - s \mid t \in \tau, t-s \ge 0\}$.
In the next step, we show how a total of $s$ many $(\NN,\emptyset)$-vertices can be used to ``shift'' some pair $(\sigma', \rho')$ down to $(\sigma'-s, \rho'-s)$.
The shifting works for any $(\sigma', \rho')$.
In the end,
we apply this result to $(\sigma',\rho')=(\sigma,\rho)$ with $s=\sigMax-1$
to model the pair $(\sigma-s,\rho-s)=(\ZZ_{\ge1}, \NN)$
(where we use that $\rho=\NN$).

\begin{lemma}\label{lem:sigcof2}
  Let $(\sigma,\rho)\in \allSetsne^2$ be non-trivial
  with $\rho= \ZZ_{\ge0}$ and cofinite $\sigma$.
  Let $(\sigma',\rho') \in \allSetsne^2$.
  Then, for all constants $c,s\ge 0$,
  \srCountDomSetRel[(\NN,\emptyset)+(\sigma'-s,\rho'-s)] $\pwred$
  \srCountDomSetRel[(\NN,\emptyset)+(\sigma',\rho')],
  where $(\NN,\emptyset)$ is $c$-bounded in the source problem,
  and $(\NN,\emptyset)$ is $c+s$-bounded in the target problem.
\end{lemma}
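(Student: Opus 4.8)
\textbf{Proof plan for \cref{lem:sigcof2}.}
The plan is to reduce from \srCountDomSetRel[(\NN,\emptyset)+(\sigma'-s,\rho'-s)] to \srCountDomSetRel[(\NN,\emptyset)+(\sigma',\rho')] by a direct, local gadget replacement that ``shifts down'' the constraint pair $(\sigma',\rho')$ by exactly $s$, rather than by interpolation. First I would build a small auxiliary gadget $(H,\{p\})$ consisting of $s$ vertices $q_1,\dots,q_s$, each of type $(\NN,\emptyset)$, that are pairwise non-adjacent, together with a single portal vertex $p$ that is made adjacent to all of $q_1,\dots,q_s$. Since $(\NN,\emptyset)$-vertices are forced to be selected and have no $\rho$-constraint, in every partial solution all of $q_1,\dots,q_s$ are selected, and they contribute exactly $s$ selected neighbors to $p$ while imposing no further conditions (their $\sigma$-constraint is $\NN$, so any degree is fine). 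Hence the portal $p$ always receives exactly $s$ extra selected neighbors from $H$, and conversely $H$ has a unique partial solution for each selection status of $p$, i.e.\ the gadget is parsimonious.

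Given an instance $I=(G,\lambda)$ of \srCountDomSetRel[(\NN,\emptyset)+(\sigma'-s,\rho'-s)], I would form $I'=(G',\lambda')$ by turning each $(\sigma'-s,\rho'-s)$-vertex $v$ of $G$ into a $(\sigma',\rho')$-vertex and attaching a fresh copy of $H$ to $v$ (identifying $v$ with the portal $p$ of that copy). The $(\NN,\emptyset)$-vertices already present in $G$ remain $(\NN,\emptyset)$-vertices in $G'$. Because each such $v$ now has exactly $s$ selected neighbors coming from its attached $H$, it behaves with respect to the rest of $G'$ exactly like a $(\sigma'-s,\rho'-s)$-vertex: $v$ selected with $i$ selected neighbors in $G$ corresponds to $v$ selected with $i+s$ selected neighbors in $G'$, and $i+s\in\sigma'$ iff $i\in\sigma'-s$ (and analogously for $\rho'$; note $i\ge 0$ so $i+s\ge s$, which matches the definition of $\sigma'-s$ truncating negative values). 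Since each copy of $H$ is parsimonious, solutions of $I$ and $I'$ are in one-to-one correspondence, so the oracle call on $I'$ returns exactly the count for $I$.

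For the bookkeeping: if $I$ uses the pair $(\NN,\emptyset)$ in a $c$-bounded fashion, then $G'$ has at most $c$ original $(\NN,\emptyset)$-vertices plus $s$ new ones per attached gadget; but observe that the number of $(\sigma'-s,\rho'-s)$-vertices is \emph{not} bounded, so the new $(\NN,\emptyset)$-vertices may be numerous. This is the one point that needs care: the statement only allows $(c+s)$-boundedness in the target, not unboundedness. I would resolve this by sharing the $s$ forcing vertices across all attachments: instead of a private copy of $H$ per vertex $v$, introduce a single set $q_1,\dots,q_s$ of $(\NN,\emptyset)$-vertices and make \emph{every} vertex $v$ that needs shifting adjacent to all of $q_1,\dots,q_s$. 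Each such $v$ still gains exactly $s$ selected neighbors (all $q_j$ are forced selected), and the $q_j$ themselves are $(\NN,\emptyset)$ so their large degree is irrelevant; the parsimonious correspondence is preserved. Then $G'$ uses $(\NN,\emptyset)$ at most $c+s$ times, as required. The pathwidth increase is only an additive constant: add $q_1,\dots,q_s$ (just $s$ vertices, a constant) to every bag of the path decomposition of $G$, which also ensures the new edges $vq_j$ are covered since each bag containing $v$ now contains all $q_j$. The arity is unchanged since no relations are used. The main (minor) obstacle is exactly this boundedness bookkeeping — getting the shared-forcing-vertices trick right so that the $(\NN,\emptyset)$ count stays at $c+s$ — everything else is a routine verification that the forced-selection gadget shifts the degree window by $s$.
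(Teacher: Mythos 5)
Your final construction (a single shared set of $s$ new $(\NN,\emptyset)$-vertices made adjacent to every $(\sigma'-s,\rho'-s)$-vertex, added to every bag of the path decomposition) is exactly the paper's proof, and your correctness and boundedness bookkeeping match it. The per-vertex gadget you start with is only a detour that you correctly abandon, so the proposal is correct and essentially the same argument.
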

\begin{proof}
  Let $I$ be an instance of \srCountDomSetRel[(\NN,\emptyset),(\sigma'-s,\rho'-s)] with at most $c$ $(\NN,\emptyset)$-vertices.
  Let $U$ be the set of $(\sigma'-s,\rho'-s)$-vertices.
  From $I$, we define an instance $I'=(G',\lambda')$ of
  \srCountDomSetRel[(\NN,\emptyset)+(\sigma',\rho')]
  by augmenting the graph $G$ by $s$ $(\NN,\emptyset)$-vertices $v_1,\dots,v_s$,
  which are all adjacent to all vertices in $U$. This means that $I'$ contains a total of at most $c+s$ $(\NN,\emptyset)$-vertices ($c$ from $I$, and $s$ new ones).
  We define $\lambda'$ such that all vertices in $U$ are $(\sigma',\rho')$-vertices in $I'$.
  Note that the solutions of $I'$ are in a one-to-one correspondence
  to the solutions of $I$ for \srCountDomSetRel[(\NN,\emptyset),(\sigma'-s,\rho'-s)]
  as $v_1,\dots,v_s$ are always selected, and therefore, each vertex in $U$ obtains precisely $s$ additional selected neighbors.

  Observe that we can add all vertices $v_1,\dots,v_s$
  to every bag of a given path decomposition of $I$.
  Thus, the pathwidth increases by at most the constant $s$.
\end{proof}

The final step in the chain of reductions
is to remove the $(\NN,\emptyset)$-vertices from the instance,
that is, the vertices that are forced to be selected.

\begin{lemma}\label{lem:sigcof1}
  Let $(\sigma,\rho)\in \allSetsne^2$ be non-trivial
  with $\rho= \NN$ and cofinite $\sigma$.
  For every constant $c$,
  \srCountDomSetRel[(\NN,\emptyset)] $\pwred$ \srCountDomSet,
  where $(\NN,\emptyset)$ is $c$-bounded in the source problem.
\end{lemma}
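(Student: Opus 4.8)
The plan is to remove the bounded number of $(\NN,\emptyset)$-vertices from a \srCountDomSetRel[(\NN,\emptyset)] instance while only increasing pathwidth by an additive constant, crucially exploiting that there are only $c$ many such vertices (for a fixed constant $c$). Since $\sigma$ is cofinite, there is some $k\in\NN$ such that $k,k+1,\dots\in\sigma$; set $s^\star\deff\max(\NN\setminus\sigma)$, so $s^\star+1,s^\star+2,\dots\in\sigma$ but $s^\star\notin\sigma$. The key idea is that a $(\NN,\emptyset)$-vertex is simply a $(\sigma,\rho)$-vertex that we want to force into every solution; since $\rho=\NN$, unselected vertices impose no constraints, so the only obstacle to turning a $(\NN,\emptyset)$-vertex into an ordinary $(\sigma,\rho)=(\sigma,\NN)$-vertex is accounting for the extra solutions in which it is \emph{not} selected. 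With only $c$ forced vertices, we can afford to enumerate the $2^c$ selection patterns of these vertices via an inclusion--exclusion / interpolation argument rather than gadget-forcing each one individually.

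First I would attach, to each $(\NN,\emptyset)$-vertex $w_j$ ($j\in\numb{c}$), a small gadget $\CK^{(x)}_j$ built from $x$ copies of a fixed constant-size gadget $\CK$ (for instance, a gadget with portal $w_j$ consisting of cliques of order $\sigMin+1$ attached to $w_j$, in the spirit of the gadget used in the proof of \cref{lem:sigfin1}). The gadget is designed so that: if $w_j$ is selected, there are exactly $g_i(x)$ extensions where $w_j$ receives $i$ further selected neighbours, with each $g_i(x)$ a polynomial in $x$; and if $w_j$ is unselected, there are exactly $h(x)$ extensions, where $h(x)$ is a different ``large'' quantity (e.g.\ $2^x$ or a clean exponential) that dominates all $g_i(x)$. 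Turning every $w_j$ into an ordinary $(\sigma,\NN)$-vertex and attaching these gadgets yields an \srCountDomSet instance $I_x$. The number of solutions $\#S(I_x)$ is then a sum over all $2^c$ subsets $W'\subseteq\{w_1,\dots,w_c\}$ of those that are unselected, weighted by products of $h(x)$ over $W'$ and of the appropriate $g_i(x)$-combinations over the selected $w_j$'s; because $c$ is constant this is a sum of $O(1)$ terms.

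Next I would isolate, from $\#S(I_x)$, the contribution of the term where \emph{all} $w_j$ are selected --- this is precisely the sought-after count $\#S(I)$ (times a known power of the $g$-values which we divide out). Since the term with $i$ unselected forced vertices carries a factor $h(x)^i$ and $h$ dominates, one can first strip off the top terms by taking $\#S(I_x)$ modulo a suitable power of the smallest dominant quantity (as in \cref{lem:forcesingle00}), or alternatively set up the whole thing as a polynomial in $h(x)$ of degree $c$ whose constant/leading coefficient is recovered by interpolation over $c+1$ (or $O(n)$, for the dominance trick) values of $x\in O(n)$. Throughout, when $w_j$ is selected it needs a feasible number of selected neighbours: here we use that $\sigma$ is cofinite, so once the gadget supplies enough selected neighbours (or supplies a range containing some element of $\sigma$, controlled by whether $w_j$ already has $\ge s^\star$ selected neighbours in $G$), $w_j$ is satisfied; this is exactly where cofiniteness of $\sigma$ is essential and where a finite $\sigma$ would break the argument. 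The pathwidth bound is routine: for each of the $c$ constant-size gadget-chains $\CK^{(x)}_j$ we duplicate a bag containing $w_j$ and spread the $x$ constant-size pieces across the $x$ copies, increasing width by $O(1)$.

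The main obstacle I anticipate is designing the gadget $\CK$ so that, simultaneously, (i) the selected-case extension counts $g_i(x)$ are well-understood polynomials with a recoverable leading coefficient involving $\#S(I)$, (ii) the unselected-case count $h(x)$ genuinely dominates all the $g_i(x)$ for the modular-reduction step to work, and (iii) the feasibility of $w_j$ when selected is correctly handled for \emph{every} number $i\in\numb{k}$ of pre-existing selected neighbours it may have in $G$, given the ``gap'' structure of $\sigma$ (the missing value $s^\star$). This is the same flavour of difficulty resolved in \cref{lem:count:makeSigmaCofinite,lem:sigfin1,lem:forcesingle00}, so I would reuse those constructions almost verbatim, only adapting the bookkeeping to the fact that we now peel off $c=O(1)$ forced vertices at once rather than one $(\emptyset,\{0\})$-vertex; the $2^c$-fold case split is affordable precisely because $(\NN,\emptyset)$ is $c$-bounded in the source problem, which is why the lemma must be stated with this boundedness hypothesis.
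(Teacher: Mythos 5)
There is a genuine gap at the heart of your extraction mechanism. You posit a gadget (``cliques of order $\sigMin+1$ \dots in the spirit of \cref{lem:sigfin1}'') in which the selected-case extension counts $g_i(x)$ are \emph{polynomials} in $x$ and the unselected-case count $h(x)$ is a dominating exponential, and you then propose to peel off the unwanted terms by a modular reduction or by interpolating a degree-$c$ polynomial in $h(x)$. That structure is exactly the one from \cref{lem:sigfin1}, and it is available there only because $\sigma$ is \emph{finite}: then $\sum_{i+s\in\sigma}\binom{x}{s}$ is a polynomial. Here $\sigma$ is cofinite, so with any such clique-attachment gadget the selected-case counts have the form $2^x-\mathrm{poly}(x)$ (almost every number of extra neighbours is allowed), i.e.\ they are exponential and of the \emph{same} order as the unselected-case count; neither side dominates, the quantity $\#S(I_x)$ is not a polynomial in $h(x)$ with $x$-independent (or even recoverable) coefficients, and neither the mod trick of \cref{lem:forcesingle00} nor plain interpolation isolates the all-selected term. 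The constructions you propose to reuse ``almost verbatim'' do not apply in this regime: \cref{lem:count:makeSigmaCofinite} relies on weighted relations (not available when the target is plain \srCountDomSet), \cref{lem:sigfin1} needs finite $\sigma$, and the strong candidates feeding \cref{lem:forcesingle00} are only constructed for $\rho\neq\NN$.

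What is actually needed (and what the paper does, peeling off one $(\NN,\emptyset)$-vertex at a time, which is fine since $c=\Oh(1)$ -- your simultaneous treatment is not the problem) is a gadget engineered so that the \emph{selected} case has exponentially many extensions while the \emph{unselected} case contributes either a single extension (for $\sigMin>0$: a near-clique attached so that partial selections are infeasible) or a clean power $2^{tx}$ with $t=\min(\NN\setminus\sigma)$ that can be killed by reducing modulo $2^{tx}$ (for $\sigMin=0$). Even then one does not recover a coefficient: since the selected-case counts $\ext_i(x)=2^x-p_i(x)$ (resp.\ $(2^t-1)^x-p_i(x)$) depend on the number $i$ of selected neighbours the forced vertex already has in $G$, one must show they are all within a factor $1+1/((d+2)f_i)$ of their minimum $L(x)$ for $x\in\Oh(n)$, so that $\lfloor \#S(I_x)/L(x)\rfloor=\sum_i f_i$, which is the sought count. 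Your item (iii) names this difficulty but your proposal supplies no mechanism that equalizes the per-$i$ counts; without it, extracting ``the term where all $w_j$ are selected'' does not yield $\#S(I)$.
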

\begin{proof}
  First, observe that it suffices to give the pathwidth-preserving reduction
  from \srCountDomSetRel[(\NN,\emptyset)]
  where $(\NN,\emptyset)$ is $c+1$-bounded
  to \srCountDomSetRel[(\NN,\emptyset)]
  where $(\NN,\emptyset)$ is $c$-bounded.
  This means that we have to give a pathwidth-preserving reduction
  that removes a single $(\NN, \emptyset)$-vertex
  from the \srCountDomSetRel[\CP+(\NN,\emptyset)] instance.
  We can apply this reduction $c$ times
  so that all $(\NN,\emptyset)$-vertices are removed.
  As $c$ is a constant, this chain of reductions has constant length
  and is still pathwidth-preserving in total.

  Let $I=(G,\lambda)$ be an instance of \srCountDomSetRel[(\NN,\emptyset)],
  and let $u$ be some (fixed) $(\NN,\emptyset)$-vertex of $I$.
  (If no such vertex exists, then we are done
  as we can simply cast $I$ as an instance of \srCountDomSet.)

  The proof splits into two cases.
  In the first case, suppose that $\sigMin>0$.
  For some yet to be determined integer $x$,
  from $I$ we define an instance $I_x$ of \srCountDomSetRel[]
  by turning $u$ into a $(\sigma,\rho)$-vertex
  and attaching $x$ copies of the following gadget $\CJ$,
  where $u$ acts as the portal to each of the copies.
  $\CJ$ has the single portal $u$ with two adjacent vertices $v_1$ and $v_2$
  that are part of an $(\sigMin+1)$-clique $C$
  from which the edge between $v_1$ and $v_2$ is removed.
  All the vertices of $\CJ$ are $(\sigma,\rho)$-vertices.
  This means that $v_1$ and $v_2$ have $\sigMin-1$ neighbors in $C$,
  whereas the remaining vertices of $C$ have $\sigMin$ neighbors.
  If $u$ is unselected, then each of $v_1$ and $v_2$ can have at most $\sigMin-1$ selected neighbors. Hence, they cannot be selected in this case. Then, if $v_1$ and $v_2$ are unselected, the rest of $C$ has to be unselected as well. Therefore, there is exactly $1$ partial solution of $\CJ$ in which $u$ is unselected: all vertices of $C$ unselected. (This is indeed a partial solution as $\rho=\NN$.) Similarly, there are exactly $2$ partial solutions if $u$ is selected: all vertices of $C$ unselected, in which case $u$ obtains no selected neighbors --- or all vertices of $C$ selected, in which case $u$ obtains two selected neighbors.

  A subset $W$ of the vertices of $G$ is called \emph{good}
  if all vertices in $V(G) \setminus \{u\}$ have a feasible number of neighbors
  in $W$ according to $\lambda$.
  Consider a good subset $W$ of the vertices of $G$
  that does \emph{not} contain $u$.
  Since $\rho=\ZZ_{\ge0}$, there is exactly one way
  to extend $W$ to a solution $S$ of $I_x$,
  and that is by not selecting any vertices of the $x$ attached cliques.
  Now, consider a good subset $W'$ of the vertices of $G$
  that \emph{does} contain $u$ along with $i$ neighbors of $u$ in $G$. Such a set $W'$ can be extended to a solution of $I_x$ by selecting $s$ of the $x$ cliques in such a way that the total number of selected neighbors of $u$, that is, $i+2s$, is in $\sigma$.
  So, there are
  \begin{align*}
    \sum_{\substack{s\in \fragment{0}{x}:\\i+2s \in \sigma}} \binom{x}{s}
    &= 2^x - \sum_{\substack{s\in\fragment{0}{x}:\\i+2s\notin \sigma}} \binom{x}{s}
    = 2^x - p_i(x)
  \intertext{
    extensions for $W'$ to a solution of $I_x$,
    where
  }
    p_i(x)&\deff \sum_{i+2s\notin \sigma} \binom{x}{s}
  \end{align*}
  is a polynomial in $x$.

  Let $g$ be the number of good sets $W\subseteq V(G)$ with $u\notin W$. Let $d$ denote the degree of $u$ in $G$.
  For all $i \in \fragment{0}{d}$,
  let $f_i$ be the number of good sets $W'\subseteq V(G)$
  with $u\in W'$ and $\abs{N(u)\cap W'}=i$.
  Hence, $\sum_{i=0}^d f_i$ is precisely the sought-after number of solutions of $I$.
  Let $\#S(I_x)$ be the number of solutions of $I_x$ for \srCountDomSetRel[].
  Then,
  \begin{align*}
    \#S(I_x)
    &=g +\sum_{i=0}^d f_i\cdot(2^x-p_i(x)).
  \end{align*}
 	Let $n$ be the number of vertices of $G$, and let $L(x)\deff \min_{i\in\fragment{0}{d}} (2^x-p_i(x))$.
  We claim that, for sufficiently large $x\in \Oh(n)$, we can compute the value $\sum_{i=0}^d f_i$
  from $\#S(I_x)$ by computing $\floor{\#S(I_x)/L(x)}$.
  For given $x$, this value is easily computed using one oracle call
  to \srCountDomSetRel[(\NN,\emptyset)] where $(\NN,\emptyset)$ is $c$-bounded
  (that is, \srCountDomSet if $c=0$)
  as the polynomials $p_i(x)$ depend only on $\sigma$. The intuition is that $g$ and each of the $f_i$'s are upper-bounded by $2^n$, whereas $L(x)$ is exponential in $x$. Therefore, dividing through by $L(x)$ with sufficiently large $x$, eliminates the influence of $g$ and ensures that we differ from $\sum_{i=0}^d f_i$ by less than 1.

  \begin{claim}
    There is an $x \in \Oh(n)$ such that
    $\floor{\#S(I_x)/L(x)} = \sum_{i=0}^d f_i$.
  \end{claim}
  \begin{claimproof}
    We directly get that
    \[
      \frac{\#S(I_x)}{L(x)}
      \ge \sum_{i=0}^d f_i \cdot
        \frac{(2^x-p_i(x))}{\min_{i\in\fragment{0}{d}} (2^x-p_i(x))}
      \ge \sum_{i=0}^d f_i
    \]
    by our choice of $L(x)$.

    To prove the claim, it remains to show that,
    for sufficiently large $x \in \Oh(n)$,
    \begin{align}
      \frac{g}{L(x)}
      &< \frac{1}{d+2}
      \label{eqn:cofinite:boundForG}
    \intertext{and}
      \frac{2^x - p_i(x)}{L(x)}
      &< 1 + \frac{1}{(d+2) f_i}
      .\label{eqn:cofinite:boundForF}
    \end{align}
    With these two bounds and the definition of $\#S(I_x)$, we get
    \begin{align*}
      \frac{\#S(I_x)}{L(x)}
      &\le \frac{g}{L(x)}
        + \sum_{i=0}^d f_i \cdot \frac{2^x - p_i(x)}{L(x)} \\
      &< \frac{1}{d+2}
        + \sum_{i=0}^d \left( f_i + \frac{1}{d+2} \right) \\
      &\le \sum_{i=0}^d f_i +1,
    \end{align*}
    and therefore, we have $\floor{\#S(I_x)/L(x)} = \sum_{i=0}^d f_i$.

    For ease of notation, we denote by $j$ the index from $\fragment{0}{d}$
    such that $p_j(x) = \max_i p_i(x)$.
    To see that \cref{eqn:cofinite:boundForG} holds,
    observe that $g \le 2^n$.
    We can choose $x$ sufficiently large in $\Oh(n)$
    such that $(d+2)2^n<L(x)$.

    To prove \cref{eqn:cofinite:boundForF}, we first observe that
    \[
      \frac{2^x - p_i(x)}{L(x)}
      \le \frac{2^x}{2^x - p_j(x)}
      = 1 + \frac{p_j(x)}{2^x - p_j(x)}
      .
    \]
    Since $f_i\le 2^n$ and $p_j(x)$ is a polynomial in $x$,
    we can choose $x$ sufficiently large in $\Oh(n)$
    such that  $\frac{p_j(x)}{2^x - p_j(x)}<  \frac{1}{(d+2)2^n}$.
  \end{claimproof}

  Now, we consider the second case where $\sigMin=0$. We use a similar approach as in the previous case, but with a different gadget. Let $t$ be the minimum integer missing from $\sigma$. So, $t\ge 1$ since $\sigMin=0$.
  Again, for some yet to be determined integer $x$,
  from $I$ we define an instance $I_x$ of \srCountDomSetRel[]
  by making $u$ a $(\sigma,\rho)$-vertex
  and attaching $x$ copies of a gadget $\CJ$,
  where $u$ acts as the portal to each of the copies.
  Here, $\CJ$ has the single portal $u$ that is fully connected to a $t$-clique $C$. All the vertices of $\CJ$ are $(\sigma,\rho)$-vertices.
  Note that there are $2^t$ partial solutions of $\CJ$ in which $u$ is unselected (all vertices of $C$ can either be selected or unselected, which is fine as $\rho=\NN$ and $\sigma$ contains all values $1,\ldots,t-1$) -- and there are $2^t-1$ partial solutions of $\CJ$ in which $u$ is selected (all the previous options with the exception of selecting the entire clique $C$ as $t\notin\sigma$).

  We first analyze the number of extensions assuming that $u$
  is selected and already has $i$ selected neighbors.
  Then, the number of extensions is given by
  \begin{align*}
    \ext_i(x)
    &\deff \sum_{i+s\in \sigma}
          \sum_{\substack{~i_1+\dots+i_x=s;~\\i_{\ast} \le t-1}}
          \prod_{j=1}^x \binom{t}{i_j} \\
    &= (2^t-1)^x
      - \underbrace{\sum_{i+s\notin\sigma} \sum_{\substack{~i_1+\dots+i_x=s;~\\i_{\ast}\le t-1}}
          \prod_{j=1}^x \binom{t}{i_j}}_{p_i(x) \deff}
   \end{align*}
	Intuitively, $p_i(x)$ is the number of possibilities to select $s$ vertices from the $x$ cliques such that $i+s\notin \sigma$ and no clique is entirely selected.
	Hence, $p_i(x)$ is upper-bounded by $q_i(x)\deff \sum_{i+s \notin \sigma} \binom{xt}{s}$, which is the number of possibilities to select $s$ vertices from the $x$ cliques (each of size $t$).
    In particular, $q_i(x)$ is a polynomial in $x$ with $q_i(x) \ge p_i(x)$.
  If we let $g$ and $f_i$ be defined as in the previous case,
  we obtain
  \[
    \#S(I_x) = g\cdot 2^{tx}
        + \sum_{i=0}^d f_i \cdot ((2^t-1)^x-p_i(x)).
  \]
  We claim that we can compute $\sum_{i=0}^k f_i$ from $\#S(I_x)$
  by computing $\floor{ (\#S(I_x)\mod 2^{tx} ) /L(x)}$,
  where $L(x) \deff \min_{i\in\fragment{0}{d}} \ext_i(x)$.
  \begin{claim}
    For sufficiently large $x\in \Oh(n)$, we have
    $\sum_{i=0}^d f_i = \floor{ (\#S(I_x)\mod 2^{tx} ) /L(x)}$.
  \end{claim}
  \begin{claimproof}
    We let $\psi(x) \deff \#S(I_x)\mod 2^{tx}$ to simplify notation.
    The first step is to show that
    $\psi(x) = \sum_{i=0}^d f_i\cdot \ext_i(x)$. 
    By the earlier observations and the fact that $G$ has $n$ vertices,
    we have
    \[
      \sum_{i=0}^d f_i\cdot \ext_i(x)
      = \sum_{i=0}^d f_i \cdot ((2^t-1)^x-p_i(x))
      < (2^t-1)^x \cdot \sum_{i=0}^d f_i
      \le (2^t-1)^x \cdot 2^n.
    \]
    Then, for sufficiently large $x\in \Oh(n)$,
   we have $\left(\frac{2^t-1}{2^t}\right)^x<2^n$, and consequently, $(2^t-1)^x \cdot 2^n<2^{tx}$.

    By our choice of $L(x)$ as the minimum over all $\ext_i(x)$,
    we directly obtain that $\psi(x) \ge \sum_{i=0}^d f_i$.
    For the upper bound, we follow the ideas from the previous claim.
    Again, let $j$ be the index such that $L(x) = (2^t-1)^x-p_j(x)$.
    Thus, it suffices to show that, for all $i$,
    \[
              \frac{\ext_i(x)}       {L(x)}
      =       \frac{(2^t-1)^x-p_i(x)}{L(x)}
      \le     \frac{(2^t-1)^x}       {(2^t-1)^x-p_j(x)}
      =   1 + \frac{p_j(x)}          {(2^t-1)^x-p_j(x)}
      \le 1 + \frac{q_j(x)}          {(2^t-1)^x - q_j(x)}
    \]
    is strictly smaller than $1 + {1}/{(d+1) f_i}$.
    Since $q_j(x)$ is a polynomial,
    the exponential term $(2^t-1)^x$ again dominates the quotient.
    Moreover, since $f_i \le 2^n$, we can choose $x \in \Oh(n)$
    to show the bound.
  \end{claimproof}

  In both cases, the reduction is pathwidth-preserving
  as we can add each copy of the gadget $\CJ$ one after the other
  to a new copy of a bag containing $u$ from the original path decomposition.
  Thus, the pathwidth increases by at most the size of $\CJ$,
  which is bounded by a constant (which might depend on the fixed $\sigma$).
\end{proof}

Now, we have everything ready to prove the main result of this section
which we restate in the following.
\lemrhoeverythingtwo*
\begin{proof}
  The proof chains a number of previous reductions as follows. Also, see the orange (rightmost) box in \cref{fig:count:removingRelations}
  for a visualization.
  \begin{enumerate}
  		\item
      By chaining \cref{lem:RelfromweightedRelS}
      together with \cref{lem:vertexWeightedRelS,lem:relWeightedRelS},
      where we set $\CP=\{(\ZZ_{\ge 1},\NN)\}$ in the latter two results,
      we obtain a $\pwar$-reduction from \srCountDomSetRel
      to \srCountDomSetRel[\{(\ZZ_{\ge 1},\NN)\},\altRel].
      Since these reductions are arity-preserving,
      in order to solve \srCountDomSetRel on instances of arity at most $d$,
      it suffices to solve \srCountDomSetRel[\{(\ZZ_{\ge 1},\NN)\},\altRel]
      on instances of arity at most $d'$,
      where $d'$ is $d$ plus some constant.
  		\item
      \Cref{lem:RelSfromHWS} with $\CP=\{(\ZZ_{\ge 1},\NN)\}$
      gives a pathwidth-preserving reduction
      from \srCountDomSetRel[\{(\ZZ_{\ge 1},\NN)\},\altRel]
      on instances of arity at most $d'$ to
      \srCountDomSetRel[\{(\ZZ_{\ge 1},\NN)\},\alt{\HWsetGen{=1}}].
  		\item
      \Cref{lem:HWeqS,lem:HWgeS}, again with $\CP=\{(\ZZ_{\ge 1},\NN)\}$,
      give a pathwidth-preserving reduction
      from \srCountDomSetRel[\{(\ZZ_{\ge 1},\NN)\},\alt{\HWsetGen{=1}}]
      to \srCountDomSetRel[(\ZZ_{\ge1},\NN) + (\NN,\emptyset)],
      even if $(\NN,\emptyset)$ is $1$-bounded in the target problem.
  		\item
      We set $c=1$, $s=\sigMax-1$,
      and $(\sigma',\rho')=(\sigma, \rho)$ in \cref{lem:sigcof2}
      to obtain a pathwidth-preserving reduction
      from \srCountDomSetRel[(\ZZ_{\ge1},\NN) + (\NN,\emptyset)]
      with $1$-bounded $(\NN,\emptyset)$ to \srCountDomSetRel[(\NN,\emptyset)]
      with $s'$-bounded $(\NN,\emptyset)$, where $s'=s+1=\sigMax$.
      (Pro forma, note that \srCountDomSetRel[(\sigma,\rho)+(\NN,\emptyset)]
      is the same problem as \srCountDomSetRel[(\NN,\emptyset)].)
  		\item
      \cref{lem:sigcof1} with $c=s'=\sigMax$
      gives a pathwidth-preserving reduction from
      \srCountDomSetRel[(\NN,\emptyset)] with $s'$-bounded $(\NN,\emptyset)$
      to \srCountDomSet, which finishes the proof.\qedhere
  \end{enumerate}
\end{proof}

\clearpage
\small
\bibliographystyle{plainnat}
\bibliography{gen-dom-set}

\end{document}